\newcommand{\triplenorm}{\ensuremath{| \! | \! |}}
\theoremstyle{plain}
\newtheorem*{theorem*}{Theorem}
\newtheorem*{lemma*}{Lemma}
\newtheorem*{proposition*}{Proposition}
\newtheorem*{definition*}{Definition}
\newtheorem{conjecture}{Conjecture}
\title{\ttitle} 
\begin{document}
\changefont{cmr}{m}{n}

\frontmatter 

\setstretch{1.15} 

\fancyhead{} 
\rhead{\thepage} 
\lhead{} 

\pagestyle{fancy} 

\newcommand{\HRule}{\rule{\linewidth}{0.5mm}} 

\hypersetup{pdftitle={\ttitle}}
\hypersetup{pdfsubject=\subjectname}
\hypersetup{pdfauthor=\authornames}
\hypersetup{pdfkeywords=\keywordnames}


\begin{titlepage}
\hspace{-2.5cm}\begin{minipage}{\textwidth}
\includegraphics[scale=.815]{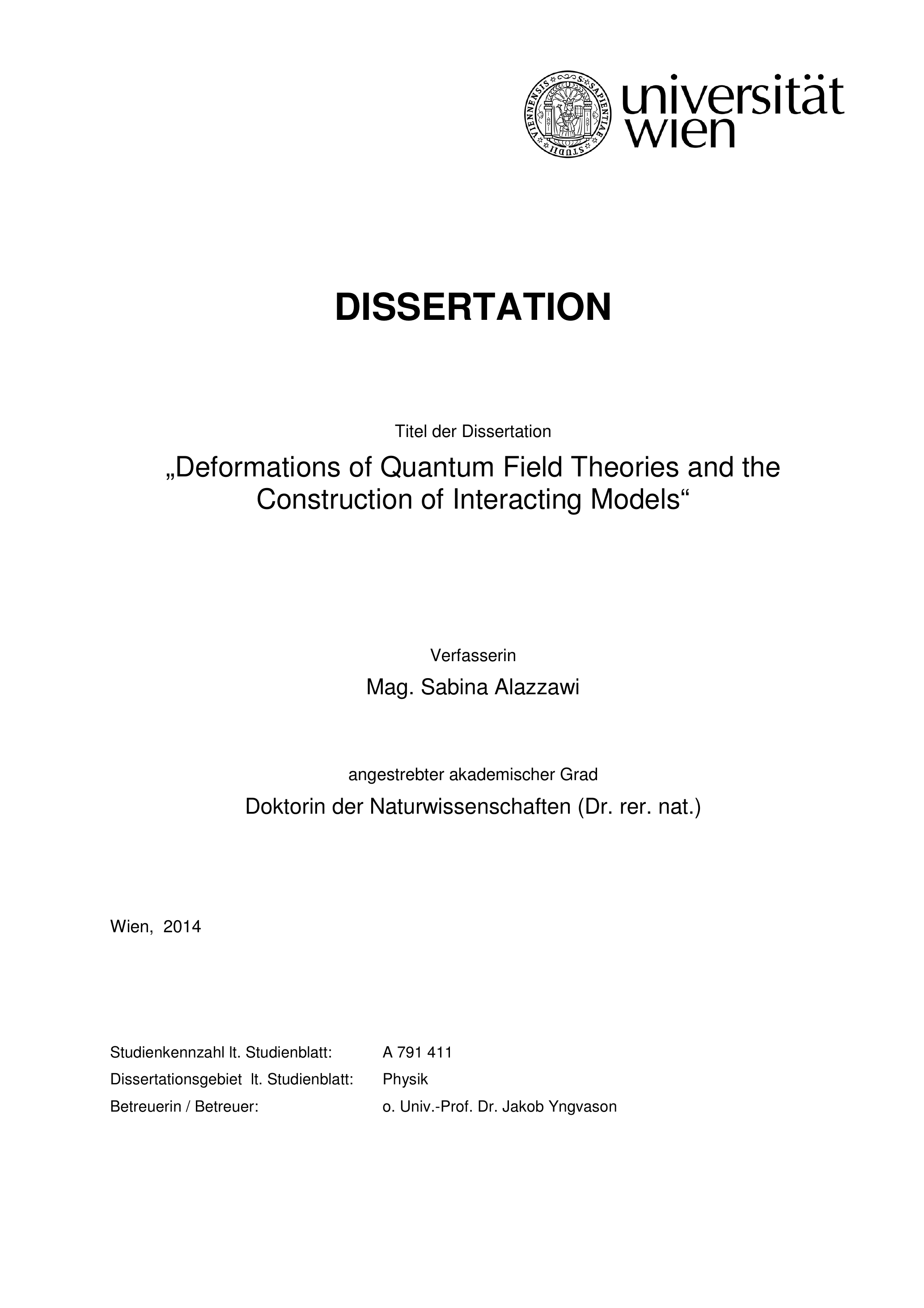}
\end{minipage}




 


 
 
 

\end{titlepage}

\pagestyle{fancy} 

\lhead{\emph{Contents}} 
\tableofcontents 













\mainmatter 

\addtotoc{Abstract} 

\abstract{\addtocontents{toc}{\vspace{1em}} 

\pagestyle{empty}
\LARGE\textbf{{Abstract}}\\\\
\normalsize{}
The subject of this thesis is the rigorous construction of quantum field theoretic models with nontrivial interaction. For this task techniques available in the framework of Algebraic Quantum Field Theory are applied and two different approaches are discussed.\par
On the one hand, an inverse scattering problem is considered. A given scattering matrix is thereby taken as the starting point of the construction. In two spacetime dimensions one may work with factorizing scattering matrices which exhibit a simple structure. The particle spectrum taken into account involves an arbitrary number of massive particle species which transform under some global gauge group. It is a known fact that auxiliary fields with weakened localization, namely in wedges, can be constructed. In the main part of this thesis the more involved transition to \textit{local} theories is shown by means of operator algebraic methods. Concretely, we make use of the so-called modular nuclearity condition. To this end, we investigate certain maps from the wedge algebras, generated by the auxiliary fields, to the considered Hilbert space. Under a very plausible conjecture it is shown that these maps are nuclear, which implies the nontriviality of algebras associated with bounded regions in the sense that the Reeh-Schlieder property holds. This construction method yields a large class of integrable models with factorizing S-matrices in two spacetime dimensions, complying with localization in bounded regions above a minimal size. The constructed family contains, for example, the multifaceted $O(N)$-invariant nonlinear $\sigma$-models.\par
On the other hand, deformation techniques constitute a method of construction which may be applied in arbitrary spacetime dimensions. This approach starts from a known quantum field theoretic model which is subjected to a certain modification. Here, concretely, the model of a scalar massive Fermion was deformed. It is shown that the correspondingly emerging models are based on fields with weakened localization properties again with regard to wedges. Due to this remnant of locality scattering theory can be applied and the two-particle S-matrix can be computed. The resulting scattering matrix depends on the deformation and has a very simple structure, not allowing for particle production nor momentum transfer in scattering processes. However, it differs from the S-matrix of the initial model. By restricting the spacetime dimension to two, it is shown that the considered deformation method yields a large class of integrable models with factorizing S-matrices which, moreover, comply with localization in bounded regions above a minimal size. Among the integrable models arising by deformation is the famous Sinh-Gordon model.\\

\newpage

\LARGE\textbf{{Zusammenfassung}}\\\\
\normalsize{}
Gegenstand dieser Arbeit ist die rigorose Konstruktion von quantenfeldtheoretischen Modellen mit nicht-trivialer Wechselwirkung. Dazu werden Techniken aus dem Rahmen der Algebraischen Quantenfeldtheorie angewandt und zwei verschiedene Verfahren diskutiert.\par
Zum einen wird ein inverses Streuproblem betrachtet. Dabei ist eine vorgegebene Streumatrix der Ausgangspunkt der Konstruktion. In zwei Raum-Zeit-Dimensionen kann dazu mit faktorisierenden Streumatrizen, welche eine einfache Struktur aufweisen, gearbeitet werden. Das betrachtete Teilchenspektrum schlie\ss t eine beliebige Zahl an massiven Teilchensorten ein, welche sich unter einer beliebigen globalen Eichgruppe transformieren. Wie bekannt, k\"onnen Hilfsfelder mit abgeschw\"achter Lokalisierung, und zwar in Keilgebieten, konstruiert werden. Im Hauptteil dieser Arbeit wird der weitaus aufwendigere \"Ubergang zu \textit{lokalen} Theorien unter Verwendung von operatoralgebraischen Methoden gezeigt. Konkret wird die sogenannte modulare Nuklearit\"atsbedingung herangezogen. Dazu werden gewisse Abbildungen von der aus den Hilfsfeldern generierten Keilalgebra in den betrachteten Hilbertraum untersucht. Es wird unter einer sehr plausiblen Vermutung gezeigt, dass diese Abbildungen nuklear sind. Dieser Sachverhalt wiederum impliziert die Nichttrivialit\"at von Algebren, welche mit beschr\"ankten Gebieten assoziiert werden, in dem Sinne, dass die Reeh-Schlieder Eigenschaft gilt. Dieses Konstruktionsverfahren f\"uhrt zu einer gro\ss en Klasse von integrablen Modellen mit faktorisierenden S-Matrizen in zwei Raumzeit-Dimensionen, welche mit Lokalisierung in beschr\"ankten Gebieten oberhalb einer Mindestgr\"o\ss e kompatibel sind. In die konstruierte Klasse fallen beispielsweise die facettenreichen $O(N)$-invarianten nicht-linearen $\sigma$-Modelle.\par
Zum anderen stellen Deformationsverfahren eine Konstruktionsmethode dar, welche in beliebigen Raumzeit-Dimensionen anwendbar ist. In diesem Zugang wird ein bekanntes quantenfeldtheoretisches Modell als Ausgangspunkt betrachtet, welches einer gewissen Modifikation unterzogen wird. Konkret wurde hier das Modell eines skalaren massiven Fermions deformiert. Es wird gezeigt, dass die entsprechend hervorgehenden Modelle auf Feldern mit abgeschw\"achten Lokalisierungseigenschaften, wiederum in Bezug auf Keile, basieren. Aufgrund dieser Restlokalit\"at kann Streutheorie angewendet und die Zwei-Teilchen-S-Matrix bestimmt werden. Die resultierende Streumatrix ist abh\"angig von der Deformation und hat eine sehr einfache Struktur, welche keine Teilchenerzeugung oder Impuls\"ubertrag in Sto\ss prozessen zul\"asst. Sie unterscheidet sich jedoch von jener des Ausgangsmodells. In Einschr\"ankung auf zwei Raumzeit-Dimensionen wird gezeigt, dass die betrachtete Deformationsmethode zu einer gro\ss en Klasse von integrablen Modellen mit faktorisierenden S-Matrizen f\"uhrt, welche dar\"uber hinaus im Einklang mit Lokalisierung in beschr\"ankten Gebieten oberhalb einer Mindestgr\"o\ss e sind. Zu den aus der Deformation hervorgehenden integrablen Modellen z\"ahlt beispielsweise das renommierte Sinh-Gordon Modell.

\clearpage 

\pagestyle{fancy} 



\chapter{Introduction} 
\fancyhead[LE,RO]{\thepage}
\fancyhead[LO]{}
\fancyhead[RE]{Chapter 1. \emph{Introduction}}
\renewcommand{\chaptermark}[1]{ \markboth{#1}{} }
\renewcommand{\sectionmark}[1]{ \markright{#1}{} }
\label{Chapter1} 



The quest for a consistent description of high energy particle physics led to the development of quantum field theory. From its beginnings in the first half of the 20th century to the present day quantum field theory as the unification of quantum mechanics and special relativity has asserted itself as the most successful theory for explaining the observed microscopic phenomena. In the scope of perturbative calculations, numerical predictions of the theory are in excellent agreement with the experiments. Nevertheless, one is faced with serious difficulties when seeking for a mathematically rigorous construction of an interacting quantum field theory model compatible with relativistic covariance, positivity of the energy and causality, i.e. locality. This problem has been attacked by theoretical physicists from various directions. For instance, in the framework of constructive quantum field theory \cite{glimm1981quantum, jaffe2000constructive} nontrivial models complying with the axiomatic approaches to quantum field theory, formulated by Wightman \cite{streater2000pct, Jost} and Haag and Kastler \cite{Araki99, Haag} respectively, were constructed in two and three spacetime dimensions. Thereby, Hamiltonian strategies, given by Jaffe \cite{jaffe1966existence} and Lanford \cite{Lanford} respectively, and a functional integral approach, developed by Symanzik \cite{symanzik1964modified}, proved successful in renormalizing certain models defined in terms of a Lagrangian in a rigorous way. Despite these achievements, results in four spacetime dimensions are still missing.\par
Many interesting models with polynomial self-interaction in two spacetime dimensions constructed earlier by methods of constructive quantum field theory arose in a very recent approach developed by Barata, J\"akel and Mund \cite{barata2013p}. It is based on Tomita-Takesaki theory \cite{takesaki1970tomita, takesaki2003theory} and, hence, does not rely on any Lagrangian formulation.\par
Other methods that have proven to be useful in the construction of particularly two-dimensional quantum field theories without recourse
to Lagrangians or perturbation theory are due to inverse scattering techniques. In the so-called bootstrap form-factor program models are defined in terms of a given scattering matrix. To this end, particularly factorizing collision operators are taken into account due to their simple structure and since they can be specified explicitly. Such S-matrices are known to appear in the context of integrable models like the Sinh-Gordon or the $O(N)$-invariant sigma models \cite{AAR}.\par
In the form factor program one aims at calculating the Wightman $n$-point functions of a theory associated with a given factorizing S-matrix. For this task, matrix elements of local field operators in scattering states, called form factors, are investigated. The special form of the considered scattering operator as well as the assumed properties of the local fields such as locality or covariance yield a number of constraints on the form factors. Their explicit computation can be achieved by solving these conditions, which was, indeed, established for many models \cite{babujian2011n}.\par
The Wightman $n$-point functions of local fields are then given by infinite series of integrals over form factors. The difficulty, thereby, is to control the convergence of these sums. Although this is possible in a few special cases \cite{babujian2006form}, this problem remains open to a large extent.\par
More recent inverse scattering techniques yield a large class of integrable models in two spacetime dimensions \cite{L08}. This approach takes place in the algebraic framework of quantum field theory \cite{Haag}. It is based on the possibility of explicitly constructing quantum fields with weakened localization properties by exploiting the crossing symmetry of the scattering matrix \cite{schroer1999modular}, again considered to be of factorizing type. These auxiliary objects are localized in Minkowski space in wedge shaped regions and are referred to as polarization-free generators due to their simple momentum-space properties. Starting from the construction of these wedge-local operators, one can prove the \textit{existence} of local fields by making use of certain operator-algebraic techniques \cite{BL4}. This procedure has been carried out in the case of a particle spectrum consisting only of a single species of neutral massive particles by Lechner in \cite{L08, erratum}. The constructed models are integrable and asymptotically complete, hence solve the inverse scattering problem. Among them are the Sinh-Gordon model and the scaling Ising model, which are usually realized in terms of a Lagrangian. More interestingly, however, this nonperturbative construction yields a large class of integrable models to which a Lagrangian formulation is not known.\par
On the other hand, interesting models such as the $O(N)$-invariant nonlinear sigma models, which describe a single species of neutral massive particles with an \textit{internal degree of freedom} and are accessible by perturbative renormalization in $1+1$ dimensions \cite{AAR}, do not fit into the framework of the latter approach. This is due to the simple particle spectrum considered, consisting of just one species of neutral massive particles \textit{without} any internal degree of freedom. In view of the achievements of \cite{L08, erratum} for this special situation, it is, therefore, natural to raise the question whether the employed methods can be generalized to the case of more general particle spectra, opening up the possibility to construct, for instance, the $O(N)$ nonlinear $\sigma$-models in a rigorous way for the first time. In this thesis we look into this important issue. We take, in particular, a particle spectrum into account which involves an arbitrary number of massive particle species which transform under some global gauge group. The starting point of the construction of local theories is a scattering \textit{matrix} which factorizes into a product of a number of two-particle S-matrices, i.e. which is of factorizing type. In a paper by Lechner and Sch\"utzenhofer \cite{LS} the first intermediate step of constructing wedge-local field algebras by means of auxiliary quantum fields was shown to be feasible for the considered setting. The second more involved step, the transition to \textit{local} algebras associated with \textit{bounded} regions in Minkowski space, such as double cones, on the other hand, shall be discussed in this thesis.\par
Since in $1+1$ dimensions double cones are intersections of two opposite wedges, it can be inferred that the local algebras are obtained by intersecting wedge algebras. However, it is not straightforward to decide whether or not these intersections are trivial. Fortunately, there are powerful tools available in the operator-algebraic framework of quantum field theory which allow for the clarification of this problem. We shall argue that the so-called \textit{modular nuclearity condition}, requiring the nuclearity of certain maps, is best suited for our purposes. Its verification implies the cyclicity of the vacuum for the local algebras, and, hence, their nontriviality. To establish this result, we shall proceed similarly as in the special case considered in \cite{L08} where the modular nuclearity condition also played a crucial role. Unfortunately, we discovered in the last stages of this thesis a mistake in the proof of \cite[Proposition 4.4 $b)$]{L08}. As a consequence the main results of \cite{L08}, namely Theorems 5.6 and 5.8, do not hold true by the applied arguments. However, there has been progress in reestablishing the claimed verification of the modular nuclearity condition. In fact as shown in \cite{erratum}, Theorem 5.8 can presently be proven in a \textit{weaker} sense, namely such that the arising models comply with localization in bounded regions \textit{above a minimal size}\footnote{In the original version of the corresponding theorem the localization regions were of \textit{arbitrary} size.}. The techniques leading to this great result can, nevertheless, be applied only to a certain class of models with factorizing S-matrices. A much larger family of models was, originally, covered by Theorem 5.6 which so far could not be repaired.\par
These problems also reflect in the more general situation of a richer particle spectrum considered here. For the proof of the modular nuclearity condition we shall, namely, rely on the existence of a certain map for which we have plausible arguments but have no complete proof yet. However, we emphasize that due to the results obtained so far with regard to this mapping there is strong indication for the establishment of a complete proof in the near future, in particular, in connection with the $O(N)$-invariant nonlinear $\sigma$-models.\\

The rigorous construction methods presented by the previous discussion do not give any results in spacetime dimensions greater than \textit{three}. In the context of recently developed deformation procedures \cite{grosse2007wedge, buchholz2011warped}, however, generalizations of the latter inverse scattering approach to higher dimensions are, indeed, possible as the analysis of \cite{GL} shows. The strategy, thereby, is to start from a well-known model and modify it in a suitable way such that the basic properties of quantum field theory like locality or covariance are preserved. The deformation techniques allow for the nonperturbative construction of new quantum field theoretic models with nontrivial interaction. The corresponding field operators are not localized in bounded regions but are wedge-local. Due to this remnant of locality, scattering theory can be applied and the two-particle scattering matrix can consistently be determined \cite{BBS01}. The structure of the resulting scattering operator is very simple. It does not allow for particle creation or momentum transfer in collision processes of particles. However, the great achievement of this approach is that a deformation of a trivial, i.e. interaction-free, theory on $d\geq 2$-dimensional spacetime gives rise to a theory which admits nontrivial scattering as already first examples show \cite{grosse2007wedge, buchholz2008warped}.\par
Building on these achievements, in the second part of this thesis we are concerned with the construction of new nontrivial models in higher spacetime dimensions by means of deformations schemes.\par
Thus, in this thesis we are, on the one hand, concerned with the rigorous construction of \textit{local} theories on two-dimensional Minkowski space, by applying inverse scattering methods. On the other hand, we are interested in constructing models in \textit{higher} dimensions by means of deformation procedures.\par
This thesis is organized as follows. In the next chapter we introduce the general framework. This enables us simultaneously to present the notation used throughout the thesis and to recall the basic concepts of Algebraic Quantum Field Theory necessary for our purposes.\par
In Chapter \ref{Chapter3} we are concerned with the rigorous construction of nontrivial quantum field theoretical models in two spacetime dimensions by means of inverse scattering techniques. We start by clarifying the considered particle spectra which involve several massive particle species carrying arbitrary charges and collect further notation. Moreover, we specify the properties of a factorizing S-matrix, which is the starting point of our construction, and introduce a convenient Hilbert space. The subsequent section in that chapter deals with the construction of local quantum field theoretic models. To this end, we first review the results on the wedge-local auxiliary fields obtained in \cite{LS} and the wedge-local algebras they generate. Next, we prove that the Bisognano-Wichmann property holds true in our setting, constituting an important step with regard to the verification of the modular nuclearity condition. Having established this result, we discuss the transition from wedge-local to local algebras. We further introduce the modular nuclearity condition and clarify its significance for our approach. The main part of Chapter \ref{Chapter3} is concerned with the verification of this condition, which can be established under a very plausible conjecture, implying the existence of local fields in the constructed models. In the last sections to that chapter the physical properties including asymptotic completeness, already shown to hold true in \cite{LS}, of the constructed models are discussed, and concrete examples are illustrated. The most prominent models fitting in this inverse scattering approach are the $O(N)$-invariant nonlinear $\sigma$-models to which we devote Chapter \ref{Chapter4}.\par
In Chapter \ref{Chapter5}, on the other hand, we construct nontrivial models via deformation methods in $d\geq 2$ spacetime dimensions. To this end, we first introduce this approach by reviewing its development. This presentation is followed by specifying the particular model, namely that of a scalar massive Fermion, which constitutes the starting point of our construction. Its deformation is carried out and the properties of the emerging models are discussed. The construction gives rise to theories based on wedge-local fields which constitute examples of so-called tempered polarization-free generators. This property allows for the consistent computation of the two-particle scattering matrix which is shown to differ from the one of the initial, undeformed model in a nontrivial way and depends, furthermore, on the deformation. Due to its simple form the effects of the deformation can only be uncovered in special arrangements such as time delay experiments, however, effects like momentum transfer or particle production cannot be expected to be found in the deformed models.\par
In Chapter \ref{Chapter6} we investigate the deformed models in the special case of two spacetime dimensions. We shall show that under this restriction the deformation of a scalar massive Fermion does not only give rise to a wedge-local model but also to certain \textit{local} theories, complying with localization in bounded regions above a minimal size, constructed previously by inverse scattering methods. This result establishes, thereby, a connection between the two construction approaches presented in this thesis.\par
The family of integrable models with factorizing S-matrices emerging by the applied deformation procedures contains, in particular, the famous Sinh-Gordon model.\par
Finally, we shall discuss our findings and open questions in Chapter \ref{Chapter7}, including an outlook. This completes the main text.\\\\
Appendix \ref{AppendixA} collects auxiliary results needed, in particular, in Chapter \ref{Chapter3}. Appendix \ref{AppendixB}, on the other hand, covers some mathematical background material.
\\\\
Most of the content of Chapters \ref{Chapter5} and \ref{Chapter6} has been published in \cite{alazzawi2013deformations}.


\chapter{Preliminaries} 

\label{Chapter2} 
\fancyhead[LE,RO]{\thepage}
\fancyhead[LO]{\thesection. \emph{\rightmark}}
\fancyhead[RE]{Chapter 2. \emph{Preliminaries}}
\renewcommand{\chaptermark}[1]{ \markboth{#1}{} }
\renewcommand{\sectionmark}[1]{ \markright{#1}{} }
In this thesis we use the system of units in which the speed of light $c$ as well as Planck's constant $\hbar$ are set to be equal to one.
\section{Minkowski Spacetime}
To set the stage, in this section we collect the basic notations and conventions regarding geometrical aspects that shall be used throughout this thesis.\\

The considered spacetime is the Minkowskian of $d\geq 2$ dimensions with coordinates \mbox{$x=(x^0,\vec{x})\in\mathbb{R}\times\mathbb{R}^{d-1}$}, equipped with the metric
\begin{equation}
x\cdot y=x^0y^0-\vec{x}\vec{y}=x^0y^0-\sum_{i=1}^{d-1}x^iy^i,\qquad\forall x,y\in\mathbb{R}^d.
\end{equation}
Minkowski spacetime is divided into
subregions called spacelike, timelike and lightlike according to $x\cdot x<0$,
$x\cdot x>0$, and $x\cdot x=0$ respectively. The group of isometries is given by the Poincaré group $\mathcal{P}$. It is the semidirect product $\mathbb{R}^d\rtimes\mathcal{L}$ with $\mathbb{R}^d$ corresponding to the spacetime translations and $\mathcal{L}$ the full homogeneous Lorentz group. Respectively,
\begin{equation}
(a,\Lambda)\cdot(a',\Lambda')=(a+\Lambda a',\Lambda\Lambda'),\qquad a,a'\in\mathbb{R}^d,\,\, \Lambda,\Lambda'\in\mathcal{L}.
\end{equation}
Moreover, $\mathcal{P}$ is simply connected and admits a splitting into connected components
\begin{equation}\label{poinc}
\mathcal{P}=\mathcal{P}_+^\uparrow\cup\mathcal{P}_-^\uparrow\cup\mathcal{P}_+^\downarrow\cup\mathcal{P}_-^\downarrow,
\end{equation}
where $\mathcal{P}_+^\uparrow$ denotes the proper orthochronous part, consisting of elements $g$ which preserve time orientation, symbolized by $\uparrow$, and for which \mbox{$\det g=+1$}, represented by the sign $+$. Correspondingly, $\downarrow$ stands for group elements changing time orientation and the sign $-$ for those with determinant $-1$. For the following purposes the proper part $\mathcal{P}_+=\mathcal{P}_+^\uparrow\cup\mathcal{P}_+^\downarrow$ shall be of main concern.

\subsection{Wedges and Double Cones}\label{wedges}
In this thesis certain regions in $\mathbb{R}^d$ are of special interest and are introduced in this section. We agree upon the convention to work with open regions only. Denoting by $\mathfrak{R}'$ the spacelike (causal) complement of $\mathfrak{R}\subset\mathbb{R}^d$, defined as the interior of the set $\{x\in\mathbb{R}^d:(x-y)^2<0,\,\,\forall y\in\overline{\mathfrak{R}}\}$, then $\mathfrak{R}$ is said to be causally complete if and only if $\mathfrak{R}=\mathfrak{R}''$. Of great importance is a particular class of causally complete subregions of Minkowski space, namely so-called \textit{wedges}. Each wedge is a Poincaré transform of
\begin{equation}\label{wedge}
W_R:=\{x\in\mathbb{R}^d:x_1>|x_0|\},
\end{equation}
referred to as the \textit{right} wedge with causal complement $W_R'=-W_R=:W_L$, also called the \textit{left} wedge. The set of all wedges is denoted by $\mathcal{W}=\mathcal{P}W_R$. Note that the boost
\begin{equation}\label{boost}
\Lambda_{W_R}:\mathbb{R}\ni t\rightarrow\Lambda_{W_R}(t):=\begin{pmatrix}
\cosh(2\pi t)&\sinh(2\pi t)& 0& \dots& 0\\
\sinh(2\pi t)&\cosh(2\pi t)& 0& \dots& 0\\
0& 0& 1&\dots& 0\\
\vdots &\vdots&\vdots&\ddots&\vdots\\
0& 0& 0&\dots& 1
\end{pmatrix}\in\mathcal{L}_+^\uparrow
\end{equation}
preserves the wedge $W_R$. Moreover, the reflection across the edge of $W_R$, denoted by $j_{W_R}\in\mathcal{P}_+$, acts according to
\begin{equation}\label{reflection}
j_{W_R} (x_0,x_1,\dots,x_{d-1})=(-x_0,-x_1,\dots,x_{d-1}).
\end{equation}
Of further interest are certain bounded regions in $\mathbb{R}^d$, namely open \textit{double cones} $\mathcal{O}$. A double cone is defined to be the intersection of the causal future of a point $x$ with the causal past of a point $y$ to the future of $x$. \begin{figure}[h]
\begin{center}
\begin{minipage}{3cm}
\begin{tikzpicture}
\begin{scope}[->]
\draw (.5,-1)--(.5,1)node[anchor=north]{};
\draw (.5,-1)--(2.5,-1) node[anchor=east]{};
\end{scope}
        \shade[left color=blue!5!white,right color=blue!40!white,opacity=0.3] (-1,0) arc (180:0:1.5cm and 0.3cm) -- (.5,-2) -- cycle;
        \draw (-1,0) arc (180:360:1.5cm and 0.3cm);
        \draw (-1,0) arc (180:0:1.5cm and 0.3cm);
        \draw (-1,0)--(.5,-2)--(2,0);
    \draw[fill] (.7,1.3) node{\footnotesize{time}};
     \draw[fill] (2.7,-0.7) node{\footnotesize{space}};
    \end{tikzpicture}
\end{minipage}\hspace{1.7cm}
\begin{minipage}{3cm}
\begin{tikzpicture}
        \shade[left color=blue!5!white,right color=blue!40!white,opacity=0.3] (-1,0) arc (180:360:1.5cm and 0.3cm) -- (.5,2) -- cycle;
        \draw (-1,0) arc (180:360:1.5cm and 0.3cm);
        \draw[dashed] (-1,0) arc (180:0:1.5cm and 0.3cm);
        \draw (-1,0)--(.5,2)--(2,0);
\end{tikzpicture}
\end{minipage}\hspace{1.7cm}
\begin{minipage}{3cm}
\begin{tikzpicture}
        \shade[left color=blue!5!white,right color=blue!40!white,opacity=0.3] (-5/8,.5) arc (180:0:1.12cm and 0.3cm) -- (.5,-1) -- cycle;
        \draw[gray] (-1,1) arc (180:360:1.5cm and 0.3cm);
        \draw[gray, dashed] (-1,1) arc (180:0:1.5cm and 0.3cm);
 \draw[gray] (-1,1) arc (180:111:1.5cm and 0.3cm);
  \draw[gray] (2,1) arc (0:65:1.5cm and 0.3cm);
        \draw[gray] (-1,1)--(-5/8,.5);
\draw[blue] (-.03,-.29)--(.5,-1);
\draw[blue] (.5,-1)--(1.03,-.29);
\draw[gray] (1+5/8,.5)--(2,1);
\draw[blue, densely dashed] (1.03,-.29)--(1+5/8,.5);
\draw[blue, densely dashed] (-.03,-.29)--(-5/8,.5);
    \draw[fill] (.7,-1.1) node{\footnotesize{$x$}};
    \fill[black] (.5,-1) circle (.25ex);
           \shade[left color=blue!5!white,right color=blue!40!white,opacity=0.3] (-5/8,.5) arc (180:360:1.12cm and 0.3cm) -- (.5,2) -- cycle;
            \draw[gray] (-1,0) arc (180:360:1.5cm and 0.3cm);
    \draw[blue] (-5/8,.5) arc (180:360:1.12cm and 0.3cm);
    \draw[blue, densely dashed] (-5/8,.5) arc (180:0:1.12cm and 0.3cm);
            \draw[gray, dashed] (-1,0) arc (180:0:1.5cm and 0.3cm);
            \draw[gray] (-1,0)--(-5/8,.5);
 \draw[gray, densely dashed](-5/8,.5)--(-5/8+.2,.5+.8/3);
\draw[blue, densely dashed](-5/8,.5)--(-5/8+.2,.5+.8/3);
            \draw[blue] (-5/8+.2,.5+.8/3)--(.5,2);
   \draw[blue] (.5,2)--(-5/8+.2+1.85,.5+.8/3);
  \draw[blue, densely dashed] (-5/8+.2+1.85,.5+.8/3)--(1+5/8,.5);
  \draw[gray] (1+5/8,.5)--(2,0);
        \draw[fill] (.7,2.1) node{\footnotesize{$y$}};
        \fill[black] (.5,2) circle (.25ex);
\end{tikzpicture}
\end{minipage}
\end{center}
\caption{Three-dimensional illustration of a forward lightcone (left), backward lightcone (middle) and a double cone (right).}
\end{figure}
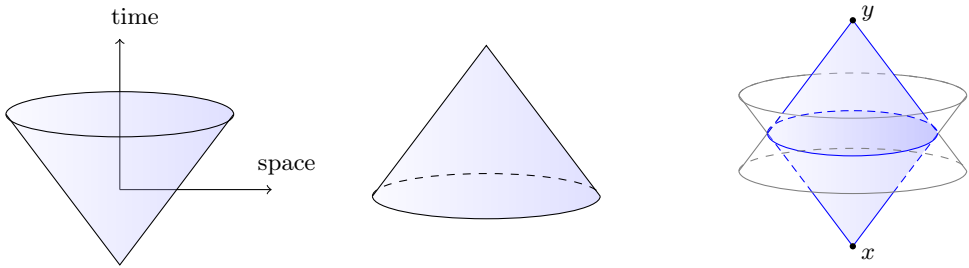
In other words, a double cone is a non-empty intersection of a forward lightcone $V_x^+$ with a backward lightcone $V_y^-$, i.e. $V_x^+\cap V_y^-$ with $y\in V_x^+$. On the other hand, double cones arise also from a suitable intersection of wedges, namely
\begin{equation}\label{o}
\mathcal{O}=\bigcap_{W\supset\mathcal{O}}W,\qquad W\in\mathcal{W}.
\end{equation}
Moreover, the set $\mathcal{W}$ is causally separating for double cones. That is, for every pair of spacelike separated double cones $\mathcal{O}_1$ and $\mathcal{O}_2$ there exists a $W\in\mathcal{W}$, such that
\begin{equation}
\mathcal{O}_1\subset W\subset\mathcal{O}_2'.
\end{equation}
For further purposes, we shall denote the set of all double cones by $\mathscr{O}$.
\subsection{$1+1$ Dimensions}\label{wedges1+1}
At several places we shall restrict our attention to two spacetime dimensions. To this end, we may specialize the notions just introduced to this setting.\par
In case $d=2$ the set $\mathcal{W}$ consists of translates of either $W_R$ or $W_L$, that is,
\begin{equation}\label{setwedges2}
\mathcal{W}=\{W_L+x:x\in\mathbb{R}^2\}\cup\{W_R+x:x\in\mathbb{R}^2\}.
\end{equation}
Note that both $W_R$ and $W_L$ are invariant under the action of the boosts (\ref{boost}).
\begin{figure}[h]
\begin{center}
    \begin{tikzpicture}[scale=1]
                \begin{scope}
                \draw[fill] (-2,1)  node{\small{$W_L$}};
                \shade[left color=blue!5!white,right color=blue!40!white,opacity=0.5] (0,0)--(-1.5,0)--(-1.5,1.5) -- (0,0); 
                    \shade[left color=blue!5!white,right color=blue!40!white,opacity=0.5] (0,0)--(-1.5,0)--(-1.5,-1.5) -- (0,0);
                \end{scope} 
        \begin{scope}
                \draw[fill] (2,1) node{\small{$W_R$}};
        \shade[right color=blue!5!white,left color=blue!40!white,opacity=0.5] (0,0)--(1.5,0)--(1.5,1.5) -- (0,0); 
        \shade[right color=blue!5!white,left color=blue!40!white,opacity=0.5] (0,0)--(1.5,0)--(1.5,-1.5) -- (0,0);
        \begin{scope}[->]
            \draw (-2,0) -- (2,0) node[anchor=north] {};
            \draw (0,-2) -- (0,2) node[anchor=east] {};
        \end{scope}
        \draw[gray] (-1.5,-1.5) -- (1.5,1.5);
        \draw[gray] (-1.5,1.5) -- (1.5,-1.5);
       \end{scope}
       \draw[fill] (2.2,.3) node{\small{$x^1$}};
       \draw[fill] (.3,2.1) node{\small{$x^0$}};
    \end{tikzpicture} 
    \end{center}
    \caption{The left and the right wedge.}
\end{figure}
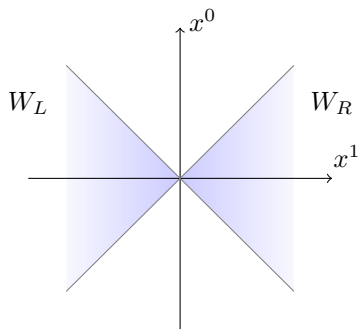

The definition of double cones, stated above for $d\geq 2$, is in two spacetime dimensions equivalently formulated as the non-empty intersection of a right wedge with a left wedge. More precisely,
\begin{equation}\label{double cone}
\mathcal{O}_{x,y}:=(W_R+x)\cap(W_L+y),\qquad y-x\in W_R.
\end{equation}
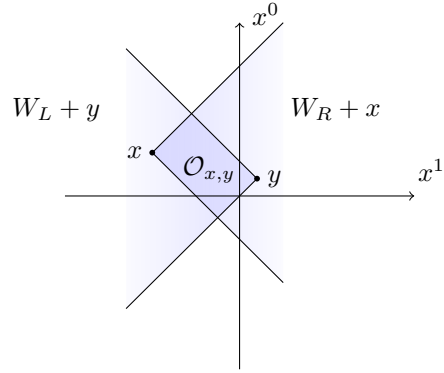
\begin{figure}[h]
 \begin{center}
        \begin{tikzpicture}[scale=1.15]
         \shade[left color=blue!5!white,right color=blue!40!white,opacity=0.3] (-1.3,-1.3)--(.2,.2)--(-1.3,1.7)--cycle;
    \shade[right color=blue!5!white,left color=blue!40!white,opacity=0.3](.5,-1)--(-1,.5)--(.5,2)--cycle;
                        \draw[fill] (-2.1,1) node{\small{$W_L+y$}};
            \draw (-1.3,-1.3) -- (0.2,0.2);
            \draw (0.2,0.2) -- (-1.3,1.7);
                \draw[fill] (1.1,1) node{\small{$W_R+x$}};
                \draw[fill] (-.35,.3) node{\small{$\mathcal{O}_{x,y}$}};
\draw[fill] (-1.2,.5)node{\small{$x$}};
\draw[fill] (.4,.2)node{\small{$y$}};
  \fill[black] (.2,.2) circle (.2ex);
    \fill[black] (-1,.5) circle (.2ex);
            \draw (-1,.5) -- (.5,-1);
            \draw (-1,.5) -- (.5,2);
\begin{scope}[->]
            \draw (-2,0) -- (2,0) node[anchor=north] {};
            \draw (0,-2) -- (0,2) node[anchor=east] {};
        \end{scope}
       \draw[fill] (2.2,.3) node{\small{$x^1$}};
       \draw[fill] (.3,2.1) node{\small{$x^0$}};
        \end{tikzpicture} 
        \end{center}
\caption{A double cone as the intersection of wedges in two spacetime dimensions.}\label{doubleFig}
\end{figure}

\section{Algebraic Approach to Quantum Field Theory}\label{AQFT}
Algebraic Quantum Field Theory \cite{Araki99, Haag, halvorson2006algebraic, horuzhy1990introduction}, synonymously Local Quantum Physics, was originally proposed by Rudolf Haag \cite{haag1955quantum, haag1958quantum} in the mid 1950's as a concept by which scattering of particles can be understood as a consequence of the principle of locality, an expression of Einstein causality in relativistic quantum theory. Later on, a mathematically precise description was established by Araki, Haag and Kastler. The main mathematical methods underlying local quantum physics arise from the theory of operator algebras which are at the basis of this approach. Specifically, a \textit{net} of operator algebras, i.e. a family of operator algebras labeled by regions of spacetime, is of central significance. For our purposes the focus is on Minkowski space $\mathbb{R}^d$, $d\geq 2$. However, the formalism of Algebraic Quantum Field Theory can also be applied to many other spacetimes.\par
We refrain from giving a detailed introduction to this topic and rather limit ourselves to notions and concepts necessary for our purposes. The references cited above provide a thorough overview for further reading.\par
A model in local quantum physics is characterized by a family of algebras $\mathcal{F}(\mathcal{O})$, generated by operators which are localized in the spacetime region $\mathcal{O}\subset\mathbb{R}^d$ and act on a separable Hilbert space $\mathscr{H}$. The algebras $\mathcal{F}(\mathcal{O})\subset\mathcal{B}(\mathscr{H})$ are taken to be $*$-algebras that are closed in the weak operator topology, that is, they are von Neumann algebras. The correspondence
\begin{equation}
\mathbb{R}^d\supset\mathcal{O}\mapsto\mathcal{F}(\mathcal{O})\subset\mathcal{B}(\mathscr{H}),
\end{equation}
together with the requirement of
\begin{itemize}
\item[i)] \textit{Isotony}: $\qquad\mathcal{F}(\mathcal{O}_1)\subset\mathcal{F}(\mathcal{O}_2) \qquad\text{if}\qquad\mathcal{O}_1\subset\mathcal{O}_2,$
\end{itemize}
constitutes a \textit{field net}. The $C^*$-inductive limit for $\mathcal{O}\rightarrow\mathbb{R}^d$ of this directed system of von Neumann algebras is called the \textit{quasilocal field algebra} $\mathcal{F}$, \cite{kadison1986fundamentals}.\par

Besides the von Neumann algebras $\mathcal{F}(\mathcal{O})$, we have a unitary, strongly continuous representation $U$ of the identity component of the Poincaré group $\mathcal{P}_+^\uparrow$ acting on $\mathscr{H}$, which satisfies the
\begin{itemize}
\item[ii)] \textit{Spectrum condition}: The joint spectrum of the energy-momentum operators $P=(P^0,\vec{P})$, the generators of the translations $U(a,\mathbb{1}):=U(a)=e^{i P\cdot a}$, is restricted to the closed forward light cone $\overline{V}_+=\{p\in\mathbb{R}^d:p^0\geq|\vec{p}|\}$.
\end{itemize}
The action of the relativistic symmetries on the net $\{\mathcal{F}(\mathcal{O})\}_{\mathcal{O}\in\mathbb{R}^d}$ is further required to incorporate
\begin{itemize}
\item[iii)]\textit{Covariance}: $\,\,\, \sigma_{(a,\Lambda)}(\mathcal{F}(\mathcal{O}))=U(a,\Lambda)\mathcal{F}(\mathcal{O})U(a,\Lambda)^{-1}=\mathcal{F}(\Lambda\mathcal{O}+a),\quad (a,\Lambda)\in\mathcal{P}_+^\uparrow,$
\end{itemize}
where $\Lambda\mathcal{O}+a=\{\Lambda x+a:x\in\mathcal{O}\}$ and $\sigma_{(a,\Lambda)}$ denotes the automorphisms of $\mathcal{F}$ induced by $U$.\par
Additionally, we assume the existence of a compact Lie group $G$ (the gauge group) and a faithful, strongly continuous unitary representation $V$ of it which induces automorphisms $\alpha_g$, $g\in G$, of $\mathcal{F}$
\begin{equation}
V(g) A V(g)^{-1}=\alpha_g(A),\qquad A\in\mathcal{F}.
\end{equation}
The representations $U$ and $V$ commute and the $\alpha_g$ are assumed to respect the local structure, that is
\begin{itemize}
\item[iv)] \textit{Inner symmetry}: The automorphisms $\alpha_g$ leave each $\mathcal{F}(\mathcal{O})$ globally fixed, i.e.
\begin{equation}
\alpha_g(\mathcal{F}(\mathcal{O}))=\mathcal{F}(\mathcal{O}),\qquad\forall\, g\in G.
\end{equation}
\end{itemize}
In the center of $G$ a Bose-Fermi operator $k$ of order 2 is assumed to exist, giving rise to a decomposition of $A\in\mathcal{F}(\mathcal{O})$ into a bosonic $(+)$ and a fermionic $(-)$ part, i.e. $A=A_++A_-$, where
\begin{equation}
A_\pm:=\frac{1}{2}\left(A\pm\alpha_k(A)\right).
\end{equation}
Thereby $A_+$ is even whereas $A_-$ is odd under the adjoint action of the unitary operator $V(k)$ which further fulfills $V(k)=V(k)^*=V(k)^{-1}$.\par
The principle of causality, the relativistic prohibition of the existence of superluminal signals, is implemented in the framework by demanding for $A\in\mathcal{F}(\mathcal{O}_1)$ and $B\in\mathcal{F}(\mathcal{O}_2)$
\begin{equation}\label{locality}
[A_+,B_+]=[A_+,B_-]=[A_-,B_+]=\{A_-,B_-\}=0,\qquad \text{if}\quad\mathcal{O}_1\subset\mathcal{O}'_2.
\end{equation}
Introducing the twist operation
\begin{equation}
\mathcal{F}(\mathcal{O})^t:=Z\mathcal{F}(\mathcal{O})Z^*,
\end{equation}
with
\begin{equation}\label{twist}
Z:=\frac{1+iV(k)}{1+i},
\end{equation}
then the locality postulate (\ref{locality}) can equivalently be formulated in the following way.
\begin{itemize}
\item[v)] \textit{Twisted locality}:
\begin{equation}\label{locali}
\mathcal{F}(\mathcal{O}_1)^t\subset \mathcal{F}(\mathcal{O}_2)',\qquad \text{if}\quad\mathcal{O}_1\subset\mathcal{O}'_2,
\end{equation}
\end{itemize}
where $\mathcal{F}(\mathcal{O})'$ denotes the \textit{commutant} of $\mathcal{F}(\mathcal{O})$, i.e.
\begin{equation}\label{commutant}
\mathcal{F}(\mathcal{O})':=\{B\in\mathcal{B}(\mathscr{H}):\,[A,B]=0,\,\,\forall A\in\mathcal{F}(\mathcal{O})\}.
\end{equation}
As can be easily verified, see also \cite{foit1983abstract}, we have
\begin{equation}
\mathcal{F}(\mathcal{O})^{tt}=\mathcal{F}(\mathcal{O}),
\end{equation}
since $Z^2=V(k)$, and, moreover,
\begin{equation}
\mathcal{F}(\mathcal{O})^{t\prime} =\mathcal{F}(\mathcal{O})^{\prime t},
\end{equation}
as a consequence of the unitarity of $Z$. A sharpened version of the locality condition is that of \textit{twisted Haag duality}
\begin{equation}\label{haagDuality}
\mathcal{F}(\mathcal{O})^t=\mathcal{F}(\mathcal{O}')',
\end{equation}
which is particularly known to hold \cite{bisognano1976duality} if the region $\mathcal{O}$ in (\ref{haagDuality}) is a wedge and the net is generated by finite-component Wightman fields \cite{streater2000pct}. In general, however, this is not the case for bounded regions.\par
The field net $\{\mathcal{F}(\mathcal{O})\}_{\mathcal{O}\in\mathbb{R}^d}$ contains, in particular, a \textit{subnet} $\{\mathcal{A}(\mathcal{O})\}_{\mathcal{O}\in\mathbb{R}^d}$ of local \textit{observables}. Each $\mathcal{A}(\mathcal{O})$ is defined as the set of fixed points under the action of the global gauge group $G$, that is,
\begin{eqnarray}
\mathcal{A}(\mathcal{O})&=& \mathcal{F}(\mathcal{O})\cap V(G)'\nonumber\\
&=&
\{B\in\mathcal{F}(\mathcal{O}):\alpha_g(B)=B,\,\forall g\in G\}.
\end{eqnarray}
Hence the observable algebra fulfills (untwisted) locality. Moreover, the self-adjoint elements of $\mathcal{A}(\mathcal{O})$ are local relative to the fields, that is, $\mathcal{A}(\mathcal{O}_1)$ and $\mathcal{F}(\mathcal{O}_2)$ commute element-wise if $\mathcal{O}_1\subset\mathcal{O}'_2$. They further correspond to physical properties of the system which may be measured in $\mathcal{O}$. Note that in the theory of superselection sectors \cite{doplicher1971local, doplicher1974local, doplicher1990there} the initial point is the net of observables whereas the gauge group and the field net are derived objects.\par
Furthermore, there exists a $U$- and $V$-invariant, normalized vector $\Omega\in\mathscr{H}$, the physical vacuum state which is unique up to a phase factor. One demands
\begin{itemize}
\item[vi)] \textit{Cyclicity of the vacuum}: $\qquad\bigcup_{\mathcal{O}}\mathcal{F}(\mathcal{O})\Omega$ is dense in $\mathscr{H}$.
\end{itemize}
If the field net $\{\mathcal{F}(\mathcal{O})\}_{\mathcal{O}\in\mathbb{R}^d}$ is generated by Wightman fields then $\Omega$ is a cyclic vector for every single field algebra $\mathcal{F}(\mathcal{O})$ with $\mathcal{O}\subset\mathbb R^d$ being open (Reeh-Schlieder Theorem). This, however, cannot always be expected for nets that are defined without association to Wightman fields. In general, the generation of the algebras $\mathcal{F}(\mathcal{O})$ from quantum field operators is a nontrivial problem, cf. e.g. \cite{driessler1986connection, borchers1992quantum}. Nevertheless, under the additional assumption of
\textit{weak additivity}, that is $$\left(\bigcup_{x\in\mathbb{R}^d}\mathcal{F}(\mathcal{O}_0+x)\right)''=\left(\bigcup_{\mathcal{O}\subset\mathbb{R}^d}\mathcal{F}(\mathcal{O})\right)''$$
for every fixed open set $\mathcal{O}_0$, which particularly is fulfilled if the net $\{\mathcal{F}(\mathcal{O})\}_{\mathcal{O}\in\mathbb{R}^d}$ emerges from Wightman fields, the Reeh-Schlieder Theorem can be proven to hold also in a more general context, see e.g. \cite{Araki99, yngvason2014localization}. That is, under the assumption of weak additivity, we have
\begin{equation}
\overline{\mathcal{F}(\mathcal{O})\Omega}=\mathscr{H},\qquad\mathcal{O}\subset\mathbb{R}^d\,\,\text{open}.
\end{equation}
For regions $\mathcal{O}$ with non-empty causal complement $\mathcal{O}'$, it then follows from the Reeh-Schlieder property that the vacuum $\Omega$ is a separating vector of $\mathcal{F}(\mathcal{O})$ for every $\mathcal{O}$, i.e. $A\Omega=0$ for $A\in\mathcal{F}(\mathcal{O})$ implies $A=0$.\\

The factorial decomposition of the representation $V$ of the gauge group $G$ splits the Hilbert space $\mathscr{H}$ into a
direct sum of orthogonal subspaces $\mathscr{H}_\xi$. The reduced unitary representation
$V_\xi$ of $G$ on $\mathscr{H}_\xi$ is for each $\xi$ factorial, that is, the von Neumann algebra generated by the operators $\{V(g)|_{\mathscr{H}_\xi}:g\in G\}$ is a factor. $V_\xi$ decomposes further (in general non-uniquely) into a direct sum of unitarily equivalent irreducible representations of $G$. The character $\xi$ hence corresponds to a unitary equivalence class of irreducible representations contained in $V$ and is in one-to-one correspondence with the charge quantum numbers. The subspace $\mathscr{H}_\xi$, also referred to as (charge) sector, is generated by those vectors in $\mathscr{H}$ which transform according to this equivalence class $\xi$. So, the factorial decomposition of $V$ is given by
\begin{equation}
\mathscr{H}=\bigoplus_{\xi\in\Sigma}\mathscr{H}_\xi,
\end{equation}
with $\Sigma$ the set of equivalence classes of irreducible unitary representations contained in $V$.\par
From a physical point of view the above assumed structure, regarding gauge invariance of the first kind, does not capture all superselection sectors if long-range forces are present. In Quantum Electrodynamics, for instance, we may have to take a finer decomposition labeled by ``infrared clouds'' \cite{buchholz1982physical} into account. Our assumptions are, however, consistent and sufficient in a theory with only short-range interaction.\\

The remarkable feature of the present approach is that the full physical interpretation of a theory is encoded in the corresponding net $\mathcal{O}\mapsto\mathcal{F}(\mathcal{O})$ of algebras, complying with the above conditions. In order to analyze e.g. the particle spectrum or collision cross sections, no further specification of operators in $\mathcal{F}(\mathcal{O})$ is necessary aside from their localization.\par
Throughout this thesis we shall refer to a net $\mathcal{O}\mapsto\mathcal{F}(\mathcal{O})$ satisfying the properties i), iii), iv), v) and vi) as \textit{local field net} on $\mathbb{R}^d$. With regard to measurements, localization regions are preferably \textit{bounded}. Hence, in the following we shall be particularly interested in algebras associated with compact regions, such as double cones, and refer to them as \textit{local} algebras. Correspondingly, their elements shall be called \textit{local} operators.\\


\chapter{Inverse Scattering Approach} 

\label{Chapter3} 
\fancyhead[LE,RO]{\thepage}
\fancyhead[LO]{\thesection. \emph{\rightmark}}
\fancyhead[RE]{Chapter 3. \emph{Inverse Scattering Approach}}
\renewcommand{\chaptermark}[1]{ \markboth{#1}{} }
\renewcommand{\sectionmark}[1]{ \markright{#1}{} }

In this chapter we are concerned with the explicit construction of nontrivial quantum field theoretic models with particle spectra that contain several massive particle species carrying arbitrary charges. The construction takes place within the framework of Local Quantum Physics and is formulated as an inverse scattering problem. The starting point is, therefore, a prescribed S-matrix which in general can be a rather complicated object. In spacetime dimensions greater than two no-go theorems \cite{aks1965proof, coleman1967all} imply that simplifications such as the exclusion of particle production from scattering processes result in the triviality of the S-matrix. In two spacetime dimensions, however, there do exist ``manageable'' S-matrices describing nontrivial interaction, namely those of factorizing type. The factorization of the multi-particle S-matrix into the product of a number of two-particle ones provides a drastic simplification we shall take advantage of. In this chapter the spacetime dimension is, therefore, restricted to \textit{two}.\par
Considering a factorizing S-matrix as given, the first step is to construct wedge-local field algebras by means of auxiliary quantum fields. This task was already carried out in \cite{LS} and shall be reviewed in the following. In a second, more involved step we discuss the possibility of proceeding from these algebras, corresponding to infinitely expanded regions, to a net of \textit{local} field algebras that are associated with bounded regions in Minkowski spacetime.\par
The techniques applied in that procedure result from an analysis that benefits, first of all, from the fact that wedge algebras are thoroughly studied objects in Algebraic Quantum Field Theory. This is, in particular, due to the interpretation of their modular operators as unitary representations of specific Poincaré transformations \cite{BisWich, bisognano1976duality, Bor, buchholz2000geometric}. Based upon this geometric action, families of wedge algebras can be constructed in an algebraic framework \cite{BGL, Mund}. It was, however, Bert Schroer's fundamental insight \cite{schroer1997modular, schroer1999modular} that wedge algebras, complying with the principle of locality, also arise in a setting were a factorizing S-matrix in two spacetime dimensions is taken as the starting point of the construction, initiating thereby an interesting development. A complete construction of these algebras for a simple class of S-matrices was established in \cite{lechner2003polarization}. The transition from field operators localized in wedges to such localized in compact spacetime regions, e.g. double cones, is a rather complicated task. Since any double cone in two dimensions can be obtained by a suitable intersection of wedges, the desired local operators are necessarily elements of the intersection of wedge algebras. The difficulty now is to prove the nontriviality of these intersections. Buchholz and Lechner derived in \cite{BL4} a simple condition, referred to as the \textit{modular nuclearity condition}, by means of which the nontriviality of the double cone algebras can be inferred. At the basis of their investigation is the analysis of spectral properties of the modular operators, which replaces the algebraic problem of checking locality by computing relative commutants \cite{schroer2000modular}. Their arguments are closely linked to the so-called \textit{split property} of wedge algebras. In fact, the modular nuclearity condition implies the split property for wedges which in turn yields the cyclicity of the vacuum for the local algebras, i.e. the nontriviality of the intersections mentioned above. Thus, in order to prove that the inverse scattering point of view we pursue here gives rise to nontrivial local theories, we shall make use of this powerful condition by verifying it for the models at hand. This task does, however, rely on a certain conjecture which at the moment has only been shown to hold true in model theories with certain scalar valued S-matrices \cite{L08, erratum}.\par
This chapter is organized as follows. In the first section we specify the particle spectrum of the models to be constructed and collect some notation. Then the properties of a factorizing S-matrix, which is taken as an input for our construction, are specified. We continue our preparations by constructing a convenient Hilbert space. The subsequent section is concerned with the construction of local quantum field theoretic models by verifying the modular nuclearity condition under a certain conjecture. That section is followed by a discussion of the physical properties of the constructed models and concrete examples are illustrated. The most prominent models fitting in this inverse scattering approach are the $O(N)$-invariant nonlinear $\sigma$-models to which we devote Chapter \ref{Chapter4}.

\section{Framework}
The starting point of our construction is the specification of a certain particle spectrum. This task benefits from Wigner's pioneering analysis of relativistic symmetries in quantum theory \cite{wigner1939unitary}, which was undertaken in 1939. Accordingly, any relativistic formulation of quantum
theory should at least involve a Hilbert space $\mathscr{H}$ of state vectors and a unitary representation $U$ of the proper, orthochronous Poincaré group $\mathcal{P}_+^\uparrow$ on $\mathscr{H}$. Irreducible positive energy representations were thereby identified by Wigner with relativistic particle states corresponding to a certain mass and spin. Recall that a positive energy representation is one that complies with the stability requirement of positive energy in all Lorentz frames, implying that the joint spectrum of the generators \mbox{$P=(P^0,P^1)$} of the representation $U(a,\mathbb{1}):=U(a)=e^{iP\cdot a}$ of the translations is contained in the closed forward light cone $\overline{V}_+=\{p\in\mathbb{R}^2:p^0\geq|p^1|\}$. The observables $P^0$, the energy, and $P^1$, the momentum, give rise to the mass operator $M=[(P^0)^2-(P^1)^2]^{1/2}$ which in a specific irreducible positive energy representation of $\mathcal{P}_+^\uparrow$ has a sharp eigenvalue. Fixing the mass spectrum of the theory, therefore, amounts to choosing a certain unitary equivalence class of the representation $U$. The choice of the representation space $\mathscr{H}$, on the other hand, is to a large extent solely a matter of convenience.\par
By using this usual description of particles, we exclude theories in which long-range forces are present. For those theories allow for particles whose mass fluctuates due to the presence of other excitations and therefore cannot be described by eigenstates of the mass operator. A particular example for such a theory constitutes quantum electrodynamics where charged particles are inevitably accompanied by soft photons \cite{schroer1963infrateilchen}.\par
Concretely, we specify the single particle mass and charge spectra of the models to be constructed as follows. First of all, we consider a compact Lie group as the global gauge group $G$. Equivalence classes $q$ of unitary irreducible representations of $G$ are in one-to-one correspondence with the charge quantum numbers. We shall here be concerned with a finite subset $\mathcal{Q}$ of such quantum numbers. For the sake of concreteness, the analysis is restricted to massive stable theories, which is the case if the restriction of the mass operator to each sector has positive isolated eigenvalues. For simplicity, we limit the number of isolated mass shells in each sector to one, that is, to each charge $q$ there is exactly one mass $m(q)>0$. Our results, however, can be shown to hold also if finitely many isolated mass values are considered in each sector. The mass gap of the theory, i.e. $\min\{m(q)>0:q\in\mathcal{Q}\}$, shall be denoted by $m_\circ$.\par
Since we are working on two-dimensional Minkowski space, we may parameterize the upper mass shell $H_{m(q)}^+=\{((p^2+m(q)^2)^{1/2},p):p\in\mathbb{R}\}$ by the rapidity $\theta$, that is
\begin{equation}\label{p}
p_{m(q)}(\theta):=m(q)
\begin{pmatrix}
\cosh\theta\\
\sinh\theta
\end{pmatrix}
,\qquad\theta\in\mathbb{R}.
\end{equation}
The one-particle Hilbert space $\mathscr{H}_{1}$ can, therefore, be identified with $L^2(\mathbb{R},d\theta)\otimes\mathcal{K}$, where $\mathcal{K}$ is a $D$-dimensional Hilbert space with $D<\infty$. In particular, $\mathscr{H}_1$ decomposes into subspaces of fixed charge $q\in\mathcal{Q}$ and mass $m(q)$, namely
\begin{equation}\label{hilbert1}
\mathscr{H}_1=\bigoplus_{q\in\mathcal{Q}}L^2(\mathbb{R},d\theta)\otimes\mathcal{K}_q.
\end{equation}
The spacetime symmetries, i.e. the proper orthochronous Poincaré group $\mathcal{P}_+^\uparrow$, act on $\mathscr{H}_1$ by means of the unitary, strongly continuous representation
\begin{equation}\label{U1}
U_1(a,t):=\bigoplus_{q\in\mathcal{Q}}\left(U_{1,m(q)}(a,t)\otimes id_{\mathcal{K}_q}\right),\qquad (a,t)\in\mathcal{P}_+^\uparrow,
\end{equation}
which satisfies the relativistic spectrum condition. In particular, $U_{1,m(q)}(a,t)$ is irreducible and we have
\begin{equation}
\left(U_{1,m(q)}(a,t)\psi\right)(\theta):=e^{ip_{m(q)}(\theta)\cdot a}\,\psi(\theta-2\pi t),\qquad \psi\in L^2(\mathbb{R},d\theta)\otimes\mathcal{K}_q.
\end{equation}
Recall that in two spacetime dimensions it is convenient to consider pairs $(a,t)\in\mathcal{P}_+^\uparrow$ consisting of a translation $a\in\mathbb{R}^2$ and a boost $\Lambda(t)$ with rapidity $t\in\mathbb{R}$, i.e.
\begin{equation}
\Lambda(t):=\left(
\begin{array}{cc}
\mathrm{cosh}(2\pi t) & \mathrm{sinh}(2\pi t)\\
\mathrm{sinh}(2\pi t) & \mathrm{cosh}(2\pi t)
\end{array}\right).
\end{equation}
The global gauge group $G$, on the other hand, is represented on $\mathscr{H}_1$ by unitaries
\begin{equation}\label{V1}
V_1(g):=\bigoplus_{q\in\mathcal{Q}}\left(id_{L^2(\mathbb{R},d\theta)}\otimes V_{1,q}(g)\right),\qquad g\in G,
\end{equation}
and acts on $\mathcal{K}_q$ via the irreducible representation $V_{1,q}$. It is further obvious that $V_1$ commutes with $U_1$.\\

It will be useful to consider an orthonormal basis for each $\mathcal{K}_q$. Then their direct sum, denoted by $\{e^\alpha:\alpha=1,\dots,D\}$, constitutes an orthonormal basis in $\mathcal{K}$. Each index $\alpha$ thus corresponds to a certain charge $q_{[\alpha]}$ and mass $m_{[\alpha]}:=m(q_{[\alpha]})$, and $\theta\mapsto \psi^\alpha(\theta)$ denotes the respective component of a vector $\psi\in\mathscr{H}_1$. With regard to charge conjugation, the involution $\alpha\mapsto \overline{\alpha}$, corresponding to a permutation of $\{1,\dots,D\}$, such that $q_{[\overline{\alpha}]}=\overline{q_{[\alpha]}}$ does the job.\par
Furthermore, let $(\cdot,\cdot)$ denote the scalar product in $\mathcal{K}$. We define for vectors $v\in\mathcal{K}^{\otimes n}$ and tensors $M:\mathcal{K}^{\otimes m}\rightarrow\mathcal{K}^{\otimes n}$, $m,n\in\mathbb{N}$,
\begin{eqnarray}
v^{\alpha_1\dots\alpha_n}&:=&(e^{\alpha_1}\otimes\cdots\otimes e^{\alpha_n},v)\\
M^{\alpha_1\dots\alpha_n}_{\beta_1\dots\beta_m}&:=&(e^{\alpha_1}\otimes\cdots\otimes e^{\alpha_n},Me^{\beta_1}\otimes\cdots\otimes e^{\beta_m}).
\end{eqnarray}
Then, for $M:\mathcal{K}^{\otimes m}\rightarrow\mathcal{K}^{\otimes n}$ and $N:\mathcal{K}^{\otimes l}\rightarrow\mathcal{K}^{\otimes m}$, $l\in\mathbb{N}$,
\begin{equation}\label{MN}
\left(M\cdot N\right)^{\alpha_{1}\dots\alpha_n}_{\beta_{1}\dots\beta_{l}}=\sum_{\gamma_1\dots\gamma_m}M^{\alpha_{1}\dots\alpha_n}_{\gamma_1\dots\gamma_m}N^{\gamma_1\dots\gamma_m}_{\beta_{1}\dots\beta_{l}}.
\end{equation}
Moreover, for $M\in\mathcal{B}(\mathcal{K}^{\otimes 2})$ and $n\geq 2$ a useful notation will be
\begin{equation}\label{kurznotation}
M_{n,k}:=1_{k-1}\otimes M\otimes 1_{n-k-1},\qquad k=1,\dots,n-1,
\end{equation}
where $1_j$ denotes the identity on $\mathcal{K}^{\otimes j}$ and $M_{n,k}\in\mathcal{B}(\mathcal{K}^{\otimes n})$. We, further, represent a tensor $M\in\mathcal{B}(\mathcal{K}^{\otimes m},\mathcal{K}^{\otimes n})$ by means of the basis tensors $e^{\alpha_1}\otimes\cdots\otimes e^{\alpha_n}\otimes e^*_{\beta_1}\otimes\cdots\otimes e^*_{\beta_m}$, namely
\begin{equation}
M=\sum_{\boldsymbol{\alpha},\boldsymbol{\beta}}\,M^{\alpha_1\dots\alpha_n}_{\beta_1\dots\beta_m}\,e^{\alpha_1}\otimes\cdots\otimes e^{\alpha_n}\otimes e^*_{\beta_1}\otimes\cdots\otimes e^*_{\beta_m},
\end{equation}
where $\{e^*_\beta:\beta=1,\dots,D\}$ is the dual basis of $\{e^\beta:\beta=1,\dots,D\}$ and $\boldsymbol{\alpha}:=(\alpha_1,\dots,\alpha_n)$. Using this notation we introduce for $1\leq k,l\leq m$ the following operation
\begin{equation}\label{trace}
\begin{aligned}
\hspace{-1cm}&\hspace{-1cm}\text{Tr}\,^k\,_l:\mathcal{B}(\mathcal{K}^{\otimes m},\mathcal{K}^{\otimes n})\rightarrow\mathcal{B}(\mathcal{K}^{\otimes (m-2)},\mathcal{K}^{\otimes n}),\\
& M\mapsto \sum_{\boldsymbol{\alpha},\boldsymbol{\beta}}\delta^{\beta_k\overline{\beta}_{m-l+1}}\,M^{\alpha_1\,\,\,\,\,\,\dots\,\,\,\,\,\,\alpha_n}_{\beta_1\dots\beta_k\dots\beta_{m-l+1}\dots\beta_m}\\
\hspace{.5cm}&\hspace{.5cm}\times\, e^{\alpha_1}\otimes\dots\otimes e^{\alpha_n}\otimes e^*_{\beta_1}\otimes\cdots\widehat{e^*_{\beta_k}}\otimes\cdots\otimes\widehat{e^*_{\beta_{m-l+1}}}\otimes\cdots\otimes e^*_{\beta_m},
\end{aligned}
\end{equation}
where the hat indicates omission of the corresponding factors. The symbol Tr$\,^k\,_l$ shall not be confused with the conventional trace, denoted by Tr, of a tensor which yields a scalar quantity. The presence of the indices $k$ and $l$ stresses this fact. We further define
\begin{equation}
\text{Tr}\,^{k_1,\dots,k_i}_{l_1,\dots,l_i}(M):=\text{Tr}\,^{k_1}\,_{l_1}\left(\text{Tr}\,^{k_2}\,_{l_2}\left(\dots\left(\text{Tr}\,^{k_i}\,_{l_i}\left(M\right)\right)\right)\right)
\end{equation}
for suitable $k_i,l_i$ and $M$.\par
For later purposes, we agree upon the following convention. Namely, considering some sequence $\{b_k\}_{1\leq k\leq n}$, $n\in\mathbb{N}$, of arbitrary algebraic objects $b_k$ not necessarily commuting, then
\begin{equation}\label{convention}
\prod_{k=1}^{n}b_k:=b_1\cdot b_2\cdots b_{n-1}\cdot b_n.
\end{equation}
We shall use this convention of ordered products throughout this thesis without further mentioning.\\

By second quantization, we have natural representations $\widehat{U}:=\bigoplus_{n=0}^\infty U_1^{\otimes n}$ of $\mathcal{P}_+^\uparrow$ and $\widehat{V}:=\bigoplus_{n=0}^\infty V_1^{\otimes n}$ of $G$ on the unsymmetrized Fock space
\begin{equation}\label{fockUnsy}
\widehat{\mathscr{H}}:=\bigoplus_{n=0}^\infty\mathscr{H}_1^{\otimes n}\simeq\bigoplus_{n=0}^\infty(L^2(\mathbb{R}^n,d^n\theta)\otimes\mathcal{K}^{\otimes n})
\end{equation}
over the single particle space $\mathscr{H}_1$.
In particular, we have for $\Psi\in\widehat{\mathscr{H}}$
\begin{equation}\label{actionPoincare}
\left(\widehat{U}(a,t)\Psi\right)_n^{\boldsymbol{\alpha}}(\boldsymbol{\theta})=e^{i\sum_{k=1}^{n}p_{m_{[\alpha_k]}}(\theta_k)\cdot a}\Psi_n^{\boldsymbol{\alpha}}(\theta_1-2\pi t,\dots,\theta_n-2\pi t),\qquad(a,t)\in\mathcal{P}_+^\uparrow.
\end{equation}
As $\widehat{\mathscr{H}}$ is quite a large space, one, usually, chooses to work on an appropriate subspace of it. Standard, well-known examples of subspaces are the totally symmetric Bose Fock space $\mathscr{H}^+=\bigoplus_{n=0}^\infty(L^2(\mathbb{R}^n,d^n\theta)\otimes\mathcal{K}^{\otimes n})^+$ or the totally antisymmetric Fermi Fock space $\mathscr{H}^-=\bigoplus_{n=0}^\infty(L^2(\mathbb{R}^n,d^n\theta)\otimes\mathcal{K}^{\otimes n})^-$, where the $+$ denotes total symmetrization and $-$ total antisymmetrization respectively. The analysis of \cite{DocL, L08}, however, suggests that for our purposes a more convenient choice is a subspace with a symmetrization which is obtained by means of a factorizing S-matrix, an essential input to our construction.

\section{Factorizing S-Matrices in Two Dimensions}\label{Sectionfactorizing}
To begin with, we first recall the basic formalism of the S-matrix theory, see e.g. \cite{iagolnitzer1978s, Araki99}.\par
Considering a particle spectrum as described in the previous section, Haag-Ruelle-Hepp scattering theory \cite{Araki99, hepp1965} reconciled with the charge structure being present \cite{doplicher1974local} can be applied and multiparticle collision states can be computed\footnote{Note that the assumption of isolated mass shells in each sector is not necessary for this task, for there are methods \cite{dybalski2005haag} by means of which scattering states can be calculated also for more general mass spectra.}. Assuming for the sake of clarity for the moment that the theory under consideration is purely bosonic, then, there are isometries $W_{\text{in}}$ and $W_{\text{out}}$, the generalized M\o ller operators, which map the Bose Fock space $\mathscr{H}^+$ onto certain subspaces $\mathscr{H}_{\text{in/out}}$ of the full Hilbert space $\mathscr{H}$ of the theory. The spaces $\mathscr{H}_{\text{in}}$ and $\mathscr{H}_{\text{out}}$ are generated by vectors which can be interpreted as in-, respectively, outgoing configurations of noninteracting particles. The S-matrix $\widehat{\textbf{S}}$ is now introduced as the operator $\widehat{\textbf{S}}:=W_{\text{in}}W_{\text{out}}\,^*:\mathscr{H}_{\text{out}}\rightarrow\mathscr{H}_{\text{in}}$ that maps outgoing into incoming scattering states. It is often, however, convenient to define the S-matrix as an operator on $\mathscr{H}^+$, that is, to consider
\begin{equation}\label{S}
\mathbf{S}:=W_{\text{out}}\,^*W_{\text{in}}:\mathscr{H}^+\rightarrow\mathscr{H}^+.
\end{equation}
Note that we have $\mathbf{S}=W_{\text{out}}\,^*W_{\text{in}}=W_{\text{in}}\,^*\,\widehat{\textbf{S}}\,W_{\text{in}}=W_{\text{out}}\,^*\,\widehat{\textbf{S}}\,W_{\text{out}}$.\par
If all states in the physical Hilbert space $\mathscr{H}$ can be interpreted as configurations of particles, i.e. $\mathscr{H}_{\text{in}}=\mathscr{H}_{\text{out}}=\mathscr{H}$, the theory has the property of asymptotic completeness. In this case the operator $\textbf{S}$ is unitary on $\mathscr{H}^+$ and $\widehat{\textbf{S}}$ is on $\mathscr{H}$ respectively.\par
The problem of a complete particle interpretation in relativistic quantum field theory is a matter of active research, see e.g. \cite{dybalski2011asymptotic, dybalski2013criterion, dybalski2014asymptotic}. Due to several conceptual and technical difficulties, the only non-trivial class of models known to be asymptotically complete are two-dimensional theories with factorizing S-matrices \cite{L08, tanimoto2014construction}. In fact, as is shown in \cite{LS} under a certain assumption, see also Section \ref{AC}, the models to be constructed here do also belong to this class if a certain conjecture, to be specified later on, holds true.\par
Considering \mbox{$\mathscr{H}^+=\bigoplus_{n=0}^\infty(L^2(\mathbb{R}^n,d^n\theta)\otimes\mathcal{K}^{\otimes n})^+$} as above, we find for the S-matrix elements
\begin{equation}
\langle\Psi,\textbf{S}\,\Phi\rangle=\sum_{n,m=0}^{\infty}\int d^n\boldsymbol{\theta}\int d^m\boldsymbol{\theta}'\Big(\Psi_n(\boldsymbol{\theta}),S_{n,m}(\boldsymbol{\theta};\boldsymbol{\theta}')\cdot\Phi_m(\boldsymbol{\theta}')\Big)_{\mathcal{K}^{\otimes n}},
\end{equation}
where $\boldsymbol{\theta}:=(\theta_1,\dots,\theta_n)$, and $\Psi_n\in\mathscr{H}_n^+$ and $\Phi_m\in\mathscr{H}_m^+$ are the wavefunctions of the respective asymptotic states. The kernels $S_{n,m}$ are tempered distributions on $\mathscr{S}(\mathbb{R}^{n+m})\otimes\mathcal{K}^{\otimes(n+m)}$. Their form is determined by constraints such as energy-momentum conservation or covariance. Further properties like crossing symmetry and hermitian analyticity, which are related to the analytic features of the collision operator \cite{edenSMatrix, iagolnitzer1993scattering}, restrict, in particular, the structure of the two-particle S-matrix elements $S_{2,2}$. In fact, these two-particle kernels are thoroughly studied objects in the literature, see e.g. \cite{dorey1997exact, edenSMatrix, iagolnitzer1993scattering}. On the other hand, due to their general complexity less is known about the higher S-matrix elements $S_{n,m}$, $n,m>2$. As already mentioned in the introduction to this chapter, imposing certain simplifications on the S-matrix necessarily leads to its triviality at least in spacetime dimensions greater than two. This follows from several no-go theorems. For instance, if only elastic scattering is required, i.e. $S_{n,m}=0$ for $n\neq m$, then it can be proven that in such a theory there is no interaction at all \cite{aks1965proof}. As shown by Coleman and Mandula in 1967 \cite{coleman1967all}, the S-matrix has also to be trivial if there are conserved quantities in scattering processes which transform as higher Lorentz tensors. These triviality theorems, however, do not apply in $1+1$ dimensions where ``simple'' but nontrivial S-matrices exist. Collision operators of this type have first been discovered in a nonrelativistic setting where the scattering of particles, interacting through a $\delta$-potential, was considered \cite{berezin1964schrodinger, mcguire1964study}. They further were found to appear in quantized versions of completely integrable classical field
theories such as the $O(N)$-invariant nonlinear $\sigma$-models \cite{Zam78, zamolodchikov1979factorized}. Since completely integrable models admit an infinite number of conservation laws, the dynamics is severely restricted. The particle number is, in particular, a conserved quantity in scattering processes, even though the dynamics is fully relativistic. The structure of the collision operator of such a theory is, clearly, limited to a large extent. In fact, the S-matrix has to be consistent with \cite{dorey1997exact}
\begin{itemize}
\item no particle production, i.e. $S_{n,m}=0$ for $n\neq m$,
\item factorization of the kernels $S_{n,n}$ into a product of $S_{2,2}$ kernels,
\item equality of the sets of incoming and outgoing rapidities, i.e.\\ $S_{n,n}(\theta_1,\dots,\theta_n; \theta_1',\dots,\theta_n')=0$ unless $\{\theta_1,\dots,\theta_n\}=\{\theta_1',\dots,\theta_n'\}$,
\item no interaction of particles with different masses.
\end{itemize}
Due to the second property, scattering operators complying with these special properties are called factorizing S-matrices. It is, then, natural to focus the attention on the two-particle collision operator. The energy-momentum conservation law for scattering process of two incoming particles of types $\alpha, \beta$ with rapidities $\theta_1,\theta_2$ and two outgoing particles $\gamma,\delta$ with rapidities $\theta_1',\theta_2'$ yields the appearance of the following delta distributions in
\begin{equation}
\begin{aligned}
\left(S_{2,2}\right)^{\alpha\beta}_{\gamma\delta}(\theta_1,\theta_2;\theta_1',\theta_2')&=\frac{1}{2}\left(\delta(\theta_1-\theta_1')\delta(\theta_2-\theta_2')+\delta(\theta_1-\theta_2')\delta(\theta_2-\theta_1')\right)S_2(\theta_1,\theta_2)^{\alpha\beta}_{\gamma\delta},\\
S_2(\theta_1,\theta_2)^{\alpha\beta}_{\gamma\delta} &=\left\{\begin{array}{cc}
S^{\beta\alpha}_{\gamma\delta}(\theta_1-\theta_2),& \quad \theta_1>\theta_2\\
S^{\alpha\beta}_{\delta\gamma}(\theta_2-\theta_1),& \quad \theta_2\geq \theta_1.
\end{array}\right.
\end{aligned}
\end{equation}
By Lorentz invariance, the scattering amplitude $S(|\theta|)$ depends only on the difference of rapidities $\theta=\theta_1-\theta_2$. Furthermore, standard S-matrix features, such as unitarity, covariance or the Yang-Baxter equation restrict the structure of $S$. We shall in the following be mainly concerned with this quantity. It is, therefore, convenient to refer to it as S-matrix for short.  With regard to the position of the indices of $S$, different conventions appear throughout the literature. Our choice is the same as in \cite{LS}.\par
The particular set $\mathcal{S}$ of S-matrices which are at the basis of the present construction of nontrivial models is specified in the subsequent definition.
\begin{definition}\label{S-matrixDefinition}
A continuous bounded function $S:\{\zeta\in\mathbb{C}:0\leq\text{Im}\,\zeta\leq\pi\}\rightarrow\mathcal{B}(\mathcal{K}\otimes\mathcal{K} )$, analytic in $\{\zeta\in\mathbb{C}:0<\text{Im}\,\zeta<\pi\}$, is referred to as S-matrix if for $\theta,\theta'\in\mathbb{R}$ and $\alpha,\beta,\gamma,\delta\in\{1,\dots,D\}$ the following properties are fulfilled:
\begin{enumerate}
\item[1.)]  Unitarity:\hspace{.7cm} $$S(\theta)^*=S(\theta)^{-1}.$$
\item[2.)] Hermitian analyticity:\hspace{.7cm} $$S(\theta)^{-1}=S(-\theta).$$
\item[3.)] Yang-Baxter equation:\\
$$(S(\theta)\otimes 1_1)(1_1\otimes S(\theta+\theta'))(S(\theta')\otimes 1_1)=(1_1\otimes S(\theta'))(S(\theta+\theta')\otimes 1_1)(1_1\otimes S(\theta)).$$
\item[4.)] Crossing symmetry:\hspace{.7cm} $$S^{\alpha\beta}_{\gamma\delta}(i\pi-\theta)=S^{\overline{\gamma}\alpha}_{\delta\overline{\beta}}(\theta).$$
\item[5.)] PCT invariance:\hspace{.7cm} $$S^{\alpha\beta}_{\gamma\delta}(\theta)=S^{\overline{\delta}\overline{\gamma}}_{\overline{\beta}\overline{\alpha}}(\theta).$$
\item[6.)] Translational invariance\footnote{Although not obvious at first sight, this is, indeed, the right condition for translational invariance of $S$. It becomes obvious by taking the action of the translations, cf. (\ref{actionPoincare}), and (\ref{reprpermugroup}) stated below into account.}:\hspace{.7cm} $$S^{\alpha\beta}_{\gamma\delta}(\theta)=0 \quad\text{if}\quad m_{[\alpha]}\neq m_{[\delta]}\quad \text{or}\quad m_{[\beta]}\neq m_{[\gamma]}.$$
\item[7.)] Gauge invariance:\hspace{.7cm} $$[S(\theta),V_1(g)\otimes V_1(g)]=0\qquad g\in G.$$
\end{enumerate}
\end{definition}
The set $\mathcal{S}$ has so far been explicitly determined only for a class of \textit{neutral} particles with the same mass $m>0$ \cite{DocL}. In this case the Yang-Baxter equation, translational invariance and gauge invariance are trivially fulfilled. Apart from this scalar case, other special solutions for the set of constraints \ref{S-matrixDefinition} were found for e.g. $O(N)$ $\sigma$-models \cite{AAR,Zam78}.\par
In conclusion, we may point out that, despite the severe limitations on the interaction in models governed by factorizing S-matrices, there do exist observable effects in scattering processes. In particular, the nonconstant nature of the phase shift of $S$ gives rise to the appearance of time delays.\par

\section{S-Symmetric Fock Space}\label{Fock}
As indicated earlier, for our purposes it is practical to choose the Hilbert space of the theory to be S-symmetric. Such a Fock space is constructed by introducing an $S$-dependent action $D_n$ of the permutation group $\mathfrak{S}_n$ of $n$ elements on $\mathscr{H}_1^{\otimes n}$ \cite{DocL}. That is, we put
\begin{equation}\label{reprpermugroup}
\left(D_n(\tau_k)\Psi_n\right)(\theta_1,\dots,\theta_n):=S(\theta_{k+1}-\theta_k)_{n,k}\Psi_n(\theta_1,\dots,\theta_{k+1},\theta_k,\dots,\theta_n),\quad\Psi_n\in\mathscr{H}_1^{\otimes n},
\end{equation}
with $\tau_k\in\mathfrak{S}_n$, $k=1,\dots,n-1$, being the transposition that exchanges $k$ and $k+1$. Accordingly, for arbitrary permutations $\pi\in\mathfrak{S}_n$ there exists a unitary tensor $S^\pi_n:\mathbb{R}^n\rightarrow\mathcal{U}(\mathcal{K}^{\otimes n})$ such that
\begin{equation}\label{tensor}
\left(D_n(\pi)\Psi_n\right)(\boldsymbol{\theta})=S^\pi_n(\boldsymbol{\theta})\Psi_n(\theta_{\pi(1)},\dots,\theta_{\pi(n)}),\qquad\Psi_n\in\mathscr{H}_1^{\otimes n}.
\end{equation}
Obviously, $S^{\tau_k}_n(\boldsymbol{\theta})=S(\theta_{k+1}-\theta_k)_{n,k}$ holds. Due to the properties of the S-matrix, particularly due to the features 1.) to 3.) in Definition \ref{S-matrixDefinition}, we have
\begin{lemma}[\cite{LM}]
$D_n$ is a unitary representation of the permutation group $\mathfrak{S}_n$ on $\mathscr{H}_1^{\otimes n}$.
\end{lemma}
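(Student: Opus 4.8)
The plan is to verify the two defining axioms of a group representation: that $D_n$ preserves the group composition law, $D_n(\pi)D_n(\pi') = D_n(\pi\pi')$, and that each $D_n(\pi)$ is unitary on $\mathscr{H}_1^{\otimes n}$. Since $\mathfrak{S}_n$ is generated by the adjacent transpositions $\tau_1,\dots,\tau_{n-1}$, subject to the Coxeter relations $\tau_k^2 = e$, $\tau_k\tau_l = \tau_l\tau_k$ for $|k-l|\geq 2$, and $\tau_k\tau_{k+1}\tau_k = \tau_{k+1}\tau_k\tau_{k+1}$, it suffices to check that the operators $D_n(\tau_k)$ defined in \eqref{reprpermugroup} satisfy exactly these relations. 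This reduces the whole statement to three algebraic identities, each of which I expect to correspond to one of the S-matrix properties 1.) to 3.) in Definition \ref{S-matrixDefinition}.

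First I would unwind the definition \eqref{reprpermugroup} to compute $D_n(\tau_k)^2$. Applying the transposition twice swaps the arguments $\theta_k,\theta_{k+1}$ back, and produces the factor $S(\theta_k-\theta_{k+1})_{n,k}\,S(\theta_{k+1}-\theta_k)_{n,k}$ acting on $\mathscr{H}_1^{\otimes n}$; by hermitian analyticity (property 2.), $S(-\theta)=S(\theta)^{-1}$, this product is the identity, giving $D_n(\tau_k)^2 = 1$. Next, for $|k-l|\geq 2$ the two tensor factors $S(\cdot)_{n,k}$ and $S(\cdot)_{n,l}$ act on disjoint pairs of tensor legs, so they commute and the argument permutations are independent; hence $D_n(\tau_k)D_n(\tau_l)=D_n(\tau_l)D_n(\tau_k)$ follows immediately. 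For the braid relation $D_n(\tau_k)D_n(\tau_{k+1})D_n(\tau_k)=D_n(\tau_{k+1})D_n(\tau_k)D_n(\tau_{k+1})$, I would carefully track how the rapidity arguments are permuted through the three factors on each side and collect the resulting product of $S$-tensors acting on the three consecutive legs $k,k+1,k+2$. After relabelling the rapidity differences so that both sides are evaluated on the same configuration of arguments, the required equality of $\mathcal{B}(\mathcal{K}^{\otimes 3})$-valued coefficients is precisely the Yang-Baxter equation (property 3.), lifted to $\mathscr{H}_1^{\otimes n}$ via the embedding \eqref{kurznotation}.

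Having established the Coxeter relations, I conclude that the assignment $\tau_k\mapsto D_n(\tau_k)$ extends to a well-defined homomorphism $D_n\colon\mathfrak{S}_n\to\mathcal{B}(\mathscr{H}_1^{\otimes n})$. For unitarity, it is enough to note that each generator $D_n(\tau_k)$ is unitary: the rapidity-reshuffling part is a unitary operator on $L^2(\mathbb{R}^n,d^n\theta)$ (it permutes arguments, preserving the Lebesgue measure), while the multiplication operator $S(\theta_{k+1}-\theta_k)_{n,k}$ is unitary on $\mathcal{K}^{\otimes n}$ for almost every $\boldsymbol{\theta}$ by unitarity of $S$ (property 1.), $S(\theta)^*=S(\theta)^{-1}$. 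A product of unitaries is unitary, so every $D_n(\pi)$ is unitary, completing the proof.

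The main obstacle will be the braid-relation computation: the bookkeeping of which rapidity arguments each $S$-factor depends on, after successive transpositions are applied, is error-prone, and one must verify that the arguments $\theta+\theta'$ appearing in the Yang-Baxter equation of Definition \ref{S-matrixDefinition} match the rapidity differences generated on the three affected legs. The commutation and involution relations are essentially immediate, but the braid identity is where the precise form of \eqref{reprpermugroup} and the exact statement of property 3.) must be reconciled term by term.
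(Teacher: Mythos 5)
Your proof is correct and takes essentially the route the paper intends: the thesis gives no proof of its own (the lemma is quoted from \cite{LM}), but it explicitly prefaces the statement by noting that it rests on properties 1.)--3.) of Definition \ref{S-matrixDefinition}, which is exactly your reduction to the Coxeter presentation --- involutivity of the generators from hermitian analyticity, commutation of distant generators from disjointness of the tensor legs, the braid relation from the Yang--Baxter equation (your relabelling $\theta=\theta_{k+1}-\theta_k$, $\theta'=\theta_{k+2}-\theta_{k+1}$ matches the Definition verbatim), and unitarity of each generator from property 1.). No gaps; the argument is complete as outlined.
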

Moreover, the mean over $D_n$,
\begin{equation}\label{Pn}
P_n:=\frac{1}{n!}\sum_{\pi\in\mathfrak{S}_n}D_n(\pi),
\end{equation}
is the orthogonal projection onto the $D_n$-invariant subspace of $\mathscr{H}_1^{\otimes n}$ \cite{DocL}. The S-symmetrized Fock space $\mathscr{H}$ over $\mathscr{H}_1$ is then defined by
\begin{equation}\label{s}
\mathscr{H}:=\mathbb{C}\oplus\bigoplus_{n=1}^\infty\mathscr{H}_n,\qquad\mathscr{H}_n:=P_n\mathscr{H}_1^{\otimes n},
\end{equation}
where $\mathbb{C}$ consists of multiples of the vacuum vector $\Omega$. Thus, vectors $\Psi_n\in\mathscr{H}_n$ satisfy
\begin{equation}\label{sum}
\Psi_n^{\boldsymbol{\alpha}}(\boldsymbol{\theta})=S^{\alpha_k\alpha_{k+1}}_{\beta_{k+1}\beta_k}(\theta_{k+1}-\theta_k)\Psi_n^{\alpha_1\dots\alpha_{k-1}\beta_{k+1}\beta_k\alpha_{k+2}\dots\alpha_n}(\theta_1,\dots,\theta_{k+1},\theta_k,\dots,\theta_n),
\end{equation}
where $\boldsymbol{\alpha}=(\alpha_1,\dots,\alpha_n)$ and $\boldsymbol{\theta}=(\theta_1,\dots,\theta_n)$. Note that the summation convention is used in (\ref{sum}) and in the following. Moreover, for elements $\Psi=(\Psi_0,\Psi_1,\dots)\in\mathscr{H}$ we have $\|\Psi\|^2=\sum_{n=0}^{\infty}\int d^n\boldsymbol{\theta}\overline{\Psi_n^{\boldsymbol{\alpha}}(\boldsymbol{\theta})}\Psi_n^{\boldsymbol{\alpha}}(\boldsymbol{\theta})<\infty$.\par
For further purposes we introduce the particle number operator $N$ on $\mathscr{H}$ by
\begin{equation}\label{opN}
(N\Psi)_n:=n\Psi_n,
\end{equation}
for vectors with $\sum_n n^2\|\Psi_n\|^2<\infty$, and refer to the dense subspace $\mathcal{D}\subset\mathscr{H}$, consisting of terminating sequences $(\Psi_0,\Psi_1,\dots,\Psi_n,0,\dots)$, as the subspace of finite particle number.\par
We denote the restrictions of the representations $\widehat{U}=\bigoplus_{n=0}^\infty U_1^{\otimes n}$ of $\mathcal{P}_+^\uparrow$ and $\widehat{V}=\bigoplus_{n=0}^\infty V_1^{\otimes n}$ of $G$ on the unsymmetrized Fock space $\widehat{\mathscr{H}}$, (\ref{fockUnsy}), to the subspace $\mathscr{H}$ by
\begin{equation}\label{rep}
U:=\widehat{U}\big|_{\mathscr{H}},\qquad V:=\widehat{V}\big|_{\mathscr{H}}.
\end{equation}
Clearly, $U$ is a strongly continuous positive energy representation of $\mathcal{P}_+^\uparrow$, with up to a phase unique invariant unit vector $\Omega$, legitimizing thereby the interpretation of the latter as the physical vacuum state. The PCT operator $J$ on $\mathscr{H}$ is further defined by
\begin{equation}
\left(J\Psi\right)_n^{\boldsymbol{\alpha}}(\boldsymbol{\theta}):=\overline{\Psi_n^{\overline{\alpha}_n\dots\overline{\alpha}_1}(\theta_n,\dots,\theta_1)}, \qquad \Psi\in\mathscr{H}.
\end{equation}
It is the antiunitary involution which extends $U$ to a representation of the proper Poincaré group $\mathcal{P}_+$ as stated in the following Lemma.
\begin{lemma}\cite{schutzenhofer2011multi}\label{extendedU} The operator $J$ is an antiunitary involution. Moreover, let $j(x)=-x$ be the space-time reflection defined in (\ref{reflection}). Then, the assignment
\begin{equation}
U(j):=J
\end{equation}
extends $U$ to a representation of the proper Poincaré group $\mathcal{P}_+$ on $\mathscr{H}$.
\end{lemma}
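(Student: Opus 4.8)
The plan is to verify the two separate assertions in turn: first that $J$ is an antiunitary involution, and then that adjoining $U(j):=J$ to the existing representation $U$ of $\mathcal{P}_+^\uparrow$ yields a representation of the full proper Poincaré group $\mathcal{P}_+$. Since $\mathcal{P}_+=\mathcal{P}_+^\uparrow\cup\mathcal{P}_+^\downarrow$ and in two dimensions $\mathcal{P}_+^\downarrow=j_{W_R}\,\mathcal{P}_+^\uparrow=j\,\mathcal{P}_+^\uparrow$ with $j(x)=-x$ the total spacetime reflection, it suffices to check the group relations that $j$ satisfies inside $\mathcal{P}_+$ and show that $J$ reproduces them.

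First I would establish that $J$ is antiunitary and involutive. Antilinearity is immediate from the complex conjugation in the definition $(J\Psi)_n^{\boldsymbol{\alpha}}(\boldsymbol{\theta})=\overline{\Psi_n^{\overline{\alpha}_n\dots\overline{\alpha}_1}(\theta_n,\dots,\theta_1)}$. For $J^2=\mathbb{1}$ one reverses the index and argument order twice, using that $\alpha\mapsto\overline{\alpha}$ is an involution; the two conjugations cancel. To see that $J$ preserves the norm one writes $\|J\Psi\|^2$ as a sum of integrals and changes variables $\theta_k\mapsto\theta_{n+1-k}$ together with the relabelling $\alpha_k\mapsto\overline{\alpha}_{n+1-k}$, which returns $\|\Psi\|^2$. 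The one genuinely nontrivial point is that $J$ must map $\mathscr{H}_n=P_n\mathscr{H}_1^{\otimes n}$ into itself, i.e. that $J$ respects the $S$-symmetrization \eqref{sum}. Here one inserts the definition of $J$ into the symmetry relation \eqref{sum} and must show the reversed-order wavefunction again satisfies an $S$-symmetry relation; the compatibility follows from hermitian analyticity $S(\theta)^{-1}=S(-\theta)$ (property 2) together with unitarity $S(\theta)^*=S(\theta)^{-1}$ (property 1) and PCT invariance $S^{\alpha\beta}_{\gamma\delta}(\theta)=S^{\overline{\delta}\overline{\gamma}}_{\overline{\beta}\overline{\alpha}}(\theta)$ (property 5). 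This verification, matching the conjugated and index-reversed $S$-factor against the original one, is what the list of properties in Definition \ref{S-matrixDefinition} is really designed to make work, and I expect it to be the technical heart of the argument.

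Having $J$ as an antiunitary involution on $\mathscr{H}$, the representation property reduces to checking covariance relations of the form $J\,U(a,t)\,J = U(j(a,t)j^{-1})$, where $(a,t)\mapsto j(a,t)j^{-1}$ is the automorphism of $\mathcal{P}_+^\uparrow$ induced by conjugating with the reflection $j$. Geometrically $j$ inverts translations, $a\mapsto -a$, and leaves the boost rapidity fixed, $t\mapsto t$ (the boost $\Lambda(t)$ commutes with $j=-\mathbb{1}$). I would verify this directly on wavefunctions using the explicit action \eqref{actionPoincare}: applying $J$, then $U(a,t)$, then $J$ to $\Psi$, the two complex conjugations acting on the phase $e^{i\sum_k p_{m_{[\alpha_k]}}(\theta_k)\cdot a}$ flip its sign, reproducing $e^{-i\sum_k p\cdot a}$, which is exactly the phase of $U(-a,t)$; the boost part $\theta_k\mapsto\theta_k-2\pi t$ is untouched because conjugation is real on the rapidity arguments and the index reversal only permutes the summands. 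Together with $J^2=\mathbb{1}$, which encodes $j^2=\mathrm{id}$, these intertwining relations show that the assignment $g\mapsto U(g)$ for $g\in\mathcal{P}_+^\uparrow$ and $j\mapsto J$ respects all products in $\mathcal{P}_+$, i.e. it is a (anti)representation of $\mathcal{P}_+$ extending $U$.

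The main obstacle is the well-definedness step in the middle paragraph: ensuring $J\mathscr{H}_n\subseteq\mathscr{H}_n$, since every other step is a routine change of variables or a sign-tracking computation on the explicit phases. All the remaining work is bookkeeping with the index-reversal involution $\alpha\mapsto\overline{\alpha}$ and the sign flips produced by the two antilinear conjugations.
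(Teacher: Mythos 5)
Your plan is correct and matches the argument the paper relies on: the paper gives no proof of this lemma itself (it defers to \cite{schutzenhofer2011multi}), but the well-definedness step you single out as the technical heart — that $J$ respects the $S$-symmetrization — is precisely the computation embedded in the paper's proof of Lemma \ref{eigenschaften}\,a), where unitarity, hermitian analyticity and PCT invariance of $S$ give $JD_n(\tau_j)=D_n(\tau_{n-j})J$, hence $JP_n=P_nJ$ and $J\mathscr{H}_n\subseteq\mathscr{H}_n$. The covariance check $JU(a,t)J=U(-a,t)$ goes through as you describe — though note only the \emph{outer} conjugation flips the sign of the phase, and one tacitly uses the standard assumption $m_{[\overline{\alpha}]}=m_{[\alpha]}$ for conjugate charges so the conjugated phase matches that of $U(-a,t)$ — and together with $J^2=\mathbb{1}$ this yields the extension to $\mathcal{P}_+$.
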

The PCT operator $J$ further commutes with the representation $V$ as shown in \cite[Lemma 2.3.]{LS}. For our purposes it is not relevant to extend $U$ to the full Poincaré group $\mathcal{P}$. This is related to the fact that we shall not construct models with S-matrices being invariant under the symmetries of parity and time reflection separately.
\begin{lemma}\label{eigenschaften}
Let $n\in\mathbb{N}$, then PCT invariance and crossing symmetry of the S-matrix yield
\begin{itemize}
\item[a)]
\begin{equation}\label{PCT}
\left[\prod_{k=a}^{b}S(\theta_k)_{n,k}\right]^{\alpha_1\dots\alpha_{n}}_{\beta_1\dots\beta_{n}}=\left[\prod_{k=b}^{a}S(\theta_k)_{n,n-k}\right]^{\overline{\beta}_{n}\dots\overline{\beta}_1}_{\overline{\alpha}_{n}\dots\overline{\alpha}_{1}},\qquad 1\leq a\leq b\leq n-1,
\end{equation}
\end{itemize}
and
\begin{itemize}
\item[b)]
\begin{equation}\label{crossing}
\left[\prod_{k=1}^{n}S(i\pi-\theta_k)_{n+1,k}\right]^{\alpha_1\dots\alpha_{n+1}}_{\beta_1\dots\beta_{n+1}}=\left[\prod_{k=1}^{n}S(\theta_k)_{n+1,n+1-k}\right]^{\overline{\beta}_{n}\dots\overline{\beta}_1\alpha_1}_{\beta_{n+1}\overline{\alpha}_{n+1}\dots\overline{\alpha}_{2}},
\end{equation}
\end{itemize}
respectively.
\end{lemma}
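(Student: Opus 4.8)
The two parts are logically independent: a) is the $n$-fold lift of PCT invariance, and b) is the $n$-fold lift of crossing symmetry, so I would prove them separately — a) by recasting PCT as an operator identity and lifting it, and b) by induction on $n$.

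\textbf{Part a).} The plan is to turn the pointwise relation $S^{\alpha\beta}_{\gamma\delta}(\theta)=S^{\overline{\delta}\overline{\gamma}}_{\overline{\beta}\overline{\alpha}}(\theta)$ into an operator identity. Let $\Gamma$ be the real orthogonal involution on $\mathcal{K}\otimes\mathcal{K}$ given by $\Gamma(e^\alpha\otimes e^\beta)=e^{\overline{\beta}}\otimes e^{\overline{\alpha}}$ (simultaneous flip of the two factors and charge conjugation), and write $S^T$ for the transpose, $(S^T)^{\alpha\beta}_{\gamma\delta}:=S^{\gamma\delta}_{\alpha\beta}$. Computing matrix elements one checks that PCT invariance is precisely $\Gamma S(\theta)\Gamma=S(\theta)^T$. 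I would then lift this to $\mathcal{K}^{\otimes n}$ using the total reversal-and-conjugation involution $\Gamma_n(e^{\alpha_1}\otimes\cdots\otimes e^{\alpha_n})=e^{\overline{\alpha}_n}\otimes\cdots\otimes e^{\overline{\alpha}_1}$: since reversal carries the pair of slots $(k,k+1)$ to $(n-k,n-k+1)$, one obtains $\Gamma_n S(\theta)_{n,k}\Gamma_n=(S(\theta)^T)_{n,n-k}$, equivalently $\Gamma_n (S(\theta)^T)_{n,n-k}\Gamma_n=S(\theta)_{n,k}$. Because $\Gamma_n$ is an involution and transposition reverses the order of a product, $\Gamma_n\big(\prod_{k=b}^{a}S(\theta_k)_{n,n-k}\big)^T\Gamma_n=\prod_{k=a}^{b}\Gamma_n(S(\theta_k)^T)_{n,n-k}\Gamma_n=\prod_{k=a}^{b}S(\theta_k)_{n,k}$. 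Finally, reading off matrix elements via the identity $X^{\overline{\beta}_n\dots\overline{\beta}_1}_{\overline{\alpha}_n\dots\overline{\alpha}_1}=(\Gamma_n X^T\Gamma_n)^{\alpha_1\dots\alpha_n}_{\beta_1\dots\beta_n}$ turns this operator identity into exactly the claimed index identity. (A direct induction on $b-a$, applying the single PCT relation to the outermost factor, is an equivalent route.)

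\textbf{Part b).} I would induct on $n$, the base case $n=1$ being verbatim crossing, $S(i\pi-\theta_1)^{\alpha_1\alpha_2}_{\beta_1\beta_2}=S(\theta_1)^{\overline{\beta}_1\alpha_1}_{\beta_2\overline{\alpha}_2}$. For the step, set $L_n:=\prod_{k=1}^{n}S(i\pi-\theta_k)_{n+1,k}$ and split off its last factor, $L_n=(L_{n-1}\otimes 1_1)\,S(i\pi-\theta_n)_{n+1,n}$, where $L_{n-1}=\prod_{k=1}^{n-1}S(i\pi-\theta_k)_{n,k}$ acts on the first $n$ slots and the last factor acts on slots $(n,n+1)$; the matrix element then contracts a single index $\gamma_n$. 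Applying crossing to that last factor replaces $S(i\pi-\theta_n)^{\gamma_n\alpha_{n+1}}_{\beta_n\beta_{n+1}}$ by $S(\theta_n)^{\overline{\beta}_n\gamma_n}_{\beta_{n+1}\overline{\alpha}_{n+1}}$, and the induction hypothesis (with the last lower index set to $\gamma_n$) rewrites $L_{n-1}$ through $R_{n-1}:=\prod_{k=1}^{n-1}S(\theta_k)_{n,n-k}$. On the target side I would split $R_n:=\prod_{k=1}^{n}S(\theta_k)_{n+1,n+1-k}=(1_1\otimes R_{n-1})\,S(\theta_n)_{n+1,1}$, whose last factor sits on slots $(1,2)$; evaluating its matrix element at the required indices $R_n{}^{\overline{\beta}_n\dots\overline{\beta}_1\alpha_1}_{\beta_{n+1}\overline{\alpha}_{n+1}\dots\overline{\alpha}_2}$ contracts the same intermediate index and reproduces exactly the two factors obtained on the $L_n$ side, closing the induction.

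\textbf{Main obstacle.} I expect the genuine difficulty to lie entirely in the index bookkeeping of b): one must verify that peeling the last factor of $L_n$ (on slots $(n,n+1)$) corresponds to peeling the last factor of $R_n$ (on slots $(1,2)$), and that the ``wrap-around'' pattern — the surviving $\alpha_1$ in the final upper slot, $\beta_{n+1}$ in the first lower slot, together with the interlaced conjugations — is reproduced step by step, with the single summation index lining up. In a) the only subtlety is that reversing the two factors of a single $S$ produces a transpose, which must be tracked consistently; once $\Gamma S\Gamma=S^T$ is established, the remainder is formal.
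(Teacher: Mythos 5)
Your proposal is correct, and the two parts relate to the paper differently. Part b) is essentially the paper's own argument: the paper likewise inducts on $n$, peels the rightmost factor $S(i\pi-\theta_{n+1})_{n+2,n+1}$ off the left-hand product, applies crossing to that factor and the induction hypothesis to the remainder, and reassembles using that the surviving factors of the target product sit on slots shifted by one, with $S(\theta_{n+1})_{n+2,1}$ as the rightmost factor on slots $(1,2)$; your decomposition $L_n=(L_{n-1}\otimes 1_1)\,S(i\pi-\theta_n)_{n+1,n}$, $R_n=(1_1\otimes R_{n-1})\,S(\theta_n)_{n+1,1}$ is exactly the paper's index manipulation in operator form, and the single contracted index lines up as you claim. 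In part a), however, you take a genuinely different route. The paper first proves the one-factor identity $\bigl(S(\theta)_{n,j}\bigr)^{\alpha_1\dots\alpha_n}_{\beta_1\dots\beta_n}=\bigl(S(\theta)_{n,n-j}\bigr)^{\overline{\beta}_n\dots\overline{\beta}_1}_{\overline{\alpha}_n\dots\overline{\alpha}_1}$ by showing that the PCT operator $J$ intertwines $D_n(\tau_j)$ with $D_n(\tau_{n-j})$ — a computation that transiently invokes unitarity and hermitian analyticity to remove the complex conjugates — and then runs an induction in $b$ at fixed $a$, commuting factors on disjoint slots to reassemble the product. Your version instead packages PCT invariance as the purely algebraic operator identity $\Gamma S(\theta)\Gamma=S(\theta)^T$, lifts it through the reversal-and-conjugation involution $\Gamma_n$ to get $\Gamma_n S(\theta)_{n,k}\Gamma_n=(S(\theta)^T)_{n,n-k}$, and obtains the full product identity in one stroke because transposition reverses products; this needs no induction, no appeal to $J$ or $D_n$, and uses none of the other S-matrix axioms, making it both more economical and structurally more transparent (it isolates the mechanism: slot reversal plus transpose). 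What the paper's route buys in exchange is that it stays within the representation-theoretic language ($J$ acting on $\mathscr{H}$, the $D_n$) that is reused elsewhere, e.g. in the Bisognano--Wichmann proof. Both arguments are complete, and you correctly flag the paper's induction as the equivalent alternative.
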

\begin{proof}
Considering first part $a)$, we proceed by induction in $b$ with $a$ fixed. Then, for the base case $b=a$ the identity
\begin{equation*}
\left[S(\theta)_{n,a}\right]^{\alpha_1\dots\alpha_{n}}_{\beta_1\dots\beta{n}}=\left[S(\theta)_{n,n-a}\right]_{\overline{\alpha}_n\dots\overline{\alpha}_{1}}^{\overline{\beta}_n\dots\overline{\beta}_{1}}
\end{equation*}
holds, since
\begin{eqnarray*}
\hspace{-.3cm}&&\hspace{-.3cm}\left(JD_n(\tau_j)\Psi_n\right)^{\boldsymbol{\alpha}}(\boldsymbol{\theta})\\
&&= \overline{S_{\overline{\beta}_{n-j}\overline{\beta}_{n-j+1}}^{\overline{\alpha}_{n-j+1}\overline{\alpha}_{n-j}}(\theta_{n-j}-\theta_{n-j+1})\Psi_n^{\overline{\alpha}_n\dots\overline{\beta}_{n-j}{\overline{\beta}_{n-j+1}\dots\overline{\alpha}_{1}}}(\theta_n,\dots,\theta_{n-j},\theta_{n-j+1},\dots,\theta_1)}\\
&&= S_{\beta_{n-j+1}\beta_{n-j}}^{\alpha_{n-j}\alpha_{n-j+1}}(\theta_{n-j+1}-\theta_{n-j})\overline{\Psi_n^{\overline{\alpha}_n\dots\overline{\beta}_{n-j}{\overline{\beta}_{n-j+1}\dots\overline{\alpha}_{1}}}(\theta_n,\dots,\theta_{n-j},\theta_{n-j+1},\dots,\theta_1)}\\
&&=\left(D_n(\tau_{n-j})J\Psi_n\right)^{\boldsymbol{\alpha}}(\boldsymbol{\theta}),
\end{eqnarray*} with $\Psi_n\in\mathscr{H}_1^{\otimes n}$, yields
\begin{equation}\label{PCT1}
\Big(S(\theta)_{n,j}\Big)^{\alpha_1\dots\alpha_n}_{\beta_1\dots\beta_n}=\Big(S(\theta)_{n,n-j}\Big)^{\overline{\beta}_n\dots\overline{\beta}_1}_{\overline{\alpha}_n\dots\overline{\alpha}_1},\qquad j=1,\dots,n-1.
\end{equation}
Turning now to the inductive step $b\rightarrow b+1$, we compute
\begin{eqnarray*}
\left[\prod_{k=a}^{b+1}S(\theta_k)_{n,k}\right]^{\alpha_1\dots\alpha_{n}}_{\beta_1\dots\beta{n}}=\left[\prod_{k=a}^{b}S(\theta_k)_{n,k}\right]^{\alpha_1\dots\alpha_{n}}_{\xi_1\dots\xi_{n}}\left[S(\theta_{b+1})_{n,b+1}\right]^{\xi_1\dots\xi_{n}}_{\beta_1\dots\beta{n}},
\end{eqnarray*}
which by the inductive hypothesis and by (\ref{PCT1}) becomes
\begin{eqnarray*}
\left[\prod_{k=a}^{b+1}S(\theta_k)_{n,k}\right]^{\alpha_1\dots\alpha_{n}}_{\beta_1\dots\beta{n}}&=&\left[\prod_{k=b}^{a}S(\theta_k)_{n,n-k}\right]^{\overline{\xi}_{n}\dots\overline{\xi}_1}_{\overline{\alpha}_{n}\dots\overline{\alpha}_{1}}\left[S(\theta_{b+1})_{n,n-b-1}\right]_{\overline{\xi}_{n}\dots\overline{\xi}_1}^{\overline{\beta}_{n}\dots\overline{\beta}_1}\\
&=&\left[S(\theta_{b+1})_{n,n-b-1}\right]_{\overline{\xi}_{n}\dots\overline{\xi}_1}^{\overline{\beta}_{n}\dots\overline{\beta}_1}\left[\prod_{k=b}^{a}S(\theta_k)_{n,n-k}\right]^{\overline{\xi}_{n}\dots\overline{\xi}_1}_{\overline{\alpha}_{n}\dots\overline{\alpha}_{1}}\\
&=&\left[\prod_{k=b+1}^{a}S(\theta_k)_{n,n-k}\right]^{\overline{\beta}_{n}\dots\overline{\beta}_1}_{\overline{\alpha}_{n}\dots\overline{\alpha}_{1}},
\end{eqnarray*}
proving statement $a)$. In order to show part $b)$, we also proceed by induction, namely in $n$. For $n=1$ the claim follows directly from the crossing symmetry of the S-matrix. Letting $n\rightarrow n+1$, gives
\begin{eqnarray*}
\hspace{-.5cm}&&\hspace{-.5cm}\left[\prod_{k=1}^{n+1}S(i\pi-\theta_k)_{n+2,k}\right]^{\alpha_1\dots\alpha_{n+2}}_{\beta_1\dots\beta_{n+2}}\\
&&=\left[\prod_{k=1}^{n}S(i\pi-\theta_k)_{n+2,k}\right]^{\alpha_1\dots\alpha_{n+2}}_{\xi_1\dots\xi_{n+2}}\left[S(i\pi-\theta_{n+1})_{n+2,n+1}\right]^{\xi_1\dots\xi_{n+2}}_{\beta_1\dots\beta_{n+2}}\\
&&=\left[\prod_{k=1}^{n}S(i\pi-\theta_k)_{n+1,k}\right]^{\alpha_1\dots\alpha_{n+1}}_{\xi_1\dots\xi_{n+1}}\delta^{\alpha_{n+2}}_{\xi_{n+2}}\prod_{j=1}^{n}\delta^{\xi_j}_{\beta_j}\left[S(i\pi-\theta_{n+1})_{n+2,n+1}\right]^{\xi_{n+1}\xi_{n+2}}_{\beta_{n+1}\beta_{n+2}}\\
&&=\left[\prod_{k=1}^{n}S(\theta_k)_{n+1,n+1-k}\right]^{\overline{\beta}_n\dots\overline{\beta}_1\alpha_1}_{\xi_{n+1}\overline{\alpha}_{n+1}\dots\overline{\alpha}_{2}}\left[S(\theta_{n+1})\right]^{\overline{\beta}_{n+1}\xi_{n+1}}_{\beta_{n+2}\overline{\alpha}_{n+2}}\\
&&=\left[\prod_{k=1}^{n}S(\theta_k)_{n+2,n+2-k}\right]^{\overline{\beta}_{n+1}\dots\overline{\beta}_1\alpha_1}_{\xi_{1}\dots\xi_{n+2}}\left[S(\theta_{n+1})_{n+2,1}\right]^{\xi_{1}\dots\xi_{n+2}}_{\beta_{n+2}\overline{\alpha}_{n+2}\dots\overline{\alpha}_{2}}\\
&&=\left[\prod_{k=1}^{n+1}S(\theta_k)_{n+2,n+2-k}\right]^{\overline{\beta}_{n+1}\dots\overline{\beta}_1\alpha_1}_{\beta_{n+2}\overline{\alpha}_{n+2}\dots\overline{\alpha}_{2}}.
\end{eqnarray*}
\end{proof}
On the dense subspace $\mathcal{D}\subset\mathscr{H}$ of vectors with finite particle number we define creation and annihilation operators $z^\dagger(\theta)$, $z(\theta)$ in the sense of operator-valued distributions. To this end, recall their unsymmetrized counterparts $\widehat{a}^\dagger(\varphi),\widehat{a}(\varphi)$, $\varphi\in\mathscr{H}_1$, on $\widehat{\mathscr{H}}$, subject to
\begin{eqnarray}\label{cr}
\widehat{a}^\dagger(\varphi)\psi_1\otimes\cdots\otimes\psi_n&:=&\sqrt{n+1}\,\varphi\otimes\psi_1\otimes\cdots\otimes\psi_n,\qquad \psi_1,\dots,\psi_n\in\mathscr{H}_1,\\
\widehat{a}(\varphi)\psi_1\otimes\cdots\otimes\psi_n&:=&\sqrt{n}\,\langle\varphi,\psi_1\rangle\psi_2\otimes\cdots\otimes\psi_n,\qquad \widehat{a}(\varphi)\Omega:=0.
\end{eqnarray}
Linear and continuous extension yields densely defined operators which on the subspace of finite particle number fulfill $\widehat{a}(\varphi)^*\supset \widehat{a}^\dagger(\varphi)$. Their projections onto the S-symmetrized $\mathscr{H}$ give rise to
\begin{equation}\label{erzVern}
z^\dagger(\varphi):=P\widehat{a}^\dagger(\varphi)P,\qquad z(\varphi):=P\widehat{a}(\varphi)P,\qquad\varphi\in\mathscr{H}_1,
\end{equation}
by means of the orthogonal projection $P:\widehat{\mathscr{H}}\rightarrow\mathscr{H}$. We relate to these operators the distributions $z^{\dagger}_\alpha(\theta)$ and $z_\alpha(\theta)$ by
\begin{equation}
z^\dagger(\varphi)=\int d\theta\, z^\dagger_\alpha(\theta)\varphi^\alpha(\theta),\qquad z(\varphi)=\int d\theta \,z_\alpha(\theta)\overline{\varphi^\alpha(\theta)}.
\end{equation}
\begin{proposition}[\cite{LS}] Let $\varphi\in\mathscr{H}_1$ and $\Psi\in\mathcal{D}$.
\begin{itemize}
\item[i)] The operators $z^\dagger(\varphi)$ and $z(\varphi)$ act explicitly according to
\begin{subequations}
\begin{eqnarray}
\hspace{-.4cm}&&\hspace{-.4cm}\left(z(\varphi)\Psi\right)_n^{\boldsymbol{\alpha}}(\boldsymbol{\theta})=\sqrt{n+1}\int d\theta' \overline{\varphi^\beta(\theta')}\Psi^{\beta\boldsymbol{\alpha}}_{n+1}(\theta',\boldsymbol{\theta}),\\
\hspace{-.6cm}&&\hspace{-.6cm}\left(z^\dagger(\varphi)\Psi\right)_n(\boldsymbol{\theta})=\frac{1}{\sqrt{n}}\sum_{k=1}^{n}S_n^{\sigma_k}(\boldsymbol{\theta})\left(\varphi(\theta_k)\otimes\Psi_{n-1}(\theta_1,\dots,\hat{\theta}_k,\dots,\theta_n)\right),\\
\hspace{-.4cm}&&\hspace{-.4cm}\quad z(\varphi)\Omega=0,\qquad\left(z^\dagger(\varphi)\Psi\right)_0=0,
\end{eqnarray}
\end{subequations}
where $\sigma_k:=\tau_{k-1}\tau_{k-2}\cdots\tau_1\in\mathfrak{S}_n$ with $\sigma_1:=$ id and $\hat{\theta}_k$ denotes omission of the variable $\theta_k$.
\item[ii)]
\begin{equation}
z(\varphi)^*\supset z^\dagger(\varphi)
\end{equation}
\item[iii)]\begin{equation}\label{numberBounds}
\|z(\varphi)\Psi\|\leq \|\varphi\|\,\|N^{1/2}\Psi\|,\qquad\|z^\dagger(\varphi)\Psi\|\leq \|\varphi\|\,\|(N+1)^{1/2}\Psi\|,
\end{equation}
with $N$ the particle number operator (\ref{opN}).
\item[iv)] The distributional kernels $z^{\dagger}_\alpha(\theta)$ and $z_\alpha(\theta)$ satisfy
\begin{subequations}\label{exchange}
\begin{eqnarray}
z_\alpha(\theta)z_\beta(\theta')&=& S_{\delta\gamma}^{\beta\alpha}(\theta-\theta')z_\gamma(\theta')z_\delta(\theta),\\
z^\dagger_\alpha(\theta)z^\dagger_\beta(\theta')&=& S^{\gamma\delta}_{\alpha\beta}(\theta-\theta')z^\dagger_\gamma(\theta')z^\dagger_\delta(\theta),\\
z_\alpha(\theta)z^\dagger_\beta(\theta')&=& S^{\alpha\gamma}_{\beta\delta}(\theta'-\theta)z_\gamma^\dagger(\theta')z_\delta(\theta)+\delta^{\alpha\beta}\delta(\theta-\theta')\cdot 1.
\end{eqnarray}
\end{subequations}
\end{itemize}
\end{proposition}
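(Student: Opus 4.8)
The plan is to reduce every assertion to the defining relations $z(\varphi)=P\widehat{a}(\varphi)P$ and $z^\dagger(\varphi)=P\widehat{a}^\dagger(\varphi)P$, exploiting the standard structure of the unsymmetrized Fock space $\widehat{\mathscr{H}}$ together with the $D_n$-invariance built into $\mathscr{H}$. For i) I would first observe that $\widehat{a}(\varphi)$ maps the $S$-symmetric subspace into itself: if $\Psi_{n+1}$ is $D_{n+1}$-invariant, then the contraction $\int d\theta'\,\overline{\varphi^\beta(\theta')}\,\Psi_{n+1}^{\beta\boldsymbol{\alpha}}(\theta',\boldsymbol{\theta})$ retains the $D_n$-invariance inherited from the last $n$ tensor factors, so the outer projection in $P\widehat{a}(\varphi)P$ is redundant on $\mathscr{H}$ and the annihilation formula in i) is read off directly from the definition of $\widehat{a}(\varphi)$. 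For $z^\dagger(\varphi)$ the outer projection is essential: here I would write $z^\dagger(\varphi)\Psi_{n-1}=\sqrt{n}\,P_n(\varphi\otimes\Psi_{n-1})$ and collapse the mean $P_n=\frac{1}{n!}\sum_{\pi}D_n(\pi)$ by means of the coset decomposition $\mathfrak{S}_n=\bigsqcup_{k=1}^n\sigma_k H$, where $H\cong\mathfrak{S}_{n-1}$ is the stabilizer of the first slot. Because $\Psi_{n-1}$ is $D_{n-1}$-invariant one checks $D_n(h)(\varphi\otimes\Psi_{n-1})=\varphi\otimes\Psi_{n-1}$ for $h\in H$, so each of the $n$ cosets contributes the single term $D_n(\sigma_k)(\varphi\otimes\Psi_{n-1})$ and the factor $1/n!$ turns into $1/n$, yielding the creation formula in i) after inserting $D_n(\sigma_k)=S_n^{\sigma_k}$.

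Parts ii) and iii) are then routine. For ii) I would compute $\langle z(\varphi)\Psi,\Phi\rangle=\langle \widehat{a}(\varphi)P\Psi,P\Phi\rangle$ for $\Psi,\Phi\in\mathcal{D}$, use $\widehat{a}(\varphi)^*\supset\widehat{a}^\dagger(\varphi)$ on finite-particle vectors together with the self-adjointness of $P$ and $P\Psi=\Psi$, and read off $\langle z(\varphi)\Psi,\Phi\rangle=\langle\Psi,z^\dagger(\varphi)\Phi\rangle$, i.e.\ $z^\dagger(\varphi)\subset z(\varphi)^*$. For iii), rather than estimating the $n$-term sum in the creation formula directly, I would transport the familiar unsymmetrized bounds $\|\widehat{a}(\varphi)\Phi\|\leq\|\varphi\|\,\|N^{1/2}\Phi\|$ and $\|\widehat{a}^\dagger(\varphi)\Phi\|=\|\varphi\|\,\|(N+1)^{1/2}\Phi\|$ through the contraction $\|P\|\leq 1$, using that $P$ commutes with $N$: since $\|z^{(\dagger)}(\varphi)\Psi\|\leq\|\widehat{a}^{(\dagger)}(\varphi)P\Psi\|$ and $\|N^{1/2}P\Psi\|\leq\|N^{1/2}\Psi\|$, both estimates follow at once.

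The substance of the proposition is iv). I would extract from i) the distributional kernels, in particular $(z_\beta(\theta')\Psi)_n^{\boldsymbol{\alpha}}(\boldsymbol{\theta})=\sqrt{n+1}\,\Psi_{n+1}^{\beta\boldsymbol{\alpha}}(\theta',\boldsymbol{\theta})$ and the analogous creation kernel. Iterating the annihilation kernel gives $(z_\alpha(\theta)z_\beta(\theta')\Psi)_n^{\boldsymbol{\alpha}}(\boldsymbol{\theta})=\sqrt{(n+1)(n+2)}\,\Psi_{n+2}^{\beta\alpha\boldsymbol{\alpha}}(\theta',\theta,\boldsymbol{\theta})$, and the desired first relation of~(\ref{exchange}) becomes \emph{precisely} the $k=1$ instance of the $S$-symmetry~(\ref{sum}) of $\mathscr{H}_{n+2}$; thus the annihilator relation is nothing but the defining symmetry of the Fock space. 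The second relation of~(\ref{exchange}) I would obtain by taking the adjoint of the just-proved annihilator relation on $\mathcal{D}$ and rewriting the complex-conjugated coefficient $\overline{S^{\beta\alpha}_{\delta\gamma}(\theta-\theta')}$ by means of unitarity and hermitian analyticity, $S(\theta)^*=S(\theta)^{-1}=S(-\theta)$, which converts it into the coefficient appearing in the statement.

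The mixed relation, the third line of~(\ref{exchange}), is the main obstacle and the only genuinely computational step. I would apply $z_\alpha(\theta)z^\dagger_\beta(\theta')$ to a vector $\Psi_n$ using the explicit formulas: the creation step produces the $(n+1)$-term symmetrized sum $\frac{1}{\sqrt{n+1}}\sum_{l=1}^{n+1}S_{n+1}^{\sigma_l}(\boldsymbol{\theta})(\psi\otimes\Psi_n)$, after which the annihilation step contracts the first tensor slot. The term $l=1$ carries $S_{n+1}^{\sigma_1}=1$, and its contraction yields exactly the contact contribution $\delta^{\alpha\beta}\delta(\theta-\theta')\cdot 1$. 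The decisive point is to show that the remaining $n$ terms reorganize into $z^\dagger_\gamma(\theta')z_\delta(\theta)$ dressed with a single factor $S^{\alpha\gamma}_{\beta\delta}(\theta'-\theta)$; I expect this reassembly --- using the recursion $\sigma_l=\tau_{l-1}\cdots\tau_1$, the homomorphism property of $D_n$, and the Yang--Baxter relation to move the annihilated particle past the created one through exactly one transposition --- to be where all the index bookkeeping and the correct appearance of a single $S$-matrix factor must be verified with care.
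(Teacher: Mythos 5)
The paper states this proposition without proof, importing it from \cite{LS}, and your blind argument reproduces exactly the standard proof from that reference: part i) read off the definitions (\ref{erzVern}) using that $\widehat{a}(\varphi)$ preserves $S$-symmetry and the left-coset decomposition $\mathfrak{S}_n=\bigsqcup_{k=1}^n\sigma_k H$ over the stabilizer of the first slot (valid since $\sigma_k(1)=k$), parts ii) and iii) from $P^*=P$, $\|P\|\leq 1$ and $[P,N]=0$, and part iv) from the $S$-symmetry (\ref{sum}) at $k=1$, the adjoint relation combined with unitarity and hermitian analyticity, and the direct contraction computation for the mixed relation. This is all correct, including your honest identification of the mixed relation as the one step requiring careful bookkeeping; there the mechanism is as you anticipate, with the $l=1$ coset term producing the contact term $\delta^{\alpha\beta}\delta(\theta-\theta')\cdot 1$ and the delta distributions in the remaining $n$ terms allowing the $S$-factors to be rewritten at the created rapidity so that a single factor $S^{\alpha\gamma}_{\beta\delta}(\theta'-\theta)$ emerges (no fresh invocation of Yang--Baxter is needed beyond its role, already established in the cited Lemma of \cite{LM}, in making $D_n$ a representation).
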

Note that the operators $z^\dagger(\varphi)$ and $z(\varphi)$ are in general unbounded. Boundedness, however, follows in case of a constant S-matrix $S^{\alpha\beta}_{\gamma\eta}(\theta)=-\delta^\alpha_\eta\delta^\beta_\gamma$. In fact, the exchange relations stated under item $iv)$ in the previous proposition coincide for such an S-matrix with the canonical anticommutation relations (CAR). Similarly, one recovers the canonical commutation relations (CCR) if $S^{\alpha\beta}_{\gamma\eta}(\theta)=+\delta^\alpha_\eta\delta^\beta_\gamma$. For generic S-matrices the operators $z^\dagger(\varphi)$, $z(\varphi)$ form a representation of the so-called Zamolodchikov-Faddeev algebra \cite{zamolodchikov1979factorized}, commonly used in the context of integrable quantum field theories, see e.g. \cite{Smir92}.

\section{Construction of Models with Factorizing S-Matrices}\label{Construction}
\subsection{Wedge-Local Fields}\label{SecWedgeLocal}
The preparations made in the previous sections allow for the explicit construction of wedge local fields as shown in \cite{LS}. These auxiliary operators play an important role in our analysis on the existence of \textit{local} fields in the present approach. We shall, therefore, review in this section those results which are of particular interest for our purposes.\par
Motivated by the free theory with S-matrix $S^{\alpha\beta}_{\gamma\eta}(\theta)=+\delta^\alpha_\eta\delta^\beta_\gamma$, one introduces by means of the creation and annihilation operators $z^\dagger(\varphi)$ and $z(\varphi)$, defined in (\ref{erzVern}), for theories with a given $S\in\mathcal{S}$ a field on $\mathcal{D}$ via
\begin{equation}\label{field1}
\phi(f):=z^\dagger(f^+)+z(Jf^-),\qquad f\in\mathscr{S}(\mathbb{R}^2)\otimes\mathcal{K},
\end{equation}
where
\begin{equation}
f^{\pm,\alpha}(\theta):=\widetilde{f}^\alpha(\pm p_{m_{[\alpha]}}(\theta))=\frac{1}{2\pi}\int d^2x\,e^{\pm ip_{m_{[\alpha]}}(\theta)\cdot x}f^\alpha(x),\qquad \theta\in\mathbb{R}.
\end{equation}
Since, obviously, $f^{\pm,\alpha}\in L^2(\mathbb{R},d\theta)$ for $f^\alpha\in\mathscr{S}(\mathbb{R}^2)$, the functions $f^\pm$ may be considered as vectors in $\mathscr{H}_1$. The operators (\ref{field1}) are related to the distributions
\begin{eqnarray}\label{fieldkernel}
\phi_\alpha(x)=\int d\theta\,\left(z^\dagger_\alpha(\theta)\,e^{ip_{m_{[\alpha]}}(\theta)\cdot x}+z_{\overline{\alpha}}(\theta)\,e^{-ip_{m_{[\alpha]}}(\theta)\cdot x}\right),
\end{eqnarray}
by
\begin{equation}
\phi(f)=\int d^2x\, \phi_\alpha(x) f^\alpha(x),\qquad f\in\mathscr{S}(\mathbb{R}^2)\otimes\mathcal{K}.
\end{equation}
It is, furthermore, useful to introduce a second auxiliary field, given by
\begin{equation}\label{field2}
\phi'(f):=Jz^\dagger(Jf^+)J+Jz(f^-)J,\qquad f\in\mathscr{S}(\mathbb{R}^2)\otimes\mathcal{K}.
\end{equation}
The following theorem establishes most of the well-known properties of Wightman fields for $\phi$ and $\phi'$. However, as stated below, these fields are in general nonlocal. Nevertheless, they can be interpreted as being localized in wedge regions $W\subset\mathbb{R}^2$, cf. Section \ref{wedges1+1}. A mathematical motivation for this interpretation is the crossing symmetry of the S-matrix as it is reminiscent of the Kubo-Martin-Schwinger (KMS) condition for the vacuum state on an algebra of wedge-local observables with respect to the boost group. In fact, the relation stated under item $viii)$ in the subsequent theorem depends severely on the crossing symmetry of the S-matrix.
\begin{theorem}[\cite{LS}]\label{theo}
Let $f\in\mathscr{S}(\mathbb{R}^2)\otimes\mathcal{K}$ and $\Psi\in\mathcal{D}$.
\begin{itemize}
\item[i)] The map $f\mapsto\phi(f)\Psi$ is linear and continuous.
\item[ii)] Define $(f^*)^\alpha(x):=\overline{f^{\overline{\alpha}}(x)}$, then
\begin{equation}
\phi(f)^*\supset\phi(f^*).
\end{equation}
\item[iii)] Each vector in $\mathcal{D}$ is entire analytic for $\phi(f)$. If $f=f^*$, then $\phi(f)$ is essentially self-adjoint on $\mathcal{D}$.
\item[iv)] $\phi(f)$ transforms covariantly under $\mathcal{P}_+^\uparrow$ and $G$, that is,
\begin{equation}
\begin{aligned}
U(a,t)\phi(f)U(a,t)^{-1}=\phi(f_{(a,t)}),&\qquad f_{(a,t)}(x):=f(\Lambda(t)^{-1}(x-a)),\,\, (a,t)\in\mathcal{P}_+^\uparrow,\\
V(g)\phi(f)V(g^{-1})=\phi(V_1(g)f),&\qquad (V_1(g)f)(x):=V_1(g)f(x),\qquad g\in G.
\end{aligned}
\end{equation}
\item[v)] Let $j(x):=-x$, then
\begin{equation}
\begin{aligned}
J\phi (f)J=\phi'(f_{(j)}),&\qquad f_{(j)}^\alpha(x):=\overline{f^{\overline{\alpha}}(-x)},\\
J\phi' (f)J=\phi(f_{(j)}),&
\end{aligned}
\end{equation}
\item[vi)] For any open set $\mathcal{O}\subset\mathbb{R}^2$, the subspace
\begin{equation}
\mathcal{D}_{\mathcal{O}}:=\text{span}\,\{\phi(f_1)\cdots\phi(f_n)\Omega:f_1,\dots,f_n\in\mathscr{S}(\mathcal{O})\otimes\mathcal{K},n\in\mathbb{N}_0\}
\end{equation}
is dense in $\mathscr{H}$. That is, $\Omega$ is cyclic for the field $\phi$.
\item[vii)] The field $\phi$ is local if and only if $S^{\alpha\beta}_{\gamma\eta}(\theta)=\delta^\alpha_\eta\delta^\beta_\gamma$.
\end{itemize}
Statements $i)-iv)$ and $vi),vii)$ hold also true if $\phi$ is replaced by $\phi'$. Let, moreover, $f\in\mathscr{S}(W_R+a)\otimes\mathcal{K}$, $g\in\mathscr{S}(W_L+a)\otimes\mathcal{K}$ and $a\in\mathbb{R}^2$, then
\begin{itemize}
\item[viii)]
\begin{equation}\label{commutatorwedge}
[\phi'(f),\phi(g)]\Psi=0,\qquad\Psi\in\mathcal{D},
\end{equation}
that is, the fields $\phi$ and $\phi'$, are relatively wedge-local.
\end{itemize}
\end{theorem}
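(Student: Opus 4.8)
The plan is to pass to the distributional kernels, reduce the commutator to a finite sum of contraction terms by means of the Zamolodchikov--Faddeev exchange relations, and then kill those terms using the crossing symmetry of $S$ together with the analyticity forced by wedge localization. First I would express both fields through the kernels $z^\dagger_\alpha(\theta)$, $z_\alpha(\theta)$. For $\phi(g)$ this is (\ref{fieldkernel}); for $\phi'(f)$ I would introduce the reflected kernels $z^{\prime\dagger}_\alpha(\theta):=Jz^\dagger_{\overline{\alpha}}(\theta)J$ and $z^{\prime}_\alpha(\theta):=Jz_{\overline{\alpha}}(\theta)J$, read off their explicit action from the definition of $J$ and the Proposition, and record that, in contrast to $z^\dagger,z$ which act on the left-most tensor factor, the primed operators act on the right-most factor. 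From this one derives the mixed exchange relations between primed and unprimed kernels; the decisive one is the analogue of the third relation in (\ref{exchange}), in which $z_\alpha(\theta)\,z^{\prime\dagger}_\beta(\theta')$ equals $z^{\prime\dagger}_\beta(\theta')\,z_\alpha(\theta)$ plus a single contraction proportional to a component of $S$ taken at argument $\theta-\theta'$, and correspondingly for $z^{\prime}_\alpha(\theta)\,z^\dagger_\beta(\theta')$.

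Next I would expand $[\phi'(f),\phi(g)]$ into its four creation/annihilation blocks. The two diagonal blocks, in which both operators create or both annihilate, vanish: the primed and unprimed creators, respectively annihilators, insert or delete a particle at opposite ends of the Fock tensor and hence commute after $S$-symmetrization, which one checks from the explicit action and, in the matrix-valued case, from the unitarity and Yang--Baxter relations of Definition \ref{S-matrixDefinition}. Only the two mixed blocks remain, and by the exchange relations of the previous step each collapses to a c-number integral; schematically the surviving contribution is of the form $\int d\theta\,\overline{f^{+}(\theta)}\,[S\text{-component}]\,g^{-}(\theta)$, the two mixed blocks entering with opposite sign.

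It remains, and this is the technical heart, to show that this contraction integral vanishes once $f\in\mathscr{S}(W_R+a)\otimes\mathcal{K}$ and $g\in\mathscr{S}(W_L+a)\otimes\mathcal{K}$. Here I would use that wedge localization makes $\theta\mapsto f^{+}(\theta)$ the boundary value of a bounded function analytic on the strip $\{0<\mathrm{Im}\,\theta<\pi\}$, whose continuation to $\mathrm{Im}\,\theta=\pi$ is a reflected version of $f^{-}$, since $p_{m}(\theta+i\pi)=-p_{m}(\theta)$, with the analogous statement for $g^{-}$ on the complementary strip. Writing the contraction term as a contour integral, shifting the rapidity contour by $i\pi$, and invoking the crossing relation $S^{\alpha\beta}_{\gamma\delta}(i\pi-\theta)=S^{\overline{\gamma}\alpha}_{\delta\overline{\beta}}(\theta)$ of Definition \ref{S-matrixDefinition} turns one mixed block into the negative of the other, so that the two cancel. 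Finally the continuity of $f\mapsto\phi(f)\Psi$, item i), together with cyclicity of $\Omega$, item vi), promotes the resulting kernel identity to the asserted operator equation on all of $\mathcal{D}$.

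The main obstacle is this last step: extracting from the wedge support the precise analyticity strip and decay of $f^{+}$ and $g^{-}$, and justifying the $i\pi$ contour shift, so that crossing symmetry may legitimately be applied. The algebraic reduction and the vanishing of the diagonal blocks are comparatively routine once the reflected kernels and their exchange relations are in hand; the entire argument hinges on pairing the $i\pi$-shift of the rapidity contour with the crossing identity, which is exactly the KMS-like mechanism alluded to immediately before the theorem.
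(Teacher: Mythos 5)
The thesis itself gives no proof of Theorem \ref{theo}: it is quoted from \cite{LS} as review material, so your proposal can only be measured against the proof there and against the closest in-house analogue, namely the computation in Proposition \ref{co} for the deformed fermionic fields and the crossing identity prepared in Lemma \ref{eigenschaften}. Your overall mechanism for item viii) --- expand $[\phi'(f),\phi(g)]$ into creation/annihilation blocks, observe that the diagonal blocks vanish, reduce the mixed blocks via the exchange relations (\ref{exchange}), and cancel the remainder by an $i\pi$ shift of the rapidity contour paired with crossing symmetry --- is indeed the mechanism of the actual proof. But there is a genuine gap at the central step: the mixed blocks do \emph{not} collapse to c-number integrals. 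Pulling an annihilation kernel through the $n$ creation operators underlying $\Psi_n$ produces, in each $n$-particle component, an integral of $f^{\pm}(\theta)\,g^{\mp}(\theta)$ against a \emph{product of $n$ S-matrix factors} $S(\theta-\theta_k)$ acting nontrivially on the $\mathcal{K}^{\otimes n}$ indices of $\Psi_n(\boldsymbol{\theta})$ (compare the structure of (\ref{int}) in the proof of Proposition \ref{co}, where the scalar-valued analogue of this product appears). Consequently the cancellation between the two mixed blocks cannot be obtained by applying the single crossing relation $S^{\alpha\beta}_{\gamma\delta}(i\pi-\theta)=S^{\overline{\gamma}\alpha}_{\delta\overline{\beta}}(\theta)$ componentwise, as you suggest: one must analytically continue the entire matrix product in the integration variable through the strip --- which requires the boundedness and analyticity of $S$ on $\overline{S(0,\pi)}$ built into Definition \ref{S-matrixDefinition}, a hypothesis your sketch never invokes --- and then apply the \emph{iterated} crossing identity (\ref{crossing}) of Lemma \ref{eigenschaften} b), which governs how the conjugated indices rearrange across a product of $n$ factors. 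In the matrix-valued setting this index bookkeeping is exactly the nontrivial content; without it, "one block becomes the negative of the other" does not follow.

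Three smaller defects. First, with the paper's conventions (see the remarks preceding Proposition \ref{BisognWich}), $\mathrm{supp}\,f\subset W_R$ yields analyticity of $f^+$ in the \emph{lower} strip $S(-\pi,0)$ with boundary value $f^+(\theta-i\pi)=f^-(\theta)$, not the upper strip you assert; the direction must match the strip on which $S$ itself is analytic, or the contour shift and crossing relation do not mesh. Second, your closing step --- promoting a kernel identity to the operator equation "by continuity and cyclicity of $\Omega$" --- is both unnecessary and mildly circular, since cyclicity is item vi) of the very theorem being proved: the exchange relations hold as identities on $\mathcal{D}$, so together with the number bounds (\ref{numberBounds}) the computation directly yields (\ref{commutatorwedge}) for all $\Psi\in\mathcal{D}$. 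Third, the statement comprises items i)--viii), and your proposal addresses only viii); items such as essential self-adjointness in iii) (analytic vectors plus Nelson's theorem), cyclicity in vi), and in particular the "only if" direction of vii) require separate arguments that are entirely absent.
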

The properties stated in the previous theorem allow for interpreting the auxiliary fields $\phi$ and $\phi'$ as being localized in wedges. One can, namely, assign $\phi(g)$ to the localization region $(W_L+\text{supp}\,g)''$ and $\phi'(f)$ to the localization region $(W_R+\text{supp}\,f)''$ in agreement with covariance and causality by properties $iv)$ and $viii)$ of Theorem \ref{theo}. Thereby, $(W_R+\text{supp}\,f)''$ is the smallest right wedge that contains supp$f$ which, on the other hand, is defined as the smallest subset of $\mathbb{R}^2$ containing supp$f_\alpha$ for all $\alpha\in\{1,\dots,D\}$. Building on these wedge-local fields $\phi$ and $\phi'$, we now aim at proving the existence of a \textit{local} quantum field theory with S-matrix $S$. To this end, it is advantageous to employ an operator-algebraic formulation of the already constructed models. This is due to certain techniques available in the algebraic framework which lead to a practical way for checking locality. In particular, the so-called modular nuclearity condition, as introduced later on, is of essential importance in the present analysis.\par
An algebraic description of the models at hand is obtained by considering the von Neumann algebras generated by $\phi$ and $\phi'$, i.e. with $x\in\mathbb{R}^2$
\begin{subequations}\label{algebras}
\begin{eqnarray}
\mathcal{F}(W_L+x)&:=&\{e^{i\overline{\phi (f)}}:f=f^*\in\mathscr{S}(W_L+x)\otimes \mathcal{K}\}'',\\
\mathcal{F}(W_R+x)&:=&\{e^{i\overline{\phi'(f)}}:f=f^*\in\mathscr{S}(W_R+x)\otimes \mathcal{K}\}''.
\end{eqnarray}
\end{subequations}
This definition and the properties of the fields $\phi$ and $\phi'$ yield
\begin{proposition}[\cite{LS}]\label{PropWedgeAlgebra}
Let $S\in\mathcal{S}$ and $\mathcal{F}(W)$, \mbox{$W\in\mathcal{W}$}, be defined as in (\ref{algebras}). Then, $\{\mathcal{F}(W)\}_{W\in\mathcal{W}}$ is a local field net, cf. Section \ref{AQFT}, of von Neumann algebras, transforming covariantly under the adjoint action of the extended representation $U$, cf. Lemma \ref{extendedU}, of the proper Poincaré group $\mathcal{P}_+$. Furthermore, locality is fulfilled (without twisting) and the vacuum vector $\Omega$ is cyclic and separating for each $\mathcal{F}(W)$, $W\in\mathcal{W}$.
\end{proposition}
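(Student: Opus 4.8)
The plan is to transfer, one property at a time, the features of the underlying fields collected in Theorem~\ref{theo} to the von Neumann algebras (\ref{algebras}) they generate, the only genuinely analytic input being the passage from commutation of the unbounded field operators to commutation of the algebras.

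First I would dispose of isotony, covariance and inner symmetry, which are essentially functorial. Isotony is immediate: if $W_1\subset W_2$ are both left wedges (resp.\ both right wedges), then $\mathscr{S}(W_1)\otimes\mathcal{K}\subset\mathscr{S}(W_2)\otimes\mathcal{K}$, so the generating sets of unitaries are nested and hence so are their double commutants. For covariance under $\mathcal{P}_+^\uparrow$ I would use Theorem~\ref{theo}~iv): conjugating a generator $e^{i\overline{\phi(f)}}$ by $U(a,t)$ produces $e^{i\overline{\phi(f_{(a,t)})}}$, where $f\mapsto f_{(a,t)}$ is a bijection of $\mathscr{S}(W_L+x)\otimes\mathcal{K}$ onto $\mathscr{S}(\Lambda(t)(W_L+x)+a)\otimes\mathcal{K}$ preserving real test functions (here one uses that boosts leave $W_R$ and $W_L$ invariant). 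Since $U(a,t)$ commutes with the functional calculus, conjugation maps one generating set onto the other, giving $U(a,t)\mathcal{F}(W)U(a,t)^{-1}=\mathcal{F}(\Lambda(t)W+a)$. The reflection is handled by Theorem~\ref{theo}~v): $J\phi(f)J=\phi'(f_{(j)})$ with $\mathrm{supp}\,f_{(j)}=-\mathrm{supp}\,f$, so the antiunitary $J$ carries the generators of $\mathcal{F}(W_L+x)$ to (inverses of) generators of $\mathcal{F}(W_R-x)=\mathcal{F}(j(W_L+x))$, and the involutivity of $J$ upgrades the inclusion to equality; this yields covariance under the extended representation of $\mathcal{P}_+$. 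Inner symmetry follows in the same vein from $V(g)\phi(f)V(g)^{-1}=\phi(V_1(g)f)$, since $V_1(g)$ acts fibrewise on $\mathcal{K}$ and leaves $\mathrm{supp}\,f$ unchanged.

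The heart of the matter is locality, and this is where I expect the only real obstacle. The basic input is Theorem~\ref{theo}~viii): $[\phi'(f),\phi(g)]$ vanishes on $\mathcal{D}$ for $f\in\mathscr{S}(W_R+a)\otimes\mathcal{K}$ and $g\in\mathscr{S}(W_L+a)\otimes\mathcal{K}$, i.e.\ for the complementary pair $(W_R+a,W_L+a)$. The difficulty is that commutation of two essentially self-adjoint operators on a common core does \emph{not} by itself force their bounded functions $e^{i\overline{\phi'(f)}}$ and $e^{i\overline{\phi(g)}}$ to commute — this is the familiar pathology behind Nelson's example. To bridge the gap I would exploit that, by Theorem~\ref{theo}~iii), every vector of $\mathcal{D}$ is entire analytic for the fields and that $\mathcal{D}$ is a common invariant core; commutativity on such a domain of analytic vectors does promote to strong commutativity of the self-adjoint closures, via a commutator/analytic-vector theorem of Driessler--Fr\"ohlich type. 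Strong commutativity of $\overline{\phi'(f)}$ and $\overline{\phi(g)}$ for all admissible $f,g$ then gives $\mathcal{F}(W_R+a)\subseteq\mathcal{F}(W_L+a)'$, i.e.\ untwisted locality for the fundamental complementary pairs. An arbitrary spacelike pair $W_1\subset W_2'$ of wedges is then reduced to this case using the covariance and isotony already established, and one concludes untwisted locality in general; in particular the (weaker) twisted locality required of a local field net holds.

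Finally, cyclicity and the separating property. Theorem~\ref{theo}~vi) asserts that $\Omega$ is cyclic for $\phi$ (resp.\ $\phi'$) with test functions supported in any open region, in particular a wedge, so $\mathcal{D}_{W}$ is dense. To obtain cyclicity for the von Neumann algebra $\mathcal{F}(W)$ rather than merely for the polynomial algebra of unbounded fields, I would again use that the $\phi(f_i)$ are essentially self-adjoint with common analytic core $\mathcal{D}$, so that each vector $\phi(f_1)\cdots\phi(f_n)\Omega$ lies in the closure of $\mathcal{F}(W)\Omega$; hence $\overline{\mathcal{F}(W)\Omega}=\mathscr{H}$. The separating property then comes for free from locality: since $W'$ is again a wedge, $\mathcal{F}(W')\subseteq\mathcal{F}(W)'$, and cyclicity of $\Omega$ for $\mathcal{F}(W')$ shows $\Omega$ is cyclic for $\mathcal{F}(W)'$, whence separating for $\mathcal{F}(W)$. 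This establishes all the asserted properties, the sole delicate point being the analytic upgrade from commutation of fields to commutation of algebras in the locality step.
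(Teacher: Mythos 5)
Your proposal is correct and follows essentially the route of the cited reference \cite{LS}, which the paper itself does not reproduce but mirrors verbatim in its analogous arguments: the analytic-vector/exponential-series upgrade from $[\phi'(f),\phi(g)]\restriction\mathcal{D}=0$ to commuting unitaries is exactly the technique used in the proofs of Propositions \ref{BisognWich} and \ref{PropWedgeAlgebra1}, and your cyclicity step is completed there by the standard spectral-truncation argument ($G_j(t)=E_j(t)\overline{\phi(f_j)}\in\mathcal{F}(W)$, noting $e^{it\overline{\phi(f)}}=e^{i\overline{\phi(tf)}}$ so the full unitary groups lie in the algebra). The particle number bounds (\ref{numberBounds}) supply precisely the estimates needed for your Driessler--Fr\"ohlich-type (or direct series) argument, so no gap remains.
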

This result opens up the possibility to employ the powerful methods available in the context of Algebraic Quantum Field Theory as explained in the subsequent sections.\par
Note that, due to the fact that the net $\{\mathcal{F}(W)\}_{W\in\mathcal{W}}$ constructed by means of the auxiliary fields $\phi$ and $\phi'$ satisfies locality without twisting, we are dealing here with a purely bosonic theory. This is a particular consequence of the commutativity of the fields $\phi$ and $\phi'$ at spacelike distances. It is, in principle, possible to include Fermi fields in the framework by constructing anticommuting auxiliary objects. We do, however, not discuss this case here.\par
We remark further that $\phi$ and $\phi'$ constitute examples of so-called temperate polar-ization-free generators \cite{schroer1999modular, BBS01}. This fact was shown in \cite[Theorem 4.5.]{schutzenhofer2011multi} and is due to the following properties. First, these fields generate only single particle states when applied to the vacuum $\Omega$. Second, they are localized in wedges. Last, $\phi$ and $\phi'$ admit a certain continuous and bounded behavior with regard to Poincaré transformations. We shall come back to this topic in Chapter \ref{Chapter5} where such operators will be used to compute two-particle scattering amplitudes in Haag-Ruelle collision theory. For the present context such a calculation was done in \cite{schutzenhofer2011multi}.
\subsection{The Bisognano-Wichmann Property}
The cyclicity and separability of the vacuum $\Omega$ for $\mathcal{F}(W)$, $W\in\mathcal{W}$, established in Proposition \ref{PropWedgeAlgebra}, allows for the application of the Tomita-Takesaki theory \cite{takesaki1970tomita, takesaki2003theory}. Considering, in particular, the pair $(\mathcal{F}(W_R),\Omega)$, we denote by $\Delta^{it}$, $t\in\mathbb{R}$, and $\tilde{J}$ the associated modular unitaries and modular conjugation respectively. Whereas the adjoint action of the modular group $\Delta^{it}$ leaves the algebra $\mathcal{F}(W_R)$ invariant, the antiunitary involution $\tilde{J}$ maps the algebra
into its commutant $\mathcal{F}(W_R)'$. Bisognano and Wichmann showed that these operators have a geometric interpretation if the algebra $\mathcal{F}(W_R)$ is generated by Wightman fields \cite{BisWich, bisognano1976duality}. Namely, the modular group coincides with the representation of the Lorentz-boosts of the wedge and the modular conjugation is related to the PCT operator $J$. With regard to our intension to make use of the modular nuclearity condition which involves the maps (\ref{modNucCon}) below and, in particular, the modular operator $\Delta$, it is desirable to prove these remarkable features also in the present context. As stated in the subsequent Proposition \ref{BisognWich}, it turns out that this is indeed the case.\par
Before presenting the corresponding results, we recall \cite{BisWich} that if $f\in\mathscr{S}(\mathbb{R}^2)\otimes\mathcal{K}$ has compact support in $W_R$ then $f^+\in\mathscr{H}_1$ lies in the domain of the positive self-adjoint operator $U(0,\tfrac{i}{2})$ and
\begin{equation}
(U(0,\tfrac{i}{2})f^+)(\theta)=f^+(\theta-i\pi)=f^-(\theta).
\end{equation}
Moreover, since $(f^{*})^\alpha(x)=\overline{f^{\overline{\alpha}}(x)}$, we have
\begin{equation}
(Jf^{\pm})^\alpha(\theta)=(f^{*})^{\mp,\alpha}(\theta).
\end{equation}
Thus, for compactly supported $f\in\mathscr{S}(W_R)\otimes\mathcal{K}$,
\begin{equation}\label{oneparticletomita}
JU(0,\tfrac{i}{2})\phi'(f)\Omega=JU(0,\tfrac{i}{2})f^+=(f^*)^+=\phi'(f^*)\Omega=\phi'(f)^*\Omega=S\phi'(f)\Omega,
\end{equation}
with $S$ being the Tomita operator of the pair $(\mathcal{F}(W_R),\Omega)$, with polar
decomposition $S=\tilde{J}\Delta^{1/2}$. This calculation already suggests a one particle version of the Bisognano-Wichmann property which, as shown below, indeed holds true.
\begin{proposition}\label{BisognWich}
\textit{Let} $\{\mathcal{F}(W)\}_{W\in\mathcal{W}}$ \textit{be the net of von Neumann algebras defined in (\ref{algebras}). Then,}
\begin{description}
    \item[i)] the Bisognano-Wichmann property holds, that is, \textit{the modular operator} $\Delta$ \textit{and modular conjugation} $\tilde{J}$ \textit{associated with the pair} $(\mathcal{F}(W_R),\Omega)$ \textit{are given by}
\begin{eqnarray}
\Delta^{it}&=&U(0,-t),\qquad t\in\mathbb{R},\\
\tilde{J}&=&J,
\end{eqnarray}
\item[ii)] moreover, \textit{Haag-duality},
\begin{equation}
\mathcal{F}(W)'=\mathcal{F}(W'),\qquad W\in\mathcal{W},
\end{equation}
holds.
\end{description}
\end{proposition}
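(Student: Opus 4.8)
The plan is to identify the modular data of the pair $(\mathcal{F}(W_R),\Omega)$ with the candidate operators $\Delta^{it}_0 := U(0,-t)$ and $\tilde J_0 := J$, and then to read off Haag duality as a consequence. Since $\Omega$ is cyclic and separating for $\mathcal{F}(W_R)$ by Proposition \ref{PropWedgeAlgebra}, the Tomita operator $S$ of this pair is the closure of $A\Omega\mapsto A^*\Omega$, and its polar decomposition $S=\tilde J\Delta^{1/2}$ determines $\Delta$ and $\tilde J$ uniquely. It therefore suffices to show that the closed antilinear operator $S_0 := J\,U(0,\tfrac{i}{2})$ coincides with $S$. Recalling from (\ref{algebras}) that $\mathcal{F}(W_R)$ is generated by the exponentials of $\phi'(f)$ with $f\in\mathscr{S}(W_R)\otimes\mathcal{K}$, the natural dense domain to work on is the one spanned by the vectors $\phi'(f_1)\cdots\phi'(f_n)\Omega$, which one checks forms a core for $S$.

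On the one-particle level the equality is already at hand: for a single compactly supported $f\in\mathscr{S}(W_R)\otimes\mathcal{K}$, the computation (\ref{oneparticletomita}) gives $S_0\,\phi'(f)\Omega=\phi'(f)^*\Omega=S\,\phi'(f)\Omega$. This uses precisely that the $W_R$-support places $f^+$ in the domain of $U(0,\tfrac{i}{2})$ and that $J$ interchanges the $\pm$-components, as recorded immediately before the proposition.

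The core step, and the main obstacle, is to promote this to all products $\phi'(f_1)\cdots\phi'(f_n)\Omega$. Here I would compute both prescriptions explicitly on the $S$-symmetric Fock space. The Tomita operation sends this vector to $\phi'(f_n^*)\cdots\phi'(f_1^*)\Omega$, i.e. it reverses the order of the factors and conjugates each test function. On the other side, $U(0,\tfrac{i}{2})$ acts by the analytic continuation $\theta_k\mapsto\theta_k-i\pi$ of the boost in every rapidity variable — legitimate exactly because the $W_R$-localization of the $f_j$ furnishes the wavefunctions with the strip-analyticity and boundedness needed to apply $U(0,\tfrac{i}{2})$ — while the subsequent $J$ conjugates the components and reverses both the rapidity arguments and the tensor indices. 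That these two prescriptions agree is governed by the crossing symmetry of the S-matrix: the reshuffling of the symmetrization tensors $S_n^\pi$ under the combined imaginary shift, reversal and conjugation is exactly the content of Lemma \ref{eigenschaften} b), itself a consequence of items 4.) and 5.) of Definition \ref{S-matrixDefinition}. Reconciling the analytic continuation of the creation- and annihilation-operator kernels with the PCT-conjugated, order-reversed field product demands careful bookkeeping of the contraction and tensor structure, and is where the crossing and PCT properties of $S$ are genuinely used. Once $S_0$ and $S$ are shown to agree on this core, a standard closedness argument yields $\overline{S_0}=S$, and the uniqueness of the polar decomposition gives $\Delta^{it}=U(0,-t)$ and $\tilde J=J$, proving i).

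Part ii) then follows from i) together with the geometric action of $J$. By Tomita-Takesaki, $\tilde J\,\mathcal{F}(W_R)\,\tilde J=\mathcal{F}(W_R)'$, and since $\tilde J=J$, property v) of Theorem \ref{theo} — namely $J\phi'(f)J=\phi(f_{(j)})$ with $f_{(j)}$ supported in $-\mathrm{supp}\,f\subset W_L$ — carries the generators of $\mathcal{F}(W_R)$ onto those of $\mathcal{F}(W_L)$, so that $J\,\mathcal{F}(W_R)\,J=\mathcal{F}(W_L)=\mathcal{F}(W_R')$. Combining the two identities gives $\mathcal{F}(W_R)'=\mathcal{F}(W_R')$, and covariance under $\mathcal{P}_+$ (Proposition \ref{PropWedgeAlgebra}) transports this equality to every wedge $W=gW_R\in\mathcal{W}$, establishing $\mathcal{F}(W)'=\mathcal{F}(W')$ in general.
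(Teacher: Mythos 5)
Your one-particle step and your derivation of part ii) match the paper, but the heart of your plan --- promoting $S_0=J\,U(0,\tfrac{i}{2})$ to all of $\mathcal{F}(W_R)\Omega$ by direct computation on the vectors $\phi'(f_1)\cdots\phi'(f_n)\Omega$ --- contains two genuine gaps. First, the parenthetical claim that these vectors ``form a core for $S$'' is exactly where the proof would have to live, and you give no mechanism for it. The fields $\phi'(f_j)$ are unbounded and not elements of $\mathcal{F}(W_R)$, so even the membership $\phi'(f_1)\cdots\phi'(f_n)\Omega\in\mathrm{dom}\,S$ with $S$ acting as the order-reversed starred product is not automatic: the paper has to prove this for a \emph{single} field via affiliation, the polar decomposition $A=Y|A|$, the spectral cutoffs $E_n|A|\in\mathcal{F}(W_R)$, and closedness of $S$. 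A core property for the polynomial domain is strictly harder; the standard criteria (Kaplansky density for $\ast$-subalgebras of $\mathcal{F}(W_R)$, or invariance of the domain under $\Delta^{it}$) either do not apply to unbounded generators or are circular, since $\Delta^{it}$ is what you are trying to identify. Without the core property, showing $S_0=S$ on your domain yields only a restriction relation, which does not determine the polar decomposition and hence neither $\Delta$ nor $\tilde J$. Second, the ``careful bookkeeping'' you defer is heavier than bookkeeping: the vector $\phi'(f_1)\cdots\phi'(f_n)\Omega$ has components at all particle numbers $n,n-2,\dots$, and because the exchange relations (\ref{exchange}) carry rapidity-dependent $S$-factors, the contraction terms are integral operators with matrix-valued kernels, not scalars. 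Verifying that the $-i\pi$ boundary values of \emph{all} these components reproduce $\phi'(f_n^*)\cdots\phi'(f_1^*)\Omega$ is a crossing computation of essentially the same scale as the completely-contracted-matrix-element analysis of Lemma \ref{Lemma}; citing Lemma \ref{eigenschaften} b) names the right identity but does not perform this step.

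The paper avoids both difficulties by a different architecture, which is worth internalizing: it first proves a genuinely \emph{one-particle} Bisognano--Wichmann relation $SE^{(1)}=J\,U(0,\tfrac{i}{2})E^{(1)}$ through a two-inclusion argument --- one inclusion from (\ref{oneparticletomita}) and closedness, the reverse inclusion from Borchers' commutation theorem (whence $S$ commutes with the mass projection $E^{(1)}$) together with locality in the form $S\subset S(W_R')^*$ --- and then bootstraps to the full algebra via the cocycle $L(t)=U(0,-t)\Delta^{-it}$: since $L(t)$ preserves $\mathcal{F}(W_R)$ and acts trivially on $\mathscr{H}_1$, one gets $(\phi'_t(f)-\phi'(f))A'\Omega=0$ for $A'\in\mathcal{F}(W_R)'$, and the core needed is for $\mathcal{F}(W_R)'\Omega$ relative to the \emph{field operator} $\phi'(f)$, established by the elementary energy bound $m_\circ(N+1)\leq P_0+m_\circ 1$ and essential self-adjointness of $P_0$ on $\mathcal{F}(W_L)\Omega\cap\mathcal{D}_{P_0}$. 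No multiparticle analytic continuation and no core for $S$ beyond $\mathcal{F}(W_R)\Omega$ itself are ever required. To salvage your route you would need to supply both the affiliation/domain analysis for field products and an independent proof of the core claim; as written, the proposal defers the decisive content to assertions.
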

\begin{proof}
We start the proof by showing that the operator $A:=\overline{\phi'(f)}$, with $f\in\mathscr{S}(W_R)\otimes \mathcal{K}$ of compact support, is affiliated with $\mathcal{F}(W_R)$. To this end, consider $\Psi_0\in\mathcal{D}$, $\Psi\in \mathrm{dom}\, A$ and $g\in\mathscr{S}(W_R')\otimes \mathcal{K}$ with $g=g^*$. Using the property that any vector in $\mathcal{D}$ is an analytic vector for $\phi'(f)$ and $\phi(g)$ \cite{LS} it follows that $\Psi_0$ and $A^*\Psi_0$ are analytic vectors for $B:=\overline{\phi(g)}$ because the particle number is only finitely changed by $A^*$. Moreover, we have \cite{LS}
\begin{equation}
[A^*,B^p]|_{\mathcal{D}}=0,\qquad\forall\, p\in\mathbb{N}_0.
\end{equation}
Therefore,
\begin{equation}
\langle\Psi_0,e^{iB}A\Psi\rangle=\sum_{p=0}^\infty\frac{i^p}{p!}\langle A^*B^p\Psi_0,\Psi\rangle=\sum_{p=0}^\infty\frac{i^p}{p!}\langle B^pA^*\Psi_0,\Psi\rangle
=\langle\Psi_0,Ae^{iB}\Psi\rangle,
\end{equation}
which implies $e^{iB}A\Psi=Ae^{iB}\Psi$ since $\mathcal{D}\subset\mathscr{H}$ is dense. This relation still holds if one replaces $e^{iB}$ by any element in the $*$-algebra $\mathcal{A}$ generated by $e^{i\overline{\phi(g)}}$ with $g\in\mathscr{S}(W_R')\otimes \mathcal{K}$ and $g=g^*$. However, any $D'\in\mathcal{F}(W_R)'$ is a weak limit of a sequence $D_n'\in\mathcal{A}$ and the identity $D_n'A\Psi=AD_n'\Psi$ is preserved under weak limits. Thus, $A$ is affiliated with $\mathcal{F}(W_R)$.\par
Since $A$ is a closed, densely defined unbounded operator between Hilbert spaces, it has a unique polar decomposition $A=Y|A|$. Due to the affiliation of $A$ with $\mathcal{F}(W_R)$, we have $Y,E_n|A|\in\mathcal{F}(W_R)$, $n\in\mathbb{N}$, where $E_n$ are the spectral projections of $|A|$ onto the spectrum in the interval $[0,n]$. Consider further the Tomita operator $S$ of the pair $(\mathcal{F}(W_R),\Omega)$. Then it follows from $SYE_n|A|\Omega=|A|E_nY^*\Omega$, the strong convergence $E_n\rightarrow 1$ as $n\rightarrow\infty$ and the closedness of $S$ that $A\Omega$ is in the domain of $S$ and $SA\Omega=A^*\Omega$. As dom$\,\Delta^{1/2}=\text{dom}\,S$, $A\Omega$ is also in the domain of $\Delta^{1/2}$. By the same arguments also $AF\Omega$ with any $F\in\mathcal{F}(W_R)$ is in the domain of $S$ and $\Delta^{1/2}$ respectively.\par
By modular theory, the Tomita operator of the pair $(\mathcal{F}(W_R)',\Omega)$ is $S^*$. Using the same arguments as above, one can show that the operator $\overline{\phi(f')}$, with $f'\in\mathscr{S}(W_R')\otimes\mathcal{K}$ of compact support, is affiliated with $\mathcal{F}(W_R)'$ and $S^*\phi(f')\Omega=\phi(f')^*\Omega$.\par
Next, we prove the statement on the modular operator. It follows from Equation (\ref{oneparticletomita}) that
\begin{equation}\label{inclusion}
S^1_{\text{geo}}\subset SE^{(1)},
\end{equation}
where $S^1_{\text{geo}}:=JU(0,\tfrac{i}{2})E^{(1)}$ and $E^{(1)}$ is the projection onto $\mathscr{H}_1$. To prove (\ref{inclusion}), define $\mathcal{D}_0:=\left\lbrace \Psi\in\mathscr{H}_1: \Psi=\phi'(f)\Omega,\, \text{supp}\,f\in W_R\,\,\text{compact} \right\rbrace$ which is dense in $\mathscr{H}_1$. In \cite{BGL} it is shown that $S^1_{\text{geo}}$ is a closed operator which is densely defined on $\mathscr{H}_1$. Moreover, dom$\,S^1_{\text{geo}}=\text{dom}\,U(0,\tfrac{i}{2})=K+iK$, where $K$ is a real subspace of $\mathscr{H}_1$. Clearly, we have $\mathcal{D}_0\subset\text{dom}\,S^1_{\text{geo}}$. Therefore, there exist sequences $\left\lbrace h_n\right\rbrace _{n\geq 1}$ and $\left\lbrace k_n\right\rbrace _{n\geq 1}$ with $h_n,k_n\in \mathcal{D}_0$ corresponding to real functions $f=f^*$, which converge to $h,k\in K$ such that $h_n+ik_n\overset{n}{\longrightarrow} h+ik\in\text{dom}\,S^1_{\text{geo}}$. In particular, $S^1_{\text{geo}}(h_n+ik_n)=h_n-ik_n\overset{n}{\longrightarrow}h-ik$, which by the closedness of $S^1_{\text{geo}}$ gives $S^1_{\text{geo}}(h+ik)=h-ik$. Thus, from (\ref{oneparticletomita}), the closedness of $S$ and the fact that $Sf^+=f^+$ for $f^*=f$ it follows that $S^1_{\text{geo}}(h+ik)=S(h+ik)$, for all $h+ik\in\text{dom}\,S^1_{\text{geo}}$, proving (\ref{inclusion}).\par
To show the opposite inclusion, namely $SE^{(1)}\subset S^1_{\text{geo}}$, note that $E^{(1)}=\sum_{q\in\mathcal{Q}}E_{m_q}$, where the $E_{m_q}$, $q\in\mathcal{Q}$, are spectral projections of the mass operator $\sqrt{P^2}$. By a theorem of Borchers \cite{Bor1} the modular group $\Delta^{it}$ and modular conjugation $\tilde{J}$ have geometrically the correct commutation relations with the translations, which is a crucial step towards the Bisognano-Wichmann property. In particular, the exchange relations of $\Delta^{it}$ and $\tilde{J}$ with the mass operator $\sqrt{P^2}$ imply that $S$ commutes with $\sqrt{P^2}$ and consequently with $E^{(1)}$\footnote{Since $E^{(1)}:\text{dom}\,S\rightarrow\text{dom}\,S$, $E^{(1)}$ and $S$ commute strongly on vectors in dom$\,S$.} \cite{Mund}. We proceed by defining $S(W_R'):=JSJ$ and $S^1_{\text{geo}}(W_R'):=JS^1_{\text{geo}}J$ respectively. Then, it follows from locality\footnote{$\mathcal{F}(W_R)\subset\mathcal{F}(W_R')'$} and modular theory\footnote{If $S_0$ is the Tomita operator of the pair $(\mathcal{F}(W),\Omega)$, then $S_0^*$ is the Tomita operator corresponding to $(\mathcal{F}(W)',\Omega)$.} that $S\subset S(W_R')^*$. Since $S(W_R')$ commutes with $E^{(1)}$ we have together with Equation (\ref{inclusion})
\begin{equation}
SE^{(1)}\subset (S(W_R')E^{(1)})^*\subset S^1_{\text{geo}}(W_R')^*=S^1_{\text{geo}},
\end{equation}
where the last equality follows from $JU(0,\tfrac{i}{2})=U(0,-\tfrac{i}{2})J$ by the anti-unitarity of $J$. Thus, we have shown
\begin{eqnarray}\label{polar}
SE^{(1)}=JU(0,\tfrac{i}{2})E^{(1)},
\end{eqnarray}
As the polar decomposition of a closed operator is unique and $S$ is closed, we obtain from (\ref{polar}) a one particle version of the Bisognano-Wichmann theorem, namely $\Delta^{1/2}E^{(1)}=U(0,\tfrac{i}{2})E^{(1)}$ and $\tilde{J}E^{(1)}=JE^{(1)}$. In particular, we have
\begin{equation}\label{oneparticleBisoWich}
\Delta^{it}E^{(1)}=U(0,-t)E^{(1)},\qquad\tilde{J}E^{(1)}=JE^{(1)}.
\end{equation}
This result can now be used to prove the equality of the modular operator $\Delta^{it}$ with $U(0,-t)$. To this end, define $L(t):=U(0,-t)\Delta^{-it}$, $t\in\mathbb{R}$, and $\overline{\phi'_t(f)}:=L(t)\overline{\phi'(f)}L(t)^{-1}$ with $f\in\mathscr{S}(W_R)\otimes\mathcal{K}$ of compact support. Since $L(t)\mathcal{F}(W_R)L(t)^{-1}\subset\mathcal{F}(W_R)$, $\overline{\phi'_t(f)}$ is affiliated with $\mathcal{F}(W_R)$ because $A=\overline{\phi'(f)}$ is. Hence, $\overline{\phi'_t(f)}$ commutes with elements in $\mathcal{F}(W_R)'$. Making use of the preceding result (\ref{oneparticleBisoWich}) and $L(t)^{-1}\Omega=\Omega$ we have with $A'\in\mathcal{F}(W_R)'$
$$\phi'_t(f)A'\Omega=A'\phi'_t(f)\Omega=A'L(t)\phi'(f)\Omega=A'L(t)E^{(1)}\phi'(f)\Omega=A'\phi'(f)\Omega=\phi'(f)A'\Omega,$$
since $\phi'(f)\Omega\in\mathscr{H}_1$. That is,
\begin{equation}
(\phi'_t(f)-\phi'(f))A'\Omega=0, \qquad A'\in\mathcal{F}(W_R)',
\end{equation}
for any $f$ as above. We will show below that $\mathcal{F}(W_R)'\Omega$ is a core for $\phi'(f)$ and $\phi'_{t}(f)$, giving the result $\phi'(f)=\phi'_t(f)$. Hence $U(0,-t)\Delta^{-it}$ acts trivially on $\mathcal{F}(W_R)$ and since $\Omega$ is cyclic for $\mathcal{F}(W_R)$, it follows that $U(0,-t)\Delta^{-it}=1$, $t\in\mathbb{R}$.\par
Similarly, one proves the equality of the modular conjugation $\tilde{J}$ with the TCP operator $J$. Firstly, we have $\tilde{J}\mathcal{F}(W_R)\tilde{J}=\mathcal{F}(W_R)'$ by modular theory and secondly $J\mathcal{F}(W_R)J=\mathcal{F}(W_L)$ by definition. Moreover, for an arbitrary wedge $W$ we have $\mathcal{F}(W')\subset\mathcal{F}(W)'$ \cite{LS}. Therefore, it is obvious that $\tilde{J}J\mathcal{F}(W_R)(\tilde{J}J)^{-1}\subset\mathcal{F}(W_R)$. Hence, the operator $\overline{\phi'_I(f)}:=I\overline{\phi'(f)}I^{-1}$ with $I:= \tilde{J}J$ and $f\in\mathscr{S}(W_R)\otimes\mathcal{K}$ is affiliated with $\mathcal{F}(W_R)$. By the same arguments as above we find
\begin{equation}
(\phi'_I(f)-\phi'(f))A'\Omega=0, \qquad A'\in\mathcal{F}(W_R)'.
\end{equation}
We will show below that $\mathcal{F}(W_R)'\Omega$ is also a core for $\phi'_I(f)$, which implies  $\phi'_I(f)=\phi'(f)$ and consequently $I=1$. Haag-duality, statement $ii)$, then follows easily from $\mathcal{F}(W_R)'=\tilde{J}\mathcal{F}(W_R)\tilde{J}=J\mathcal{F}(W_R)J=\mathcal{F}(W_L)$ and covariance.\par
In order to prove that $\mathcal{F}(W_L)\Omega$ is a core for the operators $\phi'(f)$, $\phi'_I(f)$, and $\phi'_{t}(f)$ we use the particle number bounds (\ref{numberBounds}) and consider $P_0$, the positive generator of time translations, i.e. we have the following inequalities
\begin{eqnarray}
||\phi'(f)\Psi||&\leq& \left(||f^+||+||f^-||\right)\,||(N+1)^{1/2}\Psi||,\qquad\Psi\in\mathcal{D}\\
m_\circ(N+1)&\leq& P_0+m_\circ 1,
\end{eqnarray}
with $m_\circ>0$ the mass gap of the theory. Therefore, for $\Psi\in\mathcal{D}\cap\mathcal{D}_{P_0}$, where $\mathcal{D}_{P_0}$ is the domain of $P_0$, we have
\begin{equation}
||\phi'(f)\Psi||\leq m_\circ^{-1/2}(||f^+||+||f^-||)||(P_0+m_\circ 1)^{1/2}\Psi||.
\end{equation}
Using standard arguments it follows from this estimate that any core for $P_0$ is also a core for $\phi'(f)$. Since $I$ and $L$ also commute with all translations and therefore in particular with the time translations, this property is also true for $\phi'_I(f)$ and $\phi'_{t}(f)$. The remaining part is to show that $\mathcal{F}(W_L)\Omega\cap\mathcal{D}_{P_0}$ is a core for $P_0$. To this end, note that $\mathcal{F}(W_L)\Omega$ is mapped into itself by all translations $U(y)$ with $y\in W_L$. Moreover, let $h$ be some test function such that supp$\,h\subset W_L$ then, since $\Omega$ is invariant under translations, we have $\widetilde{h}(P)\mathcal{F}(W_L)\Omega\subset\mathcal{F}(W_L)\Omega\cap\mathcal{D}_{P_0}$ with $\widetilde{h}(P)=\int dy\, h(y)U(y)$. However, there exist test functions $h$ such that $\widetilde{h}(P)$ is invertible, i.e. Ker$(\widetilde{h}(P))=\{0\}$, and hence $\mathcal{F}(W_L)\Omega\cap\mathcal{D}_{P_0}$ is not empty. Therefore, we have $(P_0\pm i)\widetilde{h}(P)\mathcal{F}(W_L)\Omega\subset(P_0\pm i)(\mathcal{F}(W_L)\Omega\cap\mathcal{D}_{P_0})$ and as $\mathcal{F}(W_L)\Omega$ is dense in $\mathscr{H}$ also the intersection $\mathcal{F}(W_L)\Omega\cap\mathcal{D}_{P_0}$ is dense. This proves that $P_0$ is essentially self-adjoint on $\mathcal{F}(W_L)\Omega\cap\mathcal{D}_{P_0}$ \cite[Cor. to Thm. VIII.3]{RS1} which therefore is a core for $P_0$. This finishes the proof.
\end{proof}
Note that in case $\dim\mathcal{K}=1$, the Poincaré group acts irreducibly on $\mathscr{H}_1$ and, therefore, the Bisognano-Wichmann property can be verified without establishing first a single particle version \cite{BL4}. It is, however, an important intermediate step in the situation of a richer particle spectrum, with $\dim\mathcal{K}>1$, where this irreducibility feature is no longer present.

\subsection{From Wedge-Local to Local Algebras - The Modular Nuclearity Condition}\label{ModSection}
In the previous discussion we introduced certain models in terms of wedge-local quantum fields $\phi$, $\phi'$ and proceeded to the corresponding von Neumann algebras $\mathcal{F}(W)$, $W\in\mathcal{W}$, (\ref{algebras}) they generate. In local quantum physics, however, we are interested in local field algebras $\mathcal{F}(\mathcal{O})$, associated with bounded regions $\mathcal{O}\subset\mathbb{R}^2$. In order to proceed from the wedge algebras to a net of local algebras, we first notice that the causal complement of a double cone
\begin{equation}
\mathcal{O}_{x,y}=(W_R+x)\cap(W_L+y),\qquad y-x\in W_R,
\end{equation}
consists of the two disconnected components $W_L+x$ and $W_R+y$, cf. Figure \ref{doubleFig}. Considering an operator $A\in\mathcal{B}(\mathscr{H})$ associated to the region $\mathcal{O}_{x,y}$, then causality requires that operations in $\mathcal{O}_{x,y}'=(W_L+x)\cup(W_R+y)$ do not influence $A$ and, therefore, it has to commute with all operators in $\mathcal{F}(W_L+x)$ and $\mathcal{F}(W_R+y)$. Consequently, $A$ belongs to the von Neumann algebra
\begin{equation}\label{vNDouble}
\mathcal{F}(\mathcal{O}_{x,y}):=\left(\mathcal{F}(W_L+x)\vee\mathcal{F}(W_R+y)\right)'=\mathcal{F}(W_L+x)'\cap\mathcal{F}(W_R+y)',
\end{equation}
which, with regard to compatibility with locality, actually is the maximal possible choice. Recall that $\mathcal{F}(W_L+x)\vee\mathcal{F}(W_R+y):=\left(\mathcal{F}(W_L+x)\cup\mathcal{F}(W_R+y)\right)''$ denotes the smallest von Neumann algebra containing both $\mathcal{F}(W_L+x)$ and $\mathcal{F}(W_R+y)$.
Definition (\ref{vNDouble}) extends to arbitrary bounded open regions $\mathfrak{R}\subset\mathbb{R}^2$ by additivity, that is,
\begin{equation}\label{additivity}
\mathcal{F}(\mathfrak{R}):=\bigvee_{\mathfrak{R}\supset\mathcal{O}\in\mathscr{O}}\mathcal{F}(\mathcal{O}),
\end{equation}
where $\mathscr{O}:=\{\mathcal{O}_{x,y}:y-x\in W_R\}$ denotes the set of all double cones in $\mathbb{R}^2$. This construction indeed yields a local net $\{\mathcal{F}(\mathcal{O})\}_{\mathcal{O}\subset\mathbb{R}^2}$ which inherits its basic properties from those of the wedge algebras (\ref{algebras}). More precisely, the following results hold true.
\begin{proposition}\label{PropDoubleCones}
Consider the algebras $\mathcal{F}(\mathcal{O})$ defined in (\ref{vNDouble}, \ref{additivity}). Then, for bounded open $\mathcal{O}_1,\mathcal{O}_2\subset\mathbb{R}^2$ we have
\begin{itemize}
\item isotony: $\mathcal{F}(\mathcal{O}_1)\subset\mathcal{F}(\mathcal{O}_2)$ for $\mathcal{O}_1\subset\mathcal{O}_2$,
\item covariance: $U(a,t)\mathcal{F}(\mathcal{O})U(a,t)^{-1}=\mathcal{F}(\Lambda(t)\mathcal{O}+a)$, $(a,t)\in\mathcal{P}_+$,
\item gauge symmetry: $V(g)\mathcal{F}(\mathcal{O})V(g)^{-1}=\mathcal{F}(\mathcal{O})$, $g\in G$,
\item locality: $\mathcal{F}(\mathcal{O}_1)\subset\mathcal{F}(\mathcal{O}_2)'$ for $\mathcal{O}_1\subset\mathcal{O}_2'$,
\end{itemize}
that is, $\{\mathcal{F}(\mathcal{O})\}_{\mathcal{O}\subset\mathbb{R}^2}$ is a local net of von Neumann algebras.
\end{proposition}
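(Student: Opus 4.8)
The plan is to reduce every assertion for a general bounded region to the corresponding one for double cones via the additivity definition (\ref{additivity}), and to treat double cones using Haag duality (Proposition \ref{BisognWich}) together with the covariance, gauge invariance and locality of the wedge net (Proposition \ref{PropWedgeAlgebra}) and elementary wedge geometry. The geometric input I would record first, directly from (\ref{double cone}) and the fact that $W_R$ is an open convex cone (so $a-b\in\overline{W_R}$ implies $W_R+a\subset W_R+b$, and the mirror statement for $W_L$), is twofold: for $\mathcal{O}_i=\mathcal{O}_{x_i,y_i}$ the inclusion $\mathcal{O}_1\subset\mathcal{O}_2$ is equivalent to the apex relations $x_1-x_2\in\overline{W_R}$ and $y_2-y_1\in\overline{W_R}$; and the causal complement $\mathcal{O}_2'=(W_L+x_2)\cup(W_R+y_2)$ is a \emph{disjoint} union of two wedges.

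For isotony I would use only wedge isotony and the fact that passing to commutants reverses inclusions. Given $\mathcal{O}_1\subset\mathcal{O}_2$, the apex relations yield $W_L+x_2\subset W_L+x_1$ and $W_R+y_2\subset W_R+y_1$, hence $\mathcal{F}(W_L+x_1)'\subset\mathcal{F}(W_L+x_2)'$ and $\mathcal{F}(W_R+y_1)'\subset\mathcal{F}(W_R+y_2)'$; intersecting gives $\mathcal{F}(\mathcal{O}_1)\subset\mathcal{F}(\mathcal{O}_2)$. In particular $\mathcal{F}(\mathcal{O})$ is the largest member of the family indexed in (\ref{additivity}) for a double cone $\mathcal{O}$, so the two definitions are consistent, and isotony for arbitrary bounded regions is then immediate since the join runs over a larger index set as the region grows. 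Covariance and gauge symmetry I would handle by the same formal mechanism, namely that conjugation by a unitary commutes with commutants, intersections and joins, so it suffices to know the action on the wedge algebras. Since in $d=2$ the boosts fix $W_R$ and $W_L$, Proposition \ref{PropWedgeAlgebra} gives $U(a,t)\mathcal{F}(W_{L/R}+z)U(a,t)^{-1}=\mathcal{F}(W_{L/R}+\Lambda(t)z+a)$ and $V(g)\mathcal{F}(W)V(g)^{-1}=\mathcal{F}(W)$; feeding these into (\ref{vNDouble}) produces $\mathcal{F}(\Lambda(t)\mathcal{O}_{x,y}+a)$ and $\mathcal{F}(\mathcal{O})$ respectively, and additivity propagates both to general regions. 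The reflection component of $\mathcal{P}_+$ is covered identically using the extended action $U(j)=J$ of Lemma \ref{extendedU}, which exchanges $W_R$ and $W_L$.

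The substantive step is locality, where Haag duality is the essential ingredient. Suppose $\mathcal{O}_1\subset\mathcal{O}_2'$. Because a double cone is connected while $\mathcal{O}_2'=(W_L+x_2)\cup(W_R+y_2)$ is disconnected, $\mathcal{O}_1$ lies in a single component, say $\mathcal{O}_1\subset W_L+x_2$; then the top apex satisfies $y_1-x_2\in\overline{W_L}$, so $W_L+y_1\subset W_L+x_2$, and I would compute
\begin{equation*}
\mathcal{F}(\mathcal{O}_1)\subset\mathcal{F}(W_R+y_1)'=\mathcal{F}(W_L+y_1)\subset\mathcal{F}(W_L+x_2)\subset\mathcal{F}(W_L+x_2)\vee\mathcal{F}(W_R+y_2)=\mathcal{F}(\mathcal{O}_2)',
\end{equation*}
where the first equality is Haag duality $\mathcal{F}(W)'=\mathcal{F}(W')$, the second inclusion is wedge isotony, and the last equality is the bicommutant theorem applied to (\ref{vNDouble}); the component $\mathcal{O}_1\subset W_R+y_2$ is symmetric. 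For arbitrary bounded $\mathfrak{R}_1\subset\mathfrak{R}_2'$ I would observe that any double cone in $\mathfrak{R}_1$ is spacelike to any double cone in $\mathfrak{R}_2$, apply the double cone result pairwise, and pass to joins using $\mathcal{F}(\mathfrak{R}_2)'=\bigcap_{\mathcal{O}\subset\mathfrak{R}_2}\mathcal{F}(\mathcal{O})'$.

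The step I expect to be most delicate is exactly this locality argument. It genuinely requires Haag duality (Proposition \ref{BisognWich}), not merely wedge locality $\mathcal{F}(W')\subset\mathcal{F}(W)'$, since one needs the \emph{equality} $\mathcal{F}(W_R+y_1)'=\mathcal{F}(W_L+y_1)$ to obtain an upper bound on $\mathcal{F}(\mathcal{O}_1)$; and it relies on the disconnectedness of the causal complement of a double cone, which is special to two dimensions. Everything else reduces to formal manipulation of commutants, intersections and joins combined with the wedge-net properties already established.
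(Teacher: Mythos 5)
Your proof is correct and follows essentially the same route as the paper: isotony via the apex relations and commutant reversal, covariance and gauge symmetry by conjugating the defining intersections (\ref{vNDouble}) with the wedge-net covariance of Proposition \ref{PropWedgeAlgebra}, and locality by placing $\mathcal{O}_1$ in one of the two connected components of $\mathcal{O}_2'$ and using $\mathcal{F}(\mathcal{O}_2)=\left(\mathcal{F}(W_L+x_2)\vee\mathcal{F}(W_R+y_2)\right)'$. Your only departure is one of explicitness rather than substance: the paper asserts $\mathcal{F}(\mathcal{O}_{x_1,y_1})\subset\mathcal{F}(W_R+y_2)$ without comment, whereas you correctly identify that this bound on an algebra defined as an intersection of commutants genuinely requires the wedge Haag duality $\mathcal{F}(W)'=\mathcal{F}(W')$ from Proposition \ref{BisognWich}, not merely wedge locality.
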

A similar result was proven for the special case of a particle spectrum with only one type of neutral massive particles, i.e. in case the dimension of $\mathcal{K}$ is restricted to one. The proof of Proposition \ref{PropDoubleCones} requires, in fact, only trivial adjustments. For the convenience of the reader we, nevertheless, give a complete proof.
\begin{proof} Let $\mathcal{O}_{x_1,y_1}\subset\mathcal{O}_{x_2,y_2}$ be an inclusion of double cones. Then, we have $W_R+y_2\subset W_R+y_1$ and $W_L+x_2\subset W_L+x_1$. Hence,
$$\mathcal{F}(\mathcal{O}_{x_1,y_1})=\mathcal{F}(W_L+x_1)'\cap\mathcal{F}(W_R+y_1)'\subset\mathcal{F}(W_L+x_2)'\cap\mathcal{F}(W_R+y_2)'=\mathcal{F}(\mathcal{O}_{x_2,y_2}),$$
implying together with Definition (\ref{additivity}) isotony, i.e. the map $\mathcal{O}\mapsto\mathcal{F}(\mathcal{O})$ is a \textit{net} of von Neumann algebras.\par
With regard to the covariance claim, let $\mathcal{O}_{x,y}\in\mathscr{O}$ and $g\in\mathcal{P}_+$. Then, 
\begin{eqnarray*}
U(g)\mathcal{F}(\mathcal{O}_{x,y})U(g)^{-1}&=&U(g)\mathcal{F}(W_L+x)'U(g)^{-1}\cap U(g)\mathcal{F}(W_R+y)'U(g)^{-1}\\
&=&\mathcal{F}(g(W_L+x))'\cap\mathcal{F}(g(W_R+y))'\\
&=&\mathcal{F}(g\mathcal{O}).
\end{eqnarray*}
Taking again Definition (\ref{additivity}) into account, the covariance property follows. In an analogous manner, the transformation behavior under inner symmetries $g\in G$ follows directly from definition.\par
It remains to show locality. To this end, we consider two spacelike separated double cones $\mathcal{O}_{x_1,y_1}\subset\mathcal{O}_{x_2,y_2}'$. In this case, we have either $\mathcal{O}_{x_1,y_1}\subset W_R+y_2$ and $\mathcal{F}(\mathcal{O}_{x_1,y_1})\subset\mathcal{F}(W_R+y_2)$, or $\mathcal{O}_{x_1,y_1}\subset W_L+x_2$ and $\mathcal{F}(\mathcal{O}_{x_1,y_1})\subset\mathcal{F}(W_L+x_2)$. Since $\mathcal{F}(\mathcal{O}_{x_2,y_2})=\left(\mathcal{F}(W_L+x_2)\vee\mathcal{F}(W_R+y_2)\right)'$, we have $\mathcal{F}(\mathcal{O}_{x_1,y_1})\subset\mathcal{F}(\mathcal{O}_{x_2,y_2})'$. By the same argument, the locality property carries over to arbitrary spacelike separated bounded regions $\mathfrak{R}_1\subset \mathfrak{R}_2'$.
\end{proof}
The constructed net of local field algebras $\mathcal{F}(\mathcal{O})$ characterizes the model theory associated to the S-matrix $S$, initially taken into account. Thereby, the problem of finding explicit expressions for local quantum fields underlying the model is circumvented. In the present approach these operators are, namely, rather defined in an indirect manner as elements of the intersections (\ref{vNDouble}). The interesting question on the concrete form of the local fields was looked into by Bostelmann and Cadamuro \cite{cadamuro2012characterization, bostelmann2013operator, bostelmann2014characterization, bostelmanntowards}. Their analysis, however, is restricted to the scalar case, i.e. to a particle spectrum consisting of only a single type of neutral massive particles. It is based on a certain series expansion which is a generalization of Araki's expansion of bounded operators on Fock space in case of the free field theory \cite{araki1963lattice} to integrable models in $1+1$ dimensions. Localization properties of these operators, represented by a series, then reflect in analyticity features of certain expansion coefficients. The characterization of the local observables in terms of the latter was carried out in \cite{bostelmann2014characterization}. An explicit construction of local operators within this approach was thereby given for a model with scalar S-matrix $S=-1$ \cite{cadamuro2012characterization}.\par
In this thesis we are concerned with showing the \textit{existence} of local quantum fields in a model with scattering operator determined by $S\in\mathcal{S}$. This is the case if the local algebras $\mathcal{F}(\mathcal{O})$ do not consist only of multiples of the identity, that is, if the intersections of wedge algebras (\ref{vNDouble}) are nontrivial, and if further the Reeh-Schlieder property holds. Namely, in case the local algebras $\mathcal{F}(\mathcal{O})$, corresponding to these intersections, have the vacuum $\Omega$ as a cyclic vector\footnote{In local theories with gauge charges \cite{buchholz1982locality} the observable, gauge invariant subalgebras $\mathcal{A}(\mathcal{O})$ of $\mathcal{F}(\mathcal{O})$ would have $\Omega$ as a cyclic vector on the charge zero subspace.}, their nontriviality and, in particular, the existence of a local field theory is implied.\par
In order to check the validity of the Reeh-Schlieder property for bounded regions, we benefit from a model-independent analysis carried out in \cite{BL4} and \cite{L08} respectively where the maps
\begin{equation}\label{modNucCon}
\Xi(x):\mathcal{F}(W_R)\rightarrow\mathscr{H},\qquad \Xi(x)A:=\Delta^{1/4}U(x)A\Omega,\qquad x\in W_R,
\end{equation}
are of central significance. Building on an earlier result by Buchholz, d'Antoni and Longo \cite{BDL90} for an inclusion of von Neumann factors on a Hilbert space, the nuclearity\footnote{Appendix \ref{AppendixB} provides supporting material with regard to nuclear maps.} of the maps (\ref{modNucCon}) for any $x\in W_R$ can be shown to imply that the net $\mathcal{F}$ (\ref{algebras}) satisfies the (distal) split property for wedges \cite{BL4}, a powerful feature as shall be explained in more detail in the sequel. Recall first from \cite{muger1998superselection} that a local field net $\mathcal{F}$ is said to have the split property for wedges if the inclusion $\mathcal{F}(W_1)\subset\mathcal{F}(W_2)$, $W_1,W_2\in\mathcal{W}$, is split whenever the closure of $W_1$ is contained in the interior of $W_2$ (in symbols $W_1\subset\subset W_2$). The notion ``distal'' refers to split inclusions for pairs of regions $W_1$ and $W_2$ with a sufficiently large inner distance. The split property \cite{DL84} of the inclusion $\mathcal{F}(W_1)\subset\mathcal{F}(W_2)$ thereby means that there exists a type I factor $\mathcal{N}$ such that
\begin{equation*}
\mathcal{F}(W_1)\subset\mathcal{N}\subset\mathcal{F}(W_2).
\end{equation*}
It further corresponds to a form of statistical independence between $\mathcal{F}(W_1)$ and $\mathcal{F}(W_2)'$. In particular, if $\mathcal{F}(W_1)$ and $\mathcal{F}(W_2)$ are factors and if there exists a cyclic and separating vector for $\mathcal{F}(W_1),$ $\mathcal{F}(W_2)$ and the relative commutant $\mathcal{F}(W_1)'\cap\mathcal{F}(W_2)$ then \cite{d1983interpolation, DL84} the split property of the inclusion $\mathcal{F}(W_1)\subset\mathcal{F}(W_2)$ for $W_1\subset\subset W_2$ is equivalent to the existence of a unitary $T:\mathscr{H}\rightarrow\mathscr{H}\otimes\mathscr{H}$ implementing an isomorphism between $\mathcal{F}(W_1)\vee\mathcal{F}(W_2)'$ and $\mathcal{F}(W_1)\otimes\mathcal{F}(W_2)'$,
\begin{equation*}
TA_1A_2'T^*=A_1\otimes A_2',\qquad A_1\in\mathcal{F}(W_1),\,\,A_2'\in\mathcal{F}(W_2)'.
\end{equation*}
Indeed, by a result established in \cite{longo1979notes, driessler1975comments} the wedge algebras $\mathcal{F}(W)$, $W\in\mathcal{W}$, are factors of type III$_1$ according to the classification of Connes \cite{connes1973classification}. Moreover, as shown in \cite{BL4} the second requirement yielding the above isomorphism is also met. These facts imply a rather simple structure of the local algebras, since $\mathcal{F}(W_1'\cap W_2)=\mathcal{F}(W_1')\cap\mathcal{F}(W_2)=(\mathcal{F}(W_1)\vee\mathcal{F}(W_2)')'$ can be realized as a tensor product of wedge algebras, namely $\mathcal{F}(W_1')\otimes\mathcal{F}(W_2)$, on $\mathscr{H}\otimes\mathscr{H}$, provided the inclusion $\mathcal{F}(W_1)\subset\mathcal{F}(W_2)$ is split for $W_1\subset\subset W_2$. In addition, the nontriviality of $\mathcal{F}(W_1'\cap W_2)$ can be inferred.\par
The split property for wedges implies further that the double cone algebras $\mathcal{F}(\mathcal{O})$, $\mathcal{O}\in\mathscr{O}$, (\ref{vNDouble}) are isomorphic to the hyperfinite type III$_1$ factor \cite{BL4}. Moreover, it follows that the set of cyclic and separating vectors for the local algebras is a dense $G_\delta$ set in $\mathcal{H}$ \cite{dixmier1971vecteurs}. Although a priori not obvious, the analysis of \cite{L08} shows that the vacuum vector $\Omega$ does also belong to this set, establishing the Reeh-Schlieder property for double cones.\par
The significance of the split property for our construction is, therefore, made clear and one may look for a convenient method to prove this feature for wedge algebras. As indicated above the modular nuclearity condition, i.e. the requirement of the nuclearity of the maps (\ref{modNucCon}), for instance, implies the split property. There are, however, other possible approaches, e.g. \cite{buchholz1974product, buchholz1990nuclear, BW86}. Nevertheless, in the present context the maps $\Xi(x)$ take a concrete form by the Bisognano-Wichmann theorem, constituting an advantage in the investigation of their properties. In addition, by taking the modular nuclearity condition into account, we do not have to be concerned with the construction of interpolation type I factors, as is the case in a direct verification of the split property. A strong indication for choosing the modular nuclearity condition as the best suited method for our purposes provides the scalar case, discussed in \cite{L08, erratum}, where this condition was successfully verified\footnote{\label{footnote}As we noticed in the last stages of this thesis, the construction carried out in \cite{L08} is, in fact, incomplete. Fortunately by Erratum \cite{erratum}, the arguments can partially, however presently not to the full extent, be repaired. The reestablishment of the previously claimed results is currently being investigated.}. Due to these considerations, in Section \ref{VerifyingNuclearity} we aim at verifying the nuclearity of the maps $\Xi(x)$ for the models at hand. The implications we shall draw from the fulfillment of this property are summarized in the following theorem.
\begin{theorem}[\cite{BL4,L08}]\label{NuclearityCondition}
Let the modular nuclearity condition be satisfied, i.e. assume the maps $\Xi(x)$ (\ref{modNucCon}) are nuclear for $x\in W_R$. Then, the inclusion $\mathcal{F}(W_R+x)\subset\mathcal{F}(W_R)$ is (distal) split. Moreover, for $\mathcal{O}_{0,x}\in\mathscr{O}$,
\begin{itemize}
\item $\mathcal{F}(\mathcal{O}_{0,x})$ is isomorphic to the hyperfinite type III$_1$ factor,
\item  $\mathcal{F}(\mathcal{O}_{0,x})$ has the vacuum $\Omega$ as a cyclic vector.
\end{itemize}
\end{theorem}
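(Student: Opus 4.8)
The plan is to read off the three assertions from the general theory of split inclusions, once the nuclearity of the maps $\Xi(x)$ has been recast as the hypothesis of the Buchholz--d'Antoni--Longo criterion. First I would identify the abstract inclusion underlying the maps (\ref{modNucCon}). By covariance (Proposition \ref{PropWedgeAlgebra}) one has $\mathcal{F}(W_R+x)=U(x)\mathcal{F}(W_R)U(x)^{-1}$, and since $U(x)\Omega=\Omega$, the map $A\mapsto\Xi(x)A=\Delta^{1/4}U(x)A\Omega$ agrees, up to precomposition with the spatial isomorphism $\mathcal{F}(W_R)\to\mathcal{F}(W_R+x)$, with the canonical map $B\mapsto\Delta^{1/4}B\Omega$ on $\mathcal{F}(W_R+x)$, where $\Delta$ is the modular operator of $(\mathcal{F}(W_R),\Omega)$ furnished by the Bisognano--Wichmann property (Proposition \ref{BisognWich}). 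As nuclearity is stable under composition with bounded maps, the hypothesis that $\Xi(x)$ is nuclear for every $x\in W_R$ is exactly nuclearity of these canonical maps for the inclusions $\mathcal{F}(W_R+x)\subset\mathcal{F}(W_R)$. Invoking the result of \cite{BDL90} in the form used in \cite{BL4} then yields that each such inclusion is split; the qualifier \emph{distal} reflects that $x$ ranges only over $W_R$, so the two wedges keep a strictly positive inner distance.

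Next I would pin down $\mathcal{F}(\mathcal{O}_{0,x})$ as the relative commutant of this split inclusion. Combining Definition (\ref{vNDouble}) with Haag duality (Proposition \ref{BisognWich} ii)) gives
\begin{equation*}
\mathcal{F}(\mathcal{O}_{0,x})=\mathcal{F}(W_L)'\cap\mathcal{F}(W_R+x)'=\mathcal{F}(W_R)\cap\mathcal{F}(W_L+x)=\mathcal{F}(W_R+x)'\cap\mathcal{F}(W_R),
\end{equation*}
so that $\mathcal{F}(\mathcal{O}_{0,x})$ is precisely the relative commutant of $\mathcal{F}(W_R+x)\subset\mathcal{F}(W_R)$. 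The unitary $T$ implementing the split then realizes it spatially as the tensor product $\mathcal{F}(W_L+x)\otimes\mathcal{F}(W_R)$ on $\mathscr{H}\otimes\mathscr{H}$. Since the wedge algebras are factors of type III$_1$ \cite{longo1979notes,driessler1975comments,connes1973classification}, the general structure theory of standard split inclusions \cite{DL84,d1983interpolation} identifies this relative commutant with the unique hyperfinite type III$_1$ factor, which is the first bullet \cite{BL4}.

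The remaining and genuinely delicate point is cyclicity of the \emph{specific} vector $\Omega$ for $\mathcal{F}(\mathcal{O}_{0,x})$, and this is where I expect the main obstacle to lie. The split property by itself only yields that the cyclic and separating vectors of $\mathcal{F}(\mathcal{O}_{0,x})$ form a dense $G_\delta$ set in $\mathscr{H}$ \cite{dixmier1971vecteurs}, with no a priori reason for $\Omega$ to belong to it. To capture $\Omega$ I would argue by Reeh--Schlieder. The split property first provides nontriviality of every double cone algebra, whence by additivity (\ref{additivity}) the double cones contained in a wedge generate that wedge algebra, for which $\Omega$ is cyclic (Proposition \ref{PropWedgeAlgebra}). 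Given $\psi\perp\mathcal{F}(\mathcal{O}_{0,x})\Omega$ and $A\in\mathcal{F}(\mathcal{O}_0)$ for a double cone with $\overline{\mathcal{O}_0}\subset\mathcal{O}_{0,x}$, small real translations keep $\mathcal{O}_0+a\subset\mathcal{O}_{0,x}$, so $a\mapsto\langle\psi,U(a)A\Omega\rangle$ vanishes on a neighbourhood of the origin; by the spectrum condition this is the boundary value of a function holomorphic in the forward tube, hence vanishes identically, and varying $A$ and $a$ together with additivity and cyclicity of $\Omega$ for the global algebra forces $\psi=0$. This establishes the second bullet, and the delicate interplay of nontriviality, additivity and the analyticity input is precisely the step I would treat most carefully.
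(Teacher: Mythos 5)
Your reconstruction of the first two assertions is correct and runs along exactly the chain the paper has in mind: note, however, that the paper gives \emph{no proof} of Theorem \ref{NuclearityCondition} at all --- it is imported from \cite{BL4,L08}, and the discussion preceding it is precisely your outline (reduction of $\Xi(x)$ to the canonical map of the inclusion $\mathcal{F}(W_R+x)\subset\mathcal{F}(W_R)$ via covariance and $U(x)\Omega=\Omega$, the criterion of \cite{BDL90} in the form of \cite{BL4}, identification of $\mathcal{F}(\mathcal{O}_{0,x})$ with the relative commutant through Haag duality, and the tensor realization $\mathcal{F}(W_L+x)\otimes\mathcal{F}(W_R)$). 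A small soft spot there: hyperfiniteness of the relative commutant does not follow merely from its being a tensor product of type III$_1$ wedge factors; in \cite{BL4} injectivity is extracted from the interpolating type I factors, so your appeal to \cite{DL84,d1983interpolation} glosses over that step, though it is harmless since the claim is anyway a citation.

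The genuine gap is in your Reeh--Schlieder argument for the cyclicity of $\Omega$. You assert that ``by additivity (\ref{additivity}) the double cones contained in a wedge generate that wedge algebra''. Equation (\ref{additivity}) does no such thing: it \emph{defines} $\mathcal{F}(\mathfrak{R})$ for \emph{bounded} regions as the join of the double cone algebras (\ref{vNDouble}) contained in $\mathfrak{R}$, whereas the wedge algebras were defined independently from the fields in (\ref{algebras}). The identity $\bigvee_{\mathcal{O}\subset W}\mathcal{F}(\mathcal{O})=\mathcal{F}(W)$ is a form of weak additivity for the intersected net and is \emph{not} available at this stage; indeed the paper lists weak additivity (together with Haag duality for double cones) among the further consequences that ``may be inferred from the modular nuclearity condition'' via \cite{L08, muger1998superselection}, i.e.\ downstream of the Reeh--Schlieder property, not as an input to it. Without this input your closing step is circular: the analyticity argument shows $\psi\perp\mathcal{C}\Omega$ with $\mathcal{C}=\bigvee_{a\in\mathbb{R}^2}\mathcal{F}(\mathcal{O}_0+a)$, and to conclude $\psi=0$ you need $\Omega$ cyclic for $\mathcal{C}$ --- which, by the very same analyticity argument, is equivalent to cyclicity of $\Omega$ for a single double cone algebra, the statement to be proved. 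Mere nontriviality of the double cone algebras, which is all you take from the split, cannot supply this totality. What is needed is a structural use of the split beyond nontriviality: for instance, the unitary $T$ gives $T\mathcal{F}(W_R+x)T^*=\mathcal{F}(W_R+x)\otimes\mathbb{C}$ and $T\mathcal{F}(W_R)'T^*=\mathbb{C}\otimes\mathcal{F}(W_R)'$, whence by factoriality $\mathcal{F}(W_R)=\mathcal{F}(W_R+x)\vee\mathcal{F}(\mathcal{O}_{0,x})$, a genuine generating relation between double cone and wedge algebras. It is exactly this finer analysis that the paper flags as ``a priori not obvious'' and delegates to \cite{L08}; your proposal, as written, does not close this step.
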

Clearly, since any double cone $\mathcal{O}_{a,b}$, $b-a\in W_R$, can be transformed to $\mathcal{O}_{0,x}$ by a boost and a translation, the above stated results corresponding to $\mathcal{O}_{0,x}$ also hold for $\mathcal{F}(\mathcal{O}_{a,b})$ by covariance. By analogous arguments it also follows that the net $\mathcal{F}$ has the (distal) split property for wedges. In addition, further features such as Haag duality for double cones or weak additivity \cite{L08, muger1998superselection} may be inferred from the modular nuclearity condition. Since we are mainly interested in the Reeh-Schlieder property we do not discuss these consequences in more detail. However, we note that the modular nuclearity condition, and hence the split property, implies the compactness of the global gauge group $G$ \cite[Theorem 3.1]{DL84}.

\subsection{Verifying The Modular Nuclearity Condition}\label{VerifyingNuclearity}
In the previous section we discussed powerful techniques available in the literature by means of which the existence of local interacting models with prescribed S-matrix \mbox{$S\in\mathcal{S}$} may be proven. As explained above the verification of the nuclearity of the maps $\Xi(x)$ (\ref{modNucCon}) is thereby of central significance. The strategy followed here for this task is motivated by a similar construction carried out for the case of a particle spectrum consisting of only a single species of neutral massive particles \cite{DocL, L08}. However, the results obtained in this special case, dealing, in particular, only with scalar-valued S-matrices, do not generalize readily to the present setting of a richer particle spectrum. Nevertheless, our analysis shall show that the methods of the scalar case turn out to be effective also in the situation at hand, at least to a great extent.\par
We start our investigation of the maps $\Xi(x)$ (\ref{modNucCon}) by decomposing them into a series
\begin{equation}\label{xi}
\Xi(x)=\sum_{n=0}^{\infty}\Xi_n(x),
\end{equation}
where, since $\Delta^{1/4}$ and $U(x)$ commute with the particle number operator $N$,
\begin{equation}
\Xi_n(x):\mathcal{F}(W_R)\rightarrow\mathscr{H}_n,\qquad \Xi_n(x)A:=P_n\Xi(x) A=\Delta^{1/4}U(x)(A\Omega)_n,\qquad x\in W_R,
\end{equation}
with $(A\Omega)_n:=P_nA\Omega\in\mathscr{H}_n$, $n\in\mathbb{N}_0$. For $\Xi(x)$ to be nuclear we have to prove that all $\Xi_n(x)$ are nuclear with summable nuclear norms
\begin{equation}
\sum_{n=0}^{\infty}||\Xi_n(x)||_1<\infty.
\end{equation}
Namely, in that case (\ref{xi}) converges in nuclear norm, which, therefore, yields that $\Xi(x)$ is in the Banach space $\left(\mathcal{N}(\mathcal{F}(W_R),\mathscr{H}),\|\cdot\|_1\right)$ of nuclear maps.\par
A great facilitation in the analysis of the nuclearity of the maps $\Xi(x)$ is provided by the Bisognano-Wichmann property. For it allows for a concrete formulation of the rather abstract modular nuclearity condition. In particular, it follows from Proposition \ref{BisognWich} that the functions $\Xi_n(x)A$ have in terms of analytic continuation the concrete form
\begin{equation}
(\Xi_n(x)A)^{\boldsymbol{\alpha}}(\boldsymbol{\theta})=\prod_{k=1}^{n}e^{m_{[\alpha_k]}(x_0\sinh\theta_k-x_1\cosh\theta_k)}\cdot(A\Omega)_n^{\boldsymbol{\alpha}}(\theta_1-\tfrac{i\pi}{2},\dots,\theta_n-\tfrac{i\pi}{2}).
\end{equation}
A further simplification is obtained by the obvious fact that for $y\in W_R$ there exists a $t\in\mathbb{R}$ such that
\begin{equation*}
\begin{pmatrix}
\cosh t&\sinh t\\
\sinh t&\cosh t
\end{pmatrix}
\begin{pmatrix}
y^0\\y^1
\end{pmatrix}=
\begin{pmatrix}
0\\y^1
\end{pmatrix}
=:\begin{pmatrix}
0\\s
\end{pmatrix},\qquad s>0.
\end{equation*}
Since the modular operator $\Delta$ of $(\mathcal{F}(W_R),\Omega)$ commutes with the boosts $U(0,t)$ and in view of the invariance of $\Omega$ as well as $\mathcal{F}(W_R)$ under these unitaries, nuclearity of $\Xi(0,s)$ would imply that $\Xi(y)$ was also a nuclear map with the same nuclear norm. Thus, without loss of generality, we shall from now on consider $\Xi(x)=\Xi(0,s):=\Xi(s)$ with $s>0$, giving
\begin{equation}\label{concrete}
(\Xi_n(s)A)^{\boldsymbol{\alpha}}(\boldsymbol{\theta})=\prod_{k=1}^{n}e^{-m_{[\alpha_k]}s\cosh\theta_k}\cdot(A\Omega)_n^{\boldsymbol{\alpha}}(\theta_1-\tfrac{i\pi}{2},\dots,\theta_n-\tfrac{i\pi}{2}).
\end{equation}
As indicated above this concrete form suggests that the properties of the maps $\Xi(s)$ may be deduced from analytic features of the functions $(A\Omega)^{\boldsymbol{\alpha}}_n$, $A\in\mathcal{F}(W_R)$. In order to derive information about their analytic structure, we may make use of their localization. To this end, we consider the time zero fields of $\phi$, namely
\begin{equation}
\varphi_\alpha(x_1):=\sqrt{2\pi}\phi_\alpha(0,x_1),\qquad \pi_\alpha(x_1):=\sqrt{2\pi}(\partial_0\phi)_\alpha(0,x_1),\qquad x_1\in\mathbb{R},
\end{equation}
understood in the sense of operator-valued distributions. For test functions \mbox{$f\in\mathscr{S}(\mathbb{R})\otimes\mathcal{K}$}, we have with regard to the Definition (\ref{field1}) of $\phi$
\begin{eqnarray}\label{zero}
\varphi(f)&=&z^\dagger(\hat{f})+z(J\hat{f}_-),\\
\pi(f)&=&i\left(z^\dagger(\omega\hat{f})-z(\omega J\hat{f}_-)\right),
\end{eqnarray}
where $\hat{f}^{\alpha}(\theta):=\widetilde{f}^\alpha(m_{[\alpha]} \sinh\theta)$, $\hat{f}_-^{\alpha}(\theta):=\widetilde{f}^\alpha(-m_{[\alpha]} \sinh\theta)$, and
\begin{equation}
\left(\omega\,\Phi\right)_1^\alpha(\theta):=\omega_{[\alpha]}(\theta)\,\Phi_1^\alpha(\theta),\qquad \omega_{[\alpha]}(\theta):=m_{[\alpha]}\cosh\theta,\qquad\Phi\in\mathcal{D}_\omega\subset\mathscr{H}_1.
\end{equation}
The operators $\varphi(f)$ and $\pi(f)$ are well-defined on $\mathcal{D}\subset\mathscr{H}$, the subspace of finite particle number. Moreover, $\varphi(f^*)\subset\varphi(f)^*$, $\pi(f^*)\subset\pi(f)^*$ and by analogous arguments to those yielding (\ref{commutatorwedge}) one can prove that for $A\in\mathcal{F}(W_R)$ and supp$\,f \subset\mathbb{R}_-$
\begin{equation}
[\varphi(f),A]\Psi=0,\qquad[\pi(f),A]\Psi=0,\qquad\Psi\in\mathcal{D}.
\end{equation}
By means of the locality properties of $A\in\mathcal{F}(W_R)$, a certain Hardy space structure can be established, as shown below. That is, we may come across analytic functions $h^{\boldsymbol{\alpha}}:\mathcal{T}_{\mathcal{C}}\rightarrow\mathbb{C}$ in a Hardy space $H^2(\mathcal{T}_{\mathcal{C}})$, where $\mathcal{T}_{\mathcal{C}}:=\mathbb{R}^n+i\mathcal{C}\subset\mathbb{C}^n$ is a tube based on an open convex domain $\mathcal{C}\subset\mathbb{R}^n$. For each $\boldsymbol{\lambda}\in\mathcal{C}$ the maps $h_{\boldsymbol{\lambda}}:\boldsymbol{\theta}\mapsto h(\boldsymbol{\theta}+i\boldsymbol{\lambda})$ are elements of $L^2(\mathbb{R}^n)\otimes\mathcal{K}^{\otimes n}$ with finite norms denoted by $\|\cdot\|_2$. Correspondingly, their finite Hardy norms are given by
\begin{equation}
\triplenorm h\triplenorm:=\sup_{\boldsymbol{\lambda}\in\mathcal{C}}\|h_{\boldsymbol{\lambda}}\|_2=\sup_{\boldsymbol{\lambda}\in\mathcal{C}}\left(\sum_{\boldsymbol{\alpha}}\int\limits_{\mathbb{R}^n}d^n\boldsymbol{\theta}\left|h^{\boldsymbol{\alpha}}(\boldsymbol{\theta}+i\boldsymbol{\lambda})\right|^2\right)^{1/2}<\infty.
\end{equation}
In Appendix \ref{hardyAppendix} some useful properties concerning Hardy spaces on tube domains are collected.\par
For further purposes, let $S(a,b)$ denote a certain open strip region in the complex plane, namely
\begin{equation}\label{strip}
S(a,b):=\{\zeta\in\mathbb{C}:a<\text{Im}\,\zeta<b\}.
\end{equation}
Based on the localization of $A$ in the right wedge, we obtain the following generalization of a result established in the scalar case \cite{L08}.

\begin{lemma}\label{basicLemma}
\textit{Consider for} $A\in\mathcal{F}(W_R)$, $n_1,n_2\in\mathbb{N}_0$, \textit{and} $\Psi_i\in\mathcal{H}_{n_i}$, $i=1,2$, \textit{the functionals }$K,K^\dagger:\mathscr{S}(\mathbb{R})\otimes\mathcal{K}\rightarrow\mathbb{C}$
\begin{equation}\label{functionalC}
K(\hat{f}):=\langle\Psi_1,[z(\hat{f}),A]\Psi_2\rangle,\qquad K^\dagger(\hat{f}):=\langle\Psi_1,[z^\dagger(\hat{f}),A]\Psi_2\rangle,
\end{equation}
\textit{with} $\hat{f}^\alpha(\theta):=\widetilde{f}^\alpha(m_{[\alpha]}\sinh\theta)$. \textit{Then, there exists a function} $\hat{K}\in H^2(S(-\pi,0))\otimes\mathcal{K}$ \textit{whose boundary values satisfy}
\begin{equation}
K(\hat{f})=\sum_\alpha\int d\theta\, \hat{K}_{\alpha}(\theta) \overline{\hat{f}^\alpha(\theta)},\qquad
K^\dagger(\hat{f})=-\sum_\alpha\int d\theta\, \hat{K}_{\overline{\alpha}}(\theta-i\pi) \hat{f}^\alpha(\theta),
\end{equation}
\textit{and whose Hardy norm is bounded by}
\begin{equation}\label{hardynorm}
\triplenorm \hat{K}\triplenorm\leq \left((n_1+1)^{1/2}+(n_2+1)^{1/2} \right)\|\Psi_1\|\,\|\Psi_2\|\,\|A\|.
\end{equation}
\end{lemma}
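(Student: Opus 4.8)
The plan is to realise $K$ and $K^{\dagger}$ as the two boundary values of a single Hardy function on the strip $S(-\pi,0)$ and to read off the norm estimate from the three-lines theorem. First I would control the two functionals separately. Moving the annihilation/creation operator onto the bra via $z(\hat f)^{*}\supset z^{\dagger}(\hat f)$ gives, for $\Psi_{i}\in\mathscr{H}_{n_{i}}$,
\[
K(\hat f)=\langle z^{\dagger}(\hat f)\Psi_{1},A\Psi_{2}\rangle-\langle A^{*}\Psi_{1},z(\hat f)\Psi_{2}\rangle,
\]
and the analogous expression for $K^{\dagger}$ with the roles of $z$ and $z^{\dagger}$ interchanged. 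The particle-number bounds (\ref{numberBounds}), together with $\|(N+1)^{1/2}\Psi_{i}\|=(n_{i}+1)^{1/2}\|\Psi_{i}\|$ and $\|N^{1/2}\Psi_{i}\|=n_{i}^{1/2}\|\Psi_{i}\|$, then yield $|K(\hat f)|\le\big((n_{1}+1)^{1/2}+n_{2}^{1/2}\big)\|\hat f\|_{2}\,\|\Psi_{1}\|\,\|\Psi_{2}\|\,\|A\|$ and $|K^{\dagger}(\hat f)|\le\big(n_{1}^{1/2}+(n_{2}+1)^{1/2}\big)\|\hat f\|_{2}\,\|\Psi_{1}\|\,\|\Psi_{2}\|\,\|A\|$. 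By the Riesz lemma these produce $L^{2}$ kernels, which I place on the two edges $\mathrm{Im}\,\zeta=0$ and $\mathrm{Im}\,\zeta=-\pi$ of $S(-\pi,0)$; each edge kernel has $L^{2}$-norm bounded by $\big((n_{1}+1)^{1/2}+(n_{2}+1)^{1/2}\big)\|\Psi_{1}\|\,\|\Psi_{2}\|\,\|A\|$.

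The heart of the argument is gluing the two edges. For $f\in\mathscr{S}(\mathbb{R})\otimes\mathcal{K}$ with $\mathrm{supp}\,f\subset\mathbb{R}_{-}$, the function $\hat f^{\alpha}(\theta)=\widetilde f^{\alpha}(m_{[\alpha]}\sinh\theta)$ extends analytically to $S(0,\pi)$ (Paley--Wiener, using the support of $f$ and $\mathrm{Im}(m_{[\alpha]}\sinh\theta)>0$ there), with the crossing boundary relation $\hat f^{\alpha}(\theta+i\pi)=\hat f_{-}^{\alpha}(\theta)$, hence $(J\hat f_{-})^{\alpha}(\theta)=\overline{\hat f^{\overline\alpha}(\theta+i\pi)}$. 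For such $f$ the vanishing of $[\varphi(f),A]$ (and of $[\pi(f),A]$) on $\mathcal{D}$ gives $[z^{\dagger}(\hat f),A]=-[z(J\hat f_{-}),A]$, and therefore the functional identity $K^{\dagger}(\hat f)=-K(J\hat f_{-})$. Substituting the $L^{2}$ representation of $K$ and relabelling the charge index $\alpha\mapsto\overline\alpha$, this reads
\[
K^{\dagger}(\hat f)=-\sum_{\alpha}\int d\theta\;\hat K_{\overline\alpha}(\theta)\,\hat f^{\alpha}(\theta+i\pi).
\]

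Read in reverse, this pairing identity is precisely the hypothesis of a Hardy-space gluing lemma (Appendix \ref{hardyAppendix}): two $L^{2}$ boundary data whose pairings with the analytic test functions $\hat f^{\alpha}(\cdot+i\pi)$ on the dual strip are related by a contour shift are the boundary values of a single analytic $\hat K\in H^{2}(S(-\pi,0))$. Equivalently, in Fourier space the identity forces the two edge kernels to be related by the multiplier $e^{-\pi p}$, so that their common continuation $\hat K(\zeta)=\int e^{-ip\zeta}g(p)\,dp$ lies in $H^{2}(S(-\pi,0))$. Once $\hat K$ is known to be analytic in $S(-\pi,0)$ with real-direction decay, Cauchy's theorem lets me shift the contour in the display above from $\mathrm{Im}=0$ to $\mathrm{Im}=-\pi$, reproducing exactly the asserted form $K^{\dagger}(\hat f)=-\sum_{\alpha}\int d\theta\,\hat K_{\overline\alpha}(\theta-i\pi)\hat f^{\alpha}(\theta)$, while $\hat K_{\alpha}(\theta)$ at $\mathrm{Im}=0$ represents $K$ by construction.

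The norm bound then follows from the three-lines theorem: for $\hat K\in H^{2}(S(-\pi,0))$ the map $\lambda\mapsto\|\hat K_{\lambda}\|_{2}$ is log-convex, so $\triplenorm\hat K\triplenorm=\sup_{\lambda}\|\hat K_{\lambda}\|_{2}$ is attained on the edges and is dominated by the larger of the two edge norms, hence by $\big((n_{1}+1)^{1/2}+(n_{2}+1)^{1/2}\big)\|\Psi_{1}\|\,\|\Psi_{2}\|\,\|A\|$. I expect the main obstacle to be the gluing step: one must verify that the commutator identities, which hold only on the dense domain $\mathcal{D}$ of finite particle number and only for $f$ supported in $\mathbb{R}_{-}$, are nonetheless enough to determine a genuine Hardy function --- that is, that the available analytic test functions $\hat f$ are total, that the contour shifts are justified by adequate decay, and that the Paley--Wiener input of Appendix \ref{hardyAppendix} applies with the unbounded operators $z,z^{\dagger}$ kept under control throughout by (\ref{numberBounds}).
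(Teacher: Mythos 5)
Your opening reductions --- the particle-number bounds (\ref{numberBounds}) with Riesz' lemma producing $L^2$ edge kernels $\hat K,\hat K^\dagger$ with the stated norms, and the plan to finish with a three-lines/log-convexity argument --- coincide with the paper's proof. The genuine gap is the gluing step, exactly where you flagged the risk. The identity $K^\dagger(\hat f)=-K(J\hat f_-)$ uses only $[\varphi(f),A]=0$ and is available only for $\mathrm{supp}\,f\subset\mathbb{R}_-$. Substituting $p=m_{[\alpha]}\sinh\theta$, it states precisely that the single combination $N_\alpha(p):=\omega_{[\alpha]}(p)^{-1}\bigl[\hat K^\dagger_\alpha(\theta(p))+\hat K_{\overline{\alpha}}(-\theta(p))\bigr]$ integrates to zero against all $\widetilde f^\alpha$ with $\mathrm{supp}\,f\subset\mathbb{R}_-$. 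But the $L^2$-closure of this class of $\widetilde f$ is only the Hardy subspace of one half-plane, whose orthogonal complement is the infinite-dimensional Hardy subspace of the opposite half-plane; hence all you may conclude is that $N_\alpha$ lies in a half-plane Hardy class --- one analytic condition on one combination of two unknown kernels. Your claimed multiplier relation between the two edge kernels does not follow, and the ``gluing lemma'' is false at this level of generality: take $\hat K\equiv 0$ and $\hat K^\dagger_\alpha(\theta(p)):=\omega_{[\alpha]}(p)\,g(p)$ with $g\neq 0$ in the half-plane Hardy class of sufficient decay; the pairing identity holds, yet $\hat K^\dagger$ is not the lower boundary value of the (zero) continuation of $\hat K$. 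Even granting a continuation of $\hat K$ by other means, pairing on the restricted class fixes $\hat K^\dagger_\alpha+\hat K_{\overline\alpha}(\cdot-i\pi)$ only modulo that Hardy complement, so the asserted boundary relation for all $f\in\mathscr{S}(\mathbb{R})\otimes\mathcal{K}$ is out of reach.

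The missing idea is the second, independent relation carried by the canonical momentum: $\pi(f)=i\bigl(z^\dagger(\omega\hat f)-z(\omega J\hat f_-)\bigr)$ also commutes with $A$ for $\mathrm{supp}\,f\subset\mathbb{R}_-$, and the relative sign between $\varphi$ and $\pi$ is what disentangles $\hat K$ from $\hat K^\dagger$ (two equations for two unknowns). This is exactly how the paper proceeds: it treats $K_-(f)=\langle\Psi_1,[\varphi(f),A]\Psi_2\rangle$ and $K_+(f)=\langle\Psi_1,[\pi(f),A]\Psi_2\rangle$ as distributions supported in $\mathbb{R}_+$, applies the Paley--Wiener theorem to \emph{each}, obtaining two functions $\widetilde K_\pm$ analytic in the lower half-plane with polynomial bounds, pulls them back through $\sinh$ to $S(-\pi,0)$ as in (\ref{Cpm}), and sets $\hat K_{\overline\alpha}=\tfrac{1}{2}(\hat K_{-,\alpha}+i\hat K_{+,\alpha})$ and $\hat K^\dagger_\alpha(\theta)=\tfrac{1}{2}(\hat K_{-,\alpha}(-\theta)-i\hat K_{+,\alpha}(-\theta))$; the boundary relation then follows from $\hat K_{\pm,\alpha}(\theta-i\pi)=\pm\hat K_{\pm,\alpha}(-\theta)$. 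One further point: Paley--Wiener yields only polynomial bounds in the interior of the strip, so the finiteness of the intermediate $L^2$ norms --- which your three-lines step presupposes, and which would indeed have been automatic in the (unavailable) multiplier picture --- needs a separate device; the paper damps with $e^{-im_{[\alpha]}s\sinh\zeta}$, applies the three-lines theorem to $\hat K^{(s)}$, whose edge norms agree with those of $\hat K$, and removes the regularization by monotonicity as $s\to 0$.
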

\begin{proof}
Firstly, with regard to the particle number bounds (\ref{numberBounds}) and the Cauchy-Schwarz inequality, we have
\begin{eqnarray*}
\left|K(\hat{f})\right|&\leq& \|z^\dagger(\hat{f})\Psi_1\|\|A\Psi_2\|+\|A^*\Psi_1\|\|z(\hat{f})\Psi_2\|\\
&\leq&\left(\sqrt{n_1+1}+\sqrt{n_2}\right)\|\Psi_1\|\|\Psi_2\|\|A\|\,\|\hat{f}\|,\\
\left|K^\dagger(\hat{f})\right|&\leq&\left(\sqrt{n_1}+\sqrt{n_2+1}\right)\|\Psi_1\|\|\Psi_2\|\|A\|\,\|\hat{f}\|.
\end{eqnarray*}
That is, by application of Riesz' Lemma \cite[Theorem II.4]{RS1}, it follows that the distributions $K$ and $K^\dagger$ are given by integration against functions in $\mathscr{H}_1=L^2(\mathbb{R})\otimes\mathcal{K}$ denoted by  $\hat{K}^{\#}\in\{\hat{K},\,\hat{K}^\dagger\}$. Their norms are bounded by
\begin{equation}\label{esti1}
\|\hat{K}^{\#}\|_2\leq\left(\sqrt{n_1+1}+\sqrt{n_2+1}\right)\|\Psi_1\|\|\Psi_2\|\|A\|.
\end{equation}
Next, we show by means of the time zero fields (\ref{zero}) the claimed analytic structure. To this end, consider the functionals $K_\pm:\mathscr{S}(\mathbb{R})\otimes\mathcal{K}\rightarrow\mathbb{C}$,
\begin{equation}
K_-(f):=\langle\Psi_1,[\varphi(f),A]\Psi_2\rangle,\qquad K_+(f):=\langle\Psi_1,[\pi(f),A]\Psi_2\rangle,
\end{equation}
which are related to $K^{\#}$ via
\begin{eqnarray}
z(\hat{f})&=&\dfrac{1}{2}\left(\varphi(f^*)+i\pi(\omega^{-1}f^*)\right),\\ z^\dagger(\hat{f})&=&\dfrac{1}{2}\left(\varphi(f)-i\pi(\omega^{-1}f)\right).
\end{eqnarray}
Note that $\widehat{f^*}^\alpha(\theta)=(J\hat{f}_-)^\alpha(\theta)$. Moreover, since $K_\pm$ vanishes for supp$\,f\subset\mathbb{R}_-$ it follows that supp$\,K_\pm\subset\mathbb{R}_+$. Thus, there exist functions $p\mapsto\widetilde{K}_{\pm}(p)$ which are analytic in the lower half plane and have polynomial bounds at the real boundary and at infinity such that the Fourier transforms of $K_\pm$ are their distributional boundary values \cite[Thm. IX.16]{RS2}. Taking into account that $\sinh$ maps the strip $S(-\pi,0)$ to the lower half plane, we have that
\begin{equation}\label{Cpm}
\hat{K}_{+,\alpha}(\theta):=\widetilde{K}_{+,\alpha}(m_{[\alpha]} \sinh\theta),\qquad \hat{K}_{-,\alpha}(\theta):=m_{[\alpha]} \cosh\theta\,\widetilde{K}_{-,\alpha}(m_{[\alpha]} \sinh\theta),
\end{equation}
are analytic in this strip. Hence, it follows that
\begin{eqnarray*}
K(\hat{f})&=&\dfrac{1}{2}\left(K_-(f^*)+iK_+(\omega^{-1}f^*)\right)=\dfrac{1}{2}\int dp\left(\widetilde{K}_{-,\alpha}(p)+i\omega(p)^{-1}\widetilde{K}_{+,\alpha}(p)\right)\overline{\widetilde{f}^{\overline{\alpha}}(p)}\\
&=&\dfrac{1}{2}\int d\theta \left(\hat{K}_{-,\overline{\alpha}}(\theta)+i\hat{K}_{+,\overline{\alpha}}(\theta)\right)\overline{\hat{f}^{\alpha}(\theta)}
=\int d\theta\, \hat{K}_{\alpha}(\theta)\overline{\hat{f}^\alpha(\theta)},
\end{eqnarray*}
and
\begin{eqnarray*}
K^\dagger(\hat{f})=\dfrac{1}{2}\left(K_-(f)-iK_+(\omega^{-1}f)\right)
&=&\dfrac{1}{2}\int d\theta \left(\hat{K}_{-,\alpha}(-\theta)-i\hat{K}_{+,\alpha}(-\theta)\right)\hat{f}^\alpha(\theta)\\
&=&
\int d\theta\, \hat{K}^\dagger_\alpha(\theta)\hat{f}^\alpha(\theta),
\end{eqnarray*}
respectively. These equations, moreover, yield
\begin{equation}
\hat{K}_{\overline{\alpha}}(\theta)=\dfrac{1}{2}\left(\hat{K}_{-,\alpha}(\theta)+i\hat{K}_{+,\alpha}(\theta)\right),\qquad \hat{K}_\alpha^\dagger(\theta)=\dfrac{1}{2}\left(\hat{K}_{-,\alpha}(-\theta)-i\hat{K}_{+,\alpha}(-\theta)\right),
\end{equation}
and, thus, the analyticity of $\theta\mapsto \hat{K}_\alpha(\theta)$ in the strip $S(-\pi,0)$. Furthermore, it follows that the boundary values of $\hat{K}_\pm$ also exist as functions in $L^2(\mathbb{R})\otimes\mathcal{K}$. Since $\hat{K}_{\pm,\alpha}(\theta-i\pi)=\pm\hat{K}_{\pm,\alpha}(-\theta)$ for $\theta\in\mathbb{R}$ by (\ref{Cpm}), we have $\hat{K}_\alpha^\dagger(\theta)=-\hat{K}_{\overline{\alpha}}(\theta-i\pi)$.\par
It, therefore, remains to prove that the function $\hat{K}$ is an element of the Hardy space $H^2(S(-\pi,0))\otimes\mathcal{K}$. For that purpose, we consider $\hat{K}_\alpha^{(s)}(\zeta):=e^{-im_{[\alpha]}s\sinh\zeta}\hat{K}_\alpha(\zeta)$, with $s>0$, which clearly is analytic in the strip $S(-\pi,0)$. The identity
\begin{equation}\label{esti2}
\left|\hat{K}^{(s)}_{-\lambda,\alpha}(\theta)\right|=\frac{1}{2}e^{-m_{[\alpha]}s\sin\lambda\cosh\theta}\left|\hat{K}_{-,\overline{\alpha}}(\theta-i\lambda)+i\hat{K}_{+,\overline{\alpha}}(\theta-i\lambda)\right|
\end{equation}
yields that $\hat{K}^{(s)}_{-\lambda,\alpha}\in L^2(\mathbb{R})$ for all $\lambda\in[0,\pi]$ and $s>0$, since $\theta\mapsto\hat{K}_{\pm,\alpha}(\theta-i\lambda)$ is bounded by polynomials in $\cosh\theta$ for $\theta\rightarrow\infty$ and $0<\lambda<\pi$. Noting that $\|\hat{K}^{(0)}_{0/-\pi}\|_2=\|\hat{K}^{(s)}_{0/-\pi}\|_2$ and with regard to (\ref{esti1}), the three lines theorem may be applied and we arrive at
\begin{equation}\label{esti3}
\|\hat{K}^{(s)}_{-\lambda}\|_2\leq\left(\sqrt{n_1+1}+\sqrt{n_2+1}\right)\|\Psi_1\|\|\Psi_2\|\|A\|,\qquad 0\leq\lambda\leq\pi.
\end{equation}
Since (\ref{esti2}) is monotonically increasing for $s\rightarrow 0$, it follows that the uniform bound (\ref{esti3}) holds also for $\hat{K}_{-\lambda}=\hat{K}_{-\lambda}^{(0)}$, with $0\leq\lambda\leq \pi$. Thus,
\begin{equation*}
\hat{K}\in H^2(S(-\pi,0))\otimes\mathcal{K},\qquad\triplenorm \hat{K}\triplenorm\leq \left((n_1+1)^{1/2}+(n_2+1)^{1/2} \right)\|\Psi_1\|\,\|\Psi_2\|\,\|A\|.
\end{equation*}
\end{proof}
Putting $\Psi_1=\Psi_2=\Omega$ in (\ref{functionalC}), we have
\begin{equation}
\int d\theta  \overline{\hat{f}_\alpha(\theta)}\hat{K}_{\alpha}(\theta)=\langle\Omega,z(\hat{f})A\Omega\rangle=\langle\hat{f},A\Omega\rangle=\int d\theta\overline{\hat{f}_\alpha(\theta)}(A\Omega)^\alpha_1(\theta),
\end{equation}
which implies that the single particle functions corresponding to operators localized in $W_R$ are boundary values of functions in $H^2(S(-\pi,0))\otimes\mathcal{K}$ with norms $\triplenorm(A\Omega)_1\triplenorm\leq 2\|A\|$. In this special case, however, it is obvious that the bound can actually be improved to $\triplenorm(A\Omega)_1\triplenorm\leq \|A\|$.\par
From this observation one can infer that the map $\Xi_1(s)$ is nuclear. Namely, it is possible to consider $\Xi_1(s)$ as the composition of two maps denoted by $\Upsilon_1$ and $X_1(s)$ as the following diagram explains.
    \begin{center}
        \begin{tikzpicture}
        \begin{scope}
                \draw[fill] (0.3,2.2) node{$\mathcal{F}(W_R)$};
\draw[fill] (-1,0) node{$H^2(S(-\pi,0))\otimes\mathcal{K}$};
\draw[fill] (2.8,0) node{$\mathscr{H}_1$};
\draw[fill] (2.2,1.3) node{$\Xi_1(s)$};
\draw[fill] (-1,1.3) node{$\Upsilon_1$};
\draw[fill] (1.7,-.3) node{$X_1(s)$};
        \begin{scope}[->]
            \draw (0,2) -- (-1,0.5) node[anchor=south] {};
            \draw (.8,0) -- (2.4,0) node[anchor=east] {};
            \draw (0.7,2) -- (2.4,0.2) node[anchor=east] {};
        \end{scope}
        \end{scope}
        \end{tikzpicture} 
        \end{center}
 The map $\Upsilon_1$ from $\mathcal{F}(W_R)$ to $H^2(S(-\pi,0))\otimes\mathcal{K}$ acts according to $\Upsilon_1A:=(A\Omega)_1$, whereas $X_1(s):H^2(S(-\pi,0))\otimes\mathcal{K}\rightarrow\mathscr{H}_1$ according to $(A\Omega)_1\mapsto\Xi_1(s)A$. We have already shown that $\Upsilon_1$ is bounded as a linear map between the Banach spaces $(\mathcal{F}(W_R),\|\cdot\|_{\mathcal{B}(\mathscr{H})})$ and $(H^2(S(-\pi,0))\otimes\mathcal{K},\triplenorm\cdot\triplenorm)$. Comparing with Equation (\ref{concrete}) we find for $X_1(s)$ the following explicit action
\begin{equation}
(X_1(s)h)^\alpha(\theta):=e^{-sm_{[\alpha]} \cosh\theta}h^\alpha(\theta-\tfrac{i\pi}{2}).
\end{equation}
Since any $h^\alpha\in H^2(S(-\pi,0))$ is an analytic function which, moreover, has the property that $h_\lambda(\theta)\rightarrow 0$ uniform in $\lambda$ for $|\theta|\rightarrow\infty$ if $-\pi\leq\lambda\leq 0$, and that $h_\lambda$ converges in the norm topology of $L^2(\mathbb{R})$ as $\lambda$ approaches the boundary of $S(-\pi,0)$, cf. Appendix \ref{hardyAppendix}, we may consider its Cauchy integral over a closed curve $\gamma$ around the point $\theta-\tfrac{i\pi}{2}$ and then deform $\gamma$ to the boundary of the strip $S(-\pi,0)$. This gives the identity
\begin{equation}
h^\alpha(\theta-\tfrac{i\pi}{2})=\frac{1}{2\pi i}\oint_\gamma d\zeta'\frac{h^\alpha(\zeta')}{\zeta'-(\theta-\tfrac{i\pi}{2})}=\frac{1}{2\pi i}\int_\mathbb{R}d\theta'\left(\frac{h^\alpha(\theta'-i\pi)}{\theta'-\theta-\tfrac{i\pi}{2}}-\frac{h^\alpha(\theta')}{\theta'-\theta+\tfrac{i\pi}{2}}\right).
\end{equation}
Introducing integral operators $T_{s,\pm}$ on $L^2(\mathbb{R})\otimes\mathcal{K}$ defined by the kernels
\begin{equation}
T^{[\alpha]}_{s,\pm}(\theta,\theta'):=\frac{1}{\pi i}\frac{e^{-sm_{[\alpha]} \cosh\theta}}{\theta'-\theta\pm \tfrac{i\pi}{2}},
\end{equation}
we obtain
\begin{equation}
(X_1(s)h)^\alpha(\theta)=\frac{1}{2}(T_{s,-}\,h_{-\pi}-T_{s,+}h_{0})^\alpha(\theta).
\end{equation}
It can be shown \cite[Appendix B.2]{DocL} that $T_{s,\pm}$ are trace class operators on $L^2(\mathbb{R})\otimes\mathcal{K}$ for any $s>0$, with trace norms $\|T_{s,\pm}\|_1\leq D\|T^{[\alpha_\circ]}_{s,\pm}\|_1\leq c(s,m_{[\alpha_\circ]}:=m_{\circ})<\infty$, cf. (\ref{tracenorm}). The boundedness of the maps $h_\alpha\mapsto h_{0,\alpha}$ and $h_\alpha\mapsto h_{-\pi,\alpha}$ from $H^2(S(-\pi,0))$ to $L^2(\mathbb{R})$, with norms not exceeding one, implies that
\begin{equation}
\|X_1(s)\|_1\leq c(s,m_{\circ}),
\end{equation}
resulting in the nuclearity of $\Xi_1(s)$, cf. Lemma \ref{propertiesNuclearMaps}.
\begin{lemma}\label{oneparticle}
In a model with S-matrix $S\in\mathcal{S}$ the map $\Xi_1(s):\mathcal{F}(W_R)\rightarrow\mathscr{H}_1$, $A\mapsto \Delta^{1/4}U(\undertilde{s})(A\Omega)_1$, $\undertilde{s}:=(0,s)$, is nuclear for any $s>0$.
\end{lemma}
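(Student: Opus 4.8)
The plan is to assemble the factorization already prepared in the preceding discussion, namely to write $\Xi_1(s)$ as the composition $\Xi_1(s)=X_1(s)\circ\Upsilon_1$ through the Hardy space $H^2(S(-\pi,0))\otimes\mathcal{K}$, and then to invoke the ideal property of nuclear maps: a bounded operator composed with a nuclear (trace class) operator is again nuclear, cf.\ Lemma~\ref{propertiesNuclearMaps}.

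\textbf{Boundedness of $\Upsilon_1$.} First I would specialize Lemma~\ref{basicLemma} to $\Psi_1=\Psi_2=\Omega$ and $n_1=n_2=0$. As observed immediately after that lemma, this identifies the single-particle function $(A\Omega)_1$ of any $A\in\mathcal{F}(W_R)$ with the boundary value of an element of $H^2(S(-\pi,0))\otimes\mathcal{K}$, and the improved estimate $\triplenorm (A\Omega)_1\triplenorm\leq\|A\|$ shows that $\Upsilon_1\colon A\mapsto (A\Omega)_1$ is a bounded linear map from $(\mathcal{F}(W_R),\|\cdot\|_{\mathcal{B}(\mathscr{H})})$ into $(H^2(S(-\pi,0))\otimes\mathcal{K},\triplenorm\cdot\triplenorm)$.

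\textbf{Trace-class property of $X_1(s)$.} By the Bisognano--Wichmann property (Proposition~\ref{BisognWich}) the map $X_1(s)$ has the concrete action $(X_1(s)h)^\alpha(\theta)=e^{-sm_{[\alpha]}\cosh\theta}\,h^\alpha(\theta-\tfrac{i\pi}{2})$. Representing the evaluation at the interior point $\theta-\tfrac{i\pi}{2}$ by a Cauchy integral over a closed contour and then deforming the contour onto the two boundary lines $\mathrm{Im}\,\zeta=0,-\pi$ of the strip — legitimate because Hardy functions decay uniformly as $|\mathrm{Re}\,\zeta|\to\infty$ and converge in $L^2$ towards the boundary — expresses $X_1(s)$ as $\tfrac12(T_{s,-}h_{-\pi}-T_{s,+}h_0)$ with the explicit integral kernels $T^{[\alpha]}_{s,\pm}(\theta,\theta')$. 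I would then cite the trace-class estimate $\|T_{s,\pm}\|_1\leq c(s,m_\circ)<\infty$ from \cite[Appendix~B.2]{DocL}; combined with the fact that the boundary-value maps $h\mapsto h_0$ and $h\mapsto h_{-\pi}$ have norm at most one from $H^2(S(-\pi,0))$ into $L^2(\mathbb{R})$, this yields $\|X_1(s)\|_1\leq c(s,m_\circ)$.

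\textbf{Conclusion and main obstacle.} Combining the two steps, $\Xi_1(s)=X_1(s)\circ\Upsilon_1$ is nuclear with $\|\Xi_1(s)\|_1\leq c(s,m_\circ)\,\|\Upsilon_1\|$. The essential technical point — the only place where genuine work is needed — is the trace-class bound on $T_{s,\pm}$: one must verify that the damping factor $e^{-sm_{[\alpha]}\cosh\theta}$ dominates the Cauchy kernels strongly enough to render the singular integral operators trace class, with a finite norm controlled by $s$ and the mass gap $m_\circ$. This is exactly where the strict positivity $s>0$ enters; at $s=0$ the damping disappears and the argument breaks down. The richer charge structure causes no extra difficulty at this order, since $X_1(s)$ acts diagonally in the index $\alpha$ and the finite-dimensional factor $\mathcal{K}$ contributes only the overall factor $D=\dim\mathcal{K}$ in the trace norm.
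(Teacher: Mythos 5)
Your proposal is correct and follows essentially the same route as the paper: the factorization $\Xi_1(s)=X_1(s)\circ\Upsilon_1$ through $H^2(S(-\pi,0))\otimes\mathcal{K}$, boundedness of $\Upsilon_1$ via Lemma \ref{basicLemma} with $\Psi_1=\Psi_2=\Omega$ and the improved bound $\triplenorm(A\Omega)_1\triplenorm\leq\|A\|$, the Cauchy-integral representation of evaluation at $\theta-\tfrac{i\pi}{2}$ yielding $X_1(s)=\tfrac12(T_{s,-}h_{-\pi}-T_{s,+}h_0)$, the trace-class estimate on $T_{s,\pm}$ from \cite[Appendix B.2]{DocL}, and the ideal property of nuclear maps from Lemma \ref{propertiesNuclearMaps}. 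You also correctly identify where $s>0$ is essential and that the internal degrees of freedom enter only through the factor $D=\dim\mathcal{K}$ in the trace norm, exactly as in the paper's argument.
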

In order to prove nuclearity of the maps $\Xi_n(s)$ with $n>1$, we may proceed in the same manner as in the case $n=1$. Therefore, we start with the investigation of analytic and boundedness properties of $(A\Omega)_n^{\boldsymbol{\alpha}}$, $n>1$, by means of Lemma \ref{basicLemma}. Since
$$\sqrt{n!}\,(A\Omega)_n^{\boldsymbol{\alpha}}(\boldsymbol{\theta})=\langle z_{\alpha_1
}^\dagger(\theta_1)\cdots z_{\alpha_n
}^\dagger(\theta_n)\Omega,A\Omega\rangle=\langle z_{\alpha_2
}^\dagger(\theta_2)\cdots z_{\alpha_n
}^\dagger(\theta_n)\Omega,[z_{\alpha_1
}(\theta_1),A]\Omega\rangle,$$
Lemma \ref{basicLemma} yields for $A\in\mathcal{F}(W_R)$ analyticity of the expression above in the variable $\theta_1$ in the strip $S(-\pi,0)$, with boundary value at Im$(\theta_1)=-\pi$ given by
\begin{multline}\label{Rechnung}
\langle z_{\alpha_2
}^\dagger(\theta_2)\cdots z_{\alpha_n
}^\dagger(\theta_n)\Omega,[A,z^\dagger_{\overline{\alpha}_1}
(\theta_1)]\Omega\rangle\\
=\langle z_{\alpha_2
}^\dagger(\theta_2)\cdots z_{\alpha_n
}^\dagger(\theta_n)\Omega,Az^\dagger_{\overline{\alpha}_1}
(\theta_1)\Omega\rangle-\langle z_{\overline{\alpha}_1}
(\theta_1) z_{\alpha_2
}^\dagger(\theta_2)\cdots z_{\alpha_n
}^\dagger(\theta_n)\Omega,A\Omega\rangle.
\end{multline}
This calculation suggests the general analysis of matrix elements of the form
\begin{equation*}
\langle z_{\alpha_{k+1}}^\dagger(\theta_{k+1})\cdots z_{\alpha_{n}}^\dagger(\theta_{n})\Omega,Az_{\overline{\alpha}_{k}}^\dagger(\theta_{k})\cdots z_{\overline{\alpha}_{1}}^\dagger(\theta_{1})\Omega\rangle, \qquad A\in\mathcal{F}(W_R),
\end{equation*}
and, in particular, the investigation of ``contracted'' matrix elements, corresponding to the second term in Equation (\ref{Rechnung}). These contractions between the variables $\theta_{k+1},\dots,\theta_n$ and $\theta_k,\dots,\theta_1$ on the respective sides of the scalar product result from repeated application of the exchange relations (\ref{exchange}).\par
For the description of such contracted matrix elements we adopt the notation introduced in \cite{DocL}, as follows. The set of contractions $\mathscr{C}_{n,k}$ is defined to be the power set of $\left\lbrace k+1,\dots,n\right\rbrace\times\left\lbrace 1,\dots,k \right\rbrace$, with integers $0\leq k\leq n$, hence $\mathscr{C}_{n,0}=\mathscr{C}_{n,n}=\emptyset$. Furthermore, a contraction $C\in\mathscr{C}_{n,k}$ is parametrized by
\begin{itemize}
\item an ordered set of ``right'' indices $1\leq r_1<\dots<r_{|C|}\leq k$,
\item an unordered set of pairwise different ``left'' indices $k+1\leq l_1,\dots, l_{|C|}\leq n$, and
\item a permutation of $\left\lbrace l_1,\dots, l_{|C|} \right\rbrace$,
\end{itemize}
giving pairs $(l_i,r_i)\in C$. The index $|C|\leq \text{min}\{k,n-k\}$ denotes the length of the contraction, i.e. the number of pairs.\par
With these notations we define for a $C\in\mathscr{C}_{n,k}$ the corresponding contracted matrix element as follows
\begin{multline}\label{conmatrixelements}
\langle A\rangle_C^{\alpha_{k+1}\,\stackrel{\widehat{l}}{\dots}\,\alpha_n\,\alpha_1\,\stackrel{\widehat{r}}{\dots}\,\alpha_k}(\theta_1,\stackrel{\widehat{r}}{\dots},\theta_k,\theta_{k+1},,\stackrel{\widehat{l}}{\dots},\theta_n)\\
:=\langle z_{\alpha_{k+1}}^\dagger(\theta_{k+1})\stackrel{\widehat{l}}{\cdots} z_{\alpha_{n}}^\dagger(\theta_{n})\Omega,A\,z_{\overline{\alpha}_{k}}^\dagger(\theta_{k})\stackrel{\widehat{r}}{\cdots} z_{\overline{\alpha}_{1}}^\dagger(\theta_{1})\Omega \rangle,
\end{multline}
where the symbols $\widehat{l}$ and $\widehat{r}$ indicate the omission of $z^\dagger(\theta_{l_i})$, $z^\dagger(\theta_{r_i})$, $\alpha_{r_i/l_i}$ and $\theta_{r_i/l_i}$ with $i=1,\dots,|C|$. Note, in particular, the order of the indices. Having regard to the particle number bounds (\ref{numberBounds}) and $A$ being a bounded operator, the contracted matrix elements (\ref{conmatrixelements}) are well-defined tempered distributions on $\mathscr{S}(\mathbb{R}^{n-2|C|})\otimes\mathcal{K}^{\otimes (n-2|C|)}$. In particular, for functions $F\in L^2(\mathbb{R}^{n-k-|C|})\otimes\mathcal{K}^{\otimes (n-k-|C|)}$ and $G\in L^2(\mathbb{R}^{k-|C|})\otimes\mathcal{K}^{\otimes (k-|C|)}$ we find by means of the Cauchy-Schwarz inequality the bounds
\begin{eqnarray}\label{estimate1}
|\langle A\rangle_C\left(F\otimes G\right)|&=&\Big|\int d^{n-2|C|}\underline{\theta}\Big(\overline{(F\otimes G)(\underline{\theta})},\langle A\rangle_C(\underline{\theta})\Big)\Big|\nonumber\\
&=&\Big|\int d^{n-2|C|}\underline{\theta}\sum_{\underline{\alpha}}\left(F\otimes G\right)^{\underline{\alpha}}(\underline{\theta})\,\langle A\rangle_C^{\underline{\alpha}}(\underline{\theta})\Big|\\&\leq& \sqrt{(n-k-|C|)!}\sqrt{(k-|C|)!}\,||F||\,||G||\,||A||.\nonumber
\end{eqnarray}
Note that $\underline{\theta}$ and $\underline{\alpha}$ are $(n-2|C|)$-tuples.\par
Returning now to our example (\ref{Rechnung}) we have, in particular,
\begin{eqnarray}\label{sample}
\hspace{-.8cm}&&\hspace{-.8cm}\langle z_{\alpha_2
}^\dagger(\theta_{2})\cdots z_{\alpha_n
}^\dagger(\theta_{n})\Omega,[A,z^\dagger_{\overline{\alpha}_1}(\theta_{1})
]\Omega\rangle\nonumber \\
&&=\langle z_{\alpha_2
}^\dagger(\theta_{2})\cdots z_{\alpha_n
}^\dagger(\theta_{n})\Omega,Az^\dagger_{\overline{\alpha}_1}(\theta_{1})
\Omega\rangle
-\sum_{l=2}^{n}\delta(\theta_l-\theta_1)\delta^{\alpha_l\overline{\xi_{l-1}}}\,\delta^{\overline{\alpha}_1\overline{\xi}_1}\prod_{m=2}^{l-1}S^{\alpha_m\overline{\xi}_m}_{\overline{\xi}_{m-1}\beta_m}(\theta_1-\theta_m)\nonumber\\
&&\hspace{.8cm}\times\langle z_{\beta_2
}^\dagger(\theta_{2})\cdots z_{\beta_{l-1}
}^\dagger(\theta_{l-1}) z_{\alpha_{l+1}
}^\dagger(\theta_{l+1}) \cdots z_{\alpha_n
}^\dagger(\theta_{n})\Omega,A
\Omega\rangle\nonumber\\
&&=\sum_{C\in\mathscr{C}_{n,1}}(-1)^{|C|}\,\hat{\delta}_C^{\boldsymbol{\alpha'}}(\boldsymbol{\theta})\,\hat{S}_C^{\boldsymbol{\beta'}}(\boldsymbol{\theta})\,\langle A\rangle_C^{\underline{\gamma}}(\underline{\theta}),
\end{eqnarray}
where the exchange relations (\ref{exchange}) were used and the factors $\hat{\delta}_{C}^{\boldsymbol{\alpha'}}$ and $\hat{S}_{C}^{\boldsymbol{\beta'}}$ are short hand for the product of the delta distribution with the Kronecker deltas and the product of S-matrices respectively. We shall in the following refer to such kernels as completely contracted matrix elements and denote them by $\langle A\rangle^{\text{con}}_{n,1}$ for the case $k=1$. Since we are interested in investigating analytic and boundedness properties of $(A\Omega)_n^{\boldsymbol{\alpha}}$, $n>1$, we have to consider general $0\leq k\leq n$. From the computation performed above it is obvious that, in contrast to the single particle case, the underlying S-matrix has an important impact on the analytic structure, and hence on the nuclearity properties of the maps $\Xi_n(s)$. It turns out that it is necessary to restrict to a certain subclass $\mathcal{S}_0\subset\mathcal{S}$ of S-matrices, in order to establish the desired result\footnote{This intermediate step was also necessary in the special case of scalar field theories \cite{L08}.}. We define the subfamily $\mathcal{S}_0$ as follows.
\begin{definition}\label{regularS}
The subset $\mathcal{S}_0\subset\mathcal{S}$ consists of those S-matrices which extend to bounded and analytic functions on enlarged strips $S(-\kappa,\pi+\kappa)$, with $0<\kappa\leq\kappa(S)$ and
\begin{equation}\label{kappa}
\kappa(S):=\max\{\text{Im}\,\theta:\theta\in S(0,\tfrac{\pi}{2}),\,\,\det S(\theta)\neq0\}.
\end{equation}
Furthermore, these S-matrices satisfy
\begin{equation}\label{Skappabounded}
\|S\|_\kappa:=\sup\{\|S(\zeta)\|:\zeta\in\overline{S(-\kappa,\pi+\kappa)}\}<\infty,\qquad 0<\kappa\leq\kappa(S).
\end{equation}
The family $\mathcal{S}_0$ is referred to as the set of regular S-matrices.
\end{definition}
Note that regular S-matrices are smooth on the real line and $\|\partial^n_\theta S(\theta)\|\leq c_n$ for all $\theta\in\mathbb{R}$, $n\in\mathbb{N}_0$ and constants $c_n$, depending on $S$, as can be derived by application of Cauchy's integral formula.\par
As the concrete calculation (\ref{sample}) already illustrates, certain products of S-matrices arise. Depending on the contraction $C$, they can become rather complex. Even though it is possible to find an explicit formula for these products of S-matrices\footnote{Unpublished notes by the author.}, it is more convenient to take the underlying action of the permutation group into account. Therefore, using the notation introduced above, we associate to a contraction $C\in\mathscr{C}_{n,k}$ the permutations
\begin{equation}\label{PiContraction}
\pi_{\rho}:=\prod_{i=1}^{|C|}\tau_{r_i-i+1}\cdot\tau_{r_i-i+2}\cdots\tau_{k-i},\qquad
\pi_{\lambda}:=\prod_{i=1}^{|C|}\tau_{l_i+u_i-1}\cdot\tau_{l_i+u_i-2}\cdots\tau_{k+i},
\end{equation}
with $\pi_{\rho/\lambda}\in\mathfrak{S}_n$ and
where $0\leq u_i\leq |C|-1$ is defined by $$u_i:=\text{card}\{l_j\in\{l_1,\dots,l_{|C|}\}:l_j>l_i,\, j<i\}.$$
It should be noted that the products representing $\pi_\rho$ and $\pi_\lambda$ respectively are understood as ordered ones, namely in the sense of (\ref{convention}). Alternatively, we may use Cauchy's two line notation to represent these permutations, that is,

\begin{eqnarray}\label{PiRL}
\begin{aligned}
\pi_\rho&=
\left(
\begin{array}{ccccccccccccccc}
1&&&\dots&&&& k& k+1&&&\dots&&& n \\
1&\dots&\widehat{r}&\dots& k& r_{|C|}&\dots & r_1& k+1&&&\dots&&& n
\end{array}\right),\\
\pi_\lambda&=
\left(
\begin{array}{ccccccccccccccc}
1&&&\dots&&&& k& k+1&&&\dots&&& n \\
1&&&\dots&&&& k& l_1&\dots & l_{|C|}& k+1&\widehat{l}&\dots& n
\end{array}\right),
\end{aligned}
\end{eqnarray}
where $\widehat{r}$ and $\widehat{l}$ indicate the omission of the elements $r_i$ and $l_i$, $i=1,\dots,|C|$, respectively. We further put \begin{equation}\label{piC}
\pi_C:=\pi_{\rho}\cdot\pi_{\lambda}.
\end{equation}
Let, moreover, $\pi_a^b\in\mathfrak{S}_n$ be the permutation moving the element $a$, $1\leq a\leq n$, to $b$, $1\leq b\leq n$, defined by
\begin{equation}\label{schiebePermut}
\begin{aligned}
\pi_a^b&:=\prod_{j=a}^{b-1}\tau_j,\qquad a<b,\\
\pi_a^b&:=\prod_{j=a-1}^{b}\tau_j,\qquad a>b.
\end{aligned}
\end{equation}
In terms of the two line notation we have
\begin{equation}\label{schiebepermu}
\pi_a^b=\left\{
\begin{array}{cc}
\left(
\begin{array}{cccccccccccc}
1& &\dots &    a & &\dots& & b & & \dots& & n\\
1&\dots& a-1& a+1 & &\dots& b& a& b+1&\dots& & n
\end{array}\right),& a<b,\\
\left(
\begin{array}{cccccccccccc}
1& &\dots &    b & &\dots& & a &  & \dots& & n\\
1&\dots& b-1& a & b &\dots& & a-1& a+1 &\dots& & n
\end{array}\right),& a>b.
\end{array}\right.
\end{equation}
Then, the permutations (\ref{PiContraction}) may be expressed by means of (\ref{schiebePermut}) as follows
\begin{eqnarray}\label{permuShort}
\pi_{\rho}=\prod_{i=1}^{|C|}\pi_{r_i-i+1}^{k-i+1},\qquad\pi_{\lambda}=\prod_{i=1}^{|C|}\pi_{l_i+u_i}^{k+i},
\end{eqnarray}
with $u_i$ as above.
\begin{lemma}\label{shortLemma}
The permutations $\pi_\rho$ and $\pi_\lambda$, defined in (\ref{PiContraction}), commute.
\end{lemma}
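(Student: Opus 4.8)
The plan is to show that $\pi_\rho$ and $\pi_\lambda$ act non-trivially on disjoint sets of indices, from which commutativity follows at once. Concretely, I would establish that $\pi_\rho$ fixes every element of $\{k+1,\dots,n\}$ and permutes only $\{1,\dots,k\}$ among themselves, while $\pi_\lambda$ fixes every element of $\{1,\dots,k\}$ and permutes only $\{k+1,\dots,n\}$. Since two permutations whose non-trivial supports lie in disjoint subsets of $\{1,\dots,n\}$ always commute, the assertion is then immediate.

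First I would read off the support of $\pi_\rho$ from its definition (\ref{PiContraction}). The $i$-th factor is the ordered product $\tau_{r_i-i+1}\cdots\tau_{k-i}$, hence consists of the transpositions $\tau_j$ with $r_i-i+1\le j\le k-i$. Because the right indices satisfy $1\le r_1<\dots<r_{|C|}\le k$, one has $r_i\ge i$, so that $r_i-i+1\ge 1$; and the upper index obeys $k-i\le k-1$ since $i\ge 1$. Every transposition occurring in $\pi_\rho$ is therefore of the form $\tau_j$ with $1\le j\le k-1$, and such a $\tau_j$ exchanges only the neighbouring elements $j,j+1$, both lying in $\{1,\dots,k\}$. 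Thus $\pi_\rho$ leaves $\{k+1,\dots,n\}$ pointwise fixed, in agreement with the two-line form (\ref{PiRL}).

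Next I would carry out the analogous analysis for $\pi_\lambda$, whose $i$-th factor $\tau_{l_i+u_i-1}\cdots\tau_{k+i}$ consists of the transpositions $\tau_j$ with $k+i\le j\le l_i+u_i-1$. The lower index clearly satisfies $k+i\ge k+1$. For the upper index I would invoke the combinatorial bound $l_i+u_i\le n$: since $u_i=\text{card}\{l_j:l_j>l_i,\,j<i\}$ counts a subset of the distinct integers lying strictly between $l_i$ and $n$, one has $u_i\le n-l_i$, and hence $l_i+u_i-1\le n-1$. Consequently every transposition in $\pi_\lambda$ is a $\tau_j$ with $k+1\le j\le n-1$, exchanging only elements of $\{k+1,\dots,n\}$, so $\pi_\lambda$ fixes $\{1,\dots,k\}$ pointwise. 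With the supports of $\pi_\rho$ and $\pi_\lambda$ contained in the disjoint sets $\{1,\dots,k\}$ and $\{k+1,\dots,n\}$, commutativity follows directly.

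The only genuinely delicate point is the upper bound on the transposition indices of $\pi_\lambda$, i.e. the inequality $l_i+u_i\le n$; everything else is a direct reading of the definitions. I would therefore take care to verify that $u_i$, which counts only the \emph{earlier} indices $l_j$ exceeding $l_i$, is dominated by $n-l_i$, the total number of integers above $l_i$ in the admissible range $\{k+1,\dots,n\}$.
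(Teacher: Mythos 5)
Your proof is correct and takes essentially the same route as the paper: the paper reads the disjoint supports off the two-line notation (\ref{PiRL}), and its alternative argument via $\tau_i\cdot\tau_j=\tau_j\cdot\tau_i$ for $|i-j|>1$ is precisely your observation that the transposition indices of $\pi_\rho$ lie in $\{1,\dots,k-1\}$ while those of $\pi_\lambda$ lie in $\{k+1,\dots,n-1\}$. You merely make explicit the index bounds ($r_i\geq i$ and $l_i+u_i\leq n$) that the paper leaves implicit, both of which you verify correctly.
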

\begin{proof}
The statement can be read off directly from (\ref{PiRL}). Alternatively, with regard to the explicit expressions (\ref{PiContraction}), we note that for transpositions $\tau_i,\tau_j\in\mathfrak{S}_n$, $i,j=1,\dots,n-1$, we have $\tau_i\cdot\tau_j=\tau_j\cdot\tau_i$ in case $|i-j|>1$, likewise yielding
$$\pi_\rho\cdot\pi_\lambda=\pi_\lambda\cdot\pi_\rho.$$
\end{proof}
In the following, we shall consider certain relations between contractions. Of particular interest in this context is the differentiation between contractions $C\in\mathscr{C}_{n,k}$ which do not contract $k+1$, that is, satisfy $k+1\notin\{ l_1,\dots,l_{|C|}\}$, and those which do, i.e. fulfill $k+1\in\{ l_1,\dots,l_{|C|}\}$. The sets corresponding to this distinction of cases are denoted by $\hat{\mathscr{C}}_{n,k}$ and $\check{\mathscr{C}}_{n,k}$ respectively.  Their disjoint union $\hat{\mathscr{C}}_{n,k}\sqcup\check{\mathscr{C}}_{n,k}$ is the set of all contractions, namely $\mathscr{C}_{n,k}$. Moreover, a contraction $C'\in\check{\mathscr{C}}_{n,k}$ is always given by $C'=C\cup\{(k+1,r)\}$, where $C\in\hat{\mathscr{C}}_{n,k}$ and $r\notin\{r_1,\dots,r_{|C|}\}$, and hence $|C'|=|C|+1$.\par
In a similar fashion, contractions $C''\in\check{\check{\mathscr{C}}}_{n,k+1}$, contracting $k+1$ as a ``right'' index, are unions of the form $C''=\{(l,k+1)\}\cup \widetilde{C}$, with $\widetilde{C}\in\hat{\hat{\mathscr{C}}}_{n,k+1}$, not contracting $k+1$ as a right index, and $l\notin\{l_1,\dots,l_{|\widetilde{C}|}\}$.\par
Correspondingly, we denote the above defined permutations associated with e.g. $C'$ by $\pi_{\rho'}$, $\pi_{\lambda'}$ and $\pi_{C'}$ respectively, and use analogous notation in case of $C''$ and $\widetilde{C}$.
\begin{lemma}\label{LemmaVorbereitung01}
Let $C\in\hat{\mathscr{C}}_{n,k}$, where $\hat{\mathscr{C}}_{n,k}\subset\mathscr{C}_{n,k}$ denotes the set of contractions $C\in\mathscr{C}_{n,k}$ for which $k+1\notin\{ l_1,\dots,l_{|C|}\}$, and let $C'\in\check{\mathscr{C}}_{n,k}$, with $\check{\mathscr{C}}_{n,k}\subset\mathscr{C}_{n,k}$ the set of contractions $C'\in\mathscr{C}_{n,k}$ for which $k+1\in\{ l'_1,\dots,l'_{|C'|}\}$, such that $C'=C\cup\{(k+1,r)\}$ with $r\notin\{r_1,\dots,r_{|C|}\}$, then we have
\begin{eqnarray}
\pi_{\rho'}&=&\pi_\rho\cdot\pi_{r-v_r}^{k-v_r},\\
\pi_{\lambda'}&=&\pi_\lambda\cdot \pi_{k+1+|C|}^{k+1+v_r},
\end{eqnarray}
hence,
\begin{equation}
\pi_{C'}=\pi_C\cdot\pi_{r-v_r}^{k-v_r}\cdot\pi_{k+1+|C|}^{k+1+v_r},
\end{equation}
with $v_r:=$card$\{r_i\in\{r_1,\dots,r_{|C|}\}:r_i<r\}$.
\end{lemma}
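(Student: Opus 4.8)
The plan is to expand both $\pi_{\rho'}$ and $\pi_{\lambda'}$ through the product representation (\ref{permuShort}), track how the insertion of the pair $(k+1,r)$ reindexes the data defining $C'=C\cup\{(k+1,r)\}$, and thereby reduce each assertion to a short identity between shift permutations. Concretely, placing $r$ into the ordered right indices of $C'$ puts it at slot $v_r+1$, so that $r'_i=r_i$ for $i\le v_r$, $r'_{v_r+1}=r$, and $r'_i=r_{i-1}$ for $i\ge v_r+2$; the paired left sequence acquires $l'_{v_r+1}=k+1$ in the same slot, with $l'_i=l_i$ for $i\le v_r$ and $l'_i=l_{i-1}$ for $i\ge v_r+2$. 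Because $C\in\hat{\mathscr{C}}_{n,k}$ forces $l_i\ge k+2$ for every $i$, the new left index $k+1$ is strictly below all the others, which yields $u'_{v_r+1}=v_r$, $u'_i=u_{i-1}$ for $i\ge v_r+2$, and $u'_i=u_i$ for $i\le v_r$. These reindexing rules are the raw material for everything that follows.

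First I would treat $\pi_{\rho'}$. Splitting its product (\ref{permuShort}) into the factors $i\le v_r$, the inserted factor $i=v_r+1$, and the tail $i\ge v_r+2$, the head coincides with that of $\pi_\rho$, the inserted factor is $\pi_{r-v_r}^{k-v_r}$, and the tail becomes $\prod_{j=v_r+1}^{|C|}\pi_{r_j-j}^{k-j}$, that is, the tail of $\pi_\rho$ with both indices lowered by one. Cancelling the common head, the claim $\pi_{\rho'}=\pi_\rho\cdot\pi_{r-v_r}^{k-v_r}$ is equivalent to
\begin{equation*}
\pi_{r-v_r}^{k-v_r}\cdot\prod_{j=v_r+1}^{|C|}\pi_{r_j-j}^{k-j}=\prod_{j=v_r+1}^{|C|}\pi_{r_j-j+1}^{k-j+1}\cdot\pi_{r-v_r}^{k-v_r}.
\end{equation*}
The key step is to read this as a conjugation: by (\ref{schiebepermu}), $\pi_{r-v_r}^{k-v_r}$ is the cycle $(r-v_r,\dots,k-v_r)$, acting as $i\mapsto i+1$ precisely on $\{r-v_r,\dots,k-v_r-1\}$. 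For $j\ge v_r+1$ the indices $r_{v_r+1}<\dots<r_{|C|}$ all exceed $r$, whence $r_j-j\ge r-v_r$ and $k-j\le k-v_r-1$, so the support $\{r_j-j,\dots,k-j\}$ of each tail cycle sits inside this range; conjugation therefore carries $\pi_{r_j-j}^{k-j}$ to $\pi_{r_j-j+1}^{k-j+1}$, and since conjugation is a homomorphism the whole tail transforms as required. (Equivalently, one checks $\pi_{\rho'}(x)=\pi_\rho(\pi_{r-v_r}^{k-v_r}(x))$ slot by slot from the two-line form (\ref{PiRL}).)

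For $\pi_{\lambda'}$ the same splitting applies, but now the action is on the upper targets and the relation is a right multiplication rather than a conjugation. The head matches $\pi_\lambda$; the inserted factor is $\pi_{(k+1)+v_r}^{\,k+(v_r+1)}=\mathrm{id}$ by $u'_{v_r+1}=v_r$ (this is exactly where $l_i\ge k+2$ is used); and, via $u'_i=u_{i-1}$, the tail becomes $\prod_{j=v_r+1}^{|C|}\pi_{l_j+u_j}^{k+j+1}$, the tail of $\pi_\lambda$ with each upper target raised by one. The residual identity $\prod_j\pi_{l_j+u_j}^{k+j+1}=(\prod_j\pi_{l_j+u_j}^{k+j})\cdot\pi_{k+1+|C|}^{k+1+v_r}$ is cleanest to verify against (\ref{PiRL}): reading (\ref{schiebepermu}) for $a>b$, the permutation $\pi_{k+1+|C|}^{k+1+v_r}$ sends $k+1+v_r\mapsto k+1+|C|$ and lowers $k+2+v_r,\dots,k+1+|C|$ by one, and evaluating $\pi_\lambda\circ\pi_{k+1+|C|}^{k+1+v_r}$ on the three blocks $x\le k+v_r$, $x=k+1+v_r$, $k+2+v_r\le x$ reproduces exactly the insertion of $k+1$ at slot $v_r+1$, i.e. $\pi_{\lambda'}$.

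Finally, by (\ref{piC}) one has $\pi_{C'}=\pi_{\rho'}\pi_{\lambda'}=(\pi_\rho\,\pi_{r-v_r}^{k-v_r})(\pi_\lambda\,\pi_{k+1+|C|}^{k+1+v_r})$, and to obtain $\pi_{C'}=\pi_C\cdot\pi_{r-v_r}^{k-v_r}\cdot\pi_{k+1+|C|}^{k+1+v_r}$ it remains only to move $\pi_{r-v_r}^{k-v_r}$ past $\pi_\lambda$. These commute because $\pi_{r-v_r}^{k-v_r}$ is supported in $\{1,\dots,k\}$ while $\pi_\lambda$ fixes $\{1,\dots,k\}$ and hence is supported in $\{k+1,\dots,n\}$ — the same disjointness principle underlying Lemma \ref{shortLemma}. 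I expect the main obstacle to be entirely combinatorial: maintaining the reindexing $r_j\mapsto r'_{j+1}$, $l_j\mapsto l'_{j+1}$ together with the updated counters $u'_i$ in a consistent bookkeeping, and verifying the two support-containment inclusions that make each shift act as a clean one-step displacement rather than disturbing the cycle structure (the degenerate cases in which a shift permutation collapses to the identity being immediate).
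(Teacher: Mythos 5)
Your proof is correct, and its skeleton coincides with the paper's: you reindex the contraction data $(r'_i,l'_i,u'_i)$ after inserting the pair $(k+1,r)$ at slot $v_r+1$, split the ordered products (\ref{permuShort}) into head, inserted factor and tail, and finish by commuting $\pi_{r-v_r}^{k-v_r}$ past $\pi_\lambda$ via disjointness of supports — exactly the structure of the paper's argument, which in the main text ``reads off'' the identities from the two-line forms (\ref{PiRL}) and delegates the computational verification to Appendix \ref{LemmaVor01Appendix}. Where you genuinely diverge is in how the inserted shift is moved past the tail of $\pi_{\rho'}$: the appendix pushes the factor $\tau_{r-v_r}\cdots\tau_{k-1-v_r}$ to the right by repeated use of the Coxeter relations (\ref{permutationgroup}), a lengthy bookkeeping exercise, whereas you recognize the residual identity as a conjugation statement and note that conjugation by the cycle $\pi_{r-v_r}^{k-v_r}$ carries each tail cycle $\pi_{r_j-j}^{k-j}$ to $\pi_{r_j-j+1}^{k-j+1}$, which is immediate from the support inclusion $\{r_j-j,\dots,k-j\}\subset\{r-v_r,\dots,k-v_r-1\}$ (guaranteed by $r_j\geq r+(j-v_r)$ for $j>v_r$). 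This relabeling argument replaces the appendix's transposition shuffle by one line and makes transparent why the correction factor appears on the right. For $\pi_{\lambda'}$ your block-wise evaluation of $\pi_\lambda\circ\pi_{k+1+|C|}^{k+1+v_r}$ using (\ref{schiebepermu}) is in substance the main text's two-line reading, while the appendix instead inserts $\tau_i^2=\mathrm{id}$ pairs; both routes rest on the same two facts you correctly isolate, namely $u'_{v_r+1}=v_r$ and $u'_i=u_{i-1}$ for $i\geq v_r+2$, which use precisely that $l_i\geq k+2$ for $C\in\hat{\mathscr{C}}_{n,k}$. In short: same decomposition and same final commutation as the paper, with the technical middle step for $\pi_{\rho'}$ handled by a cleaner, more conceptual mechanism.
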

\begin{proof}
We present two alternatives to prove this statement. For both cases we first note that $|C'|=|C|+1$, $\{r'_1,\dots,r'_{|C'|}\}=\{r_1,\dots,r_{v_r},r,r_{v_r+1}\dots,r_{|C|}\}$ and $\{l'_1,\dots,l'_{|C'|}\}=\{l_1,\dots,l_{v_r},k+1,l_{v_r+1},\dots,l_{|C|}\}$, with $0\leq v_r\leq |C|$.\par
Putting $\boldsymbol{r^\#}=(r^\#_1,\dots,r^\#_{|C|^\#})$, $r_i^\#\in\{r_i,r_i'\}$, then, one may simply read off
\begin{eqnarray*}
\pi_{\rho'}&=&
\left(
\begin{array}{ccccccccccccccc}
1&&&\dots&&&& k& k+1&&&\dots&&& n \\
1&\dots&\widehat{\boldsymbol{r}'}&\dots& k& r'_{|C'|}&\dots & r'_1& k+1&&&\dots&&& n
\end{array}\right)\\
&=&
\left(
\begin{array}{ccccccccccccccc}
1&&&&\dots&&&&& k& k+1&&\dots && n \\
1&\dots&\widehat{\boldsymbol{r}},\widehat{r}&\dots& k& r_{|C|}&\dots & r_{v_r+1}\,\,r\,\,r_{v_r}&\dots & r_1& k+1&&\dots&& n
\end{array}\right)
\end{eqnarray*}
that $\pi_{\rho'}=\pi_\rho\cdot\pi_{r-v_r}^{k-v_r}$. Similar arguments apply to $\pi_{\lambda'}$, already proving the claim.\par
As in the proof of Lemma \ref{shortLemma} we also want to make this statement plausible with regard to the explicit expressions of $\pi_{\rho'}$ and $\pi_{\lambda'}$ in terms of transpositions, cf. (\ref{PiContraction}). Since this is a rather lengthy task, the corresponding calculations can be found in Appendix \ref{LemmaVor01Appendix}.
\end{proof}

\begin{lemma}\label{LemmaVorbereitung02}
Let $\widetilde{C}\in\hat{\hat{\mathscr{C}}}_{n,k+1}$, where $\hat{\hat{\mathscr{C}}}_{n,k+1}\subset\mathscr{C}_{n,k+1}$ denotes the set of contractions \mbox{$\widetilde{C}\in\mathscr{C}_{n,k+1}$} for which $k+1\notin \{\widetilde{r}_1,\dots,\widetilde{r}_{|\widetilde{C}|}\}$, and let $C''\in\check{\check{\mathscr{C}}}_{n,k+1}$, with $\check{\check{\mathscr{C}}}_{n,k+1}\subset\mathscr{C}_{n,k+1}$ the set of contractions $C''\in\mathscr{C}_{n,k+1}$ for which $k+1\in\{r''_1,\dots,r''_{|C''|}\}$, such that $C''=\{(l,k+1)\}\cup \widetilde{C}$ with $l\notin\{\widetilde{l}_1,\dots,\widetilde{l}_{|\widetilde{C}|}\}$, then we have
\begin{equation}
\pi_{C''}=\pi_{\widetilde{C}}\cdot\pi_{l+u_l}^{k+2+|\widetilde{C}|},
\end{equation}
with $0\leq u_l\leq |\widetilde{C}|$, defined by $u_l:=$card$\{\widetilde{l}_i\in\{\widetilde{l}_1,\dots,\widetilde{l}_{|\widetilde{C}|}\}:\widetilde{l}_i>l\}$.
\end{lemma}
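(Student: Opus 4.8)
The plan is to mirror the proof of Lemma \ref{LemmaVorbereitung01}, exploiting the factorization $\pi_{C''}=\pi_{\rho''}\cdot\pi_{\lambda''}$ from (\ref{piC}) and treating the two factors separately, reading them off from the two-line notation (\ref{PiRL}). First I would record the combinatorial data of $C''$: since $k+1$ is not a right index of $\widetilde{C}$, all right indices $\widetilde{r}_i$ lie in $\{1,\dots,k\}$, so adjoining the pair $(l,k+1)$ appends $k+1$ as the \emph{largest} right index. Hence the ordered right indices of $C''$ are $r''_i=\widetilde{r}_i$ for $i\le|\widetilde{C}|$ and $r''_{|C''|}=k+1$, with correspondingly $l''_i=\widetilde{l}_i$ (the pairing with the unchanged smaller right indices is preserved) and $l''_{|C''|}=l$, where $|C''|=|\widetilde{C}|+1$.

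For the right permutation I would use the short form (\ref{permuShort}) with $k$ replaced by $k+1$ and split off the last factor $i=|C''|$. The crucial observation is that, because $r''_{|C''|}=k+1$ is maximal, this factor equals $\pi^{(k+1)-|C''|+1}_{r''_{|C''|}-|C''|+1}=\pi^{\,k+1-|\widetilde{C}|}_{\,k+1-|\widetilde{C}|}$, which by (\ref{schiebePermut}) is the identity; the remaining $|\widetilde{C}|$ factors coincide termwise with those of $\pi_{\widetilde\rho}$, giving $\pi_{\rho''}=\pi_{\widetilde\rho}$. This collapse of one factor to the identity is exactly the feature distinguishing the present case from Lemma \ref{LemmaVorbereitung01}, where inserting $k+1$ as a left index produced two nontrivial contributions.

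For the left permutation I would again split off the term $i=|C''|$ in $\pi_{\lambda''}=\prod_i \pi^{(k+1)+i}_{l''_i+u''_i}$. Since $l$ occupies the last slot, for $i\le|\widetilde{C}|$ one has $l''_i=\widetilde{l}_i$ and the shift $u''_i=\mathrm{card}\{l''_j:l''_j>l''_i,\,j<i\}$ is unaffected by the appended index $l$, which only enters at position $|C''|$; thus $u''_i=\widetilde{u}_i$ and the first $|\widetilde{C}|$ factors reproduce $\pi_{\widetilde\lambda}$. The single new factor is $\pi^{(k+1)+|C''|}_{l+u''_{|C''|}}$ with $u''_{|C''|}=\mathrm{card}\{\widetilde{l}_j:\widetilde{l}_j>l\}=u_l$ and exponent $(k+1)+|C''|=k+2+|\widetilde{C}|$, i.e.\ $\pi^{\,k+2+|\widetilde{C}|}_{\,l+u_l}$. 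Combining this with $\pi_{\rho''}=\pi_{\widetilde\rho}$ and with $\pi_{\widetilde{C}}=\pi_{\widetilde\rho}\cdot\pi_{\widetilde\lambda}$ from (\ref{piC}) yields $\pi_{C''}=\pi_{\widetilde\rho}\cdot\pi_{\widetilde\lambda}\cdot\pi^{k+2+|\widetilde{C}|}_{l+u_l}=\pi_{\widetilde{C}}\cdot\pi^{k+2+|\widetilde{C}|}_{l+u_l}$, which is the assertion.

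The routine but delicate part is the bookkeeping of the shift parameters $u''_i$ and the verification that the maximality of the new right index $k+1$ really collapses the extra $\rho$-factor to the identity. As in Lemma \ref{LemmaVorbereitung01}, a fully explicit check in terms of the transposition products (\ref{PiContraction}) could be relegated to an appendix, whereas the two-line notation (\ref{PiRL}) makes the identity essentially readable off directly and is the route I would present in the main text.
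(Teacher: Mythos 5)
Your proposal is correct and follows essentially the same route as the paper's proof: both record that $r''_{|C''|}=k+1$ is appended as the maximal right index with $l''_{|C''|}=l$, conclude $\pi_{\rho''}=\pi_{\widetilde{\rho}}$, and split off the single new factor $\pi_{l+u_l}^{k+2+|\widetilde{C}|}$ from $\pi_{\lambda''}=\pi_{\widetilde{\lambda}}\cdot\left(\tau_{l+u_l-1}\cdots\tau_{k+2+|\widetilde{C}|}\right)$. Your write-up is in fact more explicit than the paper's, which merely asserts $\pi_{\rho''}=\pi_{\widetilde{\rho}}$ as ``straightforward to check,'' whereas you justify it via the collapse of the $i=|C''|$ factor to the identity and verify the invariance of the shift parameters $u''_i=\widetilde{u}_i$.
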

\begin{proof}
The contractions $\widetilde{C}$ and $C''$ are first of all linked to each other by the relations $|C''|=|\widetilde{C}|+1$, $\{r''_1,\dots,r''_{|C''|}\}=\{\widetilde{r}_1,\dots,\widetilde{r}_{|\widetilde{C}|},k+1\}$ and $\{l''_1,\dots,l''_{|C''|}\}=\{\widetilde{l}_1,\dots,\widetilde{l}_{|\widetilde{C}|},l\}$. Therefore, it is straightforward to check that $\pi_{\rho''}=\pi_{\widetilde{\rho}}$. Moreover, we have
\begin{equation*}
\pi_{\lambda''}=\pi_{\widetilde{\lambda}}\cdot\left(\tau_{l+u_l-1}\cdots\tau_{k+2+|\widetilde{C}|}\right)=\pi_{\lambda}\cdot\pi_{l+u_l}^{k+2+|\widetilde{C}|},
\end{equation*}
yielding what is claimed.
\end{proof}

Having discussed certain properties of permutations associated to contractions, we proceed now to their representations on $\mathscr{H}_1^{\otimes n}$. According to (\ref{tensor}) we, therefore, have for $C\in\mathscr{C}_{n,k}$ and $\psi_n\in\mathscr{H}_1^{\otimes n}$
\begin{eqnarray}\label{PiCDef}
\left(D_n\left(\pi_C\right)\psi_n\right)(\boldsymbol{\theta})&=& S_n^{\pi_C}(\boldsymbol{\theta})\,\psi_n(\boldsymbol{\theta}^{\pi_C})\nonumber\\
&=& S_n^{\pi_C}(\boldsymbol{\theta})\,\psi_n(\theta_1,\stackrel{\widehat{\theta}_{r}}{\dots},\theta_k,\theta_{r_{|C|}},\dots,\theta_{r_1},\theta_{l_1},\dots,\theta_{l_{|C|}},\theta_{k+1},\stackrel{\widehat{\theta}_{l}}{\dots},\theta_n)\nonumber\\
&=& S_n^{\pi_\rho}(\theta_1,\dots,\theta_k)\cdot S_n^{\pi_\lambda}(\theta_{k+1},\dots,\theta_n)\,\psi_n(\boldsymbol{\theta}^{\pi_C}),
\end{eqnarray}
where $\widehat{\theta}_{r/l}$ stands for the omission of the variables $\theta_{r_i}$ and $\theta_{l_i}$, $i=1,\dots,|C|$, respectively. The last line, on the other hand, holds true since the permutations $\pi_\rho$ and $\pi_\lambda$ act on separate sets. Depending on the contraction $C$ at hand, the matrices $S_n^{\pi_{\rho/\lambda}}(\boldsymbol{\theta})\in\mathcal{U}(\mathcal{K}^{\otimes n})$ are besides S-matrices composed of Kronecker deltas as well. Extracting this particular structure, we may express these tensors in the following form
\begin{eqnarray}\label{SMatrix}
S_n^{\pi_{\rho}}(\boldsymbol{\theta})&=&1_{r_1-1}\otimes S^{\rho}(\theta_{r_1},\dots,\theta_k)\otimes 1_{n-k},\\
S_n^{\pi_{\lambda}}(\boldsymbol{\theta})&=&1_{k}\otimes S^{\lambda}(\theta_{k+1},\dots,\theta_{l^C_{\text{max}}})\otimes 1_{n-l^C_{\text{max}}},\\
l^C_{\text{max}}&:=&\max\,\{l_i:\,i=1,\dots,|C|\},
\end{eqnarray}
where $S^{\rho}(\theta_{r_1},\dots,\theta_k)\in\mathcal{U}(\mathcal{K}^{\otimes k-r_1+1})$ and $S^{\lambda}(\theta_{k+1},\dots,\theta_{l^C_{\text{max}}})\in\mathcal{U}(\mathcal{K}^{\otimes l^C_{\text{max}}-k})$ consist of certain products of S-matrices which appear for the permutations $\pi_\rho$ and $\pi_\lambda$ respectively. By means of these objects we define the unitaries
\begin{eqnarray}
S^R(\boldsymbol{\theta})&:=& 1_{n-k+r_1-1}\otimes S^{\rho}(\theta_{r_1},\dots,\theta_k),\\ S^L(\boldsymbol{\theta})&:=& S^{\lambda}(\theta_{k+1},\dots,\theta_{l^C_{\text{max}}})\otimes 1_{n-l^C_{\text{max}}+k},
\end{eqnarray}
which clearly commute. For further purposes we state the following shorthand notation
\begin{equation}
\delta_{a,b}:=\delta(\theta_a-\theta_b),\qquad
\end{equation}
and define, in addition,
\begin{equation}\label{delta}
\delta_C:=\prod_{i=1}^{|C|}\delta(\theta_{l_i}-\theta_{r_i})=\prod_{i=1}^{|C|}\delta_{l_i,r_i}.
\end{equation}
Finally, we introduce entirely contracted matrix elements of $A\in\mathcal{F}(W_R)$ by
\begin{equation}\label{contraction}
\langle A\rangle^{\text{con}}_{n,k}:=\sum_{C\in\mathscr{C}_{n,k}}(-1)^{|C|}\,\delta_C\,\text{Tr}\,_{1,\dots,|C|}^{1,\dots,|C|}\Big[S^L\cdot S^R\cdot \left( 1_{|C|}\otimes\langle A\rangle_C\otimes 1_{|C|}\right)\Big],
\end{equation}
understood in the sense of distributions. Here and the following a dot between tensors is always understood in the sense of (\ref{MN}). For the sake of clarity, we omitted the $\boldsymbol{\theta}$-dependence in (\ref{contraction}) and we shall do so in the following as well, as long as no confusion arises. For $S\in\mathcal{S}_0$ the entirely contracted matrix elements (\ref{contraction}) exist as tempered distributions on $\mathscr{S}(\mathbb{R}^n)\otimes\mathcal{K}^{\otimes n}$. Namely, since the distributions $\delta_C$ and $\langle A\rangle_C$ act on different variables, their product is well-defined. Moreover,  $S\in\mathcal{S}_0$ is smooth and has bounded derivatives on the real line. Hence, $\langle A\rangle^{\text{con}}_{n,k}$ is well-defined.\par
The entirely contracted matrix elements turn out to have useful analytic properties as stated in Lemma \ref{Lemma}. These properties are based on the following observations.
\begin{lemma}\label{LemmaVorbereitung}
Let $C\in\hat{\mathscr{C}}_{n,k}$, i.e. satisfying $k+1\notin \{l_1,\dots,l_{|C|}\}$ and $C'\in\check{\mathscr{C}}_{n,k}$, fulfilling $k+1\in\{l'_1,\dots,l'_{|C'|}\}$, such that $C'=C\cup\{(k+1,r)\}$ with $r\notin\{r_1,\dots,r_{|C|}\}$, then we have
\begin{multline}\label{umrechnung}
\delta_{C'}\,\text{Tr}\,_{1,\dots,|C'|}^{1,\dots,|C'|}\Big[S^{L'}\cdot S^{R'}\cdot \left( 1_{|C'|}\otimes\langle A\rangle_{C'}\otimes 1_{|C'|}\right)\Big]\\
=
\delta_C\,\delta_{r,k+1}\,\text{Tr}\,_{1,\dots,|C|+1}^{1,\dots,|C|+1}\Big[S^L\cdot S^R\cdot S^r\cdot \left( 1_{|C|+1}\otimes\langle A\rangle_{C\cup\{(k+1,r)\}}\otimes 1_{|C|+1}\right)\Big],
\end{multline}
where $S^r(\boldsymbol{\theta}):=1_{n-k+r-v_r-1}\otimes S_r(\boldsymbol{\theta})\otimes 1_{|C|}$ with $S_r\in\mathcal{U}(\mathcal{K}^{\otimes k-|C|-r+v_r+1})$ is defined via
\begin{multline}\label{DefSR}
\left(D_n(\pi_{\rho})D_n\left(\pi_{r-v_r}^{k-|C|}\right)\psi_n\right)(\boldsymbol{\theta})\\
\hspace{-2.3cm}=S_n^{\pi_\rho}(\boldsymbol{\theta})\cdot\left(1_{r-v_r-1}\otimes S_r(\boldsymbol{\theta})\otimes 1_{|C|+n-k}\right)\\
\times\psi_n(\theta_1,\stackrel{\widehat{\theta}_{r_1},\dots,\widehat{\theta}_r,\dots\widehat{\theta}_{r_{|C|}}}{\dots},\theta_k,\theta_r,\theta_{r_{|C|}},\dots,\theta_{r_1},\theta_{k+1},\dots,\theta_n),
\end{multline}
with $\psi_n\in\mathscr{H}_1^{\otimes n}$ and $v_r:=$card$\{r_i\in\{r_1,\dots,r_{|C|}\}:r_i<r\}$.
\end{lemma}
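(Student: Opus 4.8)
The plan is to reduce the asserted identity to a purely combinatorial statement about the cocycles $S_n^\pi$ and then to verify it by pushing the permutation factorizations of Lemma~\ref{LemmaVorbereitung01} through the representation $D_n$. First I would dispose of the elementary factors. Since $C'=C\cup\{(k+1,r)\}$ we have $|C'|=|C|+1$, the delta factor splits as $\delta_{C'}=\delta_C\,\delta_{r,k+1}$ by (\ref{delta}) --- the single new pair $(k+1,r)$ contributing $\delta(\theta_{k+1}-\theta_r)=\delta_{r,k+1}$ --- and the contracted matrix element is literally unchanged, $\langle A\rangle_{C'}=\langle A\rangle_{C\cup\{(k+1,r)\}}$. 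Hence the whole content of the lemma is the tensorial identity, to be read inside $\mathrm{Tr}^{1,\dots,|C|+1}_{1,\dots,|C|+1}$ after sandwiching with $1_{|C|+1}\otimes(\cdot)\otimes 1_{|C|+1}$, that $S^{L'}\cdot S^{R'}$ factorizes as $S^L\cdot S^R\cdot S^r$ up to a relabelling of the traced legs; what then remains is to match the identity blocks and the ranges of the total traces on the two sides.

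The key step uses that $D_n$ is a unitary representation of $\mathfrak S_n$, so that from (\ref{tensor}) the cocycle relation $S_n^{\pi_1\pi_2}(\boldsymbol\theta)=S_n^{\pi_1}(\boldsymbol\theta)\,S_n^{\pi_2}(\boldsymbol\theta^{\pi_1})$ holds, where $\boldsymbol\theta^{\pi_1}:=(\theta_{\pi_1(1)},\dots,\theta_{\pi_1(n)})$. Applying this to the factorization $\pi_{\rho'}=\pi_\rho\cdot\pi_{r-v_r}^{k-v_r}$ of Lemma~\ref{LemmaVorbereitung01}, the first factor reproduces $S_n^{\pi_\rho}$, hence $S^R$ after extracting the identity block as in (\ref{SMatrix}), while the second factor --- which equals the difference permutation $\pi_\rho^{-1}\pi_{\rho'}$, evaluated at the shifted rapidities $\boldsymbol\theta^{\pi_\rho}$ --- is precisely the extra S-matrix string $S^r$ that is \emph{defined} in (\ref{DefSR}); indeed (\ref{DefSR}) encodes exactly $D_n(\pi_\rho)D_n(\pi_{r-v_r}^{\,\cdot})$ and isolates the nontrivial tensor $S_r$ together with its surrounding identities. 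This yields $S^{R'}=S^R\cdot S^r$ on the $\rho$-side.

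For the $\lambda$-side I would treat $\pi_{\lambda'}=\pi_\lambda\cdot\pi_{k+1+|C|}^{k+1+v_r}$ analogously. The additional shift $\pi_{k+1+|C|}^{k+1+v_r}$ of (\ref{schiebePermut}) moves the newly contracted left leg $k+1$ into its slot among the $|C'|$ contraction legs; by (\ref{schiebepermu}) it acts only on legs that are subsequently traced out, and under the constraint $\theta_{k+1}=\theta_r$ forced by $\delta_{r,k+1}$ its cocycle reduces, inside $\mathrm{Tr}^{1,\dots,|C|+1}_{1,\dots,|C|+1}$, to a mere permutation of the trace indices. Since $S^L$ and $S^R$ commute (as noted after (\ref{SMatrix})) and the total trace is insensitive to such a relabelling of the contracted legs, one is left with $S^L\cdot S^R\cdot S^r$ inside the trace, which is the asserted right-hand side. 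Here I would invoke the commutativity of $\pi_\rho$ and $\pi_\lambda$ from Lemma~\ref{shortLemma} to disentangle the two shifts cleanly.

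The main obstacle I expect is precisely this last bookkeeping on the $\lambda$-side: one must verify, with the two-line presentation (\ref{PiRL}) and the explicit shifts (\ref{schiebepermu}), that the factor coming from $\pi_\lambda^{-1}\pi_{\lambda'}$ genuinely distributes over the blocks $1_{|C|+1}\otimes(\cdot)\otimes 1_{|C|+1}$ as a \emph{pure} reindexing of the $|C|+1$ traced legs, leaving no residual S-matrix acting on the free (uncontracted) variables, and that the enlarged traces $\mathrm{Tr}^{1,\dots,|C|+1}_{1,\dots,|C|+1}$ on both sides range over matching legs. This step is purely combinatorial --- it needs no analytic input and no property of $S$ beyond the unitarity of the $D_n(\pi)$ --- but it is intricate, and it is where the careful placement of the $|C|+1$ identity factors relative to $\langle A\rangle_{C\cup\{(k+1,r)\}}$ must be pinned down.
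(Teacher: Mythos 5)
Your reduction steps agree with the paper: $|C'|=|C|+1$, the splitting $\delta_{C'}=\delta_C\,\delta_{r,k+1}$, and the use of the factorizations $\pi_{\rho'}=\pi_\rho\cdot\pi_{r-v_r}^{k-v_r}$, $\pi_{\lambda'}=\pi_\lambda\cdot\pi_{k+1+|C|}^{k+1+v_r}$ from Lemma \ref{LemmaVorbereitung01}, pushed through $D_n$. But the core of your argument contains a genuine gap. On the $\rho$-side, note that (\ref{DefSR}) defines $S_r$ through $\pi_{r-v_r}^{k-|C|}$, \emph{not} through the factor $\pi_{r-v_r}^{k-v_r}$ appearing in Lemma \ref{LemmaVorbereitung01}; one must further decompose $\pi_{r-v_r}^{k-v_r}=\pi_{r-v_r}^{k-|C|}\cdot\pi_{k-|C|}^{k-v_r}$, and the leftover $\pi_{k-|C|}^{k-v_r}$ produces a nontrivial residual string $S'_r\in\mathcal{U}(\mathcal{K}^{\otimes|C|+1-v_r})$, so that $S^{R'}=S^R\cdot S^r\cdot\bigl(1_{n-|C|-1}\otimes S'_r\otimes 1_{v_r}\bigr)$ rather than $S^R\cdot S^r$. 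Symmetrically, the $\lambda$-side factor $\pi_{k+1+|C|}^{k+1+v_r}$ does not reduce to ``a mere permutation of the trace indices'': through $D_n$ it yields a genuine S-matrix string $S'_l$, explicitly a product of factors $S(\theta_{k+1}-\theta_{l_j})$, which no relabelling of traced legs can remove.

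Consequently your claim that the step ``needs no property of $S$ beyond the unitarity of the $D_n(\pi)$'' is where the proof breaks. The actual mechanism, which your proposal is missing, is a pairwise \emph{cancellation} of $S'_l$ against $S'_r$ inside the trace: after the delta distributions $\delta_{C'}$ enforce $\theta_{l_i}=\theta_{r_i}$ and $\theta_{k+1}=\theta_r$, one uses the PCT identity (\ref{PCT}) of Lemma \ref{eigenschaften} to transpose the index structure of $S'_r$ so that its legs align with those of $S'_l$, and then hermitian analyticity, $S(\theta)^{-1}=S(-\theta)$, makes the product telescope:
\begin{equation*}
\delta_{C'}\,\bigl(S'_l\bigr)_{\gamma_{v_r+1}\dots\gamma_{|C'|}}^{\varepsilon_{v_r+1}\dots\varepsilon_{|C'|}}\,\bigl(S'_r\bigr)_{\overline{\gamma}_{|C'|}\dots\overline{\gamma}_{v_r+1}}^{\eta_1\dots\eta_{|C'|-v_r}}
=\delta_{C'}\,\bigl(1_{|C'|-v_r}\bigr)_{\overline{\eta}_{|C'|-v_r}\dots\overline{\eta}_1}^{\varepsilon_{v_r+1}\dots\varepsilon_{|C'|}}.
\end{equation*}
Only this identity, which uses properties 2.) and 5.) of Definition \ref{S-matrixDefinition} evaluated at the coinciding rapidities, eliminates the residual factors and leaves $S^L\cdot S^R\cdot S^r$ under $\text{Tr}\,_{1,\dots,|C|+1}^{1,\dots,|C|+1}$. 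Without it, the ``pure reindexing'' you invoke leaves uncancelled S-matrix factors on the contracted legs and the asserted equality (\ref{umrechnung}) does not follow.
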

\begin{proof}
Note first that $|C'|=|C|+1$ and $\delta_{C'}=\delta_C\delta_{r,k+1}$ is an immediate consequence of Definition (\ref{delta}). Moreover, it follows from Lemma \ref{LemmaVorbereitung01} that
\begin{equation*}
\pi_{\rho'}=\pi_\rho\cdot\pi_{r-v_r}^{k-v_r}=\pi_\rho\cdot\pi_{r-v_r}^{k-|C|}\cdot\pi_{k-|C|}^{k-v_r},\qquad\pi_{\lambda'}=\pi_\lambda\cdot\pi_{k+1+|C|}^{k+1+v_r}.
\end{equation*}
Corresponding to these permutations, we have the commuting tensors
\begin{equation}\label{einsetztensor}
\begin{aligned}
S^{R'}&=\left( 1_{n-k+r_1-1}\otimes S^{\rho}\right)\cdot\left(1_{n-k+r-v_r-1}\otimes S_r\otimes 1_{|C|}\right)\cdot\left(1_{n-|C|-1}\otimes S'_{r}\otimes 1_{v_r}\right),\\
S^{L'}&=\left(S^{\lambda}\otimes 1_{n-l^C_{\text{max}}+k}\right)\cdot\left(1_{v_r}\otimes S'_l\otimes 1_{n-|C|-1}\right),
\end{aligned}
\end{equation}
with $S'_{l},S'_{r}\in\mathcal{U}(\mathcal{K}^{\otimes |C|+1-v_r})$. Taking a closer look at the expression
\begin{equation*}
\mathcal{E}_{C'}:=\delta_{C'}\,\text{Tr}\,_{1,\dots,|C'|}^{1,\dots,|C'|}\Big[S^{L'}\cdot S^{R'}\cdot \left( 1_{|C'|}\otimes\langle A\rangle_{C'}\otimes 1_{|C'|}\right)\Big],
\end{equation*}
one finds after execution of the trace operation that
\begin{eqnarray*}
\mathcal{E}_{C'}^{\alpha_1\dots\alpha_n}=\delta_C\delta_{r,k+1}\left(S^{L'}\cdot S^{R'}\right)^{\alpha_1\quad\dots\quad\alpha_n}_{\gamma_1\dots\gamma_{|C'|}\beta_1\dots\beta_{n-2|C'|}\overline{\gamma}_{|C'|}\dots\overline{\gamma}_{1}}\langle A\rangle_{C'}^{\beta_1\dots\beta_{n-2|C'|}}.
\end{eqnarray*}
Inserting (\ref{einsetztensor}) into
\begin{equation*}
\left(S^{L'}\cdot S^{R'}\right)^{\alpha_1\quad\dots\quad\alpha_n}_{\gamma_1\dots\gamma_{|C'|}\beta_1\dots\beta_{n-2|C'|}\overline{\gamma}_{|C'|}\dots\overline{\gamma}_{1}},
\end{equation*}
one comes, in particular, across the factors
\begin{eqnarray*}
\left(S'_l\right)_{\gamma_{v_r+1}\dots\gamma_{|C'|}}^{\varepsilon_{v_r+1}\dots\varepsilon_{|C'|}}\left(S_r'\right)_{\overline{\gamma}_{|C'|}\dots\overline{\gamma}_{v_r+1}}^{\eta_1\dots\eta_{|C'|-v_r}}.
\end{eqnarray*}
Since $S'_l$ is obtained from
\begin{multline*}
\left(D_n(\pi_{\lambda'})\psi_n\right)(\theta)=S_n^{\pi_{\lambda}}(\boldsymbol{\theta})\cdot\left(D_n\left(\pi_{k+1+|C|}^{k+1+v_r}\right)\psi_n\right)(\theta_1,\dots,\theta_k,\theta_{l_1},\dots,\theta_{l_{|C|}},\theta_{k+1},\dots,\theta_n),
\end{multline*}
we have, using the notation introduced in (\ref{kurznotation}), that
\begin{equation*}
\left(S'_l(\boldsymbol{\theta})\right)_{\gamma_{v_r+1}\dots\gamma_{|C'|}}^{\varepsilon_{v_r+1}\dots\varepsilon_{|C'|}}=\Bigg[ \prod_{i=1}^{|C|-v_r}S(\theta_{k+1}-\theta_{l_{|C|+1-i}})_{|C|+1-v_r,|C|+1-v_r-i}\Bigg]_{\gamma_{v_r+1}\dots\gamma_{|C'|}}^{\varepsilon_{v_r+1}\dots\varepsilon_{|C'|}}.
\end{equation*}
Analogously, one finds
\begin{eqnarray*}
\left(S_r'(\boldsymbol{\theta})\right)_{\overline{\gamma}_{|C'|}\dots\overline{\gamma}_{v_r+1}}^{\eta_1\dots\eta_{|C'|-v_r}}&=&\Bigg[\prod_{i=1}^{|C|-v_r}S(\theta_{r_{|C|+1-i}}-\theta_{r})_{|C|+1-v_r,i}\Bigg]_{\overline{\gamma}_{|C'|}\dots\overline{\gamma}_{v_r+1}}^{\eta_1\dots\eta_{|C'|-v_r}}\\
&=&\Bigg[\prod_{i=1}^{|C|-v_r}S(\theta_{r_{v_r+i}}-\theta_{r})_{|C|+1-v_r,i}\Bigg]^{\gamma_{v_r+1}\dots\gamma_{|C'|}}_{\overline{\eta}_{|C'|-v_r}\dots\overline{\eta}_1}
\end{eqnarray*}
where in the last line the property (\ref{PCT}) was used.
Hence
\begin{eqnarray*}
\delta_{C'}\left(S'_l(\boldsymbol{\theta})\right)_{\gamma_{v_r+1}\dots\gamma_{|C'|}}^{\varepsilon_{v_r+1}\dots\varepsilon_{|C'|}}\left(S_r'(\boldsymbol{\theta})\right)_{\overline{\gamma}_{|C'|}\dots\overline{\gamma}_{v_r+1}}^{\eta_1\dots\eta_{|C'|-v_r}}=\delta_{C'}\left(1_{|C'|-v_r}\right)_{\overline{\eta}_{|C'|-v_r}\dots\overline{\eta}_1}^{\varepsilon_{v_r+1}\dots\varepsilon_{|C'|}},
\end{eqnarray*}
which yields
\begin{eqnarray*}
\mathcal{E}_{C'}^{\alpha_1\dots\alpha_n}&=&\delta_{C'}\left(S^{L'}\cdot S^{R'}\right)^{\alpha_1\quad\dots\quad\alpha_n}_{\gamma_1\dots\gamma_{|C'|}\beta_1\dots\beta_{n-2|C'|}\overline{\gamma}_{|C'|}\dots\overline{\gamma}_{1}}\langle A\rangle_{C'}^{\beta_1\dots\beta_{n-2|C'|}}\\
&=&\delta_{C'}\left(S^{L}\cdot S^{R}\cdot S^r\right)^{\alpha_1\quad\dots\quad\alpha_n}_{\gamma_1\dots\gamma_{v_r}\varepsilon_{v_r+1}\dots\varepsilon_{|C'|}\beta_1\dots\beta_{n-2|C'|}\overline{\varepsilon}_{|C'|}\dots\overline{\varepsilon}_{v_r+1}\overline{\gamma}_{v_r}\dots\overline{\gamma}_{1}}\\
&&\times\langle A\rangle_{C'}^{\beta_1\dots\beta_{n-2|C'|}}.
\end{eqnarray*}
\end{proof}
An immediate consequence of Lemma \ref{LemmaVorbereitung02} is the following
\begin{corollary}\label{LemmaVorbereitung2}
Let $\widetilde{C}\in\hat{\hat{\mathscr{C}}}_{n,k+1}$, i.e. satisfying $k+1\notin \{r_1,\dots,r_{|\widetilde{C}|}\}$ and $C''\in\check{\check{\mathscr{C}}}_{n,k+1}$, fulfilling $k+1\in\{r''_1,\dots,r''_{|C''|}\}$, such that $C''=\{(l,k+1)\}\cup \widetilde{C}$ with $l\notin\{\widetilde{l}_1,\dots,\widetilde{l}_{|\widetilde{C}|}\}$, then we have
\begin{multline}\label{umrechnung2}
\delta_{C''}\,\text{Tr}\,_{1,\dots,|C''|}^{1,\dots,|C''|}\left[S^{L''}\cdot S^{R''} \cdot\left( 1_{|C''|}\otimes\langle A\rangle_{C''}\otimes 1_{|C''|}\right)\right]\\
= \delta_{\widetilde{C}}\,\delta_{l,k+1}\,\text{Tr}\,_{1,\dots,|\widetilde{C}|+1}^{1,\dots,|\widetilde{C}|+1}\left[S^{\widetilde{L}}\cdot S^l \cdot S^{\widetilde{R}} \cdot\left( 1_{|\widetilde{C}|+1}\otimes\langle A\rangle_{\widetilde{C}\cup\{(l,k+1)\}}\otimes 1_{|\widetilde{C}|+1}\right)\right],
\end{multline}
where $S^l(\boldsymbol{\theta}):=\left(1_{|\widetilde{C}|}\otimes S_l(\boldsymbol{\theta})\otimes 1_{n-l+k+1-u_l}\right)$ with $S_l\in\mathcal{U}(\mathcal{K}^{\otimes l-k-1+u_l-|\widetilde{C}|})$ is defined via
\begin{multline}\label{DefSl}
\left(D_n(\pi_{\widetilde{\lambda}})D_n\left(\pi_{l+u_l}^{k+2+|\widetilde{C}|}\right)\psi_n\right)(\boldsymbol{\theta})\\
=S_n^{\pi_{\widetilde{\lambda}}}(\boldsymbol{\theta})\cdot\left(1_{k+1+|\widetilde{C}|}\otimes S_l(\boldsymbol{\theta})\otimes 1_{n-l-u_l}\right)\\
\times\psi_n(\theta_1,\dots,\theta_{k+1},\theta_{\widetilde{l}_1},\dots,\theta_{\widetilde{l}_{|\widetilde{C}|}},\theta_l,\theta_{k+2},\stackrel{\widehat{\theta}_{\widetilde{l}_1},\dots,\widehat{\theta}_{l},\dots,\widehat{\theta}_{\widetilde{l}_{|C|}}}{\dots},\theta_n),
\end{multline}
with $u_l:=$card$\{\widetilde{l}_i\in\{\widetilde{l}_1,\dots,\widetilde{l}_{|\widetilde{C}|}\}:\widetilde{l}_i>l\}$.
\end{corollary}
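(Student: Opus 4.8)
The plan is to deduce Corollary \ref{LemmaVorbereitung2} from Lemma \ref{LemmaVorbereitung02} in exact parallel to the way Lemma \ref{LemmaVorbereitung} is established from Lemma \ref{LemmaVorbereitung01}. First I would note the purely combinatorial facts recorded at the start of the proof of Lemma \ref{LemmaVorbereitung02}, namely $|C''|=|\widetilde{C}|+1$, $\{r''_1,\dots,r''_{|C''|}\}=\{\widetilde{r}_1,\dots,\widetilde{r}_{|\widetilde{C}|},k+1\}$ and $\{l''_1,\dots,l''_{|C''|}\}=\{\widetilde{l}_1,\dots,\widetilde{l}_{|\widetilde{C}|},l\}$. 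From the first of these and Definition (\ref{delta}) the delta factor splits immediately as $\delta_{C''}=\delta_{\widetilde{C}}\,\delta_{l,k+1}$, which accounts for the two delta distributions appearing on the right-hand side of (\ref{umrechnung2}).

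Next I would translate the permutation identity of Lemma \ref{LemmaVorbereitung02}, $\pi_{C''}=\pi_{\widetilde{C}}\cdot\pi_{l+u_l}^{k+2+|\widetilde{C}|}$, into a statement about the representing tensors on $\mathscr{H}_1^{\otimes n}$ via (\ref{PiCDef}). Since in the present situation $k+1$ is contracted as a \emph{right} index, the extra transposition string $\pi_{l+u_l}^{k+2+|\widetilde{C}|}$ enters through the \emph{left} permutation $\pi_{\lambda''}$, as is visible from the displayed computation $\pi_{\lambda''}=\pi_{\widetilde{\lambda}}\cdot\pi_{l+u_l}^{k+2+|\widetilde{C}|}$ in the proof of Lemma \ref{LemmaVorbereitung02}, whereas $\pi_{\rho''}=\pi_{\widetilde\rho}$. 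This is the mirror image of the situation in Lemma \ref{LemmaVorbereitung}, where the extra factor sat in the right permutation; it explains why the factor $S^l$ is sandwiched between $S^{\widetilde L}$ and $S^{\widetilde R}$ in (\ref{umrechnung2}), rather than appearing to the right of $S^L\cdot S^R$ as in (\ref{umrechnung}). Accordingly I would factor the tensor $S^{L''}$ (resp.\ $S^{R''}=S^{\widetilde R}$) into $S^{\widetilde L}$ times the genuinely new tensor $S^l$, with $S_l$ defined exactly by (\ref{DefSl}), and read off the precise index placements $S^l(\boldsymbol\theta)=1_{|\widetilde C|}\otimes S_l(\boldsymbol\theta)\otimes 1_{n-l+k+1-u_l}$ from that defining relation.

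The computational heart of the argument is then to carry out the trace operation $\text{Tr}\,^{1,\dots,|C''|}_{1,\dots,|C''|}$ on the left-hand side of (\ref{umrechnung2}) and show that the contraction of the ``extra'' pair of indices produced by the longer contraction $C''$ collapses, against the additional S-matrix factors coming from $S^{L''}$ and $S^{R''}$, to an identity tensor $1$. Concretely, after executing the trace one is left with matrix entries of $S^{L''}\cdot S^{R''}$ summed against $\langle A\rangle_{C''}$, and I would isolate the two factors — the analogue of $(S'_l)$ and $(S'_r)$ from the proof of Lemma \ref{LemmaVorbereitung} — which carry the indices $\overline\gamma$ being contracted. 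Using the PCT relation (\ref{PCT}) of Lemma \ref{eigenschaften}$a)$ to rewrite one of them, together with unitarity $S(\theta)^*=S(\theta)^{-1}$ (property 1.\ of Definition \ref{S-matrixDefinition}) reflected in the representation being unitary, these two factors multiply to the identity on the contracted slots, i.e.\ $\delta_{C''}$ times a Kronecker delta $1_{\dots}$. What survives is precisely $\delta_{\widetilde C}\,\delta_{l,k+1}$ times the trace of $S^{\widetilde L}\cdot S^l\cdot S^{\widetilde R}$ against $\langle A\rangle_{C''}=\langle A\rangle_{\widetilde C\cup\{(l,k+1)\}}$, which is the right-hand side of (\ref{umrechnung2}).

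The main obstacle I anticipate is purely bookkeeping rather than conceptual: keeping the index ranges, the offsets $u_l$, and the tensor-leg placements of $S^{\widetilde L}$, $S^l$, $S^{\widetilde R}$ consistent through the trace, and making sure the PCT/unitarity cancellation is applied to the correct pair of factors in the correct order (note that here $S^l$ lies \emph{between} $S^{\widetilde L}$ and $S^{\widetilde R}$, so the ordering subtleties differ slightly from Lemma \ref{LemmaVorbereitung}). Because Lemma \ref{LemmaVorbereitung02} has already done the combinatorial work of identifying $\pi_{C''}$, and because the cancellation mechanism is identical in spirit to that of Lemma \ref{LemmaVorbereitung}, the corollary should follow without any genuinely new input; indeed this is why it is stated as ``an immediate consequence'' of Lemma \ref{LemmaVorbereitung02}.
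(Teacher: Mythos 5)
Your first two paragraphs are correct and, in fact, already constitute the entire proof: once you record $|C''|=|\widetilde{C}|+1$, $\delta_{C''}=\delta_{\widetilde{C}}\,\delta_{l,k+1}$, $\langle A\rangle_{C''}=\langle A\rangle_{\widetilde{C}\cup\{(l,k+1)\}}$ (these are the \emph{same} contraction), and translate the permutation identities $\pi_{\rho''}=\pi_{\widetilde{\rho}}$ and $\pi_{\lambda''}=\pi_{\widetilde{\lambda}}\cdot\pi_{l+u_l}^{k+2+|\widetilde{C}|}$ of Lemma \ref{LemmaVorbereitung02} into tensors via (\ref{PiCDef}), equation (\ref{umrechnung2}) follows by direct substitution. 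The decisive point is that the extra permutation factor $\pi_{l+u_l}^{k+2+|\widetilde{C}|}$ is \emph{exactly} the permutation entering the defining relation (\ref{DefSl}) of $S_l$, so $S_n^{\pi_{\lambda''}}=S_n^{\pi_{\widetilde{\lambda}}}\cdot\bigl(1_{k+1+|\widetilde{C}|}\otimes S_l\otimes 1_{n-l-u_l}\bigr)$ holds on the nose, and $S^{L''}\cdot S^{R''}$ equals $S^{\widetilde{L}}\cdot S^l\cdot S^{\widetilde{R}}$ without any residue. This is precisely why the paper disposes of the corollary in two lines.

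Your third paragraph, the announced ``computational heart'', rests on a false analogy and describes a step that would fail if attempted. In Lemma \ref{LemmaVorbereitung} the PCT/unitarity cancellation was forced by a genuine mismatch: there $\pi_{\rho'}=\pi_\rho\cdot\pi_{r-v_r}^{k-v_r}$, whereas the definition (\ref{DefSR}) of $S_r$ involves $\pi_{r-v_r}^{k-|C|}$; the discrepancy $\pi_{k-|C|}^{k-v_r}$, together with the left-side factor $\pi_{k+1+|C|}^{k+1+v_r}$, generated the tensors $S'_r$ and $S'_l$, which then had to cancel against each other under the trace by means of $\delta_{C'}$ and (\ref{PCT}). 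In the present corollary no such discrepancy exists: $\pi_{\rho''}$ carries no extra factor at all, and the extra factor in $\pi_{\lambda''}$ coincides with the defining permutation of $S_l$. Consequently there are no analogues of $S'_l$ and $S'_r$ to isolate, nothing ``collapses to an identity tensor'', and the additional S-matrix factor does not disappear --- it survives as the $S^l$ visible on the right-hand side of (\ref{umrechnung2}). Nor does the extra traced pair collapse: both sides of (\ref{umrechnung2}) carry the same number $|C''|=|\widetilde{C}|+1$ of traces, just as both sides of (\ref{umrechnung}) do in Lemma \ref{LemmaVorbereitung}. Had you carried out your plan, you would simply have found no pair of factors multiplying to the identity; deleting that step and keeping your second paragraph yields the paper's proof.
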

\begin{proof}
Since by Lemma \ref{LemmaVorbereitung02} we have $\pi_{\rho''}=\pi_{\widetilde{\rho}}$ and $\pi_{\lambda''}=\pi_{\widetilde{\lambda}}\cdot\pi_{l+u_l}^{k+2+|\widetilde{C}|}$, the statement follows directly.
\end{proof}

In order to prove certain analytic properties of the entirely contracted matrix elements $\langle A\rangle^{\text{con}}_{n,k}$, we want to proceed as in the single particle case and apply, in particular, Lemma \ref{basicLemma}. Fortunately, as shown in Lemma \ref{rewri}, $\langle A\rangle^{\text{con}}_{n,k}$ can be rewritten in a form which is best suited for this discussion. To this end, using similar notation as in (\ref{conmatrixelements}), we introduce the following contracted matrix elements, namely
\begin{multline}\label{commu}
\langle [z_{k+1},A]\rangle_C^{\alpha_{k+1}\,\stackrel{\widehat{l}}{\dots}\,\alpha_n\,\alpha_1\,\stackrel{\widehat{r}}{\dots}\,\alpha_k}(\underline{\theta})\\:=\langle z_{\alpha_{k+2}}^\dagger(\theta_{k+2})\stackrel{\widehat{l}}{\cdots} z_{\alpha_{n}}^\dagger(\theta_{n})\Omega,[z_{\alpha_{k+1}}(\theta_{k+1}),A]z_{\overline{\alpha}_{k}}^\dagger(\theta_{k})\stackrel{\widehat{r}}{\cdots} z_{\overline{\alpha}_{1}}^\dagger(\theta_{1})\Omega \rangle,
\end{multline}
and
\begin{multline}\label{commu2}
\langle [A,z^\dagger_{k+1}]\rangle_{\widetilde{C}}^{\alpha_{k+2}\,\stackrel{\widehat{\widetilde{l}}}{\dots}\,\alpha_n\,\alpha_1\,\stackrel{\widehat{\widetilde{r}}}{\dots}\,\alpha_{k+1}}(\underline{\theta})\\:=\langle z_{\alpha_{k+2}}^\dagger(\theta_{k+2})\stackrel{\widehat{\widetilde{l}}}{\cdots} z_{\alpha_{n}}^\dagger(\theta_{n})\Omega,[A,z^\dagger_{\overline{\alpha}_{k+1}}(\theta_{k+1})]z_{\overline{\alpha}_{k}}^\dagger(\theta_{k})\stackrel{\widehat{\widetilde{r}}}{\cdots} z_{\overline{\alpha}_{1}}^\dagger(\theta_{1})\Omega \rangle,
\end{multline}
with $C\in\hat{\mathscr{C}}_{n,k}$ and $\widetilde{C}\in\hat{\hat{\mathscr{C}}}_{n,k+1}$ as before.
\begin{lemma}\label{rewri}
Consider $\hat{\mathscr{C}}_{n,k}\subset\mathscr{C}_{n,k}$ and $\hat{\hat{\mathscr{C}}}_{n,k+1}\subset\mathscr{C}_{n,k+1}$ as above. Then,
\begin{equation}\label{rewrite}
\langle A\rangle^{\text{con}}_{n,k}=\sum_{C\in\hat{\mathscr{C}}_{n,k}}(-1)^{|C|}\,\delta_C\,\text{Tr}\,_{1,\dots,|C|}^{1,\dots,|C|}\Big[S^L\cdot S^R\cdot \left( 1_{|C|}\otimes\langle [z_{k+1},A]\rangle_C\otimes 1_{|C|}\right)\Big],
\end{equation}
\begin{equation}\label{rewrite2}
\langle A\rangle^{\text{con}}_{n,k+1}=\sum_{\widetilde{C}\in\hat{\hat{\mathscr{C}}}_{n,k+1}}(-1)^{|\widetilde{C}|}\,\delta_{\widetilde{C}}\,\text{Tr}\,_{1,\dots,|\widetilde{C}|}^{1,\dots,|\widetilde{C}|}\Big[S^{\widetilde{L}}\cdot S^{\widetilde{R}}\cdot \left( 1_{|\widetilde{C}|}\otimes\langle [A,z^\dagger_{k+1}]\rangle_{\widetilde{C}}\otimes 1_{|\widetilde{C}|}\right)\Big].
\end{equation}
\end{lemma}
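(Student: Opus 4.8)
The plan is to prove both identities by starting from their right-hand sides, expanding the commutators appearing in $\langle[z_{k+1},A]\rangle_C$ and $\langle[A,z^\dagger_{k+1}]\rangle_{\widetilde{C}}$ by means of the exchange relations (\ref{exchange}), and then reorganizing the resulting terms so as to recover the defining sum (\ref{contraction}) of $\langle A\rangle^{\text{con}}_{n,k}$. I shall treat (\ref{rewrite}) in detail; the proof of (\ref{rewrite2}) is completely analogous, with $z_{k+1}$ replaced by $z^\dagger_{k+1}$, the role of Lemma \ref{LemmaVorbereitung} taken over by Corollary \ref{LemmaVorbereitung2}, and the annihilator commuted to the left rather than to the right. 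The explicit computation (\ref{sample}), which establishes the case $k=1$ by hand, serves as the prototype for the general reorganization.

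First I would split the sum defining $\langle A\rangle^{\text{con}}_{n,k}$ in (\ref{contraction}) according to the disjoint decomposition $\mathscr{C}_{n,k}=\hat{\mathscr{C}}_{n,k}\sqcup\check{\mathscr{C}}_{n,k}$, thereby separating the contributions of contractions that leave $k+1$ free as a left index from those that contract it. Next, for fixed $C\in\hat{\mathscr{C}}_{n,k}$ I would expand $[z_{k+1},A]=z_{k+1}A-Az_{k+1}$ inside (\ref{commu}). In the first term the annihilator is moved onto the bra side using $z(\cdot)^*\supset z^\dagger(\cdot)$, which restores $\theta_{k+1}$ as an additional left creation operator and reproduces exactly $\langle A\rangle_C$ with $k+1$ reinstated; summing over $C\in\hat{\mathscr{C}}_{n,k}$ this yields precisely the $\hat{\mathscr{C}}_{n,k}$ part of (\ref{contraction}). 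In the second term $z_{k+1}(\theta_{k+1})$ is commuted to the right through the string $z^\dagger_{\overline{\alpha}_k}(\theta_k)\cdots z^\dagger_{\overline{\alpha}_1}(\theta_1)$ by repeated use of the third exchange relation in (\ref{exchange}); each step produces an S-matrix factor together with, for every remaining right index $r\notin\{r_1,\dots,r_{|C|}\}$, a contraction $\delta(\theta_{k+1}-\theta_r)$, while the term in which $z_{k+1}$ finally reaches $\Omega$ vanishes by $z_{k+1}\Omega=0$.

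The core of the argument is then to identify these contracted terms. A contraction of $k+1$ with a right index $r$ produces a summand labelled by the contraction $C'=C\cup\{(k+1,r)\}\in\check{\mathscr{C}}_{n,k}$; since $|C'|=|C|+1$, the sign from the commutator combines with $(-1)^{|C|}$ to give $(-1)^{|C'|}$, and the accompanying delta is $\delta_C\,\delta_{r,k+1}=\delta_{C'}$. The assignment $(C,r)\mapsto C'$ is a bijection between such pairs and $\check{\mathscr{C}}_{n,k}$, so summing reconstitutes the full $\check{\mathscr{C}}_{n,k}$ part. What must be checked is that the string of S-matrices generated by commuting $z_{k+1}$ past the creation operators is exactly the tensor $S^r$ introduced in (\ref{DefSR}); granting this, Lemma \ref{LemmaVorbereitung} in the form of (\ref{umrechnung}) shows that each contracted term equals the corresponding $\check{\mathscr{C}}_{n,k}$ summand of (\ref{contraction}). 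Adding the free and the contracted pieces then gives (\ref{rewrite}).

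The main obstacle I expect is precisely this last identification: verifying that the dynamical bookkeeping supplied by the Zamolodchikov--Faddeev exchange relations coincides with the combinatorial bookkeeping of the permutation-group representation that defines $S^L$, $S^R$ and $S^r$. This is where the preparatory Lemmas \ref{shortLemma} and \ref{LemmaVorbereitung01}, together with the product formulas (\ref{PiContraction})--(\ref{piC}), are indispensable: they guarantee that the factor $S_r$ obtained from the permutation $\pi_{r-v_r}^{k-|C|}$ in (\ref{DefSR}) is the same as the S-matrix string produced by the physical commutation, and that the remaining left and right factors split through $\pi_\lambda$ and $\pi_\rho$ on disjoint index sets, as in (\ref{PiCDef}). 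Once the index and sign accounting is made to match, both (\ref{rewrite}) and (\ref{rewrite2}) follow by the same reorganization, the latter using Corollary \ref{LemmaVorbereitung2} in place of Lemma \ref{LemmaVorbereitung}.
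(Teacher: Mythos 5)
Your proposal is correct and follows essentially the same route as the paper's proof: the decomposition $\mathscr{C}_{n,k}=\hat{\mathscr{C}}_{n,k}\sqcup\check{\mathscr{C}}_{n,k}$, the expansion of the commutator via the exchange relations (\ref{exchange}) producing $\langle A\rangle_C$ plus terms contracted through $\delta(\theta_{k+1}-\theta_r)$, the identification of the resulting S-matrix string with $S_r$ of (\ref{DefSR}) via Lemma \ref{LemmaVorbereitung} (respectively Corollary \ref{LemmaVorbereitung2} for (\ref{rewrite2})), and the bijection $(C,r)\mapsto C'=C\cup\{(k+1,r)\}$ with the sign bookkeeping $(-1)^{|C'|}=-(-1)^{|C|}$. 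That you argue from the right-hand side toward (\ref{contraction}) while the paper solves for $\langle A\rangle_C$ and substitutes into the split sum is merely a reversal of the same algebraic manipulation, not a different method.
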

\begin{proof}
Since $\mathscr{C}_{n,k}=\hat{\mathscr{C}}_{n,k}\sqcup\check{\mathscr{C}}_{n,k}$, where $C'\in\check{\mathscr{C}}_{n,k}$ contracts $k+1$, we have
\begin{equation}\label{umf}
\begin{aligned}
\langle A\rangle^{\text{con}}_{n,k}&=\sum_{C\in\hat{\mathscr{C}}_{n,k}}(-1)^{|C|}\,\delta_C\,\text{Tr}\,_{1,\dots,|C|}^{1,\dots,|C|}\Big[S^L\cdot S^R\cdot \left( 1_{|C|}\otimes\langle A\rangle_C\otimes 1_{|C|}\right)\Big]\\
&\hspace{.2cm}+\sum_{C'\in\check{\mathscr{C}}_{n,k}}(-1)^{|C'|}\,\delta_{C'}\,\text{Tr}\,_{1,\dots,|C'|}^{1,\dots,|C'|}\Big[S^{L'}\cdot S^{R'}\cdot \left( 1_{|C'|}\otimes\langle A\rangle_{C'}\otimes 1_{|C'|}\right)\Big],\\
&=\sum_{C\in\hat{\mathscr{C}}_{n,k}}(-1)^{|C|}\,\delta_C\,\text{Tr}\,_{1,\dots,|C|}^{1,\dots,|C|}\Big[S^L\cdot S^R\cdot \left( 1_{|C|}\otimes\langle A\rangle_C\otimes 1_{|C|}\right)\Big]\\
&\hspace{.2cm}-\sum_{C\in\hat{\mathscr{C}}_{n,k}}\sum_{\stackrel{r=1}{r\neq r_j}}^{k}(-1)^{|C|}\delta_C\,\delta_{r,k+1}\\
&\hspace{.3cm}\times\text{Tr}\,_{1,\dots,|C|+1}^{1,\dots,|C|+1}\Big[S^L\cdot S^R\cdot S^r\cdot \left( 1_{|C|+1}\otimes\langle A\rangle_{C\cup\{(k+1,r)\}}\otimes 1_{|C|+1}\right)\Big],
\end{aligned}
\end{equation}
where the last equality is due to Lemma \ref{LemmaVorbereitung} and $r_j\in\{r_1,\dots,r_{|C|}\}$. Moreover, since a contraction $C'\in\check{\mathscr{C}}_{n,k}$ is related to a $C\in\hat{\mathscr{C}}_{n,k}$ by the union $C'=C\cup \{(k+1,r)\}$ with $r\neq r_j$, we have $\sum_{C\in\hat{\mathscr{C}}_{n,k}}\sum_{r=1,r\neq r_j}^{k}=\sum_{C'\in\check{\mathscr{C}}_{n,k}}$ and $|C'|=|C|+1$. On the other hand, one checks that by repeated application of the exchange relations (\ref{exchange}) the following identity holds for $C\in\hat{\mathscr{C}}_{n,k}$, namely
\begin{eqnarray*}
\hspace{-.8cm}&&\hspace{-.8cm}\langle [z_{k+1},A]\rangle_C^{\underline{\alpha}}\\
\hspace{-.4cm}&&\hspace{-.4cm}=\langle A\rangle_C^{\underline{\alpha}}-\sum_{\stackrel{r=1}{r\neq r_i}}^{k}\delta_{k+1,r}\Bigg\{\text{Tr}_{1}^1\,\Bigg[1_{n-|C|-k+r-v_r-1}\otimes \prod_{\stackrel{i=r+1}{i\neq r_j}}^{k}S(\theta_i-\theta_{k+1})_{k-|C|+v_r-r+1,i-r-w_i}\\
&&\hspace{.8cm}\times\left(1_1\otimes\langle A\rangle_{C\cup \{(k+1,r)\}}\otimes 1_1\right)\Bigg]\Bigg\}^{\underline{\alpha}}\nonumber,
\end{eqnarray*}
where $w_i:=$card$\{r_j\in\{r_1,\dots,r_{|C|}\}:r+1\leq r_j<i\}$. For the sake of clarity, we only stated the $\theta$-dependence for the occurring S-matrices. Recall further that $\underline{\alpha}:=(\alpha_1,\dots,\alpha_{n-2|C|})$. With regard to (\ref{DefSR}), the definition of $S_r(\boldsymbol{\theta})$, it follows immediately that
\begin{equation*}
S_r(\boldsymbol{\theta})=\prod_{\stackrel{i=r+1}{i\neq r_j}}^{k}S(\theta_i-\theta_{r})_{k-|C|+v_r-r+1,i-r-w_i}.
\end{equation*}
Hence
\begin{multline*}
\langle A\rangle_C^{\underline{\alpha}}=\langle [z_{k+1},A]\rangle_C^{\underline{\alpha}}\\
+\sum_{\stackrel{r=1}{r\neq r_j}}^{k}\delta_{k+1,r}\left\{\text{Tr}_{1}^1\,\Bigg[\left(1_{n-|C|-k+r-v_r-1}\otimes S_r\right)\cdot\left(1_1\otimes\langle A\rangle_{C\cup \{(k+1,r)\}}\otimes 1_1\right)\Bigg]\right\}^{\underline{\alpha}},
\end{multline*}
which inserted into (\ref{umf}) yields (\ref{rewrite}), since for $M\in\mathcal{B}(\mathcal{K}^{\otimes n},\mathcal{K}^{\otimes t})$ and $\left(1_a\otimes N\otimes 1_b\right)\in\mathcal{B}(\mathcal{K}^{\otimes m},\mathcal{K}^{\otimes n})$, $a,b,m,n,t\in\mathbb{N}$,
\begin{eqnarray}
M\cdot \left[1_a\otimes \text{Tr}^k\,_l\left(N\right)\otimes 1_b\right]&=& M\cdot \left[\text{Tr}^{k+a}\,_{l+b}\left(1_a\otimes N\otimes 1_b\right)\right]\\
&=& \text{Tr}^{k+a}\,_{l+b}\Big[M\cdot \left(1_a\otimes N\otimes 1_b\right)\Big],\qquad 1\leq k,l\leq m.\nonumber
\end{eqnarray}
In order to show (\ref{rewrite2}), we proceed similarly and find due to $\mathscr{C}_{n,k+1}=\hat{\hat{\mathscr{C}}}_{n,k+1}\sqcup\check{\check{\mathscr{C}}}_{n,k+1}$, where $C''\in\check{\check{\mathscr{C}}}_{n,k+1}$ contracts $k+1$, that
\begin{equation}\label{umf2}
\begin{aligned}
\langle A\rangle^{\text{con}}_{n,k+1}&=\sum_{\widetilde{C}\in\hat{\hat{\mathscr{C}}}_{n,k+1}}(-1)^{|\widetilde{C}|}\,\delta_{\widetilde{C}}\,\text{Tr}\,_{1,\dots,|\widetilde{C}|}^{1,\dots,|\widetilde{C}|}\Big[S^{\widetilde{L}}\cdot S^{\widetilde{R}}\cdot \left( 1_{|\widetilde{C}|}\otimes\langle A\rangle_{\widetilde{C}}\otimes 1_{|\widetilde{C}|}\right)\Big]\\
&\hspace{.2cm}+\sum_{C''\in\check{\check{\mathscr{C}}}_{n,k+1}}(-1)^{|C''|}\,\delta_{C''}\,\text{Tr}\,_{1,\dots,|C''|}^{1,\dots,|C''|}\Big[S^{L''}\cdot S^{R''}\cdot \left( 1_{|C''|}\otimes\langle A\rangle_{C''}\otimes 1_{|C''|}\right)\Big],\\
&=\sum_{\widetilde{C}\in\hat{\hat{\mathscr{C}}}_{n,k+1}}(-1)^{|\widetilde{C}|}\,\delta_{\widetilde{C}}\,\text{Tr}\,_{1,\dots,|\widetilde{C}|}^{1,\dots,|\widetilde{C}|}\Big[S^{\widetilde{L}}\cdot S^{\widetilde{R}}\cdot \left( 1_{|\widetilde{C}|}\otimes\langle A\rangle_{\widetilde{C}}\otimes 1_{|\widetilde{C}|}\right)\Big]\\
&\hspace{.2cm}-\sum_{\widetilde{C}\in\hat{\hat{\mathscr{C}}}_{n,k+1}}\sum_{\stackrel{l=k+2}{l\neq \widetilde{l}_j}}^{n}(-1)^{|\widetilde{C}|}\delta_{\widetilde{C}}\,\delta_{l,k+1}\\
&\hspace{.3cm}\times\text{Tr}\,_{1,\dots,|\widetilde{C}|+1}^{1,\dots,|\widetilde{C}|+1}\Big[S^{\widetilde{L}}\cdot S^l\cdot S^{\widetilde{R}}\cdot \left( 1_{|\widetilde{C}|+1}\otimes\langle A\rangle_{\widetilde{C}\cup\{(l,k+1)\}}\otimes 1_{|\widetilde{C}|+1}\right)\Big],
\end{aligned}
\end{equation}
where the last equality follows from Corollary \ref{LemmaVorbereitung2} and $\widetilde{l}_j\in\{\widetilde{l}_1,\dots,\widetilde{l}_{|\widetilde{C}|}\}$. Moreover, since a contraction $C''\in\check{\check{\mathscr{C}}}_{n,k+1}$ is related to a $\widetilde{C}\in\hat{\hat{\mathscr{C}}}_{n,k+1}$ by the union $C''=\{(l,k+1)\}\cup \widetilde{C}$ with $l\neq \widetilde{l}_j$, we have $\sum_{\widetilde{C}\in\hat{\hat{\mathscr{C}}}_{n,k+1}}\sum_{l=k+2,l\neq \widetilde{l}_j}^{n}=\sum_{C''\in\check{\check{\mathscr{C}}}_{n,k+1}}$ and $|C''|=|\widetilde{C}|+1$. Again, by repeated application of the exchange relations (\ref{exchange}), we obtain
\begin{eqnarray*}
\hspace{-.8cm}&&\hspace{-.8cm}\langle [A,z^\dagger_{k+1}]\rangle_{\widetilde{C}}^{\underline{\alpha}}\\
\hspace{-.4cm}&&\hspace{-.4cm}=\langle A \rangle_{\widetilde{C}}^{\underline{\alpha}}-\sum_{\stackrel{l=k+2}{l\neq \widetilde{l}_j}}^{n}\delta_{k+1,l}\,\Bigg\{\text{Tr}_{1}^{1}\,\Bigg[\prod_{\stackrel{i=l-1}{i\neq \widetilde{l}_j}}^{k+2}S(\theta_l-\theta_i)_{l-k-1+u_l-|\widetilde{C}|,i-k-1+u_l-|\widetilde{C}|+t_i}\otimes 1_{n-l-u_l+k+1-|\widetilde{C}|}\\
&&\hspace{1cm}\times\left(1_1\otimes\langle A\rangle_{\widetilde{C}\cup \{(l,k+1)\}}\otimes 1_1\right)\Bigg]\Bigg\}^{\underline{\alpha}}\nonumber,
\end{eqnarray*}
where $t_i:=$card$\{\widetilde{l}_j\in\{\widetilde{l}_1,\dots,\widetilde{l}_{|\widetilde{C}|}\}:i<\widetilde{l}_j\leq l-1\}$. With regard to (\ref{DefSl}), the definition of $S_l(\boldsymbol{\theta})$, we arrive at
\begin{equation*}
S_l(\boldsymbol{\theta})=\prod_{\stackrel{i=l-1}{i\neq \widetilde{l}_j}}^{k+2}S(\theta_l-\theta_i)_{l-k-1+u_l-|\widetilde{C}|,i-k-1+u_l-|\widetilde{C}|+t_i}.
\end{equation*}
Hence
\begin{multline*}
\langle A \rangle_{\widetilde{C}}^{\underline{\alpha}}=\langle [A,z^\dagger_{k+1}]\rangle_{\widetilde{C}}^{\underline{\alpha}}\\
+\sum_{\stackrel{l=k+2}{l\neq \widetilde{l}_j}}^{n}\delta_{k+1,l}\,\Bigg\{\text{Tr}_{1}^{1}\,\Bigg[S_l\otimes 1_{n-l-u_l+k+1-|\widetilde{C}|}\cdot\left(1_1\otimes\langle A\rangle_{\widetilde{C}\cup \{(l,k+1)\}}\otimes 1_1\right)\Bigg]\Bigg\}^{\underline{\alpha}},
\end{multline*}
implying (\ref{rewrite2}).
\end{proof}
Using this result, we now state a certain analyticity property of the completely contracted matrix elements.
\begin{lemma}\label{Lemma}
Let $0\leq k< n$, $S\in\mathcal{S}_0$ and $A\in\mathcal{F}(W_R)$, then $\langle A\rangle^{\rm{con}}_{n,k}(\theta_1,\dots,\theta_n)$ has an analytic continuation in the variable $\theta_{k+1}$ to the strip $S(-\pi,0)$. Moreover, at \mbox{Im$\,\theta_{k+1}=-\pi$} one finds the boundary value
\begin{equation}
\left(\langle A\rangle^{\text{con}}_{n,k}\right)^{\boldsymbol{\alpha}}(\theta_1,\dots,\theta_{k+1}-i\pi,\dots\theta_n)=\left(\langle A\rangle^{\text{con}}_{n,k+1}\right)^{\alpha_2\dots\alpha_n\alpha_1}(\theta_1,\dots,\theta_{k+1},\dots\theta_n).
\end{equation}
\end{lemma}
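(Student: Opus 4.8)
The plan is to build on the two equivalent expressions for the entirely contracted matrix elements provided by Lemma~\ref{rewri}. Starting from the representation~(\ref{rewrite}) of $\langle A\rangle^{\text{con}}_{n,k}$, the whole dependence on the distinguished variable $\theta_{k+1}$ is carried by two kinds of factors only: the contracted commutator matrix elements $\langle[z_{k+1},A]\rangle_C$ from~(\ref{commu}), and the permutation tensor $S^L$. Indeed, for a contraction $C\in\hat{\mathscr{C}}_{n,k}$ one has $k+1\notin\{l_1,\dots,l_{|C|}\}$ and all right indices satisfy $r_i\leq k$, so neither $\delta_C$ nor $S^R$ involves $\theta_{k+1}$. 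The goal is to continue each summand of~(\ref{rewrite}) analytically in $\theta_{k+1}$ into the strip $S(-\pi,0)$ and to identify its boundary value at $\text{Im}\,\theta_{k+1}=-\pi$ with the matching summand of the representation~(\ref{rewrite2}) of $\langle A\rangle^{\text{con}}_{n,k+1}$.

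First I would establish the analytic continuation. For fixed values of the remaining variables and indices, the function $\theta_{k+1}\mapsto\langle[z_{k+1},A]\rangle_C$ is, by Lemma~\ref{basicLemma} applied with $\Psi_1=z^\dagger_{\alpha_{k+2}}(\theta_{k+2})\cdots\Omega$ and $\Psi_2=z^\dagger_{\overline{\alpha}_k}(\theta_k)\cdots\Omega$, a component of an element of $H^2(S(-\pi,0))\otimes\mathcal{K}$; it is therefore analytic on the open strip and admits $L^2$ boundary values on both edges, with Hardy norm bounded as in~(\ref{hardynorm}). The tensor $S^L$ involves the S-matrix at arguments $\theta_{k+1}-\theta_j$; since $S\in\mathcal{S}_0$ extends by Definition~\ref{regularS} to a bounded analytic function on the enlarged strip $S(-\kappa,\pi+\kappa)$, lowering $\text{Im}\,\theta_{k+1}$ down to $-\pi$ keeps these arguments in the analyticity domain of $S$, so $S^L$ is analytic and uniformly bounded in $\theta_{k+1}$ on $\overline{S(-\pi,0)}$. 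Multiplication of a Hardy-space function by such a bounded analytic multiplier preserves membership in $H^2(S(-\pi,0))$, cf. Appendix~\ref{hardyAppendix}; hence each summand, and so the finite sum $\langle A\rangle^{\text{con}}_{n,k}$, continues analytically in $\theta_{k+1}$ to $S(-\pi,0)$.

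Next I would compute the boundary value at $\text{Im}\,\theta_{k+1}=-\pi$ and reorganise it. For the commutator factor, the boundary relation of Lemma~\ref{basicLemma}, $\hat{K}^\dagger_\alpha(\theta)=-\hat{K}_{\overline{\alpha}}(\theta-i\pi)$, turns the continuation of $\langle[z_{k+1},A]\rangle_C$ at $\theta_{k+1}-i\pi$ into $\langle[A,z^\dagger_{k+1}]\rangle_{\widetilde{C}}$ of~(\ref{commu2}), generating exactly the charge conjugation $\alpha_{k+1}\mapsto\overline{\alpha}_{k+1}$ and the overall sign that distinguish~(\ref{commu2}) from~(\ref{commu}). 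For the S-matrix factor, the shift $\theta_{k+1}\to\theta_{k+1}-i\pi$ is absorbed by crossing symmetry in its product form, Lemma~\ref{eigenschaften}~$b)$ (equivalently the crossing symmetry of Definition~\ref{S-matrixDefinition}), which re-expresses the continued $S^L$ through S-matrices at real arguments with conjugated and permuted indices. Combining these two transformations and matching the re-indexed permutation tensors $S^{\widetilde{L}},S^{\widetilde{R}}$ by means of the preparatory identities---Lemmas~\ref{LemmaVorbereitung01} and~\ref{LemmaVorbereitung02} for the permutations and Lemma~\ref{LemmaVorbereitung} together with Corollary~\ref{LemmaVorbereitung2} for the trace expressions---the boundary value assembles term by term into~(\ref{rewrite2}). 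The cyclic relabelling $\alpha_2\dots\alpha_n\alpha_1$ on the right-hand side is produced by this crossing together with the charge conjugation, which moves one particle from the bra to the creation side.

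The analyticity step is comparatively routine: it is a bounded-multiplier argument resting on Lemma~\ref{basicLemma} and on the regularity built into $\mathcal{S}_0$. The main obstacle is the final reassembly. One must check that the crossing-transformed tensor $S^L$, the conjugated commutator matrix element $\langle[A,z^\dagger_{k+1}]\rangle_{\widetilde{C}}$, and the relabelled contractions recombine \emph{exactly} into the representation~(\ref{rewrite2}), with no residual S-matrix factors and with the index order as stated. This hinges on keeping the correspondence $C\in\hat{\mathscr{C}}_{n,k}\leftrightarrow\widetilde{C}\in\hat{\hat{\mathscr{C}}}_{n,k+1}$---under which the index $k+1$ passes from a potential left to a potential right contraction partner---fully consistent with the permutation identities, and this bookkeeping is precisely what the preparatory lemmas were designed to supply.
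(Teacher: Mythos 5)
Your proposal follows essentially the same route as the paper's proof: it starts from the representation (\ref{rewrite}) of Lemma \ref{rewri}, notes that $\delta_C$ and $S^R$ are $\theta_{k+1}$-independent for $C\in\hat{\mathscr{C}}_{n,k}$, continues the commutator matrix elements via Lemma \ref{basicLemma} and the factors of $S^L$ via strip analyticity of the S-matrix, and then identifies the boundary value at $\mathrm{Im}\,\theta_{k+1}=-\pi$ with (\ref{rewrite2}) through crossing symmetry (Lemma \ref{eigenschaften} $b)$) and the correspondence of $C\in\hat{\mathscr{C}}_{n,k}$ with $\widetilde{C}\in\hat{\hat{\mathscr{C}}}_{n,k+1}$, exactly as the paper does. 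Two cosmetic remarks only: the arguments $\theta_{l_i}-\theta_{k+1}$ stay in $\overline{S(0,\pi)}$, so ordinary analyticity of $S\in\mathcal{S}$ already suffices for continuing $S^L$ (the regularity of $\mathcal{S}_0$ is needed rather for $\langle A\rangle^{\text{con}}_{n,k}$ to be a well-defined distribution), and the preparatory Lemmas \ref{LemmaVorbereitung01}, \ref{LemmaVorbereitung02}, \ref{LemmaVorbereitung} and Corollary \ref{LemmaVorbereitung2} enter through Lemma \ref{rewri} rather than being invoked again in the final matching, which instead rests on the decomposition of $\pi_\lambda$ and the identification $S^{\lambda\backslash\{k+1\}}=S^{\widetilde{\lambda}}$.
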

\begin{proof}
By Lemma \ref{rewri} we have
\begin{equation*}
\langle A\rangle^{\text{con}}_{n,k}=\sum_{C\in\hat{\mathscr{C}}_{n,k}}(-1)^{|C|}\,\delta_C\,\text{Tr}\,_{1,\dots,|C|}^{1,\dots,|C|}\Big[S^L\cdot S^R\cdot \left( 1_{|C|}\otimes\langle [z_{k+1},A]\rangle_C\otimes 1_{|C|}\right)\Big],
\end{equation*}
with $\hat{\mathscr{C}}_{n,k}\subset\mathscr{C}_{n,k}$ the set of all contractions $C\in\mathscr{C}_{n,k}$ for which $k+1\notin\{ l_1,\dots,l_{|C|}\}$. Considering a single term in this formula, which corresponds to a contraction $C\in\hat{\mathscr{C}}_{n,k}$, it is obvious that the delta distributions appearing in $\delta_C$ do not depend on $\theta_{k+1}$ since $C$ does not contract $k+1$. Due to Lemma \ref{basicLemma} the matrix element $\langle [z_{k+1},A]\rangle_C^{\underline{\beta}}$ has analytic continuation in $\theta_{k+1}$ to $S(-\pi,0)$. Its boundary value at Im$(\theta_{k+1})=-\pi$ is obtained by replacing the commutator $[z_{\beta_{1}}(\theta_{k+1}),A]$ by $[A,z^\dagger_{\overline{\beta}_{1}}(\theta_{k+1})]$, yielding $\langle [A,z^\dagger_{k+1}]\rangle_C^{\beta_2\dots\beta_{n-2|C|}\beta_1}$. The variable $\theta_{k+1}$ does further only appear in the product $S^{L}(\boldsymbol{\theta})$ of S-matrices. With regard to Definition (\ref{PiCDef}) this product contains factors of the form $S^{\alpha\beta}_{\gamma\delta}(\theta_{l_i}-\theta_{k+1})$ which can be analytically continued in $\theta_{k+1}$ into the strip $S(-\pi,0)$. Due to crossing symmetry their boundary values are given by $S^{\overline{\gamma}\alpha}_{\delta\overline{\beta}}(\theta_{k+1}-\theta_{l_i})$. In particular, one may decompose $\pi_\lambda$ (\ref{PiContraction}) according to
\begin{eqnarray*}
\pi_\lambda=\prod_{i=1}^{|C|}\tau_{l_i+u_i-1}\cdot\tau_{l_i+u_i-2}\cdots\tau_{k+i}&=&\prod_{i=1}^{|C|}\tau_{l_i+u_i-1}\cdot\tau_{l_i+u_i-2}\cdots\tau_{k+1+i}\cdot\prod_{j=1}^{|C|} \tau_{k+j}\\
&=&\prod_{i=1}^{|C|}\pi_{l_i+u_i}^{k+i+1}\cdot \pi_{k+1}^{k+1+|C|}.
\end{eqnarray*}
Correspondingly, one has
\begin{equation}\label{neueKontraktion}
S^{L}(\boldsymbol{\theta})=\left(1_1\otimes S^{\lambda\backslash\{k+1\}}(\theta_{k+2},\dots,\theta_{l^C_{\text{max}}}) \otimes 1_{n-l^C_{\text{max}}+k}\right)\cdot\left(\prod_{j=1}^{|C|}S(\theta_{l_j}-\theta_{k+1})_{n,j}\right),
\end{equation}
with $S^{\lambda\backslash\{k+1\}}\in\mathcal{U}(\mathcal{K}^{\otimes l^C_{\text{max}}-(k+1)})$. Taking a closer look at the $\theta_{k+1}$-dependent factor in $S^L(\boldsymbol{\theta})$, we have
\begin{eqnarray*}
\prod_{j=1}^{|C|}S(\theta_{l_j}-\theta_{k+1})_{n,j}=\left(\prod_{j=1}^{|C|}S(\theta_{l_j}-\theta_{k+1})_{|C|+1,j}\right)\otimes 1_{n-|C|-1}.
\end{eqnarray*}
Hence, proceeding similarly as in the proof of Lemma \ref{LemmaVorbereitung}, it follows from
\begin{eqnarray*}
\mathcal{E}_{C}^{\alpha_1\dots\alpha_n}(\boldsymbol{\theta})=\delta_C\left(S^{L}(\boldsymbol{\theta})\cdot S^{R}(\boldsymbol{\theta})\right)^{\alpha_1\quad\dots\quad\alpha_n}_{\gamma_1\dots\gamma_{|C|}\beta_1\dots\beta_{n-2|C|}\overline{\gamma}_{|C|}\dots\overline{\gamma}_{1}}\langle [z_{k+1},A]\rangle_{C}^{\beta_1\dots\beta_{n-2|C|}}(\underline{\theta}),
\end{eqnarray*}
that
\begin{eqnarray*}
\left(S^{L}\cdot S^{R}\right)^{\alpha_1\quad\dots\quad\alpha_n}_{\gamma_1\dots\gamma_{|C|}\beta_1\dots\beta_{n-2|C|}\overline{\gamma}_{|C|}\dots\overline{\gamma}_{1}}
\end{eqnarray*}
contains, in particular, the factors
\begin{eqnarray*}
\delta^{\alpha_1}_{\xi_1}\left[\prod_{j=1}^{|C|}S(\theta_{l_j}-\theta_{k+1})_{|C|+1,j}\right]^{\xi_1\dots\xi_{|C|+1}}_{\gamma_1\dots\gamma_{|C|}\beta_1}\left[S^\rho(\theta_{r_1},\dots,\theta_{k})\right]^{\alpha_{n-k+r_1}\dots\alpha_n}_{\beta_{n-|C|-k+r_1}\dots\beta_{n-2|C|}\overline{\gamma}_{|C|}\dots\overline{\gamma}_1}.
\end{eqnarray*}
Then, analytic continuation yields at $\theta_{k+1}-i\pi$
\begin{eqnarray*}
\hspace{-.5cm}&&\hspace{-.5cm}\delta^{\alpha_1}_{\xi_1}\delta_C\left[\prod_{j=1}^{|C|}S(\theta_{l_j}-\theta_{k+1}+i\pi)_{|C|+1,j}\right]^{\xi_1\dots\xi_{|C|+1}}_{\gamma_1\dots\gamma_{|C|}\beta_1}\left[S^\rho(\theta_{r_1},\dots,\theta_{k})\right]^{\alpha_{n-k+r_1}\dots\alpha_n}_{\beta_{n-|C|-k+r_1}\dots\beta_{n-2|C|}\overline{\gamma}_{|C|}\dots\overline{\gamma}_1}\\
&=&\delta^{\alpha_1}_{\xi_1}\delta_C\left[S^\rho(\theta_{r_1},\dots,\theta_{k})\right]^{\alpha_{n-k+r_1}\dots\alpha_n}_{\beta_{n-|C|-k+r_1}\dots\beta_{n-2|C|}\overline{\gamma}_{|C|}\dots\overline{\gamma}_1}\left[\prod_{j=1}^{|C|}S(\theta_{k+1}-\theta_{l_j})_{|C|+1,|C|+1-j}\right]^{\overline{\gamma}_{|C|}\dots\overline{\gamma}_1\xi_1}_{\beta_1\overline{\xi}_{|C|+1}\dots\overline{\xi}_2}\\
&=&\delta_C\left[S^\rho(\theta_{r_1},\dots,\theta_{k})\otimes 1_1\right]^{\alpha_{n-k+r_1}\dots\alpha_n\alpha_1}_{\varepsilon_1\dots\varepsilon_{k-r_1+2}}\\
&&\hspace{.3cm}\times\left[1_{k-r_1-|C|+1}\otimes\prod_{j=1}^{|C|}S(\theta_{k+1}-\theta_{r_j})_{|C|+1,|C|+1-j}\right]^{\varepsilon_1\dots\varepsilon_{k-r_1+2}}_{\beta_{n-|C|-k+r_1}\dots\beta_{n-2|C|}\beta_1\overline{\xi}_{|C|+1}\dots\overline{\xi}_2}
\end{eqnarray*}
due to (\ref{crossing}). The last line, however, resembles $S^{\widetilde{\rho}}$, the tensor which results from a permutation $\pi_{\widetilde{\rho}}$ associated to a contraction $\widetilde{C}\in\hat{\hat{\mathscr{C}}}_{n,k+1}$, i.e. where $k+1$ appears as a noncontracted ``right'' index. More precisely, it follows from (\ref{PiContraction}) that $\pi_{\widetilde{\rho}}$ associated to $\widetilde{C}\in\hat{\hat{\mathscr{C}}}_{n,k+1}$ is given by
\begin{eqnarray*}
\pi_{\widetilde{\rho}}=\prod_{i=1}^{|\widetilde{C}|}\tau_{\widetilde{r}_i-i+1}\cdots\tau_{k+1-i}=\prod_{i=1}^{|\widetilde{C}|}\pi_{\widetilde{r}_i-i+1}^{k-i+1}\cdot \pi_{k+1}^{k+1-|\widetilde{C}|}.
\end{eqnarray*}
However, since $C\in\hat{\mathscr{C}}_{n,k}$ does not contract $k+1$, it can be regarded as an element of $\hat{\hat{\mathscr{C}}}_{n,k+1}$ too. In particular, we have $\delta_C=\delta_{\widetilde{C}}$, $|\widetilde{C}|=|C|$, $r_j=\widetilde{r}_j$ and $l_j=\widetilde{l}_j$, $j=1,\dots,|C|$. Moreover, with regard to (\ref{neueKontraktion}), it is obvious that $S^{\lambda\backslash\{k+1\}}=S^{\widetilde{\lambda}}$. Thus, we arrive at
\begin{multline*}
\mathcal{E}_{C}^{\alpha_1\dots\alpha_n}(\theta_1,\dots,\theta_{k+1}-i\pi,\dots,\theta_n)\\
=\delta_{\widetilde{C}}\left(S^{\widetilde{L}}(\boldsymbol{\theta})\cdot S^{\widetilde{R}}(\boldsymbol{\theta})\right)^{\alpha_2\quad\dots\quad\alpha_n\alpha_1}_{\xi_2\dots\xi_{|\widetilde{C}|+1}\beta_2\dots\beta_{n-2|\widetilde{C}|}\beta_1\overline{\xi}_{|\widetilde{C}|+1}\dots\overline{\xi}_{2}}\langle [A,z^\dagger_{k+1}]\rangle_{\widetilde{C}}^{\beta_2\dots\beta_{n-2|\widetilde{C}|}\beta_1}(\underline{\theta}),
\end{multline*}
with $\langle [A,z^\dagger_{k+1}]\rangle_{\widetilde{C}}$ given by (\ref{commu2}). Hence
\begin{eqnarray*}
\hspace{-.4cm}&&\hspace{-.4cm}\left(\langle A\rangle^{\text{con}}_{n,k}\right)^{\boldsymbol{\alpha}}(\theta_1,\dots,\theta_{k+1}-i\pi,\dots,\theta_n)\\
\hspace{-.4cm}&&\hspace{-.4cm}=\sum_{\widetilde{C}\in\hat{\hat{\mathscr{C}}}_{n,k+1}}(-1)^{|\widetilde{C}|}\,\delta_{\widetilde{C}}\,\left\{\text{Tr}\,_{1,\dots,|\widetilde{C}|}^{1,\dots,|\widetilde{C}|}\Big[S^{\widetilde{L}}(\boldsymbol{\theta})\cdot S^{\widetilde{R}}(\boldsymbol{\theta})\cdot \left( 1_{|\widetilde{C}|}\otimes\langle [A,z^\dagger_{k+1}]\rangle_{\widetilde{C}}(\underline{\theta})\otimes 1_{|\widetilde{C}|}\right)\Big]\right\}^{\alpha_2\dots\alpha_n\alpha_1}\\
\hspace{-.4cm}&&\hspace{-.4cm}=\left(\langle A\rangle^{\text{con}}_{n,k+1}\right)^{\alpha_2\dots\alpha_n\alpha_1}(\theta_1,\dots,\theta_{k+1},\dots,\theta_n),
\end{eqnarray*}
due to Lemma \ref{rewri}.
\end{proof}

Another important property of completely contracted matrix elements (\ref{contraction}) is the following.
\begin{lemma}\label{Lemma2}
Let $f_1,\dots,f_n\in\mathscr{S}(\mathbb{R})\otimes\mathcal{K}$ and $0\leq\lambda\leq\pi$, then
\begin{equation}
\left|\int d^n\boldsymbol{\theta}\left(\overline{\prod_{j=1}^{n}f_j(\theta_j)},\langle A\rangle^{\text{con}}_{n,k}(\theta_1,\dots,\theta_{k+1}-i\lambda,\dots,\theta_n)\right)\right|
\leq 2^{n}\sqrt{n!}\,\|A\|\cdot\prod_{j=1}^{n}\|f_j\|_2\,.
\end{equation}
\end{lemma}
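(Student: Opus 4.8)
The plan is to control the left-hand side by expanding the entirely contracted matrix element (\ref{contraction}) into its sum over contractions $C\in\mathscr{C}_{n,k}$, to bound the contribution of each $C$ at the two boundaries $\lambda=0$ and $\lambda=\pi$ of the strip $\overline{S(-\pi,0)}$, and then to interpolate in the imaginary shift by the three lines theorem. The reason for passing through an interpolation, rather than estimating directly at a fixed $\lambda\in(0,\pi)$, is that only on the two boundary lines do the tensors $S^L,S^R$ and the $\theta_{k+1}$-dependent S-matrix factors take unitary values; in the interior they are merely bounded by $\|S\|_\kappa$ (cf. (\ref{Skappabounded})), which would pollute the estimate with an $S$-dependent constant and spoil the clean factor $2^n$. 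A preliminary observation, used throughout, is that for every contraction each index $j\in\{1,\dots,n\}$ occurs exactly once—either as a free variable or in exactly one contracted pair—so that the $f$-dependence of each summand factorizes as $\prod_{j=1}^n\|f_j\|_2$, uniformly in $C$.

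First I would treat the boundary $\lambda=0$. Here one performs the $\delta_C$-integrations in each summand of (\ref{contraction}) and uses that $S^L(\boldsymbol\theta)\cdot S^R(\boldsymbol\theta)$ is a unitary on $\mathcal{K}^{\otimes n}$ for real $\boldsymbol\theta$, so that its pointwise action preserves $\mathcal{K}^{\otimes n}$-norms and the charge-conjugation pairings in $\text{Tr}^{1,\dots,|C|}_{1,\dots,|C|}$ are norm-one contractions. Thus the contribution of $C$ can be written as $\langle A\rangle_C$ evaluated on a test vector built from $f_1,\dots,f_n$, the deltas and the (unitary) S-matrices, whose $L^2\otimes\mathcal{K}$-norm is bounded by $\prod_{j=1}^n\|f_j\|_2$. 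Applying the basic bound (\ref{estimate1}) then yields, for $|C|=p$, the estimate $\sqrt{(n-k-p)!}\,\sqrt{(k-p)!}\,\|A\|\,\prod_{j=1}^n\|f_j\|_2$ for that single summand.

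Next I would handle the boundary $\lambda=\pi$. By Lemma \ref{Lemma} the analytic continuation of $\langle A\rangle^{\text{con}}_{n,k}$ to $\text{Im}\,\theta_{k+1}=-\pi$ equals $\langle A\rangle^{\text{con}}_{n,k+1}$ up to a cyclic relabelling of the indices, so the previous step applies verbatim with $k$ replaced by $k+1$ and gives the same type of bound (note $(k+1)!(n-k-1)!\le n!$). Summing the per-contraction estimates over all $C$, and counting $\binom{k}{p}\binom{n-k}{p}\,p!$ contractions of length $p$, reduces the whole matter to the combinatorial inequality $\sum_{p}\binom{k}{p}\binom{n-k}{p}\,p!\,\sqrt{(k-p)!(n-k-p)!}\le 2^n\sqrt{n!}$. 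This follows from $k!/(k-p)!\le k^p$ and $(n-k)!/(n-k-p)!\le (n-k)^p$, whence each summand is at most $\tfrac{(k(n-k))^{p/2}}{p!}\sqrt{k!\,(n-k)!}$; the sum over $p$ is then bounded by $\sqrt{k!(n-k)!}\,e^{\sqrt{k(n-k)}}\le\sqrt{n!}\,e^{n/2}\le 2^n\sqrt{n!}$, using $k!(n-k)!\le n!$ and $\sqrt e<2$.

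It remains to interpolate. I would introduce the function $\mu\mapsto\int d^n\boldsymbol\theta\,\overline{\prod_j f_j(\theta_j)}\,\langle A\rangle^{\text{con}}_{n,k}(\theta_1,\dots,\theta_{k+1}+\mu,\dots,\theta_n)$ on the closed strip $\overline{S(-\pi,0)}$, whose analyticity in the interior follows from Lemma \ref{Lemma} together with dominated convergence, or equivalently from the Hardy-space representation of Lemmas \ref{rewri} and \ref{basicLemma}, in which the $\theta_{k+1}$-dependence sits in a finite product of S-matrix factors and in a commutator matrix element lying in a Hardy space. The Schwartz functions $f_j$ supply the decay in the remaining variables and the bound $\|S\|_\kappa<\infty$ (cf. Definition \ref{regularS}) the boundedness of the S-matrix factors, so that this function is bounded and continuous on the closed strip and analytic in its interior. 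Having the bound $2^n\sqrt{n!}\,\|A\|\prod_j\|f_j\|_2$ on both boundary lines, the three lines theorem yields it for every $\mu=-i\lambda$ with $0\le\lambda\le\pi$, which is the claim. The \emph{main obstacle} I anticipate is the bookkeeping in the $\lambda=0$ step—verifying rigorously that, after the $\delta_C$-contractions and the partial traces, unitarity of $S^L\cdot S^R$ really does reduce the summand to a pairing of $\langle A\rangle_C$ with a test vector of norm at most $\prod_j\|f_j\|_2$—together with securing the uniform boundedness on the strip interior needed to legitimately invoke the three lines theorem.
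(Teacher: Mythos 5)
Your proposal is correct and follows essentially the same route as the paper's proof: a per-contraction estimate at $\lambda=0$ via unitarity of the $S$-tensors, Cauchy--Schwarz and the bound (\ref{estimate1}); identification of the $\lambda=\pi$ boundary value with $\langle A\rangle^{\text{con}}_{n,k+1}$ through Lemma \ref{Lemma}; the count of $p!\binom{k}{p}\binom{n-k}{p}$ contractions of length $p$; and interpolation by the three lines theorem using analyticity from Lemma \ref{Lemma} and boundedness of $S$ on the closed strip. The only (immaterial) deviation is your proof of the combinatorial inequality via $e^{\sqrt{k(n-k)}}\leq e^{n/2}<2^n$, where the paper instead uses $u!\,v!\leq(u+v)!$ and sums the binomial coefficients to $2^k\cdot 2^{n-k}=2^n$, arriving at the same bound $2^n\sqrt{n!}$.
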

\begin{proof}
We start by estimating $|\langle A\rangle^{\text{con}}_{n,k}(f_1\otimes\cdots\otimes f_n)|$ first. So we have
\begin{eqnarray*}
&&\hspace{-.8cm}\left|\langle A\rangle^{\text{con}}_{n,k}(f_1\otimes\cdots\otimes f_n)\right|\\
&=&\Bigg|\sum_{C\in\mathscr{C}_{n,k}}(-1)^{|C|}\int d^n\boldsymbol{\theta}\,
\delta_C\\&&\hspace{1.5cm}\times\Bigg\{\text{Tr}\,_{1,\dots,|C|}^{1,\dots,|C|}\Big[S^L(\boldsymbol{\theta})\cdot S^R(\boldsymbol{\theta})\cdot\left( 1_{|C|}\otimes\langle A\rangle_C(\underline{\theta})\otimes 1_{|C|}\right)\Big]\Bigg\}^{\alpha_{k+1}\dots\alpha_n\alpha_1\dots\alpha_k}\\&&\hspace{3cm}\times\prod_{i=1}^{k}f^{\alpha_i}_i(\theta_i)\prod_{j=k+1}^{n}f^{\alpha_j}_j(\theta_j)\Bigg|.
\end{eqnarray*}
Taking the notational convention of Definition (\ref{conmatrixelements}) into account, it follows
\begin{eqnarray*}
&&\hspace{-.8cm}\left|\langle A\rangle^{\text{con}}_{n,k}(f_1\otimes\cdots\otimes f_n)\right|\\
&=&\Bigg|\sum_{C\in\mathscr{C}_{n,k}}(-1)^{|C|}\int d^n\boldsymbol{\theta}\,\delta_C\,\\
&&\hspace{-.3cm}\times
\left<\prod_{j=k+1}^{n}\overline{f_j^{\alpha_j}(\theta_j)}\overline{\left(S^\lambda(\boldsymbol{\theta}_\lambda)\otimes 1_{n-l^C_{\text{max}}}\right)^{\alpha_{k+1}\quad\dots\quad\alpha_n}_{\gamma_1\dots\gamma_{|C|}\beta_{k+1}\dots\beta_{n-|C|}}}\,z^\dagger_{\beta_{k+1}}(\theta_{k+1})\stackrel{\widehat{l}}{\cdots}z^\dagger_{\beta_{n-|C|}}(\theta_{n})\Omega\Big|\right.\\
&&\times \left.\Big|A
\prod_{i=1}^{k}f_i^{\alpha_i}(\theta_i)\left( 1_{r_1-1}\otimes S^\rho(\boldsymbol{\theta}_\rho)\right)^{\alpha_1\quad\dots\quad\alpha_k}_{\beta_1\dots\beta_{k-|C|}\overline{\gamma}_{|C|}\dots\overline{\gamma}_{1}}\,z^\dagger_{\overline{\beta}_{k-|C|}}(\theta_k)\stackrel{\widehat{r}}{\cdots}z^\dagger_{\overline{\beta}_{1}}(\theta_1)\Omega
\right>\Bigg|,
\end{eqnarray*}
where $\boldsymbol{\theta}_\lambda:=(\theta_{k+1},\dots\theta_{l^C_{\text{max}}})$ and $\boldsymbol{\theta}_\rho:=(\theta_{r_1},\dots\theta_{k})$. The symbols $\widehat{l}$ and $\widehat{r}$ indicate the omission of $z^\dagger(\theta_{l_i})$ and $z^\dagger(\theta_{r_i})$ respectively with $i=1,\dots,|C|$. Let, moreover, $\boldsymbol{\theta}_{\widehat{\boldsymbol{l}}}:=(\theta_{k+1},\stackrel{\widehat{\theta}_{l_1},\dots,\widehat{\theta}_{l_{|C|}}}{\dots},\theta_n)$, $\boldsymbol{\theta}_{\widehat{\boldsymbol{r}}}:=(\theta_{1},\stackrel{\widehat{\theta}_{r_1},\dots,\widehat{\theta}_{r_{|C|}}}{\dots},\theta_k)$ and $\boldsymbol{\theta}_{\boldsymbol{r}}:=(\theta_{r_1},\dots,\theta_{r_{|C|}})$. Then, after integration over the delta distributions and by means of (\ref{PCT}), we obtain
\begin{eqnarray*}
\hspace{-.8cm}&&\hspace{-.8cm}\left|\langle A\rangle^{\text{con}}_{n,k}(f_1\otimes\cdots\otimes f_n)\right|\\
&&\leq\sum_{C\in\mathscr{C}_{n,k}}\Bigg|\int d^{|C|}\boldsymbol{\theta}_{\boldsymbol{r}}\\
&&\times
\left<\int d^{^{n-k-|C|}}\boldsymbol{\theta}_{\widehat{\boldsymbol{l}}}\,\left(T_l(\boldsymbol{\theta}_{\widehat{\boldsymbol{l}}},\boldsymbol{\theta}_{\boldsymbol{r}})\right)_{\gamma_1\dots\gamma_{|C|}\beta_{k+1}\dots\beta_{n-|C|}}\right. z^\dagger_{\beta_{k+1}}(\theta_{k+1})\stackrel{\widehat{l}}{\cdots}z^\dagger_{\beta_{n-|C|}}(\theta_{n})\Omega\Big|\\
&&\times \left.\Big|A\int d^{^{k-|C|}}\boldsymbol{\theta}_{\widehat{\boldsymbol{r}}}\,\left(T_r(\boldsymbol{\theta}_{\widehat{\boldsymbol{r}}},\boldsymbol{\theta}_{\boldsymbol{r}})\right)^{\gamma_{1}\dots\gamma_{|C|}\overline{\beta}_{k-|C|}\dots\overline{\beta}_1}
\,z^\dagger_{\overline{\beta}_{k-|C|}}(\theta_k)\stackrel{\widehat{r}}{\cdots}z^\dagger_{\overline{\beta}_{1}}(\theta_1)\Omega
\right>\Bigg|,
\end{eqnarray*}
where
\begin{multline*}
\left(T_l(\boldsymbol{\theta}_{\widehat{\boldsymbol{l}}},\boldsymbol{\theta}_{\boldsymbol{r}})\right)_{\gamma_1\dots\gamma_{|C|}\beta_{k+1}\dots\beta_{n-|C|}}\\:=\overline{\prod_{j=k+1}^{n}f_j^{\alpha_j}(\theta_j)\Big|_{\theta_{l_i}=\theta_{r_i}}\left(S^\lambda(\boldsymbol{\theta}_{\widehat{\boldsymbol{l}}},\boldsymbol{\theta}_{\boldsymbol{r}})\otimes 1_{n-l^C_{\text{max}}}\right)^{\alpha_{k+1}\quad\dots\quad\alpha_n}_{\gamma_1\dots\gamma_{|C|}\beta_{k+1}\dots\beta_{n-|C|}}},
\end{multline*}
and
\begin{equation*}
\left(T_r(\boldsymbol{\theta}_{\widehat{\boldsymbol{r}}},\boldsymbol{\theta}_{\boldsymbol{r}})\right)^{\gamma_{1}\dots\gamma_{|C|}\overline{\beta}_{k-|C|}\dots\overline{\beta}_1}:=\prod_{i=1}^{k}f_i^{\alpha_i}(\theta_i)\left(  \widetilde{S}^\rho(\boldsymbol{\theta}_\rho)\otimes 1_{r_1-1}\right)_{\overline{\alpha}_k\quad\dots\quad\overline{\alpha}_1}^{\gamma_{1}\dots\gamma_{|C|}\overline{\beta}_{k-|C|}\dots\overline{\beta}_1}
\end{equation*}
The tensor $\widetilde{S}^\rho$ denotes the PCT transformed version of $S^\rho$, in accordance with (\ref{PCT}). Due to the particle number bounds (\ref{numberBounds}) we further find
\begin{eqnarray*}
&&\hspace{-.8cm}\left|\langle A\rangle^{\text{con}}_{n,k}(f_1\otimes\cdots\otimes f_n)\right|\\
&\leq&\sum_{C\in\mathscr{C}_{n,k}}\|A\|\sqrt{(n-k-|C|)!}\sqrt{(k-|C|)!}\nonumber\\
&&\hspace{-.3cm}\times\sum_{\gamma_1,\dots,\gamma_{|C|}}\int d^{|C|}\boldsymbol{\theta}_{\boldsymbol{r}} \left\|(\boldsymbol{\theta}_{\widehat{\boldsymbol{l}}},\beta_{k+1},\dots\beta_{n-|C|})\mapsto \left(T_l(\boldsymbol{\theta}_{\widehat{\boldsymbol{l}}},\boldsymbol{\theta}_{\boldsymbol{r}})\right)_{\gamma_1\dots\gamma_{|C|}\beta_{k+1}\dots\beta_{n-|C|}}\right\|_{\mathscr{H}_{n-k-|C|}}\nonumber\\
&&\hspace{-.3cm}\times
\left\|(\boldsymbol{\theta}_{\widehat{\boldsymbol{r}}},\beta_{1},\dots,\beta_{k-|C|})\mapsto 
\left(T_r(\boldsymbol{\theta}_{\widehat{\boldsymbol{r}}},\boldsymbol{\theta}_{\boldsymbol{r}})\right)^{\gamma_{1}\dots\gamma_{|C|}\overline{\beta}_{k-|C|}\dots\overline{\beta}_1}\right\|_{\mathscr{H}_{k-|C|}}\nonumber\\
&\leq&\|A\|\,\prod_{i=1}^{n}\|f_i\|_2\sum_{C\in\mathscr{C}_{n,k}}\sqrt{(n-k-|C|)!}\sqrt{(k-|C|)!}\nonumber,
\end{eqnarray*}
where the last relation follows from the Cauchy-Schwarz inequality and the unitary of the S-matrix, namely
\begin{eqnarray*}
&&\left[\sum_{\gamma_1,\dots,\gamma_{|C|}}\int d^{|C|}\boldsymbol{\theta}_{\boldsymbol{r}}\left\|(\boldsymbol{\theta}_{\widehat{\boldsymbol{r}}},\beta_{1},\dots,\beta_{k-|C|})\mapsto 
\left(T_r(\boldsymbol{\theta}_{\widehat{\boldsymbol{r}}},\boldsymbol{\theta}_{\boldsymbol{r}})\right)^{\gamma_{1}\dots\gamma_{|C|}\overline{\beta}_{k-|C|}\dots\overline{\beta}_1}\right\|^2_{\mathscr{H}_{k-|C|}}\right]^{1/2}\\
&&=\left[\int d^{|C|}\boldsymbol{\theta}_{\boldsymbol{r}}\left(\int d^{k-|C|}\boldsymbol{\theta}_{\widehat{\boldsymbol{r}}}\sum_{\gamma_1,\dots,\gamma_{|C|}}\sum_{\beta_{1},\dots,\beta_{k-|C|}}\left|\left(T_r(\boldsymbol{\theta}_{\widehat{\boldsymbol{r}}},\boldsymbol{\theta}_{\boldsymbol{r}})\right)^{\gamma_{1}\dots\gamma_{|C|}\overline{\beta}_{k-|C|}\dots\overline{\beta}_1}
\right|^2\right)\right]^{1/2}\\
&&= \left(\int d^{|C|}\boldsymbol{\theta}_{\boldsymbol{r}}\int d^{k-|C|}\boldsymbol{\theta}_{\widehat{\boldsymbol{r}}}\left\|T_r(\boldsymbol{\theta}_{\widehat{\boldsymbol{r}}},\boldsymbol{\theta}_{\boldsymbol{r}})\right\|^2_{\mathcal{K}^{\otimes k}}\right)^{1/2}\\
&&= \left(\int d^{|C|}\boldsymbol{\theta}_{\boldsymbol{r}}\int d^{k-|C|}\boldsymbol{\theta}_{\widehat{\boldsymbol{r}}}\left\|f_1(\theta_1)\otimes \cdots\otimes f_k(\theta_k) \right\|^2_{\mathcal{K}^{\otimes k}}\right)^{1/2}=\prod_{i=1}^{k}\|f_i\|_2,
\end{eqnarray*}
and, in the same manner,
\begin{multline*}
\left[\sum_{\gamma_1,\dots,\gamma_{|C|}}\int d^{|C|}\boldsymbol{\theta}_{\boldsymbol{r}}\left\|(\boldsymbol{\theta}_{\widehat{\boldsymbol{l}}},\beta_{k+1},\dots,\beta_{n-|C|})\mapsto 
\left(T_l(\boldsymbol{\theta}_{\widehat{\boldsymbol{l}}},\boldsymbol{\theta}_{\boldsymbol{r}})\right)_{\gamma_1\dots\gamma_{|C|}\beta_{k+1}\dots\beta_{n-|C|}}\right\|^2_{\mathscr{H}_{n-k-|C|}}\right]^{1/2}\\
= \prod_{i=k+1}^{n}\|f_i\|_2.
\end{multline*}
The remaining part is to estimate the sum over all contractions. This was already done in \cite{DocL}. For the convenience of the reader, we carry out a detailed derivation. To this end, consider first contractions $C\in\mathscr{C}_{n,k}$ with length $|C|$. Due to the fact that to each such $C$ one associates the $|C|$-element sets $\{r_1,\dots,r_{|C|}\}\subset\{1,\dots,k\}$ and $\{l_1,\dots,l_{|C|}\}\subset\{k+1,\dots,n\}$, and a permutation of $\{1,\dots,|C|\}$, there are $|C|!$\scriptsize{$
\begin{pmatrix}
k\\
|C|
\end{pmatrix}
\begin{pmatrix}
n-k\\
|C|
\end{pmatrix}$} \normalsize {contractions} with this length. Furthermore, since $|C|\leq\min\{k,n-k\}$ and $u!\,v!\leq (u+v)!$ for $u,v\in\mathbb{N}$, we have
\begin{equation*}
\begin{aligned}
\sum_{C\in\mathscr{C}_{n,k}}\sqrt{(n-k-|C|)!(k-|C|)!}&=\hspace{-.3cm}\sum_{|C|=0}^{\min\{k,n-k\}}\hspace{-.3cm}\sqrt{(n-k-|C|)!(k-|C|)!}|C|!
\begin{pmatrix}
k\\
|C|
\end{pmatrix}
\begin{pmatrix}
n-k\\
|C|
\end{pmatrix}\\
&\leq\sqrt{n!}\sum_{|C|=0}^{\min\{k,n-k\}}\begin{pmatrix}
k\\
|C|
\end{pmatrix}
\begin{pmatrix}
n-k\\
|C|
\end{pmatrix}\\
&\leq\sqrt{n!}\sum_{|C|=0}^{k}\sum_{N=0}^{n-k}\begin{pmatrix}
k\\
|C|
\end{pmatrix}
\begin{pmatrix}
n-k\\
N
\end{pmatrix}\\
&=\sqrt{n!}\,2^k\cdot 2^{n-k}=\sqrt{n!}\,2^n.
\end{aligned}
\end{equation*}
Hence, we arrive at
\begin{equation}\label{estimate3}
\left|\langle A\rangle^{\text{con}}_{n,k}(f_1\otimes\cdots\otimes f_n)\right|
\leq 2^{n}\sqrt{n!}\,\|A\|\cdot\prod_{i=1}^{n}\|f_i\|_2\,.
\end{equation}
The bound (\ref{estimate3}) implies, in particular, that $$h^{\alpha_{k+1}}(\theta_{k+1}):=\int\left(\langle A\rangle^{\text{con}}_{n,k}\right)^{\boldsymbol{\alpha}}(\boldsymbol{\theta})\prod_{\stackrel{j=1}{j\neq k+1}}^{n}f_j^{\alpha_j}(\theta_j)d\theta_j$$ is square-integrable. On the other hand, in view of (\ref{rewrite}), the boundedness of $S$ on $\overline{S(0,\pi)}$ and the bounds found in Lemma \ref{basicLemma}, also $\theta_{k+1}\mapsto h_{-\lambda}^{\alpha_{k+1}}(\theta_{k+1})=h^{\alpha_{k+1}}(\theta_{k+1}-i\lambda)$ is in $L^2(\mathbb{R})$ for any $0\leq \lambda\leq \pi$ and, due to Lemma \ref{Lemma}, even analytic on $S(-\pi,0)$. By application of the three lines theorem, it follows that the bound (\ref{estimate3}) also holds for $|\int\,h^{\alpha_{k+1}}(\theta_{k+1}-i\lambda)f_{k+1}^{\alpha_{k+1}}(\theta_{k+1})d\theta_{k+1}|$.
\end{proof}

Lemmata \ref{Lemma} and \ref{Lemma2} yield important properties of the actual objects of interest, namely the functions $(A\Omega)_n^{\boldsymbol{\alpha}}$. They are connected to contracted matrix elements by
\begin{equation}\label{zusammenhang}
\begin{aligned}
\left(\langle A\rangle^{\text{con}}_{n,0}\right)^{\boldsymbol{\alpha}}(\boldsymbol{\theta})&=\langle z_{\alpha_1
}^\dagger(\theta_1)\cdots z_{\alpha_n
}^\dagger(\theta_n)\Omega,A\Omega\rangle=\sqrt{n!}\,(A\Omega)_n^{\boldsymbol{\alpha}}(\boldsymbol{\theta})\\
\left(\langle A\rangle^{\text{con}}_{n,n}\right)^{\boldsymbol{\alpha}}(\boldsymbol{\theta})&=\langle \Omega,Az_{\overline{\alpha}_n
}^\dagger(\theta_n)\cdots z_{\overline{\alpha}_1
}^\dagger(\theta_1)\Omega\rangle\\
&=\sqrt{n!}\,\overline{(A^*\Omega)_n^{\overline{\alpha_n}\dots\overline{\alpha_1}}(\theta_n,\dots,\theta_1)}=\sqrt{n!}\,(JA^*\Omega)_n^{\boldsymbol{\alpha}}(\boldsymbol{\theta}),
\end{aligned}
\end{equation}
since $\mathscr{C}_{n,0}=\mathscr{C}_{n,n}=\emptyset$. Obviously by reapplication of Lemma \ref{Lemma}, $(JA^*\Omega)_n^{\boldsymbol{\alpha}}(\boldsymbol{\theta})=(\Delta^{1/2}A\Omega)_n^{\boldsymbol{\alpha}}(\boldsymbol{\theta})$ is the boundary value which results from analytic continuation of $(A\Omega)_n^{\boldsymbol{\alpha}}(\boldsymbol{\theta})$ from $\mathbb{R}^n$ to $\mathbb{R}^n-i(\pi,\dots,\pi)$ along a certain path. Moreover, this boundary value is in agreement with modular theory which in fact implies analyticity of $(A\Omega)_n^{\boldsymbol{\alpha}}$ in the center of rapidity $(\theta_1+\cdots+\theta_n)/n$ in the strip $S(-\pi,0)$ due to the strong analyticity $\zeta\mapsto\Delta^{i\zeta}A\Omega$ in $S(-\tfrac{1}{2},0)$. However, it is possible to go beyond this analyticity property. Namely, by means of Lemma \ref{Lemma} one shows, as stated in Corollary \ref{Analyticity}, that $(A\Omega)_n^{\boldsymbol{\alpha}}(\boldsymbol{\theta})$ has even an analytic continuation into a certain tube domain, defined by
\begin{equation}
\mathcal{T}_n:=\mathbb{R}^n-i\mathcal{G}_n,\qquad \mathcal{G}_n:=\{\boldsymbol{\lambda}\in\mathbb{R}^n:\pi>\lambda_1>\lambda_2>\dots>\lambda_n>0\}.
\end{equation}
To simplify the notation, we shall denote both the distribution $(A\Omega)_n^{\boldsymbol{\alpha}}$ and the corresponding holomorphically continued function by the same symbol. 
\begin{corollary}\label{Analyticity}
Consider $A\in\mathcal{F}(W_R)$, then $(A\Omega)_n^{\boldsymbol{\alpha}}(\boldsymbol{\theta})$ can be analytically continued into the tube $\mathcal{T}_n$. Further, the distributional boundary value of this analytic function, found in the limit $\boldsymbol{\lambda}\rightarrow 0$ in $\mathcal{G}_n$, coincides with $(A\Omega)_n^{\boldsymbol{\alpha}}(\boldsymbol{\theta})$.
\end{corollary}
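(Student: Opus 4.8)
The plan is to obtain the corollary directly from Lemma~\ref{Lemma}, the connecting identity~(\ref{zusammenhang}), and the uniform bounds of Lemma~\ref{Lemma2}. By the first line of~(\ref{zusammenhang}) the object of interest is, up to the factor $\sqrt{n!}$, the completely contracted matrix element at level zero,
\begin{equation*}
\sqrt{n!}\,(A\Omega)_n^{\boldsymbol{\alpha}}(\boldsymbol{\theta})=\left(\langle A\rangle^{\text{con}}_{n,0}\right)^{\boldsymbol{\alpha}}(\boldsymbol{\theta}),
\end{equation*}
so it suffices to continue $\langle A\rangle^{\text{con}}_{n,0}$ into $\mathcal{T}_n$ and to identify its boundary value. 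Lemma~\ref{Lemma} provides the engine: for each $0\le k<n$ it yields analyticity of $\langle A\rangle^{\text{con}}_{n,k}$ in the single variable $\theta_{k+1}$ on the strip $S(-\pi,0)$, and identifies the boundary value at $\mathrm{Im}\,\theta_{k+1}=-\pi$ with the next level $\langle A\rangle^{\text{con}}_{n,k+1}$, up to the cyclic relabelling $\boldsymbol{\alpha}\mapsto(\alpha_2,\dots,\alpha_n,\alpha_1)$. Iterating from $k=0$ produces a chain of one-variable continuations that successively shifts $\theta_1,\theta_2,\dots$ downward and links the levels $\langle A\rangle^{\text{con}}_{n,0},\dots,\langle A\rangle^{\text{con}}_{n,n}$; its endpoints are precisely $\sqrt{n!}\,(A\Omega)_n$ and $\sqrt{n!}\,(\Delta^{1/2}A\Omega)_n$, in agreement with the modular continuation to $\mathbb{R}^n-i(\pi,\dots,\pi)$ noted before the corollary.

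First I would record the quantitative control. For every step of the chain, Lemma~\ref{Lemma2} (applied with the shift on $\theta_{k+1}$ and the remaining variables real) bounds the continued function in $L^2$ uniformly for $0\le\lambda\le\pi$, by $2^n\sqrt{n!}\,\|A\|\prod_j\|f_j\|_2$, independently of the imaginary part. These uniform Hardy-type bounds ensure that each intermediate continuation is a tempered distribution with genuine boundary values and that the limits passing between levels are controlled. I would then assemble these single-variable strip analyticities into joint analyticity on the tube $\mathcal{T}_n=\mathbb{R}^n-i\mathcal{G}_n$: continuing the variables in the nested order dictated by Lemma~\ref{Lemma} --- $\theta_1$ furthest, then $\theta_2$, and so on --- forces the admissible imaginary parts to inherit the ordering $\pi>\lambda_1>\dots>\lambda_n>0$ that defines $\mathcal{G}_n$, and a multivariable analyticity argument of tube (Malgrange--Zerner) type, legitimised by the uniform bounds, fills the interior of the simplex; Hartogs' theorem finally upgrades separate to joint holomorphy on $\mathcal{T}_n$.

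To finish, I would identify the distributional boundary value. As $\boldsymbol{\lambda}\to 0$ within $\mathcal{G}_n$, the uniform bound of Lemma~\ref{Lemma2} forces convergence of $(A\Omega)_n(\boldsymbol{\theta}-i\boldsymbol{\lambda})$ in $\mathscr{S}'(\mathbb{R}^n)\otimes\mathcal{K}^{\otimes n}$ to the value at the real boundary, which by~(\ref{zusammenhang}) is $(A\Omega)_n$ itself, giving the last assertion of the corollary.

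The hard part will be the multidimensional assembly. Lemma~\ref{Lemma} delivers analyticity only one variable at a time, and its level-changing identities require full $-i\pi$ shifts, so it directly controls only the faces of $\mathcal{T}_n$ where imaginary parts are $0$ or $-\pi$, not the interior. Converting this edge data into genuine joint holomorphy on the \emph{open, ordered} simplex $\mathcal{G}_n$ --- matching its precise shape rather than, say, the standard simplex $\{\lambda_j>0,\ \sum_j\lambda_j<\pi\}$ --- is the delicate point, and it is exactly here that the uniform $L^2$ bounds of Lemma~\ref{Lemma2}, the nested order of continuations, and the cyclic index bookkeeping from Lemma~\ref{Lemma} must be combined with care.
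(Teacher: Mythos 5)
Your proposal is correct and follows essentially the paper's own route: the paper simply packages your inductive assembly of the one-variable continuations from Lemma \ref{Lemma} into the abstract Appendix Lemma \ref{AllgemeinAnalyt}, whose proof performs exactly the nested Malgrange--Zerner convex-hull construction you describe (after smearing with a test function, which your uniform bounds from Lemma \ref{Lemma2} legitimise) and yields precisely the ordered simplex $\mathcal{G}_n$. Your additional invocation of Hartogs' theorem is superfluous and would in any case not apply to the flat, lower-dimensional tube pieces --- the Malgrange--Zerner theorem already delivers joint holomorphy on the tube over the convex hull --- but this does not affect the validity of the argument.
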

\begin{proof}Application of Lemma \ref{AllgemeinAnalyt} proves the claim. \end{proof}

Lemma \ref{Lemma2}, on the other hand, leads to the following result.
\begin{corollary}\label{boundCoro}
For $\boldsymbol{\theta}\in\mathbb{R}^n$ and $\boldsymbol{\lambda}\in\mathcal{G}_n$ we have
\begin{equation}\label{bound2}
|(A\Omega)_n^{\boldsymbol{\alpha}}(\boldsymbol{\theta}-i\boldsymbol{\lambda})|\leq\left(\frac{4}{\pi d(\boldsymbol{\lambda})}\right)^{n/2}\cdot\|A\|,
\end{equation} with
\begin{equation}\label{distance}
d(\boldsymbol{\lambda}):=\text{min}\{\pi-\lambda_1,\tfrac{1}{2}(\lambda_1-\lambda_2),\dots,\tfrac{1}{2}(\lambda_{n-1}-\lambda_n),\lambda_n\}.
\end{equation}
\end{corollary}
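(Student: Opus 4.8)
The plan is to combine the analyticity already secured in Corollary \ref{Analyticity} with two ingredients: a uniform $L^2$-bound on the slices of $(A\Omega)_n$ parallel to the real subspace, and a several-variable Cauchy (mean-value) estimate that turns such a bound into a pointwise one, the admissible radius being controlled precisely by $d(\boldsymbol{\lambda})$. By Corollary \ref{Analyticity}, $(A\Omega)_n^{\boldsymbol{\alpha}}$ is holomorphic on the tube $\mathcal{T}_n=\mathbb{R}^n-i\mathcal{G}_n$, so both steps make sense.

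First I would establish the uniform estimate
\begin{equation*}
\sup_{\boldsymbol{\lambda}\in\mathcal{G}_n}\Big(\sum_{\boldsymbol{\alpha}}\int_{\mathbb{R}^n}\big|(A\Omega)_n^{\boldsymbol{\alpha}}(\boldsymbol{\theta}-i\boldsymbol{\lambda})\big|^2\,d^n\boldsymbol{\theta}\Big)^{1/2}\leq c\,\|A\|.
\end{equation*}
By Plancherel the map $\boldsymbol{\lambda}\mapsto\|(A\Omega)_n(\,\cdot-i\boldsymbol{\lambda})\|_2^2$ is an integral of the exponentials $\boldsymbol{p}\mapsto e^{2\boldsymbol{p}\cdot\boldsymbol{\lambda}}$ against a fixed nonnegative weight, hence its logarithm is convex on the closed simplex $\overline{\mathcal{G}_n}=\{\pi\geq\lambda_1\geq\cdots\geq\lambda_n\geq 0\}$. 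A convex function on a simplex attains its maximum at a vertex, and the vertices are the points $v_k=(\pi,\dots,\pi,0,\dots,0)$ with $k$ entries equal to $\pi$, $k=0,\dots,n$. Iterating the boundary relation of Lemma \ref{Lemma} identifies the boundary value of $(A\Omega)_n$ at $v_k$ with a cyclic index-permutation of $\tfrac{1}{\sqrt{n!}}\langle A\rangle^{\text{con}}_{n,k}$, whose $L^2$-norm is controlled by Lemma \ref{Lemma2} specialized to $\lambda=0$, while for the two extreme vertices $v_0,v_n$ the sharper bounds $\|(A\Omega)_n\|\leq\|A\Omega\|\leq\|A\|$ and $\|(JA^*\Omega)_n\|\leq\|A^*\Omega\|\leq\|A\|$ from (\ref{zusammenhang}) apply.

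Having the uniform $L^2$-bound, I would fix $\boldsymbol{\theta}_0\in\mathbb{R}^n$ and $\boldsymbol{\lambda}_0\in\mathcal{G}_n$ and run a polydisc mean-value estimate. The specific shape of $d(\boldsymbol{\lambda}_0)$ in (\ref{distance}) is exactly the largest $r$ for which the polydisc $\prod_{k=1}^{n}\{\,|\zeta_k-(\theta_{0,k}-i\lambda_{0,k})|<r\,\}$ stays inside $\mathcal{T}_n$: the three types of constraints $\pi-\lambda_1$, $\tfrac12(\lambda_{j}-\lambda_{j+1})$ and $\lambda_n$ are precisely the conditions ensuring that a simultaneous perturbation of every imaginary part by less than $r$ preserves $\pi>\lambda_1>\cdots>\lambda_n>0$. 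On that polydisc $(A\Omega)_n$ is holomorphic, so the iterated mean-value property gives $(A\Omega)_n(\boldsymbol{\theta}_0-i\boldsymbol{\lambda}_0)=(\pi r^2)^{-n}\int(A\Omega)_n\,dA^{\otimes n}$; the Cauchy--Schwarz inequality, followed by enlarging the domain of integration to the poly-band $\mathbb{R}^n\times\prod_k(\lambda_{0,k}-r,\lambda_{0,k}+r)$ and inserting the uniform $L^2$-bound, produces $|(A\Omega)_n(\boldsymbol{\theta}_0-i\boldsymbol{\lambda}_0)|\leq(2/(\pi r))^{n/2}c\,\|A\|$. Letting $r\nearrow d(\boldsymbol{\lambda}_0)$ then yields (\ref{bound2}), the constant $4/\pi$ being a comfortable over-estimate of the one so obtained.

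The hard part is the uniform $L^2$-bound with a clean, $n$-independent constant multiplying $\|A\|$. The convexity reduction to the vertices is robust, but controlling the vertex norms $\|\langle A\rangle^{\text{con}}_{n,k}\|_2$ for intermediate $k$ is exactly where the combinatorics of the contracted matrix elements enters, and where the scalar-case argument of \cite{DocL,L08} must be upgraded by means of Lemmata \ref{Lemma} and \ref{Lemma2}; preventing the resulting constant from degrading the exponential rate in (\ref{bound2}) is the delicate bookkeeping point. The geometric verification that the polydisc of radius $d(\boldsymbol{\lambda})$ is admissible is, by contrast, routine once (\ref{distance}) is read as a poly-radius.
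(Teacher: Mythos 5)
Your second step is sound and matches the paper's geometry exactly: $d(\boldsymbol{\lambda})$ in (\ref{distance}) is precisely the maximal admissible polyradius keeping the polydisc inside $\mathcal{T}_n$, and the paper likewise first runs the iterated mean-value estimate and then sharpens it by deforming the Cauchy contour to the horizontal lines $\mathbb{R}-i\lambda_k\pm r$. The genuine gap is in your first step. The uniform slice bound $\sup_{\boldsymbol{\lambda}\in\mathcal{G}_n}\|(A\Omega)_{n,-\boldsymbol{\lambda}}\|_2\leq c\,\|A\|$ is \emph{false} for $n\geq 2$: by (\ref{contraction}) the distributional boundary values at the intermediate vertices $v_k=(\pi,\dots,\pi,0,\dots,0)$ with $0<k<n$ are, up to $\sqrt{n!}$ and a cyclic index permutation, the completely contracted matrix elements $\langle A\rangle^{\text{con}}_{n,k}$, which contain the factors $\delta_C$ for every nonempty contraction $C$ and hence are genuine distributions, not elements of $L^2(\mathbb{R}^n)\otimes\mathcal{K}^{\otimes n}$. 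If the slice norms were uniformly bounded, weak compactness would force these distributional limits into $L^2$, a contradiction; so the slice norms diverge as $\boldsymbol{\lambda}$ approaches those vertices, and your log-convexity reduction (which controls the supremum by the vertex values) only confirms that the supremum is infinite. Controlling ``the vertex norms $\|\langle A\rangle^{\text{con}}_{n,k}\|_2$'' is therefore not delicate bookkeeping — those norms do not exist.

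What does hold, and what the paper's proof actually uses, is strictly weaker: Lemma \ref{Lemma2} bounds the functional only against \emph{elementary tensors}, $|\langle A\rangle^{\text{con}}_{n,k}(f_1\otimes\cdots\otimes f_n)|\leq 2^n\sqrt{n!}\,\|A\|\prod_j\|f_j\|_2$, a bound the delta contributions do satisfy by Cauchy--Schwarz (compare $|\int\delta(\theta_1-\theta_2)f_1 f_2|\leq\|f_1\|_2\|f_2\|_2$ although $\delta\notin L^2(\mathbb{R}^2)$), and this extends by the flat-tube interpolation quoted from \cite{Lech05} to the uniform convolution bound (\ref{L2-schranke}), namely $\big|\sum_{\boldsymbol{\alpha}}\big((A\Omega)_n^{\boldsymbol{\alpha}}\ast\boldsymbol{f}^{\boldsymbol{\alpha}}\big)(\boldsymbol{\theta}-i\boldsymbol{\lambda})\big|\leq 2^n\|A\|\prod_j\|f_j\|_2$ for all $\boldsymbol{\lambda}\in\overline{\mathcal{T}}_n$. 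Your mean-value step can be salvaged verbatim once you observe that both the characteristic functions of the discs and the Cauchy kernels enter as products of one-variable $L^2$ functions, so the elementary-tensor bound is exactly strong enough: inserting the convolution bound in place of Cauchy--Schwarz against a slice norm yields first $(32/\pi^2 r)^{n/2}\|A\|$ and, after the contour deformation, $|(A\Omega)_n^{\boldsymbol{\alpha}}(\boldsymbol{\theta}-i\boldsymbol{\lambda})|\leq 2^n(\pi r)^{-n/2}\|A\|$, which gives (\ref{bound2}) as $r\rightarrow d(\boldsymbol{\lambda})$.
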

\begin{proof} This assertion can be proven along the same lines as in the scalar case \cite[Corollary 5.2.6 b)]{DocL}. Only minor changes are necessary. For the convenience of the reader we give a full proof.\par
Consider $\boldsymbol{f}:=f_1\otimes\cdots\otimes f_n$ with $f_i\in\mathscr{S}(\mathbb{R})\otimes\mathcal{K}$, $i=1,\dots,n$, then Lemma \ref{Lemma2} implies that at points $\boldsymbol{\theta}-i\boldsymbol{\lambda}\in\overline{\mathcal{T}}_n$ with $\boldsymbol{\lambda}=(\pi,\dots,\pi,\lambda_{k+1},0,\dots,0)$, $0\leq\lambda_{k+1}\leq\pi$, we have the following bound
\begin{equation}\label{L2-schranke}
\Big|\sum_{\boldsymbol{\alpha}}\big((A\Omega)_n^{\boldsymbol{\alpha}}\ast\boldsymbol{f}^{\boldsymbol{\alpha}}\big)\left(\boldsymbol{\theta}-i\boldsymbol{\lambda}\right)\Big|\leq 2^n\|A\|\prod_{j=1}^{n}\|f_j\|_2.
\end{equation}
It can be shown that this bound holds for arbitrary $\boldsymbol{\lambda}\in\overline{\mathcal{T}}_n$ \cite{Lech05} and extends to $f_i\in L^2(\mathbb{R})\otimes\mathcal{K}$ by continuity.\par
In order to proceed from (\ref{L2-schranke}) to pointwise bounds, we observe that the polydisc $D(z_1,r)\times\cdots\times D(z_n,r)$, with center $\boldsymbol{\zeta}=\boldsymbol{\theta}-i\boldsymbol{\lambda}\in\mathbb{C}^n$ and polyradius $\boldsymbol{r}=(r,\dots,r)$, is contained in the tube $\mathcal{T}_n$ if
\begin{equation*}
\pi>\lambda_1+r,\quad \lambda_n-r>0,\quad \lambda_k-r>\lambda_{k+1}+r,\quad  k=1,\dots,n-1.
\end{equation*}
Hence, $r<d(\boldsymbol{\lambda})$ (\ref{distance}) has to be chosen. By a variation on the mean value property of analytic functions we then have
\begin{eqnarray}\label{int2}
(A\Omega)_n^{\boldsymbol{\alpha}}(\boldsymbol{\zeta})&=&\frac{1}{(\pi r^2)^n}\int\limits_{D(\zeta_1,r)}d\theta_1'd\lambda_1'\cdots\int\limits_{D(\zeta_n,r)}d\theta_n'd\lambda_n'\,(A\Omega)^{\boldsymbol{\alpha}}_{n}(\boldsymbol{\theta}'+i\boldsymbol{\lambda}')\nonumber\\
&=&\frac{1}{(\pi r^2)^n}\prod_{k=1}^{n}\,\int\limits_{-r}^{r}d\lambda_k'\int\limits_{-r(\lambda_k')}^{r(\lambda_k')}d\theta_k'\,(A\Omega)_n^{\boldsymbol{\alpha}}(\boldsymbol{\theta}'+\boldsymbol{\theta}+i\boldsymbol{\lambda}'-i\boldsymbol{\lambda})\\
&=&\frac{1}{(\pi r^2)^n}\int\limits_{[-r,r]^{\times n}}d^n\boldsymbol{\lambda}'\,\Big((A\Omega)_n^{\boldsymbol{\alpha}}\ast(\chi_{r(\lambda_1')}\otimes\cdots\otimes\chi_{r(\lambda_n')})\Big)(\boldsymbol{\theta}-i\boldsymbol{\lambda}+i\boldsymbol{\lambda}'),\nonumber
\end{eqnarray}
with $r(\lambda_k'):=\sqrt{r^2-(\lambda_k')^2}$ and where $\chi_{r(\lambda_k')}$ is the characteristic function of $[-r(\lambda_k'),r(\lambda_k')]$. Define now $\boldsymbol{\chi}\in L^2(\mathbb{R}^n)\otimes\mathcal{K}^{\otimes n}$ such that $L^2(\mathbb{R}^n)\ni\boldsymbol{\chi}^{\boldsymbol{\alpha}}:=\delta^{\alpha_1}_{\beta_1}\cdots\delta^{\alpha_n}_{\beta_n}\boldsymbol{\chi}^{\boldsymbol{\beta}}$ (no summation), that is, only one component, namely $\boldsymbol{\chi}^{\boldsymbol{\alpha}}:=\chi_{r(\lambda_1')}\otimes\cdots\otimes\chi_{r(\lambda_n')}$, is different from zero. Hence, (\ref{int2}) becomes
\begin{multline}\label{projchi}
(A\Omega)_n^{\boldsymbol{\alpha}}(\boldsymbol{\theta}-i\boldsymbol{\lambda})
=\frac{1}{(\pi r^2)^n}\int\limits_{[-r,r]^{\times n}}d^n\boldsymbol{\lambda}'\,\sum_{\boldsymbol{\alpha}}\Big((A\Omega)_n^{\boldsymbol{\alpha}}\ast\boldsymbol{\chi}^{\boldsymbol{\alpha}}\Big)(\boldsymbol{\theta}-i\boldsymbol{\lambda}+i\boldsymbol{\lambda}').
\end{multline}
Then, with regard to (\ref{L2-schranke}), we obtain
\begin{equation}\label{bound1}
\left|(A\Omega)_n^{\boldsymbol{\alpha}}(\boldsymbol{\theta}-i\boldsymbol{\lambda})\right|\leq\frac{(2r)^n}{(\pi r^2)^n}2^n\|A\|\,\prod_{j=1}^{n}\sup_{r(\lambda_j')\leq r}\|\chi_{r(\lambda_j')}\|_{L^2(\mathbb{R})}=\left(\frac{32}{\pi^2 r}\right)^{n/2}\cdot\|A\|.
\end{equation}
By means of Cauchy's integral formula this bound can be slightly improved. Due to the boundedness of $(A\Omega)_n^{\boldsymbol{\alpha}}(\boldsymbol{\theta}-i\boldsymbol{\lambda})$ in $\boldsymbol{\theta}\in\mathbb{R}^n$ for fixed $\boldsymbol{\lambda}\in\mathcal{G}_n$ by (\ref{bound1}), namely, the corresponding integration contour $\partial D(z_1,r)\times\cdots\times\partial D(z_k,r)\times\cdots\times\partial D(z_n,r)$ can be deformed to \mbox{$\cdots\times\left((\mathbb{R}-i\lambda_k-r)\cup(\mathbb{R}-i\lambda_k+r)\right)\times\cdots$}. Hence, we have
\begin{equation}
(A\Omega)_n^{\boldsymbol{\alpha}}(\boldsymbol{\theta}-i\boldsymbol{\lambda})=\frac{1}{(2\pi i)^n}\sum_{\boldsymbol{\varepsilon}}\varepsilon_1\cdots\varepsilon_n\int\limits_{\mathbb{R}^n}d^n\boldsymbol{\theta}'\frac{(A\Omega)_n^{\boldsymbol{\alpha}}(\boldsymbol{\theta}'-i\boldsymbol{\lambda}-ir\boldsymbol{\varepsilon})}{\prod_{k=1}^{n}(\theta_k'-\theta_k-ir\varepsilon_k)},
\end{equation}
where the sum runs over $\varepsilon_k=\pm 1$. By the same methods used in the derivation of (\ref{projchi}) and with regard to (\ref{L2-schranke}), we arrive at
\begin{equation}
|(A\Omega)_n^{\boldsymbol{\alpha}}(\boldsymbol{\theta}-i\boldsymbol{\lambda})|\leq \frac{2^n}{(\pi r)^{n/2}}\|A\|,
\end{equation}
which yields the claimed bound for $r\rightarrow d(\boldsymbol{\lambda})$.
\end{proof}
So far we have shown that $(A\Omega)_n^{\boldsymbol{\alpha}}$ extends to a bounded and analytic function on the tube $\mathcal{T}_n$. Recalling the concrete action of the maps $\Xi_n(s)$ (\ref{concrete}), it is obvious that the points $\boldsymbol{\theta}-i(\tfrac{\pi}{2},\dots,\tfrac{\pi}{2})$, $\boldsymbol{\theta}\in\mathbb{R}^n$, are of particular interest. Unfortunately, the previous investigation does not yield the necessary information about the analyticity nor the boundedness of $(A\Omega)_n^{\boldsymbol{\alpha}}$ at these points, in order to prove the modular nuclearity condition analogously to the one particle case, see Lemma \ref{oneparticle}. This is due to the fact that these points are only contained in the boundary of $\mathcal{T}_n$. Therefore, an enlargement of this domain of analyticity to one involving $\boldsymbol{\theta}-i(\tfrac{\pi}{2},\dots,\tfrac{\pi}{2})$ is desirable. Indeed, this can be accomplished by taking into account that the S-matrix $S\in\mathcal{S}_0$ under consideration admits a continuation to the enlarged strip $S(-\kappa(S),\pi+\kappa(S))$, with $\kappa(S)>0$ as in Definition \ref{regularS}. The corresponding result on the extension of the analyticity domain is stated below in Proposition \ref{enlagement}. The expanded regions are defined by means of the following sets. Namely, for $a>0$ we introduce
\begin{equation}
\mathcal{B}_n(a):=\{\boldsymbol{\lambda}\in\mathbb{R}^n:0<\lambda_1,\dots,\lambda_n<\pi,\,-a<\lambda_k-\lambda_l<a,\,1\leq l<k\leq n\},
\end{equation}
as well as the cube $\mathcal{C}_n(a)+\boldsymbol{\lambda}_{\pi/2}\subset-\mathcal{B}_n(a)$,
\begin{equation}\label{l}
\boldsymbol{\lambda}_{\pi/2}:=(-\tfrac{\pi}{2},\dots,-\tfrac{\pi}{2}),\qquad \mathcal{C}_n(a):=(-\tfrac{a}{2},\tfrac{a}{2})^{\times n},
\end{equation}
and denote the tube based on this cube by
\begin{equation}\label{tube}
\mathcal{T}_n(a):=\mathbb{R}^n+i\left(\boldsymbol{\lambda}_{\pi/2}+\mathcal{C}_n(a)\right).
\end{equation}
\begin{figure}[h]
\begin{minipage}{3cm}
\begin{tikzpicture}[scale=1]
               \begin{scope}
              \fill[gray!30!white] (0,0)--(-2.5,0)--(-2.5,-2.5) -- (0,0);
              
               \end{scope} 
        \begin{scope}
                \draw[fill] (1.2,.2) node{Im$\,\zeta_1$};

\draw[fill] (.5,1) node{Im$\,\zeta_2$};

\draw[fill] (-2.5,.2) node{\footnotesize{$(-\pi,0)$}};
\draw[fill] (-2.8,-2.8) node{\footnotesize{$(-\pi,-\pi)$}};
\draw[fill] (.2,-2.8) node{\footnotesize{$(0,-\pi)$}};
\draw[fill] (.2,.2) node{\footnotesize{$0$}};
\fill[black] (-2.5,-2.5) circle (.3ex);
\fill[black] (-2.5,0) circle (.3ex);
\fill[black] (0,-2.5) circle (.3ex);
\fill[black] (0,0) circle (.3ex);
        \begin{scope}[->]
            \draw (0,-2.5) -- (0,1) node[anchor=north] {};
            \draw (-2.5,0) -- (1,0) node[anchor=east] {};

        \end{scope}
    \begin{scope}  
                \draw[fill] (-.9,-1.6) node{\scriptsize {$\boldsymbol{\lambda}_{\pi/2}$}} ;
       \draw[style=densely dashed] (-2.5,-2.5)--(0,0);
       \draw (-2.5,0) -- (-2.5,-2.5);
      \draw (-2.5,-2.5) -- (0,-2.5);
      \fill[black] (-1.25,-1.25) circle (.3ex);
        \end{scope}
        \end{scope}
\end{tikzpicture}
\end{minipage}\hspace{2cm}
\begin{minipage}{3cm}
\begin{tikzpicture}[scale=1]
               \begin{scope}
              \fill[gray!30!white] (0,0)--(-.85,0)--(-2.5,-2.5+.85) -- (-2.5,-2.5)--(0,0);
              \fill[gray!50!white] (0,0)--(-2.5,-2.5)--(-1.65,-2.5) -- (0,-.85);
               \end{scope} 
        \begin{scope}
                \draw[fill] (1.2,.2) node{Im$\,\zeta_1$};
\draw[fill] (.5,1) node{Im$\,\zeta_2$};
\draw[fill] (.3,-.45) node{$a$};
\draw[fill] (-2.5,.2) node{\footnotesize{$(-\pi,0)$}};
\draw[fill] (-2.8,-2.8) node{\footnotesize{$(-\pi,-\pi)$}};
\draw[fill] (.2,-2.8) node{\footnotesize{$(0,-\pi)$}};
\draw[fill] (.2,.2) node{\footnotesize{$0$}};
\fill[black] (-2.5,-2.5) circle (.3ex);
\fill[black] (-2.5,0) circle (.3ex);
\fill[black] (0,-2.5) circle (.3ex);
\fill[black] (0,0) circle (.3ex);
        \begin{scope}[->]
            \draw (0,-2.5) -- (0,1) node[anchor=north] {};
            \draw (-2.5,0) -- (1,0) node[anchor=east] {};
            \draw (0.1,-.05)--(.1,-.85);
               \draw (.1,-.85)--(0.1,-.05);
        \end{scope}
    \begin{scope}  
    \draw[style=densely dashed] (-0.85,0) --(-2.5,-1.65);
       
    \draw[style=densely dashed] (0,-0.85) --(-1.65,-2.5);
       \draw (-2.5,0) -- (-2.5,-2.5);
      \draw (-2.5,-2.5) -- (0,-2.5);
      \fill[black] (-1.25,-1.25) circle (.3ex);
        \end{scope}
        \end{scope}
\end{tikzpicture}
\end{minipage}\hspace{2cm}
\begin{minipage}{3cm}
\begin{tikzpicture}[scale=1]
               \begin{scope}
              
               \end{scope} 
        \begin{scope}
                \draw[fill] (1.2,.2) node{Im$\,\zeta_1$};
\draw[style=dashed] (0,0) -- (-2.5,-2.5);
  \fill[gray!80!white] (-.85,-.85)--(-1.65,-.85)--(-1.65,-.85) -- (-1.65,-1.65)--(-.85,-1.65);
\draw[fill] (.5,1) node{Im$\,\zeta_2$};
\draw[fill] (.3,-.45) node{$a$};
\draw[fill] (-2.5,.2) node{\footnotesize{$(-\pi,0)$}};
\draw[fill] (-2.8,-2.8) node{\footnotesize{$(-\pi,-\pi)$}};
\draw[fill] (.2,-2.8) node{\footnotesize{$(0,-\pi)$}};
\draw[fill] (.2,.2) node{\footnotesize{$0$}};
\fill[black] (-2.5,-2.5) circle (.3ex);
\fill[black] (-2.5,0) circle (.3ex);
\fill[black] (0,-2.5) circle (.3ex);
\fill[black] (0,0) circle (.3ex);
        \begin{scope}[->]
            \draw (0,-2.5) -- (0,1) node[anchor=north] {};
            \draw (-2.5,0) -- (1,0) node[anchor=east] {};
            \draw (0.1,-.05)--(.1,-.85);
               \draw (.1,-.85)--(0.1,-.05);
        \end{scope}
    \begin{scope}  
       \draw[style=densely dashed] (-.85,-.85)--(-1.65,-.85);
       \draw[style=densely dashed] (-1.65,-.85) -- (-1.65,-1.65);
       \draw[style=densely dashed] (-1.65,-1.65) --(-.85,-1.65);
  \draw[style=densely dashed] (-.85,-1.65) --(-.85,-.85);
    \draw[style=densely dashed] (0,-0.85) --(-1.65,-2.5);
       \draw (-2.5,0) -- (-2.5,-2.5);
      \draw (-2.5,-2.5) -- (0,-2.5);
      \fill[black] (-1.25,-1.25) circle (.3ex);
        \end{scope}
        \end{scope}
\end{tikzpicture}
\end{minipage}
\caption{Illustration of the two-dimensional bases $-\mathcal{G}_2$ (left), $-\mathcal{B}_2(a)$ (middle) and \mbox{$\boldsymbol{\lambda}_{\pi/2}+\mathcal{C}_2(a)$} (right).}
\end{figure}
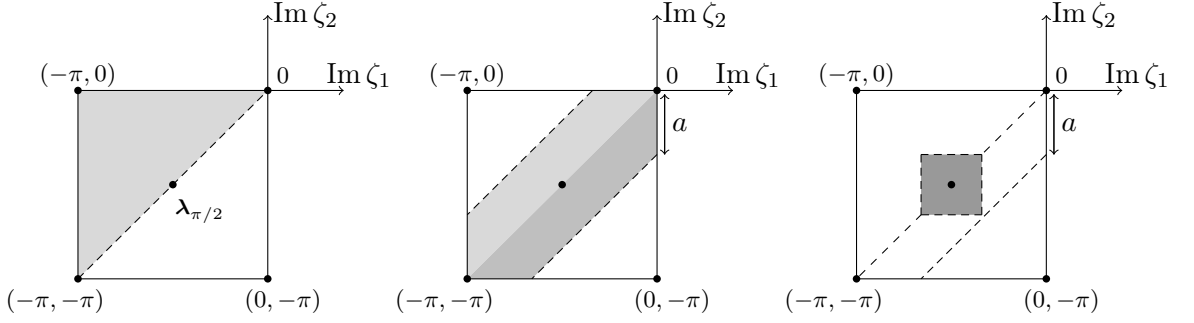

\begin{proposition}\label{enlagement}
Let $S\in\mathcal{S}_0$ and $A\in\mathcal{F}(W_R)$, then $(A\Omega)_n^{\boldsymbol{\alpha}}$ is analytic in the tube $\mathbb{R}^n-i\mathcal{B}_n(\kappa(S))$.
\end{proposition}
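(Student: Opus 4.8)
The plan is to bootstrap from the analyticity already secured in the ordered tube and to push it across the ``diagonal walls'' $\lambda_k=\lambda_{k+1}$ using the $S$-symmetry of vectors in $\mathscr{H}_n$, the extra room being provided precisely by the enlarged analyticity strip of $S\in\mathcal{S}_0$.

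First I would fix the geometry. By Corollary \ref{Analyticity} the function $(A\Omega)_n^{\boldsymbol{\alpha}}$ is analytic on $\mathcal{T}_n=\mathbb{R}^n-i\mathcal{G}_n$, where $\mathcal{G}_n$ is the open chamber $\pi>\lambda_1>\dots>\lambda_n>0$. The target base $\mathcal{B}_n(\kappa(S))$ is the intersection of the box $\{0<\lambda_j<\pi\}$ with the slabs $\{|\lambda_k-\lambda_l|<\kappa(S)\}$, hence convex and in particular connected. Up to the measure-zero union of walls $\{\lambda_k=\lambda_l\}$ it is the disjoint union over $\pi\in\mathfrak{S}_n$ of the permuted chambers $\mathcal{G}_n^\pi:=\{\boldsymbol{\lambda}\in\mathcal{B}_n(\kappa(S)):\lambda_{\pi(1)}>\dots>\lambda_{\pi(n)}\}$, each of which is carried onto a subset of $\mathcal{G}_n$ by the coordinate permutation associated with $\pi$.

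Then I would implement the continuation. For a vector $\Psi_n=(A\Omega)_n\in\mathscr{H}_n$ the $S$-symmetry (\ref{sum}) holds; read as a relation between holomorphic functions it expresses $(A\Omega)_n$ with two adjacent arguments exchanged through a factor $S(\theta_{k+1}-\theta_k)_{n,k}$. Substituting $\theta_j\mapsto\theta_j-i\lambda_j$, this factor carries the argument $(\theta_{k+1}-\theta_k)-i(\lambda_{k+1}-\lambda_k)$, whose imaginary part equals $\lambda_k-\lambda_{k+1}$. Since $S\in\mathcal{S}_0$ is analytic and bounded on the strip $S(-\kappa,\pi+\kappa)$ for every $0<\kappa\leq\kappa(S)$ (Definition \ref{regularS}), the relation (\ref{sum}) continues $(A\Omega)_n$ from the chamber $\lambda_k>\lambda_{k+1}$ to the adjacent chamber $\lambda_k<\lambda_{k+1}$ as long as $\lambda_{k+1}-\lambda_k<\kappa(S)$ --- exactly the defining constraint of $\mathcal{B}_n(\kappa(S))$. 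Iterating adjacent transpositions, and using that $\mathfrak{S}_n$ is generated by them, I would propagate the analytic function from $\mathcal{G}_n$ into every permuted chamber $\mathcal{G}_n^\pi$, the bounds $|\lambda_k-\lambda_l|<\kappa(S)$ guaranteeing that \emph{all} intervening $S$-factors stay inside the enlarged strip; where an inverse factor is needed I would invoke hermitian analyticity $S(\theta)^{-1}=S(-\theta)$ together with $\det S\neq0$ on the relevant strip, which is built into the definition of $\kappa(S)$.

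Finally I would glue the chamber-wise definitions into one analytic function on $\mathcal{B}_n(\kappa(S))$. Across a wall $\{\lambda_k=\lambda_{k+1}\}$ the two one-sided boundary values agree by construction, since there the $S$-factor has real argument and (\ref{sum}) is an identity on $\mathscr{H}_n$; an edge-of-the-wedge (or Morera) argument then removes the wall as a singularity and yields joint analyticity across it. Single-valuedness of the result --- independence of the chosen sequence of transpositions --- follows from $\{S^\pi_n\}$ being a genuine representation $D_n$ of $\mathfrak{S}_n$ (the Lemma of \cite{LM}), i.e. ultimately from unitarity, hermitian analyticity and the Yang-Baxter equation. The main obstacle I anticipate is precisely this last gluing step: verifying rigorously that the piecewise continuations match to an honest holomorphic function across the codimension-one walls and that no monodromy arises when different chambers are reached along different paths. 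Here the braid relations encoded in the Yang-Baxter equation are exactly what trivialize the cocycle and make the construction consistent.
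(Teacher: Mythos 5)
Your overall architecture is the same as the paper's: start from Corollary \ref{Analyticity}, exploit the $D_n$-invariance $(A\Omega)_n(\boldsymbol{\theta})=S_n^{\sigma}(\boldsymbol{\theta})\,(A\Omega)_n(\boldsymbol{\theta}^{\sigma})$, use that for $S\in\mathcal{S}_0$ every tensor $S_n^{\sigma}$ is analytic on the slab tube $\mathbb{R}^n+i\mathcal{B}_n'(\kappa(S))$, and glue the branches by an edge-of-the-wedge argument. The genuine gap is in \emph{where} you glue. You match the two branches along the interior wall $\{\lambda_k=\lambda_{k+1}\}$ and assert that the one-sided boundary values there ``agree by construction, since the $S$-factor has real argument and (\ref{sum}) is an identity on $\mathscr{H}_n$''. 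But (\ref{sum}) is an identity only at the real boundary $\boldsymbol{\lambda}=0$; at a wall point the remaining imaginary parts $\lambda_j$, $j\neq k,k+1$, are strictly between $0$ and $\pi$, and no identity between $(A\Omega)_n(\boldsymbol{\zeta})$ and $S(\zeta_{k+1}-\zeta_k)_{n,k}\,(A\Omega)_n(\boldsymbol{\zeta}^{\tau_k})$ at such complex points is available \emph{before} the continuation has been established --- proving it is essentially the content of the proposition. Since the two chambers are disjoint open sets meeting only along the wall, the $L^2$-identity on $\mathbb{R}^n$ cannot be propagated to the wall by analytic continuation inside either chamber separately, so the Painlev\'e/Morera gluing you invoke lacks its hypothesis (continuous matching across the hypersurface). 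You rightly flagged this step as the main obstacle; as stated, it does fail.

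The correct bridge --- and the paper's actual argument --- is to glue at the common \emph{real} boundary instead of at the walls: for each $\sigma\in\mathfrak{S}_n$, both $(A\Omega)_n$, analytic in the tube over $-\mathcal{G}_n$, and $\boldsymbol{\theta}\mapsto S_n^{\sigma}(\boldsymbol{\theta})\,(A\Omega)_n(\boldsymbol{\theta}^{\sigma})$, analytic in the tube over $\mathcal{B}_n'(\kappa(S))\cap(-\mathcal{G}_n^{\sigma})$, have the \emph{same} distributional boundary value $(A\Omega)_n$ as $\boldsymbol{\lambda}\rightarrow 0$. Epstein's generalization of the edge-of-the-wedge theorem \cite{eps60} then yields one function analytic in the tube over the convex closure of $\bigcup_{\sigma}\mathcal{B}_n'(\kappa(S))\cap(-\mathcal{G}_n^{\sigma})$, which is exactly $-\mathcal{B}_n(\kappa(S))$. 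This also dissolves your monodromy concern: the result is a single analytic function on a tube over a convex, hence simply connected, base, with the consistency of the identities for different $\sigma$ guaranteed by $D_n$ being a representation --- no path-dependent iterative continuation occurs. Two smaller remarks: the continuation never needs inverse $S$-factors (one multiplies the already-known branch by $S_n^{\sigma}$), so your appeal to $\det S\neq 0$ and hermitian analyticity is superfluous; and since composite permutations produce factors $S(\theta_k-\theta_l)$ for non-adjacent pairs, the full pairwise constraint $|\lambda_k-\lambda_l|<\kappa(S)$ --- which you correctly impose --- is indeed what is needed, not just the adjacent-difference bounds.
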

\begin{proof}
We follow the proof of the scalar case \cite[Proof of Prop. 5.2.7. a)]{DocL}. To this end, let $\sigma\in\mathfrak{S}_n$. Then due to Corollary \ref{Analyticity},
\begin{equation*}
\left(A\Omega\right)_n(\boldsymbol{\theta}^\sigma):=\left(A\Omega\right)_n(\theta_{\sigma(1)},\dots,\theta_{\sigma(n)})
\end{equation*}
is analytic in the permuted tubes $\mathbb{R}^n-i\mathcal{G}_n^\sigma$, where
\begin{equation*}
\mathcal{G}_n^\sigma:=\sigma\mathcal{G}_n=\{\boldsymbol{\lambda}\in\mathbb{R}^n:\pi>\lambda_{\sigma(1)}>\cdots>\lambda_{\sigma(n)}>0\}.
\end{equation*}
Since $\left(A\Omega\right)_n\in\mathscr{H}_n$, this vector is invariant under the representation $D_n$ of $\mathfrak{S}_n$ (\ref{reprpermugroup}). Hence, we have
\begin{equation}\label{permversion}
\left(A\Omega\right)_n(\boldsymbol{\theta})=\left(D_n(\sigma)\left(A\Omega\right)_n\right)(\boldsymbol{\theta})=S_n^\sigma(\boldsymbol{\theta})\cdot\left(A\Omega\right)_n(\boldsymbol{\theta}^\sigma).
\end{equation}
The tensor $S_n^\sigma(\boldsymbol{\theta})$ consists (up to trivial tensor factors $1$) of factors of the form $S(\theta_{k}-\theta_l)\in\mathcal{S}_0$, $1\leq k,l\leq n$, which are analytic in the strip $S(-\kappa(S),\pi+\kappa(S))$. Thus, all $S_n^\sigma(\boldsymbol{\theta})$, $\sigma\in\mathfrak{S}_n$, are analytic in the tube $\mathbb{R}^n+i\mathcal{B}_n'(\kappa(S))$ with
\begin{equation*}
\mathcal{B}_n'(\kappa(S)):=\{\boldsymbol{\lambda}\in\mathbb{R}^n:-\kappa(S)<\lambda_k-\lambda_l<\kappa(S),\,\,1\leq l<k\leq n\}.
\end{equation*}
Therefore, both sides of Equation (\ref{permversion}) admit an analytic continuation. The left hand side to $\mathcal{T}_n=\mathbb{R}^n-i\mathcal{G}_n$ as before and the right hand side to the tube based on $\mathcal{B}_n'(\kappa(S))\cap (-\mathcal{G}_n^\sigma)$. Since convergence to the boundary in the sense of distributions gives the same value on $\mathbb{R}^n$, Epstein's generalization of the Edge of the Wedge Theorem \cite{eps60} can be applied, yielding that $(A\Omega)_n$ can be analytically continued to the tube based on the convex closure of
\begin{equation*}
\bigcup\limits_{\sigma\in\mathfrak{S}_n}\mathcal{B}_n'(\kappa(S))\cap(-\mathcal{G}_n^\sigma).
\end{equation*}
As the convex closure of $\bigcup\limits_{\sigma\in\mathfrak{S}_n}(-\mathcal{G}_n^\sigma)$ is the cube $(-\pi,0)^{\times n}$, $(A\Omega)_n$ is analytic in the tube based on $\mathcal{B}_n'(\kappa(S))\cap(-\pi,0)^{\times n}=-\mathcal{B}_n(\kappa(S))$, cf. Figure \ref{fig1}, proving the claim.
\end{proof}

\begin{figure}[h]
\begin{minipage}{3cm}
\begin{tikzpicture}[scale=1]
               \begin{scope}
              \fill[gray!30!white] (0,0)--(-2.5,0)--(-2.5,-2.5) -- (0,0);
   \fill[gray!60!white] (0,0)--(-2.5,-2.5)-- (0,-2.5) -- (0,0);
              
               \end{scope} 
        \begin{scope}
                \draw[fill] (2.2,.4) node{Im$\,\zeta_1$};
\draw[fill] (.7,1.4) node{Im$\,\zeta_2$};

\draw[fill] (-2.5,.2) node{\footnotesize{$(-\pi,0)$}};
\draw[fill] (-2.8,-2.8) node{\footnotesize{$(-\pi,-\pi)$}};
\draw[fill] (.2,-2.8) node{\footnotesize{$(0,-\pi)$}};
\draw[fill] (.2,.2) node{\footnotesize{$0$}};
\fill[black] (-2.5,-2.5) circle (.3ex);
\fill[black] (-2.5,0) circle (.3ex);
\fill[black] (0,-2.5) circle (.3ex);
\fill[black] (0,0) circle (.3ex);
        \begin{scope}[->]
            \draw (0,-2.5) -- (0,1.3) node[anchor=north] {};
            \draw (-2.5,0) -- (1.5,0) node[anchor=east] {};
        \end{scope}
    \begin{scope}
       \draw[style=densely dashed] (-2.5,-2.5)--(0,0);
       \draw[fill] (-1.8,-.6) node{\footnotesize{$-\mathcal{G}^{id}_2$}};
 \draw[fill] (-.6,-1.8) node{\footnotesize{$-\mathcal{G}^{\tau_1}_2$}};
       \draw (-2.5,0) -- (-2.5,-2.5);
      \draw (-2.5,-2.5) -- (0,-2.5);
      \fill[black] (-1.25,-1.25) circle (.3ex);
        \end{scope}
        \end{scope}
\end{tikzpicture}
\end{minipage}\hspace{2.5cm}
\begin{minipage}{3cm}
\begin{tikzpicture}[scale=1]
               \begin{scope}
\fill[gray!20!white] (1.5,0.65)--(-.2,.65)--(-3.85,-3)--(-2.15,-3);

              \fill[gray!50!white] (0,0)--(-.85,0)--(-2.5,-2.5+.85) -- (-2.5,-2.5)--(0,0);
              \fill[gray!50!white] (0,0)--(-2.5,-2.5)--(-1.65,-2.5) -- (0,-.85);

               \end{scope} 
        \begin{scope}
                \draw[fill] (2.2,.4) node{Im$\,\zeta_1$};
\draw[fill] (.7,1.4) node{Im$\,\zeta_2$};
\draw[fill] (.5,-.85) node{\footnotesize{$-\kappa(S)$}};
\draw[fill] (-4.4,-2.3) node{\footnotesize{$\mathcal{B}_2'(\kappa(S))$}};
\draw[fill] (-3.65,-.6) node{\footnotesize{$-\mathcal{B}_2(\kappa(S))$}};
\draw[fill] (.2,-2.8) node{\footnotesize{$(0,-\pi)$}};
\draw[fill] (.2,.2) node{\footnotesize{$0$}};
\fill[black] (-2.5,-2.5) circle (.3ex);
\fill[black] (-2.5,0) circle (.3ex);
\fill[black] (0,-2.5) circle (.3ex);
\fill[black] (0,0) circle (.3ex);
        \begin{scope}[->]
            \draw (0,-2.5) -- (0,1.3) node[anchor=north] {};
            \draw (-2.5,0) -- (1.5,0) node[anchor=east] {};
        \end{scope}
    \begin{scope}
    \draw[help lines] (-2.85,-.615)--(-1.6,-1.05);
    \draw[help lines] (-3.75,-2.35)--(-2.7,-2.7);
    \draw[style=densely dashed] (1.5,0.65) --(-2.15,-3);
        \draw[style=densely dashed] (-.2,0.65) --(-3.85,-3);
       \draw (-2.5,0) -- (-2.5,-2.5);
      \draw (-2.5,-2.5) -- (0,-2.5);
      \fill[black] (-1.25,-1.25) circle (.3ex);
        \end{scope}
        \end{scope}
\end{tikzpicture}
\end{minipage}
\caption{Regions appearing in the proof of Proposition \ref{enlagement} for the case $n=2$.}
\label{fig1}
\end{figure}
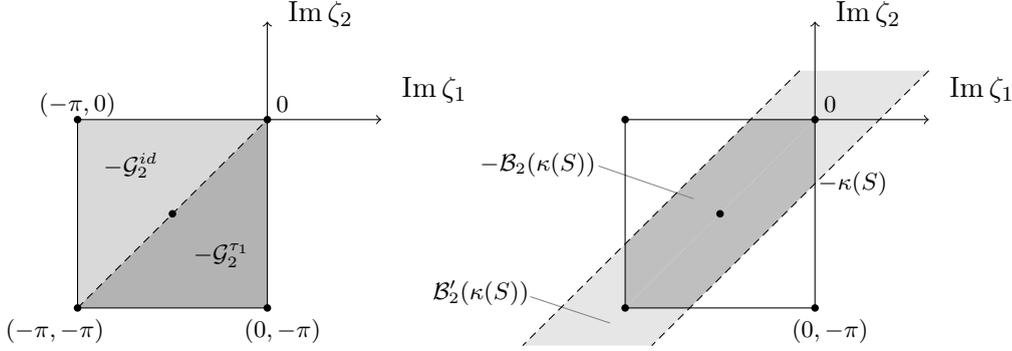
Since $S_n^\sigma(\boldsymbol{\theta})$ is a very complicated object, it is not straightforward to derive a formula for this tensor. For the scalar case, i.e. if $\dim\mathcal{K}=1$, however, the corresponding product of S-matrices is explicitly known \cite{L08}. This formula implies, in particular, that the domain of analyticity may be extended even further, namely to a tube with base $-\widetilde{\mathcal{B}}_n(a)$, where $$\widetilde{\mathcal{B}}_n(a):=\{\boldsymbol{\lambda}\in\mathbb{R}^n:0<\lambda_1,\dots,\lambda_n<\pi,\,\lambda_k-\lambda_l<a,\,1\leq l<k\leq n\}.$$
Although we conjecture that a similar formula can be found in the more general case of $\dim\mathcal{K}>1$, a larger domain of holomorphy is not relevant for the verification of the nuclearity of the maps $\Xi(s)$, since the main points of interest $\boldsymbol{\theta}+i\boldsymbol{\lambda}_{\pi/2}$, $\boldsymbol{\theta}\in\mathbb{R}^n$, are contained in $\mathbb{R}^n-i\mathcal{B}_n(\kappa(S))$.\par
With regard to the one particle case, cf. Lemma \ref{oneparticle}, it, therefore, remains to discuss the boundedness properties of $(A\Omega)_n^{\boldsymbol{\alpha}}$, in order to derive bounds on the nuclear norms of $\Xi_n(s)$, $n>1$. The results established so far yield, in fact, the existence of a map $\Upsilon_n(s,\kappa(S))$ from $\mathcal{F}(W_R)$ into a Hardy space, as the following proposition implies.
\begin{proposition}\label{Corrhardystructure}
Let $S\in\mathcal{S}_0$, $A\in\mathcal{F}(W_R)$, $0<\kappa(S)<\tfrac{\pi}{2}$ and $\undertilde{s}:=(0,s)$, $s>0$, then $(A(s)\Omega)_n:=\left(U(\undertilde{s})AU(\undertilde{s})\Omega\right)_n$ restricted to the tube $\mathcal{T}_n(\tfrac{\kappa(S)}{n})$ is in the Hardy space $H^2(\mathcal{T}_n(\tfrac{\kappa(S)}{n}))\otimes\mathcal{K}^{\otimes n}$ with Hardy norms bounded by
\begin{equation}\label{bound33}
\triplenorm(A(s)\Omega)_{n}\triplenorm\leq \upsilon(s,\kappa(S))^n\,\|A\|.
\end{equation}
An adequate choice for $\upsilon(s,\kappa(S))$ is given by
\begin{equation}\label{boundupsilon}
\upsilon(s,\kappa(S)):=\|S\|_{\kappa(S)}D^{1/2}\max\left\{1,\frac{\sqrt{2}e^{-sm_\circ\sin\kappa(S)}}{\sqrt{\kappa(S)}\cdot(\pi sm_\circ\sin\kappa(S))^{1/4}}\right\}>0.
\end{equation}

\end{proposition}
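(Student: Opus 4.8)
The plan is to reproduce, at the $n$-particle level, the factorization of $\Xi_n$ used in the one-particle case (Lemma \ref{oneparticle}): to realize $A\mapsto(A(s)\Omega)_n$ as a bounded map into a Hardy space on $\mathcal{T}_n(\tfrac{\kappa(S)}{n})$, so that composing it later with a trace-class ``restrict to $\boldsymbol{\lambda}_{\pi/2}$ and multiply by the residual decay'' map will yield the nuclearity of $\Xi_n$. Since $U(\undertilde{s})\Omega=\Omega$ we have $A(s)\Omega=U(\undertilde{s})A\Omega$, so by (\ref{actionPoincare})
\begin{equation*}
(A(s)\Omega)_n^{\boldsymbol{\alpha}}(\boldsymbol{\theta})=\prod_{k=1}^{n}e^{-im_{[\alpha_k]}s\sinh\theta_k}\,(A\Omega)_n^{\boldsymbol{\alpha}}(\boldsymbol{\theta}),
\end{equation*}
and the same formula furnishes the holomorphic continuation of $(A(s)\Omega)_n$ wherever $(A\Omega)_n$ continues, the translation factor being entire. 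By Proposition \ref{enlagement} the function $(A\Omega)_n$ is analytic on $\mathbb{R}^n-i\mathcal{B}_n(\kappa(S))$, and since $\boldsymbol{\lambda}_{\pi/2}+\mathcal{C}_n(\tfrac{\kappa(S)}{n})\subset-\mathcal{B}_n(\kappa(S))$, the product $(A(s)\Omega)_n$ is analytic on $\mathcal{T}_n(\tfrac{\kappa(S)}{n})$. What remains is the uniform slice bound $\sup_{\boldsymbol{\mu}}\|(A(s)\Omega)_{n,\boldsymbol{\mu}}\|_2\leq\upsilon(s,\kappa(S))^n\|A\|$, the supremum running over $\boldsymbol{\mu}\in\boldsymbol{\lambda}_{\pi/2}+\mathcal{C}_n(\tfrac{\kappa(S)}{n})$.

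First I would secure a \emph{nondegenerate} pointwise bound on $(A\Omega)_n$ near the diagonal imaginary part $\boldsymbol{\lambda}_{\pi/2}$. The estimate of Corollary \ref{boundCoro} is useless there, since the distance (\ref{distance}) satisfies $d(\boldsymbol{\lambda})\to0$ as the $\lambda_k$ coincide; the remedy is precisely the \emph{enlarged} analyticity of Proposition \ref{enlagement}. Concretely, for $\boldsymbol{\mu}$ in the cube I would centre at $\boldsymbol{\theta}+i\boldsymbol{\mu}$ a polydisc whose radius $r$ is bounded below by a fixed multiple of $\kappa(S)$, \emph{independently} of $n$: inside the cube the differences obey $|\mu_k-\mu_l|<\kappa(S)/n$, while $-\mathcal{B}_n(\kappa(S))$ permits differences up to $\kappa(S)$, leaving room of order $\kappa(S)$ for such a polydisc. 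The mean-value property together with Cauchy--Schwarz then bounds $|(A\Omega)_n(\boldsymbol{\theta}+i\boldsymbol{\mu})|$ by $r^{-n/2}\sim\kappa(S)^{-n/2}$ times the $L^2$-slice norms $\|(A\Omega)_{n,\cdot}\|_2$ over the polydisc, exactly as in the derivation of (\ref{bound2}). Those slice norms are controlled by the $L^2$-bound (\ref{L2-schranke}) on the ordered region, extended to the neighbouring Weyl chambers through the permutation relation (\ref{permversion}); each interchange contributes a factor bounded by $\|S\|_{\kappa(S)}$, the relevant $S$-factors being evaluated within the strip $S(-\kappa(S),\pi+\kappa(S))$. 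Collecting the mean-value constant $\sim\kappa(S)^{-1/2}$, the $S$-matrix norm and the index count gives a pointwise bound of the form $(\,\mathrm{const}\cdot\kappa(S)^{-1/2}\|S\|_{\kappa(S)})^n\|A\|$, uniform over the cube.

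Second I would feed this into the slice integral, letting the translation factor supply all the decay and integrability. The modulus of the translation factor is $\prod_k e^{m_{[\alpha_k]}s\cosh\theta_k\sin\mu_k}$ with $\sin\mu_k<0$, so
\begin{equation*}
\|(A(s)\Omega)_{n,\boldsymbol{\mu}}\|_2^2=\sum_{\boldsymbol{\alpha}}\int d^n\boldsymbol{\theta}\,\prod_{k=1}^{n}e^{2m_{[\alpha_k]}s\cosh\theta_k\sin\mu_k}\,\big|(A\Omega)_n^{\boldsymbol{\alpha}}(\boldsymbol{\theta}+i\boldsymbol{\mu})\big|^2.
\end{equation*}
Inserting the pointwise bound, using the lower bounds $|\sin\mu_k|\geq\sin\kappa(S)$ and $m_{[\alpha_k]}\geq m_\circ$ on the decay rate, and summing the $D^n$ internal indices, each factor reduces to $\int_{\mathbb{R}}e^{-2m_\circ s\cosh\theta\sin\kappa(S)}\,d\theta=2K_0(2m_\circ s\sin\kappa(S))$, a modified Bessel function. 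The elementary estimate $K_0(c)\leq\sqrt{\pi/(2c)}\,e^{-c}$ then turns the per-particle contribution into the exponentially decaying factor of (\ref{boundupsilon}), and assembling the mean-value constant $\kappa(S)^{-1/2}$, the dimension factor $D^{1/2}$ and $\|S\|_{\kappa(S)}$ produces the stated $\upsilon(s,\kappa(S))$, the outer $\max\{1,\cdot\}$ serving only to guarantee $\upsilon\geq\|S\|_{\kappa(S)}D^{1/2}$.

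The hard part will be the interplay at the diagonal: exactly the point $\boldsymbol{\lambda}_{\pi/2}$ at which the maps $\Xi_n(s)$ are evaluated is where the pointwise estimate of Corollary \ref{boundCoro} collapses, so the whole argument hinges on genuine analyticity in an \emph{open} neighbourhood of the diagonal (Proposition \ref{enlagement}) together with slice bounds valid there, and—crucially—on not losing more than one power each of $\|S\|_{\kappa(S)}$ and $D^{1/2}$ per particle when passing between chambers, so that the final estimate is a clean $n$-th power rather than something growing like $\|S\|_{\kappa(S)}^{\binom{n}{2}}$. Tracking these constants carefully is the delicate, bookkeeping-heavy step, and is where I would follow the scalar computation of \cite{DocL} most closely. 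The payoff is the bounded Hardy-space map $\Upsilon_n(s,\kappa(S))\colon A\mapsto(A(s)\Omega)_n$, the $n$-particle analogue of $\Upsilon_1$, which is precisely the object required to set up the nuclearity factorization of $\Xi_n(s)$.
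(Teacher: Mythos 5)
Your setup is right (the factorization $A(s)\Omega = u_{n,s}\cdot(A\Omega)_n$, and analyticity on $\mathcal{T}_n(\tfrac{\kappa(S)}{n})$ from Proposition \ref{enlagement}), but the mechanism you propose for the uniform slice bound has a genuine gap, and it is exactly the one you flag in your last paragraph without resolving. Your plan is to get a pointwise bound at generic points $\boldsymbol{\theta}+i\boldsymbol{\mu}$ near the diagonal via the mean-value property on polydiscs of radius $\sim\kappa(S)$, feeding in $L^2$-type bounds on the slices inside the polydisc, which you propose to import from the ordered region (\ref{L2-schranke}) by chamber-hopping with (\ref{permversion}). But a generic point near $\boldsymbol{\lambda}_{\pi/2}$ has components in an arbitrary order, so the relevant permutation $\sigma$ can be any element of $\mathfrak{S}_n$, and $S_n^\sigma$ contains $\mathrm{inv}(\sigma)$ factors — up to $\binom{n}{2}$ — of which, at imaginary parts of the ``wrong'' sign, each costs $\|S\|_{\kappa(S)}$ rather than $1$. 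Your claim that ``each interchange contributes a factor bounded by $\|S\|_{\kappa(S)}$'' with only $O(n)$ interchanges total is not available: a generic permutation needs $\sim n^2/4$ adjacent transpositions. This is not mere bookkeeping; it is precisely the trap into which the original argument of \cite{L08} fell (cf. the footnote to the intertwiner lemma: the bound $\|\mathcal{I}_n\|\leq\|S\|_{\kappa(S)}^{n/2}$ holds only on the tube over the cube of side $\kappa(S)/n$, \emph{not} over a fixed-size cube), and it is why the cube in the proposition shrinks like $1/n$ — your fixed-radius-$\kappa(S)$ polydisc heuristic would effectively yield a fixed-size cube, which these methods cannot deliver.

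The paper's proof (following \cite[Lemma E.1]{erratum}) avoids generic near-diagonal points altogether. It establishes the bound only on three special sets: (i) the coordinate-direction slices $\boldsymbol{\lambda}=-\lambda e_j$, obtained by continuing \emph{one} variable at a time — via Lemma \ref{basicLemma} in the first variable (giving $\|h_{-\lambda}\|_2\leq\|f\|_2\|A\|$, then a one-dimensional mean-value argument and the Gaussian integral producing $a(s,\lambda)$), and via $S$-symmetry for $e_2,\dots,e_n$, where the continued variable enters at most $n-1$ factors of $S_n^\sigma$, each bounded by $\|S\|_{\kappa(S)}$ and all others by $1$, hence a clean per-slice factor $\|S\|_{\kappa(S)}^{n-1}$; (ii) the $-\pi$ corner cube, handled by the PCT relation $(A(s)\Omega)_{n,(-\pi,\dots,-\pi)}=(JA^*(s)\Omega)_n$ and $\|A^*\|=\|A\|$; and (iii) the diagonal segment from $0$ to $(-\pi,\dots,-\pi)$, on which modular theory gives the dimension-free bound $\|A\|$. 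The uniform bound on the convex closure of these sets — which contains $\boldsymbol{\lambda}_{\pi/2}+\mathcal{C}_n(\tfrac{\kappa(S)}{n})$ — then follows from convexity of the slice-norm bounds, with no pointwise estimate at disordered points ever needed. Your proposal uses neither the modular-theory diagonal, nor the $J$-reflection at the $-\pi$ face, nor the one-variable-at-a-time structure that caps the $S$-factor count at $n-1$ per slice; without a substitute for these, the step ``slice norms are controlled by (\ref{L2-schranke}) extended through (\ref{permversion})'' produces $\|S\|_{\kappa(S)}^{O(n^2)}$, which is incompatible with the claimed bound $\upsilon(s,\kappa(S))^n\|A\|$ and would destroy the summability of $\|\Xi_n(s)\|_1$ downstream.
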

\begin{proof}
Similarly as in the proof of Lemma E.1 in \cite{erratum} we first show the validity of the bound (\ref{bound33}) in the tube based on the cube $\mathcal{C}_n^0:=(-\tfrac{\kappa(S)}{n},0)^{\times n}$. To this end, recall the correspondence (\ref{zusammenhang}), that is, $(A\Omega)_n^{\boldsymbol{\alpha}}(\boldsymbol{\theta})=\tfrac{1}{\sqrt{n!}}\langle z_{\alpha_2
}^\dagger(\theta_2)\cdots z_{\alpha_n
}^\dagger(\theta_n)\Omega,[z_{\alpha_1
}(\theta_1),A]\Omega\rangle$. Hence, by application of Lemma \ref{basicLemma} it follows that
$$h^{\alpha_1}_{-\lambda}:\theta_1\mapsto\int d\theta_2\cdots d\theta_n (A\Omega)_n^{\alpha_1\alpha_2\dots\alpha_n}(\theta_1-i\lambda,\theta_2,\dots,\theta_n)f^{\alpha_2\dots\alpha_n}(\theta_2,\dots,\theta_n)$$
is in $L^2(\mathbb{R},d\theta)$ for any $f\in L^2(\mathbb{R}^{n-1})\otimes\mathcal{K}^{\otimes n-1}$ and any $0\leq\lambda\leq \pi$. Moreover, we have $\|h_{-\lambda}\|_2\leq  \|f\|_2\|A\|$. By the mean value property
\begin{equation*}
h^{\alpha_1}(\zeta)=\frac{1}{\pi r^2}\int\limits_{D(\zeta,r)}d\theta\, d\lambda\, h^{\alpha_1}(\theta+i\lambda),
\end{equation*}
with $D(\zeta,r)\subset\mathbb{R}-i[0,\pi]$ a disc of radius $r$ and center $\zeta$. Hence,
\begin{eqnarray*}
\left|h^{\alpha_1}(\zeta)\right|^2&\leq&\frac{1}{\pi r^2}\int\limits_{D(\zeta,r)}d\theta\, d\lambda\, \left|h^{\alpha_1}(\theta+i\lambda)\right|^2\\
&\leq&\frac{1}{\pi r^2}\int\limits_{\mathbb{R}+\zeta+i[-r,r]}d\theta\, d\lambda\, \left|h^{\alpha_1}(\theta+i\lambda)\right|^2\\
&\leq&\frac{2r}{\pi r^2}\|f\|_2^2\|A\|^2,
\end{eqnarray*}
yielding $\left|h^{\alpha_1}(\theta_1-i\lambda)\right|\leq \left(\tfrac{2}{\pi \min\{\lambda,\pi-\lambda\}}\right)^{1/2} \|f\|_2\|A\|$, $\theta_1\in\mathbb{R}$. It, therefore, follows that for $\theta_1$ and $\lambda$ fixed the function $(\theta_2,\dots,\theta_n)\mapsto(A\Omega)_n^{\boldsymbol{\alpha}}(\theta_1-i\lambda,\theta_2,\dots,\theta_n)$ is in $L^2(\mathbb{R}^{n-1})$ with norm bounded by $\left(\tfrac{2}{\pi \min\{\lambda,\pi-\lambda\}}\right)^{1/2} \|A\|$. As a next step we may pass by means of the entire function $u_{n,s}^{[\boldsymbol{\alpha}]}(\boldsymbol{\zeta}):=\prod_{k=1}^{n}e^{-ism_{[\alpha_k]}\sinh \zeta_k}$ to the ``shifted'' function
$(A(s)\Omega)_n^{\boldsymbol{\alpha}}(\boldsymbol{\zeta})=u_{n,s}^{[\boldsymbol{\alpha}]}(\boldsymbol{\zeta})\cdot(A\Omega)_n^{\boldsymbol{\alpha}}(\boldsymbol{\zeta})$ with $s>0$. Then, in view of $\sinh(a+ib)=\sinh a\cdot\cos b+i\cosh a\cdot\sin b$, $a,b\in\mathbb{R}$ we find with $e_1=(1,0,\dots,0)$
\begin{eqnarray*}
\hspace{-.3cm}&&\hspace{-.3cm}\sum_{\boldsymbol{\alpha}}\int d^n\boldsymbol{\theta}\left|\left(A(s)\Omega\right)_n^{\boldsymbol{\alpha}}(\boldsymbol{\theta}-i\lambda e_1)\right|^2\\
&&= \sum_{\boldsymbol{\alpha}}\int d\theta_1 e^{-2sm_{[\alpha_1]}\sin\lambda\cosh\theta_1}\int d\theta_2\cdots d\theta_n\left|(A\Omega)_n^{\boldsymbol{\alpha}}(\theta_1-i\lambda,\theta_2,\dots,\theta_n)\right|^2\\
&&\leq \frac{2 }{\pi \min\{\lambda,\pi-\lambda\}}\|A\|^2\sum_{\boldsymbol{\alpha}}\int d\theta_1 e^{-2sm_{\circ}\sin\lambda\cosh\theta_1}\\
&&= \frac{2D^n}{\pi \min\{\lambda,\pi-\lambda\}}\|A\|^2\int d\theta_1 e^{-2sm_{\circ}\sin\lambda\cosh\theta_1}\\
&&\leq \frac{2D^n}{\pi \min\{\lambda,\pi-\lambda\}}\|A\|^2e^{-2sm_{\circ}\sin\lambda}\int d\theta_1 e^{-sm_{\circ}\sin\lambda\theta_1^2}\\
&&=\frac{2D^n\|A\|^2e^{-2sm_{\circ}\sin\lambda}\sqrt{\pi}}{\pi \min\{\lambda,\pi-\lambda\}\sqrt{sm_\circ\sin\lambda}}
=: a(s,\lambda)^2 D^n\|A\|^2,
\end{eqnarray*}
where we made use of $\cosh(\theta_1)\geq 1+\tfrac{1}{2}\theta_1^2$. For $0<\lambda<\pi$ we, therefore, arrive at $(A(s)\Omega)_{n,-\lambda e_1}\in L^2(\mathbb{R}^n)\otimes\mathcal{K}^{\otimes n}$. With regard to analytic continuation in the directions of the other standard basis vectors $e_2,\dots,e_n$ of $\mathbb{R}^n$ we take the S-symmetry of $(A(s)\Omega)_n$ into account. To this end, recall that for $0<\lambda\leq\kappa(S)$
\begin{equation*}
\sup\limits_{\theta\in\mathbb{R}}\|S(\theta)\|=1,\qquad \sup\limits_{\theta\in\mathbb{R}}\|S(\theta+i\lambda)\|\leq 1,\qquad\sup\limits_{\theta\in\mathbb{R}}\|S(\theta-i\lambda)\|\leq\|S\|_{\kappa(S)}.
\end{equation*}
Moreover, we consider the function
$$\zeta_k\mapsto S_n^\sigma(\theta_1,\dots,\theta_{k-1},\zeta_k,\theta_{k+1},\dots,\theta_n),\qquad\theta_k\in\mathbb{R},\qquad k=1,\dots,n,$$
which, clearly, is analytic in the strip $-\kappa(S)<\text{Im}\,\zeta_k<\kappa(S)$. An estimate on this function is obtained by determining the number of $\zeta_k$-dependent factors $S$ in the above tensor. To this end, recall the fact that any $\sigma\in\mathfrak{S}_n$ can be decomposed (non-uniquely) into a product of inv$(\sigma)$ transpositions $\tau_j\in\mathfrak{S}_n$, where inv$(\sigma)$ is the number of pairs $(i,j)$, $i,j=1,\dots,n$, with $i<j$ and $\sigma(i)>\sigma(j)$. Therefore, we count that the maximal possible number of transpositions which involve the element $k$ is $n-1$. Hence, the representing tensor $S_n^\sigma(\theta_1,\dots,\theta_{k-1},\zeta_k,\theta_{k+1},\dots,\theta_n)$ contains at most $n-1$ factors depending on the variable $\zeta_k$. Moreover, each of those factors is bounded by $\|S\|_{\kappa(S)}$. All the others are bounded by 1. Hence, it follows for $j=2,\dots,n$ and $0<\lambda\leq\kappa(S)$ that
\begin{eqnarray*}
\int d^n\boldsymbol{\theta}\,\|\left(A(s)\Omega\right)_n(\boldsymbol{\theta}-i\lambda e_j)\|_{\mathcal{K}^{\otimes n}}^2&\leq& \|S\|_{\kappa(S)}^{2(n-1)}\int d^n\boldsymbol{\theta}\,\|\left(A(s)\Omega\right)_n(\boldsymbol{\theta}-i\lambda e_1)\|_{\mathcal{K}^{\otimes n}}^2\\
&\leq&\|S\|_{\kappa(S)}^{2(n-1)}a(s,\lambda)^2  D^n\|A\|^2.
\end{eqnarray*}
That is, we have, in particular, $\|(A(s)\Omega)_{n,-\kappa(S)e_j}\|_2\leq\|S\|_{\kappa(S)}^na(s,\kappa(S))D^{n/2}\|A\|$, with $j=1,\dots,n$. On the other hand, $(A(s)\Omega)_n\in L^2(\mathbb{R}^n)\otimes\mathcal{K}^{\otimes n}$ with norm bounded by $\|A\|$. Since, moreover, the latter function is obtained from its analytic continuation as a strong limit for imaginary part $\boldsymbol{\lambda}\rightarrow 0$, we conclude that for all $\boldsymbol{\lambda}$ in the convex closure of the points $0,-\kappa(S)e_1,\dots,-\kappa(S)e_n$ we have the following bound
\begin{equation}\label{b}
\|(A(s)\Omega)_{n,\boldsymbol{\lambda}}\|_2\leq \upsilon(s,\kappa(S))^n\|A\|,\quad \upsilon(s,\kappa(S)):=\|S\|_{\kappa(S)}D^{1/2}\max\{1,a(s,\kappa(S))\}.
\end{equation}
As $\mathcal{C}_n^0$ is contained in the convex closure of $(0,-\kappa(S)e_1,\dots,-\kappa(S)e_n)$, the bound $(\ref{b})$ holds, clearly, for all $\boldsymbol{\lambda}\in\mathcal{C}_n^0$, establishing the claim for the tube based on $\mathcal{C}_n^0$.\par
We proceed by considering the tube based on $\mathcal{C}_n^{-\pi}:=(-\pi,\dots,-\pi)+(0,\tfrac{\kappa(S)}{n})^{\times n}$. Since $(A(s)\Omega)^{\boldsymbol{\alpha}}_{n,(-\pi,\dots,-\pi)}=(JA^*(s)\Omega)_n^{\boldsymbol{\alpha}}$, cf. (\ref{zusammenhang}), it follows, due to $\|A^*\|=\|A\|$, immediately that the bound (\ref{b}) also holds for $\boldsymbol{\lambda}\in\mathcal{C}_n^{-\pi}$.\par
It remains to show the validity of (\ref{b}) in the tube $\mathcal{T}_n(\tfrac{\kappa(S)}{n})$. To this end, note that the cubes $\mathcal{C}_n^0$ and $\mathcal{C}_n^{-\pi}$ (both contained in the analyticity domain of $(A(s)\Omega)_n^{\boldsymbol{\alpha}}$) are connected by the line segment $l$ from the point $(0,\dots,0)$ to $(-\pi,\dots,-\pi)$, cf. Figure \ref{fig}.
\begin{figure}[h]
\begin{center}
\begin{tikzpicture}[scale=1.2]
               \begin{scope}
              \fill[gray!30!white] (0,0)--(-.85,0)--(-2.5,-2.5+.85) -- (-2.5,-2.5)--(0,0);
              \fill[gray!30!white] (0,0)--(-2.5,-2.5)--(-1.65,-2.5) -- (0,-.85);
               \end{scope} 
        \begin{scope}
                \draw[fill] (1.2,.2) node{Im$\,\zeta_1$};
\draw[fill] (.5,1) node{Im$\,\zeta_2$};
\draw[fill] (.4,-.45) node{\footnotesize{$\kappa(S)$}};
\draw[fill] (-2.5,.2) node{\footnotesize{$(-\pi,0)$}};
\draw[fill] (-2.8,-2.8) node{\footnotesize{$(-\pi,-\pi)$}};
\draw[fill] (.2,-2.8) node{\footnotesize{$(0,-\pi)$}};
\draw[fill] (.2,.2) node{\footnotesize{$0$}};
\fill[black] (-2.5,-2.5) circle (.3ex);
\fill[black] (-2.5,0) circle (.3ex);
\fill[black] (0,-2.5) circle (.3ex);
\fill[black] (0,0) circle (.3ex);
        \begin{scope}[->]
            \draw (0,-2.5) -- (0,1) node[anchor=north] {};
            \draw (-2.5,0) -- (1,0) node[anchor=east] {};
            \draw (0.1,-.05)--(.1,-.85);
               \draw (.1,-.85)--(0.1,-.05);
        \end{scope}
    \begin{scope}  
    \draw[style=densely dashed] (-0.85,0) --(-2.5,-1.65);
       
    \draw[style=densely dashed] (0,-0.85) --(-1.65,-2.5);
       \draw (-2.5,0) -- (-2.5,-2.5);
      \draw (-2.5,-2.5) -- (0,-2.5);
        \draw[style=densely dashed](0,0)--(-2.5,-2.5);
      \fill[black] (-1.25,-1.25) circle (.3ex);
             \draw[style=densely dashed] (-1.25+.2,-1.25+.2)--(-1.25+.2,-1.25-.2);
             \draw[style=densely dashed] (-1.25-.2,-1.25+.2) -- (-1.25+.2,-1.25+.2);
             \draw[style=densely dashed] (-1.25-.2,-1.25+.2) --(-1.25-.2,-1.25-.2);
               \draw[style=densely dashed] (-1.25-.2,-1.25-.2) --(-1.25+.2,-1.25-.2);
       \draw[style=densely dashed] (0,0)--(-.2,-.2);
   \draw[style=densely dashed] (-.4,0)--(-.4,-.4);
    \draw[style=densely dashed] (-.4,0)--(-.4,-.4);
     \draw[style=densely dashed] (0,-.4)--(-.4,-.4);
 \draw[style=densely dashed] (-2.5+.4,-2.5+.4)--(-2.5,-2.5+.4);
  \draw[style=densely dashed] (-2.5+.4,-2.5)--(-2.5+.4,-2.5+.4);
        \end{scope}
               \end{scope}
\end{tikzpicture}
\end{center}
\caption{Cubes appearing in the proof of Proposition \ref{Corrhardystructure} for the case $n=2$.}\label{fig}
\end{figure}
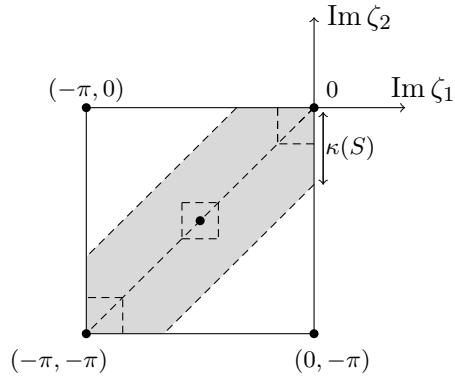
By modular theory we have that $\|(A(s)\Omega)_{n,\boldsymbol{\lambda}}\|_2\leq\|A\|$ for all $\boldsymbol{\lambda}\in l$. Consequently, the bound (\ref{b}) also holds for $\boldsymbol{\lambda}$ in the convex closure of $\mathcal{C}_n^0\cup\mathcal{C}_n^{-\pi}\cup l$. But the cube $\boldsymbol{\lambda}_{\pi/2}+\mathcal{C}_n(\tfrac{\kappa(S)}{n})$ is contained in this region, yielding what is claimed.
\end{proof}
The preceding proposition establishes a generalization of the Hardy space structure found in the case $n=1$. In particular, define
\begin{equation}\label{upsilon}
\Upsilon_n(s,\kappa(S)):\mathcal{F}(W_R)\rightarrow H^2(\mathcal{T}_n(\tfrac{\kappa(S)}{n}))\otimes\mathcal{K}^{\otimes n},\qquad \Upsilon_n(s,\kappa)A:=\left(A(\tfrac{s}{2})\Omega\right)_n,
\end{equation}
then, an immediate consequence of Proposition \ref{Corrhardystructure} is the following.
\begin{corollary}
Let $s>0$, then, the map $\Upsilon_n(s,\kappa)$, as defined in (\ref{upsilon}), is a bounded operator between the Banach spaces $(\mathcal{F}(W_R),\|\cdot\|_{\mathcal{B}(\mathscr{H})})$ and \mbox{$(H^2(\mathcal{T}_n(\tfrac{\kappa(S)}{n}))\otimes\mathcal{K}^{\otimes n},\triplenorm\cdot\triplenorm)$}, with operator norm complying with
\begin{equation}\label{opb}
\|\Upsilon_n(s,\kappa(S))\|\leq\, \upsilon(\tfrac{s}{2},\kappa(S))^n,
\end{equation}
where $\upsilon(s,\kappa(S))$ is given by (\ref{boundupsilon}).
\end{corollary}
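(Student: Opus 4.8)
The plan is to deduce the statement almost directly from Proposition \ref{Corrhardystructure}, which already contains the analytic and quantitative substance; the corollary only has to package that result as a bounded linear map and to adjust the translation parameter.

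First I would verify linearity of $\Upsilon_n(s,\kappa(S))$. By the definition (\ref{upsilon}) we have $\Upsilon_n(s,\kappa(S))A=(A(\tfrac{s}{2})\Omega)_n=P_n\,U(\undertilde{s}_{1/2})\,A\,U(\undertilde{s}_{1/2})\,\Omega$, where $\undertilde{s}_{1/2}:=(0,\tfrac{s}{2})$. Since $U(\undertilde{s}_{1/2})$ and $\Omega$ are fixed, the assignment $A\mapsto U(\undertilde{s}_{1/2})AU(\undertilde{s}_{1/2})\Omega$ is linear in $A\in\mathcal{F}(W_R)$, and the projection $P_n$ onto $\mathscr{H}_n$ is linear as well; hence $\Upsilon_n(s,\kappa(S))$ is a linear map.

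Next I would note that the codomain is correct: applying Proposition \ref{Corrhardystructure} with the translation parameter $s$ replaced by $\tfrac{s}{2}$ (which is admissible since $\tfrac{s}{2}>0$ whenever $s>0$) shows that the function $(A(\tfrac{s}{2})\Omega)_n$, restricted to the tube $\mathcal{T}_n(\tfrac{\kappa(S)}{n})$, indeed lies in $H^2(\mathcal{T}_n(\tfrac{\kappa(S)}{n}))\otimes\mathcal{K}^{\otimes n}$. Thus $\Upsilon_n(s,\kappa(S))$ genuinely maps $\mathcal{F}(W_R)$ into the stated Hardy space, and it is meaningful to speak of its operator norm with respect to $\|\cdot\|_{\mathcal{B}(\mathscr{H})}$ and $\triplenorm\cdot\triplenorm$.

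Finally, the operator-norm bound is obtained from the Hardy-norm estimate (\ref{bound33}) of Proposition \ref{Corrhardystructure}, again evaluated at $\tfrac{s}{2}$, namely $\triplenorm(A(\tfrac{s}{2})\Omega)_n\triplenorm\leq\upsilon(\tfrac{s}{2},\kappa(S))^n\,\|A\|$. Since the right-hand side depends on $A$ only through $\|A\|$, taking the supremum over the unit ball $\|A\|_{\mathcal{B}(\mathscr{H})}\leq1$ yields $\|\Upsilon_n(s,\kappa(S))\|\leq\upsilon(\tfrac{s}{2},\kappa(S))^n$, which is precisely (\ref{opb}). I do not anticipate any genuine obstacle here: all the analytic work has already been carried out in Proposition \ref{Corrhardystructure}, so the only points requiring attention are the elementary linearity check and the consistent substitution $s\mapsto\tfrac{s}{2}$ relating the Hardy-norm bound to the definition of $\Upsilon_n$.
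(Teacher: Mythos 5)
Your proposal is correct and follows exactly the paper's route: the paper's own proof is a one-line observation that (\ref{opb}) is a direct consequence of the Hardy-norm bound (\ref{bound33}) of Proposition \ref{Corrhardystructure}, applied with the translation parameter $\tfrac{s}{2}$ built into the definition (\ref{upsilon}). Your additional checks of linearity and of the codomain are routine but harmless elaborations of the same argument.
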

\begin{proof}
The bound (\ref{opb}) on the operator norm of the map $\Upsilon_n(s,\kappa(S))$ is a direct consequence of (\ref{bound33}). 
\end{proof}
Since we aim at proceeding as in the case $n=1$, we consider the maps $\Xi_n(s)$, $n> 1$, as the concatenation of $\Upsilon_n(s,\kappa(S))$ and maps $X_n(s,\kappa(S))$, as illustrated by the following commutative diagram.
    \begin{center}
        \begin{tikzpicture}
        \begin{scope}
                \draw[fill] (0.3,2.2) node{$\mathcal{F}(W_R)$};
\draw[fill] (-1,0) node{$H^2(\mathcal{T}_n(\tfrac{\kappa(S)}{n}))\otimes\mathcal{K}^{\otimes n}$};
\draw[fill] (2.8,0) node{$\mathscr{H}_n$};
\draw[fill] (2.2,1.3) node{$\Xi_n(s)$};
\draw[fill] (-1.6,1.3) node{$\Upsilon_n(s,\kappa(S))$};
\draw[fill] (1.7,-.4) node{$X_n(s,\kappa(S))$};
        \begin{scope}[->]
            \draw (0,2) -- (-1,0.5) node[anchor=south] {};
            \draw (.8,0) -- (2.4,0) node[anchor=east] {};
            \draw (0.7,2) -- (2.4,0.2) node[anchor=east] {};
        \end{scope}
        \end{scope}
        \end{tikzpicture} 
        \end{center}
The map $X_n(s,\kappa(S)):H^2(\mathcal{T}_n(\tfrac{\kappa(S)}{n}))\otimes\mathcal{K}^{\otimes n}\rightarrow\mathscr{H}_n$ is defined by
\begin{equation}\label{xn}
\left(X_n(s,\kappa(S))h\right)^{\boldsymbol{\alpha}}(\boldsymbol{\theta}):=\prod_{k=1}^{n}e^{-\tfrac{s}{2}\,m_{[\alpha_k]}\cosh\theta_k}\cdot h^{\boldsymbol{\alpha}}(\boldsymbol{\theta}+i\boldsymbol{\lambda}_{\pi/2}),
\end{equation}
which, with regard to (\ref{concrete}), justifies
\begin{equation}\label{concatenation}
\Xi_n(s)A=\left(X_n(s,\kappa(S))\,\circ\,\Upsilon_n(s,\kappa(S))\right)A,\qquad A\in\mathcal{F}(W_R).
\end{equation}
It, thus, remains to investigate the properties of the maps $X_n(s,\kappa(S))$, $n>1$. Since $\Upsilon_n(s,\kappa(S))$ is a bounded map, nuclearity of $\Xi_n(s)$ would be proven if $X_n(s,\kappa(S))$ was nuclear, cf. Lemma \ref{propertiesNuclearMaps}. Indeed, nuclearity holds for $X_n(s,\kappa(S))$ as stated in the subsequent lemma.
\begin{lemma}\label{Xn}
For $s>0$ and $0<\kappa(S)<\tfrac{\pi}{2}$ the map $X_n(s,\kappa(S))$, as defined in (\ref{xn}), is a nuclear mapping between the Banach spaces $(H^2(\mathcal{T}_n(\tfrac{\kappa(S)}{n}))\otimes\mathcal{K}^{\otimes n},\triplenorm\cdot\triplenorm)$ and $(\mathscr{H}_n,\|\cdot\|)$. Its nuclear norm is bounded by
\begin{equation}\label{boundx}
\|X_n(s,\kappa(S))\|_1\leq  x(s,\kappa(S))^n\cdot n^n,
\end{equation}
where $x(s,\kappa(S))>0$ is monotonously decreasing function of $s$ for fixed $\kappa(S)$, with limits $x(s,\kappa(S))\rightarrow 0$ for $s\rightarrow\infty$ and $x(s,\kappa(S))\rightarrow \infty$ for $s\rightarrow 0$. 
\end{lemma}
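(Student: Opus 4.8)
The strategy is to mimic the one-particle argument leading to Lemma \ref{oneparticle}, exploiting the fact that the base of the tube $\mathcal{T}_n(\tfrac{\kappa(S)}{n})$ is the \emph{cube} $\boldsymbol{\lambda}_{\pi/2}+\mathcal{C}_n(\tfrac{\kappa(S)}{n})$, cf. (\ref{l})--(\ref{tube}). Since this cube is a product of $n$ identical intervals, the tube factorizes into a product of $n$ strips, the $k$-th being $S(-\tfrac{\pi}{2}-\tfrac{\kappa(S)}{2n},-\tfrac{\pi}{2}+\tfrac{\kappa(S)}{2n})$ in the variable $\zeta_k$. Correspondingly, $H^2(\mathcal{T}_n(\tfrac{\kappa(S)}{n}))\otimes\mathcal{K}^{\otimes n}$ is the $n$-fold tensor product of the one-variable Hardy spaces $H^2(S(-\tfrac{\pi}{2}-\tfrac{\kappa(S)}{2n},-\tfrac{\pi}{2}+\tfrac{\kappa(S)}{2n}))\otimes\mathcal{K}$, and the evaluation point $\boldsymbol{\theta}+i\boldsymbol{\lambda}_{\pi/2}$ is precisely the centre of each strip. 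As the damping factor $\prod_{k=1}^{n}e^{-\tfrac{s}{2}m_{[\alpha_k]}\cosh\theta_k}$ also factorizes over $k$, the map $X_n(s,\kappa(S))$ is the $n$-fold tensor power of a single one-variable map $X_1^{\kappa}(s)$ on $L^2(\mathbb{R})\otimes\mathcal{K}$ which evaluates a Hardy function at the centre of its (narrow) strip and multiplies by $e^{-\tfrac{s}{2}m_{[\alpha]}\cosh\theta}$.

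First I would make this factorization precise. Using the boundary properties of Hardy functions collected in Appendix \ref{hardyAppendix} (uniform decay $h_{\lambda}(\theta)\to 0$ as $|\theta|\to\infty$ and norm convergence towards the edges of the strip), I apply Cauchy's integral formula in each variable $\zeta_k$ separately and deform the contour onto the two boundary lines of the $k$-th strip. Exactly as in the passage preceding Lemma \ref{oneparticle}, this represents $X_1^{\kappa}(s)$ as
\begin{equation*}
X_1^{\kappa}(s)=\tfrac{1}{2}\left(T^{\kappa}_{s,-}\,r_{-}-T^{\kappa}_{s,+}\,r_{+}\right),
\end{equation*}
where $r_{\pm}$ are the bounded restriction maps to the edges of the strip (of norm $\le 1$) and $T^{\kappa}_{s,\pm}$ are the integral operators on $L^2(\mathbb{R})\otimes\mathcal{K}$ with kernels
\begin{equation*}
T^{[\alpha],\kappa}_{s,\pm}(\theta,\theta'):=\frac{1}{2\pi i}\,\frac{e^{-\tfrac{s}{2}m_{[\alpha]}\cosh\theta}}{\theta'-\theta\pm i\tfrac{\kappa(S)}{2n}}.
\end{equation*}
Post-composing $X_n(s,\kappa(S))$ with the orthogonal projection $P_n$ onto $\mathscr{H}_n$ (of norm one) does not increase its nuclear norm, so it suffices to estimate $X_1^{\kappa}(s)^{\otimes n}$ as a map into $\mathscr{H}_1^{\otimes n}$.

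The quantitative input is the trace-class estimate of \cite[Appendix B.2]{DocL}: each $T^{\kappa}_{s,\pm}$ is trace class, with $\|T^{\kappa}_{s,\pm}\|_1\le D\,\|T^{[\alpha_\circ],\kappa}_{s,\pm}\|_1$, the relevant constant now depending on the pole distance $\tfrac{\kappa(S)}{2n}$ to the real axis. The decisive point is that this trace norm scales like the inverse of the half-width, i.e. like $n/\kappa(S)$, so that $\|X_1^{\kappa}(s)\|_1\le n\cdot x(s,\kappa(S))$ for a constant $x$ independent of $n$ which absorbs the factor $2$ from the two boundary terms, $D^{1/2}$, and the $s$-dependent constant $c(\tfrac{s}{2},m_\circ)$ already used in the case $n=1$. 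Invoking the multiplicativity of the nuclear norm under tensor products (cf. Appendix \ref{AppendixB} and Lemma \ref{propertiesNuclearMaps}) then yields
\begin{equation*}
\|X_n(s,\kappa(S))\|_1\le\big(\|X_1^{\kappa}(s)\|_1\big)^{n}\le x(s,\kappa(S))^n\cdot n^n,
\end{equation*}
and the monotonicity and limiting behaviour of $x$ in $s$ follow directly from the explicit form of $c(\tfrac{s}{2},m_\circ)$.

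The main obstacle is the precise tracking of the $n$-dependence. As $n$ grows, the strip width $\kappa(S)/n$ shrinks, pushing the poles $\theta'-\theta\pm i\tfrac{\kappa(S)}{2n}$ of the Cauchy kernels towards the real axis; the trace norms of $T^{\kappa}_{s,\pm}$ diverge accordingly, and one must show that this divergence is of order exactly $n$ (and no worse), in order to recover the bound $x(s,\kappa(S))^n\cdot n^n$ rather than a super-exponential growth that would later spoil the summability $\sum_n\|\Xi_n(s)\|_1<\infty$. Establishing the sharp $O(n)$ scaling of $\|T^{\kappa}_{s,\pm}\|_1$, together with checking that the tensor-product bookkeeping is harmlessly absorbed into $x(s,\kappa(S))^n$, is therefore the heart of the argument; the remaining steps are routine adaptations of the scalar case \cite{DocL}.
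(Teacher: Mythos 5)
Your overall architecture — Cauchy formula in each variable, deformation to the two edges of each narrow strip, boundary-evaluation maps of norm at most one, and trace-class Cauchy-kernel operators — matches the paper's proof. But there is a genuine gap at exactly the point you yourself flag as ``the heart of the argument'': the claimed $O(n)$ scaling of the trace norms. Your kernels
\begin{equation*}
T^{[\alpha],\kappa}_{s,\pm}(\theta,\theta')=\frac{1}{2\pi i}\,\frac{e^{-\tfrac{s}{2}m_{[\alpha]}\cosh\theta}}{\theta'-\theta\pm i\tfrac{\kappa(S)}{2n}}
\end{equation*}
carry the entire damping in the single variable $\theta$, exactly as in the $n=1$ case. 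For such one-sided kernels the available estimate — Lemma \ref{traceclass}, i.e.\ the bound of \cite[Appendix B.2]{DocL} that you cite — scales like $|b|^{-5/2}$ in the pole distance $b$, not like $|b|^{-1}$: the factor $\bigl[(b^4+4b^2+24)/|b|^5\bigr]^{1/2}$ behaves as $b^{-5/2}$ for small $b$. With $b=\tfrac{\kappa(S)}{2n}$ this gives $\|T^{\kappa}_{s,\pm}\|_1\lesssim n^{5/2}$, hence $\|X_n\|_1\lesssim n^{5n/2}x^n$, and then $\|\Xi_n(s)\|_1\lesssim n^{5n/2}\,c^n/n!\sim n^{3n/2}(ce)^n$ is not summable — the whole nuclearity proof of Theorem \ref{mainTheorem} would collapse. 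You assert that the divergence ``is of order exactly $n$'' but supply no mechanism to achieve it, and your kernels do not deliver it.

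The missing idea, which is the actual content of the paper's proof, is to \emph{split the damping symmetrically between both variables} before deforming the contour. One inserts the entire multiplier $u_{n,s/4}^{[\boldsymbol{\alpha}]}(\boldsymbol{\zeta})=\prod_k e^{-i\frac{s}{4}m_{[\alpha_k]}\sinh\zeta_k}$ (which at $\mathrm{Im}\,\zeta_k=-\tfrac{\pi}{2}$ reproduces part of the factor $e^{-\frac{s}{2}m\cosh\theta_k}$, and on the shifted integration lines still damps like $e^{-\frac{s}{4}m\cos\frac{\kappa(S)}{2n}\cosh\theta'_k}$). This converts the kernels into
\begin{equation*}
R^{[\alpha]}_{g,b}(\theta,\theta')=\frac{-\mathrm{sign}(b)}{2\pi i}\,\frac{\overline{g^{[\alpha]}(\theta)}\,g^{[\alpha]}(\theta')}{\theta'-\theta+ib},\qquad g^{[\alpha]}(\theta)=e^{-\frac{s}{4}m_{[\alpha]}\cos\frac{\kappa(S)}{2n}\cosh\theta},
\end{equation*}
which are \emph{positive} operators, and for positive operators of this form the trace norm is computed exactly (Lemma \ref{traceclassneu}): $\|R_{g,b}\|_1\le D\|g\|_2^2/(2\pi|b|)$, i.e.\ genuinely $O(n)$. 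Summing over the $2^n$ corner terms then yields $\|X_n\|_1\le 2^n\|R_{g,\kappa(S)/2n}\|_1^n\le n^n x(s,\kappa(S))^n$. A secondary remark: your appeal to ``multiplicativity of the nuclear norm under tensor products'' between the Hardy spaces is not among the properties in Lemma \ref{propertiesNuclearMaps} and is delicate for Banach (non-Hilbert) spaces with the sup-type norm $\triplenorm\cdot\triplenorm$; the paper sidesteps this by expanding into the $2^n$ corner terms, each of which is a \emph{Hilbert-space} tensor product of trace-class operators on $L^2(\mathbb{R})\otimes\mathcal{K}$ (where $\|A_1\otimes\cdots\otimes A_n\|_1=\prod_k\|A_k\|_1$ does hold) composed with a boundary-evaluation map $E_{\boldsymbol{\lambda}_{\pi/2}-\frac{\kappa(S)}{2n}\boldsymbol{\varepsilon}}$ of norm at most one.
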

\begin{proof}
Consider a closed polydisc $\overline{D_n(\boldsymbol{\theta}+i\boldsymbol{\lambda}_{\pi/2})}\subset\mathcal{T}_n(\tfrac{\kappa(S)}{n})$ with center $\boldsymbol{\theta}+i\boldsymbol{\lambda}_{\pi/2}$, $\boldsymbol{\theta}\in\mathbb{R}^n$. For $h\in H^2(\mathcal{T}_n(\tfrac{\kappa(S)}{n}))\otimes\mathcal{K}^{\otimes n}$ we then have by Cauchy's integral formula
\begin{equation*}
h^{\boldsymbol{\alpha}}(\boldsymbol{\theta}+i\boldsymbol{\lambda}_{\pi/2})=\frac{1}{(2\pi i)^n}\oint_{T_n(\boldsymbol{\theta}+i\boldsymbol{\lambda}_{\pi/2})}d^n\boldsymbol{\zeta}'\frac{h^{\boldsymbol{\alpha}}(\boldsymbol{\zeta}')}{\prod_{k=1}^{n}\left(\zeta'_k-\theta_k+i\tfrac{\pi}{2}\right)},
\end{equation*}
where $T_n(\boldsymbol{\theta}+i\boldsymbol{\lambda}_{\pi/2})$ denotes the distinguished boundary of $D_n(\boldsymbol{\theta}+i\boldsymbol{\lambda}_{\pi/2})$. Since $\mathcal{C}_n(\tfrac{\kappa(S)}{n})$ is a polyhedron, the last two properties listed in Proposition \ref{hardyeigenschaften} can be applied and, thus, the contour of integration may be deformed to the boundary of $\mathcal{T}_n(\tfrac{\kappa(S)}{n})$. Hence, we have
\begin{equation*}
h^{\boldsymbol{\alpha}}(\boldsymbol{\theta}+i\boldsymbol{\lambda}_{\pi/2})=\frac{1}{(2\pi i)^n}\sum_{\boldsymbol{\varepsilon}}\int_{\mathbb{R}^n}d^n\boldsymbol{\theta}'\left(\prod_{k=1}^{n}\frac{\varepsilon_k}{\theta_k'-\theta_k-i\tfrac{\kappa(S)}{2n}\varepsilon_k}\right)h^{\boldsymbol{\alpha}}(\boldsymbol{\theta}'+i(\boldsymbol{\lambda}_{\pi/2}-\tfrac{\kappa(S)}{2n}\boldsymbol{\varepsilon})),
\end{equation*}
where $\boldsymbol{\varepsilon}=(\varepsilon_1,\dots,\varepsilon_n)$, with $\varepsilon_k=\pm 1$, $k=1,\dots,n$.\par
Writing
\begin{equation*}
\left(X_n(s,\kappa(S))h\right)^{\boldsymbol{\alpha}}(\boldsymbol{\theta})=\widehat{u}_{n,s/4}^{[\boldsymbol{\alpha}]}(\boldsymbol{\theta})\cdot\left( u_{n,s/4}^{[\boldsymbol{\alpha}]}\cdot 
h\right)^{\boldsymbol{\alpha}}(\boldsymbol{\theta}+i\boldsymbol{\lambda}_{\pi/2}),
\end{equation*}
where $u_{n,s}^{[\boldsymbol{\alpha}]}(\boldsymbol{\zeta}):=\prod_{k=1}^{n}e^{-ism_{[\alpha_k]}\sinh \zeta_k}$ and $\widehat{u}_{n,s}^{[\boldsymbol{\alpha}]}(\boldsymbol{\theta}):=u_{n,s}^{[\boldsymbol{\alpha}]}(\boldsymbol{\theta}+i\boldsymbol{\lambda}_{\pi/2})$, we have
\begin{eqnarray*}
&&(X_n(s,\kappa(S))h)^{\boldsymbol{\alpha}}(\boldsymbol{\theta})\\
&&\,\,=\tfrac{\widehat{u}_{n,s/4}^{[\boldsymbol{\alpha}]}(\boldsymbol{\theta})}{(2\pi i)^n}\sum_{\boldsymbol{\varepsilon}}\int\limits_{\mathbb{R}^n}d^n\boldsymbol{\theta}'\left(\prod_{k=1}^{n}\tfrac{\varepsilon_k}{\theta_k'-\theta_k-i\frac{\kappa(S)}{2n}\varepsilon_k}\right)\left( u_{n,s/4}^{[\boldsymbol{\alpha}]}\cdot 
h\right)^{\boldsymbol{\alpha}}(\boldsymbol{\theta}'+i(\boldsymbol{\lambda}_{\pi/2}-\tfrac{\kappa(S)}{2n}\boldsymbol{\varepsilon}))\\
&&\,\,=\widehat{u}_{n,s/4[1-\cos(\kappa(S)/(2n))]}^{[\boldsymbol{\alpha}]}(\boldsymbol{\theta})\\
&&\qquad\times\sum_{\boldsymbol{\varepsilon}}\int\limits_{\mathbb{R}^n}d^n\boldsymbol{\theta}'\left(\prod_{k=1}^{n}\tfrac{\varepsilon_k \exp[-\frac{1}{4}sm_{[\alpha_k]}\cos\frac{\kappa(S)}{2n}(\cosh\theta_k+\cosh\theta_k')]}{2\pi i(\theta_k'-\theta_k-i\frac{\kappa(S)}{2n}\varepsilon_k)}\right)u_{n,-s/4\sin(\kappa(S)/(2n)),\boldsymbol{\varepsilon}}^{[\boldsymbol{\alpha}]}(\boldsymbol{\theta}')\\
&&\qquad\qquad\times\, h^{\boldsymbol{\alpha}}(\boldsymbol{\theta}'+i(\boldsymbol{\lambda}_{\pi/2}-\tfrac{\kappa(S)}{2n}\boldsymbol{\varepsilon})),
\end{eqnarray*}
with $u_{n,s,\boldsymbol{\varepsilon}}^{[\boldsymbol{\alpha}]}(\boldsymbol{\theta}):=u_{n,s}^{[\boldsymbol{\alpha}]}(\varepsilon_1\theta_1,\dots,\varepsilon_n\theta_n)$. Putting $g^{[\alpha]}(\theta):=\exp[-\frac{1}{4}sm_{[\alpha]}\cos\frac{\kappa(S)}{2n}\cosh\theta]\in L^2(\mathbb{R})$ and considering an integral operator $R_{g,b}=\bigoplus_{\alpha=1}^D R_{g,b}^{[\alpha]}$ on $L^2(\mathbb{R})\otimes\mathcal{K}$, defined by the kernels
\begin{equation*}
R^{[\alpha]}_{g,b}(\theta,\theta')=\frac{-\text{sign}\,(b)}{2\pi i}\frac{\overline{g^{[\alpha]}(\theta)}g^{[\alpha]}(\theta')}{\theta'-\theta+ib},\qquad b\in\mathbb{R}\backslash\{0\},
\end{equation*}
we arrive at
\begin{eqnarray}\label{x}
X_n(s,\kappa(S))&=& \widehat{u}_{n,s/4[1-\cos(\kappa(S)/(2n))]}\sum_{\boldsymbol{\varepsilon}} \left(R_{g,-\varepsilon_1\frac{\kappa(S)}{2n}}\otimes\cdots\otimes R_{g,-\varepsilon_n\frac{\kappa(S)}{2n}}\right)\nonumber\\
&&\qquad\times\,u_{n,-s/4\sin(\kappa(S)/(2n)),\boldsymbol{\varepsilon}}\, E_{\boldsymbol{\lambda}_{\pi/2}-\frac{\kappa(S)}{2n}\boldsymbol{\varepsilon}},
\end{eqnarray}
with$(\widehat{u}_{n,s}h)^{\boldsymbol{\alpha}}(\boldsymbol{\zeta}):=\widehat{u}_{n,s}^{[\boldsymbol{\alpha}]}(\boldsymbol{\zeta})h^{\boldsymbol{\alpha}}(\boldsymbol{\zeta})$ and analogously  $(u_{n,s,\boldsymbol{\varepsilon}}h)^{\boldsymbol{\alpha}}(\boldsymbol{\zeta}):=u_{n,s,\boldsymbol{\varepsilon}}^{[\boldsymbol{\alpha}]}(\boldsymbol{\zeta})h^{\boldsymbol{\alpha}}(\boldsymbol{\zeta})$. The operator norms of the maps $E_{\boldsymbol{\lambda}_{\pi/2}-\frac{\kappa(S)}{2n}\boldsymbol{\varepsilon}}:h\mapsto h_{\boldsymbol{\lambda}_{\pi/2}-\frac{\kappa(S)}{2n}\boldsymbol{\varepsilon}}$ from $H^2(\mathcal{T}_n(\tfrac{\kappa(S)}{n}))\otimes\mathcal{K}^{\otimes n}$ to $L^2(\mathbb{R}^n)\otimes\mathcal{K}^{\otimes n}$ are bounded for any $\boldsymbol{\varepsilon}$ by one, as follows directly from $\|h_{\boldsymbol{\lambda}_{\pi/2}-\frac{\kappa(S)}{2n}\boldsymbol{\varepsilon}}\|\leq\triplenorm h\triplenorm$. It is, moreover, obvious that the operator norms of the multiplication operators $u_{n,s}$ and $\widehat{u}_{n,s,\boldsymbol{\varepsilon}}$ appearing in (\ref{x}) are also bounded by one. Since $R_{g,\pm \frac{\kappa(S)}{2n}}$ are trace class operators on $L^2(\mathbb{R})\otimes\mathcal{K}$, cf. Lemma \ref{traceclassneu}, we have that $R_{g,-\varepsilon_1\frac{\kappa(S)}{2n}}\otimes\cdots\otimes R_{g,-\varepsilon_n\frac{\kappa(S)}{2n}}$ are of trace class on $L^2(\mathbb{R}^n)\otimes\mathcal{K}^{\otimes n}$ for any $\varepsilon_1,\dots,\varepsilon_n=\pm 1$. Thus, the nuclearity of $X_n(s,\kappa(S))$ follows.\par
In order to prove the statement on the nuclear norm, note that $R_{g,b}=\widehat{U}R_{g_-,-b}\widehat{U}$, with unitary $(\widehat{U}f)(\theta):=i\cdot f(-\theta)$, $f\in L^2(\mathbb{R})\otimes\mathcal{K}$, and $g^{[\alpha]}_-(\theta):=g^{[\alpha]}(-\theta)$. Hence, we have
\begin{equation*}
\|R_{g,-\varepsilon_1\frac{\kappa(S)}{2n}}\otimes\cdots\otimes R_{g,-\varepsilon_n\frac{\kappa(S)}{2n}}\|_1=\|R_{g,\frac{\kappa(S)}{2n}}\|_1^n.
\end{equation*}
Taking into account that the sum $\sum_{\boldsymbol{\varepsilon}}$ runs over $2^n$ terms, it follows with regard to (\ref{tracenormneu}) that
\begin{eqnarray*}
\|X_n(s,\kappa(S))\|_1&\leq& 2^n\|R_{g,\frac{\kappa(S)}{2n}}\|_1^n\leq 2^nD^n\left(\frac{2n}{2\pi\kappa(S)}\int d\theta e^{-\frac{sm_\circ}{2}\cos \frac{\kappa(S)}{2n}\cosh\theta}\right)^n\\
&\leq&2^nD^n\left(\frac{2n}{2\pi\kappa(S)}\int d\theta e^{-\frac{sm_\circ}{2}\cosh\theta}\right)^n\\
&\leq&n^n\left(\frac{2D}{\pi\kappa(S)}\right)^ne^{-\frac{nsm_\circ}{2}}\left(\frac{4\pi}{sm_\circ}\right)^{n/2},
\end{eqnarray*}
where we made use of $\cosh\theta\geq 1+\theta^2/2$. Hence, the bound on $\|X_n(s,\kappa(S))\|_1$ is of the form (\ref{boundx}).
\end{proof}
The results established by the preceding Lemmata, therefore, yield that $\Xi_n(s):\mathcal{F}(W_R)\rightarrow\mathscr{H}_n$ given by (\ref{concatenation}) is nuclear. The nuclearity of $\Xi(s)=\sum_{n=0}^{\infty}\Xi_n(s)$, (\ref{xi}), on the other hand, does not yet follow since $\|\Xi_n(s)\|_1\leq n^{n}x(s,\kappa(S))^n\upsilon(\tfrac{s}{2},\kappa(S))^n$ is not summable over $n$. However, in the special case of $\dim\mathcal{K}=1$, discussed in \cite{L08, erratum}, the nuclearity of $\Xi(s)$ can be established for a certain class of S-matrices and for values of $s$ above a certain minimal threshold $s_{\text{min}}$. This finding relies on a mapping $I:\mathscr{H}\rightarrow\mathscr{H}^-$ from the S-symmetric Fock space to the totally antisymmetric Fermi Fock space. In case the S-matrices are scalar valued such a map can be found quite easily. In fact, one has
\begin{lemma}[\cite{L08}]
Let $\dim\mathcal{K}=1$ and $S\in\mathcal{S}_0^-:=\{S\in\mathcal{S}_0:S(0)=-1\}$. Then, there exists an analytic function $\delta:S(-\kappa(S),\kappa(S))\rightarrow\mathbb{C}$ (the phase shift) such that $S(\zeta)=S(0)e^{2i\delta(\zeta)}$, $\zeta\in S(-\kappa(S),\kappa(S))$, fixed uniquely by $\delta(0)=0$. Then, the multiplication operators corresponding to the functions
\begin{equation}
I_0:=1,\qquad I_1(\zeta):=1,\qquad I_n(\boldsymbol{\zeta}):=\prod_{1\leq k<l\leq n}\left(-e^{i\delta(\zeta_k-\zeta_l)}\right),\quad n\geq 2,
\end{equation}
denoted by the same symbols $I_n$, fulfill the following.
\begin{itemize}
\item[a)] Viewed as an operator on $H^2(\mathcal{T}_n(\tfrac{\kappa(S)}{n}))$, $I_n$ is a bounded map with operator norm $\|I_n\|\leq\|S\|^{n/2}_{\kappa(S)}$\footnote{This is, indeed, the correct bound with regard to the tube $\mathcal{T}_n(\tfrac{\kappa(S)}{n})$ based on the ``shrinking'' cube $\boldsymbol{\lambda}_{\pi/2}+\mathcal{C}_n(\tfrac{\kappa(S)}{n})$, as can be shown in a straightforward manner by means of the Malgrange Zerner theorem \cite{eps66}. It does, however, not apply when considering the tube based on $\boldsymbol{\lambda}_{\pi/2}+\mathcal{C}_n(\kappa(S))$, as claimed in \cite{L08}.}.
\item[b)] Viewed as an operator on $L^2(\mathbb{R}^n)$, $I_n$ is a unitary intertwining the representations $D_n$ and $D_n^-$ (corresponding to $S(\theta)=-1$, for all $\theta\in\mathbb{R}$) of $\mathfrak{S}_n$. Hence, it maps the S-symmetric subspace $\mathscr{H}_n\subset L^2(\mathbb{R}^n)$ onto the totally antisymmetric subspace $\mathscr{H}_n^-\subset L^2(\mathbb{R}^n)$.
\end{itemize}
\end{lemma}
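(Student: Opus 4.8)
The plan is to produce the phase shift first and then treat the two assertions in turn. Since $\dim\mathcal{K}=1$ the S-matrix is scalar, analytic on $S(-\kappa(S),\kappa(S))$ and nonvanishing there: $\det S=S$ together with the definition \eqref{kappa} of $\kappa(S)$ forbids zeros in $S(0,\kappa(S))$, the reflected strip $S(-\kappa(S),0)$ inherits this through hermitian analyticity (Definition \ref{S-matrixDefinition}, item 2), and $|S|=1$ on $\mathbb{R}$ by unitarity (item 1). On this simply connected, zero-free strip I would fix an analytic branch of $\tfrac{1}{2i}\log(-S)$ and define $\delta$ by $S(\zeta)=-e^{2i\delta(\zeta)}$, normalized through $\delta(0)=0$ using $S(0)=-1$. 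Unitarity gives $|S|=1$ on $\mathbb{R}$, hence $\operatorname{Im}\delta=0$ there, so $\delta$ is real on the real axis; combining $S(-\theta)=S(\theta)^{-1}$ with $S(\theta)^{-1}=\overline{S(\theta)}$ on $\mathbb{R}$ yields $\delta(-\theta)=-\delta(\theta)$, and by the identity theorem $\delta$ is odd on the whole strip.

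For part b) I would first note that, $\delta$ being real on $\mathbb{R}$, each factor $-e^{i\delta(\theta_k-\theta_l)}$ is unimodular, so $|I_n(\boldsymbol{\theta})|=1$ for $\boldsymbol{\theta}\in\mathbb{R}^n$ and multiplication by $I_n$ is unitary on $L^2(\mathbb{R}^n)$, with inverse multiplication by $\overline{I_n}$. The crux is the single interchange identity
\[
\frac{I_n(\theta_1,\dots,\theta_{k+1},\theta_k,\dots,\theta_n)}{I_n(\theta_1,\dots,\theta_k,\theta_{k+1},\dots,\theta_n)}=-S(\theta_{k+1}-\theta_k),
\]
which holds because swapping $\theta_k$ and $\theta_{k+1}$ leaves every factor of $I_n$ unchanged except the pair $(k,k+1)$ — for any third index $m$ the two factors attached to $m$ merely trade their arguments — so the quotient collapses to $e^{i(\delta(\theta_{k+1}-\theta_k)-\delta(\theta_k-\theta_{k+1}))}=e^{2i\delta(\theta_{k+1}-\theta_k)}=-S(\theta_{k+1}-\theta_k)$, using oddness of $\delta$ and $S=-e^{2i\delta}$. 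Comparing with \eqref{reprpermugroup}, this is exactly $I_n D_n(\tau_k)=D_n^-(\tau_k)I_n$ on the generating transpositions, where $D_n^-$ corresponds to $S\equiv-1$. As both $D_n$ and $D_n^-$ are representations of $\mathfrak{S}_n$, the intertwining propagates to arbitrary products of transpositions and hence to all of $\mathfrak{S}_n$; being unitary, $I_n$ then carries the $D_n$-fixed subspace $\mathscr{H}_n$ onto the $D_n^-$-fixed, i.e. totally antisymmetric, subspace $\mathscr{H}_n^-$.

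For part a) I would use that the multiplier norm of $I_n$ on $H^2(\mathcal{T}_n(\tfrac{\kappa(S)}{n}))\otimes\mathcal{K}^{\otimes n}$ is controlled by the supremum of $|I_n|$ over the tube: for each imaginary part one has $\|(I_n h)_{\boldsymbol{\lambda}}\|_2\le\|I_n(\,\cdot+i\boldsymbol{\lambda})\|_\infty\,\|h_{\boldsymbol{\lambda}}\|_2$, whence $\triplenorm I_n h\triplenorm\le(\sup_{\boldsymbol{\lambda}}\|I_n(\,\cdot+i\boldsymbol{\lambda})\|_\infty)\triplenorm h\triplenorm$. Writing the imaginary part as $\boldsymbol{\lambda}_{\pi/2}+\boldsymbol{\mu}$ with $\boldsymbol{\mu}\in\mathcal{C}_n(\tfrac{\kappa(S)}{n})$ (cf. \eqref{l},\eqref{tube}), the $-\tfrac{\pi}{2}$ shift cancels in every difference, so $|I_n(\boldsymbol{\theta}+i(\boldsymbol{\lambda}_{\pi/2}+\boldsymbol{\mu}))|=\prod_{k<l}|S(\zeta_k-\zeta_l)|^{1/2}$ with $\operatorname{Im}(\zeta_k-\zeta_l)=\mu_k-\mu_l$. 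On the shrinking cube $|\mu_k-\mu_l|\le\tfrac{\kappa(S)}{n}\le\kappa(S)$, so each difference stays inside the strip of analyticity. A Hadamard three-lines (Malgrange--Zerner convexity) argument for $S$ on $S(-\kappa(S),\kappa(S))$, using $\sup_{\mathbb{R}}|S|=1$ and $\sup|S|\le\|S\|_{\kappa(S)}$ at imaginary part $\pm\kappa(S)$ (cf. \eqref{Skappabounded}), gives the pointwise bound $|S(\theta+i\mu)|\le\|S\|_{\kappa(S)}^{|\mu|/\kappa(S)}$. Taking the supremum over $\boldsymbol{\theta}$ (sup of a product $\le$ product of sups) yields $\|I_n\|\le\|S\|_{\kappa(S)}^{\frac{1}{2\kappa(S)}\sum_{k<l}|\mu_k-\mu_l|}$, and since $\sum_{k<l}|\mu_k-\mu_l|\le\binom{n}{2}\tfrac{\kappa(S)}{n}\le\tfrac{n\kappa(S)}{2}$ while $\|S\|_{\kappa(S)}\ge1$, the exponent is at most $n/2$, establishing $\|I_n\|\le\|S\|_{\kappa(S)}^{n/2}$.

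The main obstacle is precisely this norm estimate. The naive bound $|S(\zeta_k-\zeta_l)|\le\|S\|_{\kappa(S)}$ per factor costs an extra factor of order $n$ in the exponent and is useless; the correct bound rests on replacing it by the convexity estimate $|S(\theta+i\mu)|\le\|S\|_{\kappa(S)}^{|\mu|/\kappa(S)}$ and on the cube having been shrunk to width $\kappa(S)/n$, so that the accumulated imaginary shifts remain of order one. This is also the point singled out in the footnote to \cite{L08}: on the larger cube $\boldsymbol{\lambda}_{\pi/2}+\mathcal{C}_n(\kappa(S))$ the differences $\mu_k-\mu_l$ may reach $2\kappa(S)$ and leave the strip on which $\delta$, and hence $I_n$, is even defined, so both the analyticity and the asserted bound break down there, whereas on $\mathcal{T}_n(\tfrac{\kappa(S)}{n})$ everything goes through.
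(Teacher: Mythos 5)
Your proof is correct and takes essentially the route the paper intends: the thesis itself does not prove this scalar lemma (it is taken from \cite{L08}), but your logarithmic construction of $\delta$ on the zero-free strip, the reality/oddness argument fixing the branch, the single-transposition interchange identity $I_n(\boldsymbol{\theta}^{\tau_k})/I_n(\boldsymbol{\theta})=-S(\theta_{k+1}-\theta_k)$ for the intertwining, and the three-lines convexity estimate on the shrinking tube are exactly the scalar versions of the arguments used for the matrix analogue in Lemma \ref{lem} and of the Malgrange--Zerner bound announced in the footnote; your computation in fact yields the slightly sharper exponent $(n-1)/4\leq n/2$. One correction to your closing remark, though: with the paper's convention $\mathcal{C}_n(a)=(-\tfrac{a}{2},\tfrac{a}{2})^{\times n}$, the differences $\mu_k-\mu_l$ on the larger cube $\mathcal{C}_n(\kappa(S))$ stay below $\kappa(S)$ (not $2\kappa(S)$), so $I_n$ remains defined and analytic on that tube; what fails there is only the asserted norm bound, since the per-factor convexity estimate degrades to $\|S\|_{\kappa(S)}^{1/2}$ and the accumulated exponent grows like $n^2$ rather than $n/2$ --- consistent with the paper's later remark that on a cube of fixed side length one presently only gets operator norms of order $\alpha^{n^2}$.
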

In the more general situation considered here the derivation of such intertwiners is a more complicated task, in particular, due to the general noncommutativity of matrices. Nevertheless, we conjecture that a similar mapping can be obtained in case $\dim\mathcal{K}>1$. If, namely, $0\leq n\leq 2$ a direct generalization of the above lemma can be stated for S-matrices of a certain type. More precisely, let $F$ denote the \textit{flip} operator, i.e.
\begin{equation}
F:\mathcal{K}\otimes\mathcal{K}\rightarrow\mathcal{K}\otimes\mathcal{K},\qquad F(u\otimes v):=v\otimes u,
\end{equation}
and consider $S\in\mathcal{S}_0^-:=\{S\in\mathcal{S}_0:S(0)=-F\}$ which are of the form
\begin{equation}\label{form}
S(\zeta)=\sum_{j=1}^{k}h_j(\zeta)M_j,\qquad k\in\mathbb{N},
\end{equation}
with analytic functions $h_j:S(-\kappa(S),\kappa(S))\rightarrow\mathbb{C}$ and constant, pairwise commuting matrices $M_j$ on $\mathcal{K}\otimes\mathcal{K}$, appropriately chosen such that $S\in\mathcal{S}_0^-$. In particular, we have $[S(\zeta_1),S(\zeta_2)]=0$ for $S(\zeta_i)$ of the form (\ref{form}). Then, the following can be shown.
\begin{lemma}\label{lem}
Let $\dim\mathcal{K}\geq 1$ and $S\in\mathcal{S}_{0}^-$, being of the form (\ref{form}). Then, there exists an analytic function $\rho:S(-\kappa(S),\kappa(S))\rightarrow\mathcal{L}(\mathcal{K}\otimes\mathcal{K})$\footnote{Linear operators on $\mathcal{K}\otimes\mathcal{K}$.} (the phase shift matrix) such that $S(\zeta)=-F\cdot e^{2i\rho(\zeta)}$, \mbox{$\zeta\in S(-\kappa(S),\kappa(S))$}, which we fix uniquely by $\rho(0)=0$. Moreover, the multiplication operators corresponding to the functions
\begin{equation}\label{I}
\mathcal{I}_0:=1,\qquad\mathcal{I}_1(\zeta):=1,\qquad\mathcal{I}_2(\zeta_1,\zeta_2):=-e^{i\rho(\zeta_1-\zeta_2)},
\end{equation}
denoted by the same symbols $\mathcal{I}_n$, $n=0,1,2$, fulfill the following.
\begin{itemize}
\item[a)]  Viewed as an operator on $H^2(\mathcal{T}_n(\tfrac{\kappa(S)}{n}))\otimes\mathcal{K}^{\otimes n}$, $\mathcal{I}_n$, $n=0,1,2$, is a bounded map with operator norm $\|\mathcal{I}_n\|\leq\|S\|_{\kappa(S)}^{n/2}$.
\item[b)]  Viewed as an operator on $L^2(\mathbb{R}^n)\otimes\mathcal{K}^{\otimes n}$, $\mathcal{I}_n$, $n=0,1,2$, is a unitary intertwining the representations $D_n$ and $D_n^-$ (corresponding to $S(\theta)=-F$, for all $\theta\in\mathbb{R}$) of $\mathfrak{S}_2$. Hence, it maps the S-symmetric subspace $\mathscr{H}_n\subset L^2(\mathbb{R}^n)\otimes\mathcal{K}^{\otimes n}$ onto the totally antisymmetric subspace $\mathscr{H}_n^-\subset L^2(\mathbb{R}^n)\otimes\mathcal{K}^{\otimes n}$.
\end{itemize}
\end{lemma}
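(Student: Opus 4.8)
The plan is to construct the phase-shift matrix $\rho$ by a logarithmic-derivative integral, exploiting that the data force $\{S(\zeta)\}$ to generate a \emph{commutative} algebra, and then to read off all assertions of the lemma from the symmetry properties of $\rho$, mirroring the scalar construction of \cite{L08}. First I would set $T(\zeta):=-F\,S(\zeta)$ and record the structural fact that makes the matrix-valued case tractable: since $S(0)=-F$ and $S(0)=\sum_j h_j(0)M_j$, the flip $F$ lies in the commutative algebra $\mathcal{A}\subset\mathcal{B}(\mathcal{K}\otimes\mathcal{K})$ generated by the pairwise commuting $M_j$. Hence $F$ commutes with every $S(\zeta)$, and the whole family $\{T(\zeta)\}$ together with its derivatives lies in $\mathcal{A}$ and is mutually commuting. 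On the strip $S(-\kappa(S),\kappa(S))$, $T$ is analytic with $T(0)=\mathbb{1}$ and, by $S\in\mathcal{S}_0$ and the definition of $\kappa(S)$ (together with hermitian analyticity relating the two half-strips), invertible. I would then define
\[
\rho(\zeta):=\frac{1}{2i}\int_0^\zeta T(\zeta')^{-1}T'(\zeta')\,d\zeta',
\]
the integral being path-independent on the simply connected strip. Because $T$ commutes with $T'$ one checks $\tfrac{d}{d\zeta}\big(e^{-2i\rho(\zeta)}T(\zeta)\big)=e^{-2i\rho}\big(-2i\rho'T+T'\big)=0$, so $e^{2i\rho(\zeta)}=T(\zeta)$, i.e. $S(\zeta)=-F\,e^{2i\rho(\zeta)}$ with $\rho(0)=0$; moreover $\rho(\zeta)\in\mathcal{A}$, whence $[\rho,F]=0$.

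Next I would establish the three symmetry identities on which both parts rest. Unitarity of $S$ on the real line (Definition \ref{S-matrixDefinition}, property 1) forces $\rho(\theta)^*=\rho(\theta)$ for $\theta\in\mathbb{R}$, so $\mathcal{I}_2(\theta_1,\theta_2)=-e^{i\rho(\theta_1-\theta_2)}$ is a unitary matrix and $\mathcal{I}_2$ is unitary on $L^2(\mathbb{R}^2)\otimes\mathcal{K}^{\otimes 2}$. Schwarz reflection gives $\rho(\zeta)^*=\rho(\bar\zeta)$, and hermitian analyticity $S(-\zeta)=S(\zeta)^{-1}$ together with $[\rho,F]=0$ yields that $\rho$ is odd, $\rho(-\zeta)=-\rho(\zeta)$. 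With these, the intertwining claim (b) reduces to the pointwise identity $\mathcal{I}_2(\theta_1,\theta_2)\,S(\theta_2-\theta_1)=-F\,\mathcal{I}_2(\theta_2,\theta_1)$, which I would verify directly: writing $\theta=\theta_1-\theta_2$ and using $S(-\theta)=-F\,e^{2i\rho(-\theta)}$, oddness of $\rho$ and $[\rho,F]=0$, both sides collapse to $F\,e^{-i\rho(\theta)}$. Hence $\mathcal{I}_2 D_2(\tau_1)=D_2^-(\tau_1)\mathcal{I}_2$, so $\mathcal{I}_2 P_2=P_2^-\mathcal{I}_2$ and $\mathcal{I}_2$ maps $\mathscr{H}_2$ unitarily onto $\mathscr{H}_2^-$; the cases $n=0,1$ are trivial.

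For the operator bound (a) I would use commutativity once more. Since $\rho(\zeta),\rho(\bar\zeta)\in\mathcal{A}$ commute, $g(\zeta):=e^{i\rho(\zeta)}$ obeys $g(\zeta)^*g(\zeta)=e^{\,i(\rho(\zeta)-\rho(\bar\zeta))}$, a positive element of $\mathcal{A}$ with Hermitian exponent, while $S(\zeta)^*S(\zeta)=e^{\,2i(\rho(\zeta)-\rho(\bar\zeta))}$. Comparing the largest eigenvalue of the common exponent yields the exact relation $\|e^{i\rho(\zeta)}\|=\|S(\zeta)\|^{1/2}$. On $\mathcal{T}_2(\tfrac{\kappa(S)}{2})$ the argument $\zeta_1-\zeta_2$ has imaginary part within $(-\kappa(S),\kappa(S))$, so $\|\mathcal{I}_2(\zeta_1,\zeta_2)\|=\|S(\zeta_1-\zeta_2)\|^{1/2}\le\|S\|_{\kappa(S)}^{1/2}$; bounding the multiplication operator on the Hardy space by the supremum of its pointwise operator norm over the tube gives $\|\mathcal{I}_2\|\le\|S\|_{\kappa(S)}^{1/2}\le\|S\|_{\kappa(S)}^{n/2}$ (using $\|S\|_{\kappa(S)}\ge 1$), while $\|\mathcal{I}_0\|=\|\mathcal{I}_1\|=1$.

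The main obstacle, and the reason the statement is confined to $n\le 2$ and to S-matrices of the special form (\ref{form}), is precisely the construction of the analytic logarithm $\rho$: for a generic analytic $GL(\mathcal{K}\otimes\mathcal{K})$-valued function on a simply connected strip no holomorphic logarithm need exist, and the exponential/ODE manipulations above break down without commutativity. The observation $F=-S(0)\in\mathcal{A}$ is exactly what restores the commutative setting and lets the scalar argument run verbatim at the matrix level. I expect that any extension to $n>2$, conjectured in the text, cannot rely on this device, since it would require intertwiners assembled from genuinely noncommuting factors $S(\zeta_k-\zeta_l)$ and hence a substitute for the simultaneous logarithm used here.
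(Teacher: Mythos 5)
Your proposal is correct and follows essentially the same route as the paper's proof: an analytic matrix logarithm constructed as a primitive of the logarithmic derivative (the paper takes a primitive of $S^{-1}\cdot S'$ and fixes constants, you normalize via $T=-FS$ with $T(0)=\mathbb{1}$), with all the work done by the commutativity forced by the form (\ref{form}) --- in particular $F=-S(0)$ lying in the commutative algebra generated by the $M_j$ --- followed by the same symmetry identities $\rho(\theta)^*=\rho(\theta)$, $\rho(-\zeta)=-\rho(\zeta)$ (the paper pins down the $\pi k\,\mathbb{1}$ ambiguity by analyticity and $\rho(0)=0$, which your compressed assertion implicitly relies on) and the same pointwise intertwining computation for part b). Your argument for part a), via the exact relation $\|e^{i\rho(\zeta)}\|=\|S(\zeta)\|^{1/2}$ and the bound $\mathrm{Im}(\zeta_1-\zeta_2)\in(-\tfrac{\kappa(S)}{2},\tfrac{\kappa(S)}{2})$ on $\mathcal{T}_2(\tfrac{\kappa(S)}{2})$, supplies details the paper leaves implicit there (it only states that a) is ``a direct consequence of the analyticity of $\rho$'').
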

\begin{proof}
We first prove the statement on the analyticity of $\rho$. Due to the assumed form of the S-matrices, this can be done by means of an argument stated for scalar valued functions \cite[Corollary 6.17]{conway1973functions}. Namely, since $S\in\mathcal{S}_{0}^-$ is analytic and invertible in the strip $S(-\kappa(S),\kappa(S))$, also $S^{-1}\cdot S'$ is analytic on this strip. Hence, it has a primitive on $S(-\kappa(S),\kappa(S))$ which we denote by $\rho_1$. To prove this, one may proceed exactly as in the proof of \cite[Corollary 6.16]{conway1973functions} formulated for scalar valued functions. The main ingredient thereby is Cauchy's theorem which implies that $\rho_1(\zeta):=\int_{\gamma}(S^{-1}\cdot S')$, where $\gamma$ is a rectifiable curve from some fixed point $a\in S(-\kappa(S),\kappa(S))$ to $\zeta\in S(-\kappa(S),\kappa(S))$, is a well-defined function from $S(-\kappa(S),\kappa(S))$ to $\mathcal{L}(\mathcal{K}\otimes\mathcal{K})$. The fact that $\rho_1$ is a primitive of $S^{-1}\cdot S'$ follows then simply by determining $\rho_1'$.\par
Defining, next, $H(\zeta):=\exp \rho_1(\zeta)$, then $H$ is analytic and invertible. Therefore, $S\cdot H^{-1}$ is analytic and has the following derivative $S'\cdot H^{-1} +S\cdot (H^{-1})'$. But in view of the special choice (\ref{form}), $(H^{-1})'=-S^{-1}\cdot S'\cdot H^{-1}$, yielding that $S\cdot H^{-1}$ is a constant matrix $c$ for all $\zeta\in S(-\kappa(S),\kappa(S))$. That is, $S(\zeta)=c\cdot \exp\rho_1(\zeta)$. By letting $2i\rho(\zeta)=\rho_1(\zeta)+c'$ with some suitable $c'$, we may choose $\rho(0)=0$, yielding the first claim.\par
Item $a)$ is a direct consequence of the analyticity of $\rho$ and the definition of $\mathcal{I}_n$, $n=0,1,2$, (\ref{I}).\par
To show $b)$, we first prove the unitarity feature. Hence, we have to show that $\rho(\theta)^*=\rho(\theta)$. To this end, note that from the properties of the S-matrix $S(\theta)=-F\cdot e^{2i\rho(\theta)}$ we obtain $\rho(\theta)^*=\rho(\theta)+\pi k\, 1_2$, $k\in\mathbb{Z}$. But $\rho(0)^*=\rho(0)=0$ and both $\rho^*$ and $\rho$ are analytic. Hence, their difference is also an analytic function, implying $k=0$.\par
To see the intertwining property, we consider $\Psi_2\in L^2(\mathbb{R}^2)\otimes\mathcal{K}^{\otimes 2}$ and compute
\begin{eqnarray*}
\left(\mathcal{I}_2D_2(\tau_1)\Psi_2\right)(\theta_1,\theta_2)&=& -e^{i\rho(\theta_1-\theta_2)}\cdot S(\theta_2-\theta_1)\cdot\Psi_2(\theta_2,\theta_1)\\
&=& e^{i\rho(\theta_1-\theta_2)}\cdot F\cdot e^{2i\rho(\theta_2-\theta_1)}\cdot\Psi_2(\theta_2,\theta_1).
\end{eqnarray*}
With regard to the properties of $S$, we have $\rho(-\theta)=-\rho(\theta)$. Moreover, $[F,\rho(\theta)^k]=0$ for all $k\geq 0$ due to the special form (\ref{form}) of $S(\theta)=-Fe^{2i\rho(\theta)}$. Hence,
\begin{eqnarray*}
\left(\mathcal{I}_2D_2(\tau_1)\Psi_2\right)(\theta_1,\theta_2)&=& e^{i\rho(\theta_1-\theta_2)}\cdot F\cdot e^{2i\rho(\theta_2-\theta_1)}\cdot\Psi_2(\theta_2,\theta_1)\\
&=& (-F)\cdot (-e^{i\rho(\theta_2-\theta_1)})\cdot\Psi_2(\theta_2,\theta_1)\\
&=&\left(D_2^-(\tau_1)\mathcal{I}_2\Psi_2\right)(\theta_1,\theta_2),
\end{eqnarray*}
proving the intertwining property, and $\mathcal{I}_2:\mathscr{H}_2\rightarrow\mathscr{H}_2^-$.
\end{proof}
Although this result can at the moment be verified only for $0\leq n\leq 2$ there is evidence that Lemma \ref{lem} can be generalized to $n\in\mathbb{N}_0$. Considering, namely, the simplices $$\Sigma_\pi:=\{\boldsymbol{\theta}\in\mathbb{R}^n:\theta_{\pi(1)}\leq\dots\leq\theta_{\pi(n)}\}, \qquad\pi\in\mathfrak{S}_n,$$
then, the multiplication operators corresponding to the functions
\begin{equation}\label{nichtanalytisch}
\widehat{\mathcal{I}}_n(\boldsymbol{\theta}):=(-1)^{\text{sign}(\pi)}F_n^\pi\cdot S_n^\pi(\boldsymbol{\theta})^{-1},\qquad \boldsymbol{\theta}\in\Sigma_\pi,
\end{equation}
are easily seen to be unitaries intertwining the representations $D_n$ and $D_n^-$ of $\mathfrak{S}_n$ on $\bigcup_\pi\Sigma_\pi$ by taking into account $\boldsymbol{\theta}\in\Sigma_\pi$ if and only if $\boldsymbol{\theta}^\pi\in\Sigma_{id}$. Here the tensors $F_n^\pi$ have the components $\left(F_n^\pi\right)^{\boldsymbol{\alpha}}_{\boldsymbol{\beta}}=\delta^{\alpha_1}_{\beta_{\pi(1)}}\cdots\delta^{\alpha_n}_{\beta_{\pi(n)}}$. These intertwiners are, however, not suited for our purposes since they are in general not continuous. Nevertheless, due the possibility to construct such objects and with regard to Lemma \ref{lem} there is strong indication for the validity of the following conjecture which restricts to the minimal assumptions necessary with regard to the nuclearity of the map $\Xi(s)$, (\ref{xi}).
\begin{conjecture}\label{conj}
Let $S\in\mathcal{S}_0^-$. Then, there exist multiplication operators $\mathcal{I}_n$, $n\in\mathbb{N}_0$, which intertwine the representations $D_n$ and $D_n^-$ of $\mathfrak{S}_n$, and, moreover, have the following properties.
\begin{itemize}
\item[a)] $\mathcal{I}_n$ is a bounded operator on $H^2(\mathcal{T}_n(\tfrac{\kappa(S)}{n}))\otimes\mathcal{K}^{\otimes n}$. For the operator norm we further have $\|\mathcal{I}_n\|\leq \gamma^n$ with some constant $\gamma>0$ not depending on $n$.
\item[b)] $(\mathcal{I}_n)^{-1}:\mathscr{H}^-_n\ni f_{\boldsymbol{\lambda}_{\pi/2}}\mapsto (\mathcal{I}_n)^{-1}f_{\boldsymbol{\lambda}_{\pi/2}}\in\mathscr{H}$ is bounded with operator norm $\|(\mathcal{I}_n)^{-1}\|\leq \gamma'^n$, where the constant $\gamma'>0$ does not depend on $n$.
\end{itemize}
\end{conjecture}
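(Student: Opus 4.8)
The plan is to realise each $\mathcal{I}_n$ as a multiplication operator whose operator-valued symbol $\mathcal{I}_n(\boldsymbol{\theta})$ is assembled from the two-particle phase-shift matrix $\rho$ of Lemma \ref{lem}, thereby lifting the scalar product $\prod_{k<l}(-e^{i\delta(\zeta_k-\zeta_l)})$ to the non-abelian setting. First I would recast the requirement that $\mathcal{I}_n$ intertwine $D_n$ and $D_n^-$: since both representations are generated by the transpositions $\tau_k$, it is equivalent to the family of functional (cocycle) equations
\begin{equation*}
\mathcal{I}_n(\boldsymbol{\theta})\,S(\theta_{k+1}-\theta_k)_{n,k}=(-F)_{n,k}\,\mathcal{I}_n(\boldsymbol{\theta}^{\tau_k}),\qquad k=1,\dots,n-1,
\end{equation*}
supplemented by analyticity of $\boldsymbol{\zeta}\mapsto\mathcal{I}_n(\boldsymbol{\zeta})$ in $\mathcal{T}_n(\tfrac{\kappa(S)}{n})$ and the two norm bounds a) and b). The natural ansatz is the ordered product $\mathcal{I}_n(\boldsymbol{\theta})=\prod_{1\le k<l\le n}\bigl(-e^{i\rho(\theta_k-\theta_l)}\bigr)_{kl}$ over all pairs, acting on the tensor slots $k,l$; for $n\le 2$ this returns precisely the symbols (\ref{I}), which fixes the normalisation and reproduces Lemma \ref{lem}.

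The decisive and hardest step is the construction itself, i.e.\ showing that such an analytic, cocycle-satisfying symbol exists for every $S\in\mathcal{S}_0^-$. Two difficulties must be overcome. The existence of an analytic phase-shift matrix $\rho$ with $S=-Fe^{2i\rho}$ was obtained in Lemma \ref{lem} only under the commutativity hypothesis (\ref{form}), where $S^{-1}\cdot S'$ admits a single-valued matrix primitive and $\tfrac{d}{d\zeta}e^{\rho}=\rho'e^{\rho}$ holds; for a generic $S$ with non-commuting values this logarithm need not be globally defined, and one may be forced to abandon the explicit product and instead regularise the piecewise intertwiners $\widehat{\mathcal{I}}_n$ of (\ref{nichtanalytisch}) --- which are genuine but discontinuous intertwiners on $\bigcup_\pi\Sigma_\pi$ --- into a single function that is continuous and analytic on the tube. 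Secondly, even granting $\rho$, the ordered product over pairs solves the cocycle equation only if the building blocks $-e^{i\rho}$ obey a braid relation inherited, via $S=-Fe^{2i\rho}$, from the Yang--Baxter equation (Definition \ref{S-matrixDefinition}, item 3); verifying this coherence is immediate when the factors commute but is exactly where the non-commutativity of $\mathcal{K}\otimes\mathcal{K}$ obstructs a direct generalisation. I expect this algebraic-analytic consistency to be the main obstacle, and a successful resolution would likely proceed by deriving the matrix Yang--Baxter relation for $\rho$ directly, checking the elementary transposition relation on adjacent slots, and then propagating it to arbitrary pairs through the embeddings $(\cdot)_{kl}$.

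Granting a symbol $\mathcal{I}_n(\boldsymbol{\zeta})$ with the above structure, the norm estimates a) and b) are comparatively routine and I would establish them last. Because the multiplier norm of an analytic symbol on the Hardy space $H^2(\mathcal{T}_n(\tfrac{\kappa(S)}{n}))\otimes\mathcal{K}^{\otimes n}$ is controlled by the supremum of its pointwise operator norm over the tube, and because the operator norm is submultiplicative, it suffices to bound each factor $\bigl\|-e^{i\rho(\zeta_k-\zeta_l)}\bigr\|$ on the cube $\boldsymbol{\lambda}_{\pi/2}+\mathcal{C}_n(\tfrac{\kappa(S)}{n})$. There every difference $\zeta_k-\zeta_l$ has imaginary part of modulus below $\tfrac{\kappa(S)}{n}$, so it stays within $O(\tfrac{1}{n})$ of the real locus where $-e^{i\rho(\theta)}$ is unitary (part b) of Lemma \ref{lem}); each factor is therefore bounded by $1+O(\tfrac{\kappa(S)}{n})$, and multiplying the $\binom{n}{2}$ factors yields a total of order $\gamma^n$. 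This is precisely the point of the deliberate shrinking of the cube by $\tfrac{1}{n}$: it converts the $O(n^2)$ factor count into a geometric bound, matching the scalar estimate $\|S\|_{\kappa(S)}^{n/2}$ of Lemma \ref{lem}a). The inverse bound b) follows symmetrically by applying the same estimate to the reciprocal factors $-e^{-i\rho}$, using that $\mathcal{I}_n$ is unitary on $L^2(\mathbb{R}^n)\otimes\mathcal{K}^{\otimes n}$ so that $(\mathcal{I}_n)^{-1}=\mathcal{I}_n^*$ on the real boundary $\mathscr{H}_n^-$. With a) and b) in hand, the geometric growth of $\upsilon$, $x$ and $\gamma,\gamma'$ combines in (\ref{concatenation}) to render $\sum_n\|\Xi_n(s)\|_1$ summable for $s$ above a minimal threshold, completing the verification of the modular nuclearity condition.
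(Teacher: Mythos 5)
There is a genuine gap, and it is structural: the statement you set out to prove is, in the paper, explicitly an \emph{open conjecture} --- the paper offers no proof of it, only supporting evidence, namely Lemma \ref{lem}, which constructs the intertwiners for $n\le 2$ and only for S-matrices of the commuting form (\ref{form}) (where $S^{-1}\cdot S'$ admits a global primitive and $S=-F\cdot e^{2i\rho}$ makes sense), the piecewise-defined but discontinuous intertwiners (\ref{nichtanalytisch}), and, for the $O(N)$ models, the endomorphism-algebra ansatz (\ref{11}) of Chapter \ref{Chapter4}. Your proposal faithfully reproduces this body of evidence but does not go beyond it: the step you yourself single out as ``decisive and hardest'' --- the existence of an analytic symbol solving the cocycle equations for a general, non-commuting $S\in\mathcal{S}_0^-$ --- is deferred rather than carried out, and that existence is precisely the content of Conjecture \ref{conj}. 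Everything from ``Granting a symbol\dots'' onward is conditional on the conclusion, so what you have is a plan, not a proof. Moreover, your ordered-product ansatz $\prod_{k<l}\bigl(-e^{i\rho(\zeta_k-\zeta_l)}\bigr)_{kl}$ requires an ordering convention whose consistency with the cocycle relations amounts to a braid relation for the factors $-e^{i\rho}$; you correctly identify this as the obstruction (for $S_N$ in (\ref{s-matrixsigma}) the embedded two-particle factors in overlapping slots do not commute), but you offer no argument for it beyond the commuting case already covered by Lemma \ref{lem}.

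Two remarks on the estimates, which are the salvageable part. Your mechanism for a) --- interpolating each factor between the unitary real boundary and the bounded strip, with the $1/n$-shrinking cube converting the $\binom{n}{2}$ factor count into a geometric bound $\gamma^n$ --- is exactly the footnoted Malgrange--Zerner reasoning behind the scalar bound $\|I_n\|\le\|S\|^{n/2}_{\kappa(S)}$, and would be correct \emph{if} the symbol existed with the stated analyticity. Your derivation of b), however, is too quick: as the concatenation (\ref{concat}) shows, $(\mathcal{I}_n)^{-1}$ must undo the multiplication after evaluation at $\boldsymbol{\theta}+i\boldsymbol{\lambda}_{\pi/2}$, i.e.\ it is multiplication by the inverse symbol at \emph{complex} arguments, where the real-boundary unitarity $(\mathcal{I}_n)^{-1}=\mathcal{I}_n^{*}$ that you invoke is unavailable --- indeed, were the inverse simply the real-boundary adjoint, the bound in b) would be $1$ and the constant $\gamma'$ superfluous. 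What is actually needed is invertibility of the symbol with controlled norm inside the tube, a nondegeneracy statement tied to the definition (\ref{kappa}) of $\kappa(S)$ through $\det S(\theta)\neq 0$; within your exponential ansatz this would follow from boundedness of $\rho$ on the strip, but that again presupposes the unestablished existence of $\rho$ for general $S\in\mathcal{S}_0^-$.
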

In spite of the strong indications speaking for the validity of Conjecture \ref{conj}, there is still the possibility to circumvent this assumption by improving the nuclearity estimates on the maps $\Xi_n(s)$. For instance, in case one can show that there exists a bounded operator from the wedge algebras into a Hardy space based on a cube with \textit{fixed} side length independent of $n$ (in contrast to the present situation where we are dealing with cubes shrinking like $\tfrac{1}{n}$) with operator norm growing at most exponentially in $n$\footnote{Presently, we can prove the existence of such an operator with operator norm bounded by $\alpha^{n^2}$ for some constant $\alpha>1$.}, then the verification of the modular nuclearity condition would be possible by the applied techniques even for the whole class $\mathcal{S}_0$ of regular S-matrices. Other improvements may be achieved by taking the underlying S-symmetry of the functions in the Hilbert space $\mathscr{H}$ into account. The nuclearity estimates are in principle, namely, derived on the unsymmetrized Fock space $\widehat{\mathscr{H}}$.\par
Note, moreover, that the above discussion resulting in Conjecture $\ref{conj}$ applies analogously to S-matrices $S\in\mathcal{S}_0^+:=\{S\in\mathcal{S}_0:S(0)=+F\}$, hence, corresponding to maps $\mathcal{I}^+:\mathscr{H}\rightarrow\mathscr{H}^+$. Since in Lagrangian field theory no interacting model is known to fall into this class and since we do not yet have an argument for verifying the modular nuclearity condition in this case (not even for $\dim\mathcal{K}=1$), we specialized to the family $\mathcal{S}_0^-$. In particular, the $O(N)$-invariant nonlinear sigma-models fall into that class, cf. Chapter \ref{Chapter4}.
\begin{theorem}\label{mainTheorem}
Consider a model theory associated with a regular S-matrix $S\in\mathcal{S}_0^-$. Suppose further Conjecture \ref{conj} holds true. Then, there exists a value $s_{\text{min}}<\infty$ such that $\Xi(s):\mathcal{F}(W_R)\rightarrow\mathscr{H}$ is nuclear for all $s>s_{\text{min}}$. Thus, in these models the corresponding local algebras $\mathcal{F}(\mathcal{O}_{x,y})=\mathcal{F}(W_R+x)\cap\mathcal{F}(W_L+y)$ have $\Omega$ as a cyclic vector, for each double cone $\mathcal{O}_{x,y}=(W_R+x)\cap(W_L+y)$ with $y-x\in W_R$ and $-(y-x)^2>s_{\text{min}}^2$.
\end{theorem}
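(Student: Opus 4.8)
The plan is to deduce the nuclearity of $\Xi(s)=\sum_{n=0}^{\infty}\Xi_n(s)$, (\ref{xi}), from a summable bound on the nuclear norms $\|\Xi_n(s)\|_1$, valid for all sufficiently large $s$; once $\sum_n\|\Xi_n(s)\|_1<\infty$, the series converges in the Banach space of nuclear maps and $\Xi(s)$ is nuclear. Each $\Xi_n(s)$ is already nuclear by the concatenation (\ref{concatenation}), Lemma \ref{Xn}, the boundedness of $\Upsilon_n(s,\kappa(S))$ and the composition property of nuclear maps (Lemma \ref{propertiesNuclearMaps}), so the only missing ingredient is summability. The obstruction is the factor $n^n$ in (\ref{boundx}): it stems from the cube $\boldsymbol{\lambda}_{\pi/2}+\mathcal{C}_n(\tfrac{\kappa(S)}{n})$ shrinking like $\tfrac1n$, which forces the trace norm $\|R_{g,\kappa(S)/(2n)}\|_1$ to grow linearly in $n$, and $n^n[x(s,\kappa(S))\,\upsilon(\tfrac s2,\kappa(S))]^n$ is not summable for any $s$. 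Defeating this factor requires the antisymmetrization provided by Conjecture \ref{conj}.

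First I would transport the estimate to the totally antisymmetric Fermi Fock space. Using the intertwiners $\mathcal{I}_n$ one factorizes
\begin{equation*}
\Xi_n(s)=(\mathcal{I}_n)^{-1}\circ X_n^-(s,\kappa(S))\circ\mathcal{I}_n\circ\Upsilon_n(s,\kappa(S)),
\end{equation*}
where $X_n^-(s,\kappa(S)):\mathscr{H}_n^-\text{-Hardy space}\to\mathscr{H}_n^-$ is the free-Fermi analogue of $X_n(s,\kappa(S))$ obtained from the same formula (\ref{xn}); that the three maps reproduce $\Xi_n(s)$ is a direct consequence of the intertwining property of $\mathcal{I}_n$ between $D_n$ and $D_n^-$ together with the explicit form (\ref{xn}), whose scalar damping factors and boundary evaluation are unaffected by $\mathcal{I}_n$. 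By Conjecture \ref{conj} a) and (\ref{opb}) the composition $\mathcal{I}_n\circ\Upsilon_n(s,\kappa(S))$ is bounded by $\gamma^n\upsilon(\tfrac s2,\kappa(S))^n$, and by Conjecture \ref{conj} b) the outer map $(\mathcal{I}_n)^{-1}$ is bounded by $\gamma'^n$.

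The decisive step is the nuclear-norm bound for $X_n^-(s,\kappa(S))$. On the antisymmetric subspace the $n$-fold tensor product of the single-variable operators $R_{g,b}$ appearing in the proof of Lemma \ref{Xn} collapses to an antisymmetric tensor power, and for any trace-class operator $T$ one has the standard estimate $\|\Lambda^n T\|_1\le\tfrac{1}{n!}\|T\|_1^n$. Combined with the $2^n$ terms of the $\boldsymbol{\varepsilon}$-sum and the linear growth
\begin{equation*}
\|R_{g,\kappa(S)/(2n)}\|_1\le c(s)\,n,\qquad c(s):=\frac{D}{\pi\kappa(S)}\int d\theta\,e^{-\frac{sm_\circ}{2}\cos(\kappa(S)/2)\cosh\theta},
\end{equation*}
this gives
\begin{equation*}
\|X_n^-(s,\kappa(S))\|_1\le 2^n\,\frac{(c(s)\,n)^n}{n!}\le\frac{1}{\sqrt{2\pi n}}\,(2e\,c(s))^n,
\end{equation*}
where Stirling's inequality $n!\ge\sqrt{2\pi n}\,(n/e)^n$ turns the offending $n^n$ into the harmless $e^n$. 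Collecting the four factors yields
\begin{equation*}
\|\Xi_n(s)\|_1\le\frac{1}{\sqrt{2\pi n}}\big(2e\,\gamma\gamma'\,c(s)\,\upsilon(\tfrac s2,\kappa(S))\big)^n.
\end{equation*}

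Since $c(s)\to0$ as $s\to\infty$ while $\upsilon(\tfrac s2,\kappa(S))$ stays bounded, the geometric ratio drops below $1$ for all $s$ exceeding some finite $s_{\min}$, whence $\sum_n\|\Xi_n(s)\|_1<\infty$ and $\Xi(s)$ is nuclear for $s>s_{\min}$. The remaining assertions follow from Theorem \ref{NuclearityCondition}: nuclearity of $\Xi(\undertilde s)$ makes the inclusion $\mathcal{F}(W_R+\undertilde s)\subset\mathcal{F}(W_R)$ split and gives $\Omega$ as a cyclic vector for $\mathcal{F}(\mathcal{O}_{0,\undertilde s})$. Finally, any double cone $\mathcal{O}_{x,y}$ with $y-x\in W_R$ is mapped by a boost and a translation to $\mathcal{O}_{0,\undertilde t}$ with $\undertilde t=(0,\sqrt{-(y-x)^2})$, exactly as in the reduction preceding Lemma \ref{oneparticle}; by covariance the cyclicity of the boost- and translation-invariant vector $\Omega$ transfers to $\mathcal{F}(\mathcal{O}_{x,y})$ precisely when $\sqrt{-(y-x)^2}>s_{\min}$, i.e. $-(y-x)^2>s_{\min}^2$. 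The main obstacle is the Fermi-side estimate of the third paragraph: only the antisymmetrization of Conjecture \ref{conj} supplies the $\tfrac1{n!}$ that beats the $n^n$ growth, which is exactly why the result remains conditional and why no unconditional proof covering all of $\mathcal{S}_0$ is presently available.
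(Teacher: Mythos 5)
Your proposal is correct and follows essentially the same route as the paper: factorizing $\Xi_n(s)$ through the Fermi Fock space via the conjectured intertwiners $\mathcal{I}_n$, gaining the Pauli-principle factor $\tfrac{1}{n!}$ on $\|X_n^-(s,\kappa(S))\|_1$ to defeat the $n^n$ growth via Stirling, and then invoking Theorem \ref{NuclearityCondition} together with covariance for general double cones. The only point where the paper is more careful is the antisymmetric estimate itself: since the $\boldsymbol{\varepsilon}$-sum produces tensor products with \emph{different} operators $Z_{g,-\varepsilon_j\kappa(S)/(2n)}$ in the slots rather than a single $n$-fold power, the bound $\|\Lambda^n T\|_1\leq\tfrac{1}{n!}\|T\|_1^n$ is not directly applicable, and the paper dominates all factors by the common majorant $\widehat{Z}_{g,b}=\bigl(Z_{g,b}Z_{g,b}^*+Z_{g,-b}Z_{g,-b}^*\bigr)^{1/2}$ before expanding in an eigenbasis — a repair fully in the spirit of your argument, costing only an extra factor $2^n$.
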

\begin{proof}
Since $\Xi_n(s)=X_n(s,\kappa(S))\,\circ\,\Upsilon_n(s,\kappa(S))$, with $0<\kappa(S)<\tfrac{\pi}{2}$, we have that $\Xi_n(s)$ is nuclear with nuclear norm bounded by
\begin{equation}\label{b1}
\|\Xi_n(s)\|_1\leq \|\Upsilon_n(s,\kappa)\|\cdot\|X_n(s,\kappa)\|_1\leq n^{n}\upsilon(\tfrac{s}{2},\kappa(S))^nx(s,\kappa(S))^n.
\end{equation}
To show nuclearity of $\Xi(s)=\sum_{n=0}^{\infty}\Xi_n(s)$, we make use of the map $\mathcal{I}:=\bigoplus_{n=0}^\infty\mathcal{I}_n:\mathscr{H}\rightarrow\mathscr{H}^-$ and consider $\Xi_n(s)$ as the concatenation
\begin{equation}\label{concat}
\mathcal{F}(W_R)\stackrel{\Upsilon_n(s,\kappa(S))}{\longrightarrow}H^2(\mathcal{T}_n(\tfrac{\kappa(S)}{n}))\otimes\mathcal{K}^{\otimes n}\stackrel{\mathcal{I}_n}{\longrightarrow}H^2_-(\mathcal{T}_n(\tfrac{\kappa(S)}{n}))\otimes\mathcal{K}^{\otimes n}\stackrel{X^-_n(s,\kappa(S))}{\longrightarrow}\mathscr{H}_n^-\stackrel{(\mathcal{I}_n)^{-1}}{\longrightarrow}\mathscr{H}_n,
\end{equation}
where $X^-_n(s,\kappa(S))$ acts in the same way as $X_n(s,\kappa(S))$. It further can be expressed as $X_n(s,\kappa(S))$ in (\ref{x}) and is, thus, nuclear too. Hence, we have with $h^-\in H^2_-(\mathcal{T}_n(\tfrac{\kappa(S)}{n}))\otimes\mathcal{K}^{\otimes n}$ and the notation used in the proof of Lemma \ref{Xn}
\begin{eqnarray*}
X_n^-(s,\kappa(S))h^-&=& \widehat{u}_{n,s/4[1-\cos(\kappa(S)/(2n))]}\sum_{\boldsymbol{\varepsilon}} \left(R_{g,-\varepsilon_1\frac{\kappa(S)}{2n}}\otimes\cdots\otimes R_{g,-\varepsilon_n\frac{\kappa(S)}{2n}}\right)\nonumber\\
&&\qquad\times\,u_{n,-s/4\sin(\kappa(S)/(2n)),\boldsymbol{\varepsilon}}\, h^-_{\boldsymbol{\lambda}_{\pi/2}-\frac{\kappa(S)}{2n}\boldsymbol{\varepsilon}}\\
&=&\sum_{\boldsymbol{\varepsilon}}\left(Z_{g,-\varepsilon_1\frac{\kappa(S)}{2n}}\otimes\cdots\otimes Z_{g,-\varepsilon_n\frac{\kappa(S)}{2n}}\right)h^-_{\boldsymbol{\lambda}_{\pi/2}-\frac{\kappa(S)}{2n}\boldsymbol{\varepsilon}},
\end{eqnarray*}
where $$Z_{g,-\varepsilon\frac{\kappa(S)}{2n}}:=\widehat{u}^{(1)}_{n,s/4[1-\cos(\kappa(S)/(2n))]}R_{g,-\varepsilon\frac{\kappa(S)}{2n}}u^{(1)}_{n,-s/4\sin(\kappa(S)/(2n)),\varepsilon}.$$
Here $\widehat{u}^{(1)}_{n,s}$ and $u^{(1)}_{n,s,\varepsilon}$ are the multiplication operators $\widehat{u}_{n,s}$, respectively, $u_{n,s,\varepsilon}$ restricted to $L^2(\mathbb{R})\otimes\mathcal{K}$. For the nuclear norm of $X^-_n(s,\kappa(S))$, on the other hand, we find a sharper bound, due to the Pauli principle. Namely, consider the positive trace class operator $\widehat{Z}_{g,b}:=\left(Z_{g,b}Z_{g,b}^*+Z_{g,-b}Z_{g,-b}^*\right)^{1/2}$, $b:=\frac{\kappa(S)}{2n}$, with $\|\widehat{Z}_{g,b}\|_1\leq 2\|Z_{g,b}\|_1\leq 2 \|R_{g,b}\|_1$ \cite{kosaki1984continuity}, for $g$ as above, and choose an orthonormal basis $\{\psi_k\}_k$ of $L^2(\mathbb{R})\otimes\mathcal{K}$ which consists of eigenvectors $\psi_k$ of $\widehat{Z}_{g,b}$. Denoting the corresponding eigenvalues by $\lambda_k\geq 0$, we have $\widehat{Z}_{g,b}f=\sum_{k=1}^{\infty}\lambda_k\langle\psi_k,f\rangle\psi_k$, $f\in L^2(\mathbb{R})\otimes\mathcal{K}$, and $\|\widehat{Z}_{g,b}\|_1=\sum_{k=1}^{\infty}\lambda_k<\infty$. On $\mathscr{H}_n^-$, the vectors
$$\Psi_{\boldsymbol{k}}^-:=\frac{1}{\sqrt{n!}}\sum_{\sigma\in\mathfrak{S}_n}(-1)^{\text{sign}(\sigma)}\psi_{\sigma(k_1)}\otimes\cdots\otimes\psi_{\sigma(k_n)}$$
form an orthonormal basis if $k_1<k_2<\cdots <k_n$, $k_1,\dots,k_n\in\mathbb{N}$. Hence, it follows
\begin{equation*}
\begin{aligned}
X_n^-(s,\kappa(S))h^-&=\sum_{\boldsymbol{\varepsilon}}\left(Z_{g,-\varepsilon_1\frac{\kappa(S)}{2n}}\otimes\cdots\otimes Z_{g,-\varepsilon_n\frac{\kappa(S)}{2n}}\right)h^-_{\boldsymbol{\lambda}_{\pi/2}-\frac{\kappa(S)}{2n}\boldsymbol{\varepsilon}}\\
&= \sum_{\boldsymbol{\varepsilon}}\sum_{k_1<\cdots <k_n}\langle Z^*_{g,-\varepsilon_1\frac{\kappa(S)}{2n}}\otimes\cdots\otimes Z^*_{g,-\varepsilon_n\frac{\kappa(S)}{2n}}\Psi_{\boldsymbol{k}}^-,h^-_{\boldsymbol{\lambda}_{\pi/2}-\frac{\kappa(S)}{2n}\boldsymbol{\varepsilon}}\rangle\Psi_{\boldsymbol{k}}^-
\end{aligned}
\end{equation*}
For the nuclear norm of $X_n^-(s,\kappa(S))$ we, therefore, find with $\|h^-_{\boldsymbol{\lambda}_{\pi/2}-\frac{\kappa(S)}{2n}\boldsymbol{\varepsilon}}\|\leq \triplenorm h^-\triplenorm$
\begin{equation*}
\begin{aligned}
\|X_n^-(s,\kappa(S))\|_1\leq\sum_{\boldsymbol{\varepsilon}}\sum_{k_1<\cdots <k_n}\|Z^*_{g,-\varepsilon_1\frac{\kappa(S)}{2n}}\otimes\cdots\otimes Z^*_{g,-\varepsilon_n\frac{\kappa(S)}{2n}}\Psi_{\boldsymbol{k}}^-\|.
\end{aligned}
\end{equation*}
Taking into account that
\begin{equation*}
\begin{aligned}
\|Z^*_{g,-\varepsilon_1\frac{\kappa(S)}{2n}}\otimes\cdots&\otimes Z^*_{g,-\varepsilon_n\frac{\kappa(S)}{2n}}\Psi_{\boldsymbol{k}}^-\|^2\\
&=\langle\Psi_{\boldsymbol{k}}^-,Z_{g,-\varepsilon_1\frac{\kappa(S)}{2n}}Z^*_{g,-\varepsilon_1\frac{\kappa(S)}{2n}}\otimes\cdots\otimes Z_{g,-\varepsilon_n\frac{\kappa(S)}{2n}}Z^*_{g,-\varepsilon_n\frac{\kappa(S)}{2n}}\Psi_{\boldsymbol{k}}^-\rangle\\
&\leq \langle\Psi_{\boldsymbol{k}}^-,\left|\widehat{Z}_{g,\frac{\kappa(S)}{2n}}\right|^2\otimes\cdots\otimes \left|\widehat{Z}_{g,\frac{\kappa(S)}{2n}}\right|^2\Psi_{\boldsymbol{k}}^-\rangle\\
&= \|\widehat{Z}_{g,\frac{\kappa(S)}{2n}}^{\otimes n}\Psi_{\boldsymbol{k}}^-\|^2,
\end{aligned}
\end{equation*}
we arrive at
\begin{equation*}
\begin{aligned}
\|X_n^-(s,\kappa(S))\|_1\leq 2^n\sum_{k_1<\cdots <k_n}\lambda_{k_1}\cdots\lambda_{k_n}&\leq\frac{2^n}{n!}\sum_{k_1,\dots,k_n}\lambda_{k_1}\cdots\lambda_{k_n}\\
&=\frac{2^n}{n!}\|\widehat{Z}_{g,\frac{\kappa(S)}{2n}}\|_1^n\leq\frac{4^n}{n!}\|R_{g,\frac{\kappa(S)}{2n}}\|_1^n.
\end{aligned}
\end{equation*}
With regard to (\ref{concat}) and the assumptions stated in Conjecture $\ref{conj}$, therefore, the following sharper bound compared to (\ref{b1}) holds true
\begin{equation}
\|\Xi_n(s)\|_1\leq\frac{n^{n}}{n!} \Big(2\,\gamma\,\gamma'\,\upsilon(\tfrac{s}{2},\kappa(S))\,x(s,\kappa(S))\Big)^n\leq \frac{1}{\sqrt{2\pi n}} \Big(2\,e\,\gamma\,\gamma'\,\upsilon(\tfrac{s}{2},\kappa(S))\,x(s,\kappa(S))\Big)^n.
\end{equation}
where we made use of Stirling's formula, i.e. $n!\geq \sqrt{2\pi n}\,n^{n}e^{-n}$. Due to the monotonous convergence of $\upsilon(\tfrac{s}{2},\kappa(S))\cdot x(s,\kappa(S))$ to zero as $s\rightarrow\infty$, there exists a value $s_{\text{min}}<\infty$ such that
\begin{equation}
\sum_{n=0}^{\infty}\|\Xi_n(s)\|_1\leq\sum_{n=0}^{\infty}\frac{1}{\sqrt{2\pi n}}\Big(2\,e\,\gamma\,\gamma'\,\upsilon(\tfrac{s}{2},\kappa(S))\,x(s,\kappa(S))\Big)^n<\infty,
\end{equation}
for all $s>s_{\text{min}}$. Therefore, the series $\sum_{n=0}^{\infty}\Xi_n(s)$ converges in nuclear norm to $\Xi(s)$, implying the nuclearity of the map $\Xi(s)$, since $\left(\mathcal{N}(\mathcal{F}(W_R),\mathscr{H}),\|\cdot\|_1\right)$ is a Banach space. Correspondingly, it follows from Theorem \ref{NuclearityCondition} that the algebras $\mathcal{F}(\mathcal{O}_{0,\undertilde{s}})$, $\undertilde{s}:=(0,s)$, $s>s_{\text{min}}$, have $\Omega$ as a cyclic vector, i.e. the Reeh-Schlieder property holds. The respective statement for double cones $\mathcal{O}_{x,y}$, with $y-x\in W_R$ and $-(y-x)^2>s_{\text{min}}^2$, is obtained by covariance.
\end{proof}
Starting from a certain family of factorizing S-matrices, the corresponding model theories were constructed. Theorem \ref{mainTheorem} states the existence of local quantum fields in these models under a very plausible conjecture. The fields are localized in bounded regions which, however, cannot be chosen arbitrarily small but have of a minimal ``relativistic size''. This minimal diameter depends on the S-matrix at hand as well as the mass gap $m_\circ$ of the theory, and manifests itself through the length $s_{\text{min}}$. There is reason to believe that a minimal localization length, namely the Planck length $l_P\approx 10^{-35}$m, exists in theories combining general relativity and quantum theory. However, the occurrence of the length $s_{\text{min}}$ in our approach has no physical motivation and is assumed to be a by-product of the various estimates. Indeed, considering as an instructive example the well-studied model of the scalar free Bose field with constant S-matrix $S=+1$, cf. \cite{Jost, araki1963lattice}, with regard to the modular nuclearity condition, then this requirement can be verified by an alternative argument for arbitrary $s>0$ as shown in \cite{BL4}. The alternative strategy yielding this result relies to a large extent on the analysis of nuclear maps on the Bose Fock space developed by Buchholz and Wichmann \cite{BW86} and can, unfortunately, not be carried over to more general situations.\par
Nevertheless, the result established in Theorem \ref{mainTheorem}, namely the existence of compactly localized quantum fields\footnote{Recall that these operators are not constructed explicitly but are characterized indirectly as elements of the local field algebras $\mathcal{F}(\mathcal{O})$.} complying with the Reeh-Schlieder property, is already sufficient to investigate collision states. The fact that the localization regions cannot be chosen arbitrarily small but above a minimal diameter does not give rise to any complications, since double cones of any size can be spacelike separated simply by translation. Thus, Haag-Ruelle-Hepp scattering theory can be applied \cite{Araki99,hepp1965} and one may prove that the models constructed by the methods presented in this chapter solve the inverse scattering problem. Indeed as shown in  \cite{LS} under the assumption that the maps $\Xi(s)$ are nuclear, the S-matrices of the constructed theories coincide with those initially considered as starting point, and, moreover, asymptotic completeness can be proven to hold. We discuss these results, established in \cite{LS}, in the next section.

\section{Scattering Operator and Asymptotic Completeness}\label{AC}
In this section we summarize the results found in \cite[Section 5]{LS}, concerning the physical properties of the models previously constructed. So far we showed\footnote{Under Conjecture \ref{conj}.} that our construction gives rise to a large class of models which comply with all fundamental concepts of quantum field theory if $S\in\mathcal{S}_0^-$. Beyond that one is particularly interested in the interaction taking place in these models. At the moment it is by no means clear how the S-matrix $S$, which served as the starting point in our approach, is related to the scattering operator of the respective model. It is, therefore, necessary to derive explicit formulae for scattering states in order to compute the scattering matrix of the theory at hand. It will turn out, though, that this operator, which maps outgoing into incoming scattering states, coincides with the tensor corresponding to $S$. That is, the construction solves the inverse scattering problem. Furthermore, the emerging models are asymptotically complete.\par
Note that in Section \ref{Sectionfactorizing} we agreed to refer to the scattering amplitude $S$ as S-matrix for short. Therefore, the operator $\textbf{S}:\mathscr{H}^+\rightarrow\mathscr{H}^+$ defined in (\ref{S}) and usually declared as S-matrix shall be referred to as scattering operator.\\

Since double cones of any size can be spacelike separated by translation, cyclicity of the vacuum vector $\Omega$ for the field algebra $\mathcal{F}(\mathcal{O})$, with $\mathcal{O}$ above a minimal diameter, cf. Theorem \ref{mainTheorem}, is sufficient in order to compute $n$-particle scattering states. Lechner and Sch\"utzenhofer, however, \textit{assume} in their analysis \cite[Section 5]{LS} the nuclearity of the maps $\Xi(s)$ for arbitrary $s>0$. This is clearly an unnecessarily strong requirement. Nevertheless, keeping in mind that spacelike separation can be achieved in any case, their results still hold true in this slightly different situation involving the minimal length $s_{\text{min}}$.\par
For the task of calculating collision states one uses the methods of Haag-Ruelle-Hepp scattering theory \cite{Araki99,hepp1965} reconciled with the wedge-locality of the fields $\phi$ \cite{BBS01} and the charge structure being present \cite{doplicher1974local}. Let us recall the basic ingredients. First af all, one usually considers quasi-local operators of the form
\begin{equation}
A_q(g_{t,m(q)})=\int d^2 x\,g_{t,m(q)}(x)U(x)A_qU(x)^{-1},
\end{equation}
where $A_q\in\mathcal{F}(\mathcal{O})$ is a field operator of definite charge $q\in\mathcal{Q}$ and $g_{t,m(q)}\in\mathscr{S}(\mathbb{R}^2)$, $t\in\mathbb{R}$, is defined in terms of $\widetilde{g}\in\mathscr{S}(\mathbb{R}^2)$ by
\begin{equation}\label{tAbh}
g_{t,m(q)}(x):=\frac{1}{2\pi}\int d^2 p\,\widetilde{g}(p)\,e^{it\left[p_0-\left(p^2_1+m(q)^2\right)^{1/2}\right]}\,e^{-ip\cdot x}.
\end{equation}
If $\widetilde{g}$ has support in a sufficiently small neighborhood of some point on the mass shell $H^+_{m(q)}$ in the sector $q$, $A_q(g_{t,m(q)})\Omega$ is an element of $L^2(\mathbb{R},d\theta)\otimes\mathcal{K}_q$, not depending on $t$. That is, in this case $A_q(g_{t,m(q)})$ creates one particle states of charge $q$ from the vacuum. Moreover,
\begin{equation}
\lim\limits_{t\rightarrow\pm}A_q(g_{t,m(q)})\Psi=A_q(g)_{\text{out}\atop\text{in}}\Psi,\qquad\lim\limits_{t\rightarrow\pm}A_q(g_{t,m(q)})^*\Psi=A_q(g)_{\text{out}\atop\text{in}}\,\hspace{-.2cm}^*\,\Psi,
\end{equation}
where $A_q(g)_{\text{in}}$ and $A_q(g)_{\text{out}}$ are the asymptotic creation operators of an incoming, respectively outgoing particle of charge $q$, being in the state $A_q(g)\Omega$. The corresponding annihilation operators are their adjoints $A_q(g)_{\text{in}}\,\hspace{-.2cm}^*$ and $A_q(g)_{\text{out}}\,\hspace{-.2cm}^*$. These  asymptotic relations hold for a certain dense set of scattering states $\Psi$ \cite{Araki99, hepp1965}. However, they can be extended to all states $\Psi$ with compact energy momentum support due to a result by Buchholz \cite{buchholz1990harmonic}.\par
In order to compute collision states, one further considers the velocity support of mass $m$ of $g\in\mathscr{S}(\mathbb{R}^2)$ defined as follows
\begin{equation}
\mathcal{V}_{m}(g):=\left\{\left(1,\frac{p_1}{\left(p^2_1+m^2\right)^{1/2}}\right):(p_0,p_1)\in\,\text{supp}\,\widetilde{g}\right\}.
\end{equation}
In an analogous manner the velocity support of a vector $\Psi_{1,q}\in L^2(\mathbb{R},d\theta)\otimes\mathcal{K}_q$ is given by
\begin{equation}
\mathcal{V}(\Psi_{1,q}):=\left\{\left(1,\frac{p_1}{\left(p^2_1+m(q)^2\right)^{1/2}}\right):(p_0,p_1)\in\,\text{supp}\,\Psi_{1,q}\right\},
\end{equation}
where $\text{supp}\,\Psi_{1,q}$ is the spectral support of $\Psi_{1,q}$. Considering such a single particle vector $\Psi_{1,q}$ of charge $q$ as given, then there is always a $A\in\mathcal{F}(\mathcal{O})$ and a test function $g\in\mathscr{S}(\mathbb{R}^2)$, whose velocity support is contained in an arbitrary small neighborhood of $\mathcal{V}(\Psi_{1,q})$, such that $\|A_q(g)\Omega-\Psi_{1,q}\|<\varepsilon$ for some $\varepsilon>0$. This is due to the cyclicity of the vacuum $\Omega$ for $\mathcal{F}(\mathcal{O})$, with $\mathcal{O}$ above a minimal size, implying that there are sufficiently many quasi local creation operators.\par
Recall further that the support of a function $g_{t,m}\in\mathscr{S}(\mathbb{R}^2)$, cf. (\ref{tAbh}), is essentially contained in $t\mathcal{V}_m(g)$ for asymptotic times, i.e. for $t\rightarrow\pm\infty$, \cite{hepp1965}. Namely, let $\chi_m$ be a smooth function which is equal to $1$ on $\mathcal{V}_m(g)$ and vanishes in the complement of an $\varepsilon$-neighborhood $\mathcal{V}^\varepsilon_m(g)$ of $\mathcal{V}_m(g)$. Then, $\hat{g}_{t,m}(x):=\chi_m(x/t)g_{t,m}(x)$ is the asymptotically dominant part of $g_{t,m}$ with support in $t\mathcal{V}^\varepsilon_m(g)$. That is, for any $N\in\mathbb{N}$, we have \mbox{$|t|^N(g_{t,m}-\hat{g}_{t,m})\rightarrow 0$} in the topology of $\mathscr{S}(\mathbb{R}^2)$ as $t\rightarrow\pm\infty$.\par
Having recalled the basic ingredients and facts about scattering theory, we may come now to the main result of Haag-Ruelle-Hepp collision theory adjusted to our setting \cite{Araki99, doplicher1974local, hepp1965}. Namely, let field operators $A_1,\dots, A_n\in\mathcal{F}(\mathcal{O})$, charges $q_1,\dots,q_n\in\mathcal{Q}$ and test functions $g_1,\dots,g_n\in\mathscr{S}(\mathbb{R}^2)$, having disjoint velocity supports $\mathcal{V}_{m(q_j)}(g_j)$, i.e. supp$\,\widetilde{g}_j\,\cap\,$supp$\,\widetilde{g}_k=\emptyset$ for $j\neq k$, be given, then
\begin{equation}\label{collision}
\lim\limits_{t\rightarrow\pm}A_{1,q_1}(g_{1,t,m(q_1)})\cdots A_{n,q_n}(g_{n,t,m(q_n)})\Omega=:(\psi_1\times\cdots\times\psi_n)_{\text{out}\atop\text{in}}
\end{equation}
exists and, moreover, depends only on $\psi_j:=A_{j,q_j}(g_j)\Omega\in L^2(\mathbb{R},d\theta)\otimes\mathcal{K}_{q_j}$ in a linear and continuous manner. In (\ref{collision}) we have used the standard notation for collision states.\par
To derive explicit formulae for these scattering states, the auxiliary field $\phi$ may be taken into account. With regard to its wedge-locality and its affiliation with $\mathcal{F}(W_L)$, one may follow the analysis of \cite{BBS01}. To this end, we introduce the notion of ordered velocity supports of say $g,g'\in\mathscr{S}(\mathbb{R}^2)$. Namely, $\mathcal{V}_m(g)$ is called a precursor of $\mathcal{V}_m(g')$, $g\prec_m g'$ in symbols, if the set of all difference vectors $\mathcal{V}_m(g')-\mathcal{V}_m(g)$ is contained in $\{0\}\times\mathbb{R}_+$.\\
The notations collected so far generalize to test functions $f\in\mathscr{S}(\mathbb{R}^2)\otimes\mathcal{K}=\bigoplus_{q\in\mathcal{Q}}\mathscr{S}(\mathbb{R}^2)\otimes\mathcal{K}_q$ in a straightforward manner by the decomposition $f=\bigoplus_{q\in\mathcal{Q}}f_q$. In particular, the velocity support $\mathcal{V}(f_q)$ is understood as the union over the velocity supports of mass $m(q)$ of all the components of $f_q$.\par
These additional ingredients allow for the statement of the following results which were established in \cite{LS}. For a proof we, therefore, refer the reader to this particular reference.
\begin{proposition}[\textbf{\cite{LS}}]\label{propLechner}
Let $f_1,\dots,f_n\in\mathscr{S}(\mathbb{R}^2)\otimes\mathcal{K}$ be test functions whose Fourier transforms $\widetilde{f}_j$ are supported in the forward light cone, and whose velocity supports are ordered, i.e. $f_1\prec\dots\prec f_n$. Then,
\begin{equation}\label{propEqnLechner}
\begin{aligned}
&\left(f_1^+\times\cdots\times f_n^+\right)_{\text{out}}=\phi(f_1)\cdots\phi(f_n)\Omega=\sqrt{n!}\,P_n\left(f_1^+\otimes\cdots\otimes f_n^+\right),\\
&\left(f_1^+\times\cdots\times f_n^+\right)_{\text{in}}=\phi(f_n)\cdots\phi(f_1)\Omega=\sqrt{n!}\,P_n\left(f_n^+\otimes\cdots\otimes f_1^+\right).
\end{aligned}
\end{equation}
Moreover, the sets of incoming and outgoing $n$-particle scattering states are total in $\mathscr{H}_n$. That is, asymptotic completeness holds.
\end{proposition}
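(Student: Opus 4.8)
The plan is to establish the two central equalities algebraically first, and then to identify them with the Haag--Ruelle limits by a clustering argument tailored to wedge-locality. First I would exploit the support hypothesis on the $\widetilde{f}_j$: since $f_j^{-,\alpha}(\theta)=\widetilde{f}_j^\alpha(-p_{m_{[\alpha]}}(\theta))$ and $-p_{m_{[\alpha]}}(\theta)$ lies in the backward cone $\overline{V}_-$, the forward-cone support of $\widetilde{f}_j$ forces $f_j^-=0$. Consequently $\phi(f_j)=z^\dagger(f_j^+)$ on $\mathcal{D}$, so the product collapses to a string of pure creation operators, $\phi(f_1)\cdots\phi(f_n)\Omega=z^\dagger(f_1^+)\cdots z^\dagger(f_n^+)\Omega$. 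Using the explicit action of $z^\dagger$ recorded in Proposition \cite{LS}, a short induction on $n$ (the base case $z^\dagger(\varphi)\Omega=\varphi$ and the identity $z^\dagger(\varphi_1)\varphi_2=\sqrt{2}\,P_2(\varphi_1\otimes\varphi_2)$ being immediate) yields $z^\dagger(\varphi_1)\cdots z^\dagger(\varphi_n)\Omega=\sqrt{n!}\,P_n(\varphi_1\otimes\cdots\otimes\varphi_n)$. This delivers the middle equalities in both lines, the incoming case following verbatim with the order of the factors reversed.

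The substantive step is to show that these vectors coincide with the collision states $(f_1^+\times\cdots\times f_n^+)_{\text{out}}$ and their incoming counterparts. Here I would follow the strategy of \cite{BBS01}, replacing the quasi-local Haag--Ruelle approximants by the time-smeared wedge-local fields $\phi(f_{j,t})$ themselves, which is legitimate because $\phi$ is a temperate polarization-free generator: $\phi(f_{j,t})\Omega=\phi(f_j)\Omega$ is a $t$-independent single-particle vector, and the number bounds (\ref{numberBounds}) keep all intermediate norms under control. The argument then reduces to proving that $\tfrac{d}{dt}\,\phi(f_{1,t})\cdots\phi(f_{n,t})\Omega\to 0$ in norm as $t\to\pm\infty$, obtained by commuting the annihilation part of each field past the creation parts of the others and estimating the resulting terms.

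The hard part will be controlling these cross terms with only wedge-locality available rather than full locality. The decisive input is the velocity ordering $f_1\prec\cdots\prec f_n$: the wave packets $g_{j,t}$ concentrate on $t\,\mathcal{V}_{m(q_j)}(g_j)$ for large $|t|$, and the ordering arranges these supports so that the relevant pairs of fields become localized in opposite wedges. Affiliation of $\phi$ with the left-wedge algebras $\mathcal{F}(W_L+\cdot)$ then makes the commutators vanish in the limit, playing the role that spacelike-separated, rapidly decaying commutators play in the standard Haag--Ruelle proof. I expect the bulk of the technical effort to lie in making this asymptotic separation precise and in showing the remainder estimates are integrable in $t$, exactly as in the scalar treatment but now keeping track of the $\mathcal{K}$-indices and the matrix-valued exchange relations (\ref{exchange}).

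Finally, for totality I would argue that $P_n$ is a bounded surjection of $\mathscr{H}_1^{\otimes n}$ onto $\mathscr{H}_n$, so it suffices that the vectors $f_1^+\otimes\cdots\otimes f_n^+$ be total in $\mathscr{H}_1^{\otimes n}$. Since single-particle states $f^+$ with $\widetilde{f}$ supported in $\overline{V}_+$ are dense in $\mathscr{H}_1$, the product vectors are total; the velocity-ordering constraint is harmless because $P_n$ symmetrizes, so any ordering generates the same closed span and one may cover $\mathscr{H}_1$ by arbitrarily fine, orderable velocity windows. Hence the outgoing, and likewise the incoming, $n$-particle states span a dense subspace of $\mathscr{H}_n$, which is the asserted asymptotic completeness.
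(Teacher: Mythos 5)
Your algebraic half is correct and matches what actually happens: forward-cone support kills $f_j^-$, so $\phi(f_j)=z^\dagger(f_j^+)$, and the induction giving $z^\dagger(f_1^+)\cdots z^\dagger(f_n^+)\Omega=\sqrt{n!}\,P_n(f_1^+\otimes\cdots\otimes f_n^+)$ is routine. (Note also that the thesis itself does not prove this proposition; it explicitly defers to \cite{LS}, so the benchmark is the argument of \cite{LS} and \cite{L08}.) But your identification step contains a genuine flaw. First, your Cook-type scheme is vacuous here: since $f^{\pm}_{j,t}$ restricted to the mass shell is $t$-independent and $f_{j,t}^-=0$, each $\phi(f_{j,t})=z^\dagger(f_j^+)$ is exactly $t$-independent, so $\tfrac{d}{dt}\phi(f_{1,t})\cdots\phi(f_{n,t})\Omega\equiv 0$ and the "limit" is trivially the vector you started with --- this proves nothing about its being an asymptotic state of the \emph{local} theory. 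Second, the mechanism you invoke to control cross terms cannot exist: \emph{all} the fields $\phi(f_{j,t})$ are localized in translated left wedges, and two left wedges $W_L+x$, $W_L+y$ are never spacelike separated (one always contains the other), so no pair of your operators ever becomes "localized in opposite wedges"; indeed $\phi$ is not local relative to itself unless $S$ is trivial (Theorem \ref{theo} $vii)$). Third, the appeal to \cite{BBS01} overreaches: temperate polarization-free generators yield scattering states only for \emph{two} particles; for $n\geq 3$ wedge-local generators alone are insufficient, and the whole content of the proposition is that the wedge-field vectors coincide with Haag--Ruelle states built from genuinely local operators $A_q(g_{t,m(q)})$ with $A_q\in\mathcal{F}(\mathcal{O})$, whose very existence rests on the Reeh--Schlieder property supplied by Theorem \ref{mainTheorem}.

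The argument of \cite{LS} repairs exactly this point by induction in $n$: one takes the Haag--Ruelle state $(f_2^+\times\cdots\times f_n^+)_{\text{out}}$ built from quasi-local operators, and shows that for $t\to+\infty$ the ordering $f_1\prec\cdots\prec f_n$ pushes the essential localization regions $\mathcal{O}+t\,\mathcal{V}(f_j)$, $j\geq 2$, into a right wedge spacelike to $W_L+\mathrm{supp}\,f_1$; since $\phi(f_1)$ is affiliated with the left-wedge algebra and creates a single-particle vector from $\Omega$, it can be commuted through, giving $\phi(f_1)\,(f_2^+\times\cdots\times f_n^+)_{\text{out}}=(f_1^+\times\cdots\times f_n^+)_{\text{out}}$, and one iterates. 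So the pairing that makes locality available is wedge field versus local operators, not wedge field versus wedge field. Your totality argument for asymptotic completeness is in the right direction but should be sharpened via the simplex: an S-symmetric $\Psi_n$ orthogonal to all ordered products $P_n(f_1^+\otimes\cdots\otimes f_n^+)$ vanishes a.e.\ on $\{\boldsymbol{\theta}:\theta_1>\cdots>\theta_n\}$, hence everywhere by S-symmetry; the loose claim that "$P_n$ symmetrizes, so any ordering generates the same closed span" is not literally correct, since $P_n$ implements the $S$-dependent symmetrization $D_n$, not permutation invariance of the arguments.
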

By means of these explicit formulae for the collision states, the scattering matrix $\mathbf{S}$ can be computed. It is defined as the product of
the M\o ller operators $W_{\text{in/out}}$ which map the totally symmetrized Bose Fock space $\mathscr{H}^+=\bigoplus_{n=0}^{\infty}\mathscr{H}_n^+$ over $\mathscr{H}_1$, spanned by the asymptotic scattering states, to $\mathscr{H}$. According to (\ref{propEqnLechner}), these operators are given by
\begin{equation}\label{moller}
\begin{aligned}
W_{\text{out}}P_n^+\left(f_1^+\otimes\cdots\otimes f_n^+\right)&=P_n\left(f_1^+\otimes\cdots\otimes f_n^+\right),\\
W_{\text{in}}P_n^+\left(f_n^+\otimes\cdots\otimes f_1^+\right)&=P_n\left(f_n^+\otimes\cdots\otimes f_1^+\right),
\end{aligned}
\end{equation}
where $P_n^+$ is the orthogonal projection onto $\mathscr{H}_n^+$ and $f_1\prec\dots\prec f_n$. $W_{\text{out}}$ and $W_{\text{in}}$ are obviously well-defined linear operators with dense domains and ranges. Moreover, they extend to unitaries because
\begin{equation*}
\|P_n^+\left(f_1^+\otimes\cdots\otimes f_n^+\right)\|=\|P_n\left(f_1^+\otimes\cdots\otimes f_n^+\right)\|=\frac{1}{\sqrt{n!}}\|f_1^+\|\cdots\|f_n^+\|,
\end{equation*}
due to the ordering of the supports of the $f_j^+$. The product of these M\o ller operators gives the S-Matrix
\begin{equation*}
\mathbf{S}:=W_{\text{out}}\,^*W_{\text{in}}:\mathscr{H}^+\rightarrow\mathscr{H}^+,
\end{equation*}
considered as an operator on the Bose Fock space.
\begin{theorem}[\cite{LS}]\label{TheoremLS}
The model with S-Matrix $S\in\mathcal{S}_0^-$, defined in Section \ref{ModSection}, solves the inverse scattering problem. That is, the scattering operator $\mathbf{S}$ of the model coincides with the tensor corresponding to $S$.
\end{theorem}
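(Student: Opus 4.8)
The plan is to reduce the computation of $\mathbf{S}=W_{\text{out}}^*W_{\text{in}}$ to the explicit descriptions of in- and out-states that are already available. First I would fix test functions $f_1,\dots,f_n\in\mathscr{S}(\mathbb{R}^2)\otimes\mathcal{K}$ whose Fourier transforms are supported in the forward light cone and whose velocity supports are strictly ordered, $f_1\prec\dots\prec f_n$, so that Proposition \ref{propLechner} applies and furnishes the outgoing and incoming $n$-particle states as $\sqrt{n!}\,P_n(f_1^+\otimes\cdots\otimes f_n^+)$ and $\sqrt{n!}\,P_n(f_n^+\otimes\cdots\otimes f_1^+)$, respectively, cf. (\ref{propEqnLechner}). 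Since Theorem \ref{mainTheorem} guarantees (under Conjecture \ref{conj}) that $\Omega$ is cyclic for the local algebras $\mathcal{F}(\mathcal{O})$ above the minimal size, there are enough quasi-local creation operators to realise arbitrary one-particle wave functions of the prescribed velocity supports; by the totality statement of Proposition \ref{propLechner} the vectors of the above type are total in $\mathscr{H}_n$, so it suffices to identify $\mathbf{S}$ on them.

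Next I would exploit the defining relations (\ref{moller}) of the M\o ller operators together with the total symmetry of the Bose projection $P_n^+$. Because $P_n^+$ is the \emph{full} symmetrizer on $\mathscr{H}^+$, one has $P_n^+(f_n^+\otimes\cdots\otimes f_1^+)=P_n^+(f_1^+\otimes\cdots\otimes f_n^+)$, so a single Bose-Fock vector represents both the incoming and the outgoing configuration. Applying (\ref{moller}) then yields, for a second ordered family $g_1\prec\dots\prec g_n$,
\begin{equation*}
\langle P_n^+(g_1^+\otimes\cdots\otimes g_n^+),\,\mathbf{S}\,P_n^+(f_1^+\otimes\cdots\otimes f_n^+)\rangle
=\langle P_n(g_1^+\otimes\cdots\otimes g_n^+),\,P_n(f_n^+\otimes\cdots\otimes f_1^+)\rangle ,
\end{equation*}
where I moved $W_{\text{out}}$ and $W_{\text{in}}$ onto the two factors. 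The free (non-interacting, $S=\mathbb{1}$) situation would replace both $P_n$ by $P_n^+$ and give $\mathbf{S}=\mathbb{1}$; the interaction therefore resides entirely in the replacement of the Bose symmetrizer by the $S$-symmetrizer $P_n$ and in the reversed ordering of the right-hand tensor.

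The heart of the argument is to evaluate this inner product. Using $P_n^2=P_n=P_n^*$ I first collapse one projection, and then rewrite the anti-ordered tensor $f_n^+\otimes\cdots\otimes f_1^+$ as the geometric reversal $V(\iota)(f_1^+\otimes\cdots\otimes f_n^+)$ of the ordered one, where $\iota:k\mapsto n+1-k$ is the longest element of $\mathfrak{S}_n$. Since by (\ref{tensor}) the representative $D_n(\iota)$ differs from this geometric reversal exactly by multiplication with the unitary tensor $S_n^\iota(\boldsymbol{\theta})$, and since the intertwining identity $P_nD_n(\pi)=P_n$ for all $\pi\in\mathfrak{S}_n$ follows directly from (\ref{Pn}), the factor $S_n^\iota$ is precisely what survives. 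By construction (\ref{reprpermugroup}) this tensor is the ordered product of two-particle amplitudes $S$ attached to the transpositions composing $\iota$, i.e.\ the factorized scattering kernel of Section \ref{Sectionfactorizing}. Reading the matrix element off against arbitrary ordered $g$ then identifies $\mathbf{S}$ restricted to $\mathscr{H}_n^+$ with multiplication by the tensor corresponding to $S$; summing over $n$ and invoking totality of the scattering states extends the identity to all of $\mathscr{H}^+$.

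The main obstacle I expect is the combinatorial bookkeeping needed to verify that the tensor $S_n^\iota$ produced by the reversal permutation coincides, including all index placements and ordering conventions, with the kinematic factorized scattering tensor built from the two-particle amplitude $S$ — that is, matching the dynamically computed object against the purely kinematic one, for which the PCT relation (\ref{PCT}) and the Yang-Baxter consistency of $\mathcal{S}_0^-$ must be used to guarantee that the product is independent of the chosen reduced word for $\iota$. One must also check carefully that the velocity-ordering requirement $f_1\prec\dots\prec f_n$ is compatible with the opposite operator orderings in the in- and out-formulas (the physical fact that fast particles sit to the right at late times and to the left at early times), and that passing from these ordered configurations to all of $\mathscr{H}_n^+$ is justified by the asymptotic completeness already established in Proposition \ref{propLechner}.
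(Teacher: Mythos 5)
Your proposal is correct and follows essentially the same route as the proof the paper relies on: the paper does not reprove Theorem \ref{TheoremLS} but defers to \cite{LS}, whose argument proceeds exactly as you outline — via the explicit in/out states of Proposition \ref{propLechner}, the M\o ller operators (\ref{moller}), the total symmetry of $P_n^+$, and the identification of the factor $S_n^\iota$ arising from the reversal permutation (using $P_nD_n(\pi)=P_n$) with the factorizing kernel built from $S$. The caveats you flag, namely the ordering conventions for the disjoint rapidity supports and the bookkeeping needed because the multiplication operator $S_n^\iota$ does not commute with $P_n$, are precisely the technical points handled in \cite{LS}, so nothing essential is missing.
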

This theorem was proven in \cite{LS} under the assumption of cyclicity of $\Omega$ for $\mathcal{F}(\mathcal{O})$. We do not repeat the proof here. However, we emphasize that both this result and the previous Proposition \ref{propLechner} essentially rely on the conclusions which can be drawn from, respectively are stated in Theorem \ref{mainTheorem} of the previous section.\par
Theorem \ref{TheoremLS} does not only assure that the above construction solves the inverse scattering problem but also justifies the heuristic approach to Zamolodchikov's algebra in a rigorous way. Namely, by starting from the wedge-local field $\phi$, in terms of which the Zamolodchikov operators $z^{\#}(\theta)$ obtain a spacetime interpretation, a family of asymptotically complete models emerged, complying with the concept of factorized scattering as proven by Haag-Ruelle-Hepp scattering theory.

\section{Examples}\label{Examples}
To conclude this chapter, we want to give concrete examples of S-matrices $S$ contained in the set $\mathcal{S}_0^-$. The results obtained above do not rely on an explicit form of $S$. However, in order to relate our approach of constructing models within the framework of Algebraic Quantum Field Theory to the more traditional Lagrangian one, concrete examples are of particular interest. Of course, the construction presented here gives rise to a large class of models to which a Lagrangian description or a classical counterpart is not known. On the other hand, there are indeed important integrable models to which an exact S-matrix is available also in the Lagrangian setting. These S-matrices are then obtained by exploiting conservation laws, by taking symmetries of the quantized theory and analyticity assumptions into account and by comparing with perturbative results \cite{AAR, dorey1997exact, mussardo1992off}.\\

The first class of examples we want to discuss is the simplest one, namely that of scalar-valued S-matrices, also referred to as scattering functions. In this case $\dim\mathcal{K}=1$ and the mass spectrum consists of a single mass $m>0$ only. That is, we are dealing with theories which describe a single species of neutral massive
particles. For this specific setting some of the constraints on $S\in\mathcal{S}_0$ are trivially fulfilled. These are conditions $3.)$, $5.)$, $6.)$ and $7.)$ appearing in Definition \ref{S-matrixDefinition}. In particular, the non-appearance of the Yang-Baxter equation simplifies the structure of $S$ significantly. In fact, its general form can be worked out. Having, first of all, regard to the family $\mathcal{S}$ of S-matrices as defined in \ref{S-matrixDefinition}, then we have \cite[Proposition 3.2.2]{DocL}
\begin{equation}\label{SetScalar}
\mathcal{S}=\{\zeta\mapsto\varepsilon\cdot e^{ia\sinh\zeta}\cdot\prod_{j}\frac{\sinh\beta_j-\sinh\zeta}{\sinh\beta_j+\sinh\zeta}:\,\varepsilon=\pm 1,\,a\geq 0,\,\{\beta_j\}\in\mathfrak{B}\},
\end{equation}
where the set $\mathfrak{B}$ consists of finite or infinite sequences $\{\beta_j\}\subset\mathbb{C}$ which satisfy
\begin{itemize}
\item[1.)] $0<\,\text{Im}\,\beta_j\leq\tfrac{\pi}{2}$,
\item[2.)] $\beta_j$ and $-\overline{\beta_j}$ appear the same (finite) number of times in the sequence $\{\beta_j\}$,
\item[3.)] no subsequence of $\{\beta_j\}$ has a finite limit,
\item[4.)] $\sum_k\,\text{Im}\,\tfrac{1}{\sinh\beta_j}<\infty$.
\end{itemize}
In case of $a=0$ and $\{\beta_j\}$ being a finite sequence, we obtain the subfamily of regular scattering functions $\mathcal{S}_0$ and by choosing the correct sign at zero rapidity we arrive at $\mathcal{S}_0^-$. A prominent element in the latter set is the scattering function of the Sinh-Gordon model which is the integrable model defined by the Lagrangian
\begin{equation}\label{ShGLagrangian}
\mathcal{L}_{\text{ShG}}=\frac{1}{2}\partial_\mu\phi(x)\partial^\mu\phi(x)-\frac{m^2}{g^2}\cosh\left(g\phi(x)\right),
\end{equation}
where $g\in\mathbb{R}$ is the coupling constant. Due to results obtained within perturbation
theory, the scattering function of this model is expected to be \cite{Arinshtein1979389, braden1991s}
\begin{equation}\label{ShGSMatrix}
S_{\text{ShG}}(\theta)=\frac{\sinh\theta-i\sin\tfrac{\pi g^2}{4\pi+g^2}}{\sinh\theta+i\sin\tfrac{\pi g^2}{4\pi+g^2}}.
\end{equation}
\\

Proceeding to the matrix-valued case, i.e. $\dim\mathcal{K}>1$, another class of examples is of particular interest, namely that of so-called diagonal solutions\footnote{See \cite{jimbo1986quantumr} for similar solutions emerging in the context of Toda systems.} also regarded in e.g. \cite{liguoriLetters}. In the corresponding models a spectrum of $N\in\mathbb{N}$ neutral particles of the same mass is considered. More precisely, we have $\mathcal{K}=\mathbb{C}^N$, $\overline{\alpha}=\alpha$ and $m_{[\alpha]}=m$, $\alpha=1,\dots,N$. Introducing continuous bounded functions $\omega_{\alpha\beta}:\overline{S(0,\pi)}\rightarrow\mathbb{C}$ which are analytic in $S(0,\pi)$ and satisfy
\begin{equation}
\overline{\omega_{\alpha\beta}(\theta)}=\omega_{\alpha\beta}(\theta)^{-1}=\omega_{\beta\alpha}(-\theta)=\omega_{\beta\alpha}(i\pi+\theta),
\end{equation}
the S-matrix defined by
\begin{equation}\label{diagS}
S_D(\theta)^{\alpha\beta}_{\gamma\eta}:=\omega_{\alpha\beta}(\theta)\delta^\alpha_\eta\delta^\beta_\gamma,\qquad\text{(no summation over $\alpha,\beta$)},
\end{equation}
meets all the requirements of Definition \ref{S-matrixDefinition} and belongs to the set $\mathcal{S}_0$. This follows, on the one hand, from the constraints on the functions $\omega_{\alpha\beta}$, which ensure the analytic properties as well as those of unitarity, hermitian analyticity and crossing symmetry. Due to its special form (\ref{diagS}), on the other hand, $S$ is a solution of the Yang-Baxter equation and complies with the conditions of translational-, PCT- and gauge invariance. Choosing, additionally, $\omega_{\alpha\beta}(0)=-1$, we have $S_D\in\mathcal{S}_0^-$.\par
Finally, note that the scattering functions, i.e. the scalar-valued S-matrices, discussed above constitute explicit examples for the functions $\omega_{\alpha\beta}$ if we put $\omega_{\alpha\beta}=\omega_{\beta\alpha}$.\\

The most prominent matrix-valued S-matrices, belonging to the family $\mathcal{S}_0^-$ of regular S-matrices satisfying $S(0)=-F$, are those corresponding to the $O(N)$-invariant nonlinear sigma-models in two spacetime dimensions. The above construction which takes place within the framework of Algebraic Quantum Field Theory is, in fact, the first rigorous one yielding these interesting and multifaceted models up to a very plausible conjecture. Due to their great importance, particularly in connection with four-dimensional non-Abelian gauge theories, we want to stress their accessibility within our approach by devoting to them the next chapter.


\chapter{$O(N)$-Invariant Nonlinear $\sigma$-Models} 

\label{Chapter4} 
\fancyhead[LE,RO]{\thepage}
\fancyhead[LO]{\thesection. \emph{\rightmark}}
\fancyhead[RE]{Chapter 4. \emph{$O(N)$-Invariant Nonlinear $\sigma$-Models}}
\renewcommand{\chaptermark}[1]{ \markboth{#1}{} }
\renewcommand{\sectionmark}[1]{ \markright{#1}{} }

In the previous chapter we introduced a method to construct a large class of integrable quantum field theories on two-dimensional Minkowski space  by means of operator-algebraic techniques. The main input into the construction is a factorizing S-matrix. Concrete models arising within this inverse scattering approach are $O(N)$-invariant nonlinear $\sigma$-models to which we devote this chapter.
\section{General Overview}
$O(N)$-invariant nonlinear sigma-models in two dimensions can be viewed as theoretical laboratories for studying more realistic theories. They have extensive applications in experimentally-realizable condensed matter systems due to their integrability, and they share many common features with four-dimensional non-Abelian gauge theories, such as (conjectured) asymptotic freedom, instanton solutions or renormalizability. Therefore, there has been a lot of interest in these models and they have been analyzed from various points of view.\par
Classically, the models describe the interaction of (spin) fields $\phi=(\phi_1,\dots,\phi_N)$ which take values in the $(N-1)$-dimensional unit sphere $S^{N-1}$, i.e. $\phi\cdot\phi=1$. The dynamics is governed by the action
\begin{equation}
\widehat{S}=\frac{1}{2g^2}\int d^2x \,\partial_\mu \phi\cdot\partial_\mu \phi,
\end{equation}
where $g$ is a dimensionless coupling constant. Indeed, the Lagrangian is that of the free field, but the mere presence of the constraint $\phi\cdot\phi=1$ implies interaction. A remarkable property of these models is that an infinite number of local \cite{P77} and nonlocal \cite{L78} classical conservation laws survive quantization. The existence of such conserved charges in the quantum theory imply the absence of particle production in scattering processes. Moreover, under the assumption that the theory has a mass gap and the spectrum consists of one stable $O(N)$–vector multiplet it is shown in \cite{L78} that the two-particle scattering matrix is (up to CDD ambiguities) the one previously proposed by the Zamolodchikov brothers \cite{Zam78} for general $N\geq 3$. The postulate of a mass gap, however, is
supported by the well-known fact that, in contrast to the $O(N)$-symmetry, the scale invariance of the classical theory is broken after quantization by the conformal anomaly. The expressions for the factorizing S-matrix have also been verified to $O(1/N^2)$ in the $1/N$-expansion \cite{Berg:1978}. In the special case of the $O(3)$-model numerical simulations exist \cite{LW90}, confirming Zamolodchikovs' result for low energies. In particular, the data was in accordance with the nonperturbative property that at zero energy the S-matrix is repulsive, i.e. $S(0)=-F$. Furthermore, the absence of bound states, assumed in the construction of the S-matrix, was shown to hold again by means of the $1/N$-expansion \cite{BaLS76} and also in a semi-classical approach \cite{BZ77}.\par
In 1976 Brézin, Zinn-Justin and Le Guillou proved in the framework of dimensional regularization that the $O(N)$ sigma-models are perturbatively renormalizable \cite{BZG76}. However, this result relies on the introduction of a symmetry breaking term which makes the theory infrared finite. Nevertheless, Elitzur's conjecture \cite{Eli83}, that $O(N)$-invariant correlation functions are infrared finite order by order in perturbation theory, was proven later on by David \cite{Dav81}.\par
Regarding $O(N)$-invariant $\sigma$-models as toy analogues of QCD, one is interested in the property of asymptotic freedom. Again perturbative results \cite{P75} state that the models under consideration indeed exhibit this feature. However, these perturbative findings have faced serious objections in the last decades. The existence of superinstanton solutions in these models were shown to be closely related to the failure of perturbation theory to produce unique results \cite{Sei1P95}. This is, of course, only one point of criticism. In fact, since no rigorous proof for or against asymptotic freedom exists, there is an on-going controversy concerning this topic and we refer to \cite{Sei03} for an overview about this discussion.\par
The obvious necessity of a nonperturbative approach to the $O(N)$ sigma-models led to the attempt to compute quantities, such as $2$-point functions of local operators, within the form factor program \cite{Smir92, KW78}. There one starts from the knowledge of the stable particle spectrum and their S-matrix. This inverse scattering point of view is indeed a more convenient concept as it bypasses all the problems related to the quantization of the classical Lagrangian, perturbation theory and renormalization. For the case $N=3$ several form factors are explicitly known \cite{BN97}. However, the $n$-point functions are given as an infinite series of integrals over form factors and one has to investigate its convergence. Presently, only extrapolating results, based on the explicitly known form factors, suggest the convergence of this series.\par
Despite the extensive analysis performed for these QCD toy models, no mathematically sound description can be found in the literature. However, the Zamolodchikov S-matrix can be shown to comply with the requirements of Definition \ref{S-matrixDefinition}. Hence, a rigorous construction of $O(N\geq 3)$-invariant nonlinear $\sigma$-models on two-dimensional Minkowski space can be achieved by means of operator-algebraic techniques.
\section{Construction in AQFT}
In order to make contact with the construction of models carried out in Chapter \ref{Chapter3}, we first clarify the particle spectrum of the $O(N)$ nonlinear $\sigma$-models. To this end, we shall use the notation introduced previously.\par
The theory describes a single species of neutral massive particles with an internal degree of freedom. The global gauge group $G=O(N)$ acts on $\mathcal{K}=\mathbb{C}^N$, $N\geq 3$, via its defining self-conjugate irreducible representation. Hence, we have, in particular, $\overline{\alpha}=\alpha$ and, moreover, $m_{[\alpha]}=m$, $\alpha=1,\dots,N$.\par
The derivation of the $O(N)$ nonlinear $\sigma$-model S-matrix relies on the existence of a stable $O(N)$-vector multiplet of massive particles with equal masses $m$. As shown by the Zamolodchikov brothers \cite{Zam78}, by exploiting the $O(N)$-symmetry, the corresponding S-matrices can been determined up to CDD ambiguities and the maximal analytic solutions are of the form
\begin{equation}\label{s-matrixsigma}
\begin{aligned}
S_N(\theta)^{\alpha\beta}_{\gamma\eta}&=\sigma_1(\theta)\delta^{\alpha}_\beta\delta^\gamma_\eta+\sigma_2(\theta)\delta^\alpha_\gamma\delta^\beta_\eta+\sigma_3(\theta)\delta^\alpha_\eta\delta^\beta_\gamma,\\
&=\sigma_1(\theta)P^{\alpha\beta}_{\gamma\eta}+\sigma_2(\theta)\left(1_2\right)^{\alpha\beta}_{\gamma\eta}+\sigma_3(\theta)F^{\alpha\beta}_{\gamma\eta},
\end{aligned}
\end{equation}
with functions $\sigma_k:\mathbb{R}\rightarrow\mathbb{C}$, $k=1,2,3$, given by
\begin{equation}
\begin{aligned}
\sigma_2(\theta)&=\frac{\Gamma\left(\tfrac{1}{N-2}-i\tfrac{\theta}{2\pi}\right)\Gamma\left(\tfrac{1}{2}-i\tfrac{\theta}{2\pi}\right)\Gamma\left(\tfrac{1}{2}+\tfrac{1}{N-2}+i\tfrac{\theta}{2\pi}\right)\Gamma\left(1+i\tfrac{\theta}{2\pi}\right)}{\Gamma\left(\tfrac{1}{2}+\tfrac{1}{N-2}-i\tfrac{\theta}{2\pi}\right)\Gamma\left(-i\frac{\theta}{2\pi}\right)\Gamma\left(1+\tfrac{1}{N-2}+i\tfrac{\theta}{2\pi}\right)\Gamma\left(\tfrac{1}{2}+i\tfrac{\theta}{2\pi}\right)},\\
\sigma_1(\theta)&=-\frac{2\pi i}{N-2}\cdot\frac{\sigma_2(\theta)}{i\pi-\theta},\\
\sigma_3(\theta)&=\sigma_1(i\pi-\theta).
\end{aligned}
\end{equation}
The operator $P$ is a projection on the one-dimensional $O(N)$-invariant space $\left(\mathbb{C}^N\otimes\mathbb{C}^N\right)^{O(N)}$ given by
\begin{eqnarray}
P:\mathbb{C}^N\otimes\mathbb{C}^N\rightarrow\mathbb{C}^N\otimes\mathbb{C}^N,\qquad P(u\otimes v):=\frac{1}{N}\left(u,v\right)\sum_{k=1}^{N}e_k\otimes e_k,
\end{eqnarray}
where $\{e_k\}_{k=1,\dots,N}$ is some orthonormal basis of $\mathbb{C}^N$. The operator $F$, on the other hand, is the flip operator introduced earlier in Chapter \ref{Chapter3}, i.e.
\begin{equation}
F:\mathbb{C}^N\otimes\mathbb{C}^N\rightarrow\mathbb{C}^N\otimes\mathbb{C}^N,\qquad F(u\otimes v):=v\otimes u.
\end{equation}
It is straightforward to verify that the operators $P,F$ and $1_2$ commute and are linearly independent. In fact, by Theorem 10.1.6 in \cite{goodman2009symmetry} these operators span the endomorphism algebra End$_{O(N)}(\mathbb{C}^N\otimes\mathbb{C}^N)$, consisting of linear transformations on $\mathbb{C}^N\otimes\mathbb{C}^N$ which commute with the group action.\par
Besides this gauge symmetry the S-matrix (\ref{s-matrixsigma}) does further comply with the properties of unitarity, hermitian analyticity, crossing symmetry, translational-, TCP-invariance and is a solution of the Yang-Baxter equation. Moreover, it extends to a bounded and analytic function on the strip $\{\zeta\in\mathbb{C}:-\kappa<\text{Im}\,\zeta<\pi+\kappa\}$, where $\kappa=\frac{2\pi}{N-2}-\varepsilon$ with $\varepsilon>0$. That is, we have
\begin{proposition}[\cite{LS}]
The S-matrix $S_N$ defined in (\ref{s-matrixsigma}) is an S-matrix in the sense of Definition \ref{S-matrixDefinition} for the particle spectrum given by $G=O(N)$, $V_1=id$ and $m>0$. It further belongs to the class $\mathcal{S}_0^-$ of regular S-matrices, satisfying $S_N(0)=-F$.
\end{proposition}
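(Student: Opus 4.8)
The plan is to verify, property by property, that the Zamolodchikov S-matrix $S_N$ of \eqref{s-matrixsigma} satisfies each of the seven conditions in Definition \ref{S-matrixDefinition} together with the regularity requirements of Definition \ref{regularS}, and finally to check the sign condition $S_N(0)=-F$. Most of these checks are structural and exploit the fact that $S_N(\theta)$ is a linear combination of the three commuting, linearly independent operators $P$, $F$ and $1_2$ spanning $\mathrm{End}_{O(N)}(\mathbb{C}^N\otimes\mathbb{C}^N)$. The gauge invariance (condition 7.) is then immediate: since $P$, $F$ and $1_2$ all commute with $V_1(g)\otimes V_1(g)=g\otimes g$ for $g\in O(N)$ by definition of the endomorphism algebra, so does every linear combination of them, hence $[S_N(\theta),V_1(g)\otimes V_1(g)]=0$. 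Likewise, because $\mathcal{K}=\mathbb{C}^N$ carries a single mass $m_{[\alpha]}=m$ and self-conjugate representation $\overline{\alpha}=\alpha$, the translational invariance (condition 6.) holds trivially, and PCT invariance (condition 5.) reduces to a symmetry of the index contractions of $P$, $F$, $1_2$ that one reads off directly.

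First I would establish the analytic and boundedness data. The functions $\sigma_1,\sigma_2,\sigma_3$ are built from $\Gamma$-functions and the factor $(i\pi-\theta)^{-1}$; I would locate the poles and zeros of these $\Gamma$-quotients in the relevant strip and confirm that $S_N$ is continuous and bounded on $\overline{S(0,\pi)}$ and analytic in the open strip, as required by the preamble of Definition \ref{S-matrixDefinition}. For the regularity in the sense of $\mathcal{S}_0$, I would then push the analytic continuation to the enlarged strip $S(-\kappa,\pi+\kappa)$ with $\kappa=\tfrac{2\pi}{N-2}-\varepsilon$, showing that no pole of the $\Gamma$-factors and no zero of $\det S_N$ enters before $\mathrm{Im}\,\zeta=\tfrac{2\pi}{N-2}$, which is exactly what fixes $\kappa(S_N)$ via \eqref{kappa}; the uniform bound \eqref{Skappabounded} follows from Stirling's asymptotics for the $\Gamma$-quotient, since $\sigma_2$ stays bounded and $\sigma_1,\sigma_3$ carry the harmless decaying prefactor $(i\pi-\theta)^{-1}$.

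Next I would treat unitarity (1.), hermitian analyticity (2.), crossing symmetry (4.) and the Yang--Baxter equation (3.). Because $P,F,1_2$ commute and satisfy the simple relations $P^2=P$, $F^2=1_2$, $PF=FP=P$ (these I would record explicitly from the definitions of $P$ and $F$), all four identities collapse to scalar functional equations for $\sigma_1,\sigma_2,\sigma_3$. Concretely, unitarity $S_N(\theta)^*=S_N(\theta)^{-1}$ and hermitian analyticity $S_N(-\theta)=S_N(\theta)^{-1}$ become a finite system of relations among the $\sigma_k(\pm\theta)$ on the three-dimensional algebra, verified using the reflection formula $\Gamma(z)\Gamma(1-z)=\pi/\sin(\pi z)$; crossing, $\sigma_3(\theta)=\sigma_1(i\pi-\theta)$ together with the stated symmetry of $\sigma_2$, is essentially built into the definition and matches the index form of condition 4.; and the Yang--Baxter equation, after expanding both sides in the basis of the six relevant tensor monomials in $P,F,1_2$ acting on $\mathcal{K}^{\otimes 3}$, reduces to the classical $O(N)$ Yang--Baxter functional relations among $\sigma_1,\sigma_2,\sigma_3$, which are precisely the constraints that fixed the Zamolodchikov solution in the first place.

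The main obstacle is the Yang--Baxter verification: even though it reduces to scalar identities, it requires carefully organizing the expansion of the triple products $(S\otimes 1)(1\otimes S)(S\otimes 1)$ versus $(1\otimes S)(S\otimes 1)(1\otimes S)$ into the structure constants of the rank-three tensor algebra generated by $P$ and $F$, and then matching coefficients to recover the known relations — this is the one step where the bookkeeping is genuinely delicate rather than routine. A secondary difficulty is the sharp determination of $\kappa(S_N)$: one must check that the relevant pole of $\sigma_1$ (equivalently the zero of $\det S_N$) sits exactly at $\mathrm{Im}\,\theta=\tfrac{2\pi}{N-2}$ and that no nearer singularity obstructs boundedness, so that the claimed $\kappa=\tfrac{2\pi}{N-2}-\varepsilon$ is admissible. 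Finally, evaluating at $\theta=0$ one computes $\sigma_1(0)=\sigma_2(0)=0$ and $\sigma_3(0)=-1$ from the $\Gamma$-values and the explicit prefactors, giving $S_N(0)=\sigma_3(0)F=-F$, which places $S_N$ in $\mathcal{S}_0^-$ and completes the proof.
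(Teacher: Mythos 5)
Your proposal is correct in substance, but note that the paper itself offers no proof of this proposition at all: it is quoted from \cite{LS}, and the surrounding text merely asserts the listed properties together with analyticity and boundedness on the strip $S(-\kappa,\pi+\kappa)$, $\kappa=\tfrac{2\pi}{N-2}-\varepsilon$. Your property-by-property verification is exactly the kind of argument being delegated to the reference, and your structural observations are sound: $P$, $F$, $1_2$ do commute (with $P^2=P$, $F^2=1_2$, $PF=FP=P$), they span $\mathrm{End}_{O(N)}(\mathbb{C}^N\otimes\mathbb{C}^N)$, so gauge invariance is immediate, translational invariance is trivial for a single mass, and PCT invariance is an index symmetry of each of the three tensor structures; unitarity and hermitian analyticity then reduce to scalar identities for the eigenvalues of $S_N$ on the three simultaneous eigenprojections (singlet, symmetric traceless, antisymmetric parts), which follow from the explicit $\Gamma$-quotients.

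Two points in your sketch deserve more care than your phrasing suggests. First, $S_N(0)=-F$ is not a direct evaluation: $\sigma_3(0)=\sigma_1(i\pi)$ is a genuine $0/0$ limit, since $\sigma_2$ has a simple zero at $\theta=i\pi$ (coming from the pole of $\Gamma(\tfrac12+\tfrac{i\theta}{2\pi})$ in the denominator) which cancels the simple pole of $(i\pi-\theta)^{-1}$; computing the limit via $\mathrm{Res}_{z=0}\,\Gamma(z)=1$ and $\Gamma(1+x)=x\,\Gamma(x)$ with $x=\tfrac{1}{N-2}$ yields $\sigma_3(0)=-1$, while $\sigma_2(0)=0$ (again $\Gamma(0)^{-1}=0$) and hence $\sigma_1(0)=0$, as you claim. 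Second, crossing is not fully ``built into the definition'': besides $\sigma_3(\theta)=\sigma_1(i\pi-\theta)$ one needs the companion identity $\sigma_2(i\pi-\theta)=\sigma_2(\theta)$, since crossing permutes the tensor structures in pairs and fixes one of them; this identity is true, but must be checked by substituting $\theta\mapsto i\pi-\theta$ in the $\Gamma$-quotient and observing that numerator and denominator factors are permuted into one another. You allude to ``the stated symmetry of $\sigma_2$'', which is the right ingredient, but the paper never states it explicitly, so in a complete write-up you would have to verify it. With these two computations supplied, and the location of the nearest $\Gamma$-poles at $\mathrm{Im}\,\theta=-\tfrac{2\pi}{N-2}$ (fixing the admissible $\kappa$) together with Stirling bounds for (\ref{Skappabounded}), your plan constitutes a complete proof of the statement the paper takes on faith from \cite{LS}.
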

The significance of the previous proposition lies in the fact that the results of Theorem \ref{mainTheorem} apply also to the very interesting and multifaceted models considered in this chapter. In addition, since $S_N$ (\ref{s-matrixsigma}) is of the form (\ref{form}), Lemma \ref{lem} holds, in particular, for the models at hand. Thus, in view of the special structure of the S-matrix $S_N$ there is strong indication for the validity of Conjecture \ref{conj} in the present setting. We further expect that by exploiting the underlying $O(N)$-symmetry not only Conjecture \ref{conj} may be shown to hold true but also a suitable generalization of Lemma \ref{lem} to $n>2$, involving \textit{unitary} intertwiners, is possible.\par
Concretely, one ansatz for constructing intertwiners is given by reducing the $O(N)$-action. This can be done in a straightforward manner. However, already $\mathbb{C}^3\otimes\mathbb{C}^3\otimes\mathbb{C}^3$ is not multiplicity free, complicating the diagonalization process of the tensors $S_n^\pi$, cf. (\ref{tensor}), $n\geq 3$.\par
On the other hand, one can take advantage of the well studied structure of the endomorphism algebra $\text{End}_{O(N)}((\mathbb{C}^N)^{\otimes n})$ \cite{goodman2009symmetry} and look for multiplicative intertwiners which are elements of this algebra. More precisely, for the corresponding functions one could put
\begin{equation}\label{11}
\mathcal{I}_n(\boldsymbol{\theta}):=\sum_{i=1}^{d}\alpha_i(\boldsymbol{\theta})B_i,\qquad\boldsymbol{\theta}\in\mathbb{R}^n,
\end{equation}
with scalar-valued $\alpha_i$, $i=1,\dots,d=\tfrac{(2n)!}{2^nn!}$ which have to be determined. The operators $B_i$ span $\text{End}_{O(N)}((\mathbb{C}^N)^{\otimes n})$ and are explicitly known by Theorem 10.1.6 in \cite{goodman2009symmetry}. Since $S_N$ is invariant under the action of the group $O(N)$, the tensors $S_n^\pi$ are also of the form (\ref{11}) with appropriate coefficient functions. Moreover, $F_n^\pi$ is an element of $\{B_1,\dots,B_d\}$. Hence, one can easily write down a set of conditions on the functions $\alpha_i$ to ensure the intertwining property
\begin{equation*}
\mathcal{I}_n(\boldsymbol{\theta})\, S_n^\pi (\boldsymbol{\theta})=(-1)^{\text{sign}(\pi)}\,F_n^\pi\,\mathcal{I}_n(\boldsymbol{\theta}^\pi)
\end{equation*}
and additional features such as, for instance, unitarity. In view of (\ref{nichtanalytisch}) nontrivial solutions  to these requirements exist. What remains open is to show the existence of an \textit{analytic} solution which is not clear from the outset. Nevertheless, we expect that this result can be achieved as we have reduced the problem of investigating matrix-valued functions to one analyzing scalar-valued functions. In particular, in the case $n=2$ the functions defined by (\ref{I}) can be realized as in (\ref{11}) since the matrix $\rho$, related to the S-matrix by $S_N(\theta)=-F\cdot e^{2i\rho(\theta)}$, is invariant under the action of $O(N)$. Hence, for $n=2$ there do exist analytic coefficient functions $\alpha_i$ in agreement with Lemma \ref{lem}.\par
In conclusion, due to the underlying $O(N)$-symmetry there are concrete options to verify Conjecture \ref{conj}, which are expected to be effective. Hence, up to this very plausible assumption, our analysis gives rise to the $O(N)$-invariant nonlinear $\sigma$-models in a rigorous way apart from perturbation theory and renormalization. It, moreover, demonstrates the great potential behind the algebraic approach to quantum field theory.


\chapter{Deformations of Quantum Field Theories} 

\label{Chapter5} 
\fancyhead[LE,RO]{\thepage}
\fancyhead[LO]{\thesection. \emph{\rightmark}}
\fancyhead[RE]{Chapter 5. \emph{Deformations of QFTs}}
\renewcommand{\chaptermark}[1]{ \markboth{#1}{} }
\renewcommand{\sectionmark}[1]{ \markright{#1}{} }

In Chapter \ref{Chapter3} we discussed the construction of quantum field theoretic models from an inverse scattering point of view. Thereby, we restricted our attention to two spacetime dimensions. The main intention behind this limitation was the possibility to work with a rather simple class of initial S-matrices, namely those of factorizing type, cf. Section \ref{Sectionfactorizing}. The properties of the S-matrix, in particular, the crossing symmetry, readily yield wedge-local auxiliary fields and one may pass to the von Neumann algebras they generate. Moreover, these fields are examples of temperate polarization-free generators \cite{schroer1999modular, BBS01} and, hence, can be used to calculate two-particle scattering amplitudes in Haag-Ruelle collision theory. One may ask at this point the natural question how to generalize the described procedure to higher spacetime dimensions. Taking a closer look at the construction of the wedge-local objects, one notices that the key ingredient giving rise to the nontrivial models reflects in a ``deformed'' algebraic structure when compared with the free field case. This suggests that a potential generalization to $d>2$ is achieved by a certain modification of the free field theory. These considerations benefit from recently developed construction procedures \cite{grosse2007wedge, buchholz2008warped, buchholz2011warped}, relying on deformation techniques. Thereby, one starts from a well-known model which is subjected to a certain modification. Building on these first examples of deformations as to be understood here, it was shown in \cite{GL, alazzawi2013deformations} that the above considerations can, indeed, be implemented. The applied deformation techniques allow for the nonperturbative construction of new quantum field theoretic models with nontrivial scattering in $d\geq 2$ spacetime dimensions. The corresponding field operators are not localized in bounded regions but are wedge-local. Due to this remaining localization property, scattering theory can be applied and the two-particle scattering matrix can be determined \cite{BBS01}. The form of the resulting scattering operator is very simple. It does not allow for particle creation or momentum transfer in collision processes of particles. Due to the weakened locality requirements this simple structure of the scattering operator does not contradict the no-go theorems \cite{aks1965proof, coleman1967all} discussed in Section \ref{Sectionfactorizing}, which are stated for \textit{local} theories.\par
The fact that deformation procedures give rise to wedge-local, nontrivial models in a nonperturbative way attracted attention. After their first appearance in the framework of quantum field theories on noncommutative Minkowski space \cite{grosse2007wedge}, generalizations to an operator-algebraic setting \cite{buchholz2008warped} were developed and go under the name of \textit{warped convolutions}.  These particular deformations were considered with regard to Wightman field theories and related to a modification of the tensor product on the underlying Borchers-Uhlmann algebra \cite{grosse2008noncommutative}. Further, their connection to Rieffel's deformation quantization \cite{rieffel1993deformation} was investigated in \cite{buchholz2011warped}. Deformations by warped convolutions were also applied to conformal field theories \cite{dybalski2011asymptotic} and, moreover, to quantum field theories on curved spacetimes \cite{dappiaggi2011deformations}. At their basis is the action of the translation group. However, the method of warped convolutions can be extended to actions of the special conformal group as was shown in \cite{much2012wedge} and can also be applied in a nonrelativistic setting \cite{much2014quantum}.\par
Deformations of a more general nature were later on developed in \cite{GL}, providing a generalization of the inverse scattering approach discussed in Chapter \ref{Chapter3} to higher spacetime dimensions. The considered techniques give rise to nontrivial wedge-local theories with particle spectra consisting of only a single species of massive neutral particles. In two spacetime dimensions the two-particle scattering operators of the models obtained by these methods fall into a certain subclass of possible S-matrices $S\in\mathcal{S}$ satisfying the properties \ref{S-matrixDefinition} required for the inverse scattering approach. This subclass involves only scattering \textit{functions} $S$ with value $+1$ at zero rapidity parameter, i.e. $S(0)=1$. We shall show in this and the subsequent chapter that the more interesting class of models, corresponding to S-matrices with $S(0)=-1$ and not fitting into the deformation scheme of \cite{GL}, can be included into the deformation context. Our construction gives, for instance, rise to the Sinh-Gordon model. The respective results were published in \cite{alazzawi2013deformations}.\par
Similar deformation methods were considered in \cite{plaschke2013wedge} for constructing wedge-local fields with anyonic statistics. Further contributions being concerned with deformation procedures in different contexts are \cite{bostelmann2013operator, soloviev2014wedge, much2014relativistic}.\par
In this chapter, we shall be concerned with the deformation of the model of a scalar massive Fermion. Our investigation complements the results of \cite{GL} as explained above. In contrast to the situation existent in \cite{GL}, the starting point of our analysis, the undeformed model, is, except in two spacetime dimensions, nonlocal from the outset. However, remnants of locality can be found as shown in \cite{buchholz2007string}.\par
Moreover, we also comment on the bosonic case. In particular, we point out that the analysis in \cite{GL} relies on unnecessarily restrictive assumptions and that a slightly more general result can be obtained (see Lemma \ref{lem2} below).\par
This chapter is organized as follows. We start by reviewing the first examples of deformations as to be understood here. Then, the model of a scalar massive Fermion is introduced and its main features are collected. The deformation process is carried out in Section \ref{3}, including an analysis of the properties of the deformed model. In particular, we show that it exhibits nontrivial scattering by computing the two-particle collision states.\par
We postpone the discussion of the two-dimensional case to the subsequent Chapter \ref{Chapter6} where we associate the deformed model with integrable models and establish a connection with the construction carried out in Chapter \ref{Chapter3}.

\section{Development and First Examples}\label{Develop}
In this section we introduce the method of deforming quantum field theoretic models as an approach to construct new nontrivial models in a nonperturbative way and review its development.\par
This recent concept is motivated by the study of quantum field theories on noncommutative Minkowski space \cite{doplicher1995quantum, szabo2003quantum}, where the coordinate operators $X_0,\dots,X_{d-1}$, $d\geq 2$, satisfy commutation relations of the form
\begin{equation*}
\left[X_\mu,X_\nu\right] =iQ_{\mu\nu}\cdot 1,\qquad \mu,\nu=0,\dots,d-1,
\end{equation*}
with $Q$ being a numerical, skew-symmetric $(d\times d)$-matrix. The possibility of transferring a scalar free field $\phi_0$ from ordinary Minkowski space to this noncommutative space, as noticed by Grosse and Lechner \cite{grosse2007wedge}, constitutes the first example of a deformation as to be understood in this thesis. The new quantum field $\phi_Q$ arises from a deformation of the creation and annihilation operators representing the canonical commutation relations on Fock space. This procedure is $Q$-dependent and changes the underlying algebraic structure in a nontrivial way. More precisely, in case of the deformed creation operators $a^{*}_Q(p)$, for instance, we have the following exchange relations
$$a^{*}_Q(p)a^{*}_Q(q)=e^{-ipQq}a^{*}_Q(q)a^{*}_Q(p).$$
The fields $\phi_Q$ are a priori neither local nor covariant. However, as shown by Grosse and Lechner in \cite{grosse2007wedge}, under certain requirements on the matrix $Q$ there does exist a family of fields $\phi_Q$ being fully Poincaré covariant and fulfilling locality in a weakened form. The localization regions are in that case wedges, cf. Section \ref{wedges}. Due to this remnant of locality scattering theory can be applied and the two-particle S-matrix can consistently be computed \cite{BBS01}. Explicitly, for the model corresponding to the field $\phi_Q$ one finds a nontrivial, $Q$-dependent scattering matrix of the simple form $e^{ip^{\mu}Q_{\mu\nu}q^{\nu}}$ \cite{grosse2007wedge}. Thus, the models emerging from deformation of the scalar free field can be interpreted as wedge-local, nontrivial quantum field theories on ordinary $d\geq 2$-dimensional Minkowski space.\par
These compelling results attracted the attention of Buchholz and Summers who generalized the deformation concept to that effect that a deformation scheme was developed which can be applied to \textit{any} quantum field theory in its vacuum representation \cite{buchholz2008warped}. Their approach by so-called ``warped convolutions'' gives rise to a covariant, wedge-local quantum field theory whose two-particle S-matrix differs likewise from the initial one by phase factors. Despite the analogue results to \cite{grosse2007wedge} the deformation method by warped convolutions does not originate from noncommutative spacetime. The starting point of the analysis is rather constituted by warped convolution integrals of the form
\begin{equation*}
A_Q:=\int T(Qp)A_0T(Qp)^{-1}dE(p)
=(2\pi)^{-d}\int\int dp\,dx\, e^{-ip\cdot x}T(Qp)AT(Qp)^{-1}T(x).
\end{equation*}
The definition of this oscillatory integral is nontrivial and demands certain regularity properties of the operators $A_0$ acting on a Hilbert space. On the latter there is a unitary strongly continuous representation $T$ of the translation group $(\mathbb{R}^d,+)$. $dE$ denotes the corresponding spectral measure and $Q$ again a skew-symmetric $(d\times d)$-matrix which takes the role of the deformation parameter.\par
In \cite{grosse2008noncommutative}, in a concrete Wightman setting this generalized deformation procedure was shown to be related to a twisted tensor product on the underlying Borchers-Uhlmann algebra $\underline{\mathscr{S}}$ of test functions \cite{borchers1962structure, uhlmann1962definition}. Introducing, namely, a new product on this algebra is equivalent to modifying the representing field operators, i.e. deforming $\phi_0\rightarrow\phi_Q$.\par
Shortly after these insights it was shown in \cite{buchholz2011warped} that deformations by warped convolutions always constitute a representation of Rieffel's deformed algebra. Thereby, Rieffel-deformations are certain alterations of the product in a $C^*$-algebra with a strongly continuous automorphic action of the translations \cite{rieffel1993deformation}.\par
Further progress in the direction of more general deformation procedures beyond warped convolutions and Rieffel-deformations respectively was made by Lechner in \cite{GL}. The considered starting point there is the Borchers-Uhlmann algebra $\underline{\mathscr{S}}$ which is subjected to a deformation. The latter is based on linear homeomorphisms $\rho:\underline{\mathscr{S}}\rightarrow\underline{\mathscr{S}}$ with $\rho(1)=1$ and $\rho(f)^*=\rho(f^*)$, $f\in\underline{\mathscr{S}}$, which endow $\underline{\mathscr{S}}$ with a family of new products $\otimes_{\rho}$ via
$$f\otimes_{\rho}g:=\rho^{-1}(\rho(f)\otimes\rho(g)),\qquad f,g\in\underline{\mathscr{S}}.$$
Requiring a certain compatibility between $\rho$ and a state $\omega$ on $\underline{\mathscr{S}}$, namely
\begin{equation}\label{komp1}
\omega(f\otimes_{\rho}g)=\omega(f\otimes g),\qquad f,g\in\underline{\mathscr{S}},
\end{equation}
the representation spaces arising from GNS construction are identical for the deformed and undeformed case, simplifying the analysis. Moreover, assuming that the deformation maps $\rho$ act multiplicatively in momentum space, i.e.
$$\widetilde{\rho(f)}_n(p_1,\dots,p_n):=\rho_n(p_1,\dots,p_n)\cdot \widetilde{f}_n(p_1,\dots,p_n),$$
the compatibility requirement (\ref{komp1}) for quasi-free, translationally invariant states $\omega$ yields explicit conditions on the functions $\rho_n\in C^\infty(\mathbb{R}^{nd})$, $n\in\mathbb{N}_0$. The representing operators arising from GNS construction are in the deformed case again wedge-local and covariant. The deformed model admits a particle spectrum consisting of only a single species of massive neutral particles. Moreover, there is nontrivial interaction with two-particle S-matrix also being of a simple form and not allowing for particle production or momentum transfer in collision processes. Hence, the constructed models are not physically realistic in higher spacetime dimensions. However, in case $d=2$ this deformation method gives rise to a large class of wedge-local models discussed in Chapter \ref{Chapter3}. In fact, if the construction carried out in \cite{L08} can fully be repaired, see also footnote \ref{footnote}, then the described deformation procedure would yield \textit{local} integrable quantum field theories with scalar-valued S-matrices fulfilling the properties of Definition \ref{S-matrixDefinition} and \ref{regularS}, \cite[Theorem 6.1]{GL}. However, one arrives only at a certain subclass, namely $\mathcal{S}^+_0=\{S\in\mathcal{S}_0:S(0)=+1\}$, of the explicitly known set $\mathcal{S}_0\subset\mathcal{S}$ (\ref{SetScalar}).\par
By combining the presented techniques, in particular, those of \cite{grosse2007wedge} and \cite{GL}, we shall show in this and the following chapter that the more interesting class of models, corresponding to S-matrices in the family $\mathcal{S}^-_0=\{S\in\mathcal{S}_0:S(0)=-1\}$, can also be obtained by deformation methods.

\section{The Model of a Scalar Massive Fermion} \label{fermi}
\subsection{The Model}\label{model}
This section is devoted to the specification of the model which describes a scalar massive Fermion. The model at least goes back to the 1960's and can be found in R. Jost's book \cite[p. 103]{Jost} in connection with weak local commutativity of field operators.\par
In \cite{buchholz2007string}, D. Buchholz and S. Summers studied this model in more detail. In particular, they were interested in the degree of nonlocality of the model and investigated if there are any remnants of locality which have physical significance. We recall those findings which are of particular interest for our purposes.\par
To set the stage, let $\mathscr{H}^-$ denote the antisymmetric Fock space over the one-particle space $\mathscr{H}_{1}$ of a scalar particle of mass $m>0$,
\begin{equation}\label{rc}
\mathscr{H}^-=\bigoplus_{n=0}^{\infty}\mathscr{H}^-_{n},\qquad \mathscr{H}^-_{n}=\mathscr{H}_{1}\wedge\dots\wedge\mathscr{H}_{1},\nonumber
\end{equation}
and $\mathscr{H}_{0}=\mathbb{C}$ consisting of multiples of the vacuum state $\Omega$. Here, we use the following convention
\begin{equation*}
\varphi_1\wedge\dots\wedge\varphi_n:=
\frac{1}{n!}\sum_{\pi\in\mathfrak{S}_n}\sigma(\pi)\varphi_{\pi(1)}\otimes\cdots\otimes\varphi_{\pi(n)},\quad\varphi_i\in\mathscr{H}_{1},
\end{equation*}
where $\sigma(\pi)$ is $+1$ if the permutation $\pi\in\mathfrak{S}_n$ is even and $-1$ if $\pi$ is odd. Furthermore, we use the notation $$\Psi_n(\varphi_1,\dots,\varphi_n)=\sqrt{n!}\,\varphi_1\wedge\dots\wedge\varphi_n,\qquad \varphi_i\in\mathscr{H}_{1}.$$
As usual, we introduce creation and annihilation operators $a^{\#}(\varphi)$ representing the CAR algebra on the Fock space $\mathscr{H}^-$, i.e. for $\varphi,\psi\in \mathscr{H}_1$ we have
\begin{eqnarray*}
\{a^{*}(\varphi),a^{*}(\psi)\}&=&0\\
\{a(\varphi),a(\psi)\}&=&0\\
\{a(\varphi),a^{*}(\psi)\}&=&\langle \varphi,\psi\rangle\cdot 1.
\end{eqnarray*}
In the following, we shall identify the one-particle space $\mathscr{H}_{1}$ with $L^{2}(\mathbb{R}^{d},d\mu(p))$ where $d\geq 2$ and
\begin{equation*}
d\mu(p):=\omega(\vec{p})^{-1}\delta(p^0-\omega(\vec{p}))dp,\quad \omega(\vec{p})=\sqrt{\vec{p}^2+m^2},\quad m>0,\quad p=(p^0,\vec{p})\in\mathbb{R}^d.
\end{equation*}
In this setting, the Fourier transform $\widetilde{f}$ of a function $f\in\mathscr{S}(\mathbb{R}^d)$,
\begin{equation*}
\widetilde{f}(p):=\frac{1}{(2\pi)^{d/2}}\int d^dx f(x) e^{ip\cdot x},
\end{equation*}
restricted to the positive mass shell ${H_{m}^{+}}=\{p=(p^0,\vec{p})\in\mathbb{R}^d: p^0=\omega(\vec{p})\}$ is an element of $\mathscr{H}_{1}$, i.e. $\widetilde{f}|_{H_{m}^{+}}\in\mathscr{H}_{1}$. We shall, further, use the notation
\begin{equation*}
f^{\pm}(p):=\widetilde{f}(\pm p)=\frac{1}{(2\pi)^{d/2}}\int d^dx f(x) e^{\pm ip\cdot x},\qquad p\in H_{m}^{+}.
\end{equation*}
The scalar product in $\mathscr{H}^-_{n}$ is given by
\begin{equation*}
\langle\varphi_n|\psi_n\rangle=\int d\mu(p_1)\cdots d\mu(p_n)\overline{\varphi_n(p_1,\dots,p_n)}\psi_n(p_1,\dots,p_n).
\end{equation*}
The action of the annihilation and creation operators is defined by
\begin{eqnarray*}
\left(a(\varphi)\Psi\right)_{n}(p_{1},\dots,p_{n})&:=&\sqrt{n+1}\int d\mu(p) \overline{\varphi(p)}\Psi_{n+1}(p,p_{1},\dots,p_{n}),\\
\left(a^{*}(\varphi)\Psi\right)_{n}(p_{1},\dots,p_{n})&:= &\frac{1}{\sqrt{n}}\sum_{k=1}^{n}(-1)^{k+1}\varphi(p_{k}) \Psi_{n-1}(p_{1},\dots,\hat{p}_{k},\dots,p_{n}),\\
a^{*}(\varphi)\Omega:=\varphi,&\qquad& a(\varphi)\Omega:=0,
\end{eqnarray*}
where $\varphi\in\mathscr{H}_{1}$, $\Psi\in\mathscr{H}^-$ and $\hat{p}_{k}$ denotes the omission of the variable $p_k$. For further purposes we introduce the operator-valued distributions $a^{\#}(p)$ such that
\begin{equation*}
a(\varphi)=\int d\mu(p) \overline{\varphi(p)}a(p), \qquad a^{*}(\varphi)=\int d\mu(p) \varphi(p)a^{*}(p).
\end{equation*}
Their action is given by
\begin{subequations}\label{a}
\begin{eqnarray}
\left(a(p)\Psi\right)_{n}(p_{1},\dots,p_{n})&=&\sqrt{n+1}\Psi_{n+1}(p,p_{1},\dots,p_{n}),\\
 \left(a^{*}(p)\Psi\right)_{n}(p_{1},\dots,p_{n})&=&\frac{1}{\sqrt{n}}\sum_{k=1}^{n}(-1)^{k+1}\omega(\vec{p}) \delta(\vec{p}-\vec{p_k})\Psi_{n-1}(p_{1},\dots,\hat{p}_{k},\dots,p_{n}).\nonumber\\
\end{eqnarray}
\end{subequations}
The Poincaré group $\mathcal{P}_+^{\uparrow}$ is represented on $\mathscr{H}^-$ in the usual manner by the second quantized continuous unitary representation $U^-$ which leaves $\Omega$ invariant and acts according to
\begin{equation}\label{unitary}
\left(U^-(a,\Lambda)\Psi\right)_n(p_1,\dots,p_n)=e^{i\sum_{k=1}^{n}p_k\cdot a}\,\Psi_n(\Lambda^{-1}p_1,\dots,\Lambda^{-1}p_n),\qquad (a,\Lambda)\in\mathcal{P}_+^{\uparrow}.
\end{equation}
The joint spectrum of the generators of the translation group $U^-(a,1)$ is a subset of the forward light cone $\overline{V_+}$, i.e. the translation group satisfies the spectral condition. In the sequel we shall for simplicity omit the index ``$-$'' and write $U$ instead.\par
Proceeding in a standard way, we introduce an operator-valued distribution $\underline{\phi}:\mathscr{S}(\mathbb{R}^{d})\rightarrow \mathcal{B}(\mathscr{H}^-)$ which is defined by
\begin{equation}\label{feld}
\underline{\phi}(f):=a^{*}(f^{+})+a(\overline{f^{-}}).
\end{equation}
This field operator obviously satisfies canonical anticommutation relations, in particular,
\begin{equation*}
\{\underline{\phi}(f),\underline{\phi}(g)\}=\left(\langle \left(\overline{f}\right)^+|g^+\rangle+\langle \left(\overline{g}\right)^+|f^+\rangle\right)\cdot 1.
\end{equation*}
Furthermore, $\underline{\phi}$ is a weak solution of the Klein-Gordon equation, $\underline{\phi}(f)^{*}\supset\underline{\phi}(\overline{f})$ and it transforms covariantly under the adjoint action of the unitary representation $U$ of the Poincaré group, i.e.
$$U(a,\Lambda)\underline{\phi}(f) U(a,\Lambda)^{-1}=\underline{\phi}(f_{(a,\Lambda)}),\qquad (a,\Lambda)\in\mathcal{P}_+^{\uparrow}, $$
where $f_{(a,\Lambda)}(x):=f\left(\Lambda^{-1}(x-a)\right)$. Note that neither the anticommutator nor the commutator of two field operators $\underline{\phi}(f)$ and $\underline{\phi}(g)$ vanishes for spacelike separated supports of $f$ and $g$. This circumstance is consistent with the spin-statistics theorem \cite{Jost, streater2000pct} and expresses the nonlocality of the field $\underline{\phi}$. In fact, considering the corresponding net of von Neumann algebras $\{\mathcal{R}(\mathcal{O})\}_{\mathcal{O}\subset\mathbb{R}^d}$, with $\mathcal{O}$ being nonempty and open, and where $\mathcal{R}(\mathcal{O})$ is generated by the set of operators
\begin{equation*}
\{\underline{\phi}(f):f\in\mathscr{S}(\mathbb{R}^d),\, \text{supp}f\subset \mathcal{O}\},
\end{equation*}
we have
\begin{equation*}
\mathcal{R}(W)'\cap \mathcal{R}(W')=\mathbb{C}\cdot 1,\qquad W\in \mathcal{W},
\end{equation*}
by \cite[Proposition 3.5]{buchholz2007string}. That is, the model at hand is nonlocal in a strong sense.\\

We now introduce an auxiliary field $\widehat{\phi}:\mathscr{S}(\mathbb{R}^{d})\rightarrow \mathcal{B}(\mathscr{H}^-)$
\begin{equation}\label{hat}
\widehat{\phi}(f):=(-1)^{N(N-1)/2}\underline{\phi}(f)(-1)^{N(N-1)/2}=(a^{*}(f^{+})-a(\overline{f^{-}}))(-1)^{N},
\end{equation}
where $N$ is the particle number operator acting on $\mathscr{H}^-_{n}$ according to $N|_{\mathscr{H}^-_n}=n\cdot 1$. The field $\widehat{\phi}$ has the same properties as $\underline{\phi}$, in particular, it is also nonlocal. It turns out, however, that the fields $\phi$ and $\widehat{\phi}$ are relatively local, i.e. the commutator
\begin{equation}\label{relloc}
[\widehat{\phi}(f),\underline{\phi}(g)]=\left(\langle (\overline{f})^+|g^+\rangle-\langle (\overline{g})^+|f^+\rangle\right)(-1)^{N},
\end{equation}
vanishes for spacelike separated supports of the test functions $f$ and $g$ \cite{buchholz2007string}. More precisely, $\langle (\overline{f})^+|g^+\rangle-\langle (\overline{g})^+|f^+\rangle$ equals zero for spacelike separation of the supports of $f$ and $g$.\par
Moreover, we denote by $\widehat{\mathcal{R}}:\mathcal{O}\mapsto\widehat{\mathcal{R}}(\mathcal{O})$, $\mathcal{O}\subset\mathbb{R}^d$, the net generated by the field $\widehat{\phi}$. In view of the properties of the operators $\underline{\phi}$ and $\widehat{\phi}$, we have
\begin{lemma}[\cite{buchholz2007string}]
The nets $\{\mathcal{R}(\mathcal{O})\}_{\mathcal{O}\subset\mathbb{R}^d}$ and $\{\widehat{\mathcal{R}}(\mathcal{O})\}_{\mathcal{O}\subset\mathbb{R}^d}$ are $\mathcal{P}_+^{\uparrow}$-covariant, nonlocal nets which are relatively local in the sense that
\begin{equation*}
\mathcal{R}(\mathcal{O})\subset\widehat{\mathcal{R}}(\mathcal{O}')'=(-1)^{N(N-1)/2}\mathcal{R}(\mathcal{O}')'(-1)^{N(N-1)/2},
\end{equation*}
for any open $\mathcal{O}\subset\mathbb{R}^d$.
\end{lemma}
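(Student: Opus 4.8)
The plan is to organize everything around the self-adjoint unitary involution $V := (-1)^{N(N-1)/2}$, which acts as the scalar $\pm1$ on each $\mathscr{H}^-_n$ and hence satisfies $V = V^* = V^{-1}$. By (\ref{hat}) the two fields are conjugate, $\widehat{\phi}(f) = V\underline{\phi}(f)V$, so that at the level of nets $\widehat{\mathcal{R}}(\mathcal{O}) = V\mathcal{R}(\mathcal{O})V$ for every open $\mathcal{O}$. The decisive simplification of the Fermionic setting is that the generating operators are already \emph{bounded}, since $\|a^\#(\varphi)\| = \|\varphi\|$ in the CAR representation; thus no exponentiation is needed, and each of $\mathcal{R}(\mathcal{O})$, $\widehat{\mathcal{R}}(\mathcal{O})$ is the bicommutant of a set of bounded operators that is closed under taking adjoints, because $\underline{\phi}(f)^* = \underline{\phi}(\overline f)$ and $\overline f$ has the same support as $f$.

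First I would dispatch covariance and nonlocality. As $U$ is second quantized it leaves each $\mathscr{H}^-_n$ invariant, hence commutes with $N$ and with $V$. Combined with the covariance $U(a,\Lambda)\underline{\phi}(f)U(a,\Lambda)^{-1} = \underline{\phi}(f_{(a,\Lambda)})$ and $\mathrm{supp}\,f_{(a,\Lambda)} = \Lambda\,\mathrm{supp}\,f + a$, this yields $U(a,\Lambda)\mathcal{R}(\mathcal{O})U(a,\Lambda)^{-1} = \mathcal{R}(\Lambda\mathcal{O}+a)$, and conjugation by the $U$-commuting involution $V$ gives the same identity for $\widehat{\mathcal{R}}$. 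For nonlocality I would transport the strong statement \cite[Prop. 3.5]{buchholz2007string}, namely $\mathcal{R}(W)'\cap\mathcal{R}(W') = \mathbb{C}\cdot 1$, to $\widehat{\mathcal{R}}$ via $\widehat{\mathcal{R}}(W)'\cap\widehat{\mathcal{R}}(W') = V\bigl(\mathcal{R}(W)'\cap\mathcal{R}(W')\bigr)V = \mathbb{C}\cdot 1$. Were either net local, spacelike commutativity $\mathcal{R}(W')\subset\mathcal{R}(W)'$ would force $\mathcal{R}(W)'\cap\mathcal{R}(W') = \mathcal{R}(W')$, hence $\mathcal{R}(W')=\mathbb{C}\cdot 1$, which is absurd since $\mathcal{R}(W')$ contains nonscalar field operators.

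The heart of the matter is the relative locality inclusion. I would first record the commutant identity: since $V$ is unitary with $V^*=V$, one has $(V\mathcal{M}V)' = V\mathcal{M}'V$ for any von Neumann algebra $\mathcal{M}$; applied to $\mathcal{M}=\mathcal{R}(\mathcal{O}')$ this gives exactly the right-hand equality $\widehat{\mathcal{R}}(\mathcal{O}')' = V\mathcal{R}(\mathcal{O}')'V$. It then suffices to prove $\widehat{\mathcal{R}}(\mathcal{O}) \subset \mathcal{R}(\mathcal{O}')'$, for conjugating this inclusion by $V$ and inserting the commutant identity yields $\mathcal{R}(\mathcal{O}) \subset \widehat{\mathcal{R}}(\mathcal{O}')'$. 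For the inclusion I invoke (\ref{relloc}): when $\mathrm{supp}\,f\subset\mathcal{O}$ and $\mathrm{supp}\,g\subset\mathcal{O}'$ the supports are spacelike separated, the scalar prefactor $\langle(\overline f)^+|g^+\rangle - \langle(\overline g)^+|f^+\rangle$ vanishes, and therefore $[\widehat{\phi}(f),\underline{\phi}(g)]=0$. Hence the $*$-closed set $\mathcal{B} := \{\widehat{\phi}(f):\mathrm{supp}\,f\subset\mathcal{O}\}$ lies in the commutant of $\mathcal{A} := \{\underline{\phi}(g):\mathrm{supp}\,g\subset\mathcal{O}'\}$; since $\mathcal{A}'$ is a von Neumann algebra, monotonicity of the bicommutant gives $\widehat{\mathcal{R}}(\mathcal{O}) = \mathcal{B}'' \subset (\mathcal{A}')'' = \mathcal{A}' = \mathcal{R}(\mathcal{O}')'$, using $\mathcal{R}(\mathcal{O}')=\mathcal{A}''$.

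The one place I would take genuine care is the bookkeeping of the involution $V$, so that the argument lands on the \emph{untwisted} relative commutation $\widehat{\mathcal{R}}(\mathcal{O})\subset\mathcal{R}(\mathcal{O}')'$ rather than a twisted variant, together with the verification that the generating fields form a $*$-closed family of \emph{bounded} operators so that the passage from commuting generators to commuting algebras via the bicommutant is legitimate. The analytic input — the vanishing of the Pauli--Jordan-type kernel $\langle(\overline f)^+|g^+\rangle - \langle(\overline g)^+|f^+\rangle$ at spacelike separation — is already supplied by (\ref{relloc}), so no further estimates are required, and the remaining steps are routine operator-algebraic manipulations.
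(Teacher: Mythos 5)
Your proposal is correct and takes essentially the same route as the paper and its cited source: conjugation by the self-adjoint unitary $(-1)^{N(N-1)/2}$ (which commutes with the second-quantized representation $U$), boundedness and $*$-closedness of the CAR fields so that commuting generators pass to commuting von Neumann algebras, the vanishing of the scalar kernel $\langle(\overline f)^+|g^+\rangle-\langle(\overline g)^+|f^+\rangle$ at spacelike separation for relative locality, and transport of the wedge-triviality result $\mathcal{R}(W)'\cap\mathcal{R}(W')=\mathbb{C}\cdot 1$ for nonlocality. No gaps; nothing further is needed.
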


\subsection{Modular Structure}
The analysis in \cite{buchholz2007string} revealed that the vacuum vector $\Omega$ is cyclic and separating for the algebras $\mathcal{R}(W)$ and $\widehat{\mathcal{R}}(W)$, where $W$ is any wedge region. Thus, it is possible to determine the modular objects associated with the pairs $(\mathcal{R}(W),\Omega)$ and $(\widehat{\mathcal{R}}(W),\Omega)$, $W\in\mathcal{W}$. It turns out that the modular objects corresponding to $(\mathcal{R}(W),\Omega)$ and those corresponding to $(\widehat{\mathcal{R}}(W),\Omega)$ coincide. The modular operator and conjugation are given by
\begin{equation}\label{modulardata}
\Delta_W=U(\Lambda_W(i))\quad \mathrm{and}\quad J_W=U(j_W)
\end{equation}
respectively, where $\Lambda_W(t)$, $t\in\mathbb{R}$, is the one-parameter group of Lorentz boosts which leave the wedge $W$ invariant and $j_W$ is the reflection across the edge of the wedge $W$. The operator $U(j_W)$ acts according to
\begin{equation}\label{anti}
\left(U(j_W)\Psi\right)_n(p_1,\dots,p_n):=\overline{\Psi_n(-j_Wp_n,\dots,-j_Wp_1)}
\end{equation}
and extends the representation $U$ of $\mathcal{P}_+^{\uparrow}$ to a representation of $\mathcal{P}_+$. Moreover, we have $\mathcal{R}(W)'=\widehat{\mathcal{R}}(W')$.\par
In this setting the modular groups act geometrically correctly as expected from the Bisognano-Wichmann theorem, but as the model is not local the condition of geometric modular action \cite{buchholz2000geometric} is not satisfied, i.e. the modular conjugations do not act geometrically correctly.

\subsection{The 2-Dimensional Case}\label{twodimundef}
As discussed in Section \ref{ModSection} there are powerful tools available in Algebraic Quantum Field Theory which allow for analyzing the content of compactly localized operators in a model under consideration. The split property of wedge algebras is, thereby, of central significance and of crucial importance with regard to the modular nuclearity condition. However, this property cannot hold in more than two spacetime dimensions as can be inferred from an argument by Araki, cf. \cite[Section 2]{buchholz1974product}. Checking locality properties in a similar manner by means of spectral features of the corresponding modular operators without having to rely on the split property for wedges is desirable, but such a method has not been found yet.\par
By restricting the spacetime dimension to two, it is possible to show that for the model at hand the modular nuclearity condition can be verified \cite{lechner2005existence}. To this end, first note that in $d=2$ wedge-locality can be implemented by defining
\begin{subequations}\label{net}
\begin{eqnarray}
\widetilde{\mathcal{R}}(W_R+x)&:=&\{\underline{\phi}(f):f\in\mathscr{S}(W_R+x)\}'',\\
\widetilde{\mathcal{R}}(W_R'+x)&:=& \{\widehat{\phi}(f):f\in\mathscr{S}(W_R'+x)\}'',
\end{eqnarray}
\end{subequations}
where $x\in\mathbb{R}^2$. Due to the properties of the fields $\underline{\phi}$ and $\widehat{\phi}$ it is clear from this definition that the resulting net $\{\widetilde{\mathcal{R}}(W)\}_{W\in\mathcal{W}}$ is wedge-local and transforms covariantly under Poincaré transformations. In more than two dimensions, however, this approach is not meaningful because one could rotate $W_R$ into $W_R'$ and obtain by covariance an algebra $\widetilde{\mathcal{R}}(W_R')$ generated by the field $\underline{\phi}$. But as already discussed above $[\underline{\phi}(f),\underline{\phi}(g)]$ does not vanish at spacelike distances. The 2-dimensional case is special because there are no rotations mapping $W_R$ to $W_R'$.\par
In fact, within the 2-dimensional setting induced by Definition (\ref{net}) both the modular groups and the modular conjugation $J$ act geometrically correctly. Moreover, Haag duality holds, i.e. $\widetilde{\mathcal{R}}(W)'=\widetilde{\mathcal{R}}(W')$, $W\in\mathcal{W}$, \cite{BL4, DocL}.\par
As already mentioned above, it can be shown that the local algebras $\mathcal{A}(\mathcal{O})=\widetilde{\mathcal{R}}(W')\cap \widetilde{\mathcal{R}}(W+x)$, $\mathcal{O}=W'\cap(W+x)$, $x\in W'$, are nontrivial and have the vacuum $\Omega$ as a cyclic vector \cite{ lechner2005existence}. Moreover, by application of Haag-Ruelle-Hepp scattering theory it turns out that the covariant and local net $\mathcal{A}$ describes a Boson with nontrivial scattering operator $\textbf{S}=(-1)^{N(N-1)/2}$. In particular, $\textbf{S}$ is factorizing and corresponds to the two-particle scattering function $S=-1$ \cite{DocL, L08}.

\section{The Deformed Fermionic Model}\label{3}

\subsection{The Deformation Procedure}\label{deformiert}
We shall work within the framework introduced in Section \ref{model} and shall consider any spacetime dimension $d\geq 2$. Motivated by the deformation methods presented in \cite{grosse2007wedge} and \cite[Chap. 4]{GL}, our deformation approach involves first of all an operator-valued function $T_R:\mathbb{R}^{d}\rightarrow \mathcal{B}(\mathscr{H}^-)$ which is defined by
\begin{equation}\label{TR}
\left(T_{R}(x)\Psi\right)_{n}(p_{1},\dots,p_{n}):=\prod_{k=1}^{n}R(x\cdot p_{k})\Psi_{n}(p_{1},\dots,p_{n}),
\end{equation}
with $\Psi\in\mathscr{H}^-$. The function $R$, hereinafter referred to as the deformation function, should satisfy the following conditions
\begin{definition}\label{R}
A deformation function is a continuous function $R:\mathbb{R}\rightarrow\mathbb{C}$ such that the following properties hold:
\begin{description}
    \item[i)]
\begin{equation*}
R(a)^{-1}=\overline{R(a)}
\end{equation*}
\item[ii)] The Fourier transform $\widetilde{R}$ of $R$ is a tempered distribution, i.e. $\widetilde{R}\in\mathscr{S}'$, and has support in $\mathbb{R}_+$, implying that $R$ extends to an analytic function on the upper half plane.
\item[iii)] The extension of $R$ to an analytic function on the upper half plane is continuous on the closure of the latter.
\end{description}
\end{definition}
Note that the first property in Definition \ref{R} yields that $R(a)$ is a phase factor, that is, $\left|R(a)\right|=1$. Therefore, $T_R(x)$ is a unitary operator, i.e. $T_R(x)^{*}=T_R(x)^{-1}$, since by Definition (\ref{TR}) we have
\begin{equation*}
T_R(x)^{*}=T_{\overline{R}}(x),\qquad T_R(x)^{-1}=T_{R^{-1}}(x).
\end{equation*}
The requirements ii) on the Fourier transform $\widetilde{R}$ of $R$ in Definition \ref{R} imply that $R$ extends to an analytic function on the upper half plane due to Theorem IX.16 in \cite{RS2}. In particular, it follows from condition ii) that $R$ is the boundary value in the sense of $\mathscr{S}'$ of a function which is holomorphic in the upper half plane and satisfies polynomial bounds at infinity and at the real boundary. Condition iii) requires that the boundary value is even obtained in the sense of continuous functions.\par
Definition (\ref{TR}) further leads to the conclusion that for arbitrary deformation functions $R$ and $R'$ we have
\begin{equation}\label{multiplikativ}
T_{R}(x)T_{R'}(x)=T_{RR'}(x).
\end{equation}

In addition, we introduce a $(d\times d)$-matrix $Q$ which is skew-symmetric with respect to the Minkowski inner product on $\mathbb{R}^{d}$ and satisfies
\begin{equation}\label{Q}
\Lambda Q\Lambda^{-1}=\left\{\begin{array}{ccc}
Q&\mathit{for}&\Lambda\in\mathcal{L}_+^{\uparrow}\,\,\mathit{with}\,\, \Lambda W_R=W_R\\
-Q&\mathit{for}&\Lambda\in\mathcal{L}_+^{\downarrow}\,\,\mathit{with}\,\, \Lambda W_R=W_R.
\end{array}\right.
\end{equation}
The most general $Q$ satisfying $(\ref{Q})$ is known to be of the form \cite{grosse2007wedge}
\begin{equation}\label{Qform}
Q = 
\left( \begin{array}{cccc}
0 & \kappa &0&0  \\
\kappa & 0  &0&0\\
0&0&0&\kappa '\\
0&0&-\kappa '&0
\end{array} \right),\qquad 
Q = 
\left( \begin{array}{ccccc}
0 & \kappa &0&\cdots&0  \\
\kappa & 0  &0&\cdots&0\\
\vdots&\vdots&\vdots &\ddots&\vdots\\
0&0&0&\cdots &0
\end{array} \right),
\end{equation}
for $d=4$ and $d\neq 4$ respectively and with $\kappa,\kappa '\in \mathbb{R}$. Moreover, we have
\begin{equation}\label{Q'}
\Lambda Q\Lambda^{-1}=\left\{\begin{array}{ccc}
-Q&\mathit{for}&\Lambda\in\mathcal{L}_+^{\uparrow}\,\,\mathit{with}\,\, \Lambda W_R=W_R'\\
Q&\mathit{for}&\Lambda\in\mathcal{L}_+^{\downarrow}\,\,\mathit{with}\,\, \Lambda W_R=W_R'.
\end{array}\right.
\end{equation}
Having introduced the necessary notation, we may now define deformed versions of the operator-valued distributions $a^{\#}(p)$ by
\begin{equation}\label{def1}
a^{*}_{R,Q}(p):=a^{*}(p)T_{R}(Qp)^{*}, \qquad a_{R,Q}(p):=a^{*}_{R,Q}(p)^{*}.
\end{equation}
We shall need the commutation relations of $a^{\#}(p)$ and $T_{R}(x)$, which can be computed very easily. First,
\begin{equation}\label{comm}
a(p)T_{R}(x)=R(x\cdot p)T_{R}(x)a(p),
\end{equation}
which for $x=Qp$ yields that $a(p)T_{R}(Qp)=R(0)T_{R}(Qp)a(p)$ due to the antisymmetry of the matrix $Q$. Taking adjoints, we find from equation (\ref{comm})
\begin{equation*}
a^{*}(p)T_{R}(x)^{*}=\overline{R(x\cdot p)}^{-1}T_{R}(x)^{*}a^{*}(p),
\end{equation*}
respectively
\begin{equation}\label{comm2}
a^{*}(p)T_{R}(x)=R(x\cdot p)^{-1}T_{R}(x)a^{*}(p).
\end{equation}
The deformed creation and annihilation operators, therefore, satisfy the following exchange relations for arbitrary $Q$ and $Q'$
\begin{subequations}\label{exchange1}
\begin{eqnarray}
a^{*}_{R,Q}(p)a^{*}_{R,Q'}(q)&=&-\frac{R(Q'q\cdot p)}{R(Qp\cdot q)}a^{*}_{R,Q'}(q)a^{*}_{R,Q}(p),\\
a_{R,Q}(p)a_{R,Q'}(q)&=&-\frac{R(Q'q\cdot p)}{R(Qp\cdot q)}a_{R,Q'}(q)a_{R,Q}(p),
\end{eqnarray}
\begin{multline}
a_{R,Q}(p)a^{*}_{R,Q'}(q)\\
=\omega(\vec{p})\delta(\vec{p}-\vec{q}) T_{R}(Qp)T_{R}(Q'p)^{*}-
\frac{R(Qp\cdot q)}{R(Q'q\cdot p)}a^{*}_{R,Q'}(q)a_{R,Q}(p).
\end{multline}
\end{subequations}
Thus, as expected, the deformation has changed the underlying algebraic structure.\par
We may now introduce as usual corresponding field operators using the deformed creation and annihilation operators. These deformed field operators $\phi_{R,Q}(f)$ are defined by
\begin{equation}\label{def}
\phi_{R,Q}(f):=a^{*}_{R,Q}(f^{+})+a_{R,Q}(\overline{f^{-}}), \qquad f\in \mathscr{S}(\mathbb{R}^{d}),
\end{equation}
where for $\varphi\in\mathscr{H}_1$
\begin{equation*}
a_{R,Q}(\varphi)=\int d\mu(p) \overline{\varphi(p)}a_{R,Q}(p), \qquad a^{*}_{R,Q}(\varphi)=\int d\mu(p) \varphi(p)a^{*}_{R,Q}(p).
\end{equation*}
Note that if we set the deformation function $R(a)=-1$ for all $a\in\mathbb{R}$, the correspondingly deformed field operators are equal to the auxiliary fields given by Equation (\ref{hat}), i.e.
\begin{equation}\label{R-1}
\phi_{-1}(f)=\widehat{\phi}(f).
\end{equation}
For $R(a)=1$ for all $a\in\mathbb{R}$, one recovers the undeformed field $\underline{\phi}$ given by (\ref{feld}), i.e. $\phi_{1}(f)=\underline{\phi}(f)$.\par
In the same way as in the undeformed case, see Equation (\ref{hat}), we may also consider the auxiliary fields
\begin{equation}\label{phihut}
\widehat{\phi}_{R,Q}(f):=
(-1)^{N(N-1)/2}\phi_{R,Q}(f)(-1)^{N(N-1)/2}.
\end{equation}
Due to (\ref{multiplikativ}) and (\ref{R-1}), however, we have
\begin{equation}\label{phihut2}
\widehat{\phi}_{R,Q}(f)=\phi_{-R,Q}(f).
\end{equation}
In particular, in analogy to (\ref{R-1}) it follows
\begin{equation}\label{R-2}
\widehat{\phi}_{-1}(f)=\phi_{1}(f)=\underline{\phi}(f).
\end{equation}
Due to the unitary equivalence
\begin{equation*}
a^{\#}_{-1}(\varphi)=(-1)^{N(N-1)/2}a^{\#}(\varphi)(-1)^{N(N-1)/2},\qquad \varphi\in\mathscr{H}_1,
\end{equation*}
the operator-valued distributions $a^{\#}_{-1}(p)$ also satisfy canonical anticommutation relations. Furthermore, it is straightforward to check that
\begin{subequations}\label{1}
\begin{equation}
[a(p), a_{-1}(q)] = 0,\qquad [a^{*}(p), a^{*}_{-1}(q)] = 0
\end{equation}
\begin{equation}
[a(p), a^{*}_{-1}(q)] = \{a(p), a^{*}(q)\}(-1)^{N}.
\end{equation}
\end{subequations}
\subsection{Properties of the Deformed Model in $d\geq 2$}\label{resultate}
In the following discussion we are interested in the features of the deformed field operators $\phi_{R,Q}(f)$. To begin with, we investigate domain and hermiticity properties, the Reeh-Schlieder property and the Klein-Gordon equation. Our results are stated in the following proposition.
\begin{proposition}\label{prop1}
Let $R$ be a deformation function in the sense of Definition \ref{R} and let $Q$ be a $(d\times d)$-matrix which is antisymmetric w.r.t. the Minkowski inner product on $\mathbb{R}^{d}$ and satisfies (\ref{Q}) and (\ref{Q'}). Then the deformed field operators $\phi_{R,Q}(f)$, $f\in\mathscr{S}(\mathbb{R}^{d})$, have the following properties:
\begin{description}
    \item[a)] The dense subspace $\mathcal{D}^-\subset\mathscr{H^-}$ of vectors of finite particle number is contained in the domain $\mathcal{D}_{\phi}$ of any $\phi_{R,Q}(f)$. Moreover, $\phi_{R,Q}(f)\mathcal{D}^-\subset\mathcal{D}^-$ and $\phi_{R,Q}(f)\Omega=\underline{\phi}(f)\Omega$.
    \item[b)] For $\Psi\in\mathcal{D}^-$ we have
\begin{equation}
\phi_{R,Q}(f)^{*}\Psi=\phi_{R,Q}(\overline{f})\Psi,
\end{equation}
and $\phi_{R,Q}(f)$ is essentially selfadjoint on $\mathcal{D}^-$ for real $f\in\mathscr{S}(\mathbb{R}^{d})$.
    \item[c)] $\phi_{R,Q}$ is a weak solution of the Klein-Gordon equation, i.e.
\begin{equation}
\phi_{R,Q}\left(\left(\Box+m^2\right)f\right)=0.
\end{equation}
    \item[d)] The Reeh-Schlieder property holds: For any nonempty open $\mathcal{O}\subset\mathbb{R}^d$ the set 
\begin{equation}
\mathcal{D}_{R,Q}(\mathcal{O}):=span\{\phi_{R,Q}(f_1)\cdots\phi_{R,Q}(f_n)\Omega:n\in\mathbb{N}_0, f_1,\dots,f_n\in\mathscr{S}(\mathcal{O})\}
\end{equation}
is dense in $\mathscr{H}^-$.
\end{description}
\end{proposition}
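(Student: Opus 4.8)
The plan is to treat parts \textbf{a)}--\textbf{c)} as direct computations resting on three structural facts about the deformation operator $T_R$ defined in (\ref{TR}): it is unitary (a consequence of $|R|=1$ from Definition \ref{R} i)), it preserves each $n$-particle sector (it acts there by multiplication with the unit-modulus kernel $\prod_k R(x\cdot p_k)$, so it commutes with the number operator $N$), and it commutes with the translations $U(a,1)$ (both being multiplication operators in momentum space). Part \textbf{d)} is a Reeh--Schlieder statement, to be obtained by the standard spectrum-condition/analyticity argument once global cyclicity of $\Omega$ is established.

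For \textbf{a)}, since $T_R(Qp)^{*}$ maps $\mathscr{H}^-_n$ unitarily onto $\mathscr{H}^-_n$, the deformed operators $a^{\#}_{R,Q}(p)$ of (\ref{def1}) raise respectively lower the particle number by exactly one; smearing and the number bounds $\|a^{\#}_{R,Q}(\varphi)\Psi\|\le\|\varphi\|\,\|(N+1)^{1/2}\Psi\|$ (unchanged from the undeformed case because $T_R$ is unitary and commutes with $N$) then give $\mathcal{D}^-\subset\mathcal{D}_\phi$ and $\phi_{R,Q}(f)\mathcal{D}^-\subset\mathcal{D}^-$. For the vacuum identity, the annihilation part of (\ref{def}) kills $\Omega$, while $T_R(Qp)^{*}\Omega=\Omega$ (empty product on the zero-particle sector), so the creation part reduces to $a^{*}(f^{+})\Omega$, which is precisely $\underline{\phi}(f)\Omega$ by (\ref{feld}). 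For \textbf{b)} I would use $a_{R,Q}(p)=(a^{*}_{R,Q}(p))^{*}=T_R(Qp)a(p)$ together with the Fourier relations $(\overline f)^{+}=\overline{f^{-}}$ and $\overline{(\overline f)^{-}}=f^{+}$, valid on $H_m^{+}$, to read off directly that $\phi_{R,Q}(f)^{*}=a_{R,Q}(f^{+})+a^{*}_{R,Q}(\overline{f^{-}})=\phi_{R,Q}(\overline f)$ on $\mathcal{D}^-$; essential self-adjointness for real $f$ follows because the above number bounds make every finite-particle vector entire analytic for $\phi_{R,Q}(f)$, so Nelson's analytic-vector theorem applies to the symmetric operator (alternatively, in the present CAR situation the operators are bounded, since the factors $\prod R$ are symmetric in the arguments that remain antisymmetric). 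Part \textbf{c)} is immediate: $\widetilde{(\Box+m^2)f}(p)=(m^2-p\cdot p)\widetilde f(p)$ vanishes on the mass shell $p\cdot p=m^2$, so $((\Box+m^2)f)^{\pm}=0$ as elements of $\mathscr{H}_1$ and hence $\phi_{R,Q}((\Box+m^2)f)=0$.

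For \textbf{d)} I would first record translation covariance, $U(a)\phi_{R,Q}(f)U(a)^{-1}=\phi_{R,Q}(f_a)$, which holds because $U(a)$ and $T_R(Qp)$ commute (both multiplication operators) and $U(a)a^{\#}(p)U(a)^{-1}=e^{\pm ip\cdot a}a^{\#}(p)$, so that the phase $e^{ip\cdot a}$ correctly converts $f^{\pm}$ into $(f_a)^{\pm}$. The argument then splits into global cyclicity and the localization step. For global cyclicity ($\mathcal{O}=\mathbb{R}^d$), projecting $\phi_{R,Q}(f_1)\cdots\phi_{R,Q}(f_n)\Omega$ onto the top sector $\mathscr{H}^-_n$ leaves only the all-creation term $a^{*}_{R,Q}(f_1^{+})\cdots a^{*}_{R,Q}(f_n^{+})\Omega$, which is a ``twisted antisymmetrization'' of $f_1^{+}\otimes\cdots\otimes f_n^{+}$ with a kernel built from the unit-modulus factors $R(Qp_i\cdot p_j)$; since $\{f^{+}\}$ is dense in $\mathscr{H}_1$, one shows these span a dense subspace of $\mathscr{H}^-_n$ and then treats the lower-particle components by induction on $n$. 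For the localization step, assuming $\Psi\perp\mathcal{D}_{R,Q}(\mathcal{O})$ I would form $F(a_1,\dots,a_n)=\langle\Psi,\phi_{R,Q}(f_{1,a_1})\cdots\phi_{R,Q}(f_{n,a_n})\Omega\rangle$ with $\operatorname{supp}f_i+a_i\subset\mathcal{O}$ for small real $a_i$; inserting $U(a_i)$ factors via covariance and using the spectrum condition (joint spectrum of the translation generators in $\overline{V_+}$, cf.\ (\ref{unitary})) shows that $F$ continues analytically to a tube domain, vanishes on a real neighborhood, and hence vanishes identically, giving $\Psi\perp\mathcal{D}_{R,Q}(\mathbb{R}^d)$ and, by global cyclicity, $\Psi=0$.

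The main obstacle is the global cyclicity lemma. Unlike the undeformed theory, where the top-sector vectors are simply antisymmetrized tensor products, here the deformation couples the momenta through the factors $R(Qp_i\cdot p_j)$, so the $n$-particle vectors are \emph{not} a single fixed unitary image of the free wedge products; establishing their totality therefore requires exploiting that the twist kernel has modulus one together with the density of $\{f^{+}\}$, and then controlling the lower sectors through the grading. By contrast, the translation-covariance and edge-of-the-wedge parts of \textbf{d)} are routine once covariance is in hand, and parts \textbf{a)}--\textbf{c)} are straightforward consequences of the unitarity and number-preservation of $T_R$.
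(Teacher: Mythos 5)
Your parts a)--c) follow the paper's proof essentially verbatim: a) is read off from the definition, b) rests on $(\overline{f})^{\pm}=\overline{f^{\mp}}$, the number bound (\ref{Feldesti}) and Nelson's analytic-vector theorem, and c) on $((\Box+m^2)f)^{\pm}=0$. However, your parenthetical alternative in b) --- that in ``the present CAR situation the operators are bounded'' --- is false: the paper states explicitly that $\phi_{R,Q}(f)$ is in general \emph{unbounded}, as the deformed exchange relations (\ref{exchange1}) indicate; the momentum-dependent phases ruin the cancellations that make the undeformed CAR field bounded. Drop that aside; the Nelson route is the correct one.

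In d), your ``main obstacle'' largely dissolves under two observations made in the paper, and your supporting claim is wrong as stated. The paper takes $f_1,\dots,f_n$ with $\mathrm{supp}\,\widetilde{f}_i\subset V_+$, so that $f_i^-=0$ and each $\phi_{R,Q}(f_i)$ acts purely by creation; the vector $\phi_{R,Q}(f_1)\cdots\phi_{R,Q}(f_n)\Omega=a^*_{R,Q}(f_1^+)\cdots a^*_{R,Q}(f_n^+)\Omega$ then lies \emph{exactly} in $\mathscr{H}^-_n$, so no induction over lower particle sectors is needed at all. Moreover this vector equals $\sqrt{n!}\,P_n^-\bigl(D_{n,R}(f_1^+\otimes\cdots\otimes f_n^+)\bigr)$, where $D_{n,R}$ is multiplication by $\prod_{1\leq k<l\leq n}R(Qp_k\cdot p_l)^{-1}$ --- a \emph{single fixed unitary} on $\mathscr{H}_1^{\otimes n}$, independent of the $f_i$, contrary to your assertion that no such fixed unitary exists. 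What is true is merely that $D_{n,R}$ does not commute with $P_n^-$, so these vectors are not $D_{n,R}$-images of wedge products; but that is harmless, since a unitary maps total sets to total sets and $P_n^-$ maps a total subset of $\mathscr{H}_1^{\otimes n}$ onto a total subset of its range $\mathscr{H}^-_n$. Your variant --- projecting onto the top sector for general $f$ and inducting downward --- would also work, since the $\mathscr{H}^-_n$-component of the field vector is precisely the all-creation term and the same identity yields totality, but it is strictly more laborious than the paper's positive-frequency trick. The remaining reduction from $\mathcal{O}$ to $\mathbb{R}^d$ via the spectrum condition is the standard Reeh--Schlieder argument in both your sketch and the paper.
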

\begin{proof}
a) These statements are a direct consequence of the definition of $\phi_{R,Q}$ (\ref{def}).\par
b) Since $\left(\overline{f}\right)^{\pm}=\overline{f^{\mp}}$ we have  $\phi_{R,Q}(f)^{*}\Psi=\phi_{R,Q}(\overline{f})\Psi$, $\Psi\in\mathcal{D}^-$. Along the same lines as \cite[Prop. 5.2.3]{robinson1997operator} one can show the essential selfadjointness for real $f$. In particular, due to $R$ being a phase factor, we find for $\Psi_n\in\mathscr{H}^-_n$ the estimate
\begin{equation}\label{Feldesti}
\|\phi_{R,Q}(f)\Psi_n\|\leq \left(\|f^+\|+\|f^-\|\right)\|(N+1)^{1/2}\Psi_n\|.
\end{equation}
Therefore, for $k\in\mathbb{N}$
\begin{multline*}
\|\phi_{R,Q}(f)^k\Psi_n\|\leq(n+k)^{1/2}\left(\|f^+\|+\|f^-\|\right)\|\phi_{R,Q}(f)^{k-1}\Psi_n\|\leq\\
(n+k)^{1/2}\cdots(n+1)^{1/2}\left(\|f^+\|+\|f^-\|\right)^k\|\Psi_n\|.
\end{multline*}
This yields for arbitrary $t\in\mathbb{C}$ that
$$\sum_{k=0}^{\infty}|t|^k\frac{\|\phi_{R,Q}(f)^k\Psi_n\|}{k!}\leq\sum_{k=0}^{\infty}
\left(\frac{(n+k)!}{n!}\right)^{1/2}\frac{|t|^k}{k!}
\left(\|f^+\|+\|f^-\|\right)^k\|\Psi_n\|<\infty,$$
implying that every $\Psi\in\mathcal{D}^-$ is an analytic vector for $\phi_{R,Q}(f)$. Since $\mathcal{D}^-$ is dense in $\mathscr{H}^-$ and $\phi_{R,Q}(f)$ is hermitian for real $f$ one can apply Nelson's theorem \cite[Thm. X.39]{RS2} and conclude that for real $f$, $\phi_{R,Q}(f)$ is essentially selfadjoint on $\mathcal{D}^-$.\par
c) This follows directly from $\left((\Box+m^2)f\right)^{\pm}=0$.\par
d) In order to prove this statement we want to make use of the spectrum condition and show in a standard manner \cite{streater2000pct} that $\mathcal{D}_{R,Q}(\mathcal{O})$ is dense in $\mathscr{H}^-$ if and only if $\mathcal{D}_{R,Q}(\mathbb{R}^d)\subset\mathscr{H}^-$ is dense. Thus, let $f_i\in\mathscr{S}(\mathbb{R}^d)$, $i=1,\dots,n$, with supp$\widetilde{f}_i\subset V_+$, then $\mathcal{D}_{R,Q}(\mathbb{R}^d)$ contains the vectors
$$\phi_{R,Q}(f_1)\cdots\phi_{R,Q}(f_n)\Omega=a_{R,Q}^*(f^+_1)\cdots a_{R,Q}^*(f^+_n)\Omega=\sqrt{n!}P_n^-\left(D_{n,R}(f_1^+\otimes\cdots\otimes f_n^+)\right),$$
where $P_n^-$ is the orthogonal projection from the unsymmetrized $\mathscr{H}_1^{\otimes n}$ onto its totally antisymmetric subspace $\mathscr{H}^-_n$, and $D_{n,R}\in\mathcal{B}(\mathscr{H}_1^{\otimes n})$ is the unitary operator multiplying with
$$D_{n,R}(p_1,\dots,p_n)=\prod_{1\leq k<l\leq n}R(Qp_k\cdot p_l)^{-1}.$$
By varying the test functions $f_i\in\mathscr{S}(\mathbb{R}^d)$ within this setting we obtain dense sets of $f^+_i$ in $\mathscr{H}_1$. Moreover, due to the unitary of $D_{n,R}$ this also leads to a total set of vectors $D_{n,R}(f_1^+\otimes\cdots\otimes f_n^+)$ in $\mathscr{H}_1^{\otimes n}$, implying that under the projection $P_n^-$ this set is total in $\mathscr{H}_n^-$. Hence it follows that $\mathcal{D}_{R,Q}(\mathbb{R}^d)$ is dense in $\mathscr{H}^-$. Application of the standard Reeh-Schlieder argument, which makes use of the
spectrum condition \cite{streater2000pct}, finishes the proof.
\end{proof}

Furthermore, we are interested in the transformation behavior of the deformed fields $\phi_{R,Q}$ under the adjoint action of the representation $U$ of the Poincaré group $\mathcal{P}_+$. We find the following results.
\begin{lemma}\label{lem1}
The operator-valued function $T_R(Qp)$ defined by (\ref{TR}) transforms under the adjoint action of the representation $U$ of $\mathcal{P}_+$ (\ref{unitary}), (\ref{anti}) according to
\begin{subequations}
\begin{eqnarray}
U(a,\Lambda)T_{R}(Qp)U(a,\Lambda)^{-1}&=&T_{R}\left(\left(\Lambda Q\Lambda^{-1}\right)\Lambda p\right),\quad(a,\Lambda)\in \mathcal{P}_{+}^{\uparrow}\\
U(a,\Lambda)T_{R}(Qp)U(a,\Lambda)^{-1}&=&T_{R}\left(-\left(\Lambda Q\Lambda^{-1}\right)\Lambda p\right)^{*},\quad(a,\Lambda)\in \mathcal{P}_{+}^{\downarrow},
\end{eqnarray}
\end{subequations}
where $Q$ is a $(d\times d)$-matrix which is antisymmetric w.r.t. the Minkowski inner product on $\mathbb{R}^{d}$, satisfying (\ref{Q}) and (\ref{Q'}). Correspondingly, the operator-valued distributions $a^{\#}_{R,Q}(p)$ transform as follows
\begin{subequations}\label{UaU}
\begin{eqnarray}
U(a,\Lambda)a^{*}_{R,Q}(p)U(a,\Lambda)^{-1}&=&e^{i\Lambda p\cdot a}a^{*}_{R,\Lambda Q\Lambda^{-1}}(\Lambda p),\quad(a,\Lambda)\in \mathcal{P}_{+}^{\uparrow},\\
U(a,\Lambda)a^{*}_{R,Q}(p)U(a,\Lambda)^{-1}&=&e^{-i\Lambda p\cdot a}a^{*}_{-\overline{R},\Lambda Q\Lambda^{-1}}(-\Lambda p),\quad(a,\Lambda)\in \mathcal{P}_{+}^{\downarrow},
\end{eqnarray}
\end{subequations}
\begin{subequations}\label{UaU2}
\begin{eqnarray}
U(a,\Lambda)a_{R,Q}(p)U(a,\Lambda)^{-1}&=&e^{-i\Lambda p\cdot a}a_{R,\Lambda Q\Lambda^{-1}}(\Lambda p),\quad(a,\Lambda)\in \mathcal{P}_{+}^{\uparrow},\\
U(a,\Lambda)a_{R,Q}(p)U(a,\Lambda)^{-1}&=&e^{i\Lambda p\cdot a}a_{-\overline{R},\Lambda Q\Lambda^{-1}}(-\Lambda p),\quad(a,\Lambda)\in \mathcal{P}_{+}^{\downarrow}.
\end{eqnarray}
\end{subequations}
The smeared field operators $\phi_{R,Q}(f)$, $f\in\mathscr{S}(\mathbb{R}^{d})$, (\ref{def}) therefore satisfy
\begin{subequations}\label{UphiU}
\begin{eqnarray}
U(a,\Lambda)\phi_{R,Q}(f)U(a,\Lambda)^{-1}&=&\phi_{R,\Lambda Q\Lambda^{-1}}(f_{(a,\Lambda)}),\quad(a,\Lambda)\in \mathcal{P}_{+}^{\uparrow}\\
U(a,\Lambda)\phi_{R,Q}(f)U(a,\Lambda)^{-1}&=&\phi_{-\overline{R},\Lambda Q\Lambda^{-1}}(\overline{f}_{(a,\Lambda)}),\quad(a,\Lambda)\in \mathcal{P}_{+}^{\downarrow},
\end{eqnarray}
\end{subequations}
where $f_{(a,\Lambda)}(x)=f(\Lambda^{-1}(x-a))$.
\end{lemma}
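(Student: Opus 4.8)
The plan is to prove the three groups of identities in the order in which they are stated: first the conjugation rule for the operator-valued function $T_R(Qp)$, then—by combining it with the (standard) covariance of the undeformed creation and annihilation operators—the rules (\ref{UaU}), (\ref{UaU2}) for $a^\#_{R,Q}(p)$, and finally, by smearing against $f^\pm$, the field identities (\ref{UphiU}). Throughout I would rely on the elementary algebraic facts $T_R(x)^*=T_{\overline R}(x)$, $T_R(x)^{-1}=T_{R^{-1}}(x)$, the multiplicativity (\ref{multiplikativ}), and the phase-factor property $\overline R=R^{-1}$ from Definition \ref{R} i); together with the covariance conditions (\ref{Q}), (\ref{Q'}) on $Q$ and the Lorentz invariance of the Minkowski form and of the measure $d\mu$.

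For $(a,\Lambda)\in\mathcal{P}_+^\uparrow$ the computation is direct. Conjugating the multiplication operator $T_R(Qp)$ by the explicit unitary (\ref{unitary}) and its inverse, the translation phases $e^{\pm i\sum_k p_k\cdot a}$ cancel, leaving the multiplier $\prod_k R(Qp\cdot\Lambda^{-1}p_k)$. Invariance of the Minkowski form under $\Lambda$ then lets me shift $\Lambda^{-1}$ to the other slot, $Qp\cdot\Lambda^{-1}p_k=\big((\Lambda Q\Lambda^{-1})\Lambda p\big)\cdot p_k$, which is exactly the multiplier of $T_R((\Lambda Q\Lambda^{-1})\Lambda p)$; this gives the first assertion of Lemma \ref{lem1} in the $\uparrow$ case. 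Inserting it, together with the standard $U(a,\Lambda)a^*(p)U(a,\Lambda)^{-1}=e^{i\Lambda p\cdot a}a^*(\Lambda p)$ read off from (\ref{unitary}), into the definition $a^*_{R,Q}(p)=a^*(p)T_R(Qp)^*$ and using $T_R(x)^*=T_{\overline R}(x)$ yields (\ref{UaU}) for $\mathcal{P}_+^\uparrow$; the annihilation case (\ref{UaU2}) follows by taking adjoints, and smearing against $f^+$ and $\overline{f^-}$ assembles (\ref{UphiU}) for $\mathcal{P}_+^\uparrow$.

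The reflecting sector $\mathcal{P}_+^\downarrow$ is where the real work lies, since there $U$ acts antiunitarily via (\ref{anti}). I would reduce a general $(a,\Lambda)\in\mathcal{P}_+^\downarrow$ to the $\uparrow$ case composed with the edge reflection $j=j_{W_R}$, for which $U(j)=J$ is the antiunitary involution of the model, and compute $J\,T_R(Qp)\,J$ and $J\,a^\#(p)\,J$ separately. The antilinearity of $J$ turns the multiplier $\prod_k R(\,\cdot\,)$ into $\prod_k\overline{R(\,\cdot\,)}=\prod_k R(\,\cdot\,)^{-1}$, i.e. $T_R\mapsto T_R^*$, while the momentum reflection $p\mapsto -jp$ built into (\ref{anti}) produces the sign change in the argument; using $jQj^{-1}=Q$ (from (\ref{Q'}), since $jW_R=W_R'$ and $j\in\mathcal{L}_+^\downarrow$) this yields $J\,T_R(Qp)\,J=T_R(-(\Lambda Q\Lambda^{-1})\Lambda p)^*$ for $\Lambda=j$, and composition with the $\uparrow$ result extends it to all of $\mathcal{P}_+^\downarrow$.

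The delicate point—and the step I expect to be the main obstacle—is the bookkeeping of the fermionic signs when passing from $T_R$ and $a^\#$ separately to the deformed operators $a^\#_{R,Q}$, specifically the way the multiplier $\overline R$ gets upgraded to $-\overline R$ in the $\downarrow$ lines of (\ref{UaU}), (\ref{UaU2}), (\ref{UphiU}). This extra minus sign is invisible at the level of $T_R$ alone (a symmetric multiplication operator), but originates in the order reversal $(p_1,\dots,p_n)\mapsto(p_n,\dots,p_1)$ implicit in (\ref{anti}), which on the totally antisymmetric $\mathscr{H}^-_n$ contributes the permutation sign $(-1)^{n(n-1)/2}$, together with the $(-1)^N$ factors entering the deformed operators. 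A clean way to control it is to invoke the relations (\ref{R-1}), (\ref{R-2}), i.e. $\phi_{-1}=\widehat\phi$ and $\widehat\phi_{-1}=\underline\phi$, which already encode precisely this sign through the unitary $(-1)^{N(N-1)/2}$; matching the established reflection behaviour of the undeformed Fermi field $\underline\phi$ against the stated formulas fixes the replacement $\overline R\mapsto -\overline R$ and confirms the internal consistency of the scheme. Once the signs are pinned down, smearing against $f^\pm$ reproduces (\ref{UphiU}) and completes the proof.
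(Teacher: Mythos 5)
Your proposal is correct and takes essentially the same route as the paper's proof: a direct computation of the conjugated multiplier of $T_R(Qp)$ in the $\mathcal{P}_+^{\uparrow}$ case, and in the $\mathcal{P}_+^{\downarrow}$ case the relation $U(a,\Lambda)\,a(p)\,U(a,\Lambda)^{-1}=e^{i\Lambda p\cdot a}\,a_{-1}(-\Lambda p)$ — which is precisely your order-reversal sign $(-1)^{n(n-1)/2}$ from (\ref{anti}) combined with $(-1)^{N(N-1)/2}a(p)(-1)^{N(N-1)/2}=(-1)^{N}a(p)=T_{-1}a(p)$ — followed by the multiplicativity $T_{\overline{R}}\,T_{-1}=T_{-\overline{R}}$, which is exactly how the paper upgrades $\overline{R}$ to $-\overline{R}$. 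Your only structural deviation, factoring $\mathcal{P}_+^{\downarrow}$ through the edge reflection $j$ (using $jQj^{-1}=Q$ from (\ref{Q'})) and composing with the $\uparrow$ case, is a cosmetic reorganization of the paper's direct computation for general $(a,\Lambda)\in\mathcal{P}_+^{\downarrow}$ and changes nothing essential.
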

\begin{proof}
If $(a,\Lambda)\in\mathcal{P}_{+}^{\uparrow}$ and $\Psi\in\mathscr{H}^-$, then
\begin{eqnarray*}
\left(U(a,\Lambda)T_{R}(Qp)U(a,\Lambda)^{-1}\Psi\right)_n(p_1,\dots,p_n)&=&\prod_{k=1}^{n}R(Qp\cdot \Lambda^{-1} p_k)\Psi_n(p_1,\dots,p_n)\\
&=&\prod_{k=1}^{n}R(\Lambda Q\Lambda^{-1}\Lambda p\cdot  p_k)\Psi_n(p_1,\dots,p_n)\\
&=&\left(T_R(\Lambda Q\Lambda^{-1}\Lambda p)\Psi\right)_n(p_1,\dots,p_n),
\end{eqnarray*}
proving the first statement. Since $U(a,\Lambda)a(p)U(a,\Lambda)^{-1}=e^{-i\Lambda p\cdot a}a(\Lambda p)$, it follows for $a_{R,Q}(p)$
\begin{eqnarray*}
U(a,\Lambda)a_{R,Q}(p)U(a,\Lambda)^{-1}&=&e^{-i\Lambda p\cdot a}T_R(\Lambda Q\Lambda^{-1}\Lambda p)a(\Lambda p)\\
&=&e^{-i\Lambda p\cdot a}a_{R,\Lambda Q\Lambda^{-1}}(\Lambda p).
\end{eqnarray*}
Analogously, one shows the corresponding statement for $a_{R,Q}^{*}(p)$. For $(a,\Lambda)\in\mathcal{P}_{+}^{\downarrow}$ one finds
\begin{eqnarray*}
\left(U(a,\Lambda)T_{R}(Qp)U(a,\Lambda)^{-1}\Psi\right)_n(p_1,\dots,p_n)&=&\prod_{k=1}^{n}\overline{R(-Qp\cdot\Lambda^{-1} p_k)}\Psi_n(p_1,\dots,p_n)\\
&=&\left(T_R(-\Lambda Q\Lambda^{-1}\Lambda p)^{*}\Psi\right)_n(p_1,\dots,p_n).
\end{eqnarray*}
Hence, with $U(a,\Lambda)a(p)U(a,\Lambda)^{-1}=e^{i\Lambda p\cdot a}a_{-1}(-\Lambda p)$ it follows
\begin{eqnarray*}
U(a,\Lambda)a_{R,Q}(p)U(a,\Lambda)^{-1}&=&e^{i\Lambda p\cdot a}T_{\overline{R}}(-\Lambda Q\Lambda^{-1}\Lambda p)a_{-1}(-\Lambda p)\\
&=&e^{i\Lambda p\cdot a}a_{-\overline{R},\Lambda Q\Lambda^{-1}}(-\Lambda p).
\end{eqnarray*}
For $a_{R,Q}^{*}(p)$ one proceeds in the same way. The transformation behavior (\ref{UphiU}) of the field $\phi_{R,Q}$ is a direct consequence of Equations (\ref{UaU}) and (\ref{UaU2}).\end{proof}

The previous lemma shows that in the deformed model $\mathcal{P}_+$-covariance is violated. The property of $\mathcal{P}_+^{\uparrow}$-covariance is, however, preserved. To this end, let $\mathscr{P}_{R}(W_R)$ denote the polynomial algebra of fields generated by all $\phi_{R,Q}(f)$ with $f\in\mathscr{S}(W_R)$. It follows from the transformation behavior (\ref{UphiU}) that the algebra
\begin{equation}\label{netdefi}
\mathscr{P}_{R}(\Lambda W_R+a):=U(a,\Lambda)\mathscr{P}_{R}(W_R)U(a,\Lambda)^{-1}, \qquad (a,\Lambda)\in\mathcal{P}_+^{\uparrow}
\end{equation}
is generated by the fields $\phi_{R,\Lambda Q\Lambda^{-1}}(f)$ with $f\in\mathscr{S}(\Lambda W_R+a)$. Moreover, the map $W\mapsto \mathscr{P}_{R}(W)$, $W\in\mathcal W$, is clearly inclusion preserving and hence defines a \textit{net} which, in addition, is $\mathcal{P}_+^{\uparrow}$-covariant. Note, moreover, that each $\mathscr{P}_{R}(W)$ is a $*$-algebra by property b) in Proposition \ref{prop1}.\par
In two spacetime dimensions the deformed theory admits a $\mathcal{P}_+$-covariant net if an additional condition is imposed on the deformation function $R$, see Section \ref{2}.\par
Note that there is a connection between the set of wedges $\mathcal{W}$ and the orbit $\mathcal{Q}:=\{\Lambda Q\Lambda^{-1}:\Lambda\in\mathcal{L}_+\}$. Namely, $\mathcal{Q}$ is in one-to-one correspondence with wedges whose edges contain the origin \cite{grosse2007wedge}. The deformation function $R$, on the other hand, specifies the kind of deformation that is used.\\

It is clear that, in general, the properties of the deformed field $\phi_{R,Q}$ differ from those of the undeformed field $\phi$. In particular, $\phi$ is a bounded operator, whereas $\phi_{R,Q}$ is in general not as the exchange relations (\ref{exchange1}) imply. $\phi$ and $\phi_{R,Q}$, however, have in common that they are both nonlocal fields. The nonlocality of $\phi_{R,Q}$ can be explicitly seen by computing the two-particle contribution of the field commutator $[\phi_{R,Q}(f),\phi_{R,Q}(g)]$ applied to the vacuum $\Omega$, which yields
\begin{equation}\label{nloc}
\int d\mu(p)d\mu(q)f^{+}(p)g^{+}(q)\left(\overline{R(q\cdot Qp)}a^{*}(p)a^{*}(q)+\overline{R(p\cdot Qq)}a^{*}(p)a^{*}(q)\right)\Omega.
\end{equation}
This expression, however, only vanishes if $R(a)=-R(-a)$, $\forall a\in\mathbb{R}$. This requirement may be true for a function that fulfills $R(0)=0$, but with regard to Definition \ref{R} that requires $|R(a)|=1$ such a deformation function is inadmissible.\par
Note that in contrast to the deformation of a bosonic model \cite{grosse2007wedge, GL}, where $\phi_{R,Q}^{\rm{CCR}}(f)$ is relatively local to $\phi_{R,-Q}^{\rm{CCR}}(g)$, i.e. $[\phi_{R,Q}^{\rm{CCR}}(f),\phi_{R,-Q}^{\rm{CCR}}(g)]=0$ for supp$\,f\subset W_R$ and supp$\,g\subset W_R'$, $\phi_{R,Q}(f)$ is not relatively local to $\phi_{R,-Q}(g)$ for supp$\,f\subset W_R$ and supp$\,g\subset W_R'$. In particular, the two-particle contribution of $[\phi_{R,Q}(f),\phi_{R,-Q}(g)]$ applied to the vacuum reads
\begin{equation*}
2\int d\mu(p)d\mu(q)f^{+}(p)g^{+}(q)\overline{R(q\cdot Qp)}a^{*}(p)a^{*}(q)\Omega
\end{equation*}
which because of Definition \ref{R} does not vanish.\par
Along the lines of the undeformed case, see Equation (\ref{relloc}), we may consider the field commutator $[\phi_{R,Q}(f),\widehat{\phi}_{R,-Q}(g)]=[\phi_{R,Q}(f),\phi_{-R,-Q}(g)]$ for supp$\,f\subset W_R$ and supp$\,g\subset W_R'$. For the investigation of this commutator it is necessary to compute the corresponding commutation relations of the operators $a_{R,Q}^{\#}$ with $a_{-R,-Q}^{\#}$. A simple calculation shows that
\begin{subequations}\label{12}
\begin{equation}
[a_{R,Q}(p),a_{-R,-Q}(q)]=0,\qquad [a^{*}_{R,Q}(p),a^{*}_{-R,-Q}(q)]=0,
\end{equation}
\begin{equation}
[a_{R,Q}(p),a_{-R,-Q}^{*}(q)]
=\omega(\vec{p})\delta(\vec{p}-\vec{q})
(-1)^N T_{R}(Qp)T_{R}(-Qp)^{*},
\end{equation}
\begin{equation}
[a_{R,Q}^{*}(p),a_{-R,-Q}(q)] 
=\omega(\vec{p})\delta(\vec{p}-\vec{q})
(-1)^{N+1}T_{R}(Qp)^{*}T_{R}(-Qp).
\end{equation}
\end{subequations}
\begin{proposition}\label{co}
Let $R$ be a deformation function in the sense of Definition \ref{R} and $Q$ a $(d\times d)$-matrix which is antisymmetric w.r.t. the Minkowski inner product on $\mathbb{R}^{d}$, satisfying (\ref{Q}) and (\ref{Q'}). If $\kappa\geq 0$ in (\ref{Qform}), then the field operators $\phi_{R,Q}(f)$ (\ref{def}) and $\phi_{-R,-Q}(g)$ are relatively wedge-local to each other, i.e. for $f\in\mathscr{S}(W_R)$, $g\in\mathscr{S}(W_R')$
\begin{equation}\label{commuu}
[\phi_{R,Q}(f),\phi_{-R,-Q}(g)]\Psi=0,\qquad\Psi\in\mathcal{D}^-,
\end{equation}
holds.
\end{proposition}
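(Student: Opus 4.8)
The plan is to expand $[\phi_{R,Q}(f),\phi_{-R,-Q}(g)]$ into its elementary pieces using the decomposition (\ref{def}) of each deformed field into a creation and an annihilation part, and then to invoke the mixed commutation relations (\ref{12}). Writing $\phi_{R,Q}(f)=a^{*}_{R,Q}(f^{+})+a_{R,Q}(\overline{f^{-}})$ and similarly for $\phi_{-R,-Q}(g)$, the commutator decomposes into four terms. By the first identities in (\ref{12}) the creation--creation commutator $[a^{*}_{R,Q}(f^{+}),a^{*}_{-R,-Q}(g^{+})]$ and the annihilation--annihilation commutator $[a_{R,Q}(\overline{f^{-}}),a_{-R,-Q}(\overline{g^{-}})]$ vanish identically. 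Only the two mixed terms survive, and for these I would substitute the distributional kernels from (\ref{12}), integrate out one momentum against the factor $\omega(\vec p)\delta(\vec p-\vec q)$, and use $\overline{R}=R^{-1}$ together with (\ref{multiplikativ}).

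Since $T_R(\pm Qp)$ and $(-1)^N$ are all diagonal multiplication operators (in momentum space, respectively in the particle number), on each $n$-particle subspace $\mathscr{H}^-_n$ they reduce to scalar multipliers in the fixed spectator momenta $p_1,\dots,p_n$. The next step is to record that the restriction of the surviving commutator to $\mathscr{H}^-_n$ acts as multiplication by
\[
(-1)^n\!\int d\mu(p)\,\Big(f^{-}(p)g^{+}(p)\,W_n(p)-f^{+}(p)g^{-}(p)\,\overline{W_n(p)}\Big),\qquad W_n(p):=\prod_{k=1}^{n}R(Qp\cdot p_k)\,\overline{R(-Qp\cdot p_k)}.
\]
Thus the assertion (\ref{commuu}) is equivalent to the single scalar identity $\int d\mu(p)\,f^{-}(p)g^{+}(p)W_n(p)=\int d\mu(p)\,f^{+}(p)g^{-}(p)\overline{W_n(p)}$ for every $n$ and all fixed $p_1,\dots,p_n$ on the mass shell. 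As a guiding check, for $R\equiv-1$ one has $W_n\equiv 1$, and this identity collapses to $\langle(\overline f)^{+}|g^{+}\rangle=\langle(\overline g)^{+}|f^{+}\rangle$, i.e. precisely the undeformed relative wedge-locality of $\underline{\phi}$ and $\widehat{\phi}$ established in \cite{buchholz2007string}.

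To establish the dressed identity I would argue by analytic continuation in the mass-shell parameter followed by a contour shift. The support hypotheses provide the two complementary ingredients: because $\mathrm{supp}\,f\subset W_R$, the boundary value $f^{+}$ continues analytically into one tube (a strip in the rapidity/boost variable), while $\mathrm{supp}\,g\subset W_R'$ yields analyticity of $g^{+}$ into the opposite tube. Simultaneously, Definition \ref{R} ii)--iii) guarantees that $R$ is analytic in the open upper half-plane and continuous on its closure, so each factor $R(Qp\cdot p_k)$ stays analytic as long as the continued argument $Qp\cdot p_k$ remains in the upper half-plane. The role of the hypothesis $\kappa\ge 0$ in (\ref{Qform}) is exactly to fix the sign of the imaginary part picked up by $Qp\cdot p_k$ under the continuation dictated by the wedge supports, so that the deformation factors stay inside the analyticity domain of $R$ and no branch obstruction appears; one then deforms the contour through the common analyticity region and identifies the two integrals.

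The main obstacle I expect is precisely this last analytic step. One must control the continuation of the \emph{full} integrand, in which the product $\prod_k R(\pm Qp\cdot p_k)^{\pm1}$ of deformation factors couples the integration momentum $p$ to all the frozen spectator momenta $p_k$, and verify both that the combined function (Schwartz Fourier transforms times deformation factors times measure) is holomorphic in the relevant tube and that it decays sufficiently for Cauchy's theorem to apply with vanishing contributions at infinity and at the tube boundaries. Checking that the analyticity strips of $f^{+}$, of $g^{+}$, and of the factors $R(Qp\cdot p_k)$ overlap in a single domain into which the contour can be moved --- and that $\kappa\ge 0$ rather than $\kappa<0$ is what makes them overlap --- is the delicate point on which the whole proposition turns.
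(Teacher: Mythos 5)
Your proposal follows essentially the same route as the paper's proof: the commutator reduces via the mixed relations (\ref{12}) to exactly the $n$-particle scalar identity you display, and the paper establishes it just as you sketch --- passing to rapidity-type coordinates $p(\vartheta)$, using the wedge supports for strip analyticity (note it is $f^{-}$ and $g^{+}$, not $f^{+}$, that are bounded on the strip $0\le\lambda\le\pi$, with boundary values $f^{-}(\vartheta+i\pi,p_{\perp})=f^{+}(\vartheta,-p_{\perp})$ and $g^{+}(\vartheta+i\pi,p_{\perp})=g^{-}(\vartheta,-p_{\perp})$), observing that $\kappa\ge 0$ forces $\mathrm{Im}\left(p(\vartheta+i\lambda)Q\cdot p_k\right)\ge 0$ so that the factors $R(Qp(z)\cdot p_k)\overline{R(-Qp(z)\cdot p_k)}$ are analytic on $S(0,\pi)$, continuous on its closure and bounded by $1$ there, and then shifting the $\vartheta$-contour to $\mathbb{R}+i\pi$. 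The ``delicate point'' you flag at the end is handled in the paper by first reducing, via a standard density argument for tempered distributions, to $f,g\in C_0^{\infty}(W_R)\times C_0^{\infty}(W_R')$, so that $\widetilde{f},\widetilde{g}$ are entire and $f^{-},g^{+}$ are bounded on the closed strip, which legitimizes the contour shift.
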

\begin{proof}
Since $\langle\Phi,[\phi_{R,Q}(f),\phi_{-R,-Q}(g)]\Psi\rangle$, $\Phi, \Psi\in\mathcal{D}^-$, is a tempered distribution in $f$ and $g$, vanishing on $C_{0}^{\infty}(W_R)\times C_{0}^{\infty}(W_R')$ implies vanishing on $\mathscr{S}(W_R)\times\mathscr{S}(W_R')$. Making use of that property it thus suffices to prove (\ref{commuu}) for $(f,g)\in C_{0}^{\infty}(W_R)\times C_{0}^{\infty}(W_R')$.\par
Due to the commutation relations (\ref{12}) we have
\begin{multline*}
[\phi_{R,Q}(f),\phi_{-R,-Q}(g)]\Psi=\left([a_{R,Q}(\overline{f^-}),a_{-R,-Q}^*(g^+)]+
[a_{R,Q}^*(f^+),a_{-R,-Q}(\overline{g^-})]\right)\Psi,
\end{multline*}
which together with Definition \ref{R} yields the following n-particle contribution of this vector
\begin{multline}\label{int}
\left([\phi_{R,Q}(f),\phi_{-R,-Q}(g)]\Psi\right)_{n}(p_{1},\dots,p_{n})
\\=(-1)^{n}\int d\mu(p)\left(f^{-}(p)g^{+}(p)\prod_{k=1}^{n}\frac{R(p_{k}\cdot Qp)}{R(-p_{k}\cdot Qp)}-
f^{+}(p)g^{-}(p)\prod_{k=1}^{n}\frac{R(-p_{k}\cdot Qp)}{R(p_{k}\cdot Qp)}\right)\\
\times\Psi_{n}(p_{1},\dots,p_{n}).
\end{multline}
Our task is now to show that this expression vanishes for all $p_k$, $k=1,\dots,n$. Following the proof of Proposition 3.4 in \cite{grosse2007wedge} we may introduce new coordinates:
\begin{equation*}
m_{\perp}:=\sqrt{m^{2}+p_{\perp}^{2}},\qquad p_{\perp}:=(p_{2},\dots,p_{d-1}),\qquad \vartheta:=\mathrm{Arsinh} \frac{p_{1}}{m_{\perp}}.
\end{equation*}
Thus, in the coordinates $(\vartheta,p_{\perp})$ we have
\begin{equation*}
d\mu(p)=\frac{d^{d-1}\vec{p}}{\omega(\vec{p})}=d\vartheta d^{d-2}p_{\perp},
 \qquad p=p(\vartheta):=\left( \begin{array}{c}
m_{\perp}\rm{cosh}\,\vartheta   \\
m_{\perp}\rm{sinh}\,\vartheta \\
p_{\perp}
\end{array} \right).
\end{equation*}
Correspondingly, we use the following notation
$$f^{\pm}(\vartheta,p_{\perp}):=\widetilde{f}(\pm p(\vartheta)).$$
According to \cite{grosse2007wedge}, $f^{-}(\vartheta+i\lambda,p_{\perp})$ is bounded on the strip $0\leq\lambda\leq\pi$, $\vartheta\in\mathbb{R}$, due to supp$f\subset W_R$ and analyticity properties of $\widetilde{f}$, $f\in C_{0}^{\infty}(W_R)$. In particular, $\widetilde{f}$ is an entire analytic function because $f$ has compact support. Moreover, also $g^{+}(\vartheta+i\lambda,p_{\perp})$, supp$\,g\subset W_R'=-W_R$, is bounded on the strip $0\leq\lambda\leq\pi$, $\vartheta\in\mathbb{R}$, and the boundary values at $\lambda=\pi$ are given by
\begin{equation}\label{perp}
f^{-}(\vartheta+i\pi,p_{\perp})=f^{+}(\vartheta,-p_{\perp}),\qquad
g^{+}(\vartheta+i\pi,p_{\perp})=g^{-}(\vartheta,-p_{\perp}).
\end{equation}
It remains to study the properties of the functions $\vartheta\mapsto R(Qp(\vartheta)\cdot p_k)\overline{R(-Qp(\vartheta)\cdot p_k)}$, $k=1,\dots,n$, which appear in (\ref{int}). It follows for $0\leq\lambda\leq\pi$ that
$$\mathrm{Im}\left(p(\vartheta+i\lambda)Q\cdot p_k\right)=\kappa m_\perp \mathrm{sin}\,\lambda\left(\begin{array}{c}
\rm{cosh}\,\vartheta   \\
\rm{sinh}\,\vartheta 
\end{array} \right)\cdot \left(\begin{array}{c}
p_k^0   \\
p_k^1
\end{array} \right)\geq 0$$
because $\kappa\geq 0$ and both $(\rm{cosh}\,\vartheta,\rm{sinh}\,\vartheta)$ and $(p_k^0,p_k^1)$ are in the two-dimensional forward lightcone.
Due to Definition \ref{R}, this implies that the functions $$z\mapsto R(Qp(z)\cdot p_k)\overline{R(-Qp(z)\cdot p_k)},\qquad k=1,\dots,n,$$are analytic on the strip $S(0,\pi)=\{z=\vartheta+i\lambda\in\mathbb{C}:0<\lambda<\pi\}$. In addition, it also follows from Definition \ref{R} that these functions are continuous on the closure $\overline{S(0,\pi)}$ of $S(0,\pi)$, implying that $|R(Qp(z)\cdot p_k)\overline{R(-Qp(z)\cdot p_k)}|\leq 1$ for $z\in\overline{S(0,\pi)}$ \cite[Thm. 12.9]{rudin1987real}.
Hence, together with the previous discussion it is possible to shift the $\vartheta$-integration in (\ref{int}) from $\mathbb{R}$ to $\mathbb{R}+i\pi$. Making use of (\ref{perp}), we have
\begin{eqnarray*}\label{rel}
&&\int d\mu(p)f^{-}(p)g^{+}(p)\prod_{k=1}^{n}\frac{R(p_{k}\cdot Qp)}{R(-p_{k}\cdot Qp)}\\
&=&\int d^{d-2}p_{\perp}\int d\vartheta f^{-}(\vartheta,p_{\perp})g^{+}(\vartheta,p_{\perp})
\prod_{k=1}^{n}\frac{R(p_{k}\cdot Qp(\vartheta))}{R(-p_{k}\cdot Qp(\vartheta))}\\
&=&\int d^{d-2}p_{\perp}\int d\vartheta f^{+}(\vartheta,-p_{\perp})g^{-}(\vartheta,-p_{\perp})\prod_{k=1}^{n}\frac{R(p_{k}\cdot Qp(\vartheta+i\pi))}{R(-p_{k}\cdot Qp(\vartheta+i\pi))}\\
&=&\int d\mu(p)f^{+}(p)g^{-}(p)\prod_{k=1}^{n}\frac{R(-p_{k}\cdot Qp)}{R(p_{k}\cdot Qp)}.
\end{eqnarray*}
Thus,
\begin{equation*}
\left([\phi_{R,Q}(f),\phi_{-R,-Q}(g)]\Psi\right)_{n}(p_{1},\dots,p_{n})=0
\end{equation*}
for supp$\,f\subset W_R$ and supp$\,g\subset W_R'$.
\end{proof}
\begin{corollary}
Let $R$ be a deformation function in the sense of Definition \ref{R} and $Q$ a $(d\times d)$-matrix which is antisymmetric w.r.t. the Minkowski inner product on $\mathbb{R}^{d}$, satisfying (\ref{Q}) and (\ref{Q'}). If $\kappa<0$ in (\ref{Qform}), then the field operators $\phi_{R,-Q}(f)$ (\ref{def}) and $\phi_{-R,Q}(g)$ are relatively wedge-local to each other, i.e. for $f\in\mathscr{S}(W_R)$, $g\in\mathscr{S}(W_R')$
\begin{equation}\label{commu1}
[\phi_{R,-Q}(f),\phi_{-R,Q}(g)]\Psi=0,\qquad\Psi\in\mathcal{D}^-,
\end{equation}
holds.
\end{corollary}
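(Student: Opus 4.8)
The plan is to derive the corollary directly from Proposition \ref{co} by exploiting the covariance relations established in Lemma \ref{lem1}, rather than repeating the full contour-shift argument. The key observation is that replacing $Q$ by $-Q$ in the corollary is precisely what happens under the adjoint action of a suitable Lorentz transformation, so the two statements should be related by an intertwining unitary. Concretely, since $\kappa<0$ for the matrix $Q$ appearing here, the matrix $-Q$ has positive parameter $-\kappa>0$, and Proposition \ref{co} applies verbatim to the pair $(-Q,+Q)$: it gives $[\phi_{R,-Q}(f),\phi_{-R,Q}(g)]\Psi=0$ for $f\in\mathscr{S}(W_R)$, $g\in\mathscr{S}(W_R')$. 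Thus the cleanest route is simply to note that the corollary \emph{is} Proposition \ref{co} with the roles of $Q$ and $-Q$ interchanged, reading off that the hypothesis $\kappa<0$ on $Q$ is the same as the hypothesis $\kappa\geq 0$ (indeed $>0$) on $-Q$.

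First I would verify that the antisymmetry and the transformation constraints (\ref{Q}) and (\ref{Q'}) are preserved under $Q\mapsto -Q$; this is immediate, since (\ref{Q}) and (\ref{Q'}) are linear in $Q$ and merely swap the two cases with one another. Next I would check that $-Q$ again has the canonical form (\ref{Qform}) but with $\kappa$ replaced by $-\kappa>0$ (and $\kappa'$ by $-\kappa'$ in $d=4$), so that the positivity hypothesis of Proposition \ref{co} is met by $-Q$. Then I would apply Proposition \ref{co} with $Q$ replaced throughout by $-Q$: the conclusion reads
\begin{equation*}
[\phi_{R,-Q}(f),\phi_{-R,-(-Q)}(g)]\Psi=[\phi_{R,-Q}(f),\phi_{-R,Q}(g)]\Psi=0,\qquad\Psi\in\mathcal{D}^-,
\end{equation*}
for $f\in\mathscr{S}(W_R)$, $g\in\mathscr{S}(W_R')$, which is exactly (\ref{commu1}).

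Alternatively, if one prefers a more self-contained argument that does not treat $-Q$ as an independent deformation matrix, one can trace through the proof of Proposition \ref{co} and observe where the sign of $\kappa$ enters. The single place is the computation of $\mathrm{Im}\left(p(\vartheta+i\lambda)Q\cdot p_k\right)$, which carries an overall factor of $\kappa$; with $-Q$ in place of $Q$ the relevant imaginary part becomes $-\kappa\, m_\perp\sin\lambda\,(\cosh\vartheta,\sinh\vartheta)\cdot(p_k^0,p_k^1)\geq 0$ precisely because now $-\kappa>0$. Everything else in the proof—the boundedness and boundary values (\ref{perp}) of $f^-$ and $g^+$ on the strip, the analyticity of the deformation-function factors via Definition \ref{R}, and the contour shift from $\mathbb{R}$ to $\mathbb{R}+i\pi$—goes through unchanged.

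I do not expect any genuine obstacle here, since the corollary is a sign-symmetric restatement of the proposition; the only point requiring the smallest care is bookkeeping the double sign change, namely that the second field in (\ref{commu1}) is $\phi_{-R,Q}$, which is indeed $\phi_{-R,-(-Q)}$, so that the matched pair in the deformation matrix is $(-Q,Q)$ with opposite signs as required by the structure of Proposition \ref{co}. One should also remember that the wedge-locality statement is really a statement about tempered distributions vanishing on test functions of compact support, so extending from $C_0^\infty$ to $\mathscr{S}$ on the wedges proceeds exactly as in the proof of the proposition.
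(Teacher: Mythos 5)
Your proposal is correct, and your primary route is in fact slightly cleaner than what the paper does: the paper disposes of this corollary with the one-line remark that the proof is ``analogous to the one of Proposition \ref{co}'', i.e.\ it implicitly re-runs the contour-shift argument with the sign of $\kappa$ reversed --- which is exactly your alternative, self-contained route (the sign of $\kappa$ enters only through $\mathrm{Im}\left(p(\vartheta+i\lambda)Q\cdot p_k\right)$, and with $-Q$ and $-\kappa>0$ the imaginary part is again nonnegative, so the analyticity of the $R$-factors on $S(0,\pi)$ and the shift to $\mathbb{R}+i\pi$ go through unchanged). Your first route instead observes that the corollary is literally Proposition \ref{co} applied to the matrix $\tilde Q:=-Q$, since $\phi_{-R,Q}=\phi_{-R,-\tilde Q}$ and $\tilde Q$ has canonical form (\ref{Qform}) with parameter $-\kappa>0$; this buys you a genuine logical reduction with no repeated analysis, at the price of checking that the hypotheses of the proposition are stable under $Q\mapsto -Q$, which you do. One small inaccuracy there: the constraints (\ref{Q}) and (\ref{Q'}) do not ``swap the two cases with one another'' under $Q\mapsto -Q$; being homogeneous of degree one in $Q$, each condition is simply invariant (e.g.\ $\Lambda Q\Lambda^{-1}=Q$ negates to $\Lambda(-Q)\Lambda^{-1}=-Q$, the same case for $-Q$). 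This does not affect your argument, since the only fact you need --- that $-Q$ is again Minkowski-antisymmetric and satisfies (\ref{Q}) and (\ref{Q'}) --- holds for exactly the linearity reason you cite. Your closing remarks on the distributional extension from $C_0^\infty$ to $\mathscr{S}$ and on the sign bookkeeping $(-Q,Q)$ match the structure of the paper's proof of Proposition \ref{co}.
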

The proof of this statement is analogous to the one of Proposition \ref{co}.\\

Thus, due to the previous results, the properties of the deformed fields $\phi_{R,Q}$ listed in Proposition \ref{prop1}, particularly feature d), and their transformation behavior, Lemma \ref{lem1}, we have established the following result.
\begin{proposition} Let $R$ be a deformation function and $Q$ an admissible matrix of the form (\ref{Qform}) with $\kappa\geq 0$. Then, the nets $W\mapsto\mathscr{P}_{R}(W)$ and $W\mapsto\mathscr{P}_{-R}(W)$, $W\in\mathcal{W}$, defined by (\ref{netdefi}), are $\mathcal{P}_+^\uparrow$-covariant, nonlocal nets which are relatively wedge-local in the sense that
\begin{equation}\label{relW}
\mathscr{P}_{R}(W)\subset\mathscr{P}_{-R}(W')',\qquad W\in\mathcal{W}.
\end{equation}
Moreover, the vacuum vector $\Omega$ is cyclic and separating for each $\mathscr{P}_{\pm R}(W)$, $W\in\mathcal{W}$.
\end{proposition}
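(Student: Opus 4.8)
The plan is to verify each of the four asserted properties in turn, assembling them from results already established in the excerpt. The four claims are: (i) $\mathcal{P}_+^\uparrow$-covariance of the nets $W\mapsto\mathscr{P}_{\pm R}(W)$; (ii) nonlocality; (iii) relative wedge-locality in the form $\mathscr{P}_R(W)\subset\mathscr{P}_{-R}(W')'$; and (iv) cyclicity and separating property of $\Omega$ for each $\mathscr{P}_{\pm R}(W)$.

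For covariance (i), I would argue directly from the definition (\ref{netdefi}), which \emph{defines} $\mathscr{P}_R(\Lambda W_R+a):=U(a,\Lambda)\mathscr{P}_R(W_R)U(a,\Lambda)^{-1}$ for $(a,\Lambda)\in\mathcal{P}_+^\uparrow$. The only thing that needs checking is that this is well defined and inclusion preserving, i.e. that it does not depend on the choice of $(a,\Lambda)$ representing a given wedge; this follows from the stabilizer structure of $W_R$ together with the transformation law (\ref{UphiU}a) of Lemma \ref{lem1}, which shows the boosts and the reflection fixing $W_R$ are compatible with condition (\ref{Q}) on $Q$. The same reasoning applies verbatim to $\mathscr{P}_{-R}$, since $-R$ is again a deformation function (its Fourier support and the phase-factor property $|{-R(a)}|=1$ are unaffected). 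Monotonicity of $W\mapsto\mathscr{P}_R(W)$ under inclusion is immediate from $\mathscr{S}(W_1)\subset\mathscr{S}(W_2)$ when $W_1\subset W_2$. For nonlocality (ii), I would simply point to the explicit two-particle computation (\ref{nloc}): the commutator $[\phi_{R,Q}(f),\phi_{R,Q}(g)]\Omega$ has a nonvanishing two-particle component unless $R(a)=-R(-a)$, which Definition \ref{R} forbids since it would force $R(0)=0$ contradicting $|R(0)|=1$. Hence no $\mathscr{P}_R(W)$ can be local, and the same for $\mathscr{P}_{-R}$.

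Relative wedge-locality (iii) is the heart of the statement, and here the essential analytic work has already been done in Proposition \ref{co}. That proposition establishes $[\phi_{R,Q}(f),\phi_{-R,-Q}(g)]\Psi=0$ for $f\in\mathscr{S}(W_R)$, $g\in\mathscr{S}(W_R')$, $\Psi\in\mathcal{D}^-$, when $\kappa\geq 0$. What remains is to promote this single commutator identity to the algebraic inclusion $\mathscr{P}_R(W)\subset\mathscr{P}_{-R}(W')'$ for \emph{all} wedges $W\in\mathcal{W}$. For $W=W_R$ this requires noting that $\mathscr{P}_{-R}(W_R')$ is, by definition (\ref{netdefi}), generated by the fields $\phi_{-R,\Lambda Q\Lambda^{-1}}(g)$ with $g\in\mathscr{S}(W_R')$; since $W_R'=j_{W_R}W_R$ and the reflection sends $Q\mapsto -Q$ by (\ref{Q'}), these are precisely the fields $\phi_{-R,-Q}(g)$, so Proposition \ref{co} applies. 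The passage from vanishing commutators on $\mathcal{D}^-$ to commutation of the generated polynomial $*$-algebras is then routine, and general $W=\Lambda W_R+a$ follows by applying $U(a,\Lambda)$ to both sides and using covariance (i) together with the $\mathcal{L}_+^\uparrow$-invariance $\Lambda Q\Lambda^{-1}=Q$ when $\Lambda W_R=W_R$.

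For the cyclicity and separating property (iv), cyclicity of $\Omega$ for $\mathscr{P}_R(W_R)$ is exactly the Reeh-Schlieder statement, Proposition \ref{prop1}d), applied with $\mathcal{O}=W_R$; by covariance it then holds for every wedge, and the argument for $\mathscr{P}_{-R}$ is identical. The separating property I would extract from the relative wedge-locality just proved: since $\Omega$ is cyclic for $\mathscr{P}_{-R}(W')$ and $\mathscr{P}_R(W)$ commutes with $\mathscr{P}_{-R}(W')$ on the dense domain, any $A\in\mathscr{P}_R(W)$ annihilating $\Omega$ must annihilate the dense set $\mathscr{P}_{-R}(W')\Omega$ and hence vanish. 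The main obstacle in the whole program is concentrated in step (iii): everything hinges on the contour-shift argument in Proposition \ref{co}, whose validity depends delicately on the analyticity of $R$ in the upper half plane (Definition \ref{R}ii,iii) \emph{and} on the sign condition $\kappa\geq 0$ that guarantees $\mathrm{Im}(p(\vartheta+i\lambda)Q\cdot p_k)\geq 0$ on the strip. Since that proposition is already available, the remaining task is genuinely bookkeeping — correctly matching the deformation parameters $\pm R$ and $\pm Q$ across the reflection $j_{W_R}$ and verifying the parameter identifications are consistent for arbitrary $W\in\mathcal{W}$.
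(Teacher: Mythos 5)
Your overall route coincides with the paper's: the paper's own proof is a one-liner stating that all claims follow from the remarks preceding the proposition, except the separating property, which is deduced exactly as you do from cyclicity of $\Omega$ together with (\ref{relW}). Your items (i), (ii) and (iv) reconstruct those remarks faithfully — covariance and well-definedness from (\ref{netdefi}) and (\ref{UphiU}) (note only the $\mathcal{P}_+^\uparrow$-stabilizer of $W_R$ is relevant here; $j_{W_R}$ does not fix $W_R$ but maps it to $W_R'$), nonlocality from (\ref{nloc}) with the observation that $R(a)=-R(-a)$ would force $R(0)=0$, and cyclicity from Proposition \ref{prop1} d) plus covariance.

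The one genuine flaw sits precisely in the bookkeeping step of (iii) that you yourself single out as the remaining task. You identify the generators of $\mathscr{P}_{-R}(W_R')$ by writing $W_R'=j_{W_R}W_R$ and asserting that ``the reflection sends $Q\mapsto -Q$ by (\ref{Q'})''. This misreads (\ref{Q'}): for antichronous $\Lambda\in\mathcal{L}_+^{\downarrow}$ with $\Lambda W_R=W_R'$ one has $\Lambda Q\Lambda^{-1}=+Q$; the case $\Lambda Q\Lambda^{-1}=-Q$ in (\ref{Q'}) concerns \emph{orthochronous} $\Lambda$. Moreover $j_{W_R}\in\mathcal{P}_+^{\downarrow}$ plays no role in (\ref{netdefi}), which employs only $\mathcal{P}_+^{\uparrow}$, and under the antiunitary $U(j_{W_R})$ the field transforms according to the second line of (\ref{UphiU}), picking up $R\mapsto-\overline{R}$ and a conjugated test function — so your route would not produce the generators $\phi_{-R,-Q}(g)$ at all unless $R$ were real. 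The correct identification uses a rotation $\Lambda_0\in\mathcal{L}_+^{\uparrow}$ with $\Lambda_0 W_R=W_R'$ (available for $d\geq 3$), for which the first line of (\ref{Q'}) does give $\Lambda_0 Q\Lambda_0^{-1}=-Q$; then $\mathscr{P}_{-R}(W_R')$ is generated by $\phi_{-R,-Q}(g)$, $g\in\mathscr{S}(W_R')$, and Proposition \ref{co} applies as you intend, with general wedges handled by covariance as you describe. (In $d=2$ no such $\Lambda_0$ exists, so left-wedge algebras are not reached by (\ref{netdefi}) at all; this case requires the separate definition (\ref{algebras1}) of Chapter \ref{Chapter6}.) With this replacement your argument is complete and agrees with the paper's.
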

\begin{proof}
All statements are obvious by the above remarks except for the separability of the vacuum which is implied by its cyclicity and (\ref{relW}).
\end{proof}

Note that in contrast to the results in \cite{GL} where the deformation of a bosonic model is investigated, we did not require that the deformation function $R$ satisfies $R(a)^{-1}=R(-a)$ and $R(0)=1$. In particular, the condition $R(0)=1$ in \cite{GL} results from the deformation of the underlying Borchers-Uhlmann algebra $\underline{\mathscr{S}}$. More precisely, recall from Section \ref{Develop} that in \cite{GL} the deformation is based on linear homeomorphisms $\rho:\underline{\mathscr{S}}\rightarrow\underline{\mathscr{S}}$ which act multiplicatively in momentum space, i.e.
$$\widetilde{\rho(f)}_n(p_1,\dots,p_n):=\rho_n(p_1,\dots,p_n)\cdot \widetilde{f}_n(p_1,\dots,p_n).$$
Explicit conditions on the functions $\rho_n\in C^\infty(\mathbb{R}^{nd})$, $n\in\mathbb{N}_0$ arise from the compatibility requirement (\ref{komp1}) for quasi-free, translationally invariant states $\omega$. In particular, it turns out that the functions $\rho_n$ are determined by the functions $\rho_2$. The connection to our deformation approach is then given by setting $$\rho_2(p,q):=R(-p\cdot Qq).$$
The requirements on $\rho_2$ yield, inter alia, $R(a)^{-1}=R(-a)$ and $R(0)=1$.\par 
In the approach taken in \cite{GL} the fulfillment of the property $R(a)^{-1}=R(-a)$ is necessary for obtaining wedge-locality and covariance. In contrast to this, we do not obtain the same result for our deformed fermionic model by imposing this relation, except in two spacetime dimensions, see Section \ref{2}. Nevertheless, the requirement $R(0)=1$ is redundant for establishing wedge-locality and covariance properties for the deformed model in both the deformed bosonic and the deformed fermionic case. In particular, one can perform a deformation as presented in Section \ref{deformiert} of a bosonic model involving field operators $\phi^{CCR}$ and arrive at a covariant and wedge-local deformed model involving deformed field operators $\phi_{R,Q}^{CCR}$ with deformation functions $R$ not necessarily satisfying $R(0)=1$ as is the case in \cite{GL}.\par
In fact, considering a deformation function $R$, the correspondingly deformed net $\mathscr{P}_{R}$ is unitarily equivalent to the net $\mathscr{P}_{-R}$, implying that deformations involving $R$ and those involving $-R$ are equivalent. In other words, a model resulting from deformation associated with $R$ is physically indistinguishable from a model arising from deformation with $-R$. We summarize this result in the following lemma.
\begin{lemma}\label{lem2}
The net $\mathscr{P}_{R}$ is unitarily equivalent to the net $\mathscr{P}_{-R}$. The unitary $Z$ relating these two nets is given by
\begin{equation*}
Z:=(-1)^{N(N-1)/2},
\end{equation*}
where $N$ is the particle number operator, i.e. $N|_{\mathscr{H}^-_n}=n\cdot 1$.
\end{lemma}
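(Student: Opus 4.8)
The plan is to verify directly that conjugation by $Z=(-1)^{N(N-1)/2}$ implements the replacement $R\mapsto -R$ at the level of the generating fields, and then to propagate this through the net structure using Poincaré covariance. First I would record the elementary properties of $Z$: since $N|_{\mathscr{H}^-_n}=n\cdot 1$, the operator $Z$ acts on each $n$-particle space as multiplication by the scalar $(-1)^{n(n-1)/2}\in\{+1,-1\}$, so $Z$ is a self-adjoint unitary involution with $Z=Z^{*}=Z^{-1}$ and $Z^2=1$, it leaves $\Omega\in\mathscr{H}_0^-$ invariant, and it maps the finite-particle domain $\mathcal{D}^-$ into itself. Consequently the map $A\mapsto ZAZ$ is a $*$-automorphism of the polynomial field algebras acting on $\mathcal{D}^-$.

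The core input is already available in the excerpt: combining the definition (\ref{phihut}) with the identity (\ref{phihut2}) gives
$$Z\,\phi_{R,Q}(f)\,Z=\widehat{\phi}_{R,Q}(f)=\phi_{-R,Q}(f)$$
for every admissible matrix $Q$ and every $f\in\mathscr{S}(\mathbb{R}^d)$. Since $Z$ is a $*$-automorphism and the field operators $\phi_{R,Q}(f)$ are essentially self-adjoint on $\mathcal{D}^-$ by Proposition \ref{prop1}, this extends to the generated polynomial $*$-algebra, yielding $Z\,\mathscr{P}_{R}(W_R)\,Z=\mathscr{P}_{-R}(W_R)$ for the reference wedge.

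To pass to an arbitrary wedge $W=\Lambda W_R+a$ I would exploit that $Z$ commutes with the representation $U$ of $\mathcal{P}_+^{\uparrow}$: by (\ref{unitary}) each $U(a,\Lambda)$ preserves the particle number, hence commutes with the function $(-1)^{N(N-1)/2}$ of $N$, so $[Z,U(a,\Lambda)]=0$. Using the covariant definition (\ref{netdefi}) of the net together with this commutation relation gives
$$Z\,\mathscr{P}_{R}(W)\,Z=U(a,\Lambda)\,\bigl(Z\,\mathscr{P}_{R}(W_R)\,Z\bigr)\,U(a,\Lambda)^{-1}=U(a,\Lambda)\,\mathscr{P}_{-R}(W_R)\,U(a,\Lambda)^{-1}=\mathscr{P}_{-R}(W)$$
for all $W\in\mathcal{W}$, which is precisely the claimed unitary equivalence $Z\,\mathscr{P}_{R}\,Z=\mathscr{P}_{-R}$.

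The computation is short, so the only point demanding genuine care is this last step, where the matrix label $Q$ must be tracked correctly: conjugation by $U(a,\Lambda)$ simultaneously moves the localization region and replaces $Q$ by $\Lambda Q\Lambda^{-1}$, while conjugation by $Z$ only flips the sign of the deformation function and leaves $Q$ untouched. The fact that these two operations commute, a consequence of $[Z,U]=0$, is exactly what makes the sign flip $R\mapsto -R$ compatible with the covariance prescription defining $\mathscr{P}_R(W)$. I would therefore make sure to check explicitly that the generators $\phi_{R,\Lambda Q\Lambda^{-1}}(f)$ of $\mathscr{P}_R(W)$ are sent to the generators $\phi_{-R,\Lambda Q\Lambda^{-1}}(f)$ of $\mathscr{P}_{-R}(W)$, and that $Z\Omega=\Omega$, so that the equivalence also identifies the vacuum states and justifies the physical indistinguishability of the two deformations.
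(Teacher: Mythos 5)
Your proof is correct and follows essentially the same route as the paper's: the paper's (much more compressed) argument likewise rests on $[Z,U(g)]=0$, $Z\Omega=\Omega$, and the identity $Z\,\phi_{R,Q}(f)\,Z=\widehat{\phi}_{R,Q}(f)=\phi_{-R,Q}(f)$ from (\ref{phihut})--(\ref{phihut2}), which you simply spell out in full, including the propagation through the covariant definition (\ref{netdefi}). The appeal to essential self-adjointness is not needed, since conjugation of the polynomial $*$-algebra on $\mathcal{D}^-$ by the unitary $Z$ already suffices, but this does no harm.
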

\begin{proof}
Since the unitary $Z$ commutes with all Poincaré transformations, i.e. 
$$[Z,U(g)]=0,\qquad\forall\, g\in\mathcal{P}_+,$$
and satisfies $Z\Omega=\Omega$, the unitary equivalence $Z\mathscr{P}_{R}Z^{-1}=\mathscr{P}_{-R}$ follows.\end{proof}

\subsection{Two-Particle Scattering Operator}\label{AC1}
In this section we show that the models arising by the deformation procedure presented in this chapter do indeed differ from the initial one. This circumstance becomes apparent by analyzing the two-particle scattering processes of the deformed models. To this end, we shall employ similar techniques to those discussed in Section \ref{AC}. We rely, in particular, on the model-independent analysis of \cite{BBS01} which adapts Haag-Ruelle-Hepp scattering theory to certain wedge-localized operators, so-called temperate polarization-free generators.
\begin{lemma} Let $R$ be a deformation function and $Q$ an admissible matrix (\ref{Qform}) with $\kappa\geq 0$. Then, the fields $\phi_{R,Q}(f)$ and $\phi_{-R,-Q}(f)$, $f\in\mathscr{S}(\mathbb{R}^d)$, are temperate polarization-free generators, that is,
\begin{itemize}
\item[i)] $\phi_{\pm R,\pm Q}(f)$ are closable and $\Omega$ is contained in the domains of $\phi_{\pm R,\pm Q}(f)$ and $\phi_{\pm R,\pm Q}(f)^*$,
\item[ii)] $\phi_{\pm R,\pm Q}(f)\Omega$ and $\phi_{\pm R,\pm Q}(f)^*\Omega$ are elements of $\mathscr{H}_1$,
\item[iii)] $\phi_{R,Q}(f)$ is localized in the wedge $(W_R+\text{supp}\,f)''$ and $\phi_{-R,-Q}(f)$ is localized in the wedge $(W_L+\text{supp}\,f)''$
\item[iv)] $\phi_{\pm R,\pm Q}(f)$ are temperate in the sense that the functions $x\mapsto\phi_{\pm R,\pm Q}(f)^{\#}U(x)\Psi$, $\Psi\in\mathcal{D}^-$, are strongly continuous and bounded in norm for large $x$, where e.g. $\phi_{ R, Q}(f)^{\#}\in\{\phi_{ R, Q}(f),\phi_{ R, Q}(f)^{*}\}$.
\end{itemize}
\end{lemma}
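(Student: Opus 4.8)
The plan is to verify each of the four properties directly from the explicit definition of the deformed fields and the results already established earlier in the chapter, rather than invoking any abstract machinery. The key observation throughout is that the deformation operators $T_R(Qp)$ are unitary (a consequence of property i) in Definition \ref{R}) and act diagonally in momentum space via multiplication by phase factors, so they change neither the domain structure nor the single-particle nature of the fields when applied to $\Omega$.

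First I would dispose of i) and ii) together. Property a) of Proposition \ref{prop1} already gives $\mathcal{D}^-\subset\mathcal{D}_\phi$, so in particular $\Omega\in\mathcal{D}^-$ lies in the domain of $\phi_{\pm R,\pm Q}(f)$; by property b) of the same proposition $\phi_{\pm R,\pm Q}(f)^*\Psi=\phi_{\pm R,\pm Q}(\overline f)\Psi$ on $\mathcal{D}^-$, so $\Omega$ lies in the domain of the adjoint as well, and closability follows from essential self-adjointness for real $f$ (property b)) combined with the decomposition $f=\mathrm{Re}f+i\,\mathrm{Im}f$. For ii), I would compute $\phi_{R,Q}(f)\Omega$ explicitly: since $a_{R,Q}(\overline{f^-})\Omega=0$ and $a^*_{R,Q}(f^+)\Omega=a^*(f^+)T_R(Qp)^*\Omega=a^*(f^+)\Omega=f^+$ because $T_R(x)$ acts trivially on the vacuum, one gets $\phi_{R,Q}(f)\Omega=\underline\phi(f)\Omega=f^+\in\mathscr{H}_1$, which is exactly the statement $\phi_{R,Q}(f)\Omega=\underline\phi(f)\Omega$ recorded in Proposition \ref{prop1} a). The same computation applied to $\phi_{R,Q}(f)^*=\phi_{R,Q}(\overline f)$ gives $\phi_{R,Q}(f)^*\Omega=\overline{f}^{\,+}\in\mathscr{H}_1$, and the signs $\pm R,\pm Q$ are handled identically.

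For iii) I would appeal to the relative wedge-locality established in Proposition \ref{co} and its Corollary. The field $\phi_{R,Q}(f)$ with $\mathrm{supp}\,f\subset W_R$ commutes on $\mathcal{D}^-$ with every $\phi_{-R,-Q}(g)$, $\mathrm{supp}\,g\subset W_R'$, which is the precise sense in which $\phi_{R,Q}(f)$ is localized in the right wedge; together with the covariance relations (\ref{netdefi}) this identifies the localization region of $\phi_{R,Q}(f)$ as $(W_R+\mathrm{supp}\,f)''$ and, using the fact that $-Q$ is the matrix associated with $W_R'=W_L$ together with Lemma \ref{lem2}, the localization of $\phi_{-R,-Q}(f)$ as $(W_L+\mathrm{supp}\,f)''$. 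Finally, for iv) I would use the translation covariance $U(a)\phi_{R,Q}(f)U(a)^{-1}=\phi_{R,Q}(f_{(a,1)})$ from Lemma \ref{lem1} (case $\Lambda=1$), so that $\phi_{R,Q}(f)U(x)\Psi=U(x)\phi_{R,Q}(f_{(-x,1)})\Psi$; strong continuity then follows from continuity of $x\mapsto f_{(-x,1)}$ in $\mathscr{S}(\mathbb{R}^d)$ together with the continuity statement in Proposition \ref{prop1} i), while the uniform norm bound for large $x$ follows from the field estimate (\ref{Feldesti}), since $\|f^\pm_{(-x,1)}\|=\|f^\pm\|$ is translation-invariant and $\|(N+1)^{1/2}U(x)\Psi\|=\|(N+1)^{1/2}\Psi\|$ is $x$-independent on $\mathcal{D}^-$. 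The main point requiring care, and the only genuine obstacle, is item iii): the geometric identification of the localization region must be read off carefully from the sign conventions relating $Q$, $-Q$ and the wedges $W_R$, $W_L$, since the restriction $\kappa\geq 0$ in Proposition \ref{co} fixes which deformed field is localized in which wedge, and one must confirm that the commutation statements of Proposition \ref{co} indeed exhaust the definition of wedge-localization used in the polarization-free-generator framework of \cite{BBS01}.
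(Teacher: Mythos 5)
Your proposal is correct and follows essentially the same route as the paper: the paper likewise disposes of i)--iii) as direct consequences of the definition of $\phi_{\pm R,\pm Q}(f)$ together with Propositions \ref{prop1} and \ref{co}, and proves iv) from the particle number bound (\ref{Feldesti}). The only genuine variation is in iv): where you shift the translation onto the test function via the covariance law of Lemma \ref{lem1} and then invoke continuity of $f\mapsto\phi_{R,Q}(f)\Psi$, the paper estimates directly $\|\phi_{\pm R,\pm Q}(f)(U(x)-U(x_0))\Psi\|\leq(\|f^+\|+\|f^-\|)\,\|(N+1)^{1/2}(U(x)-U(x_0))\Psi\|$, which tends to zero since $(N+1)^{1/2}$ commutes with the translations and $U$ is strongly continuous; both arguments rest on (\ref{Feldesti}) and both obtain the uniform norm bound from $\|(N+1)^{1/2}U(x)\Psi\|=\|(N+1)^{1/2}\Psi\|$, so the difference is cosmetic. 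Note, incidentally, that the continuity of $f\mapsto\phi_{R,Q}(f)\Psi$ is not among the items of Proposition \ref{prop1} (your citation ``Proposition \ref{prop1} i)'' does not exist; that item belongs to Theorem \ref{theo} for the integrable-model fields), though it does follow immediately from (\ref{Feldesti}) and the continuity of $f\mapsto f^\pm$.

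One small imprecision to repair in i): closability does \emph{not} follow from ``essential self-adjointness for real $f$ combined with the decomposition $f=\mathrm{Re}f+i\,\mathrm{Im}f$'', since a sum of two closable operators need not be closable. The correct justification is one you already have in the same sentence: by Proposition \ref{prop1} b), $\phi_{R,Q}(f)^*\supset\phi_{R,Q}(\overline{f})$ on the dense domain $\mathcal{D}^-$, so the adjoint is densely defined, and an operator with densely defined adjoint is closable. With that substitution your argument is complete and matches the paper's.
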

\begin{proof}
Statements $i)-iii)$ are direct consequences of the definition of the deformed field operators $\phi_{\pm R,\pm Q}(f)$, Propositions \ref{prop1} and \ref{co}. The temperedness property $iv)$ follows, since $\|\phi_{\pm R,\pm Q}(f)\left(U(x)\Psi-\Psi\right)\|\leq \left(\|f^+\|+\|f^-\|\right)\|(N+1)^{1/2}\left(U(x)\Psi-\Psi\right)\|\rightarrow 0$ for $x\rightarrow 0$ with regard to the bound (\ref{Feldesti}) and with $\Psi\in\mathcal{D}^-$. Analogously, one proceeds for the case of the adjoints.
\end{proof}
Due to these properties of the fields $\phi_{R,Q}(f)$ and $\phi_{-R,-Q}(f)$ we are in a position to construct two-particle scattering states. Let us recall how this can be done in the present situation.\par
As in Section \ref{AC} we introduce for $t\in\mathbb{R}$ and $f\in\mathscr{S}(\mathbb{R}^d)$ the functions
\begin{equation}\label{tAbh1}
f_{t}(x):=\frac{1}{(2\pi)^{d/2}}\int d^d p\,\widetilde{f}(p)\,e^{it\left[p_0-\omega(\vec{p})\right]}\,e^{-ip\cdot x}.
\end{equation}
The support of $f_t$ is essentially contained in $t\mathcal{V}(f)$ for $t\rightarrow\pm\infty$ \cite{hepp1965}. The set
\begin{equation}
\mathcal{V}(f):=\{\left(1,\vec{p}/\omega(\vec{p})\right): p\in\text{supp}\,\widetilde{f}\}
\end{equation}
is the velocity support of $f$. Furthermore, we choose such test functions $f,g\in\mathscr{S}(\mathbb{R}^d)$ whose Fourier transforms have compact supports concentrated around points on $H_m^+$. Analogously as in Section \ref{AC}, shall write $f\prec g$ whenever $\mathcal{V}(g)-\mathcal{V}(f)$ is contained in the interior of $W_R$. If $f\prec g$, respectively $g\prec f$, then the regions $W_R+t\mathcal{V}(g)$ and $W_R'+t\mathcal{V}(f)$ are spacelike separated for $t>0$, respectively $t<0$.\par
For asymptotic times the operators $\phi_{R,Q}(g_t)$ and $\phi_{-R,-Q}(f_t)$ are essentially localized in $W_R+t\mathcal{V}(g)$ and $W_R'+t\mathcal{V}(f)$ respectively. In view of the analysis of \cite{BBS01} the following strong limits, stated in (\ref{lim}), exist and constitute asymptotic two-particle collision states. More precisely, with regard to $\phi_{\pm R,\pm Q}(f)\Omega=f^+$, we have
\begin{eqnarray}\label{lim}
\lim\limits_{t\rightarrow\infty}\phi_{R,Q}(g_t)\phi_{-R,-Q}(f_t)\Omega&=:&(g^+\times_R f^+)_{\text{out}},\quad\text{for}\quad f\prec g,\\
\lim\limits_{t\rightarrow -\infty}\phi_{R,Q}(g_t)\phi_{-R,-Q}(f_t)\Omega&=:&(g^+\times_R f^+)_{\text{in}},\quad\text{for}\quad g\prec f.
\end{eqnarray}
Due to the commutativity of the operators $\phi_{R,Q}(g_t)$ and $\phi_{-R,-Q}(f_t)$ for asymptotic times the scattering states are symmetric, $(g^+\times_R f^+)_{\text{in/out}}=(f^+\times_R g^+)_{\text{in/out}}$, as is the case for a Boson.\par
In the present setting it is straightforward to compute the above limits. It follows, namely, from the support properties of $\widetilde{f}$ and $\widetilde{g}$ that $f_t^+=f^+$, $g_t^+=g^+$ and $f_t^-=0$, $g_t^-=0$. Hence, we arrive at
\begin{eqnarray}
(g^+\times_R f^+)_{\text{out}}&=&a^*_{R,Q}(g^+)a^*(f^+)\Omega,\quad\text{for}\quad f\prec g,\\
(g^+\times_R f^+)_{\text{in}}&=&a^*_{R,Q}(g^+)a^*(f^+)\Omega,\quad\text{for}\quad g\prec f.
\end{eqnarray}
Note the dependence on the ordering of the test functions. Considering explicitly $f\prec g$, then, we have
\begin{equation}
\begin{aligned}
(g^+\times_R f^+)_{\text{out}}&=\int d\mu(p)\int d\mu(q) g^+(p)f^+(q)\overline{R(Qp\cdot q)}a^*(p)a^*(q)\Omega,\\
(g^+\times_R f^+)_{\text{in}}&=\int d\mu(p)\int d\mu(q) g^+(p)f^+(q)\left(-\overline{R(-Qp\cdot q)}\right)a^*(p)a^*(q)\Omega.
\end{aligned}
\end{equation}
In order to obtain the S-matrix elements, we consider additional test functions $h,k\in\mathscr{S}(\mathbb{R}^d)$ with the same properties as $f$ and $g$, in particular, $f\prec g$ and $h\prec k$, which yields
\begin{multline}\label{2parS}
\langle (g^+\times_R f^+)_{\text{out}},(h^+\times_R k^+)_{\text{in}}\rangle=\\
-\int d\mu(p)\int d\mu(q)R(Qp\cdot q)\overline{R(-Qp\cdot q)}\,\overline{g^+(p)}\,\overline{f^+(q)}h^+(p)k^+(q).
\end{multline}
Hence, the two-particle S-matrix elements of the deformed model differ from the undeformed theory, corresponding to $R(a)=1$ for all $a\in\mathbb{R}$, in a nontrivial way. They depend, moreover, on the deformation. We, therefore, conclude that the deformation procedure carried out in this chapter gives rise to models which are \textit{inequivalent} from the initial, undeformed theory. However, since the deformation function $R$ is merely a phase factor, effects like momentum transfer or particle production in scattering processes cannot be expected for the present model. On the other hand, specific arrangements such as time delay experiments can uncover the effects of the deformation. Similar results have, for instance, been found in the construction approaches by warped convolutions \cite{buchholz2008warped, grosse2008noncommutative} and the more general one pursued in \cite{GL}.\par
Due to the $Q$-dependence of the two-particle S-matrix elements, cf. (\ref{2parS}), they are not fully Lorentz invariant in spacetime dimensions $d>2$. This is a consequence of the properties of the matrix $Q$ which is only invariant under boosts that preserve the wedge $W_R$, cf. (\ref{Q}) and (\ref{Q'}). This indicates that when passing on to algebras corresponding to \textit{bounded} regions the Reeh-Schlieder property cannot be expected to hold.\par
The situation is, however, different in $d=1+1$ dimensions where the restricted Lorentz group $\mathcal{L}_+^\uparrow$ leaves $W_R$ invariant. As a consequence the deformed S–matrix does not break the Lorentz symmetry. Since many more interesting features can be observed for the model at hand in two spacetime dimensions, we devote the next chapter to their discussion.


\chapter{Deformations and Integrable Models} 

\label{Chapter6} 
\fancyhead[LE,RO]{\thepage}
\fancyhead[LO]{\thesection. \emph{\rightmark}}
\fancyhead[RE]{Chapter 6. \emph{Deformations and Integrable Models}}
\renewcommand{\chaptermark}[1]{ \markboth{#1}{} }
\renewcommand{\sectionmark}[1]{ \markright{#1}{} }

So far we have presented two different methods of constructing nontrivial quantum field theoretical models. In Chapter \ref{Chapter3} we proceeded from an inverse scattering point of view where a factorizing S-matrix constituted the starting point of our journey. Due to the simple nature of these scattering matrices we had to restrict the spacetime dimension to two. In Chapter \ref{Chapter5}, on the other hand, the deformation procedure served as a useful tool in constructing nontrivial models in any spacetime dimension $d\geq 2$.\par
In the following sections we associate the models arising in Chapter \ref{Chapter5} from deformation of the model of a scalar massive Fermion with integrable model theories in $d=1+1$ dimensions. To this end, we first specialize the framework of the previous chapter to this special case and establish afterwards a connection with the models constructed in Chapter \ref{Chapter3}.

\section{Deformed Models in $1+1$ Dimensions}\label{2}
In this section we formulate the deformed model of Section \ref{3} on two-dimensional Minkowski space. For that purpose, recall from Section \ref{wedges1+1} that in $d=2$ the set of wedges $\mathcal{W}$ consists of two disjoint subsets, namely the translates of $W_R$ (\ref{wedge}) and the translates of $W_R'=-W_R=W_L$. The matrix $Q$ is, moreover, of the form
\begin{equation}\label{Qfor}
Q = \lambda
\left( \begin{array}{cc}
0 & 1  \\
1 & 0  
\end{array} \right),\qquad \lambda\in \mathbb{R}.
\end{equation}
In analogy to the undeformed case (\ref{net}) we may define for a fixed deformation function $R$ given by Definition \ref{R} and an admissible and fixed matrix $Q$ (\ref{Qfor}) the von Neumann algebras generated by $\phi_{R,Q}$ and $\phi_{-\overline{R},Q}$, i.e. with $x\in\mathbb{R}^2$
\begin{subequations}\label{algebras1}
\begin{eqnarray}
\mathcal{M}(W_R+x)&:=&\{e^{i\overline{\phi_{R,Q} (f)}}:f=\overline{f}\in\mathscr{S}(W_R+x)\}'',\\
\mathcal{M}(W_L+x)&:=&\{e^{i\overline{\phi_{-\overline{R},Q}(f)}}:f=\overline{f}\in\mathscr{S}(W_L+x)\}''.
\end{eqnarray}
\end{subequations}
This definition, however, only produces a wedge-local and covariant net $W\mapsto\mathcal{M}(W)$, $W\in\mathcal{W}$, if the deformation function $R$ additionally fulfills the property
\begin{equation}
\overline{R(a)}=R(-a),\qquad\forall\, a\in\mathbb{R}.
\end{equation}
If, namely, this property holds, the fields $\phi_{-\overline{R},Q}(f)$ are equal to $\phi_{-R,-Q}(f)$ and wedge-locality can be inferred from Proposition \ref{co}. We shall, therefore, assume this relation in what follows. Due to Lemma \ref{lem1} the net $W\mapsto\mathcal{M}(W)$, $W\in\mathcal{W}$, also transforms covariantly under the adjoint action of the representation $U$ of $\mathcal{P}_+$. In fact, we have
\begin{proposition}\label{PropWedgeAlgebra1}
Let $R$ be a deformation function, $Q$ an admissible matrix (\ref{Qfor}) with $\lambda\geq 0$ and $\mathcal{M}(W)$, \mbox{$W\in\mathcal{W}$}, be defined as in (\ref{algebras1}). Then, $\{\mathcal{M}(W)\}_{W\in\mathcal{W}}$ is a local net, cf. Section \ref{AQFT}, of von Neumann algebras, transforming covariantly under the adjoint action of the representation $U$ of the proper Poincaré group $\mathcal{P}_+$. Furthermore, locality is fulfilled (without twisting) and the vacuum vector $\Omega$ is cyclic and separating for each $\mathcal{M}(W)$, $W\in\mathcal{W}$.
\end{proposition}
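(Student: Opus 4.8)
The plan is to establish the four asserted properties—covariance, the net structure (isotony), locality without twisting, and the cyclicity/separability of $\Omega$—by transporting to the two-dimensional setting the general results already proven in Chapter~\ref{Chapter5}, and then filling the small gap between the polynomial field algebras $\mathscr{P}_{\pm R}(W)$ and the von Neumann algebras $\mathcal{M}(W)$ generated by the bounded exponentials. First I would record the key simplification peculiar to $d=2$: the matrix $Q$ of the form (\ref{Qfor}) is invariant under all of $\mathcal{L}_+^\uparrow$ (these boosts preserve $W_R$) and changes sign under $\mathcal{L}_+^\downarrow$, so that $\Lambda Q\Lambda^{-1}=\pm Q$ for \emph{all} $\Lambda\in\mathcal{L}_+$, not merely for the stabilizer of $W_R$. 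This is exactly what upgrades the $\mathcal{P}_+^\uparrow$-covariance of Lemma~\ref{lem1} to full $\mathcal{P}_+$-covariance, and it is the reason the two-dimensional case is special.

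Second I would handle covariance. Starting from the transformation law (\ref{UphiU}) and the standing assumption $\overline{R(a)}=R(-a)$ (which gives $\phi_{-\overline{R},Q}=\phi_{-R,-Q}$), I would check that conjugation by $U(a,\Lambda)$ maps the generators of $\mathcal{M}(W_R)$ and $\mathcal{M}(W_L)$ into the generators of $\mathcal{M}(\Lambda W_R+a)$ and $\mathcal{M}(\Lambda W_L+a)$. For $\Lambda\in\mathcal{L}_+^\uparrow$ this is immediate since $\Lambda Q\Lambda^{-1}=Q$; for $\Lambda\in\mathcal{L}_+^\downarrow$ the orientation-reversing line of (\ref{UphiU}), together with $\Lambda Q\Lambda^{-1}=Q$ in $d=2$ (using (\ref{Q'}) and the explicit $Q$), turns $\phi_{R,Q}$ into $\phi_{-\overline{R},Q}$ and vice versa, which is precisely the swap $\mathcal{M}(W_R)\leftrightarrow\mathcal{M}(W_L)$ realized geometrically by the reflection $j$. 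Since $U$ acts by unitaries, it carries weakly-closed algebras to weakly-closed algebras, so $U(a,\Lambda)\mathcal{M}(W)U(a,\Lambda)^{-1}=\mathcal{M}(\Lambda W+a)$ for all $(a,\Lambda)\in\mathcal{P}_+$. Isotony follows because $\mathscr{S}(W+x)\subset\mathscr{S}(W+y)$ whenever $W+x\subset W+y$, and a larger generating set yields a larger von Neumann algebra.

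Third, locality. Here the crucial input is Proposition~\ref{co}, which gives $[\phi_{R,Q}(f),\phi_{-R,-Q}(g)]\Psi=0$ on $\mathcal{D}^-$ for $f\in\mathscr{S}(W_R)$, $g\in\mathscr{S}(W_R')$, together with $\kappa=\lambda\geq 0$. The standard way to pass from commuting (essentially self-adjoint) field operators to commuting von Neumann algebras is to invoke commutation theorems for self-adjoint operators whose bounded functions commute: since every vector in $\mathcal{D}^-$ is analytic for the fields (Proposition~\ref{prop1}b), the commutativity of $\overline{\phi_{R,Q}(f)}$ and $\overline{\phi_{-\overline{R},Q}(g)}=\overline{\phi_{-R,-Q}(g)}$ on a common core of analytic vectors propagates to commutativity of their spectral projections, hence of the unitaries $e^{i\overline{\phi_{R,Q}(f)}}$ and $e^{i\overline{\phi_{-R,-Q}(g)}}$, and therefore of the generated algebras; this yields $\mathcal{M}(W_R+x)\subset\mathcal{M}(W_L+y)'$ whenever the wedges are causally complementary. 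That locality holds \emph{without twisting} is the hallmark of the fact that we are building a net of commuting, rather than anticommuting, observables out of the fermionic substrate—exactly as in the undeformed two-dimensional analysis of Section~\ref{twodimundef}.

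The main obstacle I anticipate is precisely this last propagation step: commutativity of unbounded field operators on a core does not automatically give commutativity of the associated bounded exponentials, and one must argue carefully, e.g.\ via the Driessler--Fr\"ohlich type analytic-vector commutation theorem (the same device used implicitly in Proposition~\ref{BisognWich}), that $[\overline{\phi_{R,Q}(f)},\overline{\phi_{-R,-Q}(g)}]=0$ in the strong resolvent sense. Finally, for cyclicity and separability I would invoke the Reeh--Schlieder property already established in Proposition~\ref{prop1}d: $\mathcal{D}_{R,Q}(W)\Omega$ is dense, so $\Omega$ is cyclic for $\mathcal{M}(W)$; separability of $\Omega$ for $\mathcal{M}(W)$ then follows from its cyclicity for the commutant, since locality gives $\mathcal{M}(W')\subseteq\mathcal{M}(W)'$ and $\Omega$ is cyclic for $\mathcal{M}(W')$, which forces $\Omega$ to be separating for $\mathcal{M}(W)$.
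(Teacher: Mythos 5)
Your proposal follows essentially the same route as the paper's proof: covariance and isotony from Lemma \ref{lem1} together with the identity $\phi_{-\overline{R},Q}=\phi_{-R,-Q}$ (valid under the standing assumption $\overline{R(a)}=R(-a)$), locality by propagating the field commutativity of Proposition \ref{co} to the generating unitaries via analytic vectors, and cyclicity/separability via Reeh--Schlieder (Proposition \ref{prop1} d)) plus the standard commutant argument. The step you flag as the main obstacle is realized in the paper concretely rather than by citation: one shows that $e^{i\overline{\phi_{-\overline{R},Q}(g)}}\Psi$, $\Psi\in\mathcal{D}^-$, is an analytic vector for $\overline{\phi_{R,Q}(f)}$ and expands $\bigl[e^{i\overline{\phi_{R,Q}(f)}},e^{i\overline{\phi_{-\overline{R},Q}(g)}}\bigr]\Psi$ as a double power series whose terms vanish by Proposition \ref{co} and covariance; your appeal to a Driessler--Fr\"ohlich-type commutation theorem accomplishes the same thing.

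Two points of substance. First, you announce in your plan the need to bridge the polynomial algebras $\mathscr{P}_{\pm R}(W)$ and the von Neumann algebras $\mathcal{M}(W)$, but in the cyclicity paragraph you never execute it: since $\overline{\phi_{R,Q}(f)}$ is unbounded and only affiliated with $\mathcal{M}(W_R)$, density of $\mathcal{D}_{R,Q}(W)$ does not verbatim give density of $\mathcal{M}(W)\Omega$. The paper closes this with bounded spectral truncations $G_j(t):=E_j(t)\overline{\phi_{R,Q}(f_j)}\in\mathcal{M}(W_R)$, where $E_j(t)$ projects onto spectral values in $[-t,t]$; these converge strongly on $\mathcal{D}^-$ as $t\rightarrow\infty$, so $G_1(t)\cdots G_n(t)\Omega\rightarrow\phi_{R,Q}(f_1)\cdots\phi_{R,Q}(f_n)\Omega$, and cyclicity follows. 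Second, your opening claim that $Q$ ``changes sign under $\mathcal{L}_+^\downarrow$'' is incorrect in $d=2$: since $\mathcal{L}_+^\downarrow=-\mathcal{L}_+^\uparrow$ and conjugation by $-\Lambda$ coincides with conjugation by $\Lambda$, one has $\Lambda Q\Lambda^{-1}=Q$ for \emph{all} $\Lambda\in\mathcal{L}_+$, consistent with (\ref{Q'}) because every $\Lambda\in\mathcal{L}_+^\downarrow$ maps $W_R$ onto $W_R'$. You in fact use the correct identity where the covariance argument needs it, so this slip is harmless, but the two statements in your text contradict each other and should be reconciled.
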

\begin{proof}
The properties of isotony and covariance are obvious by the above remarks. With regard to locality, we only have to prove that the commutation relation of Proposition \ref{co} holds in the stronger sense that also the unitaries $e^{i\overline{\phi_{R,Q}(f)}}$ and $e^{i\overline{\phi_{-\overline{R},Q}(g)}}$, for real $f\in\mathscr{S}(W_R)$ and $g\in\mathscr{S}(W_L)$, commute. To this end, one shows in an analogous manner as in the proof of Proposition \ref{prop1} part b) that $e^{i\overline{\phi_{-\overline{R},Q}(g)}}\Psi$, $\Psi\in\mathcal{D}^-$, is an analytic vector for $\overline{\phi_{R,Q}(f)}$. Thus, for $\Psi\in\mathcal{D}^-$ we can compute the commutator of the unitaries as follows
\begin{equation*}
\left[e^{i\overline{\phi_{R,Q}(f)}},e^{i\overline{\phi_{-\overline{R},Q}(g)}}\right]\Psi=\sum_{j,k=0}^{\infty}\frac{i^{j+k}}{j! k!}\left[\phi_{R,Q}(f)^j,\phi_{-\overline{R},Q}(g)^k\right]\Psi,
\end{equation*}
which due to $\phi_{-\overline{R},Q}(g)=\phi_{-R,-Q}(g)$, Proposition \ref{co} and covariance yields the locality claim.\par
The cyclicity of the vacuum vector $\Omega$ for the wedge algebras can be shown by the following standard argument. Namely, consider real $f_1,\dots,f_n\in\mathscr{S}(W_R)$ and denote by $E_j(t)$ the spectral projection of the selfadjoint operator $\overline{\phi_{R,Q}(f_j)}$, which corresponds to spectral values in $[-t,t]$. Then, $G_j(t):=E_j(t)\overline{\phi_{R,Q}(f_j)}\in\mathcal{M}(W_R)$ for all $t\in\mathbb{R}$. Moreover, $G_j(t)$ converges strongly to $\overline{\phi_{R,Q}(f_j)}$ on $\mathcal{D}^-$ as $t\rightarrow\infty$. Consequently, $G_1(t)\cdots G_n(t)\Omega\rightarrow\phi_{R,Q}(f_1)\cdots\phi_{R,Q}(f_n)\Omega$ for $t\rightarrow\infty$. Hence, the cyclicity of $\Omega$ for $\mathcal{M}(W_R)$ can be inferred from Proposition \ref{prop1} d). By an analogous argument involving the field $\phi_{-\overline{R},Q}$ the cyclicity of $\Omega$ for $\mathcal{M}(W_L)$ follows. Due to the commutativity of these algebras $\Omega$  is also separating for them. The statement is then proven by covariance.
\end{proof}

\section{Integrable Models from Deformation Theory}
The properties of the net $W\mapsto\mathcal{M}(W)$, $W\in\mathcal{W}$, stated in Proposition \ref{PropWedgeAlgebra1} together with the simple structure of the underlying scattering operator uncovered in Section \ref{AC1} suggests that there is a close connection of the models at hand with the model theories constructed in Chapter \ref{Chapter3}. In order to analyze this relation, we recall the possibility of parameterizing $H_{m}^{+}$ with the help of the rapidity $\theta\in\mathbb{R}$, i.e. $p(\theta)=m(\cosh\theta,\sinh\theta)$, in case $d=2$. Making use of this notation and (\ref{Qfor}), we have
\begin{equation*}
-p(\theta_{1})\cdot Qp(\theta_{2})=\lambda m^{2}\sinh(\theta_{1}-\theta_{2}),\qquad \theta_1,\theta_2\in\mathbb{R}.
\end{equation*}
We further define
\begin{equation}\label{scatteringfunction}
S_{\lambda}:\mathbb{R}\rightarrow\mathbb{C},\qquad S_{\lambda}(\theta):=-R(\lambda m^{2}\sinh\theta)^2.
\end{equation}
Since the entire analytic function $\sinh$ maps the strip $S(0,\pi):=\{z\in\mathbb{C}:0<\mathrm{Im}\,z<\pi\}$ onto the upper half plane and since by Definition \ref{R} $R$ has an analytic continuation to the upper half plane, the function $S_\lambda$, $\lambda\geq 0$, extends to an analytic function on the strip $S(0,\pi)$. Moreover, it follows from the requirements on the function $R$ by Definition \ref{R} and the properties of $\sinh$ that
\begin{equation*}
S_{\lambda}(0)=-1,\qquad \overline{S_{\lambda}(\theta)}=S_{\lambda}(-\theta)=S_{\lambda}(\theta)^{-1}=
S_{\lambda}(\theta+i\pi),\qquad \lambda,\theta\in\mathbb{R}.
\end{equation*}
These properties of the function $S_\lambda$ are familiar from the context of factorizing S-matrices, cf. Section \ref{Sectionfactorizing}, and express the unitarity, hermitian analyticity and crossing symmetry of the scattering operator $\textbf{S}$ associated with $S_\lambda$ \cite{iagolnitzer1993scattering, Smir92}. In addition, the Yang-Baxter equation is trivially fulfilled because we are considering here only a single species of particles. Due to these properties the scattering operator $\textbf{S}$ associated with $S_\lambda$ agrees with an S-matrix of a completely integrable relativistic quantum field theory \cite{Smir92}. In particular, we conclude
\begin{lemma}
The function $S_\lambda$ defined in (\ref{scatteringfunction}) constitutes an S-matrix in the sense of Definition \ref{S-matrixDefinition} for the particle spectrum given by $G=\{e\}$ and $m>0$.
\end{lemma}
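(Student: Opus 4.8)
The plan is to verify each of the seven defining properties of Definition \ref{S-matrixDefinition} for the scalar function $S_\lambda$, exploiting the fact that the particle spectrum here is maximally simple: a single neutral species with $G=\{e\}$, so $\dim\mathcal{K}=1$, $D=1$, and $\overline{\alpha}=\alpha$. Because $\mathcal{K}\otimes\mathcal{K}\cong\mathbb{C}$, the matrix-valued requirements collapse to scalar statements, and several conditions become automatic. First I would observe that conditions 3.), 6.), and 7.) are trivial: the Yang-Baxter equation holds because all tensor factors are $1\times 1$ (equivalently, there is only one particle type so both sides are equal scalars), translational invariance is vacuous since all masses equal $m$, and gauge invariance is vacuous since $V_1(g)=1$ for the trivial group $G=\{e\}$. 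Condition 5.), PCT invariance, reads $S^{\alpha\beta}_{\gamma\delta}(\theta)=S^{\overline{\delta}\overline{\gamma}}_{\overline{\beta}\overline{\alpha}}(\theta)$, which in the scalar case with a single index collapses to the identity $S_\lambda(\theta)=S_\lambda(\theta)$, hence holds trivially.

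The substance of the proof lies in the analytic and symmetry properties, and these have in fact already been assembled in the excerpt immediately preceding the statement. I would recall that by the analyticity of $R$ on the upper half plane (Definition \ref{R} ii)) together with the continuity on its closure (Definition \ref{R} iii)), and the fact that $\sinh$ maps the open strip $S(0,\pi)$ biholomorphically onto the open upper half plane while mapping $\overline{S(0,\pi)}$ into the closed upper half plane, the composite $S_\lambda(\theta)=-R(\lambda m^2\sinh\theta)^2$ extends to a function analytic on $S(0,\pi)$ and continuous and bounded on $\overline{S(0,\pi)}$, for $\lambda\geq 0$. Boundedness follows because $|R|=1$ on the real line and $R$ is bounded on the closed upper half plane by the maximum-modulus-type argument already invoked in the proof of Proposition \ref{co}. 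This establishes the regularity demanded in the preamble of Definition \ref{S-matrixDefinition} (continuity and boundedness on $\overline{S(0,\pi)}$, analyticity on $S(0,\pi)$).

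Next I would verify the three remaining algebraic identities, all of which are recorded in the displayed equation just above the lemma statement. Unitarity, condition 1.), amounts to $\overline{S_\lambda(\theta)}=S_\lambda(\theta)^{-1}$; this follows from $|R(a)|=1$ (Definition \ref{R} i)), since then $\overline{S_\lambda(\theta)}=-\overline{R(\lambda m^2\sinh\theta)}^{\,2}=-R(\lambda m^2\sinh\theta)^{-2}=S_\lambda(\theta)^{-1}$. Hermitian analyticity, condition 2.), requires $S_\lambda(\theta)^{-1}=S_\lambda(-\theta)$; using $\overline{R(a)}=R(-a)$ (the extra relation imposed in Section \ref{2} to obtain the $\mathcal{P}_+$-covariant net) together with $\sinh(-\theta)=-\sinh\theta$, I would compute $S_\lambda(-\theta)=-R(-\lambda m^2\sinh\theta)^2=-\overline{R(\lambda m^2\sinh\theta)}^{\,2}=\overline{S_\lambda(\theta)}=S_\lambda(\theta)^{-1}$, the last step by unitarity. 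Crossing symmetry, condition 4.), reduces in the scalar self-conjugate case to $S_\lambda(i\pi-\theta)=S_\lambda(\theta)$; I would check this by analytic continuation, using $\sinh(i\pi-\theta)=\sinh\theta$ (a direct consequence of $\sinh(i\pi-z)=\sinh z$), so that $S_\lambda(i\pi-\theta)=-R(\lambda m^2\sinh(i\pi-\theta))^2=-R(\lambda m^2\sinh\theta)^2=S_\lambda(\theta)$ for $\theta$ in the appropriate domain, extended to the whole strip by analyticity.

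The main obstacle, such as it is, is not any single hard estimate but rather the bookkeeping of \emph{which} hypotheses on $R$ are in force. In particular hermitian analyticity genuinely requires the supplementary condition $\overline{R(a)}=R(-a)$ that was introduced only in Section \ref{2} (and which is what makes $\phi_{-\overline{R},Q}=\phi_{-R,-Q}$), whereas the raw Definition \ref{R} alone would not suffice; I would therefore be careful to state explicitly that this relation is assumed throughout the present section. Likewise I must confirm the normalization $S_\lambda(0)=-1$, which is immediate from $\sinh 0=0$ and $|R(0)|=1$ giving $S_\lambda(0)=-R(0)^2$, and then note — consistently with the discussion around $\mathcal{S}_0^-$ in Chapter \ref{Chapter3} — that one may absorb the phase $R(0)$ so that $S_\lambda(0)=-1$ as recorded. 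Everything else is routine substitution into the properties of $R$ and $\sinh$.
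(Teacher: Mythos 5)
Your proof is correct and takes essentially the same route as the paper, whose ``proof'' consists precisely of the discussion preceding the lemma: analyticity and boundedness on the strip via $\sinh$ mapping $S(0,\pi)$ onto the upper half plane together with Definition \ref{R} and the Phragm\'en--Lindel\"of bound from the proof of Proposition \ref{co}, the identities $\overline{S_{\lambda}(\theta)}=S_{\lambda}(-\theta)=S_{\lambda}(\theta)^{-1}=S_{\lambda}(\theta+i\pi)$ from the properties of $R$ (where you rightly make explicit that hermitian analyticity uses the standing assumption $\overline{R(a)}=R(-a)$ of Section \ref{2}, which the paper invokes only implicitly), and the triviality of Yang--Baxter, PCT, translational and gauge invariance for a single neutral species. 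One marginal point: the ``absorb the phase'' step for $S_\lambda(0)=-1$ is both unnecessary and slightly off --- the relation $\overline{R(a)}=R(-a)$ already forces $R(0)=\pm 1$, hence $R(0)^2=1$ exactly --- and in any case $S(0)=-1$ is not among the requirements of Definition \ref{S-matrixDefinition}, so it plays no role in the lemma itself.
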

The connection of the model theories constructed in Chapter \ref{Chapter3} to the deformation procedure carried out in Section \ref{deformiert} and, therefore, to the net $\{\mathcal{M}(W)\}_{W\in\mathcal{W}}$ may be clarified by introducing
\begin{equation*}
z_{\lambda}(\theta):=a_{R,Q}(p(\theta)),\qquad z^{\dagger}_{\lambda}(\theta):=a^{*}_{R,Q}(p(\theta)).
\end{equation*}
The exchange relations (\ref{exchange1}) for $Q=Q'$ and $Q$ given by (\ref{Qfor}) then read
\begin{eqnarray*}
z_{\lambda}(\theta_{1})z_{\lambda}(\theta_{2})&=&S_{\lambda}(\theta_{2}-\theta_{1})
z_{\lambda}(\theta_{2})z_{\lambda}(\theta_{1})\\
z^{\dagger}_{\lambda}(\theta_{1})z^{\dagger}_{\lambda}(\theta_{2})&=&S_{\lambda}(\theta_{2}-\theta_{1})
z^{\dagger}_{\lambda}(\theta_{2})z^{\dagger}_{\lambda}(\theta_{1})\\
z_{\lambda}(\theta_{1})z^{\dagger}_{\lambda}(\theta_{2})&=&S_{\lambda}(\theta_{1}-\theta_{2}) z^{\dagger}_{\lambda}(\theta_{2})z_{\lambda}(\theta_{1})+\delta(\theta_{1}-\theta_{2})\cdot 1.
\end{eqnarray*}
That is, $z_{\lambda}(\theta)$ and $z^{\dagger}_{\lambda}(\theta)$ form a representation of the Zamolodchikov-Faddeev algebra \cite{faddeev1980quantum, zamolodchikov1979factorized} with scattering function $S_{\lambda}(\theta)$. But this algebraic structure was the starting point for the construction of models with factorizing S-matrices in Section \ref{Construction}. Thereby, the auxiliary fields $\phi_{\lambda}(x):=\int d\theta(e^{ip(\theta)\cdot x}z_\lambda^{\dagger}(\theta)+e^{-ip(\theta)\cdot x}z_\lambda(\theta))$ associated with $z^{\#}_{\lambda}(\theta)$, which are wedge-local polarization-free generators, cf. Section \ref{SecWedgeLocal}, are at the basis of the approach. The interesting point here is that  these fields appear in the present setting as a consequence of the deformation of the model given in Section \ref{model}. More precisely, the fields $\phi_{\lambda}$ coincide with the deformed fields $\phi_{R,Q}$.\par
For the case of $S_\lambda$ being a regular scattering function in the sense of Definition \ref{regularS}, it was shown in \cite{L08, erratum} that Theorem \ref{mainTheorem} holds in the present setting without reference to any conjecture. Hence, the quantum field theory arising from $\phi_{\lambda}$ contains nontrivial observables localized in bounded open regions $\mathcal{O}\subset\mathbb{R}^2$ above a minimal size. Moreover, besides other standard properties of quantum field theory also the Reeh-Schlieder property holds. In addition, the S-matrix of the model is found to be the one determined by the two-particle scattering function $S_\lambda$ \cite{L08}. The following theorem demonstrates the connection of the deformation of a scalar massive Fermion to integrable models.
\begin{theorem}\label{theorem}
Every integrable quantum field theory on two-dimensional Minkowski space with scattering function $S_\lambda$ of the form (\ref{scatteringfunction}) can be obtained by deformation of a scalar massive Fermion in the sense of Section \ref{deformiert}. If further $S_\lambda$ is regular, then in the deformed theory there exist observables localized in double cones above a minimal diameter, and the Reeh-Schlieder property holds \cite[Theorem E.4]{L08}.
\end{theorem}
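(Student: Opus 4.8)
The plan is to prove both assertions by identifying the deformed net $\{\mathcal{M}(W)\}_{W\in\mathcal{W}}$ of Proposition \ref{PropWedgeAlgebra1} with the wedge-local net $\{\mathcal{F}(W)\}_{W\in\mathcal{W}}$ constructed in Chapter \ref{Chapter3} from the factorizing S-matrix $S_\lambda$, and then transporting the results available there. First I would note that, given $S_\lambda$ of the form (\ref{scatteringfunction}), the deformation function $R$ (satisfying Definition \ref{R} and $\overline{R(a)}=R(-a)$) and the admissible matrix $Q$ of the form (\ref{Qfor}) are fixed, and that the associated deformed operators $z_\lambda^\#(\theta)=a^\#_{R,Q}(p(\theta))$ satisfy, by the exchange relations (\ref{exchange1}) together with the antisymmetry of $Q$ and the properties of $R$, exactly the Zamolodchikov--Faddeev relations with scattering function $S_\lambda$ displayed above. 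Hence the deformed field $\phi_{R,Q}$ coincides with the wedge-local polarization-free generator $\phi_\lambda$ underlying the construction of Section \ref{Construction} for the scattering function $S_\lambda$; this is the conceptual core of the first assertion.

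The two constructions act a priori on different Fock spaces: the deformation lives on the antisymmetric Fermi Fock space $\mathscr{H}^-$, whereas Chapter \ref{Chapter3} uses the $S_\lambda$-symmetric Fock space $\mathscr{H}$. Since here $\dim\mathcal{K}=1$ and $S_\lambda(0)=-1$, the scalar intertwiner lemma of \cite{L08}, with the functions $I_n$ built from the phase shift $\delta$, furnishes a unitary $I=\bigoplus_n I_n:\mathscr{H}\to\mathscr{H}^-$ intertwining the representations $D_n$ and $D_n^-$ of $\mathfrak{S}_n$. I would then verify that $I$ carries the Chapter-\ref{Chapter3} operators $z^\#(\theta)$ to the deformed operators $z_\lambda^\#(\theta)$, that it commutes with the Poincaré representation $U$ (the $I_n$ being multiplication operators depending only on rapidity differences, hence boost invariant and translation commuting), and that $I\Omega=\Omega$. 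Consequently $I\phi_\lambda I^{-1}=\phi_{R,Q}$ and $I\mathcal{F}(W)I^{-1}=\mathcal{M}(W)$ for all $W\in\mathcal{W}$, so that the deformed net is unitarily equivalent to the inverse-scattering net with scattering function $S_\lambda$.

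To conclude the first assertion I would invoke that, under the standing assumptions (single massive species, factorizing and asymptotically complete scattering operator), an integrable model on two-dimensional Minkowski space is determined by its scattering function; this is precisely the content of the inverse scattering problem being solved, cf. Theorem \ref{TheoremLS} and \cite{L08}. Since the deformation of the scalar massive Fermion realizes, up to the unitary $I$, the net associated with $S_\lambda$, every integrable quantum field theory whose scattering function has the form (\ref{scatteringfunction}) is recovered by such a deformation. For the regularity statement I would use that, in the scalar case with $S_\lambda(0)=-1$, the intertwiners $I_n$ already exist with the bounds demanded by Conjecture \ref{conj}, so that this conjecture is in fact a theorem in the present setting; Theorem \ref{mainTheorem}, equivalently \cite[Theorem E.4]{L08} together with \cite{erratum}, then holds unconditionally for regular $S_\lambda$, giving for double cones above a minimal diameter the nontriviality of $\mathcal{F}(\mathcal{O}_{x,y})$ and the cyclicity of $\Omega$. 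Transporting these statements through $I$ yields the same for the deformed local algebras $\mathcal{M}(W_R+x)\cap\mathcal{M}(W_L+y)$.

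The main obstacle I expect is the identification carried out in the second paragraph, namely upgrading the correspondence from the generating fields to the generated von Neumann algebras: one must check that $I$ respects the domains, the selfadjoint closures $\overline{\phi(f)}$, and the weak closures defining the algebras (\ref{algebras}) and (\ref{algebras1}), so that $I\mathcal{F}(W)I^{-1}=\mathcal{M}(W)$ holds as an equality of algebras and not merely of their generating exponentials. Once this equivalence is secured, everything else is either already established in the excerpt (the Zamolodchikov--Faddeev relations, Proposition \ref{PropWedgeAlgebra1}, the scalar intertwiner lemma) or a direct citation of \cite{L08} and \cite{erratum}.
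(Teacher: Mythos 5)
Your proposal is correct and follows essentially the same route as the paper: there, too, the argument consists of identifying the deformed operators $z^\#_\lambda(\theta)=a^\#_{R,Q}(p(\theta))$ as a representation of the Zamolodchikov--Faddeev algebra with scattering function $S_\lambda$, asserting that the deformed field $\phi_{R,Q}$ coincides with the wedge-local polarization-free generator $\phi_\lambda$ underlying the Chapter \ref{Chapter3} construction, and then citing \cite{L08, erratum} for the unconditional scalar case (regular $S_\lambda$ with $S_\lambda(0)=-1$, where the intertwiner lemma holds and no conjecture is needed) to obtain local observables above a minimal size and the Reeh-Schlieder property. The only difference is one of care rather than of method: where the paper simply states that the fields ``coincide,'' you make the identification precise via the unitary $I=\bigoplus_n I_n$ between the $S_\lambda$-symmetric Fock space $\mathscr{H}$ and the antisymmetric space $\mathscr{H}^-$ and correctly flag the upgrade from generating fields to von Neumann algebras as the point to check — a legitimate sharpening of the paper's argument, not a different one.
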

Note that, since the deformation function $R$ appears quadratically in the definition of the scattering function $S_\lambda$ (\ref{scatteringfunction}), $S_\lambda$ does not depend on the sign of $R$. This circumstance implies physical indistinguishability of correspondingly deformed models, i.e. models arising from deformation with $R$ and $-R$, which is in agreement with Lemma \ref{lem2}. Concrete examples of deformation functions $R$ for which Theorem \ref{theorem} applies constitute the following
\begin{equation*}
R(a)=\pm\prod_{k=1}^n\frac{\zeta_k-a}{\zeta_k+a},\qquad \mathrm{Im}\,\zeta_k> 0,
\end{equation*}
where for each $\zeta_k$ also $-\overline{\zeta_k}$ is contained in the set of zeros $\{\zeta_1,\dots,\zeta_n\}$.\par
In conclusion, we summarize that, when specializing to two spacetime dimensions, the construction approach by deformation of a scalar massive Fermion gives rise to the model theories constructed by inverse scattering methods in Chapter \ref{Chapter3}. The particle spectrum consists thereby only of a single species of neutral massive particles. Our analysis includes further the class of integrable models with scattering functions satisfying $S_\lambda(0)=-1$ into the deformation framework. A prominent model arising in this context is the Sinh-Gordon model, cf. Section \ref{Examples}.
 

\chapter{Conclusions and Outlook} 

\label{Chapter7} 
\fancyhead[LE,RO]{\thepage}
\fancyhead[LO]{}
\fancyhead[RE]{Chapter 7. \emph{Conclusions and Outlook}}
\renewcommand{\chaptermark}[1]{ \markboth{#1}{} }
\renewcommand{\sectionmark}[1]{ \markright{#1}{} }

The present thesis establishes the construction of nontrivial quantum field theoretical models in $d\geq 2$-dimensional Minkowski space within the framework of Local Quantum Physics.\par
Two different approaches are introduced. On the one hand, inverse scattering methods are applied which give rise to interacting theories in $d=2$ spacetime dimensions. On the other hand, deformation procedures which can be viewed as generalizations of the latter technique yield nontrivial models in higher, i.e. $d\geq 2$, dimensions. In both cases the simple nature of the underlying scattering operator does not allow for particle production or momentum transfer in collision processes of particles, however, effects like time delays can appear.\par
In the case of the inverse scattering approach we started from a prescribed factorizing, i.e. purely elastic, collision operator which is determined by its two-particle scattering amplitude $S$, called S-matrix for short. The considered particle spectrum involves an arbitrary finite number of massive particle species which transform under some global gauge group. Building on recent results concerning the successful construction of a covariant field net of wedge algebras in this setting, we investigated appropriate intersections of such wedge algebras. Our main interest was to show that the emerging local algebras, associated with bounded regions in Minkowski space, comply with all basic assumptions of relativistic quantum physics. For this purpose we looked into the important question whether certain methods, applied earlier for the construction of models with \textit{scalar}-valued S-matrices, can be generalized effectively to the more involved framework at hand. By verifying the modular nuclearity condition we not only succeeded in proving the existence of a meaningful local theory related to an initial S-matrix $S$ in a certain family $\mathcal{S}_0^-$ of regular S-matrices, but also answered the above question in a positive way. The small gap, which however remains in our construction, arises from problems discovered very recently for the scalar case. In view of plausible arguments presented here, we are certain that a bridge can be built, giving, in particular, rise to the $O(N)$-invariant nonlinear $\sigma$-models in a rigorous way, independent of perturbation theory and renormalization, for the first time. Moreover, by our analysis a large class of integrable models emerges to which no Lagrangian formulation is known.\par
These results open up the possibility to generalize the applied methods further to a larger family of initial S-matrices. For instance, it would be particularly interesting to include factorizing S-matrices with poles in the physical sheet in the framework. A famous representative corresponding to such an S-matrix is the Sine-Gordon model. In fact, there is an ongoing project by Y. Tanimoto and D. Cadamuro in that direction and preliminary results concerning the construction of wedge-local fields do already exist. However, there are still many open questions with regard to the verification of the modular nuclearity condition.\par
Generalizations of the inverse scattering techniques to $d\geq 2$ spacetime dimensions can be achieved by deformation procedures. We concretely investigated the deformation a scalar massive Fermion. The emerging models were shown to be based on wedge-local temperate polarization-free generators which admit the consistent computation of the two-particle S-matrix. The latter is shown to depend on the deformation and to differ from the two-particle scattering operator of the undeformed theory in a nontrivial way. This implies that the deformed theory is \textit{not} equivalent to the initial, undeformed one.\par
Our analysis thereby includes a certain class of integrable quantum field theory models into the deformation framework in \textit{two}-dimensional Minkowski space. Namely, those integrable models whose factorizing S-matrices are completely determined by scattering \textit{functions} $S$ satisfying $S(0)=-1$. For example, the scattering function of the Sinh-Gordon model belongs to this class.\par
The establishment of locality properties of the deformed model turns out to be a difficult task as the undeformed model is already nonlocal. In two dimensions, however, it is possible to achieve wedge-locality by imposing an additional condition on the deformation function $R$. Moreover, we show with reference to the inverse scattering approach that the deformed theory also admits local observables in $d=2$ which comply with localization in bounded regions above a minimal size. Analogous results for higher dimensions have not been achieved up to now. This problem emerges from the fact that the split property for wedges enters the stage and is of crucial importance with regard to the modular nuclearity condition. This property does, however, not hold in more than two spacetime dimensions and a direct generalization of the applied methods to higher dimensions is not possible.\par
On the other hand, relative commutants of wedge algebras are not necessarily trivial if the split property for wedges does not hold. In fact, this condition is rather strong with regard to a mere interest in the existence of local fields.\par
One can also ask if some of the conditions on the deformation function $R$ can be relaxed. As part of our analysis it turned out that from the physical point of view the deformed theory does not depend on the sign of the function $R$. In particular, two nets arising from deformation with deformation functions $R$ and $-R$ respectively are unitarily equivalent. This result generalizes the deformation procedure of \cite{GL} since in that work one requires that the function $R$ satisfies the condition $R(0)=1$ which by our result is redundant. Since the latter condition is a consequence of the deformation of the underlying Borchers-Uhlmann algebra \cite{GL}, the deformation approach presented here extends the possibilities for obtaining new models by deformation techniques.\par
Due to the simple form of the deformation the encountered interaction in the deformed theory does not involve momentum transfer or particle production. In order to realize these interactions, a generalization of the deformation techniques themselves is desirable. Concrete realizations of this task constitute suitable integral operators which can be considered in replacement of the multiplication operators $T_R$ used to deform the model of a scalar massive Fermion. Such operators are, however, much more difficult to cope with. Thus, there are no satisfactory results available so far.\par
A successful establishment of such a deformation procedure or the discovery of a suitable generalization of the modular nuclearity condition to four spacetime dimensions would finally solve the long-standing problem of rigorously constructing interacting quantum field theories in physical spacetime.


\addtocontents{toc}{\vspace{2em}} 

\appendix 



\chapter{Auxiliary Results} 

\label{AppendixA} 
\fancyhead[LE,RO]{\thepage}
\fancyhead[LO]{\thesection. \emph{\rightmark}}
\fancyhead[RE]{Appendix A. \emph{Auxiliary Results}}
\renewcommand{\chaptermark}[1]{ \markboth{#1}{} }
\renewcommand{\sectionmark}[1]{ \markright{#1}{} }
In this appendix we state some general results which have concrete applications, in particular, in Chapter \ref{Chapter3}.

\section{General Results}

\begin{lemma}\label{AllgemeinAnalyt}
Let a sequence $\{\mathfrak{h}_k\}_{0\leq k\leq n-1}$, of tempered distributions on $\mathscr{S}(\mathbb{R}^n)\otimes\mathcal{K}^{\otimes n}$ be given. Moreover, let $\mathfrak{h}_k$ have an analytic continuation in the variable $x_{k+1}$ to the strip $S(0,a)$, $a\in\mathbb{R}^+$, in the sense that
\begin{equation}
\lim\limits_{\lambda\rightarrow 0}\int d^n\boldsymbol{x}\,\,\mathfrak{h}^{\boldsymbol{\alpha}}_k(x_1,\dots,x_{k+1}+i\lambda,\dots,x_n) g^{\boldsymbol{\alpha}}(\boldsymbol{x})=\mathfrak{h}_k(g),\qquad g\in\mathscr{S}(\mathbb{R}^n)\otimes\mathcal{K}^{\otimes n},
\end{equation}
with $0<\lambda<a$. Further, its boundary value at Im$(x_{k+1})=a$ is given by
\begin{equation}
\mathfrak{h}^{\alpha_1\dots\alpha_n}_k(x_1,\dots,x_{k+1}+ia,\dots,x_n)=\mathfrak{h}^{\alpha_2\dots\alpha_n\alpha_1}_{k+1}(x_1,\dots,x_{k+1},\dots,x_n),
\end{equation}
in the sense of distributions. Then, $\mathfrak{h}_0$ is the distributional boundary value of a function analytic in the tube
\begin{equation}
\mathfrak{T}_n:=\mathbb{R}^n+i\{\boldsymbol{y}\in\mathbb{R}^n:0<y_n<\cdots<y_2<y_1<a\}.
\end{equation}
\end{lemma}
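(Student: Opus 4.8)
The plan is to reduce the statement to the scalar case and then build up the analyticity domain one variable at a time, guided by the convex-hull geometry of a ``staircase''. Since the strip analyticity, the cyclic boundary identification and the temperateness are all componentwise conditions, I would first fix an internal-index string $\boldsymbol{\alpha}$ and regard each $\mathfrak{h}_k^{\boldsymbol{\alpha}}$ as a scalar tempered distribution on $\mathscr{S}(\mathbb{R}^n)$; the relabelling $\alpha_1\cdots\alpha_n\mapsto\alpha_2\cdots\alpha_n\alpha_1$ merely permutes the component and can be tracked as bookkeeping. The geometric heart of the argument is the observation that the target base $\{0<y_n<\cdots<y_1<a\}$ is exactly the interior of the convex hull of the polygonal staircase joining $0$ to $(a,\dots,a)$ through the vertices $(a,0,\dots,0),(a,a,0,\dots,0),\dots$; along its $(k{+}1)$-st edge (where $y_1=\cdots=y_k=a$ and $y_{k+1}$ runs over $(0,a)$) the hypotheses supply analyticity in the variable $x_{k+1}$, namely that of $\mathfrak{h}_k$, which — by the chain of boundary relations — is the continuation of $\mathfrak{h}_0$ to that corner.

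For the continuation I would argue by induction on $m$, the claim $P(m)$ being that $\mathfrak{h}_0$ extends analytically to the tube over $\{0<y_m<\cdots<y_1<a\}$ (the remaining imaginary parts kept at $0$), with value on the top face $y_1=\cdots=y_m=a$ equal to $\mathfrak{h}_m$. The base case $P(1)$ is the hypothesis for $k=0$. For the step $P(m)\to P(m+1)$ I would combine the continuation from $P(m)$, whose distinguished boundary value is $\mathfrak{h}_m$, with the strip analyticity of $\mathfrak{h}_m$ in $x_{m+1}$: these two separately-analytic pieces meet along the corner flat $\{y_1=\cdots=y_m=a,\,y_{m+1}=0\}$, where both reduce to $\mathfrak{h}_m$. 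The flat-tube (Malgrange--Zerner) theorem \cite{eps66}, with Epstein's distributional edge-of-the-wedge theorem \cite{eps60} ensuring the boundary values match, then promotes this staircase configuration to a function jointly analytic on the convex hull of the two tubes. A direct computation of that hull — for $n=2$ the hull of the segments from $(0,0)$ to $(a,0)$ and from $(a,0)$ to $(a,a)$ is the triangle $\{0\le y_2\le y_1\le a\}$ — yields precisely $\{0<y_{m+1}<y_m<\cdots<y_1<a\}$, and reading off the new top-face value gives $\mathfrak{h}_{m+1}$, closing the induction; $P(n)$ is the assertion, and taking distributional boundary values as $\boldsymbol{y}\to 0$ inside the cone recovers $\mathfrak{h}_0$.

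The main obstacle, I expect, is carrying all of this out rigorously at the level of tempered distributions rather than functions: the strip analyticity must be encoded so that the flat-tube and edge-of-the-wedge theorems apply in $\mathscr{S}'$, the boundary values along each staircase edge must be shown to coincide \emph{as distributions} (not merely formally), and one must verify the polynomial/temperate bounds that make the Cauchy-type representations underlying those theorems converge. An alternative, and perhaps technically cleaner, route would bypass the tube theorems via a Fourier--Laplace analysis: strip analyticity of $\mathfrak{h}_k$ in $x_{k+1}$ up to height $a$ corresponds to leaving $e^{y p_{k+1}}\widehat{\mathfrak{h}_k}$ tempered for $0\le y\le a$, while the boundary relation reads $e^{a p_{k+1}}\widehat{\mathfrak{h}_k}=\widehat{\mathfrak{h}_{k+1}}$ up to the cyclic relabelling; chaining these weights would show that $e^{\langle\boldsymbol{y},\boldsymbol{p}\rangle}\widehat{\mathfrak{h}_0}$ is tempered exactly on the ordered cone, whence its inverse transform is analytic on $\mathfrak{T}_n$. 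In either route the delicate point is controlling the behaviour as $p_{k+1}\to-\infty$, where no exponential gain is available, and checking that the supports are compatible with the required weights.
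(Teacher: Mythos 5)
Your proposal follows essentially the same route as the paper's proof: an induction that complexifies one variable at a time along the staircase, using the cyclic boundary identifications together with the Malgrange--Zerner flat-tube theorem \cite{eps66}, and identifying the convex hull of the staircase tubes with the ordered simplex tube $\mathfrak{T}_{k+1}$. The distributional technicality you flag as the main obstacle is handled in the paper by the standard mollification trick --- working throughout with the smooth function $\mathfrak{h}_0^{\boldsymbol{\alpha}}\ast f$ for arbitrary $f\in\mathscr{S}(\mathbb{R}^n)$ and removing the mollifier at the end --- rather than by an edge-of-the-wedge or Fourier--Laplace argument in $\mathscr{S}'$.
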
\begin{proof}
We start by considering $f\in\mathscr{S}(\mathbb{R}^n)$. Proceeding by induction in $k\in\{1,\dots,n\}$, we next prove that $\mathfrak{h}_0^{\boldsymbol{\alpha}}\ast f$, regarded as a function of $x_1,\dots,x_k$ and $x_{k+1},\dots,x_n\in\mathbb{R}$ held fixed, is analytic in the tube $\mathfrak{T}_k$. Since $\mathfrak{T}_1=S(0,a)$, the assertion for $k=1$ follows directly from the assumptions of the Lemma. For the inductive step $k\rightarrow k+1$ note first that the boundary value of $(x_1,\dots,x_k)\mapsto\left(\mathfrak{h}_0^{\boldsymbol{\alpha}}\ast f\right)(x_1,\dots,x_n)$ at Im$(x_1)=\dots=\,\,$Im$(x_k)=a$ is given by $\mathfrak{h}_{k}^{\alpha_{k+1}\dots\alpha_n\alpha_1\dots,\alpha_k}(\boldsymbol{x})$ which again has an analytic continuation in $x_{k+1}$ to $S(0,a)$. Thus, we may apply the Malgrange Zerner theorem \cite{eps66} and conclude that $(x_1,\dots,x_{k+1})\mapsto\left(\mathfrak{h}_0^{\boldsymbol{\alpha}}\ast f\right)(x_1,\dots,x_n)$, can be analytically continued to the convex closure of
\begin{equation*}
\mathbb{R}^{k+1}+i\Big(\{(y_1,\dots,y_k,0):0<y_k<\dots<y_1<a\}\cup\{(a,\dots,a,y_{k+1}):0<y_{k+1}<a\}\Big),
\end{equation*}
which agrees with $\mathfrak{T}_{k+1}$. Therefore, it follows that $\mathfrak{h}_0^{\boldsymbol{\alpha}}\ast f$ is analytic in $\mathfrak{T}_n$ and, since $f$ is arbitrary, $\mathfrak{h}_0$ is the distributional boundary value of a function analytic in $\mathfrak{T}_n$ and denoted by the same symbol.
\end{proof}
\begin{lemma}\label{traceclassneu}
Let $R_{g,b}=\bigoplus_{\alpha=1}^D R_{g,b}^{[\alpha]}$ be an integral operator on $L^2(\mathbb{R})\otimes\mathcal{K}$, $\dim\mathcal{K}=D<\infty$, which is defined by the kernels
\begin{equation}\label{kernels1}
R^{[\alpha]}_{g,b}(\theta,\theta')=\frac{-\text{sign}\,(b)}{2\pi i}\frac{\overline{g(\theta)}g(\theta')}{\theta'-\theta+ib},\qquad g\in L^2(\mathbb{R}),\,\,b\in\mathbb{R}\backslash\{0\}.
\end{equation}
Then, $R_{g,b}$ is a positive trace class operator with trace norm bounded by
\begin{equation}\label{tracenormneu}
\|R_{g,b}\|_1\leq D\|R_{g,b}^{[\alpha]}\|_1=D\frac{\|g\|_2^2}{2\pi |b|}.
\end{equation}
\end{lemma}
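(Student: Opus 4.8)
The plan is to reduce everything to a single scalar factor on $L^2(\mathbb{R})$ and then exploit a convolution/multiplication factorization. First I would observe that the kernel $R^{[\alpha]}_{g,b}(\theta,\theta')$ does not depend on the index $\alpha$, so $R_{g,b}$ is (unitarily equivalent to) $R^{[1]}_{g,b}\otimes 1_{\mathcal{K}}$ on $L^2(\mathbb{R})\otimes\mathcal{K}$. Its singular values are therefore those of $R^{[1]}_{g,b}$, each repeated $D=\dim\mathcal{K}$ times, which already gives the equality $\|R_{g,b}\|_1 = D\,\|R^{[1]}_{g,b}\|_1$ and reduces the whole statement to the single operator $R^{[1]}_{g,b}$ on $L^2(\mathbb{R})$.

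For this operator I would write $R^{[1]}_{g,b} = M_{\bar g}\,C_b\,M_g$, where $M_g$ is multiplication by $g$ and $C_b$ is the convolution operator with kernel $h(\theta'-\theta)$, $h(u)=\tfrac{-\mathrm{sign}(b)}{2\pi i}\,(u+ib)^{-1}$. Since $C_b$ is translation invariant it is a Fourier multiplier, and a contour–integration computation of $\widehat{h}$ (closing in the appropriate half plane according to the position of the pole $u=-ib$) yields the symbol $e^{-|b|\,|\xi|}$ supported on the half-line $\mathrm{sign}(b)\,\xi>0$ and vanishing elsewhere. In particular the symbol is nonnegative and bounded by $1$, so $C_b\ge 0$ with $\|C_b\|\le 1$. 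Consequently $R^{[1]}_{g,b} = (C_b^{1/2}M_g)^{*}(C_b^{1/2}M_g)\ge 0$ is manifestly positive, which settles the positivity assertion and ensures $\|R^{[1]}_{g,b}\|_1 = \mathrm{Tr}\,R^{[1]}_{g,b}$.

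It then remains to show that $T:=C_b^{1/2}M_g$ is Hilbert--Schmidt and to evaluate the trace. I would compute $\|T\|_{HS}^2 = \mathrm{Tr}(M_{\bar g}C_b M_g)$, rearrange cyclically (legitimate since $C_b$ is bounded and $g\in L^2$) to $\mathrm{Tr}(C_b M_{|g|^2})$, and read off the trace from the continuous convolution kernel $h(\theta'-\theta)|g(\theta')|^2$ of $C_bM_{|g|^2}$, whose diagonal value is $h(0)\,|g(\theta)|^2 = \tfrac{1}{2\pi|b|}\,|g(\theta)|^2$. Integrating the diagonal gives $\mathrm{Tr}\,R^{[1]}_{g,b} = \tfrac{\|g\|_2^2}{2\pi|b|}$, hence the claimed $\|R_{g,b}\|_1 = D\,\tfrac{\|g\|_2^2}{2\pi|b|}$.

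The hard part will be the rigorous justification of the trace-class property and of the diagonal-kernel trace formula. The Cauchy kernel $h$ decays only like $|u|^{-1}$, so $C_b$ itself is neither Hilbert--Schmidt nor even compact; integrability is recovered only after sandwiching it between the $L^2$-multiplication operators $M_{\bar g}$ and $M_g$. I would therefore carry out the trace computation in the multiplier picture via Plancherel with the explicit symbol above, or approximate $g$ by Schwartz functions and pass to the limit, rather than invoking the diagonal formula naively --- the factorization $R^{[1]}_{g,b}=T^{*}T$ is precisely what certifies that $C_bM_{|g|^2}$ is genuinely trace class with a continuous kernel. Finally, both sign cases $b>0$ and $b<0$ must be checked, but the $-\mathrm{sign}(b)$ prefactor in the kernel is arranged exactly so that the Fourier symbol, and hence $C_b$, is positive in either case.
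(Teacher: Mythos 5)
Your proposal is correct, and its core mechanism --- positivity of $R^{[\alpha]}_{g,b}$ via the nonnegative Fourier transform of the Cauchy kernel --- is exactly the paper's: the paper writes $\langle f,R_{g,b}f\rangle$ as $\sum_\alpha\langle \widetilde{(g f^\alpha)},\widetilde{K_b}\cdot\widetilde{(g f^\alpha)}\rangle$ with $\widetilde{K}_b(\eta)=\Theta(\eta)e^{-b\eta}$, which is your multiplier computation in different clothing. Where you diverge is the trace-class step. The paper first reduces to $b>0$ via the flip unitary $(\widehat{U}f)(\theta)=i f(-\theta)$ and then invokes the Reed--Simon lemma (positive operator, finite diagonal integral $\int K(\theta,\theta)\,d\theta$ implies trace class with that trace). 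You instead factorize $R^{[1]}_{g,b}=T^{*}T$ with $T=C_b^{1/2}M_g$ and compute $\|T\|_{HS}^2$ by Plancherel: the square-root symbol $e^{-|b||\xi|/2}\Theta(\mathrm{sign}(b)\,\xi)$ lies in $L^2$, so $T$ has kernel $h_{1/2}(\theta-\theta')g(\theta')$ with $\|T\|_{HS}^2=\|h_{1/2}\|_2^2\|g\|_2^2=\tfrac{\|g\|_2^2}{2\pi|b|}$, giving the trace directly and handling both signs of $b$ uniformly. This buys you something real: the Reed--Simon lemma as stated requires a \emph{continuous} kernel, which fails for generic $g\in L^2$ (the kernel $\overline{g(\theta)}g(\theta')/(\theta'-\theta+ib)$ is merely measurable), so a literal application of the paper's citation would need an approximation argument that the paper suppresses --- your Hilbert--Schmidt factorization sidesteps this entirely and is self-contained. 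Your caution about not invoking the diagonal formula naively, and your observation that the equality $\|R_{g,b}\|_1=D\|R^{[1]}_{g,b}\|_1$ holds exactly for a direct sum of $D$ identical positive summands (the paper states only the inequality it needs), are both sound.
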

\begin{proof}
The assertion can be proven along the same lines as in the scalar case \cite[Lemma E.2]{erratum}. For the convenience of the reader, and since reference \cite{erratum} has not been published yet, we give a full proof.\par
Note first that $R_{g,b}=\widehat{U}R_{g_-,-b}\widehat{U}$, with unitary $(\widehat{U}f)(\theta):=i\cdot f(-\theta)$, $f\in L^2(\mathbb{R})\otimes\mathcal{K}$, and $g_-(\theta):=g(-\theta)$. Due to this unitary equivalence it suffices to consider $b>0$.\par
To show that $R_{g,b}$ is positive, define $K_b(\theta):=-(2\pi i)^{-1}(\theta+ib)^{-1}$ which has positive Fourier transform $\widetilde{K}_b(\eta)=\Theta(\eta)e^{-b\eta}$. Then, we have with $f\in L^2(\mathbb{R})\otimes\mathcal{K}$
$$\langle f,R_{g,b}f\rangle=\sum_{\alpha=1}^{D}\langle(g\cdot f^\alpha),K_b\ast (g\cdot f^\alpha)\rangle_{L^2(\mathbb{R})}=\sqrt{2\pi}\sum_{\alpha=1}^{D}\langle\widetilde{(g\cdot f^\alpha)},\widetilde{K_b}\cdot \widetilde{(g\cdot f^\alpha)}\rangle_{L^2(\mathbb{R})}\geq 0,$$
yielding positivity. Since
$$-\frac{1}{2\pi i}\int d\theta\frac{\overline{g(\theta)}g(\theta)}{\theta-\theta+ib}=\frac{\|g\|_2^2}{2\pi b},$$
it follows by \cite[Lemma on p.65]{RS3} that $R_{g,b}^{[\alpha]}$ is trace class with trace $\frac{\|g\|_2^2}{2\pi b}$, proving the claim.
\end{proof}

\section{An Alternative Proof of Lemma \ref{LemmaVorbereitung01}}\label{LemmaVor01Appendix}
We want to present an alternative proof of Lemma \ref{LemmaVorbereitung01} with regard to the explicit realization of permutations in terms of transpositions. For the convenience of the reader we restate the Lemma.
\begin{lemma*}
Let $C\in\hat{\mathscr{C}}_{n,k}$, where $\hat{\mathscr{C}}_{n,k}\subset\mathscr{C}_{n,k}$ denotes the set of contractions $C\in\mathscr{C}_{n,k}$ for which $k+1\notin\{ l_1,\dots,l_{|C|}\}$, and let $C'\in\check{\mathscr{C}}_{n,k}$, with $\check{\mathscr{C}}_{n,k}\subset\mathscr{C}_{n,k}$ the set of contractions $C'\in\mathscr{C}_{n,k}$ for which $k+1\in\{ l'_1,\dots,l'_{|C'|}\}$, such that $C'=C\cup\{(k+1,r)\}$ with $r\notin\{r_1,\dots,r_{|C|}\}$, then we have
\begin{eqnarray*}
\pi_{\rho'}&=&\pi_\rho\cdot\pi_{r-v_r}^{k-v_r},\\
\pi_{\lambda'}&=&\pi_\lambda\cdot \pi_{k+1+|C|}^{k+1+v_r},
\end{eqnarray*}
hence,
\begin{equation*}
\pi_{C'}=\pi_C\cdot\pi_{r-v_r}^{k-v_r}\cdot\pi_{k+1+|C|}^{k+1+v_r},
\end{equation*}
with $v_r:=$card$\left(\{r_i\in\{r_1,\dots,r_{|C|}\}:r_i<r\}\right)$.
\end{lemma*}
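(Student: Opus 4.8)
The plan is to prove both identities working entirely at the level of the explicit transposition products in (\ref{PiContraction}), equivalently the shift‑permutation products in (\ref{permuShort}), rather than by inspecting the two‑line notation as in the first proof. First I would record the purely combinatorial relationship between the data of $C$ and of $C'=C\cup\{(k+1,r)\}$. Since $r\notin\{r_1,\dots,r_{|C|}\}$ and $v_r=\text{card}\{r_i:r_i<r\}$, the ordered right indices of $C'$ arise by inserting $r$ into slot $v_r+1$, i.e.\ $r'_i=r_i$ for $i\le v_r$, $r'_{v_r+1}=r$ and $r'_i=r_{i-1}$ for $i\ge v_r+2$, while the new left index $k+1$ occupies slot $v_r+1$ of the left list. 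The decisive observation is that, because $C\in\hat{\mathscr{C}}_{n,k}$, the value $k+1$ is the \emph{minimal} admissible left index and does not occur among $l_1,\dots,l_{|C|}$; hence the counts $u_i$ of (\ref{PiContraction}) transform cleanly, namely $u'_i=u_i$ for $i\le v_r$, $u'_{v_r+1}=v_r$ and $u'_i=u_{i-1}$ for $i\ge v_r+2$.

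Next I would treat $\pi_{\rho'}$ via $\pi_{\rho'}=\prod_{i=1}^{|C'|}\pi_{r'_i-i+1}^{k-i+1}$. Substituting the inserted data and splitting off the factor at slot $v_r+1$ produces exactly $\pi_{r-v_r}^{k-v_r}$, whereas each factor with $i\ge v_r+2$, after reindexing $j=i-1$, becomes $\pi_{r_j-j}^{k-j}$, i.e.\ the $\pi_\rho$‑factor $\pi_{r_j-j+1}^{k-j+1}$ with both endpoints lowered by one. After cancelling the common prefix $\prod_{i\le v_r}\pi_{r_i-i+1}^{k-i+1}$, the claim $\pi_{\rho'}=\pi_\rho\cdot\pi_{r-v_r}^{k-v_r}$ reduces to the conjugation/commutation identity
\begin{equation*}
\pi_{r-v_r}^{k-v_r}\,\pi_{a}^{b}=\pi_{a+1}^{b+1}\,\pi_{r-v_r}^{k-v_r},
\end{equation*}
applied factorwise with $\pi_a^b=\pi_{r_j-j}^{k-j}$ for $j=v_r+1,\dots,|C|$, which carries the inserted factor to the far right while raising every subsequent shift by one step. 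The identity for $\pi_{\lambda'}$ is handled in the same spirit: the transformed counts contribute a trivial slot $\pi_{(k+1)+v_r}^{k+v_r+1}=\mathrm{id}$ together with residual factors whose targets are raised from $k+j$ to $k+j+1$. Writing $\pi_{l_j+u_j}^{k+j+1}=\pi_{l_j+u_j}^{k+j}\tau_{k+j}$ and commuting the accumulated transpositions $\tau_{k+v_r+1},\dots,\tau_{k+|C|}$ to the right collects them into the single factor $\pi_{k+1+|C|}^{k+1+v_r}$, giving $\pi_{\lambda'}=\pi_\lambda\cdot\pi_{k+1+|C|}^{k+1+v_r}$.

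Finally I would combine the two results through $\pi_{C'}=\pi_{\rho'}\cdot\pi_{\lambda'}$, cf.\ (\ref{piC}), invoking Lemma \ref{shortLemma} that $\pi_\rho$ and $\pi_\lambda$ act on disjoint blocks and hence commute, to obtain $\pi_{C'}=\pi_C\cdot\pi_{r-v_r}^{k-v_r}\cdot\pi_{k+1+|C|}^{k+1+v_r}$. The main obstacle is precisely the bookkeeping of the middle paragraph: the shift permutations $\pi_a^b$ do not commute freely, so both the conjugation relation above and its $\lambda$‑analogue must be verified carefully from the elementary relations $\tau_i\tau_j=\tau_j\tau_i$ for $|i-j|\ge 2$ and the braid relation $\tau_i\tau_{i+1}\tau_i=\tau_{i+1}\tau_i\tau_{i+1}$, with strict attention to which index ranges actually overlap the support $\{r-v_r,\dots,k-v_r\}$ of the inserted factor. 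This is exactly the tedium that reading off the two‑line notation in (\ref{PiRL}) bypasses, which is why that argument is given first and this transposition‑level verification is relegated to the appendix.
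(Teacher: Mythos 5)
Your proposal is correct and follows, essentially step for step, the paper's own transposition-level proof in Appendix \ref{LemmaVor01Appendix}: the same bookkeeping of the inserted data $(r'_i,l'_i,u'_i)$ (with $u'_{v_r+1}=v_r$ precisely because $k+1$ is minimal among admissible left indices), the same factorwise conjugation identity $\pi_{r-v_r}^{k-v_r}\,\pi_a^b=\pi_{a+1}^{b+1}\,\pi_{r-v_r}^{k-v_r}$ established from the relations (\ref{permutationgroup}), and the same $\tau_{k+t}^{2}$-insertion trick collecting the leftover transpositions into $\pi_{k+1+|C|}^{k+1+v_r}$. The only slight imprecision is the final interchange, which needs the inserted factor $\pi_{r-v_r}^{k-v_r}$ (transposition indices $\leq k-v_r-1$) to commute with $\pi_\lambda$ (indices $\geq k+1$) — the same disjoint-support reasoning as in Lemma \ref{shortLemma} rather than that lemma verbatim — and this is exactly how the paper's appendix concludes.
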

\begin{proof}
Note first that $|C'|=|C|+1$, $\{r'_1,\dots,r'_{|C'|}\}=\{r_1,\dots,r_{v_r},r,r_{v_r+1}\dots,r_{|C|}\}$ and $\{l'_1,\dots,l'_{|C'|}\}=\{l_1,\dots,l_{v_r},k+1,l_{v_r+1},\dots,l_{|C|}\}$, with $0\leq v_r\leq |C|$. Using these relations, we find, with $\pi_{C'}=\pi_{\rho'}\cdot\pi_{\lambda'}$,
\begin{eqnarray*}
\pi_{\rho'}&=& \left(\prod_{i=1}^{v_r}\tau_{r_i-i+1}\cdots\tau_{k-i}\right)\cdot\left(\tau_{r-v_r}\cdots\tau_{k-1-v_r}\right)\cdot\left(\prod_{j=v_r+1}^{|C|}\tau_{r_j-j}\cdots\tau_{k-1-j}\right)\nonumber\\
\pi_{\lambda'}&=&\left(\prod_{s=1}^{v_r}\tau_{l_s+u_s-1}\cdots\tau_{k+s}\right)\cdot\left(\prod_{t=v_r+1}^{|C|}\tau_{l_t+u_t-1}\cdots\tau_{k+1+t}\right).\nonumber
\end{eqnarray*}
In order to prove what is claimed, we make use of the following properties. Namely, for $\tau_i\in\mathfrak{S}_n$, $i=1,\dots,n-1$, we have
\begin{equation}\label{permutationgroup}
\begin{aligned}
\tau_i^2&=id,\\
\tau_i\cdot\tau_j&=\tau_j\cdot\tau_i,\qquad\text{for}\qquad |i-j|>1,\\
\tau_{i}\cdot\tau_{i+1}\cdot\tau_{i}&=\tau_{i+1}\cdot\tau_{i}\cdot\tau_{i+1},\qquad i=1,\dots,n-2.
\end{aligned}
\end{equation}
Employing these relations, in particular the last two ones, and due to the fact that $r<r_{v_r+1}$, we reformulate the following partial product appearing for $\pi_{\rho'}$
\begin{eqnarray*}
\hspace{-0.8cm}&&\hspace{-0.8cm} \left(\tau_{r-v_r}\cdots\tau_{k-1-v_r}\right)\cdot\left(\tau_{r_{v_r+1}-(v_r+1)}\dots\tau_{k-1-(v_r+1)}\right)\nonumber\\
&=& \left(\tau_{r-v_r}\cdots\tau_{r_{v_r+1}-v_r-2}\cdot\textcolor{blue}{\tau_{r_{v_r+1}-v_r-1}\cdot\tau_{r_{v_r+1}-v_r}}\cdot\tau_{r_{v_r+1}-v_r+1}\cdots\tau_{k-1-v_r}\right)\nonumber\\
\hspace{.3cm}&&\hspace{.3cm}\times \left(\textcolor{blue}{\tau_{r_{v_r+1}-v_r-1}}\cdot\tau_{r_{v_r+1}-v_r}\dots\tau_{k-2-v_r}\right)\nonumber\\
&=& \tau_{r_{v_r+1}-v_r}\cdot(\tau_{r-v_r}\cdots\tau_{r_{v_r+1}-v_r-2})\cdot(\tau_{r_{v_r+1}-v_r-1}\cdot\textcolor{blue}{\tau_{r_{v_r+1}-v_r}\cdot\tau_{r_{v_r+1}-v_r+1}})\nonumber\\
\hspace{.3cm}&&\hspace{.3cm}\times
(\tau_{r_{v_r+1}-v_r+2}\cdots\tau_{k-1-v_r})\cdot\left(\textcolor{blue}{\tau_{r_{v_r+1}-v_r}}\cdot\tau_{r_{v_r+1}-v_r+1}\dots\tau_{k-2-v_r}\right)\nonumber\\
&=& (\tau_{r_{v_r+1}-v_r}\cdot\tau_{r_{v_r+1}-v_r+1})\cdot(\tau_{r-v_r}\cdots\tau_{r_{v_r+1}-v_r-1})\nonumber\\
\hspace{.3cm}&&\hspace{.3cm}\times
(\tau_{r_{v_r+1}-v_r}\cdot\textcolor{blue}{\tau_{r_{v_r+1}-v_r+1}\cdot\tau_{r_{v_r+1}-v_r+2}}\cdots\tau_{k-1-v_r})\cdot\left(\textcolor{blue}{\tau_{r_{v_r+1}-v_r+1}}\dots\tau_{k-2-v_r}\right)\nonumber\\
&=&\dots\nonumber\\
&=&\left(\tau_{r_{v_r+1}-(v_r+1)+1}\dots\tau_{k-(v_r+1)}\right)\cdot\left(\tau_{r-v_r}\cdots\tau_{k-1-v_r}\right),
\end{eqnarray*}
where to the blue highlighted transpositions the third relation listed in (\ref{permutationgroup}) was applied. Proceeding in this way, we arrive at
\begin{eqnarray*}
\pi_{\rho'}&=& \left(\prod_{i=1}^{v_r}\tau_{r_i-i+1}\cdots\tau_{k-i}\right)\cdot\left(\prod_{j=v_r+1}^{|C|}\tau_{r_j-j+1}\cdots\tau_{k-j}\right)\cdot\left(\tau_{r-v_r}\cdots\tau_{k-1-v_r}\right)\nonumber\\
&=&\pi_{\rho}\cdot\left(\prod_{i=r}^{k-1}\tau_{i-v_r}\right).
\end{eqnarray*}
It remains to establish the corresponding result for $\pi_{\lambda'}$. To this end, we take a closer look at the second factor appearing in $\pi_{\lambda'}$, namely
\begin{eqnarray*}
\hspace{-0.8cm}&&\hspace{-0.8cm} \prod_{i=v_r+1}^{|C|}\tau_{l_i+u_i-1}\cdots\tau_{k+1+i}\\
&&=(\tau_{l_{v_r+1}+u_{v_r+1}-1}\cdots\tau_{k+2+v_r})\cdot(\tau_{k+1+v_r}\cdot\tau_{k+1+v_r})\\
\hspace{.3cm}&&\hspace{.3cm}\times
(\tau_{l_{v_r+2}+u_{v_r+2}-1}\cdots\tau_{k+3+v_r})\cdot(\tau_{k+2+v_r}\cdot\tau_{k+2+v_r})\cdots (\tau_{l_{|C|}+u_{|C|}-1}\cdots\tau_{k+1+|C|})\\
\hspace{.3cm}&&\hspace{.3cm}\times
(\tau_{k+|C|}\cdot\tau_{k+|C|})\\
&&= \left(\prod_{i=v_r+1}^{|C|}\tau_{l_i+u_i-1}\cdots\tau_{k+i}\right)\cdot\left(\prod_{j=1}^{|C|-v_r}\tau_{k+1+|C|-j}\right),
\end{eqnarray*}
where we made use of the fact that $\tau_i^2=id$. Hence, we have
\begin{equation*}
\pi_{\lambda'}=\pi_\lambda\cdot\left(\prod_{j=1}^{|C|-v_r}\tau_{k+1+|C|-j}\right),
\end{equation*}
and
\begin{eqnarray*}
\pi_{C'}&=&\pi_{\rho}\cdot\left(\prod_{i=r}^{k-1}\tau_{i-v_r}\right)\cdot\pi_\lambda\cdot\left(\prod_{j=1}^{|C|-v_r}\tau_{k+1+|C|-j}\right)\\
&=&\pi_{\rho}\cdot\pi_\lambda\cdot\left(\prod_{i=r}^{k-1}\tau_{i-v_r}\right)\cdot\left(\prod_{j=1}^{|C|-v_r}\tau_{k+1+|C|-j}\right)=\pi_C\cdot\pi_{r-v_r}^{k-v_r}\cdot\pi_{k+1+|C|}^{k+1+v_r},
\end{eqnarray*}
due to the commutativity of $\pi_{\rho'}$ with $\pi_{\lambda'}$ and by means of (\ref{schiebePermut}).\end{proof}


\chapter{Background Material} 

\label{AppendixB} 
\fancyhead[LE,RO]{\thepage}
\fancyhead[LO]{\thesection. \emph{\rightmark}}
\fancyhead[RE]{Appendix B. \emph{Background Material}}
\renewcommand{\chaptermark}[1]{ \markboth{#1}{} }
\renewcommand{\sectionmark}[1]{ \markright{#1}{} }

This appendix serves as a glossary for several mathematical topics, relied on in the main text. We shall refrain from giving proofs, however refer to the respective literature.\\

\section{Nuclear Maps}
This section provides results concerning nuclear maps which are of crucial importance in Chapter \ref{Chapter3}. Complementary material and proofs to the following statements can be found in e.g. \cite{jarchow,pietsch}.
\begin{definition}
Let $E$ and $F$ be Banach spaces. A mapping $T\in\mathcal{L}(E,F)$ is called nuclear if there is a sequence of linear functionals $\{a_n\}_{n\in\mathbb{N}}\subset E^*$ and a sequence of vectors \mbox{$\{y_n\}_{n\in\mathbb{N}}\subset F$} with
\begin{equation}
\sum_{n=1}^{\infty}\|a_n\|_{E^*}\,\|y_n\|_F<\infty,
\end{equation}
such that
\begin{equation}\label{nucleareAbb}
T(x)=\sum_{n=1}^{\infty}a_n(x)\,y_n,\qquad x\in E.
\end{equation}
The nuclear norm of such a linear map is defined by
\begin{equation}\label{normnuc}
\|T\|_1:=\inf\sum_{n=1}^{\infty}\|a_n\|_{E^*}\,\|y_n\|_F,
\end{equation}
where the infimum is taken over all possible representations (\ref{nucleareAbb}) of $T$.
\end{definition}
The sets of nuclear, compact and bounded maps between two Banach spaces $E$ and $F$ are, further, denoted by $\mathcal{N}(E,F)$, $\mathscr{K}(E,F)$ and $\mathcal{B}(E,F)$ respectively. The properties collected in Lemma \ref{propertiesNuclearMaps} are frequently used in Section \ref{VerifyingNuclearity}.
\begin{lemma}\label{propertiesNuclearMaps}
Let $E,F,G,H$ be Banach spaces. Then, we have
\begin{itemize}
\item[i)] $\|T\|\leq\|T\|_1$ with $T\in\mathcal{N}(E,F)$,
\vspace{.2cm}
\item[ii)] $\mathcal{N}(E,F)\subset\mathscr{K}(E,F)$,
\vspace{.2cm}
\item[iii)] $\left(\mathcal{N}(E,F),\|\cdot\|_1\right)$ is a Banach space,
\vspace{.2cm}
\item[iv)] for $T\in\mathcal{N}(E,F)$, $B_1\in\mathcal{B}(F,G)$ and $B_2\in\mathcal{B}(H,E)$ that $B_1TB_2\in\mathcal{N}(H,G)$ and
\begin{equation}
\|B_1TB_2\|_1\leq\|B_1\|\,\|T\|_1\,\|B_2\|.
\end{equation}
\end{itemize}
Moreover, let $\mathscr{H}$ be a separable Hilbert space. Then,
\begin{itemize}
\item[v)] $\mathcal{N}(\mathscr{H},\mathscr{H})$ agrees with the set of trace class operators on $\mathscr{H}$, and further
\begin{equation*}
\|T\|_1=\text{Tr}\,\,|T|,\qquad T\in\mathcal{N}(\mathscr{H},\mathscr{H}).
\end{equation*}
\end{itemize}
\end{lemma}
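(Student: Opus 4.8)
The plan is to establish properties i)--iv) directly from the defining series representation (\ref{nucleareAbb}) together with the definition (\ref{normnuc}) of the nuclear norm, reserving the genuine work for the Hilbert space identification v). For i), starting from any representation $T(x)=\sum_n a_n(x)\,y_n$ I would estimate $\|T(x)\|_F\le\sum_n|a_n(x)|\,\|y_n\|_F\le\|x\|_E\sum_n\|a_n\|_{E^*}\|y_n\|_F$ and take the infimum over all admissible representations. For ii), the partial sums $T_N:=\sum_{n=1}^N a_n(\cdot)\,y_n$ are of finite rank, hence compact, while $\|T-T_N\|\le\sum_{n>N}\|a_n\|_{E^*}\|y_n\|_F\to 0$; since $\mathscr{K}(E,F)$ is closed in the operator norm, $T$ is compact. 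For iv), substituting $B_2x$ and applying $B_1$ term by term yields $B_1TB_2(x)=\sum_n(a_n\circ B_2)(x)\,(B_1y_n)$, a nuclear representation in which $\|a_n\circ B_2\|_{H^*}\le\|a_n\|_{E^*}\|B_2\|$ and $\|B_1y_n\|_G\le\|B_1\|\,\|y_n\|_F$; the stated bound follows after passing to the infimum.

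For iii) I would first verify that $\|\cdot\|_1$ is a genuine norm: homogeneity and the triangle inequality are immediate from the infimum definition (concatenating representations for the latter), and $\|T\|_1=0\Rightarrow T=0$ follows from i). Completeness is the only substantive point. Given a $\|\cdot\|_1$-Cauchy sequence $\{T_k\}$, I would pass to a subsequence with $\|T_{k+1}-T_k\|_1<2^{-k}$, form the telescoping series $T:=T_1+\sum_k(T_{k+1}-T_k)$, concatenate the individual nuclear representations of the summands, and check that the resulting families of functionals and vectors still have summable products, thereby exhibiting $T$ as nuclear and showing $\|T_k-T\|_1\to 0$. By i) this limit coincides with the operator-norm limit, so $T$ is well defined.

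The main obstacle is v). For the inclusion of trace class operators into $\mathcal{N}(\mathscr{H},\mathscr{H})$ I would invoke the polar decomposition $T=U|T|$ and the spectral representation $|T|=\sum_k\lambda_k\langle e_k,\cdot\rangle e_k$ of the positive trace class operator $|T|$ along an orthonormal basis $\{e_k\}$; this gives $T=\sum_k\lambda_k\langle e_k,\cdot\rangle\,Ue_k$, a nuclear representation with $\sum_k\lambda_k\|e_k\|\,\|Ue_k\|\le\sum_k\lambda_k=\text{Tr}\,|T|$, whence $\|T\|_1\le\text{Tr}\,|T|$. The reverse direction is the delicate one: starting from a nuclear representation $T=\sum_n\langle a_n,\cdot\rangle\,y_n$, obtained by Riesz-representing each functional, I would control the singular values $s_k(T)$ and estimate $\sum_k s_k(T)$ through the series in order to conclude that $T$ is trace class with $\text{Tr}\,|T|\le\|T\|_1$. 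Combining both inequalities then yields the claimed equality $\|T\|_1=\text{Tr}\,|T|$.

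I expect this last estimate to demand the most care, as it rests on the interplay between the abstract nuclear norm and the concrete singular-value (Schmidt) decomposition, and it is precisely here that the separability and the Hilbert space structure are genuinely exploited; the remaining items are essentially bookkeeping with the series representation.
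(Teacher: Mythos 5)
The paper itself contains no proof of this lemma: it sits in Appendix~B, where the author explicitly refrains from giving proofs and defers to the literature (Pietsch, Jarchow). Measured against the standard textbook arguments, your items i), ii) and iv) are complete and correct, your iii) is the usual completeness argument (rapidly convergent subsequence, telescoping series, concatenated representations) and is sound as sketched, and the trace-class $\Rightarrow$ nuclear half of v), giving $\|T\|_1\le\text{Tr}\,|T|$ via polar decomposition and the spectral representation of $|T|$, is also correct.

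The one genuine gap is the converse inequality in v): you announce that you would ``control the singular values $s_k(T)$ and estimate $\sum_k s_k(T)$ through the series,'' but no mechanism is given, and this is exactly the step where the Hilbert space structure does real work. It is, however, a short argument, not a delicate singular-value analysis. Write $T=\sum_n\langle\alpha_n,\cdot\rangle\,y_n$ for an arbitrary nuclear representation, with $\alpha_n$ the Riesz vectors of the functionals $a_n$, so $\|\alpha_n\|=\|a_n\|_{E^*}$. Let $T=U|T|$ be the polar decomposition and $\{f_k\}$ an orthonormal basis of $\overline{\text{ran}\,|T|}=(\ker|T|)^{\perp}$; since $|T|$ vanishes on the orthogonal complement, and $|T|=U^*T$ with $U$ isometric on $\overline{\text{ran}\,|T|}$, the system $\{Uf_k\}$ is orthonormal and
\begin{equation*}
\text{Tr}\,|T|=\sum_k\langle Uf_k,Tf_k\rangle=\sum_{k,n}\langle\alpha_n,f_k\rangle\,\langle Uf_k,y_n\rangle
\leq\sum_n\Bigl(\sum_k|\langle\alpha_n,f_k\rangle|^2\Bigr)^{1/2}\Bigl(\sum_k|\langle Uf_k,y_n\rangle|^2\Bigr)^{1/2}
\leq\sum_n\|\alpha_n\|\,\|y_n\|,
\end{equation*}
by Cauchy--Schwarz and Bessel's inequality applied to the two orthonormal systems; absolute convergence of the double sum justifies the rearrangement. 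In particular $|T|$ has finite trace, so $T$ is trace class, and taking the infimum over representations yields $\text{Tr}\,|T|\leq\|T\|_1$. If you prefer your singular-value route, it can be made rigorous through the characterization $\sum_k s_k(T)=\sup\sum_k|\langle g_k,Te_k\rangle|$ over pairs of orthonormal systems, but this reduces to the same Bessel/Cauchy--Schwarz computation. With this step supplied, your proof is complete and agrees with the references the paper cites.
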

By means of a standard argument \cite[Theorem XI.21]{RS3}, the following statement can be proven.
\begin{lemma}\label{traceclass}
Let $T_{a,b}=\bigoplus_{\alpha=1}^D T_{a,b}^{[\alpha]}$ be an integral operator on $L^2(\mathbb{R})\otimes\mathcal{K}$, $\dim\mathcal{K}=D<\infty$, which is defined by the kernels
\begin{equation}\label{kernels}
T^{[\alpha]}_{a,b}(\theta,\theta')=\frac{e^{-a\cosh \theta}}{\theta'-\theta+ib},\qquad a>0,\,\,b\in\mathbb{R}\backslash\{0\}.
\end{equation}
Then, $T_{a,b}$ is of trace class for any $a>0$ and $b\in\mathbb{R}\backslash\{0\}$ with trace norm bounded by
\begin{equation}\label{tracenorm}
\|T_{a,b}\|_1\leq D \cdot\|T^{[\alpha]}_{a,b}\|_1\leq D\cdot 2^{1/4}\cdot\pi^{5/4}\cdot\frac{e^{-a}}{a^{1/4}}\cdot\left[\left(\sqrt{\frac{\pi}{2}}+\frac{1}{4a}\right)\cdot\frac{b^4+4b^2+24}{|b|^5}\right]^{1/2}.
\end{equation}
\end{lemma}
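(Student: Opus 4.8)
The plan is to reduce the statement to a trace-norm estimate for a single scalar integral operator on $L^2(\mathbb{R})$, to realize that operator as a product of a multiplication operator in rapidity space and a Fourier multiplier, and then to apply a standard trace-ideal criterion for operators of the form $f(X)g(P)$. First I would dispose of the tensor factor $\mathcal{K}$. Since $T_{a,b}=\bigoplus_{\alpha=1}^{D}T_{a,b}^{[\alpha]}$ acts diagonally and orthogonally on $L^2(\mathbb{R})\otimes\mathcal{K}\cong\bigoplus_{\alpha=1}^{D}L^2(\mathbb{R})$, and the kernel (\ref{kernels}) carries no dependence on $\alpha$, the trace norm is additive over the orthogonal summands, so that $\|T_{a,b}\|_1=\sum_{\alpha=1}^{D}\|T_{a,b}^{[\alpha]}\|_1=D\,\|T_{a,b}^{[\alpha]}\|_1$ (cf.\ Lemma \ref{propertiesNuclearMaps} v)). This is the first asserted inequality and reduces the problem to bounding the scalar operator $T_{a,b}^{[\alpha]}$ on $L^2(\mathbb{R})$.

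Next I would factorize $T^{[\alpha]}_{a,b}$. Writing out its action,
\[
(T^{[\alpha]}_{a,b}\psi)(\theta)=e^{-a\cosh\theta}\int_{\mathbb{R}}\frac{\psi(\theta')}{\theta'-\theta+ib}\,d\theta'=\big(M_f\,C_k\,\psi\big)(\theta),
\]
where $M_f$ is multiplication by $f(\theta):=e^{-a\cosh\theta}$ and $C_k$ is convolution by $k(u):=-(u-ib)^{-1}$. Since convolution is diagonalised by the Fourier transform, $C_k=g(P)$ with $P:=-i\partial_\theta$ and symbol $g=\widehat{k}$; a contour integration (residue at $u=ib$ for $b>0$, resp.\ $u=-ib$ for $b<0$) gives $g(\xi)=c\,\Theta\!\big(-\operatorname{sign}(b)\,\xi\big)\,e^{-|b\xi|}$ with an explicit constant $c$. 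Thus $T^{[\alpha]}_{a,b}=f(X)\,g(P)$ with $f,g\in L^2(\mathbb{R})$, so the operator is at least Hilbert--Schmidt.

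The crux, where I expect the main difficulty, is to upgrade this from Hilbert--Schmidt to trace class with a quantitative bound. One cannot split $f(X)g(P)$ into a product of two Hilbert--Schmidt operators of the form (function of $X$)(function of $P$), since a reordering $g_1(P)f_2(X)$ does not simplify; moreover convolution operators are translation invariant and hence never compact. Instead I would invoke the standard trace-ideal criterion for such operators of Birman--Solomyak type---the ``standard argument'' alluded to, \cite[Thm.\ XI.21]{RS3}---which yields $\|f(X)g(P)\|_1\le C\,\|f\|_{\ell^1(L^2)}\,\|g\|_{\ell^1(L^2)}$ for a universal constant $C$, where $\|\,\cdot\,\|_{\ell^1(L^2)}=\sum_{k\in\mathbb{Z}}\|\,\cdot\,\|_{L^2(Q_k)}$ is the sum of the local $L^2$ masses over unit cells $Q_k$. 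The slowly decaying factor $g$ needs care: by Cauchy--Schwarz against the summable weight $\langle k\rangle^{-4}$ one bounds $\|g\|_{\ell^1(L^2)}$ by the weighted norm $\big(\int(1+4\xi^2)^2|g(\xi)|^2\,d\xi\big)^{1/2}$, while the super-exponentially concentrated factor $f$ requires no weight.

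Finally I would evaluate the two elementary integrals. For $g$ I compute
\[
2\int_0^\infty(1+4\xi^2)^2 e^{-2|b|\xi}\,d\xi=\frac{1}{|b|}+\frac{4}{|b|^3}+\frac{24}{|b|^5}=\frac{b^4+4b^2+24}{|b|^5},
\]
which produces the $b$-dependent factor appearing under the square root. For $f$ I use $\cosh\theta\ge 1+\tfrac12\theta^2$ to dominate $\int e^{-2a\cosh\theta}\,d\theta\le e^{-2a}\sqrt{\pi/a}$ and, summing the local masses over the unit cells, obtain the factor $\tfrac{e^{-a}}{a^{1/4}}\big(\sqrt{\pi/2}+\tfrac{1}{4a}\big)^{1/2}$. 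Absorbing the constants $c$, $C$ and the weight normalisation into the prefactor $2^{1/4}\pi^{5/4}$ then yields the stated bound. The remaining work---tracking these constants and carrying out the Gaussian and $\ell^1(L^2)$ estimates---is routine.
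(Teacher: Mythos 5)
Your proposal is correct and follows essentially the same route as the paper, whose proof consists precisely of the direct-sum reduction $\|T_{a,b}\|_1= D\,\|T^{[\alpha]}_{a,b}\|_1$ together with a deferral to the scalar case in \cite[Appendix B.2]{DocL}, which is exactly the factorization of $T^{[\alpha]}_{a,b}$ into a multiplication operator $e^{-a\cosh\theta}$ times the Fourier multiplier $c\,\Theta(-\mathrm{sign}(b)\,\xi)e^{-|b\xi|}$ followed by the $f(X)g(P)$ trace-ideal criterion of \cite[Thm.\ XI.21]{RS3} that you invoke. Your explicit evaluation of the $b$-integral reproducing $(b^4+4b^2+24)/|b|^5$ and the Gaussian domination $\cosh\theta\geq 1+\tfrac12\theta^2$ giving the $e^{-a}a^{-1/4}$ factor confirm you have reconstructed the cited computation faithfully.
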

This assertion can be proven along the same lines as in the case $D=1$ which was carried out in \cite[Appendix B.2]{DocL}. We only mention that the estimate on the trace norm of $T_{a,b}$ is a direct consequence of the fact that this integral operator may be expressed as the direct sum $\bigoplus_{\alpha=1}^D T_{a,b}^{[\alpha]}$, where $T_{a,b}^{[\alpha]}$ acts on $L^2(\mathbb{R})$ and is of the form (\ref{kernels}).
\section{The Hardy Space $H^2(\mathcal{T}_{\mathcal{C}},\mathcal{K}^{\otimes n})$}\label{hardyAppendix}
In this section we collect general results on the Hardy space $H^2$ of analytic functions on tube domains $\mathcal{T}_{\mathcal{C}}$ with values in a finite dimensional Hilbert space $\mathcal{K}^{\otimes n}$ as considered in Chapter \ref{Chapter3}. In the literature  mainly the case $\dim\mathcal{K}=1$ is treated. However, the properties we are interested in generalize directly to vector-valued functions. We, thus, refer the reader to \cite{stein1971introduction} for a thorough discussion of this topic and further to \cite[Appendix C]{DocL} for a similar compilation including proofs.
\begin{definition}\label{d}
Consider the tube
\begin{equation}
\mathcal{T}_{\mathcal{C}}:=\mathbb{R}^n+i\mathcal{C},
\end{equation}
with base $\mathcal{C}\subset\mathbb{R}^n$ being open and convex. Further, let $\mathcal{K}^{\otimes n}$ be a finite dimensional Hilbert space. Then, we denote by $H^2(\mathcal{T}_{\mathcal{C}},\mathcal{K}^{\otimes n})$ the Hardy space consisting of functions $h:\mathcal{T}_{\mathcal{C}}\rightarrow\mathcal{K}^{\otimes n}$ which are analytic in $\mathcal{T}_{\mathcal{C}}$ and, moreover, satisfy
\begin{equation}
h_{\boldsymbol{\lambda}}\in L^2(\mathbb{R}^n,d^n\boldsymbol{\theta};\mathcal{K}^{\otimes n}),\qquad h_{\boldsymbol{\lambda}}:\boldsymbol{\theta}\mapsto h(\boldsymbol{\theta}+i\boldsymbol{\lambda}),\qquad\boldsymbol{\lambda}\in\mathcal{C},
\end{equation}
as well as
\begin{equation}\label{hardynor}
\triplenorm h\triplenorm:=\sup\limits_{\boldsymbol{\lambda}\in\mathcal{C}}\left(\int_{\mathbb{R}^n}d^n\boldsymbol{\theta}\,\|h(\boldsymbol{\theta}+i\boldsymbol{\lambda})\|^2_{\mathcal{K}^{\otimes n}}\right)^{1/2}<\infty.
\end{equation}
\end{definition}
Noting that $H^2(\mathcal{T}_{\mathcal{C}},\mathcal{K}^{\otimes n})\simeq H^2(\mathcal{T}_{\mathcal{C}})\otimes \mathcal{K}^{\otimes n}$ and considering an orthonormal basis $\{e^\alpha:\alpha=1,\dots,D=\dim\mathcal{K}\}$ in $\mathcal{K}$ we state the following results.
\begin{proposition}\label{hardyeigenschaften}
\begin{itemize}
\item[]
\item[i)] $\left(H^2(\mathcal{T}_{\mathcal{C}},\mathcal{K}^{\otimes n}),\triplenorm\cdot\triplenorm\right)$ is a Banach space.
\vspace{.2cm}
\item[ii)] Let $h\in H^2(\mathcal{T}_{\mathcal{C}})\otimes\mathcal{K}^{\otimes n}$, $K\subset \mathcal{C}$ compact and $k=1,\dots,n$, then
\begin{equation}
\lim\limits_{|\theta_k|\rightarrow\infty}\sup\limits_{\boldsymbol{\lambda}\in K}|h^{\boldsymbol{\alpha}}(\boldsymbol{\theta}+i\boldsymbol{\lambda})|=0,
\end{equation}
with $\theta_1,\dots,\theta_{k-1},\theta_{k+1},\dots,\theta_n\in\mathbb{R}$.
\item[iii)] Let $h\in H^2(\mathcal{T}_{\mathcal{P}},\mathcal{K}^{\otimes n})$ and $\mathcal{P}$ be an open polyhedron, that is, the interior of the convex hull of a finite subset of $\mathbb{R}^n$. Then, $h$ can be extended to $\mathcal{T}_{\overline{\mathcal{P}}}$ such that the mapping $\overline{\mathcal{P}}\ni \boldsymbol{\lambda}\mapsto h_{\boldsymbol{\lambda}}\in L^2(\mathbb{R}^n,\mathcal{K}^{\otimes n})$ is continuous.
\end{itemize}
\end{proposition}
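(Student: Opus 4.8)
The plan is to reduce everything to the scalar Hardy space $H^2(\mathcal{T}_{\mathcal{C}})$, for which all three properties are available in the literature (see \cite[Ch.~III]{stein1971introduction} and \cite[Appendix~C]{DocL}), and then to lift to the vector-valued setting using the finite-dimensionality of $\mathcal{K}^{\otimes n}$. Concretely, I would fix the orthonormal basis $\{e^{\alpha_1}\otimes\cdots\otimes e^{\alpha_n}\}$ of $\mathcal{K}^{\otimes n}$ and decompose $h=\sum_{\boldsymbol{\alpha}}h^{\boldsymbol{\alpha}}\,e^{\alpha_1}\otimes\cdots\otimes e^{\alpha_n}$ into scalar components $h^{\boldsymbol{\alpha}}=(e^{\alpha_1}\otimes\cdots\otimes e^{\alpha_n},h)$. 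Since the basis is finite, $h$ is analytic on $\mathcal{T}_{\mathcal{C}}$ if and only if each $h^{\boldsymbol{\alpha}}$ is, and $\|h(\boldsymbol{\zeta})\|_{\mathcal{K}^{\otimes n}}^2=\sum_{\boldsymbol{\alpha}}|h^{\boldsymbol{\alpha}}(\boldsymbol{\zeta})|^2$. Taking the supremum over slices yields the two-sided estimate $\max_{\boldsymbol{\alpha}}\triplenorm h^{\boldsymbol{\alpha}}\triplenorm\leq\triplenorm h\triplenorm\leq\sum_{\boldsymbol{\alpha}}\triplenorm h^{\boldsymbol{\alpha}}\triplenorm$, so that $h\in H^2(\mathcal{T}_{\mathcal{C}},\mathcal{K}^{\otimes n})$ if and only if every $h^{\boldsymbol{\alpha}}\in H^2(\mathcal{T}_{\mathcal{C}})$, and the vector norm is equivalent to the $\ell^1$-combination of the scalar Hardy norms.

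For i) I would first check that $\triplenorm\cdot\triplenorm$ is a genuine norm: homogeneity is clear, the triangle inequality follows from Minkowski's inequality for the $L^2(\mathbb{R}^n)$ norm of the slices together with the supremum, and definiteness follows because $\triplenorm h\triplenorm=0$ forces $h_{\boldsymbol{\lambda}}=0$ in $L^2$ for every $\boldsymbol{\lambda}\in\mathcal{C}$, whence $h\equiv0$ by analyticity. Completeness then reduces to the scalar case: for a Cauchy sequence $\{h_j\}$ the estimate $\triplenorm h_j^{\boldsymbol{\alpha}}-h_k^{\boldsymbol{\alpha}}\triplenorm\leq\triplenorm h_j-h_k\triplenorm$ makes each component Cauchy in the complete scalar space $H^2(\mathcal{T}_{\mathcal{C}})$; setting $h:=\sum_{\boldsymbol{\alpha}}(\lim_j h_j^{\boldsymbol{\alpha}})\,e^{\alpha_1}\otimes\cdots\otimes e^{\alpha_n}$ and invoking the norm equivalence gives $h_j\to h$ in $H^2(\mathcal{T}_{\mathcal{C}},\mathcal{K}^{\otimes n})$.

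For ii) it suffices, by the decomposition, to prove the decay for each scalar $h^{\boldsymbol{\alpha}}$. Because $K\subset\mathcal{C}$ is compact and $\mathcal{C}$ open, there is an $r>0$ with $\overline{D(\boldsymbol{\theta}+i\boldsymbol{\lambda},r)}\subset\mathcal{T}_{\mathcal{C}}$ for all $\boldsymbol{\theta}\in\mathbb{R}^n$ and $\boldsymbol{\lambda}\in K$, where $D(\cdot,r)$ denotes the polydisc of polyradius $r$. The mean value property of analytic functions combined with the Cauchy--Schwarz inequality yields the pointwise bound $|h^{\boldsymbol{\alpha}}(\boldsymbol{\theta}+i\boldsymbol{\lambda})|^2\leq(\pi r^2)^{-n}\int_{D(\boldsymbol{\theta}+i\boldsymbol{\lambda},r)}|h^{\boldsymbol{\alpha}}|^2$, the right-hand side being an integral of $|h^{\boldsymbol{\alpha}}_{\boldsymbol{\lambda}'}|^2$ over imaginary parts $\boldsymbol{\lambda}'$ in a slightly larger compact set $K'$ and over real parts whose $k$-th coordinate lies in $[\theta_k-r,\theta_k+r]$. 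As $|\theta_k|\to\infty$ this is a window of an $L^2(\mathbb{R}^n)$-function sliding off to infinity in the $\theta_k$-direction, which tends to $0$; the uniformity in $\boldsymbol{\lambda}\in K$ follows because $\boldsymbol{\lambda}'\mapsto h^{\boldsymbol{\alpha}}_{\boldsymbol{\lambda}'}$ is continuous into $L^2(\mathbb{R}^n)$ at interior points, so $\{h^{\boldsymbol{\alpha}}_{\boldsymbol{\lambda}'}:\boldsymbol{\lambda}'\in K'\}$ is a compact, hence uniformly tight, family.

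The main obstacle is iii). Here I would use the Paley--Wiener/Laplace representation of $H^2$ over a convex base: each scalar component admits the form $h^{\boldsymbol{\alpha}}(\boldsymbol{\zeta})=\int_{\mathbb{R}^n}e^{i\boldsymbol{\zeta}\cdot\boldsymbol{\xi}}g^{\boldsymbol{\alpha}}(\boldsymbol{\xi})\,d^n\boldsymbol{\xi}$ with $\|h^{\boldsymbol{\alpha}}_{\boldsymbol{\lambda}}\|_2=\|e^{-\boldsymbol{\lambda}\cdot}g^{\boldsymbol{\alpha}}\|_2$ up to the Plancherel constant, so that the asserted continuity of $\boldsymbol{\lambda}\mapsto h^{\boldsymbol{\alpha}}_{\boldsymbol{\lambda}}$ becomes continuity of $\boldsymbol{\lambda}\mapsto e^{-\boldsymbol{\lambda}\cdot}g^{\boldsymbol{\alpha}}$ in $L^2(\mathbb{R}^n)$. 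Writing $v_1,\dots,v_m$ for the finitely many vertices of $\mathcal{P}$, finiteness of $\triplenorm h^{\boldsymbol{\alpha}}\triplenorm$ together with Fatou's lemma gives $\int e^{-2v_i\cdot\boldsymbol{\xi}}|g^{\boldsymbol{\alpha}}|^2<\infty$ for each $i$, and the elementary bound $e^{-2\boldsymbol{\lambda}\cdot\boldsymbol{\xi}}\leq\sum_i t_i\,e^{-2v_i\cdot\boldsymbol{\xi}}$ for $\boldsymbol{\lambda}=\sum_i t_i v_i\in\overline{\mathcal{P}}$ supplies an integrable dominating function $\sum_i e^{-2v_i\cdot\boldsymbol{\xi}}|g^{\boldsymbol{\alpha}}(\boldsymbol{\xi})|^2$ independent of $\boldsymbol{\lambda}$. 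Dominated convergence then yields the $L^2$-continuity of $\boldsymbol{\lambda}\mapsto e^{-\boldsymbol{\lambda}\cdot}g^{\boldsymbol{\alpha}}$, and hence of $\boldsymbol{\lambda}\mapsto h_{\boldsymbol{\lambda}}$, on all of $\overline{\mathcal{P}}$. I expect to spend most effort on establishing the Laplace representation with the correct support and growth control over a general convex base, and on justifying that the vertex bound dominates uniformly as $\boldsymbol{\lambda}$ approaches the boundary faces of $\mathcal{P}$.
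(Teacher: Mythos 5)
Your proposal is correct and follows essentially the same route as the paper, which offers no proof of its own here but defers to the scalar theory of \cite{stein1971introduction} and \cite[Appendix C]{DocL} with the remark that finite-dimensionality of $\mathcal{K}^{\otimes n}$ makes the vector-valued generalization immediate --- precisely your componentwise reduction via the norm equivalence $\max_{\boldsymbol{\alpha}}\triplenorm h^{\boldsymbol{\alpha}}\triplenorm\leq\triplenorm h\triplenorm\leq\sum_{\boldsymbol{\alpha}}\triplenorm h^{\boldsymbol{\alpha}}\triplenorm$. The scalar ingredients you supply (polydisc mean-value bound with the uniform-tightness refinement for ii), and the Fourier--Laplace representation with Fatou at the vertices plus vertex-dominated convergence for iii), which correctly isolates why the polyhedron hypothesis is needed) are exactly the standard arguments in those cited references, so there is nothing to repair.
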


\cleardoublepage



\addtocontents{toc}{\vspace{2em}} 

\listoffigures 

\cleardoublepage

\rhead{\thepage}
\lhead{\emph{Frequently Used Symbols}} 
\addtotoc{Frequently Used Symbols}
\chapter*{Frequently Used Symbols}
\begin{tabular}[hc]{c c r}
\multirow{1}{3cm}{Symbol} & \multirow{1}{8.5cm}{Description} & \multirow{1}{2cm}{Page}\\
  \hline
\multirow{1}{3cm}{$\|\cdot\|_1$} & \multirow{1}{8.5cm}{nuclear norm} & \multirow{1}{2cm}{\pageref{normnuc}}\\
\multirow{1}{3cm}{$\triplenorm\cdot\triplenorm$} & \multirow{1}{8.5cm}{Hardy norm} & \multirow{1}{2cm}{\pageref{hardynor}}\\
\multirow{1}{3cm}{$\widehat{a}^\dagger$, $\widehat{a}$} & \multirow{1}{8.5cm}{creation and annihilation operators} & \multirow{1}{2cm}{\pageref{cr}}\\
\multirow{1}{3cm}{$a^*$, $a$} & \multirow{1}{8.5cm}{fermionic creation and annihilation operators} & \multirow{1}{2cm}{\pageref{rc}}\\
\multirow{1}{3cm}{$a_{R,Q}^*$, $a_{R,Q}$} & \multirow{1}{8.5cm}{deformed creation and annihilation operators} & \multirow{1}{2cm}{\pageref{def1}}\\
\multirow{1}{3cm}{$C$, $|C|$} & \multirow{1}{8.5cm}{contraction and its length} & \multirow{1}{2cm}{\pageref{oneparticle}}\\
\multirow{1}{3cm}{$\mathscr{C}_{n,k}$} & \multirow{1}{8.5cm}{the set of contractions} & \multirow{1}{2cm}{\pageref{oneparticle}}\\
\multirow{1}{3cm}{$D$} & \multirow{1}{8.5cm}{dimension of $\mathcal{K}$} & \multirow{1}{2cm}{\pageref{hilbert1}}\\
\multirow{1}{3cm}{$D_n$} & \multirow{1}{8.5cm}{representation of $\mathfrak{S}_n$} & \multirow{1}{2cm}{\pageref{reprpermugroup}}\\
\multirow{1}{3cm}{$\mathcal{D}$} & \multirow{1}{8.5cm}{subspace of $\mathscr{H}$ of finite particle number} & \multirow{1}{2cm}{\pageref{opN}}\\
\multirow{1}{3cm}{$\mathcal{D}^-$} & \multirow{1}{8.5cm}{subspace of $\mathscr{H}^-$ of finite particle number} & \multirow{1}{2cm}{\pageref{prop1}}\\
\multirow{1}{3cm}{$\Delta$} & \multirow{1}{8.5cm}{modular operator of $(\mathcal{F}(W_R),\Omega)$} & \multirow{1}{2cm}{\pageref{oneparticletomita}}\\
\multirow{1}{3cm}{$\mathcal{F}$} & \multirow{1}{8.5cm}{field net of local algebras} & \multirow{1}{2cm}{\pageref{AQFT}}\\
\multirow{1}{3cm}{$\phi$, $\phi'$} & \multirow{1}{8.5cm}{wedge-local quantum field} & \multirow{1}{2cm}{\pageref{SecWedgeLocal}}\\
\multirow{1}{3cm}{$\underline{\phi}$} & \multirow{1}{8.5cm}{wedge-local quantum field} & \multirow{1}{2cm}{\pageref{feld}}\\
\multirow{1}{3cm}{$\widehat{\phi}$} & \multirow{1}{8.5cm}{wedge-local quantum field} & \multirow{1}{2cm}{\pageref{hat}}\\
\multirow{1}{3cm}{$G$} & \multirow{1}{8.5cm}{gauge group} & \multirow{1}{2cm}{}\\
\multirow{1}{3cm}{$\widehat{\mathscr{H}}$} & \multirow{1}{8.5cm}{unsymmetrized Fock space over $\mathscr{H}_1$} & \multirow{1}{2cm}{\pageref{fockUnsy}}\\
\multirow{1}{3cm}{$\mathscr{H}$} & \multirow{1}{8.5cm}{S-symmetric Fock space} & \multirow{1}{2cm}{\pageref{s}}\\
\multirow{1}{3cm}{$\mathscr{H}_n$} & \multirow{1}{8.5cm}{n-particle spaces} & \multirow{1}{2cm}{\pageref{s}}\\
\multirow{1}{3cm}{$\mathscr{H}^\pm$} & \multirow{1}{8.5cm}{Bose/Fermi Fock spaces over $\mathscr{H}_1$} & \multirow{1}{2cm}{\pageref{fockUnsy}}\\
\multirow{1}{3cm}{$H^2(\mathcal{T}_{\mathcal{C}},\mathcal{K}^{\otimes n})$} & \multirow{1}{8.5cm}{Hardy space over the tube $\mathcal{T}_{\mathcal{C}}$} & \multirow{1}{2cm}{\pageref{d}}\\
\multirow{1}{3cm}{$J$} & \multirow{1}{8.5cm}{modular conjugation of $(\mathcal{F}(W_R),\Omega)$} & \multirow{1}{2cm}{\pageref{oneparticletomita}}\\
\multirow{1}{3cm}{$\mathcal{K}$} & \multirow{1}{8.5cm}{finite dimensional Hilbert space} & \multirow{1}{2cm}{\pageref{hilbert1}}\\
\multirow{1}{3cm}{$\kappa(S)$} & \multirow{1}{8.5cm}{distance of singularities of $S$ to the real line} & \multirow{1}{2cm}{\pageref{regularS}}\\
\multirow{1}{3cm}{$\boldsymbol{\lambda}_{\pi/2}$} & \multirow{1}{8.5cm}{certain vector in $\mathbb{C}^n$} & \multirow{1}{2cm}{\pageref{l}}\\
\multirow{1}{3cm}{$m_\circ$} & \multirow{1}{8.5cm}{mass gap of the theory} & \multirow{1}{2cm}{\pageref{hilbert1}}\\
\multirow{1}{3cm}{$\mathcal{O}$} & \multirow{1}{8.5cm}{double cone} & \multirow{1}{2cm}{\pageref{o}}\\
\multirow{1}{3cm}{$\Omega$} & \multirow{1}{8.5cm}{the vacuum vector} & \multirow{1}{2cm}{}\\
\multirow{1}{3cm}{$P_n$} & \multirow{1}{8.5cm}{projections onto S-symmetric spaces} & \multirow{1}{2cm}{\pageref{Pn}}\\
\multirow{1}{3cm}{$P_n^-$} &
\multirow{1}{8.5cm}{projections onto totally antisymmetric spaces} & \multirow{1}{2cm}{\pageref{lem1}}\\
\end{tabular}
\begin{tabular}[hc]{c c r}
\multirow{1}{3cm}{$\mathcal{P}$, $\mathcal{P}_+$, $\mathcal{P}_+^\uparrow$} & \multirow{1}{8.5cm}{Poincar\'{e} group and subgroups} & \multirow{1}{2cm}{\pageref{poinc}}\\
\multirow{1}{3cm}{$Q$} & \multirow{1}{8.5cm}{skew-symmetric matrix} & \multirow{1}{2cm}{\pageref{Q}}\\
\multirow{1}{3cm}{$R$} & \multirow{1}{8.5cm}{deformation function} & \multirow{1}{2cm}{\pageref{R}}\\
\multirow{1}{3cm}{$\textbf{S}$} &
\multirow{1}{8.5cm}{scattering operator on $\mathscr{H}^+$} & \multirow{1}{2cm}{\pageref{S}}\\
\multirow{1}{3cm}{$S$} & \multirow{1}{8.5cm}{S-matrix} & \multirow{1}{2cm}{\pageref{S-matrixDefinition}}\\
\multirow{1}{3cm}{$\mathcal{S}$} &
\multirow{1}{8.5cm}{set of all S-matrices} & \multirow{1}{2cm}{\pageref{S-matrixDefinition}}\\
\multirow{1}{3cm}{$\mathcal{S}_0$} &
\multirow{1}{8.5cm}{set of regular S-matrices} & \multirow{1}{2cm}{\pageref{regularS}}\\
\multirow{1}{3cm}{$\mathcal{S}_0^-$} &
\multirow{1}{8.5cm}{subsets of $\mathcal{S}_0$} & \multirow{1}{2cm}{\pageref{form}}\\
\multirow{1}{3cm}{$S(a,b)$} & \multirow{1}{8.5cm}{strip region in $\mathbb{C}$} & \multirow{1}{2cm}{\pageref{strip}}\\
\multirow{1}{3cm}{$\|S\|_{\kappa}$} & \multirow{1}{8.5cm}{supremum norm of $S$} & \multirow{1}{2cm}{\pageref{regularS}}\\
\multirow{1}{3cm}{$S_n^\pi$} & \multirow{1}{8.5cm}{tensor corresponding to $D_n$} & \multirow{1}{2cm}{\pageref{tensor}}\\
\multirow{1}{3cm}{$\mathfrak{S}_n$} & \multirow{1}{8.5cm}{the permutation group of $n$ elements} & \multirow{1}{2cm}{\pageref{reprpermugroup}}\\
\multirow{1}{3cm}{$T_R$} & \multirow{1}{8.5cm}{operator related to $R$} & \multirow{1}{2cm}{\pageref{TR}}\\
\multirow{1}{3cm}{$U$} & \multirow{1}{8.5cm}{representation of $\mathcal{P}_+$} & \multirow{1}{2cm}{\pageref{rep}}\\
\multirow{1}{3cm}{$V$} & \multirow{1}{8.5cm}{representation of $G$} & \multirow{1}{2cm}{\pageref{rep}}\\
\multirow{1}{3cm}{$W_R$, $W_L$} & \multirow{1}{8.5cm}{the right and left wedge} & \multirow{1}{2cm}{\pageref{wedge}}\\
\multirow{1}{3cm}{$\mathcal{W}$} & \multirow{1}{8.5cm}{set of all wedges} & \multirow{1}{2cm}{\pageref{wedges}}\\
\multirow{1}{3cm}{$\Xi(s),\Xi_n(s)$} & \multirow{1}{8.5cm}{maps related to the modular nuclearity condition} & \multirow{1}{2cm}{\pageref{xi}}\\
\multirow{1}{3cm}{$z^\dagger,z$} & \multirow{1}{8.5cm}{Zamolodchikov creation and annihilation operators} & \multirow{1}{2cm}{\pageref{erzVern}}\\
\end{tabular}
\backmatter
\label{Bibliography}
\fancyhead[LE,RO]{\thepage}
\fancyhead[RE,LO]{\emph{Bibliography}}
\renewcommand{\chaptermark}[1]{ \markboth{#1}{} }
\renewcommand{\sectionmark}[1]{ \markright{#1}{} }

\bibliographystyle{unsrtnat} 

\bibliography{Literatur} 
\nocite{*}
\clearpage

\acknowledgements{\addtocontents{toc}{\vspace{1em}} 
I would like to express my deep gratitude to Professor Jakob Yngvason for awaking my interest in this interesting field of mathematical physics and for giving me the opportunity to work on this particular topic of the thesis.\par
Especially appreciated is Dr. Gandalf Lechner's kind willingness to co-supervise this thesis. I would like to thank both Prof. J. Yngvason and Dr. G. Lechner for their constant support, for many useful hints, their advise and careful reading of the final manuscript.\par
I am further grateful for numerous discussions with Jan Schlemmer, Christian K\"ohler, Matthias Plaschke and Albert Much. Also Daniela Cadamuro, Yoh Tanimoto and Prof. H. Grosse have to be thanked for interesting discussions.\par
Many thanks are due to Wojciech Dybalski for giving me the opportunity to finalize my thesis in the inspiring atmosphere of the TU Munich.\par
Finally, financial support from the FWF-project P22929-N16 ``Deformations of Quantum
Field Theories'' and travel grants from the University of Vienna are gratefully acknowledged.

}
\clearpage 

\end{document}